\setlist{noitemsep,topsep=0pt,parsep=0pt} 
\tikzset{every fit/.append style=text badly centered}
\newcommand{\Holant}{\operatorname{Holant}}
\newcommand{\PlHolant}{\operatorname{Pl-Holant}}
\newcommand{\holant}[2]{\ensuremath{\Holant\left(#1\mid #2\right)}}
\newcommand{\plholant}[2]{\ensuremath{\PlHolant\left(#1\mid #2\right)}}
\newcommand{\CSP}{\operatorname{\#CSP}}
\newcommand{\PlCSP}{\operatorname{Pl-\#CSP}}
\newcommand{\trans}[4]{\ensuremath{\left[\begin{smallmatrix} #1 & #2 \\ #3 & #4 \end{smallmatrix}\right]}}
\def\borderColor{blue!60}
\def\scale{0.6}
\def\nodeDist{1.4cm}
\tikzstyle{internal} = [draw, fill, shape=circle]
\tikzstyle{external} = [shape=circle]
\tikzstyle{square}   = [draw, fill, rectangle]
\tikzstyle{triangle} = [draw, fill, regular polygon, regular polygon sides=3, inner sep=3pt]
\tikzstyle{pentagon} = [draw, fill, regular polygon, regular polygon sides=5, inner sep=2pt, minimum size=14pt]
\begin{document}
\title{{\bf Holographic Algorithm with Matchgates Is Universal for Planar \#CSP Over Boolean Domain}}

\vspace{0.3in}
\author{Jin-Yi Cai\thanks{University of Wisconsin-Madison.
 {\tt jyc@cs.wisc.edu}}
\and Zhiguo Fu\thanks{School of Mathematics, Jilin University. {\tt
fuzg@jlu.edu.cn}}}

\date{}
\maketitle

\bibliographystyle{plain}

\begin{abstract}
We prove a complexity classification theorem that classifies all
counting constraint satisfaction problems (\#CSP) over Boolean variables
into exactly three categories:
(1) Polynomial-time tractable; (2) \#P-hard for general
instances, but solvable in polynomial-time over planar graphs;
and (3) \#P-hard over planar graphs.
The classification applies to all sets of local, \emph{not necessarily
symmetric}, constraint
functions on Boolean variables that take complex values.
It is shown that Valiant's holographic algorithm with matchgates
is a \emph{universal} strategy for all
problems in category (2).
\end{abstract}

\section{Introduction}
Half a century ago, the
Fisher-Kasteleyn-Temperley (FKT) algorithm
was discovered~\cite{TF61, Kasteleyn1961, Kasteleyn1967}. The FKT algorithm can
count the number of perfect matchings (dimers) over planar graphs
in polynomial time. This is a milestone in the long history
in statistical physics starting with Lenz,
Ising, Onsager, Yang,
Lee, Fisher, Temperley, Kasteleyn, Baxter, Lieb, Wilson
etc~\cite{ising1925beitrag,onsager1944crystal,yang1952spontaneous,yang1952statistical,lee1952statistical,TF61,Kasteleyn1961,Kasteleyn1967,baxter1982exactly,lieb1981general}, with beautiful contributions from many others.
The central question is what constitutes an Exactly Solved Model.
The basic
conclusion from physicists is that for some ``systems''
their partition functions
 are ``exactly solvable'' for planar structures, but
appear intractable for higher dimensions.
However, exactly what does it  mean to be intractable?
Physicists did not have a formal notion of
intractability.

This notion is supplied by complexity theory.
Following the P vs.~NP theory, in 1979
L.~Valiant~\cite{Valiant-permanent-paper79-TCS} defined the class
\#P for counting problems. Most counting problems of a combinatorial
nature are included in this broad class. Sum-of-Product computations,
such as partition functions studied in physics and counting constraint
satisfaction problems are included in \#P (or by a  P-time reduction when the output is not an integer),
 and \#P-hardness
is at least as hard as NP-hardness. In particular,
counting perfect matchings in general graphs is \#P-complete.

But are there other surprises like the FKT-algorithm? If so, can they
solve any \#P-hard problems?
In two seminal papers~\cite{Val02a,Val06},
L.~Valiant introduced \emph{matchgates} and \emph{holographic algorithms}.
These holographic algorithms
use a quantum-like superposition
to achieve fantastic cancellations, which
produce polynomial time algorithms to solve a number of concrete problems
that would seem to require exponential time to compute.
The first ingredient of holographic algorithms is the FKT algorithm.
The second ingredient  is
a tensor theoretic transformation that establishes a quantitative
equivalence of two seemingly different counting problems.
This holographic reduction in general does not
preserve solutions between the two problems in a 1-1 fashion.
These transformations establish a
duality similar in spirit to the Fourier transform and its
inverse.

As these novel algorithms
solve problems that appear so close to being \#P-hard, they naturally
raise the question whether they can solve \#P-hard problems in P-time.
In the past 10 to 15 years significant progress was made
in the understanding of these remarkable algorithms~\cite{Cai-Fu-Guo-W, caiguowilliams13,  art-sc, asymmetric-sig, Cai-Lu-Xia-real, Guo-Williams,
 landsberg-morton, Val02b, string23, Val06}.
In an interesting twist, it turns out
that the idea of a holographic reduction is not only
a powerful technique to design new and unexpected algorithms,
but also  an indispensable tool to classify the inherent
complexity of counting problems, in particular, to
understand the limit and scope of holographic
algorithms~\cite{clx-focs-2008, clx-holant, ghlx-stacs-2011, HL12, glva-2013, clx-soda-2013,
caiguowilliams13, Guo-Williams, cgw-focs-2014, Cai-Fu-Guo-W}.
This study has produced
a number of complexity dichotomy theorems.
These classify \emph{every} problem expressible
 in a framework as either solvable in P or \#P-hard,
with nothing in between.

One such framework is called \#CSP problems.
A \#CSP problem on Boolean variables
is specified  by a set of local constraint
functions $\mathcal{F}$. Each function $f \in \mathcal{F}$ has an arity $k$,
and maps $\{0,1\}^k \rightarrow \mathbb{C}$. (For consideration of models
of computation, we restrict function values to be algebraic numbers.
Unweighted \#CSP problems are defined by 0-1 valued constraint
functions.)
An instance of \#CSP($\mathcal{F}$) is specified by
a finite set of Boolean variables $X = \{x_1, x_2, \ldots, x_n\}$,
and a finite sequence of constraints $\mathcal{S}$ from $\mathcal{F}$, each
applied to  an ordered sequence of variables from $X$.
Every instance can be described by a bipartite graph
where  LHS nodes  are variables $X$,
RHS  nodes are constraints $\mathcal{S}$,
and the connections between them specify occurrences of variables
in constraints in the input instance.
The output of this instance is $\sum_{\sigma}
 \prod_{f \in \mathcal{S}} f|_{\sigma}$, a sum over all
$\sigma: X \rightarrow \{0,1\}$, of products
of all constraints in $\mathcal{S}$ evaluated according to $\sigma$.
In the unweighted 0-1 case, each such product contributes a 1
if  $\sigma$ satisfies all constraints in $\mathcal{S}$, and 0 otherwise.
In the general case, the output  is a weighted sum of $2^n$ terms.
\#CSP is a very expressive framework for locally specified counting
problems. A spin system is a special case  where there is one
single binary constraint in $\mathcal{F}$, and possibly one or more
unary constraints when there are ``external fields''.

We prove in this paper that, holographic algorithms with matchgates
form a \emph{universal} strategy for problems expressible in
this framework that are \#P-hard in general but solvable
in polynomial time on planar graphs.
More specifically we prove the following classification theorem.
\begin{theorem}\label{theorem-main-intro}
For any set of constraint functions $\mathcal{F}$ over Boolean variables,
each  taking complex values and not necessarily symmetric,
\#CSP($\mathcal{F}$) belongs to exactly one of three categories
according to $\mathcal{F}$:
(1) It is P-time solvable;
(2) It is P-time solvable over planar graphs but \#P-hard over general graphs;
(3) It is \#P-hard over planar graphs.
Moreover,
category (2) consists precisely of those problems that are holographically reducible to the FKT algorithm.
\end{theorem}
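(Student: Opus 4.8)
\medskip\noindent\textbf{Proof strategy.}\quad
The plan is to pass to the Holant formulation and then to argue that the failure of a holographic reduction to FKT already forces planar \#P-hardness. Since $\PlCSP(\mathcal F)\equiv\plholant{\EQ}{\mathcal F}$ and every $=_k$ with $k\ge 3$ is realizable from $=_3$ by a planar caterpillar gadget, we have $\PlCSP(\mathcal F)\equiv\PlHolant(\{=_1,=_2,=_3\}\cup\mathcal F)$. Two of the three directions are then short. If $\mathcal F\subseteq\mathscr A$ or $\mathcal F\subseteq\mathscr P$ (the two tractable classes of the general Boolean \#CSP dichotomy), then $\CSP(\mathcal F)$ is already P-time on all graphs, so we are in category (1). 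If there is a holographic transformation $T$ under which every signature of $T\mathcal F$ and every transformed equality is realizable by a planar matchgate, then Valiant's holographic algorithm composed with the FKT algorithm solves $\PlCSP(\mathcal F)$ in polynomial time; if in addition $\mathcal F\not\subseteq\mathscr A$ and $\mathcal F\not\subseteq\mathscr P$, then $\CSP(\mathcal F)$ is \#P-hard over general graphs by the general dichotomy, placing us in category (2). Thus the entire content is the converse: if $\mathcal F$ lies in no classical tractable class and is not holographically reducible to FKT, then $\PlCSP(\mathcal F)$ is \#P-hard.

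For this hardness direction I would first settle the case in which all functions of $\mathcal F$ are symmetric by invoking the known planar \#CSP dichotomy for symmetric complex-valued constraint functions: such a problem is \#P-hard unless its signature set lies in $\mathscr A$, in $\mathscr P$, or in the Hadamard-transformed matchgate class $\widehat{\mathscr M}$. The substance is the reduction of the non-symmetric case to the symmetric one. Here I would exploit that in the \#CSP setting pinning is available (via the standard pinning lemma) and all equalities are present, so from an arbitrary non-symmetric $f\in\mathcal F$ one can build planar gadgets — pinning some variables to $0$ or $1$, identifying variables through equalities, forming self-loops — that either (i) produce a symmetric signature outside $\mathscr A\cup\mathscr P\cup\widehat{\mathscr M}$, after which the symmetric dichotomy finishes the job, or (ii) produce directly one of a short list of planar \#P-hard gadgets. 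Carrying this out requires decomposing a non-symmetric signature into ``blocks'' according to the values of a distinguished variable and tracking how a single offending feature — a violated matchgate identity, or a parity obstruction that no admissible transformation can remove — survives through the gadget constructions. Every such gadget is planar and every reduction is an interpolation or polynomial reduction compatible with a planar embedding, so \#P-hardness is obtained over planar graphs; preserving planarity while retaining enough gadget freedom is the delicate point, and I expect the bookkeeping of the non-symmetric case analysis to be the main obstacle.

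Finally I would pin down the boundary of category (2). On the tractable side, because the equality signatures of \emph{all} arities must be realized simultaneously, the holographic transformation is forced, up to the stabilizer of $\EQ$ and the obvious symmetries of the matchgate class, to be (a scalar multiple of) the $2\times2$ Hadamard matrix $H$, under which $=_k\mapsto\mathrm{EVEN}_k$, a planar-matchgate signature of every arity; hence $\PlCSP(\mathcal F)$ is holographically reducible to FKT precisely when $H^{\otimes\arity(f)}f$ is a planar-matchgate signature for every $f\in\mathcal F$, i.e.\ $\mathcal F\subseteq\widehat{\mathscr M}$. On the hardness side, if some $f\in\mathcal F$ fails this and also $f\notin\mathscr A\cup\mathscr P$, then the gadget-and-interpolation machinery of the previous paragraph must be shown to yield \#P-hardness: every way of violating the matchgate (Grassmann--Pl\"ucker) identities after the Hadamard transformation has to be planarly exploitable. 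Assembling the three parts gives the trichotomy together with the stated characterization of category (2).
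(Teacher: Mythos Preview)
Your high-level trichotomy structure and your identification of $\widehat{\mathscr M}$ as the exact boundary of category (2) are both correct and match the paper. But the hardness direction as you sketch it has a genuine gap: the reduction to the symmetric case cannot be carried out in the straightforward way you describe.

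The first issue is your reliance on ``pinning, identifying variables through equalities, forming self-loops'' to symmetrize an arbitrary asymmetric signature. In the planar setting these moves are far more constrained than you suggest. Identifying two non-adjacent variables of an arity-$n$ signature via an equality is generally \emph{not} a planar operation, and the usual symmetrization trick of connecting several copies of $f$ to a single $(=_k)$ only permutes the inputs cyclically, not arbitrarily. The paper makes exactly this point: it is ``generally difficult to construct symmetric signatures from asymmetric signatures in a planar fashion, especially for arity greater than 3,'' and the bulk of the paper is an elaborate \emph{arity reduction} machinery (separate inductions for non-$\mathscr P$, non-$\mathscr A$, and non-$\mathscr M$ signatures, a ``Tableau Calculus'' for tracking support structure, and special handling of $\{0,1\}$- and $\{0,1,-1\}$-valued signatures) designed precisely to get around this obstruction. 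Your sketch does not supply any of this.

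The second, more serious issue is your treatment of the non-matchgate case. You propose to track a violated matchgate identity down to a symmetric witness. But after the Hadamard transformation, if every signature in $\widehat{\mathcal F}$ satisfies the Parity Condition, then every signature of arity at most $3$ constructible from $\widehat{\mathcal{EQ}}\cup\widehat{\mathcal F}$ \emph{is} a matchgate signature (parity alone suffices in arity $\le 3$). So no symmetric non-matchgate witness of low arity exists at all, and your symmetrization plan cannot succeed here. The paper's route is completely different: it reduces any non-matchgate signature to one of arity exactly $4$, and from that arity-$4$ signature it builds either the crossover function $\mathfrak X$ (reducing to the non-planar dichotomy) or $(=_4)$ (which yields all of $\mathcal{EQ}_2$ and reduces the problem to a $\PlCSP^2$ question, handled by a separate dichotomy). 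You are also missing the organizing case split on whether $\widehat{\mathcal F}$ satisfies the Parity Condition, which is what determines whether the $\mathscr M$ alternative is even live and dictates which of the two very different proof strategies applies.
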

This theorem finally settles the full reach of the power of
Valiant's holographic algorithms in the \#CSP framework over Boolean
variables. Several results preceded this.
The most direct three predecessors are as follows:
(I) In~\cite{Cai-Lu-Xia-real} it is shown that
Theorem~\ref{theorem-main-intro} holds if every function in $\mathcal{F}$ is real-valued
and \emph{symmetric}. The value of a symmetric function
is invariant when the input values are permuted. This is quite
a stringent restriction. A constraint function on $n$ Boolean
variables requires $2^n$ output values to specify, while a symmetric
one needs only $n+1$ values.
(II) Guo and Williams~\cite{Guo-Williams} generalize \cite{Cai-Lu-Xia-real}
to the case where functions
in $\mathcal{F}$ are complex-valued,
but they must still be \emph{symmetric}.
Complex numbers form the natural setting
to discuss the power of these problems. Many problems, even
though they are real-valued, are shown to be equivalent under
a holographic reduction which goes through $\mathbb{C}$, and their
inherent complexity is only understood by an  analysis in $\mathbb{C}$
on quantities such as eigenvalues.
(III) If one ignores  planarity, \cite{CLX14} proves
 a complexity dichotomy.
This result itself generalizes
previous results by Creignou-Hermann~\cite{Creignou-Hermann} for the
case when all constraint functions are 0-1 valued,
by Dyer-Goldberg-Jerrum~\cite{Dyer-Goldberg-Jerrum} for non-negative
valued constraint functions, and  by Bulatov et. al.~\cite{Bulatov-etal}
for real-valued constraint functions of mixed signs.

The classification in Theorem~\ref{theorem-main-intro},
especially the claim that holographic reductions followed by the FKT
are universal for category (2), is by no means self-evident.
In fact such a sweeping claim should invite skepticism.
Nowhere in  complexity theory of
decision problems are we aware of such a provable universal
algorithmic strategy for a broad class
of problems. Moreover, in the study of holographic algorithms, an even
broader class than \#CSP of locally specified Sum-of-Product computations
has been introduced~\cite{clx-holant}, called Holant problems.
It turns out that counting perfect matchings
is naturally expressible as a  Holant problem, but not
as a \#CSP problem. Very recently we discovered that for
planar Holant problems a corresponding universality statement
 as in
Theorem~\ref{theorem-main-intro} is \emph{false}~\cite{Cai-Fu-Guo-W}.
For planar Holant problems, a holographic reduction to the FKT is \emph{not}
universal; there are other \#P-hard problems that become
 tractable on planar structures, and they are not holographically
reducible to the FKT.

The class of  Holant problems turns out to be more than
just a separate framework providing a cautionary reference to
Theorem~\ref{theorem-main-intro}.
In fact they form the main arena we carry out the proof of
Theorem~\ref{theorem-main-intro}. A basic idea in this proof
is a holographic transformation between the \#CSP
setting and the Holant setting via the Hadamard transformation
$H_2 = \frac{1}{\sqrt{2}} \left[\begin{smallmatrix} 1 & 1 \\ 1 & -1 \end{smallmatrix}\right]$.
This transformation is similar to the Fourier transform.
Certain properties are easier to handle in one setting
while others are easier after a transform. We will go back and
forth.

In subsection~\ref{outline-subsection}
we give a more detailed account of the strategies used, and a
proof outline.
Among the techniques used are a derivative operator $\partial$,
a Tableau Calculus, and arity reduction.
An overall philosophy is that various tractable constraint
functions of different families cannot mix. Then the truth of
Theorem~\ref{theorem-main-intro} itself, precisely because it is such
a complete statement without any exceptions, guides the choices
made in various constructions.
As a proof strategy,
this is pretty dicey or at least self-serving.
Essentially we want the validity of the very
statement we want to prove to provide its own guarantee of
success in every step in its proof.
If there were other
tractable problems, e.g., as in the case of planar
 Holant~\cite{Cai-Fu-Guo-W} where different classes of
tractable constraints can indeed mix,  then we would be stuck.
Luckily, the vision is correct for planar \#CSP. And therefore,
the self-serving plan becomes a reliable guide
to the proof, a bit self-fulfilling.

\section{Preliminaries} \label{sec:preliminaries}

\subsection{Problems and Definitions}
In this paper, ${\frak i}$ denotes a square root of $-1$, i.e., ${\frak i}^2=-1$.

Even though our focus in this paper is on planar counting CSP
problems, most of the proof need to be carried out in
the framework of Holant problems~\cite{clx-holant,CLX11}.
A Holant problem is specified by a set of local constraint functions,
also called \emph{signatures}.
In this paper,
we investigate complex-valued planar \#CSP problems over
Boolean variables,
and thus all signatures in the corresponding  Holant problems are
of the form $\{0,1\}^n \to \mathbb{C}$.
For consideration of models of computation,
functions take complex algebraic numbers.

Graphs may have self-loops and parallel edges.
A graph without self-loops or parallel edges is a \emph{simple} graph.
Fix a set of local constraint functions $\mathcal{F}$.
A \emph{signature grid} $\Omega = (G, \pi)$ consists of a graph $G = (V,E)$,
and a mapping $\pi$ which maps each vertex $v \in V$ to
some $f_v \in \mathcal{F}$ of arity $\deg(v)$,
and its incident edges $E(v)$ to the input variables of $f_v$.
We say that $\Omega$ is a \emph{planar signature grid} if $G$ is a plane graph,
where the variables of $f_v$ are ordered counterclockwise starting from an edge specified by $\pi$.
The Holant problem on instance $\Omega$ is to evaluate
\[\Holant(\Omega; \mathcal{F}) = \sum_{\sigma: E \to \{0,1\}}
\prod_{v \in V} f_v(\sigma \mid_{E(v)}),\]
where $\sigma \mid_{E(v)}$ denotes the restriction of $\sigma$ to $E(v)$.
We write $G$ in place of $\Omega$ when $\pi$ is clear from context.

A signature $f$ of arity $n$
 can be specified by listing its values in lexicographical
order as in a truth table,
which is a vector in $\mathbb{C}^{2^{n}}$,
or as a tensor in $(\mathbb{C}^{2})^{\otimes n}$.
A symmetric signature $f$  of arity $n$
takes values depending only on the Hamming weight of
the input, and can be expressed as $[f_0,f_1,\dotsc,f_n]$,
where $f_w$ is the value of $f$ on inputs of Hamming weight $w$.
An example is the \textsc{Equality} signature $(=_n)=[1, 0, \ldots, 0, 1]$ of arity $n$.
Another example is the \textsc{Exact-One} signature
$[0, 1, \ldots, 0, 0]$ corresponding to the \textsc{Perfect Matching}
constraint.

A Holant problem is parametrized by a set of signatures.

\begin{definition}
 Given a set of signatures $\mathcal{F}$,
 we define the counting problem $\Holant(\mathcal{F})$ as:

 Input: A \emph{signature grid} $\Omega = (G, \pi)$;

 Output: $\Holant(\Omega; \mathcal{F})$.

The problem $\PlHolant(\mathcal{F})$ is defined similarly using a planar signature grid.
\end{definition}

A signature $f$ of arity $n$ is \emph{degenerate} if there exist unary signatures $u_j \in \mathbb{C}^2$ ($1 \le j \le n$)
such that $f = u_1 \otimes \cdots \otimes u_n$.
Using a degenerate signature
is equivalent to replacing
 it by its $n$ unary signatures, each on its corresponding edge.
A symmetric degenerate signature has the form $u^{\otimes n}$.
Replacing a signature $f \in \mathcal{F}$ by a constant multiple $c f$,
where $c \ne 0$,
does not change the complexity of $\Holant(\mathcal{F})$.
It introduces a factor $c^{m}$
to $\Holant(\Omega; \mathcal{F})$, where $f$ occurs $m$ times in $\Omega$.

We allow $\mathcal{F}$ to be an infinite set.
For $\Holant(\mathcal{F})$
or $\PlHolant(\mathcal{F})$ to be tractable,
the problem must be computable in polynomial time even when the description of the signatures in the input $\Omega$ are included in the input size.
On the other hand,
we say  $\Holant(\mathcal{F})$
or  $\PlHolant(\mathcal{F})$ is $\SHARPP$-hard if there exists a finite subset of $\mathcal{F}$ for which the problem is $\SHARPP$-hard.
In this paper we focus on planar problems, and so
we say a signature set $\mathcal{F}$ is tractable (resp.~$\SHARPP$-hard)
if the corresponding counting problem
 $\PlHolant(\mathcal{F})$ is tractable (resp.~$\SHARPP$-hard).
For a signature $f$,
we say $f$ is tractable (resp.~$\SHARPP$-hard) if $\{f\}$ is.
We follow the usual conventions about polynomial time Turing reduction $\le_T$ and polynomial time Turing equivalence $\equiv_T$.

\subsection{Holographic Reduction}

To introduce the idea of holographic reductions,
it is convenient to consider bipartite graphs.
For a general graph,
we can always transform it into a bipartite graph while preserving the Holant value,
as follows.
For each edge in the graph,
we replace it by a path of length two.
(This operation is called the \emph{2-stretch} of the graph and yields the edge-vertex incidence graph.)
Each new vertex is assigned the binary \textsc{Equality} signature $(=_2) = [1,0,1]$.

We use $\holant{\mathcal{F}}{\mathcal{G}}$ to denote the Holant problem over signature grids with a bipartite graph $H = (U,V,E)$,
where each vertex in $U$ or $V$ is assigned a signature in $\mathcal{F}$ or $\mathcal{G}$,
respectively.
Signatures in $\mathcal{F}$ are considered as row vectors (or covariant tensors);
signatures in $\mathcal{G}$ are considered as column vectors (or contravariant tensors).
Similarly,
$\plholant{\mathcal{F}}{\mathcal{G}}$ denotes the Holant problem over signature grids with a planar bipartite graph.

For an invertible $2$-by-$2$ matrix $T \in {\rm GL}_2({\mathbb{C}})$
 and a signature $f$ of arity $n$, written as
a column vector (contravariant tensor) $f \in \mathbb{C}^{2^n}$, we denote by
$T^{-1}f = (T^{-1})^{\otimes n} f$ the transformed signature.
  For a signature set $\mathcal{F}$,
define $T^{-1} \mathcal{F} = \{T^{-1}f \mid  f \in \mathcal{F}\}$ the set of
transformed signatures.
For signatures written as
 row vectors (covariant tensors) we define $\mathcal{F} T$ similarly.
Whenever we write $T^{-1} f$ or $T^{-1} \mathcal{F}$,
we view the signatures as column vectors;
similarly for $f T$ or $\mathcal{F} T$ as row vectors.
In the special case of the Hadamard matrix
$H_2 = \frac{1}{\sqrt{2}} \left[\begin{smallmatrix} 1 & 1 \\ 1 & -1 \end{smallmatrix}\right]$,
we also define $\widehat{\mathcal{F}} = H_2  \mathcal{F}$.
Note that $H_2$ is orthogonal.
Since constant factors are immaterial, for convenience we sometime
drop the factor $\frac{1}{\sqrt{2}}$ when using $H_2$.

Let $T \in {\rm GL}_2({\mathbb{C}})$.
The holographic transformation defined by $T$ is the following operation:
given a signature grid $\Omega = (H, \pi)$ of $\holant{\mathcal{F}}{\mathcal{G}}$,
for the same bipartite graph $H$,
we get a new grid $\Omega' = (H, \pi')$ of $\holant{\mathcal{F} T}{T^{-1} \mathcal{G}}$ by replacing each signature in
$\mathcal{F}$ or $\mathcal{G}$ with the corresponding signature in $\mathcal{F} T$ or $T^{-1} \mathcal{G}$.

\begin{theorem}[Valiant's Holant Theorem~\cite{string23}]
 For any $T \in {\rm GL}_2({\mathbb{C}})$,
  \[\Holant(\Omega; \mathcal{F} \mid \mathcal{G}) = \Holant(\Omega'; \mathcal{F} T \mid T^{-1} \mathcal{G}).\]
\end{theorem}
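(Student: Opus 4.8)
The plan is to read $\Holant(\Omega;\mathcal{F}\mid\mathcal{G})$ as a tensor-network contraction and to insert the identity $I_2=TT^{-1}$ on each edge. Fix the bipartite signature grid $\Omega=(H,\pi)$ with $H=(U,V,E)$; each $u\in U$ carries $f_u\in\mathcal{F}$, viewed as a covariant tensor in $(\mathbb{C}^2)^{\otimes\deg u}$ (a row vector), and each $v\in V$ carries $g_v\in\mathcal{G}$, viewed as a contravariant tensor in $(\mathbb{C}^2)^{\otimes\deg v}$ (a column vector). Because $H$ is bipartite, every edge $e=(u,v)$ meets exactly one vertex on each side, so in
\[\Holant(\Omega;\mathcal{F}\mid\mathcal{G})=\sum_{\sigma:E\to\{0,1\}}\ \prod_{u\in U}f_u(\sigma|_{E(u)})\ \prod_{v\in V}g_v(\sigma|_{E(v)})\]
the summation over the bit $\sigma(e)$ is exactly the contraction $\sum_{b\in\{0,1\}}(\,\cdot\,)_b(\,\cdot\,)_b$ pairing one index of $f_u$ with the corresponding index of $g_v$; equivalently it amounts to placing the $2\times2$ identity matrix on $e$.

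Next I would insert $I_2=TT^{-1}$ on every edge, putting the factor $T$ next to its $U$-endpoint and $T^{-1}$ next to its $V$-endpoint. Since $\sum_b T_{ab}(T^{-1})_{bc}=\delta_{ac}$, the value of the network is unchanged. Now regroup these inserted matrices at the vertices. At a vertex $u\in U$ with $\deg u=n$, the $n$ copies of $T$ on the edges $E(u)$ absorb into $f_u$ to produce $f_uT^{\otimes n}$, which is precisely the transformed signature $f_uT\in\mathcal{F}T$ in the notation of the excerpt; symmetrically, at $v\in V$ with $\deg v=m$ the $m$ copies of $T^{-1}$ absorb into $g_v$ to produce $(T^{-1})^{\otimes m}g_v=T^{-1}g_v\in T^{-1}\mathcal{G}$. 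After the regrouping the network is contracted over the \emph{same} graph $H$ with the new labelling $\pi'$, so its value equals $\Holant(\Omega';\mathcal{F}T\mid T^{-1}\mathcal{G})$, giving the claimed identity. Since $H$ is untouched, planarity is preserved, which yields the $\PlHolant$ statement for free.

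The step needing the most care — more bookkeeping than genuine obstacle — is matching the tensor-power conventions: one must verify that aggregating the per-edge insertions at a vertex yields exactly $f_uT^{\otimes n}$ and $(T^{-1})^{\otimes m}g_v$ (ordering the tensor factors according to the order $\pi$ assigns to the incident edges), and that bipartiteness guarantees each edge receives exactly one $T$ and one $T^{-1}$, never two of the same; parallel edges are handled identically. For a general, non-bipartite instance one first applies the $2$-stretch described in the excerpt: replace each edge by a length-two path through a fresh vertex labelled $(=_2)$, putting the problem in the bipartite form $\holant{\mathcal{F}}{\{(=_2)\}}$, and then run the argument above — but as the theorem is already stated in bipartite form this reduction is not needed here.
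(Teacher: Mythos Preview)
Your argument is correct and is the standard proof of Valiant's Holant Theorem via inserting $I_2=TT^{-1}$ on each edge of the bipartite tensor network and regrouping. Note, however, that the paper does not actually prove this theorem: it is stated in the preliminaries as a known result and attributed to Valiant~\cite{string23}, so there is no ``paper's own proof'' to compare against. Your write-up would serve perfectly well as the omitted justification.
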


Therefore,
a holographic transformation does not change the complexity of the Holant problem in the bipartite setting.

\subsection{Counting Constraint Satisfaction Problems and $\Holant(\widehat{\mathcal{EQ}}, \widehat{\mathcal{F}})$}

Counting constraint satisfaction problems (\#CSP)
can be defined as a special case of Holant problems.
An instance of $\CSP(\mathcal{F})$ is presented
as a bipartite graph.
There is one node for each variable and for each occurrence
of constraint functions respectively.
Connect a constraint node to  a variable node if the
variable appears in that occurrence
of constraint, with a labeling on the edges
for the order of these variables.
This bipartite graph is also known as the \emph{constraint graph}.
If we attach each variable node with an \textsc{Equality} function,
and consider every edge as a variable, then
the \#CSP is just the Holant problem on this bipartite graph.
Thus
$\CSP(\mathcal{F}) \equiv_T \holant{\mathcal{EQ}}{\mathcal{F}}$,
where $\mathcal{EQ} = \{{=}_1, {=}_2, {=}_3, \dotsc\}$ is the set of \textsc{Equality} signatures of all arities.
By restricting to planar constraint graphs,
we have the planar \#CSP framework,
which we denote by $\PlCSP$.
The construction above also shows that $\PlCSP(\mathcal{F}) \equiv_T \plholant{\mathcal{EQ}}{\mathcal{F}}$.

For any positive integer $d$,
the problem $\CSP^d(\mathcal{F})$ is the same as $\CSP(\mathcal{F})$ except that every variable appears a multiple of $d$ times.
Thus,
$\PlCSP^d(\mathcal{F}) \equiv_T \plholant{\mathcal{EQ}_d}{\mathcal{F}}$,
where $\mathcal{EQ}_d = \{{=}_d, {=}_{2 d}, {=}_{3 d}, \dotsc\}$ is the set of \textsc{Equality} signatures of arities that are multiples of $d$.
For $d=1$, we have just $\CSP$ problems. For $d=2$,
these are $\CSP$ problems where every variable appears an even number of times.
If $d \in \{1,2\}$,
then we further have
\begin{equation} \label{eqn:prelim:PlCSPd_equiv_Holant}
 \PlCSP^d(\mathcal{F}) \equiv_T \plholant{\mathcal{EQ}_d}{\mathcal{F}} \equiv_T \PlHolant(\mathcal{EQ}_d, \mathcal{F}).
\end{equation}
The first equivalence is by definition.
The reduction from left to right in the second equivalence is trivial.
For the other direction,
we take a signature grid for  $\PlHolant(\mathcal{EQ}_d, \mathcal{F})$
and create a bipartite signature grid for
$\plholant{\mathcal{EQ}_d}{\mathcal{F}}$
such that both signature grids have the same Holant value up to an easily computable factor.
If two signatures in $\mathcal{F}$ are assigned to adjacent vertices,
then we subdivide all edges between them and assign the binary \textsc{Equality} signature $({=}_2) \in \mathcal{EQ}_d$ to all new vertices.
Suppose \textsc{Equality} signatures $({=}_n), ({=}_m) \in \mathcal{EQ}_d$ are assigned to adjacent vertices connected by $k$ edges.
If $n = m = k$,
then we simply remove these two vertices.
The Holant of the resulting signature grid differs from the original by a factor of~$2$.
Otherwise,
we contract all $k$ edges, merge the two
vertices, and assign $({=}_{n+m-2k}) \in \mathcal{EQ}_d$ to the new vertex.

By the holographic transformation defined by the matrix $H_2 = \frac{1}{\sqrt{2}} \trans{1}{1}{1}{-1}$
(or equivalently without the nonzero factor $\frac{1}{\sqrt{2}}$ since this does not affect the complexity), we have
\begin{equation}\label{csp-holant}
\PlCSP(\mathcal{F})\equiv_T\PlHolant(\widehat{\mathcal{EQ}}, \widehat{\mathcal{F}}),
\end{equation}
where
 $\widehat{\mathcal{EQ}}=\{[1, 0], [1, 0, 1], [1, 0, 1, 0], \ldots\}$
(where we ignore nonzero factors) and $\widehat{\mathcal{F}}=H_2\mathcal{F}$.
 This equivalence (\ref{csp-holant}) plays a central role in our proof.

The next lemma shows that if we have $(=_4)$ in $\PlHolant(\widehat{\mathcal{EQ}}, \widehat{\mathcal{F}})$,
then we can construct $(=_{2k})$ for any $k\in\mathbb{Z}^+$.
 \begin{lemma}\label{equality-4-to-all-even-equality}\label{constructing-even-equality-by-equality-4}
 \begin{equation}
\PlHolant(\widehat{\mathcal{EQ}}, \mathcal{EQ}_2, \widehat{\mathcal{F}})
\equiv_{\rm T} \PlHolant(\widehat{\mathcal{EQ}}, (=_4), \widehat{\mathcal{F}}),
\end{equation}
 \end{lemma}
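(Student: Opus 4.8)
The plan is to establish the two directions of $\equiv_{\rm T}$ separately. The direction $\PlHolant(\widehat{\mathcal{EQ}}, (=_4), \widehat{\mathcal{F}}) \le_{\rm T} \PlHolant(\widehat{\mathcal{EQ}}, \mathcal{EQ}_2, \widehat{\mathcal{F}})$ is immediate, since $(=_4) \in \mathcal{EQ}_2$, so any signature grid for the left-hand side is already a valid signature grid for the right-hand side. The substance is the reverse direction: given access to $(=_4)$, $\widehat{\mathcal{EQ}}$, and $\widehat{\mathcal{F}}$ on planar grids, we must realize every $(=_{2k})$ for $k \in \mathbb{Z}^+$ as a planar gadget, so that an instance using arbitrary signatures from $\mathcal{EQ}_2$ can be simulated by an instance using only $(=_4)$ (plus the other available signatures).

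First I would handle the small cases. We have $(=_2) = [1,0,1] \in \widehat{\mathcal{EQ}}$ already (it is the arity-2 member of $\widehat{\mathcal{EQ}}$), so $k=1$ is free. For $k=2$ there is nothing to do. The main construction is an inductive/gluing argument: from two copies of $(=_{2k})$ one builds $(=_{2k+2})$, or more efficiently one merges $(=_4)$'s in a planar chain. Concretely, connecting one edge of a copy of $(=_{2a})$ to one edge of a copy of $(=_{2b})$ — i.e. identifying those two dangling edges into a single internal edge and summing over its value — yields a signature of arity $2a + 2b - 2$ that is easily checked to be $(=_{2a+2b-2})$, because Equality forces all variables on each side to agree and the shared edge forces the two sides to agree with each other. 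Starting from $(=_4)$ (so $a=b=2$, giving arity $6$), and iterating by always gluing on one more $(=_4)$, we obtain $(=_{2k})$ for every $k \ge 2$; planarity is maintained because at each step we attach the new $(=_4)$ along a single edge on the outer face, so the gadget can be drawn as a planar "caterpillar." I would include a small figure (a linear chain of squares labeled $(=_4)$ joined by single edges, with the remaining $2k$ edges as dangling half-edges) to make the planar embedding transparent.

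With the gadgets in hand, the reduction is routine: given an instance of $\PlHolant(\widehat{\mathcal{EQ}}, \mathcal{EQ}_2, \widehat{\mathcal{F}})$, replace each occurrence of $(=_{2k}) \in \mathcal{EQ}_2$ by the corresponding planar gadget built from copies of $(=_4)$; the replacement is local and preserves planarity (since each gadget has the right number of dangling edges arranged on its outer face, and the cyclic order of those edges can be matched to the cyclic order required at the vertex being replaced), and it preserves the Holant value exactly — not merely up to a factor — by the gluing computation above. This yields an instance of $\PlHolant(\widehat{\mathcal{EQ}}, (=_4), \widehat{\mathcal{F}})$ with the same value, completing the reduction. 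I expect the main obstacle to be purely bookkeeping: verifying that the cyclic/counterclockwise ordering of the dangling edges of the caterpillar gadget can always be arranged to match whatever order $\pi$ prescribes at a $(=_{2k})$-vertex, so that planarity is genuinely preserved at the splice. Since $(=_{2k})$ is fully symmetric, any cyclic order of its edges is equivalent, so this is straightforward, but it deserves an explicit sentence. A secondary point worth a remark is that $\widehat{\mathcal{F}}$ plays no role in the construction — it is just carried along — so the lemma is really a statement about $\widehat{\mathcal{EQ}}$ and $(=_4)$ generating all even Equalities.
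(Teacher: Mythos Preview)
Your proof is correct and follows essentially the same approach as the paper: the trivial direction via $(=_4)\in\mathcal{EQ}_2$, then an induction starting from $(=_2)\in\widehat{\mathcal{EQ}}$ and $(=_4)$, gluing one more copy of $(=_4)$ along a single edge at each step to build $(=_{2k})$. The paper's proof is just the bare three-line version of this induction, without your additional remarks on planarity and cyclic ordering.
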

 \begin{proof}
One direction is trivial, since $(=_4) \in \mathcal{EQ}_2$.
For the other direction
 we prove by induction.
For $k=1$, we have $(=_2)\in\widehat{\mathcal{EQ}}$.
 For $k=2$, we have $(=_4)$ given.
 Assume that we have $(=_{2(k-1)})$.
Then connecting $(=_{2(k-1)})$ and $(=_4)$ by one edge we get $(=_{2k})$.
 \end{proof}
 By (\ref{eqn:prelim:PlCSPd_equiv_Holant}), we have
 \begin{equation*}
 \PlCSP^2(\widehat{\mathcal{EQ}}, \mathcal{F}) \equiv_T \PlHolant(\widehat{\mathcal{EQ}}, \mathcal{EQ}_2, \mathcal{F}).
\end{equation*}
Thus Lemma~\ref{equality-4-to-all-even-equality} implies that
\begin{equation*}
 \PlCSP^2(\widehat{\mathcal{EQ}}, \mathcal{F}) \leq_T\PlHolant(\widehat{\mathcal{EQ}}, (=_4), \widehat{\mathcal{F}}).
\end{equation*}

 \begin{definition}\label{def:crossover}
 The crossover function $\mathfrak{X}$ is a signature
of arity 4 which satisfies $f_{0000}=f_{1111}=f_{0101}=f_{1010}=1$ and
 $f_{\alpha}=0$ for all other $\alpha\in\{0, 1\}^4$.
 \end{definition}

The crossover function $\mathfrak{X}$ on $(x_1, x_2, x_3, x_4)$
is the tensor product of two
binary {\sc Equality} functions $(=_2)$ on $(x_1, x_3)$ and
on $(x_2, x_4)$.
 If we can obtain $\mathfrak{X}$ (by some construction or
reduction) in
 $\PlCSP(\mathcal{F})$,
then we can reduce $\CSP(\mathcal{F})$ to $\PlCSP(\mathcal{F})$.
The same is true
for $\Holant(\widehat{\mathcal{EQ}}, \widehat{\mathcal{F}})$
and  $\PlHolant(\widehat{\mathcal{EQ}}, \widehat{\mathcal{F}})$.
 Moreover, note that $H_2^{\otimes 4}(\mathfrak{X})=\mathfrak{X}$,
because an orthogonal transformation does not change a
binary {\sc Equality} function $(=_2)$.
 So we can obtain $\mathfrak{X}$ in $\PlCSP(\mathcal{F})$ iff we can
obtain
 $\mathfrak{X}$ in $\PlHolant(\widehat{\mathcal{EQ}}, \widehat{\mathcal{F}})$.

\subsection{Realization}

One basic notion used throughout the paper is realization.
We say a signature $f$ is \emph{realizable} or \emph{constructible} from a signature set $\mathcal{F}$
if there is a gadget with some dangling edges such that each vertex is assigned a signature from $\mathcal{F}$,
and the resulting graph,
when viewed as a black-box signature with inputs on the dangling edges,
is exactly $f$.
If $f$ is realizable from a set $\mathcal{F}$,
then we can freely add $f$ into $\mathcal{F}$ while preserving the complexity.

\begin{figure}[t]
 \centering
 \begin{tikzpicture}[scale=\scale,transform shape,node distance=\nodeDist,semithick]
  \node[external]  (0)                     {};
  \node[internal]  (1) [below right of=0]  {};
  \node[external]  (2) [below left  of=1]  {};
  \node[internal]  (3) [above       of=1]  {};
  \node[internal]  (4) [right       of=3]  {};
  \node[internal]  (5) [below       of=4]  {};
  \node[internal]  (6) [below right of=5]  {};
  \node[internal]  (7) [right       of=6]  {};
  \node[internal]  (8) [below       of=6]  {};
  \node[internal]  (9) [below       of=8]  {};
  \node[internal] (10) [right       of=9]  {};
  \node[internal] (11) [above right of=6]  {};
  \node[internal] (12) [below left  of=8]  {};
  \node[internal] (13) [left        of=8]  {};
  \node[internal] (14) [below left  of=13] {};
  \node[external] (15) [left        of=14] {};
  \node[internal] (16) [below left  of=5]  {};
  \path let
         \p1 = (15),
         \p2 = (0)
        in
         node[external] (17) at (\x1, \y2) {};
  \path let
         \p1 = (15),
         \p2 = (2)
        in
         node[external] (18) at (\x1, \y2) {};
  \node[external] (19) [right of=7]  {};
  \node[external] (20) [right of=10] {};
  \path (1) edge                             (5)
            edge[bend left]                 (11)
            edge[bend right]                (13)
            edge node[near start] (e1) {}   (17)
            edge node[near start] (e2) {}   (18)
        (3) edge                             (4)
        (4) edge[out=-45,in=45]              (8)
        (5) edge[bend right, looseness=0.5] (13)
            edge[bend right, looseness=0.5]  (6)
        (6) edge[bend left]                  (8)
            edge[bend left]                  (7)
            edge[bend left]                 (14)
        (7) edge node[near start] (e3) {}   (19)
       (10) edge[bend right, looseness=0.5] (12)
            edge[bend left,  looseness=0.5] (11)
            edge node[near start] (e4) {}   (20)
       (12) edge[bend left]                 (16)
       (14) edge node[near start] (e5) {}   (15)
            edge[bend right]                (12)
       (16) edge[bend left,  looseness=0.5]  (9)
            edge[bend right, looseness=0.5]  (3);
  \begin{pgfonlayer}{background}
   \node[draw=\borderColor,thick,rounded corners,fit = (3) (4) (9) (e1) (e2) (e3) (e4) (e5),inner sep=0pt,transform shape=false] {};
  \end{pgfonlayer}
 \end{tikzpicture}
 \caption{An $\mathcal{F}$-gate with 5 dangling edges.}
 \label{fig:Fgate}
\end{figure}

Formally,
this notion is defined by an $\mathcal{F}$-gate.
An $\mathcal{F}$-gate is similar to a signature grid $(G, \pi)$ for $\Holant(\mathcal{F})$ except that $G = (V,E,D)$ is a graph with some dangling edges $D$.
The dangling edges define external variables for the $\mathcal{F}$-gate.
(See Figure~\ref{fig:Fgate} for an example.)
We denote the regular edges in $E$ by $1, 2, \dotsc, m$ and the dangling edges in $D$ by $m+1, \dotsc, m+n$.
Then we can define a function $f$ for this $\mathcal{F}$-gate as
\[
f(y_1, \dotsc, y_n) = \sum_{x_1, \dotsc, x_m \in \{0, 1\}} H(x_1, \dotsc, x_m, y_1, \dotsc, y_n),
\]
where $(y_1, \dotsc, y_n) \in \{0, 1\}^n$ is an assignment on the dangling edges
and $H(x_1, \dotsc, x_m, y_1, \dotsc, y_n)$ is the value of the signature grid on an assignment of all edges in $G$,
which is the product of evaluations at all vertices in $V$.
We also call this function $f$ the signature of the $\mathcal{F}$-gate.

An $\mathcal{F}$-gate is planar if the underlying graph $G$ is a planar graph,
and the dangling edges,
ordered counterclockwise corresponding to the order of the input variables,
are in the outer face in a planar embedding.
A planar $\mathcal{F}$-gate can be used in a planar signature grid as if it is just a single vertex with the particular signature.

Using  planar $\mathcal{F}$-gates,
we can reduce one planar Holant problem to another.
Suppose $g$ is the signature of some planar $\mathcal{F}$-gate.
Then $\PlHolant(\mathcal{F}, g) \leq_T \PlHolant(\mathcal{F})$.
The reduction is simple.
Given an instance of $\PlHolant(\mathcal{F}, g)$,
by replacing every occurrence of $g$ by the $\mathcal{F}$-gate,
we get an instance of $\PlHolant(\mathcal{F})$.
Since the signature of the $\mathcal{F}$-gate is $g$,
the Holant values for these two signature grids are identical.

When a gadget has an asymmetric signature,
we place a diamond on the edge corresponding to the first input.
The remaining inputs are ordered counterclockwise around the vertex.

\begin{definition}[Derivative] \label{derivative}
Let $f$ and
 $g$ be two signatures of arities $n$ and $m$ respectively,
and $n > m$.
We connect all $m$ input edges $(1 \le j \le m$) of  $g$ to
 $m$ consecutive edges of $f$ in a clockwise order,
indexed $i-1, \ldots, i-m \pmod n$.
The  \emph{derivative signature}  $\partial^{\{i-1, \ldots, i-m\}}_g(f)$
is the signature of this planar $\{f, g\}$-gate of arity $n-m$,
whose variables are the unmatched variables of $f$
in the original counterclockwise order starting with the first
unmatched variable.
(The clockwise order of edges of $f$ to be matched with
the counterclockwise order of edges of $g$ ensures planarity.)

If $f$ is symmetric, we will simply write $\partial_g(f)$
 since the derivative signature is independent of
the choice of $i$ in this case.
Moreover, if $kn<m$ and we connect $k$ copies of $g$ to $f$,
which is the same as forming $\partial_g(f)$ sequentially
$k$ times, the resulting \emph{repeated derivative signature} is
 denoted by $\partial^k_g(f)$.

For a unary signature $u$,
we can connect a copy of $u$
to each edge of $f$ indexed by a subset $S \subset [n]$, and
we also denote the resulting signature by $\partial^{S}_u(f)$.

For convenience, we use $f^{x_i=0}$ to denote $\partial^{\{i\}}_{[1, 0]}{f}$ and $f^{x_i=1}$ to denote $\partial^{\{i\}}_{[0, 1]}{f}$.

\end{definition}

We use the signature matrix
\[M_{x_1, x_2x_3}(f)=\begin{bmatrix}
f_{000} & f_{001} & f_{010} & f_{011}\\
f_{100} & f_{101} & f_{110} & f_{111}
\end{bmatrix}\]
to denote a ternary signature and
use the signature matrix
\[M_{x_1x_2, x_4x_3}(f)=\begin{bmatrix}
f_{0000} & f_{0010} & f_{0001} & f_{0011}\\
f_{0100} & f_{0110} & f_{0101} & f_{0111}\\
f_{1000} & f_{1010} & f_{1001} & f_{1011}\\
f_{1100} & f_{1110} & f_{1101} & f_{1111}
\end{bmatrix}\]
to denote a signature of arity 4.
Note that in $M_{x_1x_2, x_4x_3}(f)$, the rows are indexed by $x_1x_2$,
 and the columns are indexed by $x_4x_3$ (not $x_3x_4$),
both in lexicographic order.
This reversal of column index ensures
 that the  signature matrix of linking two arity 4 signatures
is simply the matrix
product of the two signature matrices.

For example, the signature matrix of the crossover function $\mathfrak{X}$ is
\[M_{x_1x_2, x_4x_3}(\mathfrak{X})=\begin{bmatrix}
1 & 0 & 0 & 0\\
0 & 0 & 1 & 0\\
0 & 1 & 0 & 0\\
0 & 0 & 0 & 1
\end{bmatrix}.\]

When we rotate a signature, the transformation of its signature matrix
is depicted in Figure~\ref{fig:rotate_asymmetric_signature}.
\begin{figure}[ht]
 \centering
 \def\capWidth{6cm}
 \captionsetup[subfigure]{width=\capWidth}
 \tikzstyle{entry} = [internal, inner sep=2pt]
 \subfloat[A counterclockwise rotation]{
  \begin{tikzpicture}[scale=\scale,transform shape,node distance=1.7 * \nodeDist,semithick]
   \node[internal]  (0)                    {};
   \node[external]  (1) [above  left of=0] {};
   \node[external]  (2) [above right of=0] {};
   \node[external]  (3) [below  left of=0] {};
   \node[external]  (4) [below right of=0] {};
   \node[external]  (5) [      right of=0] {};
   \node[external]  (6) [      right of=5] {};
   \node[internal]  (7) [      right of=6] {};
   \node[external]  (8) [above  left of=7] {};
   \node[external]  (9) [above right of=7] {};
   \node[external] (10) [below  left of=7] {};
   \node[external] (11) [below right of=7] {};
   \path (0) edge[postaction={decorate, decoration={
                                         markings,
                                         mark=at position 0.25 with {\arrow[>=diamond,white] {>}; },
                                         mark=at position 0.25 with {\arrow[>=open diamond]  {>}; },
                                         mark=at position 0.65 with {\arrow[>=diamond,white] {>}; },
                                         mark=at position 0.65 with {\arrow[>=open diamond]  {>}; } } }] (1)
             edge (2)
             edge (3)
             edge (4)
    (5.west) edge[->, >=stealth] (6.east)
         (7) edge[postaction={decorate, decoration={
                                         markings,
                                         mark=at position 0.65 with {\arrow[>=diamond,white] {>}; },
                                         mark=at position 0.65 with {\arrow[>=open diamond]  {>}; } } }] (8)
             edge (9)
             edge [postaction={decorate, decoration={
                                         markings,
                                         mark=at position 0.25 with {\arrow[>=diamond,white] {>}; },
                                         mark=at position 0.25 with {\arrow[>=open diamond]  {>}; } } }] (10)
             edge (11);
   \begin{pgfonlayer}{background}
    \node[draw=\borderColor,thick,rounded corners,fit = (0),inner sep=16pt] {};
    \node[draw=\borderColor,thick,rounded corners,fit = (7),inner sep=16pt] {};
   \end{pgfonlayer}
  \end{tikzpicture}}
 \qquad
 \subfloat[Movement of signature matrix entries]{
  \makebox[\capWidth][c]{
   \begin{tikzpicture}[scale=\scale,transform shape,>=stealth,node distance=\nodeDist,semithick]
    \node[entry] (11)               {};
    \node[entry] (12) [right of=11] {};
    \node[entry] (13) [right of=12] {};
    \node[entry] (14) [right of=13] {};
    \node[entry] (21) [below of=11] {};
    \node[entry] (22) [right of=21] {};
    \node[entry] (23) [right of=22] {};
    \node[entry] (24) [right of=23] {};
    \node[entry] (31) [below of=21] {};
    \node[entry] (32) [right of=31] {};
    \node[entry] (33) [right of=32] {};
    \node[entry] (34) [right of=33] {};
    \node[entry] (41) [below of=31] {};
    \node[entry] (42) [right of=41] {};
    \node[entry] (43) [right of=42] {};
    \node[entry] (44) [right of=43] {};
    \node[external] (nw) [above left  of=11] {};
    \node[external] (ne) [above right of=14] {};
    \node[external] (sw) [below left  of=41] {};
    \node[external] (se) [below right of=44] {};
    \path (13) edge[<-, dotted]                (12)
          (12) edge[<-, dotted]                (21)
          (21) edge[<-, dotted]                (31)
          (31) edge[<-, dotted,out=65,in=-155] (13)
          (42) edge[<-, dashed]                (43)
          (43) edge[<-, dashed]                (34)
          (34) edge[<-, dashed]                (24)
          (24) edge[<-, dashed,out=-115,in=25] (42)
          (14) edge[<-, very thick]            (22)
          (22) edge[<-, very thick]            (41)
          (41) edge[<-, very thick]            (33)
          (33) edge[<-, very thick]            (14)
          (23) edge[<->]                      (32);
    \path (nw.west) edge (sw.west)
          (ne.east) edge (se.east)
          (nw.west) edge (nw.east)
          (sw.west) edge (sw.east)
          (ne.west) edge (ne.east)
          (se.west) edge (se.east);
   \end{tikzpicture}}}
 \caption{The movement of the entries in the signature matrix of a
 signature of arity 4 under a counterclockwise rotation of the input edges.
  Entires of Hamming weight~$1$ are in the dotted cycle,
  entires of Hamming weight~$2$ are in the two solid cycles (one has length~$4$ and the other one is a swap),
  and entries of Hamming weight~$3$ are in the dashed cycle.}
 \label{fig:rotate_asymmetric_signature}
\end{figure}


\subsection{Tractable Signature Sets}

We define some signatures that are known
 to be tractable~\cite{caiguowilliams13,Guo-Williams}.
These form three families: affine signatures,
product-type signatures,
and
matchgate signatures.

\paragraph{Affine Signatures}

\begin{definition}
For a signature $f$ of arity $n$,
the support of $f$ is
\[\operatorname{supp}(f)=\{(x_1, x_2, \ldots,  x_n)
\in \mathbb{Z}_2^n \mid f(x_1, x_2, \ldots, x_n)\neq 0\}.\]
\end{definition}

\begin{definition}
Let $f$ be a signature of arity $n$.
We say $f$ has affine support of dimension $k$ if
$\operatorname{supp}(f)$ is an affine subspace of
dimension $k$ over $\mathbb{Z}_2$, i.e.,
there is a matrix $A$  over $\mathbb{Z}_2$
such that
 $f(x_1, x_2, \ldots, x_n)\neq 0$ iff $AX=0$, where $X = (x_1, x_2, \ldots,
 x_n, 1)$ and
the affine space $\{(x_1, x_2, \ldots, x_n)
\in \mathbb{Z}_2^n \mid AX=0\}$ has dimension $k$.
\end{definition}

For a signature  of arity $n$  with affine support of dimension $k$,
let $X=\{x_{i_1}, x_{i_2}, \ldots, x_{i_k}\}$ be a set of free variables, where $i_1< i_2< \ldots < i_k$.
Then on $\operatorname{supp}(f)$,
 $f(x_1, x_2, \ldots, x_n)$ is uniquely determined by the input
on $X$.
\begin{definition}
If $f$ has affine support of dimension $k$, and
$X$ is a set of free variables,
then  $\underline{f_{X}}$ is the \emph{compressed signature}
of $f$ for $X$ such that
$\underline{f_X}(x_{i_1}, \ldots, x_{i_k})=f(x_1, x_2, \ldots, x_n)$,
where $(x_1, x_2, \ldots, x_n) \in \operatorname{supp}(f)$.
When it is clear from the context, we omit $X$ and use $\underline{f}$
 to denote $\underline{f_X}$.
\end{definition}

\begin{definition}\label{definition-affine}
 A signature $f(x_1, \ldots, x_n)$ of arity $n$
is \emph{affine} if it has the form
 \[
  \lambda \cdot \chi_{A X = 0} \cdot {\frak i} ^{Q(X)},
 \]
 where $\lambda \in \mathbb{C}$,
 $X = (x_1, x_2, \dotsc, x_n, 1)$,
 $A$ is a matrix over $\mathbb{Z}_2$,
 $Q(x_1, x_2, \ldots, x_n)\in \mathbb{Z}_4[x_1, x_2, \ldots, x_n]$
is a quadratic (total degree at most 2) multilinear polynomial
 with the additional requirement that the coefficients of all
 cross terms are even, i.e., $Q$ has the form
 \[Q(x_1, x_2, \ldots, x_n)=a_0+\displaystyle\sum_{k=1}^na_kx_k+\displaystyle\sum_{1\leq i<j\leq n}2b_{ij}x_ix_j,\]
 and $\chi$ is a 0-1 indicator function
 such that $\chi_{AX = 0}$ is~$1$ iff $A X = 0$.
 We use $\mathscr{A}$ to denote the set of all affine signatures.
\end{definition}

In \cite{CLX14}, there is an alternative definition for affine signatures.
\begin{definition}\label{affine-definition-by-vector}
A signature $f(x_1, x_2, \dotsc, x_n)$ of arity $n$
 is \emph{affine} if it has the form
 \[
  \lambda \cdot \chi_{A X = 0} \cdot {\frak i}^{\sum_{j=1}^k\langle\mathbf{v}_j, X\rangle},
 \]
 where $\lambda \in \mathbb{C}$,
 $X = (x_1, x_2, \dotsc, x_n, 1)$,
 $A$ is a matrix over $\mathbb{Z}_2$,
 $\mathbf{v}_j$ is a vector over $\mathbb{Z}_2$,
 and $\chi$ is a 0-1 indicator signature such that $\chi_{AX = 0}$ is~$1$ iff $A X = 0$.
 Note that the dot product $\langle \mathbf{v}_j, X \rangle$ is calculated over $\mathbb{Z}_2$ with a 0-1 output in $\mathbb{Z}$,
 while the summation $\sum_{j=1}^k$ on the exponent of ${\frak i} = \sqrt{-1}$ is evaluated as a sum mod~$4$ of 0-1 terms.
\end{definition}
Definition~\ref{definition-affine} and Definition~\ref{affine-definition-by-vector}
are equivalent.
To see this we observe that each $\langle \mathbf{v}_j, X \rangle$
 as an integer sum$\pmod 2$ can be replaced
by  $(\langle \mathbf{v}_j, X \rangle)^2$ as an integer sum$\pmod 4$
since $N \equiv 0, 1\pmod 2$ iff
$N^2\equiv 0, 1\pmod 4$ respectively, for any integer $N$.
After this, all cross terms have even coefficients and all square terms
$x_s^2$ can be replaced by $x_s$ since  $x_s=0, 1$.
 Conversely, we can express $Q \pmod 4$ as a sum of
squares of affine forms of $X$,
using the condition that all cross terms have even coefficients.

The following lemma shows that for
a $\{\pm 1, \pm {\frak i}\}$-valued signature
of arity $k$,
there exists a unique
 multilinear polynomial $P(x_1, \ldots, x_k)\mod 4$ such that
$f(x_1, \ldots x_k)={\frak i}^{P(x_1, \ldots, x_k)}$.
Thus if there exists a multilinear polynomial $P(x_1, \ldots, x_k)$
such that
$f(x_1, \ldots x_k)={\frak i}^{P(x_1, \ldots, x_k)}$,
and $P$ has total degree
 greater than 2 or has a cross term with an odd coefficient,
then $f \not \in \mathscr{A}$.
\begin{lemma}\label{multilinear-polynomial-unique-affine-definition}
Let $f$ be a signature of arity $k$ taking values in
 $\{\pm 1, \pm {\frak i}\}$.
Then there exists a unique multilinear polynomial $P(x_1, \ldots, x_k)
\in\mathbb{Z}_4[x_1, \ldots, x_k]$ such that
$f(x_1,  \ldots x_k)={\frak i}^{P(x_1, \ldots, x_k)}$.

Similarly, if $f$ is a signature of arity $k$ taking values in
 $\{\pm 1\}$, then
there exists a unique multilinear polynomial $P(x_1,  \ldots, x_k)\in\mathbb{Z}_2[x_1, \ldots, x_k]$ such that
$f(x_1,  \ldots x_k)=(-1)^{P(x_1,  \ldots, x_k)}$.
\end{lemma}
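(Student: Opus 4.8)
The plan is to reduce the statement to the standard fact that every function from the Boolean cube $\{0,1\}^k$ to a commutative ring $R$ has a unique multilinear polynomial representation over $R$, applied with $R = \mathbb{Z}_4$ (respectively $R = \mathbb{Z}_2$). The key observation is that $m \mapsto {\frak i}^m$ is a group isomorphism from $(\mathbb{Z}_4, +)$ onto the multiplicative group $\{\pm 1, \pm {\frak i}\}$ (note ${\frak i}^4 = 1$, so ${\frak i}^m$ depends only on $m \bmod 4$), and likewise $m \mapsto (-1)^m$ is an isomorphism $(\mathbb{Z}_2, +) \cong (\{\pm 1\}, \times)$. Hence, given $f \colon \{0,1\}^k \to \{\pm 1, \pm {\frak i}\}$, there is a unique function $g \colon \{0,1\}^k \to \mathbb{Z}_4$ with $f(x) = {\frak i}^{g(x)}$ for all $x \in \{0,1\}^k$, and it suffices to show that $g$ admits a unique multilinear polynomial representation over $\mathbb{Z}_4$; the $\{\pm 1\}$ case is then identical, working over $\mathbb{Z}_2$.

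For existence I would exhibit the representation by Möbius inversion on the Boolean lattice (the $\mathbb{Z}_4$-analogue of the algebraic normal form of a Boolean function). For $S \subseteq [k]$ let $e_S \in \{0,1\}^k$ be its characteristic vector, and put $a_S = \sum_{T \subseteq S} (-1)^{|S| - |T|} g(e_T) \in \mathbb{Z}_4$. Define $P(x_1, \ldots, x_k) = \sum_{S \subseteq [k]} a_S \prod_{i \in S} x_i$, a multilinear polynomial over $\mathbb{Z}_4$. Since $\prod_{i \in S} x_i$ evaluated at $e_U$ equals $1$ exactly when $S \subseteq U$ and $0$ otherwise, one gets $P(e_U) = \sum_{S \subseteq U} a_S = \sum_{T \subseteq U} g(e_T) \sum_{T \subseteq S \subseteq U} (-1)^{|S| - |T|} = g(e_U)$, because the inner alternating sum is $(1-1)^{|U \setminus T|}$, which is $1$ if $T = U$ and $0$ otherwise. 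This identity is valid over any commutative ring, in particular $\mathbb{Z}_4$, so ${\frak i}^{P(X)} = {\frak i}^{g(X)} = f(X)$ on the whole cube.

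For uniqueness I observe that the very same Möbius formula \emph{recovers} the coefficients from the values: if $P = \sum_S a_S \prod_{i \in S} x_i$ and $P' = \sum_S a'_S \prod_{i \in S} x_i$ both represent $g$ on $\{0,1\}^k$, then applying the alternating sum above to the multilinear polynomial $P - P'$, which vanishes identically on $\{0,1\}^k$, gives $a_S - a'_S = \sum_{T \subseteq S} (-1)^{|S| - |T|} \cdot 0 = 0$ for every $S \subseteq [k]$, hence $P = P'$. (Equivalently, one may note that multilinear polynomials over $\mathbb{Z}_4$ in $k$ variables and functions $\{0,1\}^k \to \mathbb{Z}_4$ form sets of the same finite cardinality $4^{2^k}$, so the evaluation map, shown surjective by the existence step, is automatically a bijection.) The same reasoning over $\mathbb{Z}_2$ yields the second assertion.

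I expect no serious obstacle here; the one point requiring care is that $\mathbb{Z}_4$ is not a field, so the familiar "a polynomial of low degree has few roots" style of argument is unavailable. For that reason I would route both existence and uniqueness through the Möbius/ANF computation, which holds over an arbitrary commutative ring and in particular is unaffected by the zero divisor $2 \in \mathbb{Z}_4$.
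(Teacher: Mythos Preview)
Your proposal is correct. It differs from the paper's proof in both halves, though the differences are modest. For existence, the paper uses the Lagrange-style expansion $P(x_1,\ldots,x_k)=\sum_{\alpha\in\{0,1\}^k} r_\alpha \prod_{i=1}^k \widetilde{x_i}$ with $\widetilde{x_i}=x_i$ or $1-x_i$ according to the bits of $\alpha$, whereas you go directly to the monomial basis via M\"obius inversion; these are equivalent interpolation schemes, and yours has the advantage of giving the coefficients explicitly in the monomial basis. For uniqueness, the paper argues directly: if a nonzero multilinear $P$ had $\mathfrak{i}^{P}\equiv 1$, take a monomial $r\prod_{i\in S}x_i$ with $r\neq 0$ and $|S|$ minimal, evaluate at the indicator of $S$, and get $\mathfrak{i}^{r}\neq 1$. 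Your route---inverting the evaluation map by the same M\"obius formula, or alternatively the cardinality count $4^{2^k}=4^{2^k}$---is cleaner in that it handles existence and uniqueness uniformly and makes transparent that the argument works over any commutative ring (so the non-field nature of $\mathbb{Z}_4$ is irrelevant). The paper's minimal-monomial trick, on the other hand, is slightly more elementary and self-contained. Either proof is entirely adequate for this lemma.
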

\begin{proof}
We prove the first statement.
 The proof for the second is similar and we omit it here.

For any input $\alpha=a_1\ldots a_k\in\{0, 1\}^k$,
there exists $r_{\alpha}\in\{0, 1, 2, 3\}$
such that $f_{\alpha}={\frak i}^{r_{\alpha}}$
since $f$ takes values in $\{\pm 1, \pm {\frak i}\}$.
Let $P(x_1, \ldots, x_k)
=\displaystyle\sum_{\alpha\in\{0, 1\}^k}r_{\alpha}
\displaystyle\prod_{i=1}^k \widetilde{x_i} \in\mathbb{Z}_4[x_1, \ldots, x_k]$,
where $\alpha=a_1\ldots a_k$,
$\widetilde{x_i} = x_i$ if $a_i =1$ and $1-x_i$ if $a_i =0$.
Then $f(x_1,  \ldots x_k)=i^{P(x_1, \ldots, x_k)}$.

Now we prove that $P(x_1, \ldots, x_k)$ is unique.
It is equivalent to prove that if $f$ is the constant 1
function then $P(x_1, \ldots, x_k) =0$ in
$\mathbb{Z}_4[x_1, \ldots, x_k]$.
For a contradiction suppose $r \displaystyle\prod_{i\in S} x_i$
is a nonzero term in $P(x_1, \ldots, x_k)$ with minimum $|S|$.
Set $x_i =1$ for all $i \in S$, and all other $x_i=0$.
Then $P$ evaluates to $r \not =0$ in $\mathbb{Z}_4$,
and $f$ evaluates to ${\frak i}^r \not = 1$.
This is a contradiction.
\end{proof}

By Lemma~\ref{multilinear-polynomial-unique-affine-definition}, we directly have the following Corollary.
\begin{corollary}\label{corollary-compressed-multilinear-unique}
Let $f  \in \mathscr{A}$ be a signature of arity $n$
with support of dimension $k$.
Let $X = \{x_{i_1}, \ldots, x_{i_k}\}$ be a set of free variables.
Then there exists a unique $Q(X)
\in\mathbb{Z}_4[X]$
such that \[f(x_1, x_2, \ldots, x_n)={\frak i}^{Q(X)}\]
for $(x_1, x_2, \ldots, x_n)\in \operatorname{supp}(f)$,
where $Q(X)$ is a quadratic multilinear polynomial
and the coefficients of  cross terms are even.
\end{corollary}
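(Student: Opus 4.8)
The plan is to reduce the statement to Lemma~\ref{multilinear-polynomial-unique-affine-definition} applied to the compressed signature $\underline{f_X}$, the only real work being to check that the (necessarily unique) multilinear exponent produced by that lemma has the restricted shape claimed. By the discussion preceding the definition of the compressed signature, $\underline{f_X}$ is a well-defined signature of arity $k$, since on $\operatorname{supp}(f)$ the value of $f$ is determined by the free variables $X = \{x_{i_1}, \dots, x_{i_k}\}$. Because $f \in \mathscr{A}$, on $\operatorname{supp}(f)$ the value of $f$ is a nonzero scalar $\lambda$ times a power of ${\frak i}$; absorbing $\lambda$ (the asserted identity forces it to be a fourth root of unity) we may assume $\underline{f_X}$ takes values in $\{\pm 1, \pm {\frak i}\}$. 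Then Lemma~\ref{multilinear-polynomial-unique-affine-definition} already gives a \emph{unique} multilinear $Q \in \mathbb{Z}_4[X]$ with $\underline{f_X} = {\frak i}^{Q(X)}$, i.e.\ $f(x_1, \dots, x_n) = {\frak i}^{Q(X)}$ on $\operatorname{supp}(f)$. Thus existence of a unique multilinear $Q$ comes for free; it remains only to show this $Q$ has total degree at most $2$ with even cross-term coefficients.

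To pin down the shape of $Q$, use Definition~\ref{affine-definition-by-vector}: $f = \lambda \cdot \chi_{A X' = 0} \cdot {\frak i}^{\sum_{j=1}^{m} \langle \mathbf{v}_j, X' \rangle}$ with $X' = (x_1, \dots, x_n, 1)$. Since $\operatorname{supp}(f) = \{X' : A X' = 0\}$ is an affine subspace for which $X$ is a set of free coordinates, each dependent variable $x_\ell$ with $\ell \notin \{i_1, \dots, i_k\}$ equals, on the support, an affine function over $\mathbb{Z}_2$ of the variables in $X$. Substituting these expressions, each $\langle \mathbf{v}_j, X' \rangle$ restricted to the support becomes an affine form $\ell_j(X)$ over $\mathbb{Z}_2$, evaluated as a $0$--$1$ integer, so
\[
\underline{f_X} = \lambda\,{\frak i}^{\,\sum_{j=1}^{m} \ell_j(X)} \qquad \text{on } \operatorname{supp}(f).
\]
This exhibits $\underline{f_X}$ itself in the form of Definition~\ref{affine-definition-by-vector} over the variables $X$ (with a trivial support constraint). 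By the equivalence of Definitions~\ref{definition-affine} and~\ref{affine-definition-by-vector} established right after the latter --- replace each $0$--$1$ value $\ell_j(X)$ by $\ell_j(X)^2 \bmod 4$ and then reduce every $x_{i_s}^2$ to $x_{i_s}$ --- the exponent $\sum_j \ell_j(X)$ equals, modulo $4$, a quadratic multilinear polynomial all of whose cross-term coefficients are even; absorbing the constant contributed by $\lambda$ does not change this. By the uniqueness clause of Lemma~\ref{multilinear-polynomial-unique-affine-definition}, this polynomial \emph{is} $Q$, so $Q$ has the desired form, and uniqueness of $Q$ is again that clause.

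The only part that requires any work is the shape of $Q$, and the one delicate point there is the bookkeeping between the $\bmod 2$ affine forms $\ell_j$ (obtained by restricting to the affine support and eliminating the dependent variables) and the $\bmod 4$ exponent of ${\frak i}$ --- that is, re-running the square-trick equivalence between the two definitions of affine signature, now over the free variables $X$ rather than over all of $x_1, \dots, x_n$. Since that equivalence is already in hand, the corollary follows.
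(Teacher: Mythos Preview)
Your proof is correct and, in fact, supplies the details that the paper omits: the paper simply declares the corollary to follow ``directly'' from Lemma~\ref{multilinear-polynomial-unique-affine-definition} without further argument. Your route---pass to the compressed signature $\underline{f_X}$, invoke the lemma for uniqueness of the multilinear exponent, and then use Definition~\ref{affine-definition-by-vector} together with the square-trick equivalence to show that this exponent has the quadratic-with-even-cross-terms shape---is exactly the natural way to justify that one-line assertion, and each step is sound. The only place requiring care, which you handle correctly, is that substituting the $\mathbb{Z}_2$-affine dependencies into the exponent must be done through the $\langle \mathbf{v}_j, X'\rangle$ formulation (where each summand is already a $0$--$1$ valued $\mathbb{Z}_2$-form) rather than directly into a $\mathbb{Z}_4$-polynomial.
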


\begin{corollary}\label{f-affine-iff-f*-affine}
Let $f$ be  a signature of arity $n$ having
affine support of dimension $k$.
Suppose $f$ takes values in $\{0, \pm 1, \pm {\frak i}\}$,
and $X=\{x_{i_1}, \ldots, x_{i_k}\}$ is a set of free variables.
Then $f \in \mathscr{A}$ iff $\underline{f} \in \mathscr{A}$,
where $\underline{f}$ is the compressed signature of $f$ for $X$.
\end{corollary}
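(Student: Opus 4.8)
The plan is to obtain both implications as essentially immediate consequences of Corollary~\ref{corollary-compressed-multilinear-unique}, once the structure of the compressed signature is made explicit. First I would record two preliminary facts. Since the free variables $X=\{x_{i_1},\ldots,x_{i_k}\}$ range over all of $\mathbb{Z}_2^k$, and each assignment to them extends uniquely to a point of $\operatorname{supp}(f)$ at which $f$ is nonzero, the compressed signature $\underline{f}$ has arity $k$, is nowhere zero on $\mathbb{Z}_2^k$, and takes values in $\{\pm 1,\pm{\frak i}\}$; in particular $\operatorname{supp}(\underline{f})=\mathbb{Z}_2^k$ has full dimension $k$. Second, by the hypothesis that $f$ has affine support, $\operatorname{supp}(f)=\{(x_1,\ldots,x_n):AX=0\}$ for some matrix $A$ over $\mathbb{Z}_2$, where $X=(x_1,\ldots,x_n,1)$, so $\chi_{AX=0}$ is exactly the indicator of $\operatorname{supp}(f)$. (The degenerate case $f\equiv 0$ is handled separately and trivially.)

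For the direction $f\in\mathscr{A}\Rightarrow\underline{f}\in\mathscr{A}$, I would apply Corollary~\ref{corollary-compressed-multilinear-unique} to $f$: it supplies a quadratic multilinear polynomial $Q\in\mathbb{Z}_4[x_{i_1},\ldots,x_{i_k}]$ whose cross-term coefficients are even and with $f(x_1,\ldots,x_n)={\frak i}^{Q(x_{i_1},\ldots,x_{i_k})}$ for all $(x_1,\ldots,x_n)\in\operatorname{supp}(f)$. By the definition of the compressed signature this is precisely $\underline{f}(x_{i_1},\ldots,x_{i_k})={\frak i}^{Q(x_{i_1},\ldots,x_{i_k})}$ on all of $\mathbb{Z}_2^k$, which matches Definition~\ref{definition-affine} with $\lambda=1$ and $A$ taken to be a zero matrix (so $\chi_{AX=0}\equiv 1$). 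Hence $\underline{f}\in\mathscr{A}$.

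For the converse, suppose $\underline{f}\in\mathscr{A}$. Because $\underline{f}$ has arity $k$ and full-dimensional support, Corollary~\ref{corollary-compressed-multilinear-unique} applied to $\underline{f}$ (with the whole set of its variables serving as the free set) yields a quadratic multilinear $\widetilde{Q}\in\mathbb{Z}_4[x_{i_1},\ldots,x_{i_k}]$ with even cross-term coefficients and $\underline{f}={\frak i}^{\widetilde{Q}}$ everywhere on $\mathbb{Z}_2^k$. Now regard $\widetilde{Q}$ as a polynomial $Q(x_1,\ldots,x_n)\in\mathbb{Z}_4[x_1,\ldots,x_n]$ that merely fails to involve the dependent variables; it is still quadratic, multilinear, and has even cross-term coefficients. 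Then $f=\chi_{AX=0}\cdot{\frak i}^{Q(X)}$, since on $\operatorname{supp}(f)$ both sides equal $\underline{f}$ evaluated at the free-variable values and off $\operatorname{supp}(f)$ both sides vanish. This is exactly the form in Definition~\ref{definition-affine}, so $f\in\mathscr{A}$.

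The substance of the argument is carried entirely by Corollary~\ref{corollary-compressed-multilinear-unique} (and, through it, the uniqueness in Lemma~\ref{multilinear-polynomial-unique-affine-definition}); what remains is bookkeeping. The two places I expect to need care are: verifying that $\operatorname{supp}(\underline{f})$ is full-dimensional so the corollary legitimately applies to $\underline{f}$ in the converse, and---the step most easily botched---\emph{not} attempting to substitute the dependent variables' $\mathbb{Z}_2$-affine expressions back into $\widetilde{Q}$, which would force one to track whether the even-cross-term condition survives a substitution mixing arithmetic mod $2$ and mod $4$; keeping $Q$ as a polynomial in the free variables alone sidesteps this issue completely.
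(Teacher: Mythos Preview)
Your proposal is correct and follows essentially the same approach as the paper: both exploit the tautological fact that $f$ on its support and $\underline f$ on all of $\mathbb{Z}_2^k$ are given by the very same expression ${\frak i}^{Q(x_{i_1},\ldots,x_{i_k})}$, and then invoke Definition~\ref{definition-affine} in each direction. The paper's proof is a single sentence to this effect; you have simply spelled out the bookkeeping (full support of $\underline f$, the role of $\chi_{AX=0}$, and the caution against substituting the $\mathbb{Z}_2$-affine dependencies into the $\mathbb{Z}_4$ polynomial) more carefully.
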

\begin{proof}
Note that if $\underline{f}(x_{i_1}, \ldots, x_{i_k})
={\frak i}^{Q(x_{i_1}, \ldots, x_{i_k})}$,
then $f(x_1, \ldots, x_n)={\frak i}^{Q(x_{i_1}, \ldots, x_{i_k})}$
for $(x_1, \ldots, x_n)\in  \operatorname{supp}(f)$.
So $f  \in \mathscr{A}$ iff $\underline{f} \in \mathscr{A}$.
\end{proof}

The next two lemmas allow us to easily determine
 if a binary or ternary signature is affine.

\begin{lemma}\label{binary-affine-compressed function}
 Let $f$ be a binary signature and $M_{x_1, x_2}(f)=\left[\begin{smallmatrix}
 f_{00} & f_{01}\\
 f_{10} & f_{11}
 \end{smallmatrix}\right]
 =\left[\begin{smallmatrix}
 1 & b\\
 c & d
 \end{smallmatrix}\right]=\left[\begin{smallmatrix}
 1 & {\frak i}^s\\
 {\frak i}^t & d
 \end{smallmatrix}\right]$,
  where $r, s\in\{0, 1, 2, 3\}$. Then $f\in\mathscr{A}$  iff $d=\pm {\frak i}^{r+s}$.
In particular, if $b, c, d\in\{1, -1\}$ then $f \in\mathscr{A}$.
\end{lemma}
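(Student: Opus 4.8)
The plan is to reduce everything to the case of a $\{\pm 1, \pm {\frak i}\}$-valued signature and then apply Lemma~\ref{multilinear-polynomial-unique-affine-definition} and the definition of $\mathscr{A}$. First I would dispose of the trivial normalizations: we are given $f_{00}=1$, and we may assume all four entries are nonzero, since if some entry vanishes then $\operatorname{supp}(f)$ is a proper affine subspace, for which a separate (easier) analysis applies — but in the stated lemma the hypothesis $f_{00}=1$, $f_{01}={\frak i}^s$, $f_{10}={\frak i}^t$ already forces the first three entries into $\{\pm 1, \pm {\frak i}\}$, so the only freedom is in $d=f_{11}$. The claim is then: $f\in\mathscr{A}$ iff $d=\pm{\frak i}^{s+t}$ (I read the matrix entries as $b={\frak i}^s$, $c={\frak i}^t$; the symbols $r,s$ in the statement appear to be a typo for $s,t$).

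For the forward direction, suppose $f\in\mathscr{A}$. Since $\operatorname{supp}(f)=\mathbb{Z}_2^2$ is full-dimensional, the indicator $\chi_{AX=0}$ is identically $1$, so $f(x_1,x_2)=\lambda\,{\frak i}^{Q(x_1,x_2)}$ for a quadratic multilinear $Q(x_1,x_2)=a_0+a_1x_1+a_2x_2+2b_{12}x_1x_2$ over $\mathbb{Z}_4$. Evaluating at $(0,0)$ gives $\lambda\,{\frak i}^{a_0}=1$, so $\lambda={\frak i}^{-a_0}$ and we may absorb it, taking $\lambda=1$ and $a_0=0$. Then $f_{01}={\frak i}^{a_2}$, $f_{10}={\frak i}^{a_1}$, and $f_{11}={\frak i}^{a_1+a_2+2b_{12}}={\frak i}^{a_1+a_2}\cdot(-1)^{b_{12}}=\pm{\frak i}^{s+t}$, using $a_1\equiv t$, $a_2\equiv s\pmod 4$. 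This gives the "only if" direction.

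For the converse, suppose $d=\varepsilon\,{\frak i}^{s+t}$ with $\varepsilon\in\{+1,-1\}$. I would simply exhibit the affine form: set $a_0=0$, $a_1=t$, $a_2=s$, and choose $b_{12}\in\{0,1\}$ so that $(-1)^{b_{12}}=\varepsilon$, and put $Q(x_1,x_2)=tx_1+sx_2+2b_{12}x_1x_2$. Then ${\frak i}^{Q}$ agrees with $f$ at all four inputs by the computation above, and $Q$ is a quadratic multilinear polynomial over $\mathbb{Z}_4$ whose single cross term has the even coefficient $2b_{12}$, so $f\in\mathscr{A}$ by Definition~\ref{definition-affine}. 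The uniqueness clause of Lemma~\ref{multilinear-polynomial-unique-affine-definition} is what guarantees there is no subtlety in reading off the exponents $s,t$ from $f_{01},f_{10}$, and that the condition on $d$ is exactly right. Finally, the "in particular" clause: if $b,c,d\in\{1,-1\}$ then $b={\frak i}^s$ with $s\in\{0,2\}$ and $c={\frak i}^t$ with $t\in\{0,2\}$, so $s+t\in\{0,2,4\}$ and ${\frak i}^{s+t}\in\{1,-1\}$; hence any $d\in\{1,-1\}$ equals $\pm{\frak i}^{s+t}$ automatically, and $f\in\mathscr{A}$.

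The only place requiring any care — and the "main obstacle", such as it is — is bookkeeping the correspondence between the multiplicative data $(b,c,d)$ and the additive exponents $(a_1,a_2,b_{12})\bmod 4$, in particular checking that $2b_{12}\bmod 4$ captures exactly the sign ambiguity in $d$ relative to ${\frak i}^{s+t}$, and confirming that the cross-term coefficient produced is even as required by Definition~\ref{definition-affine}. This is entirely routine. One should also note explicitly that the statement implicitly assumes $f_{01},f_{10}\in\{\pm1,\pm{\frak i}\}$ (forced by the displayed matrix form); no separate handling of a zero entry among $b,c$ is needed under that reading.
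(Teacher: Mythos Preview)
Your proposal is correct and follows essentially the same approach as the paper: write $f={\frak i}^{Q}$ with $Q=a_0+a_1x_1+a_2x_2+2b_{12}x_1x_2$, read off $a_0,a_1,a_2$ from $f_{00},f_{10},f_{01}$, and observe that $f_{11}={\frak i}^{a_1+a_2+2b_{12}}$ forces exactly the stated condition on $d$; the converse is the explicit construction of $Q$. You also correctly flagged the $r,s$ vs.\ $s,t$ typo in the displayed statement (the paper's own proof uses $f_{01}={\frak i}^r$, $f_{10}={\frak i}^s$). Your appeal to Lemma~\ref{multilinear-polynomial-unique-affine-definition} for uniqueness is not strictly needed---the paper does not invoke it---but it does no harm.
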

\begin{proof}
Let  $Q(x_1, x_2)=sx_1+rx_2$
if $d={\frak i}^{r+s}$, and
$sx_1+rx_2+2x_1x_2$ if $d=-{\frak i}^{r+s}$.
Then $f(x_1, x_2)={\frak i}^{Q(x_1, x_2)}$. Thus $f\in\mathscr{A}$
by Definition~\ref{definition-affine}.
Conversely, if
$f(x_1, x_2)={\frak i}^{Q(x_1, x_2)}$,
for some
 $Q(x_1, x_2)=a_0+a_1x_1+a_2x_2+2b_{12}x_1x_2\in\mathbb{Z}_4[x_1, x_2]$,
then we have  $a_0=0$ by $f_{00}=1$, $a_1=s$ by $f_{10}={\frak i}^s$, and
 $a_2=r$ by  $f_{01}={\frak i}^r$.
Thus $f_{11}={\frak i}^{r+s+2b_{12}}=\pm {\frak i}^{r+s}$.
\end{proof}

\begin{lemma}\label{ternary-affine-compressed function}
Let $f$ be a ternary signature and
\[M_{x_1, x_2x_3}(f)=\begin{bmatrix}
f_{000} & f_{001} & f_{010} & f_{011}\\
f_{100} & f_{101} & f_{110} & f_{111}
\end{bmatrix}=\begin{bmatrix}
1 & {\frak i}^r & {\frak i}^s & \epsilon_1 {\frak i}^{r+s}\\
{\frak i}^t & \epsilon_2 {\frak i}^{r+t} & \epsilon_3 {\frak i}^{s+t} & \epsilon_4 {\frak i}^{r+s+t}
\end{bmatrix},\]
where $r, s, t\in\{0, 1, 2, 3\}$, $\epsilon_i\in\{1, -1\}$ for $1\leq i\leq 4$.
Then
 $f \in\mathscr{A}$  iff $\epsilon_1\epsilon_2\epsilon_3\epsilon_4=1$.
\end{lemma}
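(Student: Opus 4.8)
The plan is to follow the template of Lemma~\ref{binary-affine-compressed function}. Since every one of the eight entries of $f$ lies in $\{\pm 1,\pm{\frak i}\}$, the signature has full support, so $\operatorname{supp}(f)=\mathbb{Z}_2^3$ has dimension $3$ and $x_1,x_2,x_3$ are all free variables. By Lemma~\ref{multilinear-polynomial-unique-affine-definition} there is a \emph{unique} multilinear polynomial $P(x_1,x_2,x_3)\in\mathbb{Z}_4[x_1,x_2,x_3]$ with $f={\frak i}^{P}$, and by Corollary~\ref{corollary-compressed-multilinear-unique} we have $f\in\mathscr{A}$ if and only if this $P$ is quadratic with all cross-term coefficients even. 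So it suffices to compute $P$ and check when it has this form. Write
\[
P=a_0+a_1x_1+a_2x_2+a_3x_3+b_{12}x_1x_2+b_{13}x_1x_3+b_{23}x_2x_3+c\,x_1x_2x_3 .
\]

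Next I would recover the coefficients by evaluating $P$ at the eight Boolean points and reading the value of $f$ off the signature matrix each time, exactly as in the binary case. The entries of Hamming weight at most $1$ force $a_0=0$ and $(a_1,a_2,a_3)=(t,s,r)$. The three weight-$2$ entries give ${\frak i}^{b_{23}}=\epsilon_1$, ${\frak i}^{b_{13}}=\epsilon_2$, ${\frak i}^{b_{12}}=\epsilon_3$; since each $\epsilon_j=\pm1$, each of $b_{12},b_{13},b_{23}$ is automatically even (it is $0$ or $2$). Finally the weight-$3$ entry gives ${\frak i}^{\,b_{12}+b_{13}+b_{23}+c}=\epsilon_4$, which, combined with the three preceding relations, collapses to ${\frak i}^{c}=\epsilon_1\epsilon_2\epsilon_3\epsilon_4$.

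This finishes both directions. If $\epsilon_1\epsilon_2\epsilon_3\epsilon_4=1$ then ${\frak i}^c=1$, so $c=0$ and $P$ is quadratic with even cross terms; taking $\lambda=1$, trivial $A$, and $Q=P$ in Definition~\ref{definition-affine} shows $f\in\mathscr{A}$. If $\epsilon_1\epsilon_2\epsilon_3\epsilon_4=-1$ then ${\frak i}^c=-1$, so $c=2$, and the unique multilinear representative of $f$ has a genuine degree-$3$ term, whence $f\notin\mathscr{A}$. There is no deep obstacle here; the one point that deserves care is the forward ("only if") direction, where one must lean on the uniqueness in Lemma~\ref{multilinear-polynomial-unique-affine-definition} (equivalently Corollary~\ref{corollary-compressed-multilinear-unique}) to be sure no other, quadratic, polynomial can also represent $f$ — this is precisely what forbids absorbing the offending $2x_1x_2x_3$ term into the lower-degree part. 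Everything else is the bookkeeping of eight linear equations over $\mathbb{Z}_4$.
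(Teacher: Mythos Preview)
Your proof is correct and is essentially the same as the paper's: both write $f={\frak i}^{Q}$ for the unique multilinear $Q\in\mathbb{Z}_4[x_1,x_2,x_3]$, observe that the cross-term coefficients are automatically even (namely $2a_j$ with $\epsilon_j=(-1)^{a_j}$), and invoke the uniqueness from Lemma~\ref{multilinear-polynomial-unique-affine-definition} to conclude that $f\in\mathscr{A}$ iff the cubic coefficient $2(a_1+a_2+a_3+a_4)$ vanishes in $\mathbb{Z}_4$, i.e.\ $\epsilon_1\epsilon_2\epsilon_3\epsilon_4=1$. The only cosmetic difference is that the paper writes down $Q$ in closed form and verifies it, whereas you solve for the coefficients one Hamming-weight level at a time.
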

\begin{proof}
We can write $\epsilon_i=(-1)^{a_i}$ for $a_i\in\{0, 1\}$,  and let
\[Q=tx_1+sx_2+rx_3+2a_1x_2x_3+2a_2x_3x_1+2a_3x_1x_2+2(a_1+a_2+a_3+a_4)x_1x_2x_3
\in \mathbb{Z}_4[x_1, x_2, x_3].\]
Then $f(x_1, x_2, x_3)={\frak i}^{Q(x_1, x_2, x_3)}$.

By Lemma~\ref{multilinear-polynomial-unique-affine-definition},
 $f \in \mathscr{A}$ iff $Q$ is a
multilinear
quadratic polynomial and the coefficients of the cross terms are even.
Thus $f  \in \mathscr{A}$  iff $2(a_1+a_2+a_3+a_4)\equiv 0\pmod 4$.
This is equivalent to  $a_1+a_2+a_3+a_4\equiv 0\pmod 2$, i.e.,
 $\epsilon_1\epsilon_2\epsilon_3\epsilon_4=1$.
\end{proof}

In addition, we
often use the following facts that can be derived from Definition~\ref{definition-affine} directly.
\begin{proposition}\label{A-has-same-norm-etc}
The following hold by definition.
\begin{itemize}
\item For any signature $f \in \mathscr{A}$, up to a
nonzero factor, all nonzero entries are powers of ${\frak i}$.
In particular they have the same norm.

\item If a signature has only one nonzero entry, then it is affine.
In particular $[1, 0], [0, 1] \in \mathscr{A}$.

\item If a signature $f$ has only two nonzero entries $f_{\alpha}$
and $f_{\beta}$, then
the support of $f$ is affine. Moreover,
in this case $f \in \mathscr{A}$ iff $f_{\alpha}^4=f_{\beta}^4$.

\item
$[1, a], [1, 0, a]$ are affine iff $a^4=0, 1$.
\end{itemize}
\end{proposition}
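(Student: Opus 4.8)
The plan is to verify each of the four bullet points directly from Definition~\ref{definition-affine}, since each is essentially a structural observation about signatures of the form $\lambda \cdot \chi_{AX=0} \cdot {\frak i}^{Q(X)}$.

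For the first bullet, I would simply note that on $\operatorname{supp}(f)$ the value is $\lambda \cdot {\frak i}^{Q(X)}$, and ${\frak i}^{Q(X)} \in \{1, {\frak i}, -1, -{\frak i}\}$ for every integer value of $Q(X) \bmod 4$; hence after dividing by $\lambda$ all nonzero entries are powers of ${\frak i}$, so in particular all have norm $1$ (and before dividing, all have norm $|\lambda|$). For the second bullet, if $f$ has exactly one nonzero entry $f_\alpha$, take $A$ to be a matrix whose solution set is the single point $\alpha$ (e.g. the rows $x_i = \alpha_i$ for each $i$), take $Q$ to be the constant polynomial with ${\frak i}^{Q} $ giving the correct phase after pulling out $\lambda = |f_\alpha|$ (or just absorb everything into $\lambda$ and set $Q \equiv 0$); then $f = \lambda \chi_{AX=0} {\frak i}^{Q} \in \mathscr{A}$. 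The signatures $[1,0]$ and $[0,1]$ are the arity-$1$ instances of this.

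For the third bullet, suppose $f$ has exactly two nonzero entries $f_\alpha, f_\beta$ with $\alpha \ne \beta$. The set $\{\alpha, \beta\}$ is an affine subspace of dimension $1$ over $\mathbb{Z}_2$ (it is the coset $\alpha + \langle \alpha + \beta\rangle$), so one can write $\operatorname{supp}(f)$ as the solution set of a suitable linear system $AX = 0$; thus the support is always affine. For membership in $\mathscr{A}$: pick the one free variable, say the coordinate where $\alpha$ and $\beta$ differ, so the compressed signature $\underline{f}$ is the binary-truth-table unary signature $[f_\alpha, f_\beta]$ (up to reindexing). By Corollary~\ref{f-affine-iff-f*-affine} — after scaling $f$ by $1/f_\alpha$ so that the entries lie in $\{0, \pm 1, \pm {\frak i}\}$, which is forced exactly when $(f_\beta/f_\alpha)^4 = 1$ — we have $f \in \mathscr{A}$ iff $\underline{f} = [1, f_\beta/f_\alpha] \in \mathscr{A}$, and a unary signature $[1, a]$ lies in $\mathscr{A}$ iff $a \in \{1, -1, {\frak i}, -{\frak i}\}$, i.e. iff $a^4 = 1$; unwinding the scaling this says $f_\beta^4 = f_\alpha^4$. (If the scaled ratio is not a fourth root of unity, then $f$ is not even a constant multiple of a $\{0,\pm1,\pm{\frak i}\}$-valued signature, so it cannot be affine by the first bullet.) The fourth bullet is the special case of applying this analysis to the concrete signatures $[1,a]$ (arity $1$, entries $f_0 = 1$, $f_1 = a$) and $[1,0,a]$ (a symmetric arity-$2$ signature with nonzero entries $1$ at weight $0$ and $a$ at weight $2$): each has at most two nonzero entries, so by the third bullet it is affine iff $a^4 = 1^4$, i.e. $a^4 \in \{0, 1\}$ (the value $0$ covering the degenerate case $a = 0$, where there is a single nonzero entry).

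None of the steps presents a real obstacle — this proposition is a collection of immediate consequences of the definition, stated for later convenient reference. The only point requiring a little care is the third bullet, where one must be careful to first normalize $f$ by a scalar before invoking Corollary~\ref{f-affine-iff-f*-affine} (whose hypothesis requires values in $\{0, \pm 1, \pm {\frak i}\}$), and to observe that this normalization is possible with the output landing in that set precisely under the claimed condition $f_\alpha^4 = f_\beta^4$.
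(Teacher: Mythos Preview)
Your proposal is correct and matches the paper's intent: the paper gives no proof at all for this proposition, simply stating that these facts ``can be derived from Definition~\ref{definition-affine} directly,'' and your argument does exactly that, verifying each bullet from the definition. One small remark: in the third bullet your route through Corollary~\ref{f-affine-iff-f*-affine} is valid but slightly roundabout; a more self-contained argument is to use the first bullet for the forward direction (if $f\in\mathscr{A}$ then $f_\beta/f_\alpha$ is a power of ${\frak i}$, so $f_\alpha^4=f_\beta^4$) and, for the converse, to write $f=f_\alpha\cdot\chi_{AX=0}\cdot{\frak i}^{\,rL(X)}$ directly, where $L$ is an affine form equal to $0$ at $\alpha$ and $1$ at $\beta$ and ${\frak i}^r=f_\beta/f_\alpha$.
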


The following lemma is useful in proving arity reductions
for non-affine signatures.

\begin{lemma}\label{[1,0]-[0,1]-pinning-implies-affine-arity-4}
Let $f$ be a signature of arity $n$  with
affine support of dimension $k\geq 4$. If
$f^{x_i=0} \in \mathscr{A}$ and $f^{x_i=1} \in \mathscr{A}$
 for $1\leq i\leq n$,
then $f \in \mathscr{A}$.
\end{lemma}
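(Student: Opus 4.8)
The plan is to reconstruct the defining data of an affine signature — the affine support and the quadratic $\mathbb{Z}_4$-polynomial on it — from the restrictions $f^{x_i=0}$ and $f^{x_i=1}$. First I would observe that the support of $f$ being affine of dimension $k\ge 4$ is given by hypothesis, so the only issue is to show that $f$, restricted to its support, has the form ${\frak i}^{Q(X)}$ for a quadratic multilinear polynomial with even cross-term coefficients. By the hypothesis that each $f^{x_i=c}\in\mathscr{A}$, together with Proposition~\ref{A-has-same-norm-etc}, every nonzero entry of each restriction is (up to one global nonzero factor, which we may normalize to make the all-zero-input entry equal $1$ if that input is in the support, or propagate from any fixed support point) a power of ${\frak i}$; hence $f$ itself takes values in $\{0,\pm1,\pm{\frak i}\}$ after normalization. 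Then by Lemma~\ref{multilinear-polynomial-unique-affine-definition} there is a unique multilinear polynomial $P(x_1,\ldots,x_n)\in\mathbb{Z}_4[x_1,\ldots,x_n]$ with $f={\frak i}^{P}$ on the support (extending $P$ arbitrarily off the support, or working with the compressed signature $\underline{f}$ and Corollary~\ref{corollary-compressed-multilinear-unique}); the task reduces to showing $P$ has degree at most $2$ and all cross terms even.

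Next I would exploit the fact that setting $x_i=0$ or $x_i=1$ in $P$ yields (the polynomial computing) $f^{x_i=0}$ or $f^{x_i=1}$, each of which is affine, hence each of those specializations is quadratic with even cross terms. Write $P=\sum_S c_S \prod_{j\in S} x_j$. For any monomial $\prod_{j\in S}x_j$ with $|S|\ge 3$: since $k\ge 4$ and the support is affine of dimension $k$, I can choose a free-variable coordinate $x_i$ with $i\notin S$ — here is where the dimension bound is used: we need enough free variables that some variable outside $S$ can be pinned without collapsing the structure, and more importantly we want to pin $|S|-2\ge 1$ of the variables inside $S$ and still keep the pinned signature affine. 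Concretely, pin all but two variables of $S$ to appropriate constants ($0$ or $1$, choosing values consistent with the affine support) so that the monomial $\prod_{j\in S}x_j$ survives as a quadratic term in the remaining two variables; the resulting signature is obtained from $f$ by a sequence of $x_i=0/x_i=1$ operations, so by iterating the hypothesis it is still in $\mathscr{A}$ — provided each intermediate restriction still has affine support, which is automatic, and provided we stay within the $k\ge 4$ regime long enough. By Lemma~\ref{ternary-affine-compressed function} or Lemma~\ref{binary-affine-compressed function} applied to this low-arity restriction, its polynomial is quadratic with even cross terms, which forces $c_S\equiv 0\pmod 4$, i.e. the degree-$\ge 3$ part of $P$ vanishes. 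An entirely analogous pinning argument with $|S|=2$ shows every cross-term coefficient $c_{\{i,j\}}$ is even. Hence $P$ is quadratic with even cross terms and $f\in\mathscr{A}$ by Definition~\ref{definition-affine}.

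The main obstacle I anticipate is the bookkeeping around the affine support: the restrictions $f^{x_i=c}$ are only meaningful, and only land in $\mathscr{A}$, for those $c$ compatible with $\operatorname{supp}(f)$, and when we pin several variables at once we must ensure the chosen $0/1$ pattern is actually attained on the support (otherwise we get the identically zero signature, which is affine but useless). Managing which variables are ``free'' versus forced by the affine constraints $AX=0$, and checking that after pinning $|S|-2$ coordinates the target quadratic monomial genuinely appears in the compressed polynomial of the (nonzero) restricted signature, is the delicate part; this is precisely why the dimension hypothesis $k\ge 4$ is needed rather than $k\ge 3$ — we need room to isolate a genuine quadratic interaction while keeping a nonzero affine restriction, and to invoke Corollary~\ref{corollary-compressed-multilinear-unique} on the compressed signature throughout. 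Once that combinatorial setup is pinned down, the rest is a direct application of the uniqueness of the $\mathbb{Z}_4$-polynomial and the explicit affine criteria for binary and ternary signatures already established.
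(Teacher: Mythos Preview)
Your overall plan---reduce to the compressed signature $\underline f$ on $k$ free variables, write $\underline f={\frak i}^Q$ for the unique multilinear $Q\in\mathbb Z_4[x_{i_1},\dots,x_{i_k}]$, and use affineness of the pinnings to constrain $Q$---is exactly right and matches the paper. The gap is in your pinning step. To make a degree-$|S|$ monomial $\prod_{j\in S}x_j$ ``survive as a quadratic term'' you must set the $|S|-2$ chosen variables of $S$ to $1$ (setting any of them to $0$ kills the monomial). But then the coefficient of the surviving $x_s x_t$ in the restricted polynomial is not $c_S$ alone: it is $\sum_{\{s,t\}\subseteq T\subseteq S} c_T$, a sum over all intermediate subsets. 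Affineness of the restriction only tells you this \emph{sum} is even; it does not force $c_S\equiv 0\pmod 4$ as you assert. (You also first mention choosing $x_i$ with $i\notin S$, which is in fact the right move, but then switch to pinning inside $S$---the two ideas get conflated.)

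The paper avoids this with one clean observation you are missing: pin just \emph{one} free variable, say $x_{i_1}$. Since $\underline f^{\,x_{i_1}=0}={\frak i}^{Q_0}$ and $\underline f^{\,x_{i_1}=1}={\frak i}^{Q_1}$ with $Q_0,Q_1$ already quadratic with even cross terms (Corollary~\ref{corollary-compressed-multilinear-unique}), the full polynomial is $Q=(1-x_{i_1})Q_0+x_{i_1}Q_1$, which has total degree at most $3$. Hence any offending term---an odd-coefficient cross term $x_{i_s}x_{i_t}$ or a genuine cubic $x_{i_1}x_{i_s}x_{i_t}$---involves at most three of the free variables. Now $k\ge 4$ is used exactly once: pick $r\in[k]\setminus\{1,s,t\}$ and set $x_{i_r}=0$. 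Because $x_{i_r}$ lies \emph{outside} the offending monomial and the pin is to $0$, that monomial survives with its coefficient \emph{unchanged} in $Q|_{x_{i_r}=0}$, contradicting $\underline f^{\,x_{i_r}=0}\in\mathscr A$. The moral: pin to $0$ a variable outside the bad term, never to $1$ inside it; the prior degree-$\le 3$ bound is what guarantees such an outside variable is always available.
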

\begin{proof}
Let $X=\{x_{i_1}, x_{i_2}, \ldots, x_{i_k}\}$ be a set of free variables
of $f$ and let $\underline{f}$ be the compressed signature of $f$ for $X$.
Since both  $f^{x_{i_1}=0}$ and $f^{x_{i_1}=1}$ are affine,
$\underline{f}^{x_{i_1}=0}$ and $\underline{f}^{x_{i_1}=1}$
are affine by Corollary~\ref{f-affine-iff-f*-affine}.
By Corollary~\ref{corollary-compressed-multilinear-unique}
 there exist $Q_0(x_{i_2}, \ldots, x_{i_k})$
and $Q_1(x_{i_2}, \ldots, x_{i_k})$
such that $\underline{f}(0, x_{i_2}, \ldots, x_{i_k})={\frak i}^{Q_0(x_{i_2}, \ldots, x_{i_k})}$
and $\underline{f}(1, x_{i_2}, \ldots, x_{i_k})={\frak i}^{Q_1(x_{i_2}, \ldots, x_{i_k})}$,
where $Q_0$ and $Q_1$ are quadratic multilinear polynomials
in $\mathbb{Z}_4[x_{i_2}, \ldots, x_{i_k}]$,
 and the coefficients of all cross terms are even.

Let $Q(x_{i_1}, x_{i_2}, \ldots, x_{i_k})
=(1 - x_1)Q_0+x_1Q_1$, then $\underline{f}(x_{i_1}, x_{i_2}, \ldots, x_{i_k})={\frak i}^{Q(x_{i_1}, x_{i_2}, \ldots, x_{i_k})}$.
Note that the total degree of $Q$ is at most 3.
If $Q$ has total degree at most 2
 and the coefficients of all cross terms are even, then $\underline{f}$
is affine. Thus $f$ is affine by Corollary~\ref{f-affine-iff-f*-affine} and we are done.

Otherwise, either there is a cross term $x_{i_s}x_{i_t}$
($1 \le s < t \le k$) with odd coefficient $a_{st}$
 or there is a term
$x_1x_{i_s}x_{i_t}$ ($2 \le s < t \le k$)
 with coefficient $a_{1st}\neq 0$ in $Q$.
Since $k\geq 4$, there exists some $r\in[k]\setminus\{1, s, t\}$.
Then $Q^{x_{i_r} =0}
= Q(x_{i_1},\ldots,  x_{i_{r-1}}, 0 , x_{i_{r+1}}, \ldots , x_{i_k})$ has
a cross term $x_{i_s}x_{i_t}$ with odd coefficient $a_{st}$ or a term
$x_1x_{i_s}x_{i_t}$ with coefficient $a_{1st}\neq 0$.
Note that $\underline{f}^{x_{i_r}=0}={\frak i}^{Q(x_{i_1},\ldots,  x_{i_{r-1}}, 0 , x_{i_{r+1}}, \ldots , x_{i_k})}$.
Thus $\underline{f}^{x_{i_r}=0}$ is not affine by
Lemma~\ref{multilinear-polynomial-unique-affine-definition}.
So $f^{x_{i_r}=0}$ is not affine. This is a contradiction.
\end{proof}

Let
\begin{align*}
 \mathscr{F}_1 &= \left\{\lambda \left([1,0]^{\otimes k} + {\frak i}^r [0, 1]^{\otimes k}\right) | \lambda \in \mathbb{C}, k = 1, 2, \ldots, r = 0, 1, 2, 3\right\},\\
 \mathscr{F}_2 &= \left\{\lambda \left([1,1]^{\otimes k} + {\frak i}^r [1,-1]^{\otimes k}\right) | \lambda \in \mathbb{C}, k = 1, 2, \ldots, r = 0, 1, 2, 3\right\}, \\
 \mathscr{F}_3 &= \left\{\lambda \left([1,{\frak i}]^{\otimes k} + {\frak i}^r [1,-{\frak i}]^{\otimes k}\right) | \lambda \in \mathbb{C}, k = 1, 2, \ldots, r = 0, 1, 2, 3\right\}.
\end{align*}
It is known that the set of non-degenerate symmetric signatures in
$\mathscr{A}$ consists of precisely the nonzero signatures
in  $\mathscr{F}_1 \bigcup \mathscr{F}_2 \bigcup \mathscr{F}_3$ with arity at least~$2$ ($\lambda \neq 0$).

\paragraph{Product-Type Signatures}
\begin{definition}
\label{definition-product-2}
 A signature on a set of variables $X$
 is of \emph{product type} if it can be expressed as a
product of unary functions,
 binary equality functions $([1,0,1])$,
and binary disequality functions $([0,1,0])$, each on one or two
variables of $X$.
 We use $\mathscr{P}$ to denote the set of product-type functions.
\end{definition}

 A symmetric signature of the form $[a, 0, \ldots, 0, b]$ is called a
{\sc Generalized Equality}.
\begin{proposition}(cf.~Lemma~A.1 in the full version of~\cite{HL12})\label{symmetric:prod:type:signatures}
 If $f$ is a symmetric signature in $\mathscr{P}$,
then $f$ is either degenerate,
binary \textsc{Disequality} $(\neq_2) = [0,1,0]$,
or $[a,0,\dotsc,0,b]$ for some $a, b \in \mathbb{C}$.
\end{proposition}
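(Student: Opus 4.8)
The plan is to determine $f$ by first pinning down $\operatorname{supp}(f)$. Write $f$ as a product of unary functions, binary \textsc{Equality} functions $[1,0,1]$, and binary \textsc{Disequality} functions $[0,1,0]$. Viewed as a subset of $\{0,1\}^n$ (indexed by the variables the factor acts on), the support of each such factor is either all of $\{0,1\}^n$, empty, or the solution set of a single $\mathbb{Z}_2$-linear equation involving at most two of $x_1,\dots,x_n$ (namely $x_a+x_b=0$, $x_a+x_b=1$, or $x_a=c$). Since $f(x)\ne 0$ exactly when every factor is nonzero at $x$, the set $\operatorname{supp}(f)$ is the intersection of these supports; hence it is either empty or an affine subspace $V=\{x : Ax=b\}$ of $\{0,1\}^n$ in which every row of $A$ has Hamming weight at most $2$.

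The crux is to use the symmetry of $f$ to classify such $V$. Write $V=v_0+U$ where $U=\{x:Ax=0\}$ is a $\mathbb{Z}_2$-linear subspace; since $f$ is symmetric, $U$ is invariant under the coordinate-permutation action of $S_n$, so $U$ is the trivial subspace $\{0^n\}$, the span $\langle 1^n\rangle$ of the all-ones vector, the even-Hamming-weight subspace $E$, or all of $\{0,1\}^n$. The weight-$\le 2$ restriction on the rows of $A$ rules out $U=E$ once $n\ge 3$: then the row space of $A$ would have to be $E^\perp=\langle 1^n\rangle$, whose only element of weight at most $2$ is $0^n$, so $A$ could not have the required row space. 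Requiring $V$ itself to be $S_n$-invariant then forces $V\in\{\emptyset,\ \{0^n\},\ \{1^n\},\ \{0^n,1^n\},\ \{0,1\}^n\}$; in the case $U=\langle 1^n\rangle$, an $S_n$-invariant complementary pair $\{v,\overline v\}$ must be $\{0^n,1^n\}$, unless $v$ and $\overline v$ lie in the same weight layer, which forces $\binom{n}{|v|}\le 2$ and hence the extra possibility $V=\{01,10\}$ with $n=2$.

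It then remains to read off $f$ from $\operatorname{supp}(f)$. If $\operatorname{supp}(f)=\emptyset$ then $f\equiv 0=[0,\dots,0]$, which is degenerate. If $\operatorname{supp}(f)$ is a single point $\{0^n\}$ or $\{1^n\}$, then $f$ is a scalar multiple of $[1,0]^{\otimes n}$ or $[0,1]^{\otimes n}$, hence degenerate (and also of the form $[a,0,\dots,0,b]$). If $\operatorname{supp}(f)=\{0,1\}^n$ then $f$ has no binary factor, since such a factor would properly shrink the support, so $f$ is a tensor product of unary functions and is degenerate. If $\operatorname{supp}(f)=\{0^n,1^n\}$, then by symmetry $f=[a,0,\dots,0,b]$ with $a=f_0\ne 0$ and $b=f_n\ne 0$. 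Finally, if $n=2$ and $\operatorname{supp}(f)=\{01,10\}$, then by symmetry $f=[0,c,0]$ with $c\ne 0$, which is $(\neq_2)$ up to the immaterial nonzero factor $c$. These cases are exhaustive.

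I expect the main obstacle to be the support classification: ``symmetric affine subspace'' alone is insufficient, because the even- and odd-parity subspaces are symmetric and affine yet are not the support of any product-type signature; it is precisely the weight-$\le 2$ structure of the defining equations that excludes them. A secondary point to be careful about is that a unary factor can shrink the support of a connected block of binary equality/disequality constraints (for instance $[1,0,1](x_a,x_b)\cdot[1,0](x_a)$ has support $\{x:x_a=x_b=0\}$), so it is cleanest to reason through the intersection-of-supports description above rather than block by block.
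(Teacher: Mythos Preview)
Your proof is correct. The paper does not supply its own proof of this proposition; it simply cites Lemma~A.1 in the full version of~\cite{HL12}, so there is nothing to compare against directly.

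For the record, every step of your argument checks out. The support of any $f\in\mathscr{P}$ is indeed the solution set of a $\mathbb{Z}_2$-linear system whose rows have Hamming weight at most~$2$, since each factor (unary, $[1,0,1]$, or $[0,1,0]$) contributes such a constraint. The classification of $S_n$-invariant linear subspaces of $\mathbb{Z}_2^n$ as $\{0\}$, $\langle 1^n\rangle$, $E$, $\{0,1\}^n$ is standard and correct. Your exclusion of $U=E$ for $n\ge 3$ is right: $\operatorname{rowspace}(A)=U^\perp=E^\perp=\langle 1^n\rangle$, and since each row lies in this row space it must be $0^n$ or $1^n$; the weight-$\le 2$ restriction then forces every row to be $0^n$, contradicting $U=E$. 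The coset analysis for $U=\langle 1^n\rangle$ correctly isolates the exceptional case $V=\{01,10\}$ at $n=2$ via $\binom{n}{n/2}\le 2$. Finally, the full-support case is handled cleanly: any binary factor present in a $\mathscr{P}$-representation would restrict the support to a proper affine half-space, so full support forces the representation to consist only of unaries, and $f$ is degenerate.
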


\begin{corollary}\label{[1,0,1,0]:not:prod}
$[1, 0, 1, 0]\notin\mathscr{P}$.
\end{corollary}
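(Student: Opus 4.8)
The plan is to derive this immediately from Proposition~\ref{symmetric:prod:type:signatures}, which lists all possibilities for a symmetric signature in $\mathscr{P}$: degenerate, binary \textsc{Disequality} $[0,1,0]$, or a \textsc{Generalized Equality} $[a,0,\dotsc,0,b]$. Since $[1,0,1,0]$ is symmetric of arity~$3$, it suffices to rule out each of these three cases.

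First I would dispose of the two easy cases: $[1,0,1,0]$ has arity~$3$, so it is certainly not the binary signature $[0,1,0]$; and it is not a \textsc{Generalized Equality} because a \textsc{Generalized Equality} of arity~$3$ has the form $[a,0,0,b]$, whose value on inputs of Hamming weight~$2$ is $0$, whereas $[1,0,1,0]$ takes the value~$1$ there. Next, for the degenerate case, I would note that a symmetric degenerate signature of arity~$3$ has the form $u^{\otimes 3}$ with $u=[p,q]$, hence equals $[p^3,\,p^2 q,\, p q^2,\, q^3]$. Matching against $[1,0,1,0]$ forces $q^3 = 0$, so $q = 0$, but then the Hamming-weight-$2$ entry $p q^2 = 0 \ne 1$, a contradiction. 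Therefore $[1,0,1,0]$ is not degenerate, and by Proposition~\ref{symmetric:prod:type:signatures} it cannot lie in $\mathscr{P}$.

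There is no real obstacle here; this is a routine case check once Proposition~\ref{symmetric:prod:type:signatures} is in hand. The only point requiring a moment of care is being explicit that the Hamming-weight bookkeeping for arity~$3$ makes $[1,0,1,0]$ (weight-$2$ value $1$) genuinely different from every $[a,0,0,b]$ (weight-$2$ value $0$), which is what separates "equality on two variables" type products from the arity-$3$ constraint in question.
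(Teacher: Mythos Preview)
Your proof is correct and takes essentially the same approach as the paper: the corollary is stated immediately after Proposition~\ref{symmetric:prod:type:signatures} with no explicit proof, so it is meant to follow directly by checking that $[1,0,1,0]$ fits none of the three listed forms. Your case analysis does exactly this.
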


We will use Corollary~\ref{[1,0,1,0]:not:prod} in the proof of Theorem~\ref{dichotomy-csp-2}.

Definition~\ref{definition-product-2} is succinct
and is from \cite{CLX14}. But to deal with asymmetric signatures,
an alternative definition of $\mathscr{P}$ given in~\cite{CLX11a} is
useful. This is given below in Definition~\ref{definition-product-1}.
To state it we need some notations.

Suppose $f$ is a signature of arity $n$ and $\mathcal{I}=\{I_1, I_2, \ldots, I_k\}$ is a partition of $[n]$.
If $f(X)=\displaystyle\prod_{j=1}^k f_j(X|_{I_j})$ for some signatures $f_1, f_2, \ldots, f_k$, where $X=\{x_1, x_2, \ldots, x_n\}$
and $X|_{I_j}=\{x_s|s\in I_j\}$ (we also denote it by $X_j$), then we
say $f$ can be decomposed as a tensor product of $f_1, f_2, \ldots, f_k$.
We denote such a function
 by $f=\bigotimes_{\mathcal{I}}(f_1, f_2, \ldots, f_k)$.
If each $f_j$ is the signature of some
$\mathcal{F}$-gate, then $\bigotimes_{\mathcal{I}}(f_1, f_2, \ldots, f_k)$ is
the signature
of the $\mathcal{F}$-gate which is the disjoint union of
 the $\mathcal{F}$-gates for $f_j$, with variables
renamed and ordered according to $\mathcal{I}$.
(In general this is not a planar $\mathcal{F}$-gate
even when the $\mathcal{F}$-gates for all $f_j$ are planar,
unless the sets $I_1, I_2, \ldots, I_k$ partition $[n]$ in order.)
 When
the indexing is clear, we also use
the notation $f_1\otimes f_2\otimes\cdots\otimes f_k$.
Note that this tensor product notation
$\otimes$ is consistent with tensor product of matrices.
We say a signature set $\mathcal{F}$ is closed under tensor product,
if for any partition  $\mathcal{I}=\{I_1, I_2\}$, and any  $f, g\in\mathcal{F}$
on $X_1$ and $X_2$ respectively,
we have $\bigotimes_{\mathcal{I}}(f, g)\in\mathcal{F}$.
The
tensor closure $\langle \mathcal{F}\rangle$
of $\mathcal{F}$
is the minimum set containing $\mathcal{F}$, closed under tensor product.

\begin{definition}\label{definition-product-1}
Let $\mathcal{E}$ be the set of all signatures $f$ such that
$\operatorname{supp}(f)$ is contained in two antipodal points, i.e.,
if $f$ has arity $n$,
then $f$ is zero except on (possibly) two inputs $\alpha
= (a_1, a_2, \ldots, a_n)$
and $\overline{\alpha} =
(\overline{a}_1, \overline{a}_2, \ldots, \overline{a}_n)=(1- a_1, 1- a_2,
 \ldots, 1 - a_n)$.
Then $\mathscr{P}=\langle \mathcal{E}\rangle$.
\end{definition}
We claim that Definition~\ref{definition-product-2} is equivalent to Definition~\ref{definition-product-1}.
If $f\in\mathcal{E}$, then its support
$\operatorname{supp}(f) \subseteq \{\alpha, \bar{\alpha}\}$,
for some  $\alpha=a_1a_2\cdots a_n\in\{0, 1\}^n$.
We may assume that $a_1=0$.
Then
\[f=[f_{\alpha}, f_{\bar{\alpha}}](x_1)\displaystyle\prod_{i=1}^{n-1}f_i(x_i, x_{i+1}),\]
where $[f_{\alpha}, f_{\bar{\alpha}}](x_1)$
is a unary function on $x_1$,
and for all $1 \le i \le n-1$,
 $f_{i}=[1, 0, 1]$ if $a_i=a_{i+1}$ and $f_{i}=[0, 1, 0]$ if $a_i\neq a_{i+1}$.
This implies that $f$ is a product of the unary
function $[f_{\alpha}, f_{\bar{\alpha}}]$, and
binary functions $[1, 0, 1]$ and $[0, 1, 0]$.
Thus all functions in $\langle\mathcal{E}\rangle$ are products of  unary functions, and binary functions
$[1, 0, 1]$ and $[0, 1, 0]$.
Conversely, if $f$ has arity $n$ and
 is a product of  unary functions, and binary functions
 $[1, 0, 1]$ and $[0, 1, 0]$,
then there exist $S\subseteq\{(i, j) \mid i, j\in[n], \mbox{ and } i<j\}$
and $S' \subseteq [n]$, such that
 $f=\displaystyle\prod_{(i, j)\in S}h_{ij}(x_i, x_j)
\displaystyle\prod_{\ell\in S' }u_{\ell}(x_{\ell})$,
where $h_{ij}=[1, 0, 1]$ or $[0, 1, 0]$, and $u_{\ell}$ are unary functions.
Let  $\mathcal{I}=\{I_1, I_2, \ldots, I_k\}$ be the partition of $[n]$
 such that
for any $i, j\in[n]$, $i, j$ are in the same $I_c$ iff $i$ and $j$ belong
to the same connected component of the graph $([n], S)$.
Let \[H_c=\displaystyle\prod_{i, j\in I_c, \/ (i,j) \in S}h_{ij}(x_i, x_j)
\displaystyle\prod_{\ell\in I_c \cap S'}u_{\ell}(x_{\ell})\]
for $1\leq c\leq k$.
Then $H_c\in\mathcal{E}$ and
\[f=\displaystyle\bigotimes_{\mathcal{I}}(H_1, H_2, \ldots, H_k).\]
This implies that $f\in\langle\mathcal{E}\rangle$.

Given a function $f(X)$, if it is the product of two
functions $g$ and $h$ on disjoint proper subsets of variables of $X$,
then $f = g \otimes h$. Clearly every function $f(X)$
has a decomposition as a tensor product $f_1 \otimes f_2  \otimes
\cdots \otimes f_k$ where each $f_i$ is not further expressible
as a tensor product of functions on disjoint proper subsets.
If $f$ is not identically 0, then
such a \emph{primitive decomposition} is unique up to a nonzero constant factor.
To see this, suppose $f = f_1 \otimes f_2  \otimes
\cdots \otimes f_k = g_1 \otimes g_2 \otimes  \cdots \otimes g_{\ell}$
are  two such decompositions. Since $f$ is not
identically 0, all $f_i$ are not identically 0.
For any $i$, there is a partial
 assignment for $f$ to all variables in $X$ except those in $f_i$,
such that the resulting function is
a nonzero constant multiple of $f_i$.
   This gives an expression
$c_i f_i(X_i) = g'_1 \otimes g'_2 \otimes  \cdots \otimes g'_{\ell}$
where $c_i \not =0$ and each $g'_j$ is on a disjoint subset $X_{ij}$ of $X_i$.
By the assumption on $f_i$,
the only possibility is that all but one $X_{ij} = \emptyset$.
It follows that there is one (unique) $j$ such that $X_i$
is a subset of the variables of $g_j$.
By symmetry, for every $j$, the set of variable of $g_j$ is
a subset of the variables of some $f_{i'}$. As the $X_i$ are
disjoint, $i'=i$. Hence there is
a 1-1 correspondence of these subsets, and so $k= \ell$,
 and the corresponding
subsets are equal. After renaming these functions and subsets,
there are nonzero
constants $c_i'$ such that $f_i = c_i' g_i$ ($1 \le i \le k$).

We will consider the primitive decomposition of signatures in
$\mathscr{P}$. We claim that for any function
$f \in \mathcal{E}$ of arity at least 2,
$f$ is nondegenerate iff $|\operatorname{supp}(f)| = 2$.
One direction is trivial:  if $|\operatorname{supp}(f)| = 0$ or $1$,
then $f$ is identically 0 or  is a product
of unary functions, thus degenerate.
Conversely, suppose $f$ is degenerate,
 $f = u_1(x_1) \otimes \cdots  \otimes u_n(x_n)$.
If any $u_i$ is identically 0, then $|\operatorname{supp}(f)| = 0$.
Otherwise, if every $u_i$ is a multiple of $[1,0]$ or $[0,1]$,
then  $|\operatorname{supp}(f)| = 1$.
Otherwise, some  $u_i$ has the form $[a,b]$ with $ab \not =0$.
As $n \ge 2$ there is another $u_j= [c,d]$, where $c$ or $d \not =0$.
Without loss of generality, $c \not =0$.
Then there are two points $a_1 a_2 \ldots a_n$ and $a'_1 a'_2 \ldots a'_n
\in \operatorname{supp}(f)$, where $a_i=0, a_j =0$
and $a'_i =1, a'_j =0$. This contradicts $f \in \mathcal{E}$.

It follows that for any $f \in \mathscr{P}$ not identically 0,
its primitive decomposition exists and is unique up to  constant
factors, and is a product of unary functions
and nondegenerate functions in $\mathcal{E}$ with
$|\operatorname{supp}(f)| = 2$.

\begin{definition}
Let $f\in\mathscr{P}$, where $f$ is not identically zero.
There exist a partition $\mathcal{I}=\{I_1, I_2, \ldots, I_k\}$ of $[n]$
and signatures $f_1, f_2, \ldots, f_k\in\mathcal{E}$, where
each $f_i$ is a unary signature or $f_i$ is nondegenerate,
  such that
  \begin{equation}\label{primitive:decomposition}
  f(X)=\displaystyle\prod_{j=1}^k f_j(X|_{I_j}).
  \end{equation}
We call (\ref{primitive:decomposition}) a
 primitive decomposition of $f$.
\end{definition}

To define a compatibility relation on functions  in $\mathscr{P}$
we need to first define a notion of compatible partitions of $[n]$.
\begin{definition}
Two partitions $\mathcal{I}=\{I_1, I_2, \ldots, I_k\}$ and $\mathcal{J}=\{J_1, J_2, \ldots, J_{\ell}\}$ are \emph{compatible} if
for any $i\in[k]$, $j\in[\ell]$, $I_{i}$ and $J_j$ satisfy one of the following conditions:
\begin{itemize}
\item $I_i=J_j$;
\item $I_i\bigcap J_j=\emptyset$;
\item $|I_i|=1$ and $I_i\subseteq J_j$;
\item $|J_j|=1$ and $J_j\subseteq I_i$.
\end{itemize}
\end{definition}
An equivalent condition is that if $I_i \bigcap J_j \not = \emptyset$,
and $|I_i| \ge 2$ and $|J_j| \ge 2$, then $I_i=J_j$.

\begin{definition}\label{def:same-type}
For $f, g\in\mathscr{P}$ not identically zero,
we say $f, g$ have \emph{compatible type} if
in the primitive decompositions of $f$ and $g$,
\[f(X)=\displaystyle\prod_{i=1}^k f_i(X|_{I_i}),~~~~g(X)=\displaystyle\prod_{j=1}^{\ell} g_j(X|_{J_j})\]
\begin{itemize}
\item the partitions $\mathcal{I}
= \{I_1, I_2, \ldots, I_k\}$ and $\mathcal{J}
= \{J_1, J_2, \ldots, J_{\ell}\}$ are compatible;
\item for $I_i$ with $|I_i|\geq 2$ and so supp$(f_i)=\{\alpha, \bar{\alpha}\}$,
either
there  exists $J_j$ such that $I_i=J_j$ and $f_i$, $g_j$ have the same support,
or there exist $\{J_{j_1}, J_{j_2}, \ldots, J_{j_{|I_i|}}\}$ such that
$I_i = \displaystyle\bigcup_{t=1}^{|I_i|}J_{j_t}$,
$|J_{j_t}|=1$ for $1\leq t\leq |I_i|$, and the support of
$\displaystyle\prod_{t=1}^{|I_i|} g_{j_t}$
is the singleton set $\{\alpha\}$ or $\{\bar{\alpha}\}$;
\item for $J_j$ with $|J_j|\geq 2$ and so supp$(g_j)=\{\beta, \bar{\beta}\}$,
either
there exists $I_i$ such that $J_j=I_i$ and $f_i$, $g_j$ have the same support,
or there exist $\{I_{i_1}, I_{i_2}, \ldots, I_{i_{|J_j|}}\}$ such that
$J_j = \displaystyle\bigcup_{s=1}^{|J_j|}I_{i_s}$,
$|I_{i_s}|=1$ for $1\leq s\leq |J_j|$, and the support of
 $\displaystyle\prod_{s=1}^{|J_j|} f_{i_s}$
is  the singleton set  $\{\beta\}$ or $\{\bar{\beta}\}$.
\end{itemize}
\end{definition}

As primitive decompositions are unique up to constant factors,
Definition~\ref{def:same-type} is well-defined; it does not
depend on these constant factors.

\begin{lemma}\label{same-type}
Suppose $f, g, h\in\mathscr{P}$ and any  two  of them
 have compatible type.
Then
there exist a partition $\mathcal{L}=\{L_1, L_2, \ldots, L_{\ell}\}$ of $[n]$
and signatures $f_1, f_2, \ldots, f_{\ell}$, $g_1, g_2, \ldots, g_{\ell}$ and $h_1, h_2, \ldots, h_{\ell}$,
where $f_i, g_i, h_i\in\mathcal{E}$ for $1\leq i\leq {\ell}$
and $\operatorname{supp}(f_i), \operatorname{supp}(g_i), \operatorname{supp}(h_i)\subseteq\{\alpha_i, \bar{\alpha_i}\}$ for some $\alpha_i\in\{0, 1\}^{|L_i|}$,
such that $f(X)=\displaystyle\prod_{i=1}^{\ell} f_i(X|_{L_i})$, $g(X)=\displaystyle\prod_{i=1}^{\ell} g_i(X|_{L_i})$
and $h(X)=\displaystyle\prod_{i=1}^{\ell} h_i(X|_{L_i})$.
\end{lemma}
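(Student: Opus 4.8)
The plan is to prove the statement by induction on $n$, the common arity of $f$, $g$, $h$, using the fact that primitive decompositions are unique up to nonzero constant factors together with the pairwise compatibility hypotheses. First I would fix primitive decompositions $f(X)=\prod_{i=1}^k f_i(X|_{I_i})$, $g(X)=\prod_{j=1}^{m} g_j(X|_{J_j})$, $h(X)=\prod_{t=1}^{p} h_t(X|_{K_t})$, and observe that by the equivalent formulation of compatibility stated after the definition, any two of the three partitions $\mathcal{I},\mathcal{J},\mathcal{K}$ have the property that two blocks of size $\ge 2$ from different partitions that intersect must be equal. The candidate common refinement $\mathcal{L}$ is obtained by taking the partition of $[n]$ generated by $\mathcal{I}$, $\mathcal{J}$, $\mathcal{K}$ in the following sense: a block $L$ of $\mathcal{L}$ is either a common ``large'' block $I_i=J_j=K_t$ (whenever these coincide), or else it is a singleton $\{s\}$. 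Concretely, if $s$ lies in a block of size $\ge 2$ of at least one of the three partitions, pairwise compatibility forces that block to be the \emph{same} in every partition in which $s$ lies in a block of size $\ge 2$; call it $L(s)$, and declare all indices of $L(s)$ to form one block of $\mathcal{L}$. All remaining indices become singletons of $\mathcal{L}$. One checks $\mathcal{L}$ is a well-defined partition that refines each of $\mathcal{I},\mathcal{J},\mathcal{K}$.

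Next I would produce the factorizations over $\mathcal{L}$. Fix a block $L$ of $\mathcal{L}$. If $|L|\ge 2$, then $L=I_i$ for the unique $i$ with $L\subseteq I_i$ (and similarly $L=J_j=K_t$), so I can simply set $f_L := f_i$, $g_L := g_j$, $h_L := h_t$, which lie in $\mathcal{E}$ with supports inside a common antipodal pair $\{\alpha_L,\bar\alpha_L\}$: for $f_L,g_L$ this is exactly the ``same support'' clause of compatible type applied to the large block $I_i=J_j$, and likewise for the other pairs; transitivity of ``equal support'' gives a single $\alpha_L$. If $|L|=\{s\}$ is a singleton, then in each of $f,g,h$ the variable $x_s$ either sits in its own unary factor of the primitive decomposition, or sits inside a large block $L(s)$ which by construction is also a block of $\mathcal{L}$ — but that case is excluded since $\{s\}$ is a singleton of $\mathcal{L}$. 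Hence $x_s$ appears only in unary factors, and I define $f_{\{s\}}, g_{\{s\}}, h_{\{s\}}$ to be exactly those unary signatures (which are trivially in $\mathcal{E}$, with support in an antipodal pair of $\{0,1\}$). Taking products over all blocks $L$ reconstructs $f$, $g$, $h$ respectively, up to the overall nonzero constants absorbed into one factor each.

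The one genuinely delicate point — and what I expect to be the main obstacle — is the case where a large block $I_i$ of, say, $\mathcal{I}$ is \emph{split} into singletons by $\mathcal{J}$ (the second bullet of Definition~\ref{def:same-type}: $I_i=\bigcup_t J_{j_t}$ with each $|J_{j_t}|=1$ and $\prod_t g_{j_t}$ supported on a single point $\alpha$ or $\bar\alpha$). Here $\mathcal{L}$ will contain the \emph{large} block $I_i$ (since it is large in $\mathcal{I}$), and I must still exhibit $g_{I_i}\in\mathcal{E}$ on the variables of $I_i$ with support inside $\{\alpha,\bar\alpha\}$: this is just the product $\prod_t g_{j_t}$, which is a product of unaries whose combined support is the singleton $\{\alpha\}$ (or $\{\bar\alpha\}$), hence lies in $\mathcal{E}\subseteq$ signatures supported on $\{\alpha,\bar\alpha\}$. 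The subtlety is consistency: I must verify that when $I_i$ is split to singletons by $\mathcal{J}$ but possibly kept large by $\mathcal{K}$, or split differently by $\mathcal{K}$, the supports still line up into one common $\{\alpha_{I_i},\bar\alpha_{I_i}\}$. This follows because $\operatorname{supp}(f_i)=\{\alpha,\bar\alpha\}$ is the same $\alpha$ used in both the $f,g$ and $f,h$ compatibility clauses, so all three of $f_i$, the $g$-part on $I_i$, and the $h$-part on $I_i$ have support inside the \emph{single} antipodal pair determined by $f_i$. Once this bookkeeping is pinned down, collecting the per-block signatures gives the desired $\mathcal{L}$, $\{f_i\}$, $\{g_i\}$, $\{h_i\}$, completing the proof; no induction is actually needed if the refinement $\mathcal{L}$ is built directly as above, though phrasing it inductively on the number of large blocks is an equally clean alternative.
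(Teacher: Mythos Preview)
Your approach is correct and is essentially the direct (non-inductive) version of the paper's argument. The paper proceeds by induction on $n$: it picks one large block $I_1$ of the primitive decomposition of $f$, uses pairwise compatibility to define the corresponding factors $G'_1,H'_1$ of $g,h$ on $I_1$ (exactly your $g_{L},h_{L}$ for $L=I_1$, handling both the ``same large block'' and ``split into singletons'' subcases), strips off $I_1$, and recurses on the remaining variables. You instead build the whole partition $\mathcal{L}$ at once as the collection of all sets that occur as a block of size $\ge 2$ in at least one of $\mathcal{I},\mathcal{J},\mathcal{K}$, together with the leftover singletons, and then assemble all factors simultaneously. The underlying combinatorics is identical; your packaging just avoids the induction.

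Two small points to tighten. First, a terminology slip: $\mathcal{L}$ does not \emph{refine} $\mathcal{I},\mathcal{J},\mathcal{K}$, it \emph{coarsens} them (each block of $\mathcal{I}$ is contained in a block of $\mathcal{L}$), and it is precisely this coarsening that lets you regroup the primitive factors of $f$ into a product over $\mathcal{L}$. Second, your consistency argument for the common antipodal pair $\{\alpha_L,\bar\alpha_L\}$ anchors on $\operatorname{supp}(f_i)$, which presumes $L$ is a large block of $\mathcal{I}$. You should state explicitly that the anchor is whichever of $f,g,h$ has $L$ as a primitive block of size $\ge 2$ (at least one does by construction of $\mathcal{L}$); then pairwise compatibility with that anchor pins down the support of the other two factors on $L$. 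With those two fixes your argument goes through cleanly.
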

\begin{proof}
We prove by induction on $n$.
For $n=1$, $f, g, h$ are all unary signatures. The lemma is true trivially.
Inductively assume the lemma is true for $n'<n$
 and we prove the lemma for $n\geq 2$.

Let $f(X)=\displaystyle\prod_{j=1}^{p} F_j(X|_{I_j})$,
 $g(X)=\displaystyle\prod_{j=1}^{q} G_j(X|_{J_j})$
and $h(X)=\displaystyle\prod_{j=1}^{r} H_j(X|_{K_j})$ be the primitive decompositions of $f, g, h$ respectively,
where $\mathcal{I}=\{I_1, I_2, \ldots, I_p\}$, $\mathcal{J}=\{J_1, J_2, \ldots, J_q\}$ and $\mathcal{K}=\{K_1, K_2, \ldots, K_r\}$
are three partitions of $[n]$.
If all $|I_i|=|J_j|=|K_k|=1$  ($1\leq i\leq p, 1\leq j\leq q, 1\leq k\leq r$),
then $p=q=r=n$. We can  rename the sets so that $I_i=J_i = K_i = \{i\}$,
and let $\mathcal{L}=\mathcal{I}$ and $f_i=F_i, g_i=G_i, h_i=H_i$
for $1\leq i\leq n$.
Otherwise, without loss of generality, we assume that $|I_1|\geq 2$.
Since $f$ and $g$ have compatible type,
by the definition of primitive decomposition,
either there exists $j\in[q]$ such that $J_j=I_1$
and the support of $G_j$ is the same as $F_1$,
 or  there exist
 $J_{j_s}$ such that $|J_{j_s}|=1$ ($1 \le s \le |I_1|$)
and $I_1 = \displaystyle\bigcup_{s=1}^{|I_1|}J_{j_s}$,
 and the support of $\displaystyle\prod_{s=1}^{|I_1|}G_{j_s}$ is
a singleton subset  of the support of $F_1$.
 Then we let $G'_1=G_j$ or  $G'_1=\displaystyle\prod_{s=1}^{|I_1|}G_{j_s}$
 according to the two cases.

Similarly, since  $f$ and $h$ have compatible type,
either there exists $k\in[r]$ such that $K_k=I_1$
and the support of $H_k$ is the same as $F_1$,
 or  there exist
 $K_{k_t}$ such that $|K_{k_t}|=1$  ($1 \le t  \le |I_1|$)
and $I_1 = \displaystyle\bigcup_{t=1}^{|I_1|}K_{k_t}$,
 and the support of $\displaystyle\prod_{t=1}^{|I_1|}H_{k_t}$
is a singleton subset  of the support of $F_1$.
 Then we let $H'_1=H_k$ or  $H'_1=\displaystyle\prod_{t=1}^{|I_1|}H_{k_t}$
 according to the two cases.

Let $f'$, $g'$ and $h'$ be defined by the product
of those factors other than those of ${F_1}$, ${G'_1}$ and ${H'_1}$
in the respective primitive decompositions of $f$, $g$ and $h$.
 Then each pair of $f', g', h'$ have compatible type, and all
 have arity $n-|I_1|$.
 By induction, there exist a partition $\mathcal{L}=\{L_2, \ldots, L_{\ell}\}$ of $[n]\setminus I_1$
and signatures $f_2, \ldots, f_{\ell}$,
 $g_2, \ldots, g_{\ell}$ and $h_2, \ldots, h_{\ell}$,
where $f_i, g_i, h_i\in\mathcal{E}$ for $2\leq i\leq {\ell}$
and $\operatorname{supp}(f_i), \operatorname{supp}(g_i), \operatorname{supp}(h_i)\subseteq\{\alpha_i, \bar{\alpha_i}\}$ for some $\alpha_i\in\{0, 1\}^{|L_i|}$,
such that $f'(X)=\displaystyle\prod_{i=2}^{\ell} f_i(X|_{L_i})$, $g'(X)=\displaystyle\prod_{i=2}^{\ell} g_i(X|_{L_i})$
and $h'(X)=\displaystyle\prod_{i=2}^{\ell} h_i(X|_{L_i})$.
Then we finish the proof by letting $f_1=F_1$, $g_1=G'_1$, $h_1=H'_1$ and $L_1=I_1$.
\end{proof}

\vspace{.2in}

\noindent
{\bf Matchgate Signatures}

\vspace{.1in}
Matchgates were introduced by Valiant~\cite{Val02a, Val02b} to give polynomial-time algorithms for a collection of counting problems over planar graphs.
As the name suggests,
problems expressible by matchgates can be reduced to computing a weighted sum of perfect matchings.
The latter problem is tractable over planar graphs by Kasteleyn's algorithm~\cite{Kasteleyn1967},
a.k.a.~the FKT algorithm~\cite{TF61,Kasteleyn1961}.
These counting problems are naturally expressed in the Holant framework using \emph{matchgate signatures}.
We use $\mathscr{M}$ to denote the set of all matchgate signatures;
thus $\PlHolant(\mathscr{M})$ is tractable.

The parity of a signature is even (resp.~odd) if its support is on entries of even (resp.~odd) Hamming weight.
We say a signature satisfies the even (resp. odd) Parity Condition
if all entries of odd  (resp. even) weight are zero.
The matchgate signatures are characterized by the following two sets of conditions
(see~\cite{jinyi-aaron} for a self-contained proof).
 (1) The Parity Condition: either all
even entries are 0 or all odd entries are 0. This is due to perfect matchings. (2) A set of Matchgate Identities (MGI) defined as
follows: A pattern $\alpha$ is an $n$-bit string, i.e., $\alpha\in\{0, 1\}^{n}$. A position vector $P=\{p_{i} \mid i\in [\ell]\}$,
is a subsequence of $\{1, 2 \ldots, n\}$,
i.e., $p_{i}\in[n]$ and $p_{1}<p_{2}< \ldots<p_{\ell}$. We also use $P$ to denote the pattern, whose $p_{1}, p_{2}, \ldots, p_{\ell}$-th bits are 1 while the
others are 0. Let  $e_{i}\in\{0, 1\}^{n}$ be the pattern with 1 in the $i$-th bit and 0 elsewhere.
For $\alpha, \beta \in \{0, 1\}^{n}$,
let $\alpha \oplus\beta$ be the bitwise XOR of $\alpha$ and $\beta$.
Then any pattern $\alpha\in\{0, 1\}^{n}$ and any position vector
$P=\{p_i \mid i\in[\ell]\}$ define a MGI
\begin{equation}
\displaystyle\sum_{i=1}^{\ell}(-1)^{i}f_{\alpha\oplus e_{p_{i}}}f_{\alpha\oplus P\oplus e_{p_{i}}}=0.
\end{equation}
Actually in \cite{jinyi-aaron} it is shown that
MGI implies the Parity Condition. But in practice, it is easier
to apply the Parity Condition first.

\begin{proposition}\label{unary-matchgate}
 A unary signature $[a, b]\in\mathscr{M}$ iff it is $[1, 0]$ or $[0, 1]$ up to a scalar.
\end{proposition}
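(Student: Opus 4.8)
The plan is to reduce everything to the Parity Condition in the characterization of $\mathscr{M}$ stated just above, since for arity $1$ that condition is already decisive.

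First, for the forward direction I would argue as follows. Suppose $[a,b] \in \mathscr{M}$. A unary signature has exactly one entry of even Hamming weight, namely $f_0 = a$ (weight $0$), and exactly one entry of odd Hamming weight, namely $f_1 = b$ (weight $1$). By the Parity Condition --- condition (1) in the characterization, which must hold for every matchgate signature --- either all even entries vanish, forcing $a = 0$, or all odd entries vanish, forcing $b = 0$. In the first case $[a,b] = b\,[0,1]$ and in the second case $[a,b] = a\,[1,0]$; in either case $[a,b]$ is a scalar multiple of $[1,0]$ or of $[0,1]$, as claimed. (If one prefers, the single nontrivial MGI for $n = 1$, coming from the pattern $\alpha = 0$ and position vector $P = \{1\}$, reads $-f_1 f_0 = 0$ and gives $ab = 0$ directly.)

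For the converse I would exhibit weighted matchgates realizing $[1,0]$, $[0,1]$, and all of their nonzero scalar multiples. For $c\,[0,1]$, take a single vertex $v$ with one dangling edge carrying weight $c$: on input $1$ the dangling edge is used, removing $v$, and the empty matching of the remaining (empty) graph contributes $c$; on input $0$ the isolated vertex $v$ has no perfect matching, contributing $0$. For $c\,[1,0]$, take two vertices $u,v$ joined by an internal edge of weight $c$ with a dangling edge at $u$: on input $0$ the unique perfect matching is $\{uv\}$, contributing $c$; on input $1$ the vertex $v$ is left isolated, so there is no perfect matching, contributing $0$. Since $\mathscr{M}$ is precisely the set of signatures realized by matchgates, this shows $[1,0]$, $[0,1]$ and their nonzero multiples all lie in $\mathscr{M}$. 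The argument has essentially no hard step; the only point needing a moment's care is the phrase ``up to a scalar'', which is handled by the edge weights in the realizing matchgates (equivalently, by noting that $\mathscr{M}$ is closed under multiplication by a nonzero constant), so I do not anticipate a genuine obstacle.
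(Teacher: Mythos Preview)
Your proof is correct. The paper does not give an explicit proof of this proposition; it states it as an immediate consequence of the characterization of $\mathscr{M}$ by the Parity Condition and the Matchgate Identities (and indeed the very next lemma, Lemma~\ref{matchgate-identity-for-arity-4}, records that for arity $\le 3$ the Parity Condition alone is equivalent to membership in $\mathscr{M}$). Your forward direction is exactly this argument. For the converse you give explicit matchgate realizations, which is correct but slightly more than needed: given the characterization already quoted in the paper, it suffices to observe that $[1,0]$ and $[0,1]$ satisfy the Parity Condition (and, trivially, the unique arity-$1$ MGI $-f_0 f_1 = 0$), hence lie in $\mathscr{M}$. Either route is fine.
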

 We will use Proposition~\ref{unary-matchgate} in the proof of Theorem~\ref{main-theorem-for-no-parity}.

\begin{lemma}(cf.~Lemma~2.3, Lemma~2.4 in \cite{caifu-collapse})\label{matchgate-identity-for-arity-4}
If $f$ has arity $\leq 3$, then $f\in\mathscr{M}$ iff $f$ satisfies
the Parity Condition.

If $f$ has arity 4 and $f$ satisfies even Parity Condition, i.e.,
\[M_{x_1x_2, x_4x_3}(f)=\begin{bmatrix}
f_{0000} & 0 & 0 & f_{0011}\\
0 & f_{0110} & f_{0101} & 0\\
0 & f_{1010} & f_{1001} & 0\\
f_{1100} & 0 & 0 & f_{1111}\\
\end{bmatrix},\]
then
$f\in\mathscr{M}$ iff
\[f_{0000}f_{1111}-f_{1100}f_{0011}+f_{1010}f_{0101}-f_{1001}f_{0110}=0\]
(this is the MGI with $\alpha = 1000$ and $P = \{1,2,3,4\}$),
equivalently,
\[\det\begin{bmatrix}
f_{0000} & f_{0011}\\
f_{1100} & f_{1111}
\end{bmatrix}=
\det\begin{bmatrix}
f_{0110} & f_{0101}\\
f_{1010} & f_{1001}
\end{bmatrix}.\]
\end{lemma}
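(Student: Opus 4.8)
The plan is to use the structural characterization recalled just above: $f\in\mathscr{M}$ if and only if $f$ satisfies the Parity Condition and every Matchgate Identity (MGI). So once $f$ is known to satisfy the relevant parity condition, everything reduces to understanding which MGIs are non-trivial and checking that they hold under the hypotheses — the arity bound in the first statement, or the single displayed identity in the second.

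First I would record a general bookkeeping observation about an MGI indexed by a pattern $\alpha\in\{0,1\}^n$ and a position vector $P=\{p_1<\cdots<p_\ell\}$. If $\ell$ is odd, then $|P\oplus e_{p_i}|=\ell-1$ is even, so $|\alpha\oplus e_{p_i}|$ and $|\alpha\oplus P\oplus e_{p_i}|$ have opposite parity; hence in every term of the sum one of the two factors vanishes by the Parity Condition, and the MGI is implied by parity. If $\ell=2$, then $P\oplus e_{p_1}=e_{p_2}$ and $P\oplus e_{p_2}=e_{p_1}$, so the MGI reads $-f_{\alpha\oplus e_{p_1}}f_{\alpha\oplus e_{p_2}}+f_{\alpha\oplus e_{p_2}}f_{\alpha\oplus e_{p_1}}=0$, which is an identity of polynomials. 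For arity $n\le 3$ we have $1\le\ell\le n\le 3$, so every MGI falls into one of these two cases; together with the Parity Condition this gives the first statement. (For $n\le 1$ there are no non-trivial MGIs at all.)

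For the arity-$4$ statement, assume $f$ satisfies the even Parity Condition. By the above, the only MGIs not already forced are those with $\ell=4$, i.e.\ $P=\{1,2,3,4\}$. If $|\alpha|$ is even, every $f_{\alpha\oplus e_i}$ has odd Hamming weight and so vanishes, and the MGI holds trivially. If $|\alpha|$ is odd, then $\alpha\oplus P=\overline{\alpha}$, and the MGI becomes $\sum_{i=1}^{4}(-1)^i f_{\alpha\oplus e_i}f_{\overline{\alpha}\oplus e_i}=0$. I would then run through the eight odd-weight patterns $\alpha$ and verify that each of the resulting eight identities coincides, up to an overall sign, with the single displayed equation $f_{0000}f_{1111}-f_{1100}f_{0011}+f_{1010}f_{0101}-f_{1001}f_{0110}=0$ (the case $\alpha=1000$, $P=\{1,2,3,4\}$). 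Hence, under the even Parity Condition, this one equation is equivalent to all MGIs, i.e.\ to $f\in\mathscr{M}$. The determinant reformulation is then immediate, since $f_{0000}f_{1111}-f_{1100}f_{0011}=\det\left[\begin{smallmatrix}f_{0000}&f_{0011}\\ f_{1100}&f_{1111}\end{smallmatrix}\right]$ and $f_{1001}f_{0110}-f_{1010}f_{0101}=\det\left[\begin{smallmatrix}f_{0110}&f_{0101}\\ f_{1010}&f_{1001}\end{smallmatrix}\right]$.

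The only genuinely delicate point is isolating the correct residual family of MGIs: one must notice that the $\ell=2$ identities vanish as polynomial identities rather than via parity, and that the $\ell=4$ identities with even-weight $\alpha$ are killed by even parity, so that $\ell=4$ with odd-weight $\alpha$ is exactly what remains; the rest is the eight-case check and elementary algebra. A shorter alternative is simply to cite Lemmas~2.3 and~2.4 of~\cite{caifu-collapse}; the argument above is a self-contained rederivation, and the (unneeded here) odd-parity arity-$4$ statement would follow by the same computation with the two parity classes exchanged.
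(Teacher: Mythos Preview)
Your argument is correct. The paper itself gives no proof of this lemma; it simply states it with the parenthetical citation to Lemmas~2.3 and~2.4 of~\cite{caifu-collapse} and moves on. You have supplied a self-contained derivation, and you already note at the end that citing~\cite{caifu-collapse} is the shorter alternative---which is exactly what the paper does. So there is nothing to compare against beyond the citation, and your rederivation is sound: the parity-kills-odd-$\ell$ observation, the polynomial-identity observation for $\ell=2$, and the reduction of the eight odd-weight $\alpha$ cases for $\ell=4$ to the single displayed identity are all correct.
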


 \subsection{Transformable Signature Sets}

An important definition involving a holographic transformation is the notion of a signature set being transformable.

\begin{definition} \label{def:prelim:trans}
 We say a pair of signature sets $(\mathcal{G}, \mathcal{F})$
is $\mathscr{C}$-transformable for  $\holant{\mathcal{G}}{\mathcal{F}}$
 if there exists $T \in \mathbf{GL}_2(\mathbb{C})$ such that
$\mathcal{G}T \subseteq  \mathscr{C}$ and $T^{-1} {\mathcal{F}}
\subseteq \mathscr{C}$.

For $\mathcal{G} = \{(=_2)\}$,
$\holant{(=_2)}{\mathcal{F}} \equiv_{T} \Holant(\mathcal{F})$,
 we say simply that $\mathcal{F}$ is
$\mathscr{C}$-transformable. For  $\mathcal{G} = \mathcal{EQ}$,
$ \holant{\mathcal{EQ}}{\mathcal{F}}  \equiv_T \CSP(\mathcal{F})$,
we say that $\mathcal{F}$ is
$\mathscr{C}$-transformable for \#CSP.
We define similarly for \#CSP$^d$ when   $\mathcal{G} = \mathcal{EQ}_d$.
The definitions also work in the planar case.
\end{definition}

Notice that if
$\PlHolant(\mathscr{C})$ is tractable, and
$(\mathcal{G}, \mathcal{F})$
is $\mathscr{C}$-transformable,
then $\plholant{\mathcal{G}}{\mathcal{F}}$ is tractable
by a holographic transformation.
For example, consider $H_2= \frac{1}{\sqrt{2}}
\left[\begin{smallmatrix} 1 & 1 \\
1&-1 \end{smallmatrix}\right]$, with
 $H_2^{-1} = H_2$. Recall
the notation $\widehat{\mathcal{F}} = H_2 {\mathcal{F}}$.
Note that $\widehat{\mathcal{EQ}} \subset
\mathscr{M}$, and $H_2{\widehat{\mathscr{M}}} = \mathscr{M}$.
Thus Pl-\#{\rm CSP}$(\widehat{\mathscr{M}})$
is tractable, since $\PlHolant(\mathscr{M})$ tractable.
We list some important families of signatures
specific to the Pl-\#CSP and Pl-$\#${\rm CSP}$^2$ frameworks.
First we have
\[\widehat{\mathscr{P}}=H_2\mathscr{P}
~~~~
\mbox{and}
~~~~
\widehat{\mathscr{M}}=H_2\mathscr{M}.\]
Note that $\mathscr{A}$ is unchanged under the transformation
by $H_2$,
thus there is no  need to define
$\widehat{\mathscr{A}}$.
We have
$\widehat{\mathcal{EQ}} \subset \mathscr{A} \cap \mathscr{M}$.
Thus $\mathscr{A}$ is $\mathscr{A}$-transformable and
 $\widehat{\mathscr{M}}$ is
$\mathscr{M}$-transformable
 respectively for {\rm Pl}-$\#${\rm CSP}.

\begin{definition}
Let
$\mathcal{T}_k=
\left\{
\left[\begin{smallmatrix} 1 & 0 \\
0 & \omega \end{smallmatrix}\right]
\mid \omega^k = 1
\right\}$ be a set of diagonal matrices of order dividing $k$
and $\mathscr{T}_k=\mathcal{T}_{2k}\setminus\mathcal{T}_k
= \left\{
\left[\begin{smallmatrix} 1 & 0 \\
0 & \omega \end{smallmatrix}\right]
\mid \omega^k = -1
\right\}$.
Let
$\mathscr{A}^\dagger=
\mathscr{T}_4 \mathscr{A}$ and
$\widehat{\mathscr{M}}^\dagger
= \mathscr{T}_2 \widehat{\mathscr{M}}$
be the sets of signatures transformed by $\mathscr{T}_4$ from the
affine family $\mathscr{A}$ and transformed by $\mathscr{T}_2$
from $\widehat{\mathscr{M}}$,
respectively.
Define
\[\widetilde{\mathscr{A}} = \mathscr{A} \cup \mathscr{A}^\dagger
~~~~
\mbox{and}
~~~~
\widetilde{\mathscr{M}} = \widehat{\mathscr{M}} \cup \widehat{\mathscr{M}}^\dagger.\]
\end{definition}

Note that $\mathscr{P}$ is unchanged under
any diagonal matrix.
Thus there is no need to define $\mathscr{P}^\dagger$.
We note that
$\widetilde{\mathscr{A}}$ and $\widetilde{\mathscr{M}}$
are $\mathscr{A}$-transformable and ${\mathscr{M}}$-transformable
for {\rm Pl}-$\#${\rm CSP}$^2$,
respectively.
For $T = \left[\begin{smallmatrix} 1 & 0 \\
0 & \omega \end{smallmatrix}\right] \in \mathcal{T}_4$ with $\omega^4 = 1$,
$T \mathscr{A} =  \mathscr{A}$. Thus
$\widetilde{\mathscr{A}}$ is $\mathscr{A}$
under transformations by
$T
= \left[\begin{smallmatrix} 1 & 0 \\
0 & \omega \end{smallmatrix}\right]
\in \mathcal{T}_8$. For such $T$, we have
$(=_{2n}) T^{\otimes 2n} \in \mathscr{A}$.
Hence
$\widetilde{\mathscr{A}}$ is
$\mathscr{A}$-transformable for $\PlCSP^2$.
Similarly, for $T = \left[\begin{smallmatrix} 1 & 0 \\
0 & \pm 1 \end{smallmatrix}\right]$,
$TH_2 =
\left[\begin{smallmatrix} 1 & 1\\
\pm 1 & \mp 1 \end{smallmatrix}\right]
$ is either $H_2$ or $H_2
\left[\begin{smallmatrix} 0 & 1\\
1 & 0 \end{smallmatrix}\right]$,
and
$\left[\begin{smallmatrix} 0 & 1\\
1 & 0 \end{smallmatrix}\right] \mathscr{M} = \mathscr{M}$.
Thus $T \widehat{\mathscr{M}} = \widehat{\mathscr{M}}$, and
$\widetilde{\mathscr{M}}$ is $\mathscr{M}$ transformed under
$TH_2$
for all $T  \in
\mathcal{T}_4$.
Also note that for all such $T$,
we have $(=_{2n}) (TH_2)^{\otimes 2n} \in \mathscr{M}$.
Hence
$\widetilde{\mathscr{M}}$
is ${\mathscr{M}}$-transformable
for $\PlCSP^2$.

Note that the set of non-degenerate symmetric signatures in $\mathscr{A}^{\dagger}$ is precisely the nonzero signatures
($\lambda \neq 0$) in $\mathscr{F}^{\dagger}_1 \bigcup \mathscr{F}^{\dagger}_2$ with arity at least~$2$,
where $\mathscr{F}^{\dagger}_1$ and $\mathscr{F}^{\dagger}_2$ are two families of signatures defined as
\begin{align*}
 \mathscr{F}^{\dagger}_1 &= \left\{\lambda \left([1,0]^{\otimes k} + {\frak i}^r [0, 1]^{\otimes k}\right) | \lambda \in \mathbb{C}, k = 1, 2, \ldots, r = 0, 1, 2, 3\right\},
  \text{ and}\\
 \mathscr{F}^{\dagger}_2 &= \left\{\lambda \left([1,\alpha]^{\otimes k} + {\frak i}^r [1,-\alpha]^{\otimes k}\right) | \lambda \in \mathbb{C},
  \alpha^4=-1, k = 1, 2, \ldots, r = 0, 1, 2, 3\right\}.
\end{align*}

\begin{proposition}\label{matchgate:affine:hat}
The following hold:
\begin{itemize}
\item A unary signature is in $\widehat{\mathscr{M}}$ iff it is $\lambda[1, \pm 1], \lambda\in\mathbb{C}$.

\item A unary signature is in $\widehat{\mathscr{M}}^{\dagger}$ iff it is $\lambda[1, \pm {\frak i}], \lambda\in\mathbb{C}$.

\item $[1, 0, 1, 0]\notin\mathscr{A}^{\dagger}$.
\end{itemize}
\end{proposition}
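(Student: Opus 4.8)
The plan is to deduce all three assertions from the explicit description of unary matchgate signatures in Proposition~\ref{unary-matchgate}, together with the compressed-signature tests for affinity (Lemma~\ref{binary-affine-compressed function}, Corollary~\ref{f-affine-iff-f*-affine}, and Lemma~\ref{multilinear-polynomial-unique-affine-definition}). Everything comes down to routine bookkeeping of signs and roots of unity.

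For the first item I would use that $\widehat{\mathscr{M}}=H_2\mathscr{M}$ and $H_2^{-1}=H_2$ (ignoring the immaterial scalar $\frac{1}{\sqrt{2}}$): a unary $u=[a,b]$ lies in $\widehat{\mathscr{M}}$ iff $H_2u=[a+b,\,a-b]\in\mathscr{M}$, which by Proposition~\ref{unary-matchgate} happens iff $H_2u$ is a nonzero scalar multiple of $[1,0]$ or $[0,1]$, i.e.\ iff $a-b=0$ or $a+b=0$. This is exactly $u=\lambda[1,1]$ or $u=\lambda[1,-1]$ (note that $[0,1]$ itself is correctly excluded). For the second item, since $\widehat{\mathscr{M}}^{\dagger}=\mathscr{T}_2\widehat{\mathscr{M}}$ with $\mathscr{T}_2=\{\trans{1}{0}{0}{\omega}:\omega^2=-1\}$, a unary $u=[a,b]$ is in $\widehat{\mathscr{M}}^{\dagger}$ iff $[a,\omega^{-1}b]\in\widehat{\mathscr{M}}$ for some $\omega$ with $\omega^2=-1$; by the first item this means $\omega^{-1}b=\pm a$, i.e.\ $b=\pm\omega a$, and since $\{\pm\omega:\omega^2=-1\}=\{{\frak i},-{\frak i}\}$ this is exactly $u=\lambda[1,\pm{\frak i}]$.

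For the third item I would argue by contradiction. First recall that $[1,0,1,0]$ is the arity-$3$ symmetric signature $H_2^{\otimes3}(=_3)$, supported on the even-weight inputs $\{000,011,101,110\}$. Suppose $[1,0,1,0]=D^{\otimes3}g$ with $D=\trans{1}{0}{0}{\omega}\in\mathscr{T}_4$, so $\omega^4=-1$ and hence $\omega^2\in\{{\frak i},-{\frak i}\}$, and $g\in\mathscr{A}$. Then $g=(D^{-1})^{\otimes3}[1,0,1,0]$, and since $(D^{-1})^{\otimes3}$ scales the basis tensor $e_\alpha$ by $\omega^{-|\alpha|}$, the signature $g$ is the symmetric signature $[1,0,\omega^{-2},0]$ with $\omega^{-2}\in\{{\frak i},-{\frak i}\}$. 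Its support is the linear subspace $x_1\oplus x_2\oplus x_3=0$, so by Corollary~\ref{f-affine-iff-f*-affine} it suffices to show that the compressed signature $\underline{g}$ — taking $x_1,x_2$ as free variables and $x_3=x_1\oplus x_2$ — is not affine. A short computation gives $M_{x_1,x_2}(\underline{g})=\trans{1}{\omega^{-2}}{\omega^{-2}}{\omega^{-2}}$, which has the form $\trans{1}{{\frak i}^s}{{\frak i}^t}{d}$ with $s$ and $t$ odd and $d=\pm{\frak i}$; since $s+t$ is even, $\pm{\frak i}^{s+t}\in\{1,-1\}\neq d$, so $\underline{g}\notin\mathscr{A}$ by Lemma~\ref{binary-affine-compressed function}, the desired contradiction.

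The one step that needs a bit of care is the third item: because $[1,0,1,0]$ does not have full affine support, Lemma~\ref{ternary-affine-compressed function} does not apply directly, and one must instead descend to the compressed binary signature and invoke Lemma~\ref{binary-affine-compressed function}. Beyond this I do not anticipate any real obstacle; the remaining work is just the elementary sign and root-of-unity arithmetic indicated above.
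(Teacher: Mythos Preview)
Your proposal is correct. The paper does not supply a proof of this proposition, leaving all three items as immediate consequences of the surrounding material; your arguments for the first two items via Proposition~\ref{unary-matchgate} are exactly the intended route. For the third item the paper's implicit route is the classification, stated just before the proposition, of non-degenerate symmetric signatures in $\mathscr{A}^{\dagger}$ as $\mathscr{F}_1^{\dagger}\cup\mathscr{F}_2^{\dagger}$: one checks that no arity-$3$ member of either family is a scalar multiple of $[1,0,1,0]$ (in $\mathscr{F}_2^{\dagger}$ the odd-weight entries vanish only when ${\frak i}^r=1$, and then the weight-$2$ entry is $\alpha^2\in\{\pm{\frak i}\}$, not $1$). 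Your direct argument via the compressed binary signature and Lemma~\ref{binary-affine-compressed function} is a genuinely different but equally short route; it has the advantage of being self-contained and not depending on the (unproved in the paper) symmetric classification, while the classification route avoids the need to identify free variables and compute the compressed signature.
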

We will use Proposition~\ref{matchgate:affine:hat} in the proof of Theorem~\ref{dichotomy-csp-2}.

\subsection{Some Known Dichotomies}

Here we list several known dichotomies.
The first is for \#CSP without planarity.
The other two are about planar \#CSP (and \#CSP$^2$)
but restricted to \emph{symmetric} signatures.

\begin{theorem}[Theorem~3.1 in~\cite{CLX14}]\label{non-planar-csp-dichotomy}
 Let $\mathcal{F}$ be any set of complex-valued signatures in Boolean variables.
 Then $\#\operatorname{CSP}(\mathcal{F})$ is $\SHARPP$-hard unless
 $\mathcal{F} \subseteq \mathscr{A}$ or
 $\mathcal{F} \subseteq \mathscr{P}$, in which case the problem is computable in polynomial time.
\end{theorem}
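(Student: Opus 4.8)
The plan is to establish both directions of this complex-weighted Boolean \#CSP dichotomy (the statement is quoted from \cite{CLX14}). \emph{Tractability.} First I would show $\CSP(\mathcal{F})$ is polynomial time when $\mathcal{F}\subseteq\mathscr{P}$ or $\mathcal{F}\subseteq\mathscr{A}$. If $\mathcal{F}\subseteq\mathscr{P}$, then on any instance every constraint is, after primitive decomposition, a product of unary functions, binary equalities $[1,0,1]$, and binary disequalities $[0,1,0]$; the equality constraints partition the variables into blocks, within the graph of blocks the disequality constraints are satisfiable iff that graph is bipartite, a non-bipartite component contributes $0$, and a bipartite component contributes the sum over its two $2$-colorings of the product of accumulated unary weights — so the whole value is a product of poly-time-computable component contributions. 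If $\mathcal{F}\subseteq\mathscr{A}$, then by Definition~\ref{definition-affine} and Corollary~\ref{corollary-compressed-multilinear-unique} the value of any instance is, up to an easily computed scalar, $\sum_{X:AX=0}{\frak i}^{Q(X)}$ over an affine subspace, with $Q$ a $\mathbb{Z}_4$-valued quadratic polynomial all of whose cross terms have even coefficients; parametrizing by free variables turns this into a quadratic Gauss sum, and since the cross terms are even $Q$ diagonalizes over $\mathbb{Z}_4$, so the sum factors into one-variable Gauss sums each with a closed form.

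\emph{Hardness.} Assume $\mathcal{F}\not\subseteq\mathscr{A}$ and $\mathcal{F}\not\subseteq\mathscr{P}$; I want \#P-hardness. I would first prove a pinning lemma, $\CSP(\mathcal{F})\equiv_T\CSP(\mathcal{F}\cup\{[1,0],[0,1]\})$, via a polynomial-interpolation argument recovering the pinned counts from the counts stratified by the Hamming weight of the assignment to a distinguished set of variables; after this the two pinning unaries and all $(=_k)$ (hence variable identification and merging) are available. Next comes arity reduction: fix $f\in\mathcal{F}\setminus\mathscr{A}$ and $g\in\mathcal{F}\setminus\mathscr{P}$, pin all but a bounded number of their variables, using statements in the spirit of Lemma~\ref{[1,0]-[0,1]-pinning-implies-affine-arity-4} to ensure some restriction of $f$ stays outside $\mathscr{A}$ and some restriction of $g$ stays outside $\mathscr{P}$, reducing to a witness of arity $\le 3$ outside $\mathscr{A}$ and one of arity $\le 3$ outside $\mathscr{P}$. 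Finally, for each of the finitely many small-arity canonical witnesses I would build $\mathcal{F}$-gates and use polynomial interpolation to realize a rich enough family of unary and binary signatures to reduce from a known \#P-hard problem (weighted independent sets or vertex covers, the Ising or Tutte partition function at a hard point, or the permanent via a holographic reduction).

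The hard part will be the two places where the complex values and the demand for a clean dichotomy bite. In the interpolations one must control when a ratio of eigenvalues is a root of unity — and in each exceptional case one has to verify directly that the signature is after all affine or product-type, which is precisely why the dichotomy admits no further tractable class. The genuine crux is the non-mixing analysis: showing that a signature that is affine but not product-type, combined with one that is product-type but not affine, already forces \#P-hardness, and more generally that no combination of ``almost tractable'' pieces drawn from different families escapes. This arity-$2$ and arity-$3$ case analysis over $\mathbb{C}$, together with the root-of-unity bookkeeping in the interpolations, is the bulk of the argument; the restricted value ranges ($0/1$, nonnegative, real) are already covered by Creignou--Hermann, Dyer--Goldberg--Jerrum, and Bulatov et al., and the complex case of \cite{CLX14} layers the eigenvalue analysis on top of those.
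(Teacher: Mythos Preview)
The paper does not prove this theorem. Theorem~\ref{non-planar-csp-dichotomy} is quoted verbatim from \cite{CLX14} (and you yourself note this in your first sentence); the present paper uses it as a black box, invoking it in the tractability direction and in the alternative (C) of Theorem~\ref{With-arity-4-non-matchgate-signature} after constructing the crossover function $\mathfrak{X}$. There is therefore no ``paper's own proof'' to compare your proposal against.

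That said, your sketch is a plausible high-level outline of the argument in \cite{CLX14}: pinning via interpolation, arity reduction using the contrapositive of statements like Lemma~\ref{[1,0]-[0,1]-pinning-implies-affine-arity-4}, and a finite case analysis at low arity with interpolation to reach a known \#P-hard point. Two cautions if you intend to flesh it out. First, your arity-reduction step for $\mathscr{P}$ is underspecified: unlike the affine case there is no direct analogue of Lemma~\ref{[1,0]-[0,1]-pinning-implies-affine-arity-4} for product type in this paper, and in \cite{CLX14} the reduction for non-product signatures requires its own machinery (compatible-type arguments of the flavor of Lemma~\ref{arity-3-not-product} and Theorem~\ref{arity-reduction-product} here, but without the planarity constraint). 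Second, the ``non-mixing'' endgame you flag is indeed the crux, and in the non-planar setting it is handled in \cite{CLX14} by a careful classification of which unary and binary signatures can be interpolated from a given witness, not by a single clean lemma; your description of it as ``the bulk of the argument'' is accurate, but the proposal as written does not yet contain the idea that makes it go through.
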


By (\ref{csp-holant}), it can be restated for $\Holant$ problems.
\newtheorem*{specialtheorem}{Theorem {\thetheorem}$'$}

\begin{specialtheorem}
 Let $\widehat{\mathcal{F}}$ be any set of complex-valued signatures in Boolean variables.
 Then $\Holant(\widehat{\mathcal{EQ}}, \widehat{\mathcal{F}})$ is $\SHARPP$-hard unless
 $\widehat{\mathcal{F}} \subseteq \mathscr{A}$ or
 $\widehat{\mathcal{F}} \subseteq \widehat{\mathscr{P}}$, in which case the problem is computable in polynomial time.
\end{specialtheorem}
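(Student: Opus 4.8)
The plan is to obtain the stated Holant form as a direct corollary of Theorem~\ref{non-planar-csp-dichotomy} by applying the Hadamard holographic transformation, exactly as (\ref{csp-holant}) does in the planar case. First I would record the chain of Turing equivalences
\[
\CSP(\mathcal{F}) \;\equiv_T\; \holant{\mathcal{EQ}}{\mathcal{F}} \;\equiv_T\; \holant{\mathcal{EQ}\,H_2}{H_2^{-1}\mathcal{F}} \;=\; \holant{\widehat{\mathcal{EQ}}}{\widehat{\mathcal{F}}} \;\equiv_T\; \Holant(\widehat{\mathcal{EQ}}, \widehat{\mathcal{F}}).
\]
The first equivalence is the definition of \#CSP as a Holant problem. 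The second is Valiant's Holant Theorem with $T = H_2$; here I would note that every signature in $\mathcal{EQ}$ is symmetric and $H_2$ is symmetric, so transforming as row vectors or as column vectors makes no difference up to the immaterial normalizing factor $\tfrac1{\sqrt2}$, whence $\mathcal{EQ}\,H_2 = \widehat{\mathcal{EQ}}$ and $H_2^{-1}\mathcal{F} = H_2\mathcal{F} = \widehat{\mathcal{F}}$ (using $H_2^{-1} = H_2$). The last equivalence is the standard passage between the bipartite notation $\holant{\cdot}{\cdot}$ and the single-pool notation $\Holant(\cdot,\cdot)$; it goes through exactly as in the argument establishing (\ref{eqn:prelim:PlCSPd_equiv_Holant}), since $(=_2) = [1,0,1] \in \widehat{\mathcal{EQ}}$ can be used to subdivide edges joining two $\widehat{\mathcal{F}}$-vertices, while edges joining two $\widehat{\mathcal{EQ}}$-vertices can be merged because $\widehat{\mathcal{EQ}}$ is closed under the corresponding contraction.

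Next I would translate the tractability criterion across the transformation. By Theorem~\ref{non-planar-csp-dichotomy}, $\CSP(\mathcal{F})$ is computable in polynomial time if $\mathcal{F} \subseteq \mathscr{A}$ or $\mathcal{F} \subseteq \mathscr{P}$, and is $\SHARPP$-hard otherwise. Since $\mathscr{A}$ is invariant under $H_2$ (as noted in Section~\ref{sec:preliminaries}), $\mathcal{F} \subseteq \mathscr{A}$ iff $\widehat{\mathcal{F}} = H_2\mathcal{F} \subseteq \mathscr{A}$; and $\widehat{\mathscr{P}} = H_2\mathscr{P}$ by definition, so $\mathcal{F} \subseteq \mathscr{P}$ iff $\widehat{\mathcal{F}} \subseteq \widehat{\mathscr{P}}$. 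Because $H_2$ is invertible, as $\mathcal{F}$ ranges over all sets of Boolean signatures so does $\widehat{\mathcal{F}}$, so stating the dichotomy in terms of $\widehat{\mathcal{F}}$ incurs no loss of generality. Combining with the equivalence above yields precisely the asserted statement, with the polynomial-time cases inherited verbatim (on corresponding instances the two Holant values agree up to an easily computed nonzero factor).

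There is no substantive obstacle here: all the real content is packaged inside Theorem~\ref{non-planar-csp-dichotomy}, and what remains is a change of coordinates. The only points needing a moment's care are bookkeeping: that $\mathcal{EQ}$ consists of symmetric signatures, so the row/column distinction in the holographic transformation is harmless; that dropping the factor $\tfrac1{\sqrt2}$ does not affect complexity; and that $\widehat{\mathcal{EQ}}$ still contains $(=_2)$, which is what permits moving between the bipartite and single-pool Holant formulations. None of these is more than routine verification.
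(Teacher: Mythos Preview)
Your proposal is correct and is essentially the paper's own argument spelled out in full: the paper simply says ``By (\ref{csp-holant}), it can be restated for $\Holant$ problems,'' relying on the reader to transport the equivalence $\CSP(\mathcal{F}) \equiv_T \Holant(\widehat{\mathcal{EQ}},\widehat{\mathcal{F}})$ from the planar to the general setting and to translate the tractability conditions via $H_2$-invariance of $\mathscr{A}$ and the definition of $\widehat{\mathscr{P}}$. Your explicit verification that $(=_2)\in\widehat{\mathcal{EQ}}$ and that $\widehat{\mathcal{EQ}}$ is closed under merging adjacent vertices is exactly the bookkeeping the paper leaves implicit.
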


The next theorem is a dichotomy for $\PlCSP$ problems
over symmetric signatures.

\begin{theorem}[Theorem~19 in~\cite{Guo-Williams}] \label{heng-tyson-dichotomy-pl-csp}
 Let $\mathcal{F}$ be any set of \emph{symmetric}, complex-valued signatures in Boolean variables.
 Then $\PlCSP(\mathcal{F})$ is $\SHARPP$-hard unless
 $\mathcal{F} \subseteq \mathscr{A}$,
 $\mathcal{F} \subseteq \mathscr{P}$, or
 $\mathcal{F} \subseteq \widehat{\mathscr{M}}$,
 in which case the problem is computable in polynomial time.
\end{theorem}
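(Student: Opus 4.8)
The plan is to carry everything through the Holant reformulation $\PlCSP(\mathcal{F}) \equiv_T \PlHolant(\widehat{\mathcal{EQ}}, \widehat{\mathcal{F}})$ from (\ref{csp-holant}), with $\widehat{\mathcal{F}} = H_2\mathcal{F}$. The tractability direction is short. If $\mathcal{F} \subseteq \mathscr{A}$ or $\mathcal{F} \subseteq \mathscr{P}$, then already the non-planar problem $\CSP(\mathcal{F})$ is polynomial-time solvable: for $\mathscr{A}$ one intersects the affine supports by Gaussian elimination over $\mathbb{Z}_2$ and evaluates the resulting quadratic Gauss sum, and for $\mathscr{P}$ one decomposes every constraint into unary, $(=_2)$, and $(\neq_2)$ factors and reduces to a weighted $2$-colouring computation on the components of the disequality graph. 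If $\mathcal{F} \subseteq \widehat{\mathscr{M}}$, then because $\widehat{\mathcal{EQ}} \subset \mathscr{M}$ and $H_2\widehat{\mathscr{M}} = \mathscr{M}$, the holographic transformation by $H_2$ sends $\PlHolant(\widehat{\mathcal{EQ}}, \widehat{\mathcal{F}})$ inside $\PlHolant(\mathscr{M})$, which the FKT algorithm solves.

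For the hardness direction, assume $\mathcal{F}$ lies in none of $\mathscr{A}$, $\mathscr{P}$, $\widehat{\mathscr{M}}$. The source of all hardness is Theorem~\ref{non-planar-csp-dichotomy}: since $\mathcal{F} \not\subseteq \mathscr{A}$ and $\mathcal{F} \not\subseteq \mathscr{P}$, the problem $\CSP(\mathcal{F})$ is \#P-hard on general graphs. Hence it is enough to \emph{planarize}: if $\PlCSP(\mathcal{F})$ realizes, as a planar $\mathcal{F}$-gate, the crossover signature $\mathfrak{X}$ of Definition~\ref{def:crossover}, then every crossing in an arbitrary constraint graph can be replaced by a copy of $\mathfrak{X}$, giving $\CSP(\mathcal{F}) \leq_T \PlCSP(\mathcal{F})$ and therefore planar \#P-hardness. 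Since $H_2^{\otimes 4}\mathfrak{X} = \mathfrak{X}$, it is equivalent to realize $\mathfrak{X}$ in $\PlHolant(\widehat{\mathcal{EQ}}, \widehat{\mathcal{F}})$. So the whole problem reduces to the claim: \emph{whenever $\mathcal{F}$ escapes all three tractable families, the planar \#CSP it defines realizes $\mathfrak{X}$} (with a handful of low-arity configurations, e.g.\ binary $2$-spin instances, possibly handled by a direct argument instead).

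To prove this claim I would first apply arity reduction, using the derivative operator $\partial$ of Definition~\ref{derivative} against the equalities in $\widehat{\mathcal{EQ}}$ and the pinning unaries $[1,0]$ and $[0,1]$ (the latter supplied by pinning/interpolation when not directly present), to replace high-arity symmetric signatures by symmetric signatures of small arity that still escape all three families; one must check here that the reduction never drifts into a tractable family, which is exactly where the explicit descriptions of the symmetric members of each family are used — in $\mathscr{A}$ they are the nonzero elements of $\mathscr{F}_1 \cup \mathscr{F}_2 \cup \mathscr{F}_3$, in $\mathscr{P}$ they are degenerate signatures, $(\neq_2)$, or generalized equalities $[a,0,\dots,0,b]$, and in $\widehat{\mathscr{M}}$ they are the signatures whose Hamming-weight sequence obeys the second-order matchgate recurrence conjugated by $H_2$. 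For each remaining small-arity symmetric signature outside all three families, combining it with $\widehat{\mathcal{EQ}}$ (to shift parity and arity) yields either $\mathfrak{X}$ directly or an arity-$\le 4$ symmetric signature already known to force planar \#P-hardness; Vandermonde-style interpolation is used both to extract the needed auxiliary unary and binary signatures from gadgets and to reduce an infinite $\mathcal{F}$ to a finite hard subset. The remaining \emph{mixing} case is when every individual $f \in \mathcal{F}$ sits in one of the three families but $\mathcal{F}$ as a whole does not: then pick $f, g \in \mathcal{F}$ in two families whose intersection omits one of them, and use the explicit pairwise intersections $\mathscr{A} \cap \mathscr{P}$, $\mathscr{A} \cap \widehat{\mathscr{M}}$, $\mathscr{P} \cap \widehat{\mathscr{M}}$ to show that $\{f, g\} \cup \widehat{\mathcal{EQ}}$ builds a signature outside all three families (reducing to the previous step) or builds $\mathfrak{X}$ outright.

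The main obstacle is exactly this planarization claim. It is not a short computation but a tight, exhaustive case analysis — organized by arity, by which families a signature belongs to, and by the pairwise interactions of families — and it works only because, for planar \#CSP over symmetric signatures, there are no tractable islands beyond $\mathscr{A}$, $\mathscr{P}$, and $\widehat{\mathscr{M}}$, in contrast to planar Holant where additional tractable classes appear and a crossover need not exist. Keeping the arity-reduction and gadget constructions from falling back into a tractable family, and verifying that the small base cases genuinely yield $\mathfrak{X}$, are where essentially all the work lies.
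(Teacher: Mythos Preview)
The paper does not prove this theorem; it is quoted verbatim as ``Theorem~19 in~\cite{Guo-Williams}'' and used throughout as a black box (e.g., in Lemma~\ref{0-1-value-signature}, Theorem~\ref{main-theorem-for-no-parity}, Theorem~\ref{With-arity-4-non-matchgate-signature}). So there is no in-paper proof to compare against.

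On your proposal itself: the tractability direction is correct and standard. For hardness, your architecture---arity reduction on a symmetric non-$\mathscr{A}$/$\mathscr{P}$/$\widehat{\mathscr{M}}$ signature, then either build $\mathfrak{X}$ or land on a known \#P-hard planar base problem, plus a separate mixing argument---is the right shape and is roughly how Guo--Williams proceed. But what you have written is a strategy, not a proof: the sentence ``whenever $\mathcal{F}$ escapes all three tractable families, the planar \#CSP it defines realizes $\mathfrak{X}$'' (even with your low-arity caveat) is precisely the theorem, and you correctly flag the exhaustive case analysis as ``where essentially all the work lies'' without supplying any of it. In practice Guo--Williams do not aim uniformly for $\mathfrak{X}$; they reduce many cases to concrete \#P-hard planar Holant base problems rather than to non-planar $\CSP(\mathcal{F})$.

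One concrete technical point to watch: you invoke the pinning unary $[0,1]$ as if it were freely available. It is not in $\widehat{\mathcal{EQ}}$, and when every signature satisfies the even Parity Condition it is outright unconstructible; this paper spends real effort (Lemmas~\ref{how-to-flip-two-bits-by-[0,1,0]-tensor-2}, \ref{[0,1]-EQ-hat-wight-0-neq-0}, \ref{[0,1]-tensor-2-equal-[0,1]}) working around this with $[0,1]^{\otimes 2}$ instead. Any attempt to flesh out your sketch would need to confront the same obstruction.
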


By (\ref{csp-holant}), it can be restated for $\PlHolant$ problems.
\begin{specialtheorem} 
 Let $\widehat{\mathcal{F}}$ be any set of \emph{symmetric}, complex-valued signatures in Boolean variables.
 Then $\PlHolant(\widehat{\mathcal{EQ}}, \widehat{\mathcal{F}})$ is $\SHARPP$-hard unless
 $\widehat{\mathcal{F}} \subseteq \mathscr{A}$,
 $\widehat{\mathcal{F}} \subseteq \widehat{\mathscr{P}}$, or
 $\widehat{\mathcal{F}} \subseteq \mathscr{M}$,
 in which case the problem is computable in polynomial time.
\end{specialtheorem}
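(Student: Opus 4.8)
The plan is to obtain this statement as a direct corollary of Theorem~\ref{heng-tyson-dichotomy-pl-csp} via the Hadamard transformation, exactly along the lines of the equivalence~(\ref{csp-holant}). Given a set $\widehat{\mathcal{F}}$ of symmetric, complex-valued signatures, I would set $\mathcal{F} = H_2 \widehat{\mathcal{F}}$; since $H_2 = H_2^{-1}$, this is consistent with the notation $\widehat{\mathcal{F}} = H_2 \mathcal{F}$. The first point to record is that $\mathcal{F}$ is again symmetric: for a symmetric signature $f$ of arity $n$, the transformed signature $T^{\otimes n} f$ is symmetric because $T^{\otimes n}$ commutes with the permutation action of $S_n$ on $(\mathbb{C}^2)^{\otimes n}$. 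Hence Theorem~\ref{heng-tyson-dichotomy-pl-csp} is applicable to $\mathcal{F}$.

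Next I would apply~(\ref{csp-holant}), which gives $\PlCSP(\mathcal{F}) \equiv_T \PlHolant(\widehat{\mathcal{EQ}}, \widehat{\mathcal{F}})$, so the two problems have identical complexity; this equivalence in turn rests on Valiant's Holant Theorem and the fact that a holographic transformation leaves the underlying (planar) graph untouched. By Theorem~\ref{heng-tyson-dichotomy-pl-csp}, $\PlCSP(\mathcal{F})$ is $\SHARPP$-hard unless $\mathcal{F} \subseteq \mathscr{A}$, $\mathcal{F} \subseteq \mathscr{P}$, or $\mathcal{F} \subseteq \widehat{\mathscr{M}}$, and in those three cases it is polynomial-time computable. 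The last step is to transport these conditions across $H_2$. Using that $\mathscr{A}$ is invariant under $H_2$, that $\widehat{\mathscr{P}} = H_2\mathscr{P}$ and $\widehat{\mathscr{M}} = H_2\mathscr{M}$ by definition, and that $H_2^2$ is the identity, I get
\begin{align*}
\mathcal{F} \subseteq \mathscr{A} &\iff \widehat{\mathcal{F}} \subseteq H_2\mathscr{A} = \mathscr{A},\\
\mathcal{F} \subseteq \mathscr{P} &\iff \widehat{\mathcal{F}} \subseteq H_2\mathscr{P} = \widehat{\mathscr{P}},\\
\mathcal{F} \subseteq \widehat{\mathscr{M}} &\iff \widehat{\mathcal{F}} \subseteq H_2\widehat{\mathscr{M}} = \mathscr{M},
\end{align*}
and substituting these equivalences into the dichotomy for $\PlCSP(\mathcal{F})$ yields exactly the stated trichotomy for $\PlHolant(\widehat{\mathcal{EQ}}, \widehat{\mathcal{F}})$.

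I do not expect any genuine difficulty here: all the substance lies in Theorem~\ref{heng-tyson-dichotomy-pl-csp}, and what remains is bookkeeping built from facts already recorded in this section --- preservation of planarity and of the Holant value under a holographic transformation, preservation of symmetry, and the transformation behavior of the three tractable classes. The only thing requiring mild attention is keeping the ``hat'' notation straight, since $\mathscr{A}$ is self-dual under $H_2$ whereas $\mathscr{P}$ and $\mathscr{M}$ are not.
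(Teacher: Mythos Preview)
Your proposal is correct and matches the paper's own approach: the paper simply states that by~(\ref{csp-holant}) the theorem can be restated for $\PlHolant$ problems, and your proof spells out exactly that translation (including the preservation of symmetry and the behavior of $\mathscr{A}$, $\mathscr{P}$, $\mathscr{M}$ under $H_2$).
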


The following theorem is a dichotomy theorem for
$\PlCSP^2$ problems over \emph{symmetric} signatures.
By (\ref{eqn:prelim:PlCSPd_equiv_Holant}) for $d=2$,
we have $\PlCSP^2(\mathcal{F}) \equiv_{\rm T} \PlHolant(\mathcal{EQ}_2,
\mathcal{F})$. Thus the theorem can be equivalently stated for
$ \PlHolant(\mathcal{EQ}_2,
\mathcal{F})$. Note that this equivalence is not by a holographic
transformation.
However, when we apply it later in this paper,  we actually use it
on the right hand side of the equivalence by
a holographic transformation
$\PlCSP(\mathcal{F})
 \equiv_{\rm T} \PlHolant(\widehat{\mathcal{EQ}}, \widehat{\mathcal{F}})$,
when  we can construct $\mathcal{EQ}_2$ in the right
hand side.

\begin{theorem}[Theorem~A.2 in \cite{Cai-Fu-Guo-W}]\label{heng-tyson-dichotomy-pl-csp2}
Let $\mathcal{F}$ be any set of \emph{symmetric}, complex-valued signatures in Boolean variables.
 Then $\PlCSP^2(\mathcal{F})$,
equivalently $\PlHolant(\mathcal{EQ}_2,
\mathcal{F})$,  is \#P-hard unless
 $\mathcal{F} \subseteq \mathscr{P}$,
 $\mathcal{F} \subseteq \mathscr{A}$,
 $\mathcal{F} \subseteq \mathscr{A}^{\dagger}$,
 $\mathcal{F} \subseteq \widehat{\mathscr{M}}$,
 or
 $\mathcal{F} \subseteq \widehat{\mathscr{M}^{\dagger}}$,
 in which case the problem is computable in polynomial time.
\end{theorem}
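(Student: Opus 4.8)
The proof splits into a short tractability direction and a substantial \#P-hardness direction; I would follow the template of the proof of the symmetric $\PlCSP$ dichotomy, Theorem~\ref{heng-tyson-dichotomy-pl-csp}, adapted to the extra ``twisted'' families that appear once only even-arity equalities are present.

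\emph{Tractability.} If $\mathcal{F} \subseteq \mathscr{P}$ or $\mathcal{F} \subseteq \mathscr{A}$, then $\PlCSP^2(\mathcal{F}) \le_T \CSP(\mathcal{F})$, which is polynomial time by Theorem~\ref{non-planar-csp-dichotomy}. If $\mathcal{F} \subseteq \mathscr{A}^{\dagger}$ I use that $\widetilde{\mathscr{A}}$ is $\mathscr{A}$-transformable for $\PlCSP^2$ via some $T \in \mathcal{T}_8$ with $(=_{2n}) T^{\otimes 2n} \in \mathscr{A}$, and if $\mathcal{F} \subseteq \widehat{\mathscr{M}}$ or $\widehat{\mathscr{M}^{\dagger}}$ that $\widetilde{\mathscr{M}}$ is $\mathscr{M}$-transformable for $\PlCSP^2$ via some $T H_2$ with $T \in \mathcal{T}_4$ and $(=_{2n})(TH_2)^{\otimes 2n} \in \mathscr{M}$; in each case a holographic transformation turns the problem into $\PlHolant(\mathscr{A})$ or $\PlHolant(\mathscr{M})$, both tractable ($\mathscr{A}$ even without planarity, and $\PlHolant(\mathscr{M})$ by the FKT algorithm as recorded in the preliminaries).

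\emph{Hardness: single signatures.} I would first classify $\PlCSP^2(f) \equiv_T \PlHolant(\mathcal{EQ}_2, f)$ for a single symmetric $f = [f_0, \ldots, f_n]$. Arities $n \le 4$ form a finite case analysis using the explicit lists of symmetric signatures in the five tractable families ($\mathscr{F}_1, \mathscr{F}_2, \mathscr{F}_3$ for $\mathscr{A}$, $\mathscr{F}_1^{\dagger}, \mathscr{F}_2^{\dagger}$ for $\mathscr{A}^{\dagger}$, Proposition~\ref{symmetric:prod:type:signatures} for $\mathscr{P}$, and the known symmetric lists for $\widehat{\mathscr{M}}, \widehat{\mathscr{M}^{\dagger}}$) together with the small-arity membership tests of Lemmas~\ref{binary-affine-compressed function}, \ref{ternary-affine-compressed function}, \ref{matchgate-identity-for-arity-4}. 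For $n \ge 5$ I would use the derivative operator $\partial$: since $(=_2) \in \mathcal{EQ}_2$ and $\mathcal{EQ}_2$ is generated by $(=_2)$ and $(=_4)$ (cf.\ Lemma~\ref{constructing-even-equality-by-equality-4}), all planar gadget moves are available---in particular forming $\partial_{(=_2)}$ and pinning---so from $f$ one can realize a non-degenerate symmetric signature of strictly smaller arity. Induction then applies, provided one verifies that when all derivatives of $f$ happen to land inside a single tractable family, the failure $f \notin$ (that family) at the top still forces \#P-hardness; this is done by interpolating to recover the crossover $\mathfrak{X}$ and invoking the $\Holant$ version of Theorem~\ref{non-planar-csp-dichotomy}, or by isolating a binary signature lying outside all five families.

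\emph{Hardness: mixing, and the main obstacle.} Given the single-signature case, I may assume every $f \in \mathcal{F}$ lies in at least one of the five families but $\mathcal{F}$ in none; then there are $f, g \in \mathcal{F}$ and families with $f \in \mathscr{C}_1$, $g \notin \mathscr{C}_1$, $g \in \mathscr{C}_2$, $f \notin \mathscr{C}_2$. Ranging over the $\binom{5}{2}$ unordered pairs $(\mathscr{C}_1, \mathscr{C}_2)$, I would show $\{f, g\} \cup \mathcal{EQ}_2$ realizes a \#P-hard problem: pairwise intersections of these families are rigid, so $f$ and $g$ have restricted forms, and a few $\partial$-steps plus interpolation---after a holographic normalization legal for $\PlCSP^2$, i.e.\ by some $T \in \mathcal{T}_8$ or $T H_2$---produce either $\mathfrak{X}$, which reduces to the non-planar dichotomy, or a small hard configuration covered by Theorem~\ref{heng-tyson-dichotomy-pl-csp} supplemented by $(=_4)$. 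The main obstacle throughout is that, in contrast with the non-planar setting, the crossover is not available for free: every reduction must be carried out by planar gadgets, and one must show that the only signature sets for which $\mathfrak{X}$ (and every other hardness source) is unreachable are precisely $\mathscr{P}, \mathscr{A}, \mathscr{A}^{\dagger}, \widehat{\mathscr{M}}, \widehat{\mathscr{M}^{\dagger}}$---in particular that no further planar-tractable family hides among them. I expect the most delicate points to be (i) extracting \#P-hardness of a high-arity $f$ from its global non-membership when every one of its derivatives is tractable, and (ii) the mixing pairs involving the $\mathscr{T}_4$-twisted $\mathscr{A}^{\dagger}$ and the $\mathscr{T}_2$-twisted $\widehat{\mathscr{M}^{\dagger}}$, where keeping track of which changes of basis preserve $\mathcal{EQ}_2$ up to an allowed transform is the most error-prone bookkeeping.
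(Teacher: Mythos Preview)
This theorem is not proved in the present paper. It is stated in Section~2.6 (``Some Known Dichotomies'') as Theorem~\ref{heng-tyson-dichotomy-pl-csp2} with the attribution ``Theorem~A.2 in \cite{Cai-Fu-Guo-W}'', and the paper simply invokes it as a known result, e.g.\ in the proof of Theorem~\ref{dichotomy-csp-2} and of Theorem~\ref{With-arity-4-non-matchgate-signature-and-[1,0,x]}. There is therefore no proof here to compare your proposal against; the actual argument lives in \cite{Cai-Fu-Guo-W}.

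Your outline is a plausible high-level strategy for such a dichotomy (tractability via transformability, hardness by arity reduction and mixing analysis), and the tractability direction you sketch matches exactly what the present paper records in the discussion following Definition~\ref{def:prelim:trans}. But whether your hardness plan matches the proof in \cite{Cai-Fu-Guo-W}, or succeeds at the points you yourself flag as delicate, cannot be assessed from this paper alone.
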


Note that Theorem~\ref{heng-tyson-dichotomy-pl-csp}
(and Theorem~\ref{heng-tyson-dichotomy-pl-csp2})
are applicable only for symmetric signatures.
The main theorem of
the present paper is to generalize
Theorem~\ref{heng-tyson-dichotomy-pl-csp} to be valid for all,
not necessarily symmetric, signatures over Boolean variables.

\subsection{Some Lemmas}
In this subsection, we prove some simple lemmas.
The next lemma shows that flipping any input variable
of a signature $f$ does not change its membership in
$\mathscr{P}$, or   $\mathscr{A}$, or $\mathscr{M}$.
\begin{lemma}\label{[0,1,0]-not-change-tractable}
Let $g(x_1, \ldots, x_{i-1}, x_i, x_{i+1}, \ldots, x_n)=f(x_1, \ldots, x_{i-1}, \overline{x_i}, x_{i+1}, \ldots, x_n)$,
then for $\mathscr{C} \in \{ \mathscr{P}, \mathscr{A}, \mathscr{M}\}$,
 $f\in\mathscr{C}$ iff $g\in\mathscr{C}$.
\end{lemma}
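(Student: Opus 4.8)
The claim is that flipping one input variable $x_i$ of a signature $f$ preserves membership in each of the three tractable classes $\mathscr{P}$, $\mathscr{A}$, and $\mathscr{M}$. I would handle the three classes separately, since each has a convenient characterization from the preliminaries. Observe first that flipping $x_i$ is realized by tensoring in the binary disequality $(\neq_2) = [0,1,0]$ on the edge indexed $i$, i.e. $g = \partial^{\{i\}}_{?}$-style composition; concretely $g$ is obtained from $f$ by permuting the two ``halves'' of the truth table indexed by $x_i = 0$ and $x_i = 1$. Since flipping is an involution, it suffices to prove only the forward direction $f \in \mathscr{C} \Rightarrow g \in \mathscr{C}$ for each $\mathscr{C}$.

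For $\mathscr{P}$: use Definition~\ref{definition-product-1}, that $\mathscr{P} = \langle \mathcal{E} \rangle$. If $f = \bigotimes_{\mathcal{I}}(f_1,\dots,f_k)$ with each $f_j \in \mathcal{E}$, then $x_i$ lies in exactly one block $I_j$, and flipping $x_i$ only affects the factor $f_j$; so it suffices to show $\mathcal{E}$ is closed under flipping a variable. But if $\operatorname{supp}(f_j) \subseteq \{\alpha, \bar\alpha\}$, then flipping one coordinate sends this pair to $\{\alpha', \bar{\alpha'}\}$ (an antipodal pair again), so the flipped factor is still in $\mathcal{E}$. Alternatively, via Definition~\ref{definition-product-2}: flipping $x_i$ turns each $[1,0,1]$ incident to $x_i$ into $[0,1,0]$ and vice versa, and modifies the unary factor on $x_i$ by swapping its two entries — all still of product type. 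For $\mathscr{A}$: use Definition~\ref{definition-affine}. Writing $f = \lambda \cdot \chi_{AX=0} \cdot {\frak i}^{Q(X)}$, substitute $x_i \mapsto 1 - x_i$. The linear system $AX = 0$ with $X = (x_1,\dots,x_n,1)$ remains an affine linear system over $\mathbb{Z}_2$ after this affine substitution (absorb the constant shift into the last coordinate), so the support stays affine. For the quadratic exponent $Q$, substituting $x_i \mapsto 1-x_i$ into $a_0 + \sum_k a_k x_k + \sum_{p<q} 2 b_{pq} x_p x_q$ keeps total degree $\le 2$, and — crucially — the cross-term coefficients: a term $2b_{pi}x_p x_i$ becomes $2b_{pi}x_p - 2b_{pi}x_p x_i$, so the new cross-term coefficient is $-2b_{pi}$, still even, while $x_i^2 = x_i$ introduces only linear terms. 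Hence the new $Q$ still satisfies the requirements and $g \in \mathscr{A}$.

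For $\mathscr{M}$: use the two-part characterization (Parity Condition plus the Matchgate Identities). Flipping $x_i$ shifts the Hamming weight of every entry by $\pm 1$, so it swaps even-parity and odd-parity signatures; in particular if $f$ satisfies the Parity Condition, so does $g$ (with the opposite parity). For the MGI: flipping $x_i$ is the relabeling $f_\alpha \mapsto f_{\alpha \oplus e_i}$. The set of all MGIs, ranging over all patterns $\alpha$ and all position vectors $P$, is invariant under this relabeling — indeed the MGI indexed by $(\alpha, P)$ for $g$, namely $\sum_{t}(-1)^t g_{\alpha \oplus e_{p_t}} g_{\alpha \oplus P \oplus e_{p_t}} = \sum_t (-1)^t f_{\alpha \oplus e_i \oplus e_{p_t}} f_{\alpha \oplus e_i \oplus P \oplus e_{p_t}}$, is precisely the MGI for $f$ indexed by $(\alpha \oplus e_i, P)$ (note the position vector $P$, hence also $e_i$'s role outside $P$, is unchanged; one checks the two cases $i \in P$ and $i \notin P$ but in both the identity for $g$ at $(\alpha,P)$ matches an identity for $f$ at a shifted pattern). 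So $g$ satisfies all MGIs. I expect this last verification — tracking the index bookkeeping in the MGI under the flip — to be the only place requiring care; the $\mathscr{P}$ and $\mathscr{A}$ cases are routine substitutions. A cleaner alternative for $\mathscr{M}$, if available, is to note that $\trans{0}{1}{1}{0} \mathscr{M} = \mathscr{M}$ (stated already in the excerpt for the transformed case) and that flipping a single edge variable is exactly applying this $2\times 2$ matrix on that one edge, which is a (non-holographic but signature-local) operation preserving realizability by matchgates — but since the excerpt only asserts the global identity, I would fall back to the direct MGI argument to be safe.
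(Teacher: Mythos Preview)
Your proposal is correct, but takes a more explicit route than the paper. The paper's proof is essentially a one-liner: it observes that $[0,1,0] \in \mathscr{P} \cap \mathscr{A} \cap \mathscr{M}$, and that $g$ is obtained from $f$ by connecting $[0,1,0]$ to the $i$-th edge (a gadget construction). Since each of $\mathscr{P}$, $\mathscr{A}$, $\mathscr{M}$ is closed under planar gate constructions from its own members (this is implicit in their definitions as signature classes realizable by certain gates, or by direct verification), it follows immediately that $f \in \mathscr{C}$ implies $g \in \mathscr{C}$; the involution gives the converse.

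Your approach instead verifies, class by class, that the defining structural conditions (antipodal support, affine support plus quadratic exponent with even cross terms, Parity Condition plus MGI) are preserved under the substitution $x_i \mapsto 1 - x_i$. This is more hands-on and self-contained: it does not rely on knowing that the tractable classes are closed under internal gadget composition. The trade-off is length and bookkeeping, particularly in the MGI case (which you handled correctly: the MGI for $g$ at $(\alpha, P)$ is exactly the MGI for $f$ at $(\alpha \oplus e_i, P)$). The paper's argument is cleaner once one has the closure properties in hand; yours would be preferable in a context where those closure properties have not yet been established or are being proved from scratch.
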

\begin{proof}
Note that $[0, 1, 0]\in \mathscr{P} \cap \mathscr{A}\cap\mathscr{M}$,
and $f$ is obtained from $g$ by flipping $x_i$.
It follows easily by definition of
$\mathscr{C} \in \{ \mathscr{P}, \mathscr{A}, \mathscr{M}\}$ that
$f\in\mathscr{C}$ iff $g\in\mathscr{C}$.
\end{proof}

The following lemma shows how
to use $[0, 1]^{\otimes 2}$ and $[1, 0, 1, 0] \in \widehat{\mathcal{EQ}}$
 to get $[0, 1, 0]^{\otimes 2}$, then
 to flip any two variables that are not necessarily
 adjacent,
while preserving planarity.

\begin{lemma}\label{how-to-flip-two-bits-by-[0,1,0]-tensor-2}
In $\operatorname{Pl-Holant}(\widehat{\mathcal{EQ}}, [0, 1]^{\otimes 2}, f)$,
if $f$ has arity $n$, then
for any $s  \not = t\in[n]$,
we can construct $g$ such that $g(x'_1, x'_2, \ldots, x'_n)=f(x_1, x_2, \ldots, x_n)$,
where $x'_k=\overline{x_k}$ for $k\in\{s, t\}$ and $x'_k=x_k$ otherwise.
Moreover,
for  $\mathscr{C} \in \{ \mathscr{P}, \mathscr{A}, \mathscr{M}\}$,
 $f\in\mathscr{C}$ iff $g\in\mathscr{C}$.
\end{lemma}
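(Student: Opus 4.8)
The plan is to build the signature $g$ from $f$ by attaching, on each of the two edges $s$ and $t$, a small planar gadget that simultaneously flips the corresponding variable and routes it through a planar ``crossover-free'' channel, using the signatures $[1,0,1,0] \in \widehat{\mathcal{EQ}}$ (i.e.\ $(=_3)$ in the hatted basis, a ternary signature) and $[0,1]^{\otimes 2}$ that are available. First I would observe that $[0,1]^{\otimes 2}$ is just two independent copies of the unary pinning $[0,1]$, so by connecting one copy of $[0,1]$ to one input of $[1,0,1,0]$ we realize a binary signature; checking the four entries, pinning the first variable of $[1,0,1,0]$ to $1$ yields $[0,1,0]$, the binary \textsc{Disequality}. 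Thus from $\widehat{\mathcal{EQ}}$ together with $[0,1]^{\otimes 2}$ we can construct $[0,1,0]^{\otimes 2}$ (two disjoint copies of $(\neq_2)$), which is the key composite tool advertised in the statement.

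Next I would use this $[0,1,0]^{\otimes 2}$ as a ``planar pair of wires with a flip on each.'' Concretely, in a planar embedding the dangling edges $s$ and $t$ of $f$ appear on the outer face; think of them as two marked points on the boundary of the disk containing the $\{f\}$-gate. I attach the arity-$4$ signature $[0,1,0]^{\otimes 2}$ (viewed as $(\neq_2)$ on one pair of its legs and $(\neq_2)$ on the other pair) so that one pair of legs connects to the edges $s$ and $t$ of $f$ and the other pair becomes the new dangling edges $x'_s$ and $x'_t$. Because $[0,1,0]^{\otimes 2}$ is a tensor product of two disjoint binary signatures, there is no genuine crossing: we may route leg-pair-$1$ to $s$ and its partner out to $x'_s$, and leg-pair-$2$ to $t$ and its partner out to $x'_t$, all in the outer face without the two channels meeting. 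The effect of a $(\neq_2)$ inserted on a wire is exactly to flip the Boolean value carried on that wire, so summing over the internal edges forces the internal value at $s$ to equal $\overline{x'_s}$ and at $t$ to equal $\overline{x'_t}$, which gives $g(x'_1,\dots,x'_n) = f(x_1,\dots,x_n)$ with $x'_k = \overline{x_k}$ for $k \in \{s,t\}$ and $x'_k = x_k$ otherwise, as required.

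Finally, for the ``moreover'' clause I would invoke Lemma~\ref{[0,1,0]-not-change-tractable} twice in succession: flipping variable $x_s$ preserves membership in each of $\mathscr{P}$, $\mathscr{A}$, $\mathscr{M}$, and then flipping $x_t$ preserves it again, so $f \in \mathscr{C}$ iff $g \in \mathscr{C}$ for $\mathscr{C} \in \{\mathscr{P}, \mathscr{A}, \mathscr{M}\}$.

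The main obstacle is the planarity bookkeeping: one must be careful that routing the two disequality channels from the (cyclically ordered) positions $s$ and $t$ on the boundary of the $\{f\}$-gate out to new dangling edges, while keeping the new dangling edges in the correct counterclockwise cyclic order matching the variable order of $g$, can indeed be done without introducing a crossing. The fact that $[0,1,0]^{\otimes 2}$ is a \emph{tensor product} of two disjoint binary signatures (rather than a true $4$-ary ``crossover'') is exactly what makes this possible: each channel is an independent short wire that can be drawn nested or side-by-side in the outer face, so no actual crossover signature is needed and planarity is preserved.
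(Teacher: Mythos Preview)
Your construction of $[0,1,0]^{\otimes 2}$ and your use of Lemma~\ref{[0,1,0]-not-change-tractable} for the ``moreover'' clause are both fine and match the paper. The gap is precisely where you flag ``the main obstacle'' and then wave it away: the claim that because the \emph{signature} $[0,1,0]^{\otimes 2}$ factors as a tensor product, the two disequality wires can be routed independently to non-adjacent positions $s$ and $t$ on the boundary of $f$. That inference is not valid. The only way you can realize $[0,1,0]^{\otimes 2}$ here is as a \emph{connected} planar gadget (the single vertex carrying $[0,1]^{\otimes 2}$ is joined to both copies of $[1,0,1,0]$), so its four dangling edges sit in a fixed cyclic order $a_1,a_2,b_1,b_2$ on the boundary of one disk. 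You do not have $[0,1,0]$ as a free-standing binary signature that could be spliced into edge $s$ and edge $t$ separately. If $s$ and $t$ are not adjacent in the cyclic order of $f$'s external edges, attaching this connected gadget to both $s$ and $t$ traps the intermediate dangling edges $s+1,\ldots,t-1$ (or those on the other arc) inside a bounded face, so you cannot recover the required cyclic order $1,\ldots,n$ for $g$'s external edges.

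The paper handles exactly this point by a different mechanism: it only ever applies the $[0,1,0]^{\otimes 2}$ gadget to an \emph{adjacent} pair of edges (where your direct argument does work), and then iterates. Define $h^{(0)}=f$ and, for $j=1,\ldots,t-s$, let $h^{(j)}$ be obtained from $h^{(j-1)}$ by flipping the adjacent pair $(x_{s+j-1},x_{s+j})$. Each step is planar because the two edges involved are neighbors on the boundary; after $t-s$ steps every intermediate variable $x_{s+1},\ldots,x_{t-1}$ has been flipped twice and hence restored, while $x_s$ and $x_t$ have been flipped once. This chaining idea is the missing ingredient in your argument.
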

\begin{proof}
That $f\in\mathscr{C}$ iff $g\in\mathscr{C}$ follows from
Lemma~\ref{[0,1,0]-not-change-tractable}.

Note that we have $[1, 0, 1, 0] \in \widehat{\mathcal{EQ}}$, and
 $[0, 1]^{\otimes 2}$. Since
$\partial_{[0, 1]}([1, 0, 1, 0])=[0, 1, 0]$,
 by connecting  $[0, 1]^{\otimes 2}$ to two
disjoint copies of $[1, 0, 1, 0]$, we get
  $[0, 1, 0]^{\otimes 2}$.
Note that this is a planar gadget where two adjacent pairs of variables
are flipped. This function is $(x_1 \not = x_2)
\wedge (x_3 \not = x_4)$.  After a rotation of $90^{\circ}$
we also get $(x_4 \not = x_1)
\wedge (x_2 \not = x_3)$, which we will denote as $D^2$.

Without loss of generality, we assume that $t>s$.
If $t-s=1$, then $x_s$ and $x_{s+1}$ are adjacent variables
and we can directly apply $D^2$ to flip both $x_s$ and $x_{s+1}$.
In general (See Figure~\ref{fig:two:variables} for an illustration),
 we let $h^{(0)}=f$ and
for $1 \le j \le t-s$,
define $h^{(j)}(x^{(j)}_1, x^{(j)}_2, \ldots, x^{(j)}_n)=h^{(j-1)}(x^{(j-1)}_1, x^{(j-1)}_2, \ldots, x^{(j-1)}_n)$,
where $x^{(j)}_i= \overline{x^{(j-1)}_i}$ for $i\in \{s+j-1, s+j\}$ and
$x^{(j)}_i=x^{(j-1)}_i$ for all others.
Then we are done by letting $g=h^{(t-s)}$.
In effect, all variables $x_i$ with $s < i < t$ are flipped twice.
\end{proof}

\begin{figure}[htpb]
  \centering
  \begin{tikzpicture}[scale=\scale,transform shape,node distance=1.5*\nodeDist,semithick]
    \node[internal]  (0) {};
    \node[external, above of=0]  (1) {};
    \node[external, above of=1]  (2) {};
    \node[external, left of=1]  (4) {};
    \node[external, above of=4]  (3) {};
    \node[external, right of=1]  (6) {};
    \node[external, above of=6]  (5) {};
    \path (0)  edge node[square] (21) {}  (1)
          (0)  edge node[square] (11) {} (3)
          (0)  edge (2)
          (11)  edge[densely dashed] (21)
          (0)  edge node[square] (31) {} (5)
          (1)  edge node[square] (22) {} (2)
          (31)  edge[densely dashed] (22);
    \begin{pgfonlayer}{background}
      \node[draw=\borderColor,thick,rounded corners,fit = (0), inner sep=8pt,transform shape=false] {};
    \end{pgfonlayer}
  \end{tikzpicture}
  \caption{
  Flipping two variables of $f$ that are not adjacent by $
[0, 1, 0]^{\otimes 2}$ while preserving planarity.
  The circle vertex is labeled $f$ and squares are $[0, 1, 0]$.
A pair of squares connected by a dashed line forms $[0, 1, 0]^{\otimes 2}$.}
  \label{fig:two:variables}
\end{figure}


The following lemma implies that in $\operatorname{Pl-Holant}(\widehat{\mathcal{EQ}}, f)$,
where $f\notin\mathscr{A}$ or $f\notin\mathscr{M}$, we can assume that
$f_{00\cdots 0}=1$.
Moreover, if $f$ satisfies the parity condition, we can assume it satisfies
the \emph{even} parity condition.

\begin{lemma}\label{[0,1]-EQ-hat-wight-0-neq-0}
For $\mathscr{C} = \mathscr{A}$ or $\mathscr{M}$,
if $\widehat{\mathcal{F}}$
 contains a signature $f \notin\mathscr{C}$ of arity $n$,
then we can construct a function
$f'\notin\mathscr{C}$ of arity $n$
with $f'_{00\cdots 0}=1$  such that
\[\operatorname{Pl-Holant}(\widehat{\mathcal{EQ}}, f',
\widehat{\mathcal{F}})
\le_{\rm T}
\operatorname{Pl-Holant}(\widehat{\mathcal{EQ}},
\widehat{\mathcal{F}}).\]
Moreover, if $f$ satisfies the Parity Condition,
then $f'$ satisfies the even Parity Condition, and
if $f$ takes values in $\{0, 1\}$ $(\{0, 1, -1\})$, then $f'$
also takes values in $\{0, 1\}$ $(\{0, 1, -1\})$.
\end{lemma}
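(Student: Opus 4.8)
The statement asks that, given $f \notin \mathscr{C}$ of arity $n$ (for $\mathscr{C} = \mathscr{A}$ or $\mathscr{M}$) in $\widehat{\mathcal{F}}$, we can produce an $f' \notin \mathscr{C}$ of the same arity with $f'_{00\cdots 0} = 1$, reducible to the original problem, and preserving the Parity Condition (turning odd parity into even parity) and the value set. The plan is to perform a \emph{pinning followed by a flip}. First I would pick an entry $f_\alpha \neq 0$ of $f$; such an entry exists since $f \notin \mathscr{C}$ means $f$ is not identically zero. Say $\alpha = (a_1,\ldots,a_n)$. The idea is to ``move'' this nonzero entry to the all-zeros position by flipping every coordinate $i$ with $a_i = 1$, using the disequality gadget $[0,1,0]$. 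But we must be careful: the signature set available is only $\widehat{\mathcal{EQ}}$ together with $\widehat{\mathcal{F}}$, and $[0,1,0]$ is not obviously in $\widehat{\mathcal{EQ}}$ (indeed $\widehat{\mathcal{EQ}} = \{[1,0],[1,0,1],[1,0,1,0],\ldots\}$); however $[0,1] \in \widehat{\mathcal{EQ}}$, and $\partial_{[0,1]}([1,0,1,0]) = [0,1,0]$, so within $\operatorname{Pl-Holant}(\widehat{\mathcal{EQ}}, \ldots)$ we may freely construct $[0,1,0]$, and more usefully (by Lemma~\ref{how-to-flip-two-bits-by-[0,1,0]-tensor-2}'s technique or directly) flip a chosen set of variables while preserving planarity.

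The subtlety is that flipping a single variable is not a planar operation in general; we can cleanly flip \emph{pairs} of variables non-adjacently (this is exactly Lemma~\ref{how-to-flip-two-bits-by-[0,1,0]-tensor-2}, using $[0,1,0]^{\otimes 2}$ built from $[0,1]^{\otimes 2}$ and two copies of $[1,0,1,0] \in \widehat{\mathcal{EQ}}$), and a single variable can be flipped only if it sits at the boundary. So the clean approach is to split into cases on the Hamming weight $|\alpha|$ of the chosen nonzero entry. If $|\alpha|$ is even, I flip the $|\alpha|$ ones of $\alpha$ in pairs; each pair-flip is available by the construction above (and preserves planarity and membership in $\mathscr{C}$ by Lemma~\ref{[0,1,0]-not-change-tractable}, so in particular the flipped signature is still not in $\mathscr{C}$), and after all pair-flips the entry originally at $\alpha$ sits at $00\cdots 0$, giving a signature $f''$ with $f''_{00\cdots 0} = f_\alpha \neq 0$. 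Then scale by $1/f_\alpha$ to get $f'$ with $f'_{00\cdots 0} = 1$; scaling by a nonzero constant does not change complexity or membership in $\mathscr{C}$. If $|\alpha|$ is odd, then since $f$ does not satisfy the even Parity Condition on this entry, I should instead choose $\alpha$ more carefully: if $f$ satisfies the Parity Condition (necessarily odd parity, since a nonzero odd-weight entry exists), then \emph{every} nonzero entry has odd weight and I flip all but one of the ones, reaching weight $1$; but weight $1$ is still not $0$. The resolution: combine $f$ with the unary $[0,1] \in \widehat{\mathcal{EQ}}$ — no, rather, note that to get all-zeros from an odd-weight $\alpha$ we need one more flip, which we can arrange by first flipping down to a weight-$1$ entry at a \emph{boundary} variable (always possible: rotate so the single remaining $1$ is the first input), then flipping that single boundary variable using one copy of $[0,1,0]$ connected to the dangling edge — this is planar precisely because it is the outermost edge. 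This handles the odd case while converting odd parity to even parity, as required.

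The main obstacle I anticipate is the careful bookkeeping of planarity during the pair-flips and the final single boundary flip, and verifying that the Parity Condition claim comes out right: when $|\alpha|$ is odd and $f$ had odd parity, after moving $f_\alpha$ to position $00\cdots 0$ every other nonzero entry $f_\beta$ (with $|\beta|$ odd) lands at position $\beta \oplus \alpha$, which has \emph{even} weight since $|\beta| + |\alpha|$ is even — so $f'$ satisfies the even Parity Condition, exactly as stated. The value-set claim ($\{0,1\}$ or $\{0,1,-1\}$ preserved) follows because flips only permute entries and the final scaling is by $1/f_\alpha$ with $f_\alpha \in \{1\}$ (in the $\{0,1\}$ case) or $f_\alpha \in \{1,-1\}$ (in the $\{0,1,-1\}$ case), so division leaves the value set invariant. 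I would also remark that in the generic case where $f$ does not satisfy the Parity Condition at all, we may simply pick any nonzero entry of even weight if one exists, or of odd weight otherwise and proceed as above; either way the reduction is a composition of $\le_T$ steps, each adding a constructible planar gadget signature to the instance, hence $\operatorname{Pl-Holant}(\widehat{\mathcal{EQ}}, f', \widehat{\mathcal{F}}) \le_T \operatorname{Pl-Holant}(\widehat{\mathcal{EQ}}, \widehat{\mathcal{F}})$.
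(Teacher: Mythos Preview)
Your plan has the right overall shape (flip the $1$-bits of a nonzero entry $\alpha$ to move it to position $00\cdots 0$, then normalize), but there is a genuine gap at the crucial enabling step: you assert that $[0,1] \in \widehat{\mathcal{EQ}}$, which is false. The set $\widehat{\mathcal{EQ}} = \{[1,0],[1,0,1],[1,0,1,0],\ldots\}$ contains the unary $[1,0]$, not $[0,1]$. Consequently you cannot directly form $[0,1,0] = \partial_{[0,1]}([1,0,1,0])$, and your invocation of Lemma~\ref{how-to-flip-two-bits-by-[0,1,0]-tensor-2} in the even case also fails, since that lemma \emph{assumes} $[0,1]^{\otimes 2}$ is already available --- you never produce it.

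The missing idea, which the paper supplies, is to manufacture $[0,1]$ (or $[0,1]^{\otimes 2}$) out of $f$ itself. Choose $\alpha$ of \emph{minimum} Hamming weight among nonzero entries of $f$ (not an arbitrary nonzero entry). Pin every zero-bit of $\alpha$ with $[1,0] \in \widehat{\mathcal{EQ}}$; by minimality of $\operatorname{wt}(\alpha)$ all entries of the resulting $\operatorname{wt}(\alpha)$-ary signature below the top weight vanish, so up to scalar it equals $[0,1]^{\otimes \operatorname{wt}(\alpha)}$. Self-loops with $(=_2) \in \widehat{\mathcal{EQ}}$ then cut this down to $[0,1]$ when $\operatorname{wt}(\alpha)$ is odd and to $[0,1]^{\otimes 2}$ when it is even. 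With these in hand your flipping argument goes through exactly as the paper does: in the odd case use $[0,1,0]$ to flip each $1$-bit of $\alpha$ individually (no ``boundary'' gymnastics needed --- connecting a binary to any single dangling edge is planar); in the even case use Lemma~\ref{how-to-flip-two-bits-by-[0,1,0]-tensor-2}. Your parity and value-set bookkeeping in the final paragraph is correct.
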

\begin{proof}
If  $f_{00\cdots 0} \not = 0$, then we simply normalize
$f$ by setting $f' = f/f_{00\cdots 0}$.
So we suppose $f_{00\cdots 0} = 0$.
By $f \not \in \mathscr{A}$ or $\mathscr{M}$, clearly $f$ is not identically 0.
Let ${\rm wt}(\alpha)=\displaystyle\min_{\beta\in\{0, 1\}^n}\{\operatorname{wt}(\beta)|f_{\beta}\neq 0\}$. Let
$S = \{i \mid 1\leq i\leq n, ~\mbox{the $i$-th bit of $\alpha$ is 0}\}$.
Since we have $[1,0] \in \widehat{\mathcal{EQ}}$ we can get
$\partial_{[1,0]}^S(f) = [0,1]^{\otimes {\rm wt}(\alpha)}$.
Depending on whether ${\rm wt}(\alpha)$ is odd or even,
we can take $\partial_{=_2}$ on $[0,1]^{\otimes {\rm wt}(\alpha)}$
 repeatedly and obtain
either $[0,1]$ or $[0,1]^{\otimes 2}$, respectively.
Since we have $[1,0,1,0] \in  \widehat{\mathcal{EQ}}$,
we can get either $\partial_{[0,1]}([1,0,1,0]) = [0,1,0]$
or $[0,1,0]^{\otimes 2}$.

If ${\rm wt}(\alpha)$ is odd, and this includes the case when
$f$ satisfies the odd Parity Condition, we have $[0,1,0]$
and can flip any variable of $f$ individually. By flipping
all variables in $[n] \setminus S$, and normalizing,
we obtain $f'$ with the required
property. In particular if $f$ satisfies the  odd Parity Condition,
then $f'$ satisfies the even Parity Condition.

If  ${\rm wt}(\alpha)$ is even, and this includes the case when
$f$ satisfies the even Parity Condition, we have
 $[0,1]^{\otimes 2}$ and $[0,1,0]^{\otimes 2}$.
By Lemma~\ref{how-to-flip-two-bits-by-[0,1,0]-tensor-2}
we can flip any two variables of $f$. By applying the construction
in Lemma~\ref{how-to-flip-two-bits-by-[0,1,0]-tensor-2}
simultaneously on ${\rm wt}(\alpha)/2$ pairs of variables
of $f$, we can transform $f$ to $f'$ by a planar construction
so that $f'_{00\cdots 0} = f_{\alpha} \not = 0$.
By normalizing, we obtain the required $f'$
with $f'_{00\cdots 0} =1$.
In particular if $f$ satisfies the even Parity Condition,
then $f'$ also satisfies the even Parity Condition.
We get $f'$ from $f$
 by flipping some variables in all cases. Thus if $f$ takes values
in  $\{0, 1\}$ $(\{0, 1, -1\})$, then $f'$  also takes values
in $\{0, 1\}$ $(\{0, 1, -1\})$.
\end{proof}

\subsection{Interpolation}
Polynomial interpolation is a powerful tool in the study of counting problems.
In this subsection, we give the following two lemmas by polynomial interpolation.

\begin{lemma}\label{interpolation-unary}
If $|x|\neq 0, 1$, then for any $a, b\in\mathbb{C}$, we have
\[\operatorname{Pl-Holant}(\mathcal{EQ}, [a, b], \mathcal{F})
\le_{\rm T}
\operatorname{Pl-Holant}(\mathcal{EQ}, [1, x], \mathcal{F}).\]
\end{lemma}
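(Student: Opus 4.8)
The plan is to use standard polynomial interpolation in the style of Valiant's interpolation method (cf.~the recursive-construction framework used throughout the Holant literature). Fix an instance $\Omega$ of $\operatorname{Pl-Holant}(\mathcal{EQ}, [a,b], \mathcal{F})$ in which the unary signature $[a,b]$ occurs, say, $m$ times. The idea is to replace each occurrence of $[a,b]$ by a gadget $N_s$ built only from signatures in $\{[1,x]\} \cup \mathcal{EQ}$, where $s$ ranges over $0, 1, 2, \ldots$, in such a way that the $N_s$ realize a sequence of unary signatures whose ``coordinates'' in a fixed basis grow geometrically with distinct ratios. The simplest such gadget: take $[1,x]$ and attach it to one edge of $(=_3) \in \mathcal{EQ}$, leaving two dangling edges, then iterate; more directly, connecting $s$ copies of $[1,x]$ to a long equality chain produces the unary signature $[1, x^s]$ up to normalization, which is planar since everything is attached along a path. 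So from $\{[1,x]\} \cup \mathcal{EQ}$ we can realize $[1, x^s]$ for every $s \ge 0$ as a planar $\mathcal{F}'$-gate.

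Next I would express the target. Since $[1,0] = (=_1) \in \mathcal{EQ}$ is already available and $[1,1]$ can be realized (e.g.\ as $\partial_{(=_2)}$ applied appropriately, or simply $[1,x^0]$), and since any unary $[a,b]$ with $b \ne 0$ is a linear combination $[a,b] = (a - b)[1,0] + b[1,1]$, it suffices to interpolate $[1,1]$; the case $b = 0$ is immediate since $[a,0] = a[1,0]$ and constant multiples are free. Now write $\operatorname{Pl-Holant}(\Omega)$ as a polynomial in the unary signature we plug in: replacing all $m$ occurrences by a unary signature of the form $c_0[1,0] + c_1[1,1]$ gives a value that is a bihomogeneous form $\sum_{j=0}^m c_0^{m-j} c_1^{j} \, P_j$, where the coefficients $P_j$ are exactly the quantities obtained by choosing $j$ of the occurrences to be $[1,1]$ and the rest $[1,0]$ — and the particular combination we want, namely the original instance with all occurrences equal to $[a,b]$, is the evaluation at $(c_0,c_1) = (a-b, b)$. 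So it is enough to determine all $P_j$, $0 \le j \le m$.

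To get the $P_j$, plug in the realizable signature $[1, x^s]$ for $s = 0, 1, \ldots, m$. Writing $[1, x^s] = [1,0] + x^s([1,1] - [1,0]) = (1 - x^s)[1,0] + x^s[1,1]$, the Holant value of the resulting instance of $\operatorname{Pl-Holant}(\mathcal{EQ}, [1,x], \mathcal{F})$ equals $\sum_{j=0}^m (1-x^s)^{m-j} (x^s)^j P_j$. As $s$ ranges over $0, \ldots, m$ this is a linear system in the unknowns $P_j$ whose matrix has entries $(1-x^s)^{m-j}(x^s)^j$. Dividing row $s$ by $(1-x^s)^m$ (nonzero since $|x| \ne 1$ forces $x^s \ne 1$ for $s \ge 1$, and for $s=0$ the row is handled separately as it directly gives $P_0$), the matrix becomes Vandermonde in the quantities $t_s := x^s/(1-x^s)$. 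These $t_s$ are pairwise distinct: $t_s = t_{s'}$ would force $x^s = x^{s'}$, impossible for $s \ne s'$ when $|x| \ne 0, 1$. Hence the system is nonsingular and we can solve for every $P_j$ in polynomial time, then output the desired evaluation at $(a-b, b)$.

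The main obstacle — and the only place where real care is needed — is twofold: first, checking that the interpolating gadgets $[1, x^s]$ can genuinely be realized \emph{planarly} and inserted in place of each occurrence of $[a,b]$ without destroying planarity, which holds because a unary gadget is attached at a single edge and can be absorbed into the neighborhood of that edge; and second, verifying the nonsingularity of the interpolation matrix, which reduces cleanly to the distinctness of the $t_s$ and hence to the hypothesis $|x| \notin \{0,1\}$. Everything else is the routine bookkeeping of expanding a multilinear form and solving a Vandermonde-type system, so I would not belabor it.
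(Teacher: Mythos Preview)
Your approach is correct and follows the same interpolation template as the paper, but it takes an unnecessary detour. The paper stratifies directly by the edge assignments: if $\ell$ of the $m$ occurrences of the unary have their incident edge assigned $1$, the contribution picks up a factor $u_0^{\,m-\ell}u_1^{\,\ell}$ from a unary $[u_0,u_1]$. Thus with $[a,b]$ the value is $\sum_\ell a^{m-\ell}b^{\ell}c_\ell$, and with $[1,x^k]$ it is $\sum_\ell x^{k\ell}c_\ell$, which is an outright Vandermonde system in the nodes $x^k$ (pairwise distinct because $|x|\notin\{0,1\}$). No change of basis to $\{[1,0],[1,1]\}$, no ratios $t_s=x^s/(1-x^s)$, no special handling of $s=0$.

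Your version works, but one small slip: for $s=0$ you have $[1,1]=0\cdot[1,0]+1\cdot[1,1]$, so the single surviving term is $P_m$, not $P_0$. This is harmless since you can simply use $s=1,\ldots,m+1$ instead (all $x^s\neq 1$ and the $t_s$ are pairwise distinct, as you argue). The planarity of the gadget $[1,x^k]=\partial_{[1,x]}^k(=_{k+1})$ is clear, exactly as in the paper.
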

\begin{proof}
Note that for any $k\in\mathbb{Z}^+$, we have $\partial_{[1, x]}^k(=_{k+1})=[1, x^k]$.
Consider an instance $\Omega$ of Pl-Holant$(\mathcal{EQ}, [a, b], \mathcal{F})$.
Let $S$ be the subset of vertices assigned $[a, b]$ and suppose that $|S|=n$.
By replacing each occurrence of $[a, b]$ with $[1, x^k]$, we construct a sequence of instances $\Omega_k$
of  $\operatorname{Pl-Holant}(\mathcal{EQ}, [1, x], \mathcal{F})$.

We stratify the assignments in $\Omega$ based on the assignment to $[a, b]$.
Let $c_{\ell}$ be the sum over all assignments of products of evaluations at all vertices other than those from $S$
such that exactly $\ell$ occurrences of $[a, b]$ with their
respective incident edges assigned 1 (and
the other $n-\ell$ assigned 0). Then
\[\operatorname{Pl-Holant}(\Omega)=\displaystyle\sum_{0\leq \ell\leq n}a^{n-\ell}b^{\ell}c_{\ell}\]
and the value of the planar Holant on $\Omega_k$, for $1\leq k\leq n+1,$ is
\[\operatorname{Pl-Holant}(\Omega_k)=\displaystyle\sum_{0\leq \ell\leq n}x^{k\ell}c_{\ell}.\]
This is a linear system with unknowns $c_{\ell}$ and a Vandermonde coefficient matrix.
Since $|x|\neq 0, 1$, $x^k$ are all distinct ($1\leq k\leq n+1$),
 which implies that the Vandermonde matrix has full rank.
Therefore, we can solve the linear system for the unknown $c_{\ell}$'s and obtain the value of Pl-Holant$(\Omega)$.
\end{proof}

\begin{lemma}\label{interpolation-equality-4}
Suppose $\mathcal{F}$ contains a signature $f$ of arity 4 with
\[M_{x_1x_2, x_4x_3}(f)=\begin{bmatrix}
a & 0 & 0 & b\\
0 & 0 & 0 & 0\\
0 & 0 & 0 & 0\\
c & 0 & 0 & d
\end{bmatrix}\]
or
\[M_{x_1x_2, x_4x_3}(f)=\begin{bmatrix}
a & 0 & 0 & 0\\
0 & b & 0 & 0\\
0 & 0 & c & 0\\
0 & 0 & 0 & d
\end{bmatrix},\]
where $\begin{bmatrix}
a & b\\
c & d
\end{bmatrix}$ has full rank. Then
\[\operatorname{Pl-Holant}(=_4, \mathcal{F})
\le_{\rm T}
\operatorname{Pl-Holant}(\mathcal{F}).\]
\end{lemma}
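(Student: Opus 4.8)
The plan is to realize $(=_4)$ from $f$ using linear chains of copies of $f$, together with polynomial interpolation. First I would reduce to the first of the two forms: a cyclic rotation of the inputs of $f$ (reading the variables as $(x_2,x_3,x_4,x_1)$; see Figure~\ref{fig:rotate_asymmetric_signature}) carries the support $\{0000,0110,1001,1111\}$ of the second form to the support $\{0000,0011,1100,1111\}$ of the first form, with the entries $a,b,c,d$ landing in the corresponding positions, so the rotated signature is of the first form with the \emph{same} matrix $A:=\left[\begin{smallmatrix} a & b \\ c & d\end{smallmatrix}\right]$. Hence we may assume $f$ has the first form. Then $f$ is supported on $\{x_1=x_2,\ x_3=x_4\}$, and writing $u=x_1=x_2$, $v=x_3=x_4$, its value there is $A(u,v)$; equivalently $f$ is the ``blow-up'' of the binary signature $A$ in which each of the two variables is duplicated via $(=_2)$. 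In this language $(=_4)$ is the blow-up of the identity $I_2$, and since $\left[\begin{smallmatrix} a & b \\ c & d\end{smallmatrix}\right]$ is nonsingular, $A$ is invertible with nonzero eigenvalues $\mu_1,\mu_2$.

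Next, for every $k\ge1$ I would form the planar $\{f\}$-gate $f^{(k)}$ obtained by linking $k$ copies of $f$ in a line (joining the $(x_3,x_4)$-pair of one copy to the $(x_1,x_2)$-pair of the next); by the signature-matrix conventions of Section~\ref{sec:preliminaries}, linking is matrix multiplication, so $f^{(k)}$ is again of the first form, now the blow-up of $A^k$, and $f^{(1)}=f$. If $A^m$ is a nonzero scalar multiple of $I_2$ for some $m\ge1$ --- equivalently $A$ is diagonalizable and $\mu_1/\mu_2$ is a root of unity --- then $f^{(m)}=\rho\,(=_4)$ with $\rho\ne0$, so $(=_4)$ is realizable from $\{f\}$ (nonzero constant factors being immaterial) and $\operatorname{Pl-Holant}(=_4,\mathcal{F})\le_{\rm T}\operatorname{Pl-Holant}(\mathcal{F})$ follows at once. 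Otherwise either $\mu_1\ne\mu_2$ with $\mu_1/\mu_2$ not a root of unity, or $A$ has one eigenvalue $\mu\ne0$ with a nontrivial Jordan block, and I would interpolate. Given an instance $\Omega$ of $\operatorname{Pl-Holant}(=_4,\mathcal{F})$ with $n$ occurrences of $(=_4)$, let $\Omega_k$ ($k=1,\dotsc,n+1$) be the planar $\operatorname{Pl-Holant}(\mathcal{F})$-instance obtained by replacing each $(=_4)$ by $f^{(k)}$. Stratifying the Holant sum of $\Omega_k$ by the pair of merged-bit values $(s_i,t_i)\in\{0,1\}^2$ at the $i$-th former $(=_4)$-vertex gives $\operatorname{Pl-Holant}(\Omega_k)=\sum_{(s_1,t_1),\dotsc,(s_n,t_n)}\big(\prod_{i=1}^n (A^k)_{s_i t_i}\big)\,W\big((s_1,t_1),\dotsc,(s_n,t_n)\big)$, where $W$ collects the contribution of the rest of the instance and is independent of $k$, while $\operatorname{Pl-Holant}(\Omega)=\sum\big(\prod_i \delta_{s_i t_i}\big)W(\cdots)$ is the same sum with $A$ replaced by $I_2$. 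Expanding $A^k$ by its spectral/Jordan decomposition, each $(A^k)_{st}$ is a $\mathbb{C}$-combination of $k^e\mu_j^k$ with $e\in\{0,1\}$, so $\operatorname{Pl-Holant}(\Omega_k)=\sum_{(\nu,E)}r_{\nu,E}\,k^E\nu^k$ with the $(\nu,E)$ ranging over a set of $n+1$ pairs (the $\nu=\mu_1^a\mu_2^{n-a}$ for $0\le a\le n$, pairwise distinct in the first subcase; a single $\nu=\mu^n$ with $0\le E\le n$ in the Jordan subcase). The functions $k\mapsto k^E\nu^k$ being linearly independent, the values $\operatorname{Pl-Holant}(\Omega_1),\dotsc,\operatorname{Pl-Holant}(\Omega_{n+1})$ (obtained from the oracle for $\operatorname{Pl-Holant}(\mathcal{F})$) form a nonsingular generalized-Vandermonde system, which I would solve in polynomial time for all $r_{\nu,E}$. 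Finally $\operatorname{Pl-Holant}(\Omega)$ is a fixed $\mathbb{C}$-linear combination of the $r_{\nu,E}$: in the diagonalizable subcase $I_2$ is the sum of the two eigen-projections of $A$, so $\operatorname{Pl-Holant}(\Omega)=\sum_a r_{\nu_a,0}$; in the Jordan subcase $I_2$ is the $k=0$ value of $\mu^{-k}A^k=I_2+kN$, so $\operatorname{Pl-Holant}(\Omega)=r_{\mu^n,0}$.

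The step I expect to be the main obstacle is the bookkeeping in the interpolation case: checking that the exponential-polynomial terms $k^E\nu^k$ appearing in $\operatorname{Pl-Holant}(\Omega_k)$ are genuinely distinct --- which is exactly why we assume $A^m$ is never scalar --- so that the interpolation system is nonsingular, and then verifying that the target value $\operatorname{Pl-Holant}(\Omega)$, i.e.\ the substitution $A=I_2$, is a legitimate linear functional of the recovered coefficients. The split into ``$A^m$ scalar'' versus ``not'' is unavoidable: in the former case the generalized-Vandermonde system degenerates, but there $(=_4)$ is obtained directly.
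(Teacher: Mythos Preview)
Your proposal is correct and follows essentially the same route as the paper's proof: reduce the diagonal form to the outer-block form by a cyclic rotation, chain copies of $f$ so that the relevant $2\times 2$ block becomes $A^k$, and either obtain $(=_4)$ directly when some $A^m$ is scalar, or interpolate via a (generalized) Vandermonde system built from the Jordan form of $A$, recovering $\operatorname{Pl-Holant}(\Omega)$ as the ``$k=0$'' value of the recovered coefficients. (One harmless slip: with your stated rotation direction the entry $b$ at $0110$ lands at $1100$ and $c$ at $1001$ lands at $0011$, so the resulting outer matrix is $A^{\mathsf T}$ rather than $A$---but only full rank is used, so nothing changes.)
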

\begin{proof}
If \[M_{x_1x_2, x_4x_3}(f)=\begin{bmatrix}
a & 0 & 0 & 0\\
0 & b & 0 & 0\\
0 & 0 & c & 0\\
0 & 0 & 0 & d
\end{bmatrix},\]
then after a rotation (See Fig.~\ref{fig:rotate_asymmetric_signature})
we have the signature
\[M_{x_4x_1,x_3x_2}(f)=\begin{bmatrix}
a & 0 & 0 & b\\
0 & 0 & 0 & 0\\
0 & 0 & 0 & 0\\
c & 0 & 0 & d
\end{bmatrix}.\]
Therefore we only need to  prove the lemma for
 \[M_{x_1x_2, x_4x_3}(f)=\begin{bmatrix}
a & 0 & 0 & b\\
0 & 0 & 0 & 0\\
0 & 0 & 0 & 0\\
c & 0 & 0 & d
\end{bmatrix}.\]

\begin{figure}[ht]
 \centering
 \captionsetup[subfigure]{labelformat=empty}
 \subfloat[$N_1$]{
  \begin{tikzpicture}[scale=\scale,transform shape,node distance=\nodeDist,semithick]
   \node[external] (0)                    {};
   \node[external] (1) [right       of=0] {};
   \node[internal] (2) [below right of=1] {};
   \node[external] (3) [below left  of=2] {};
   \node[external] (4) [left        of=3] {};
   \node[external] (5) [above right of=2] {};
   \node[external] (6) [right       of=5] {};
   \node[external] (7) [below right of=2] {};
   \node[external] (8) [right       of=7] {};
   \path (0) edge[out=   0, in=135, postaction={decorate, decoration={
                                                           markings,
                                                           mark=at position 0.4   with {\arrow[>=diamond, white] {>}; },
                                                           mark=at position 0.4   with {\arrow[>=open diamond]   {>}; },
                                                           mark=at position 0.999 with {\arrow[>=diamond, white] {>}; },
                                                           mark=at position 1.0   with {\arrow[>=open diamond]   {>}; } } }] (2)
         (2) edge[out=-135, in=  0] (4)
             edge[out=  45, in=180] (6)
             edge[out= -45, in=180] (8);
   \begin{pgfonlayer}{background}
     \node[draw=\borderColor,thick,rounded corners,fit = (1) (3) (5) (7),inner sep=0pt,transform shape=false,] {};
   \end{pgfonlayer}
  \end{tikzpicture}
 }
 \qquad
 \subfloat[$N_2$]{
  \begin{tikzpicture}[scale=\scale,transform shape,node distance=\nodeDist,semithick]
   \node[external]  (0)                    {};
   \node[external]  (1) [right       of=0] {};
   \node[internal]  (2) [below right of=1] {};
   \node[external]  (3) [below left  of=2] {};
   \node[external]  (4) [left        of=3] {};
   \node[external]  (5) [right       of=2] {};
   \node[internal]  (6) [right       of=5] {};
   \node[external]  (7) [above right of=6] {};
   \node[external]  (8) [right       of=7] {};
   \node[external]  (9) [below right of=6] {};
   \node[external] (10) [right       of=9] {};
   \path (0) edge[out=   0, in=135, postaction={decorate, decoration={
                                                           markings,
                                                           mark=at position 0.4   with {\arrow[>=diamond, white] {>}; },
                                                           mark=at position 0.4   with {\arrow[>=open diamond]   {>}; },
                                                           mark=at position 0.999 with {\arrow[>=diamond, white] {>}; },
                                                           mark=at position 1.0   with {\arrow[>=open diamond]   {>}; } } }] (2)
         (2) edge[out=-135, in=  0]  (4)
             edge[bend left,        postaction={decorate, decoration={
                                                           markings,
                                                           mark=at position 0.999 with {\arrow[>=diamond, white] {>}; },
                                                           mark=at position 1.0   with {\arrow[>=open diamond]   {>}; } } }]  (6)
             edge[bend right]  (6)
         (6) edge[out=  45, in=180]  (8)
             edge[out= -45, in=180] (10);
   \begin{pgfonlayer}{background}
     \node[draw=\borderColor,thick,rounded corners,fit = (1) (3) (7) (9),inner sep=0pt,transform shape=false,] {};
   \end{pgfonlayer}
  \end{tikzpicture}
 }
 \qquad
 \subfloat[$N_{s}$]{
  \begin{tikzpicture}[scale=\scale,transform shape,node distance=\nodeDist,semithick]
   \node[external]  (0)                     {};
   \node[external]  (1) [above left  of=0]  {};
   \node[external]  (2) [below left  of=0]  {};
   \node[external]  (3) [below left  of=1]  {};
   \node[external]  (4) [below left  of=3]  {};
   \node[external]  (5) [above left  of=3]  {};
   \node[external]  (6) [left        of=4]  {};
   \node[external]  (7) [left        of=5]  {};
   \node[external]  (8) [right       of=0]  {};
   \node[internal]  (9) [right       of=8]  {};
   \node[external] (10) [above right of=9]  {};
   \node[external] (11) [below right of=9]  {};
   \node[external] (12) [right       of=10] {};
   \node[external] (13) [right       of=11] {};
   \path let
          \p1 = (1),
          \p2 = (2)
         in
          node[external] at (\x1, \y1 / 2 + \y2 / 2) {\Huge $N_{s-1}$};
   \path let
          \p1 = (0)
         in
          node[external] (14) at (\x1 + 2, \y1 + 10) {};
   \path let
          \p1 = (0)
         in
          node[external] (15) at (\x1 + 2, \y1 - 10) {};
   \path let
          \p1 = (3)
         in
          node[external] (16) at (\x1 - 2, \y1 + 10) {};
   \path let
          \p1 = (3)
         in
          node[external] (17) at (\x1 - 2, \y1 - 10) {};
   \path (7) edge[out=   0, in=135, postaction={decorate, decoration={
                                                           markings,
                                                           mark=at position 0.43  with {\arrow[>=diamond, white] {>}; },
                                                           mark=at position 0.43  with {\arrow[>=open diamond]   {>}; },
                                                           mark=at position 0.999 with {\arrow[>=diamond, white] {>}; },
                                                           mark=at position 1.0   with {\arrow[>=open diamond]   {>}; } } }] (16)
        (17) edge[out=-135, in=  0]  (6)
        (14) edge[out=  35, in=135, postaction={decorate, decoration={
                                                           markings,
                                                           mark=at position 0.999 with {\arrow[>=diamond, white] {>}; },
                                                           mark=at position 1.0   with {\arrow[>=open diamond]   {>}; } } }]  (9)
         (9) edge[out=-135, in=-35]  (15)
             edge[out=  45, in=180] (12)
             edge[out= -45, in=180] (13);
   \begin{pgfonlayer}{background}
     \node[draw=\borderColor,thick,densely dashed,rounded corners,fit = (0) (1.south) (2.north) (3),inner sep=0pt,transform shape=false,] {};
     \node[draw=\borderColor,thick,rounded corners,fit = (4) (5) (10) (11),inner sep=0pt,transform shape=false,] {};
   \end{pgfonlayer}
  \end{tikzpicture}
 }
 \caption{Linear recursive construction used for interpolation.}
 \label{fig:gadget:arity4:linear_interpolation}
\end{figure}

Consider an instance $\Omega$ of $\operatorname{Pl-Holant}(=_4, \mathcal{F})$.
Suppose $=_4$ appears $n$ times in $\Omega$. We construct from $\Omega$ a sequence of instances $\Omega_s$
of $\operatorname{Pl-Holant}(\mathcal{F})$ indexed by $s\geq 1$.
We obtain $\Omega_s$ from $\Omega$ by replacing each occurrence of $=_4$ with the gadget $N_s$
 in Figure~\ref{fig:gadget:arity4:linear_interpolation} with $f$ assigned to all vertices.
In $\Omega_s$, the edge corresponding to the $i$-th variable
 of $N_s$ connects to the same
location as the edge corresponding to the $i$-th variable of $=_4$ in $\Omega$.
In
Figure~\ref{fig:gadget:arity4:linear_interpolation}, we place a
diamond on the edge corresponding to the first variable. The remaining
variables  are ordered counterclockwise around the vertex.

By the Jordan normal form of $\begin{bmatrix}
a & b\\
c & d
\end{bmatrix}$,
there exists
 $P=\begin{bmatrix}
p_{00} & p_{01}\\
p_{10} & p_{11}
\end{bmatrix}$ such that
\[\begin{bmatrix}
a & b\\
c & d
\end{bmatrix}=P\begin{bmatrix}
\lambda_1 & 0\\
0 & \lambda_2
\end{bmatrix}P^{-1},\]
or, when there is a double root, we can normalize it to 1,
and then
  \[\begin{bmatrix}
a & b\\
c & d
\end{bmatrix}=P\begin{bmatrix}
1 & \lambda\\
0 & 1
\end{bmatrix}P^{-1},\]
where all of $\lambda_1, \lambda_2, \lambda$ are nonzero.
This implies that
\begin{equation}\label{jordan-1}
M_{x_1x_2, x_4x_3}(f)=\begin{bmatrix}
p_{00} & 0 & 0 & p_{01}\\
0 & 1 & 0 & 0\\
0 & 0 & 1 & 0\\
p_{10} & 0 & 0 & p_{11}
\end{bmatrix}
\begin{bmatrix}
\lambda_1 & 0 & 0 & 0\\
0 & 0 & 0 & 0\\
0 & 0 & 0 & 0\\
0 & 0 & 0 & \lambda_2
\end{bmatrix}
\begin{bmatrix}
p_{00} & 0 & 0 & p_{01}\\
0 & 1 & 0 & 0\\
0 & 0 & 1 & 0\\
p_{10} & 0 & 0 & p_{11}
\end{bmatrix}^{-1}\end{equation}
or
\begin{equation}\label{jordan-2}
M_{x_1x_2, x_4x_3}(f)=\begin{bmatrix}
p_{00} & 0 & 0 & p_{01}\\
0 & 1 & 0 & 0\\
0 & 0 & 1 & 0\\
p_{10} & 0 & 0 & p_{11}
\end{bmatrix}
\begin{bmatrix}
1 & 0 & 0 & \lambda\\
0 & 0 & 0 & 0\\
0 & 0 & 0 & 0\\
0 & 0 & 0 & 1
\end{bmatrix}
\begin{bmatrix}
p_{00} & 0 & 0 & p_{01}\\
0 & 1 & 0 & 0\\
0 & 0 & 1 & 0\\
p_{10} & 0 & 0 & p_{11}
\end{bmatrix}^{-1}.\end{equation}
Let $T=\begin{bmatrix}
p_{00} & 0 & 0 & p_{01}\\
0 & 1 & 0 & 0\\
0 & 0 & 1 & 0\\
p_{10} & 0 & 0 & p_{11}
\end{bmatrix}$ and let $f_s$ be  the signature of the gadget $N_s$.

For (\ref{jordan-1}), $M_{x_1x_2, x_4x_3}(f_s)=T\Lambda^s T^{-1}$,
where
\[\Lambda=\begin{bmatrix}
\lambda_1 & 0 & 0 & 0\\
0 & 0 & 0 & 0\\
0 & 0 & 0 & 0\\
0 & 0 & 0 & \lambda_2
\end{bmatrix} ~~~~ \mbox{and}~~~~
\Lambda^s=\begin{bmatrix}
\lambda_1^s & 0 & 0 & 0\\
0 & 0 & 0 & 0\\
0 & 0 & 0 & 0\\
0 & 0 & 0 & \lambda_2^s
\end{bmatrix}.\]
If there exists $d\in\mathbb{Z}^+$ such that $(\frac{\lambda_2}{\lambda_1})^d=1$, then $f_d$ is $=_4$ after a nozero scalar and we are done.
Otherwise, For any $i, j\in\mathbb{Z}$, $(\frac{\lambda_2}{\lambda_1})^i\neq (\frac{\lambda_2}{\lambda_1})^j$ if $i\neq j$.
We can view our construction of $\Omega_s$ as replacing $=_4$  by 3 signatures, with matrices
$T, \Lambda^s, T^{-1}$ respectively. This does not change the Holant value.
The Holant value on  $\Omega$ is also unchanged by  replacing
$=_4$  with $T, (=_4), T^{-1}$ in sequence.
We stratify assignments in $\Omega$ based on assignment values to
the $n$ occurrences of the new $(=_4)$, each
sandwiched between $T$ and $T^{-1}$.
Note that we only need to consider the assignments to $(=_4)$ that assign
\begin{itemize}
\item (0, 0, 0, 0) $i$ many times,
\item (1, 1, 1, 1) $j$ many times
\end{itemize}
such that $i+j=n$,
 since
any other assignment contributes 0 to the Holant sum.
Let $c_{ij}$ be the sum over all
such assignments of the products of evaluations
(including the contributions from $T, T^{-1}$) in $\Omega$.
Then we have
\[\operatorname{Pl-Holant}(\Omega)=\displaystyle\sum_{i+j=n}c_{ij},\]
and
\[\operatorname{Pl-Holant}(\Omega_s)=\displaystyle\sum_{i+j=n}c_{ij}\lambda_1^{i s}\lambda_2^{js}=
\lambda_1^{ns}\displaystyle\sum_{i+j=n}c_{ij}\left(\frac{\lambda_2}{\lambda_1}\right)^{js}.\]
Note that the same set of values  $c_{ij}$
occur in $\operatorname{Pl-Holant}(\Omega_s)$ independent of $s$.
Then we get a Vandermonde  system with unknowns $c_{n-j, j}$.
Since $(\frac{\lambda_2}{\lambda_1})^{j}\neq (\frac{\lambda_2}{\lambda_1})^{j'}$ if $j\neq j'$, this
 coefficient matrix has full rank.
Therefore, we can solve the linear system in polynomial time and obtain the value of Holant$(\Omega)$.
This implies that
\[\operatorname{Pl-Holant}(=_4, \mathcal{F})
\le_{\rm T}
\operatorname{Pl-Holant}(\mathcal{F}).\]

For (\ref{jordan-2}),
\begin{equation}\label{jordan-3}
M_{x_1x_2, x_4x_3}(f_s)=T\Lambda^s T^{-1},
\end{equation}
where
$\Lambda=\begin{bmatrix}
1 & 0 & 0 & \lambda\\
0 & 0 & 0 & 0\\
0 & 0 & 0 & 0\\
0 & 0 & 0 & 1
\end{bmatrix}$ and
$\Lambda^s=\begin{bmatrix}
1 & 0 & 0 & s\lambda\\
0 & 0 & 0 & 0\\
0 & 0 & 0 & 0\\
0 & 0 & 0 & 1
\end{bmatrix}.$
Similarly,
we can view our construction of $\Omega_s$ as replacing $=_4$  by 3 signatures, with matrices
$T, \Lambda^s, T^{-1}$ respectively. This does not change the Holant value.
We also consider replacing each  $=_4$ by $T (=_4) T^{-1}$ in $\Omega$.
We stratify assignments in $\Omega$
according to the new $(=_4)$'s are assigned
\begin{itemize}
\item (0, 0, 0, 0) or (1, 1, 1, 1) $i$ many times,
\item (0, 0, 1, 1) $j$ many times
\end{itemize}
such that $i+j=n$.
Let $c'_{ij}$ be the sum over all such assignments
of the products of evaluations
(including the contributions from $T, T^{-1}$) in $\Omega$.
Then we have
\[\operatorname{Pl-Holant}(\Omega)=c'_{n0},\]
and
\[\operatorname{Pl-Holant}(\Omega_s)=\displaystyle\sum_{i+j=n}c'_{ij}(s\lambda)^j.\]
Again note that the same set of values $c'_{ij}$ occur in
$\operatorname{Pl-Holant}(\Omega_s)$, independent of $s$.
Then we get a Vandermonde  system with unknown $c''_{j}$,
 where $c''_{j}=c'_{n-j,j}\lambda^{j}$ (for $0 \le j \le n$).
The coefficient matrix $(s^j)$ has full rank.
Therefore, we can solve the linear system in polynomial time and obtain the value of Holant$(\Omega)$.
This shows that
\[\operatorname{Pl-Holant}(=_4, \mathcal{F})
\le_{\rm T}
\operatorname{Pl-Holant}(\mathcal{F})\]
 and we finish the proof.
\end{proof}

\subsection{Outline of the Proof}\label{outline-subsection}
We now give an outline of the proof of the main dichotomy,
Theorem~\ref{main-dichotomy-thm},
and also explain some overall vision that guided our
proof.
An important technique is to view our counting problems
in the dual perspectives of planar \#CSP \emph{and} planar
Holant problems, i.e., we make essential use of
the equivalence
 $\PlCSP(\mathcal{F}) \equiv_T
\PlHolant(\widehat{\mathcal{EQ}}, \widehat{\mathcal{F}})$.
Some questions are easier to handle in one framework, while
others are  easier in the other.

We aim to prove Theorem~\ref{main-dichotomy-thm}. Our overall vision
is that
the classification in Theorem~\ref{heng-tyson-dichotomy-pl-csp}
should be valid for general, not
 necessarily symmetric, signatures. Thus we want to show
that either $\mathcal{F} \subseteq \mathscr{A}$,
or  $\mathcal{F} \subseteq \mathscr{P}$, or
 $\mathcal{F} \subseteq \widehat{\mathscr{M}}$,
or else $\PlCSP(\mathcal{F})$ is $\SHARPP$-hard.
In the $\PlHolant(\widehat{\mathcal{EQ}}, \widehat{\mathcal{F}})$
setting, the tractability condition is expressed as
$\widehat{\mathcal{F}} \subseteq \mathscr{A}$, or
 $\widehat{\mathcal{F}} \subseteq \widehat{\mathscr{P}}$, or
 $\widehat{\mathcal{F}} \subseteq \mathscr{M}$.

Note that $\mathscr{A}$ is invariant under the transformation,
i.e., $\widehat{\mathscr{A}} = \mathscr{A}$.
 However, $\widehat{\mathscr{P}}$ is more difficult to reason about
than $\mathscr{P}$,
while $\mathscr{M}$ is easier than $\widehat{\mathscr{M}}$ to handle.
The former suggests that we carry our proof in the $\PlCSP$ framework,
while the latter  suggests the opposite,
 that we do so in the $\PlHolant$ framework instead.

One necessary condition for  $\mathscr{M}$ is
the Parity Condition. If some signature in $\widehat{\mathcal{F}}$
violates the Parity Condition, then we have eliminated
one possibility $\widehat{\mathcal{F}} \subseteq \mathscr{M}$.
In this case if we prove in the $\PlCSP$ framework,
we can avoid discussing the more difficult
class $\widehat{\mathscr{M}}$.
On the other hand, if $\widehat{\mathcal{F}}$ satisfies
the Parity Condition, then we have the lucky situation
(Proposition~\ref{parity-product-affine})
that all signatures in $\mathcal{F} \cap \mathscr{P}$
are already in $\mathscr{A}$. This is equivalent to
$\widehat{\mathcal{F}} \cap \widehat{\mathscr{P}}
 \subseteq \mathscr{A}$, and therefore
$\widehat{\mathcal{F}} \subseteq \widehat{\mathscr{P}}$
already implies
$\widehat{\mathcal{F}} \subseteq \mathscr{A}$,
with the consequence that we do not need to specifically discuss
the tractability  condition
$\widehat{\mathcal{F}} \subseteq \widehat{\mathscr{P}}$.
Thus in this case we can avoid discussing the irksome
class $\widehat{\mathscr{P}}$.

Therefore we break the proof into two main cases according to whether
 $\widehat{\mathcal{F}}$
satisfies
the Parity Condition or not.
If not, we want to show that $\PlCSP(\mathcal{F})$ is $\SHARPP$-hard
unless  $\mathcal{F} \subseteq \mathscr{A}$
or  $\mathcal{F} \subseteq \mathscr{P}$
(Theorem~\ref{main-theorem-for-no-parity}).
If yes,  we want to  prove in the $\PlHolant$ setting
for $\mathscr{A}$ and
$\mathscr{M}$, namely
$\PlHolant(\widehat{\mathcal{EQ}}, \widehat{\mathcal{F}})$ is
 $\SHARPP$-hard
unless $\widehat{\mathcal{F}} \subseteq  \mathscr{A}$
or $\widehat{\mathcal{F}} \subseteq  \mathscr{M}$
(Theorem~\ref{With-arity-4-non-matchgate-signature}).

In the first main case where  $\widehat{\mathcal{F}}$ fails
the Parity Condition,
from any signature in $\widehat{\mathcal{F}}$ violating the Parity Condition,
we can construct the simplest such signature,
namely a unary signature $[1, w]$ with $w\neq 0$,
in the Holant framework
$\operatorname{Pl-Holant}(\widehat{\mathcal{EQ}}, \widehat{\mathcal{F}})$.
Any signature that violates the Parity Condition is a witness that
 $\widehat{\mathcal{F}}\nsubseteq\mathscr{M}$,
or equivalently $\mathcal{F} \nsubseteq \widehat{\mathscr{M}}$.
  If ${\mathcal{F}}\subseteq\mathscr{A}$ or
${\mathcal{F}}\subseteq{\mathscr{P}}$,
then the problem $\PlCSP(\mathcal{F})$ is tractable. Otherwise,
there exist some signatures  $f, g \in \mathcal{F}$ such that
$f \not \in \mathscr{A}$  and $g \not \in \mathscr{P}$.
We would like to construct some \emph{symmetric} signatures
 from these that  are also non-affine and non-product type, respectively,
and then apply Theorem~\ref{heng-tyson-dichotomy-pl-csp}.
For the non-product type we will do so in the
 $\PlCSP(\mathcal{F})$ setting, to avoid having to deal with
$\widehat{\mathscr{P}}$. For the non-affine signatures,
we can do so in either the $\PlCSP$ framework or the $\PlHolant$ framework
as $\mathscr{A}$ is invariant $\widehat{\mathscr{A}} = \mathscr{A}$.

However the main difficulty in this plan is that it is generally
difficult to construct \emph{symmetric} signatures
from \emph{asymmetric} signatures in a \emph{planar} fashion,
especially for arity greater than 3. Therefore, a main engine
of the proof is \emph{arity reduction}. Starting from a
non-product type signature of arity $n > 3$, we construct
in the $\PlCSP$ setting a non-product type signature of arity $n-1$.
Then in an arduous proof (given in the \textcolor{red}{Appendix})
we show how to construct, from any non-product type signature
of arity $3$, \emph{either} a  binary non-product type signature
\emph{or} a \emph{symmetric} and non-product type signature of arity 3.
Lemma~\ref{binary-product-asymmetric-symmetric}
turns a binary non-product type signature into a \emph{symmetric
and non-product type} signature.
These constructions will need suitable unary signatures
which will be constructed, starting with that $[1,w]$
in $\operatorname{Pl-Holant}(\widehat{\mathcal{EQ}}, \widehat{\mathcal{F}})$.
The derivative operator (Definition~\ref{derivative}) will be used
throughout.

 For the construction of non-affine signatures,
we will introduce a \emph{Tableau Calculus}.
Again we will carry out an arity reduction proof, this time all the way
down to arity one.
We prove that with the help of unary signatures
 $[1, 0], [0, 1], [1, x]$ with $x\neq 0$,
we can get a unary non-affine signature from any non-affine
signature of higher arity in the $\PlHolant$ setting
(Lemma~\ref{arity-reduction-affine}).
This proof heavily  depends on the new Tableau Calculus.
  Then we construct $[1, 0], [0, 1], [1, x]$
   by shuttling between
 Pl-\#CSP$(\mathcal{F})$
  and Pl-Holant$(\widehat{\mathcal{EQ}}, \widehat{\mathcal{F}})$.
  There is an exceptional case where
 all signatures in $\mathcal{F}$ are $\{0, 1\}$-valued
 in Pl-\#CSP$(\mathcal{F})$.
  In this case, we cannot construct $[1, 0], [0, 1], [1, x]$ simultaneously.
  We resolve this case separately.
For  $\{0, 1\}$-valued $\mathcal{F}$, we actually also cannot
construct all the unary signatures in the arity reduction
proof for non-product type, if we \emph{only} assume the existence of
some $g \in  \mathcal{F} \setminus \mathscr{P}$.
However if we have both $g \in \mathcal{F} \setminus \mathscr{P}$ and some
$f \in \mathcal{F} \setminus \mathscr{A}$,
we can use $f$ to produce the needed unary signatures
to help the arity reduction on $g$.
All these  use Tableau Calculus.

The second  main case
 is  when all signatures in $\widehat{\mathcal{F}}$ satisfy
the Parity Condition.
In this case,
 if $\widehat{\mathcal{F}}\subseteq\mathscr{A}$, or $\widehat{\mathcal{F}}\subseteq\widehat{\mathscr{P}}$, or $\widehat{\mathcal{F}}\subseteq\mathscr{M}$,
 then the problem is tractable.
These are the exact tractability criteria according to
the dichotomy theorem to be proved,
 Theorem~\ref{main-dichotomy-thm}.
However due to the Parity Condition, there are really
only
 two kinds of containment here, $\widehat{\mathcal{F}}\subseteq\mathscr{A}$
or $\widehat{\mathcal{F}}\subseteq\mathscr{M}$; the containment
$\widehat{\mathcal{F}}\subseteq\widehat{\mathscr{P}}$
is subsumed by $\widehat{\mathcal{F}}\subseteq\mathscr{A}$.
Therefore we want to prove that if
$\widehat{\mathcal{F}}\not \subseteq\mathscr{A}$ and
$\widehat{\mathcal{F}}\not \subseteq\mathscr{M}$,
then
$\operatorname{Pl-Holant}(\widehat{\mathcal{EQ}}, \widehat{\mathcal{F}})$
is $\SHARPP$-hard.

Again a natural idea is to construct  non-affine and
 non-matchgate
\emph{symmetric} signatures from any such asymmetric signatures,
and then we can apply the known dichotomy
Theorem~\ref{heng-tyson-dichotomy-pl-csp}.
 The main difficulty of this approach lies in dealing
with non-matchgate signatures.
 Note that both $\widehat{\mathcal{F}}$
and   $\widehat{\mathcal{EQ}}$ (being a subset of $\mathscr{M}$)
 satisfy the Parity Condition,
and therefore the signature of any construction from
an $(\widehat{\mathcal{EQ}}
\cup \widehat{\mathcal{F}})$-gate must also satisfy
the Parity Condition.
By Lemma~\ref{matchgate-identity-for-arity-4},
any signature of arity at most 3 is a matchgate signature
 iff it satisfies the Parity Condition.
Hence all constructible non-matchgate signatures  have
arity $\geq 4$.
But it is difficult to construct a
 symmetric signature from any asymmetric signature of arity
 $\geq 4$ while preserving planarity.

 So we take an alternative approach.
For a given non-matchgate signature,
we first prove that we can get a non-matchgate signature $f$ of arity 4.
Then we can construct either the crossover function $\mathfrak{X}$ or
 $(=_4)$ from $f$.
If we have the crossover function $\mathfrak{X}$,
we can finish the proof by the non-planar
\#CSP dichotomy Theorem~\ref{non-planar-csp-dichotomy}.
If we have $(=_4)$, then we can construct $(=_{2k})$
for any $k\in\mathbb{Z}^+$ in
Pl-Holant$(\widehat{\mathcal{EQ}}, \widehat{\mathcal{F}})$.
Thus we get all $\mathcal{EQ}_2$.
This implies that (by Eqn.~(\ref{eqn:prelim:PlCSPd_equiv_Holant}))
\begin{equation}\label{cognitive-dissonance}
\operatorname{Pl-Holant}(\widehat{\mathcal{EQ}}, \widehat{\mathcal{F}})
\equiv_{\rm T}
\operatorname{Pl-Holant}({\mathcal{EQ}_2},
\widehat{\mathcal{EQ}}, \widehat{\mathcal{F}})
\equiv_{\rm T}
\operatorname{Pl-\#CSP}^2(\widehat{\mathcal{EQ}}, \widehat{\mathcal{F}}).
\end{equation}

Now comes a ``cognitive dissonance''. By (\ref{cognitive-dissonance}),
what used to be the ``right-hand-side''
in the equivalence
$\operatorname{Pl-\#CSP}(\mathcal{F})
\equiv_{\rm T}
\operatorname{Pl-Holant}(\widehat{\mathcal{EQ}}, \widehat{\mathcal{F}})$
will be treated as a $\operatorname{Pl-\#CSP}^2$ problem
with function set $\widehat{\mathcal{EQ}} \cup \widehat{\mathcal{F}}$.
\begin{align*}
\operatorname{Pl-\#CSP}(\mathcal{F})
\equiv_{\rm T}
\operatorname{Pl-Holant}(\widehat{\mathcal{EQ}}, \widehat{\mathcal{F}})
\equiv_{\rm T} &
\operatorname{Pl-Holant}({\mathcal{EQ}_2},
\widehat{\mathcal{EQ}}, \widehat{\mathcal{F}})\\
 &  ~~~~~~~~ |||_{\rm T}  \\
 &\operatorname{Pl-\#CSP}^2(\widehat{\mathcal{EQ}}, \widehat{\mathcal{F}})
\end{align*}
A $\operatorname{Pl-\#CSP}^2$ problem is more in line with
 a $\operatorname{Pl-\#CSP}$ problem. For  $\operatorname{Pl-\#CSP}^2$
problems over symmetric signatures,
Theorem~\ref{heng-tyson-dichotomy-pl-csp2} says that there are
five tractability classes
$\mathscr{P}, \mathscr{A}, \mathscr{A}^{\dagger},
\widehat{\mathscr{M}}$ and $\widehat{\mathscr{M}^{\dagger}}$.
But now we will apply these on the ``dual side'' $\widehat{\mathcal{EQ}}
\cup \widehat{\mathcal{F}}$,
instead of the ``primal side'' $\mathcal{F}$.
The ``cognitive dissonance'' is that, the transformation from
$(\mathcal{EQ}, \mathcal{F}) \mapsto
(\widehat{\mathcal{EQ}}, \widehat{\mathcal{F}})$ is precisely
for the purpose of
 transforming ${\mathcal{EQ}}$ to be a subset
 $\widehat{\mathcal{EQ}}$ of $\mathscr{M}$,
but now we will subject  $\widehat{\mathcal{EQ}}$ to tractability
tests  including
$\widehat{\mathscr{M}}$ and $\widehat{\mathscr{M}^{\dagger}}$.
But clearly  $\widehat{\mathcal{EQ}}$ contains both $[1, 0] \not \in
\widehat{\mathscr{M}} \cup \widehat{\mathscr{M}^{\dagger}}$,
and $[1,0,1,0] \not \in \mathscr{P} \cup \mathscr{A}^{\dagger}$,
therefore the only remaining possibility for tractability
is $\mathscr{A}$.

Of course if $\widehat{\mathcal{F}} \subseteq \mathscr{A}$,
then $\operatorname{Pl-Holant}(\widehat{\mathcal{EQ}}, \widehat{\mathcal{F}})$
is tractable. Suppose $\widehat{\mathcal{F}} \not \subseteq \mathscr{A}$,
we want to construct a \emph{symmetric} non-affine signature.
We produce such a signature of arity 2 if possible,
by arity reduction. From any $f \in \widehat{\mathcal{F}}
\setminus \mathscr{A}$, which satisfies
the Parity Condition, we can first get a non-affine signature
satisfying the even Parity Condition.
Then every signature constructible from that
 has even parity, as
$\widehat{\mathcal{EQ}}$ also has even parity.
Any non-affine binary signature satisfying the even Parity Condition
is automatically symmetric.
This part of the proof is the content of Section~\ref{sec-csp2}
(Theorem~\ref{dichotomy-csp-2}).

A technical  difficulty is that when
$\widehat{\mathcal{F}}$ satisfies the even
Parity Condition, it is impossible to construct $[0, 1]$.
Instead we find that we
can try to construct $[0, 1]^{\otimes 2}$ and
prove that $[0, 1]^{\otimes 2}$ is almost as good as $[0, 1]$
 with the help of $[1, 0, 1, 0] \in \widehat{\mathcal{EQ}}$.
Then
there are three cases. (1) If it is  not the case that every function in
$\widehat{\mathcal{F}}$ takes values in $\{0, 1, -1\}$ up to
a constant, then we can construct $[0, 1]^{\otimes 2}$ and $[1,0,-1]$
and complete the proof.
(2) If every function in
$\widehat{\mathcal{F}}$ takes values in $\{0, 1, -1\}$ up to
a constant but not  every function in
$\widehat{\mathcal{F}}$ takes values in  $\{0, 1\}$
up to
a constant, then we can construct $[1,0,-1]$,
and complete the proof.
(3) If every function in
$\widehat{\mathcal{F}}$ takes values in $\{0, 1\}$
then we prove it separately.
In all cases, we use our Tableau Calculus.

This completes an outline of the proof guided by an overall
vision that (A) there is a dichotomy,
and (B) the right form of this dichotomy is as stated in
Theorem~\ref{main-dichotomy-thm}.

Of course as a proof strategy,
logically this is a bit self-serving.
Essentially we want the validity of the very
statement we want to prove to provide its own guarantee of
success in every step in its proof.
Given the fact that there are
other tractable classes for Pl-Holant
problems~\cite{Cai-Fu-Guo-W}
 not encompassed in the list given in
Theorem~\ref{main-dichotomy-thm}, the validity of this vision
for $\PlCSP$ problems is at least not obvious.
Luckily, this vision is correct. And therefore,
the self-serving plan becomes a reliable guide
to the proof, a bit self-fulfilling.
Sometimes the statement of a theorem helps its own proof.

\setcounter{MaxMatrixCols}{20}

\section{When $\widehat{\mathcal{F}}$ Does Not Satisfy Parity}
The following Lemma shows that if there is a signature in
$\widehat{\mathcal{F}}$ that does not satisfy the parity condition, then in
$\operatorname{Pl-Holant}(\widehat{\mathcal{EQ}}, \widehat{\mathcal{F}})$,
we can construct a unary signature which does
not satisfy the parity condition.

\begin{lemma}\label{constructing-[1,a]}
If $\widehat{\mathcal{F}}$ contains a signature $f$
that does not satisfy the Parity Condition, then
we can construct a unary signature  $[1, w]$
with $w\neq 0$
in $\operatorname{Pl-Holant}(\widehat{\mathcal{EQ}}, \widehat{\mathcal{F}})$,
such that
\[\operatorname{Pl-Holant}(\widehat{\mathcal{EQ}}, [1, w], \widehat{\mathcal{F}})
\le_{\rm T}
\operatorname{Pl-Holant}(\widehat{\mathcal{EQ}}, \widehat{\mathcal{F}}).\]
\end{lemma}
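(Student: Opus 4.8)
The plan is to keep reducing the arity of $f$ until we reach a unary signature that violates the Parity Condition, working throughout inside $\operatorname{Pl-Holant}(\widehat{\mathcal{EQ}}, \widehat{\mathcal{F}})$ and using only $[1,0]\in\widehat{\mathcal{EQ}}$ for pinning to $0$, the binary $(=_2)=[1,0,1]\in\widehat{\mathcal{EQ}}$ for self-loops, and $f$ itself.

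First, since every matchgate signature satisfies the Parity Condition, $f\notin\mathscr{M}$, so Lemma~\ref{[0,1]-EQ-hat-wight-0-neq-0} (with $\mathscr{C}=\mathscr{M}$) lets us assume, after a planar reduction, that $f_{0\cdots 0}=1$. The construction in that lemma only flips some input variables of $f$, and since flipping a fixed set of variables is an automorphism of the Boolean cube, $\operatorname{supp}(f)$ still meets both parity classes; hence $f$ still fails the Parity Condition. In particular $\operatorname{supp}(f)$ has an element of odd weight, and I would choose one, $\beta$, of \emph{minimum} weight $m$ (so $m$ is odd).

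Next I would form a planar $\{\widehat{\mathcal{EQ}},f\}$-gate by pinning the variables outside $\operatorname{supp}(\beta)$ to $0$ with copies of $[1,0]$ and joining $(m-1)/2$ copies of $(=_2)$ to the remaining variables, pairing them in consecutive counterclockwise pairs and leaving one variable free; nested arcs keep this planar. The resulting unary signature $u=[u_0,u_1]$ has the feature that every configuration contributing to $u_1$ is an \emph{odd}-weight string supported inside $\operatorname{supp}(\beta)$, while every one contributing to $u_0$ is \emph{even}-weight. By minimality of $\beta$ the only surviving term in $u_1$ is the $\beta$-entry, so $u_1=f_\beta\neq 0$, whereas $u_0=f_{0\cdots 0}+(\text{a sum of even-weight entries of }f)=1+(\cdots)$ (when $m=1$ there are no self-loops and $u=[1,f_\beta]$ already). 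If $u_0\neq 0$, then $u$ is a nonzero multiple of $[1,u_1/u_0]$ and we are done with $w=u_1/u_0\neq 0$.

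The only delicate case is $u_0=0$, in which $u$ is a nonzero multiple of $[0,1]$, so $[0,1]$ is now realizable as well. Here I would use the elementary fact that a signature violates the Parity Condition exactly when its support is not an independent set of the Boolean cube (viewed as a graph), i.e.\ contains two strings $\gamma,\delta$ differing in exactly one coordinate $j$. Pinning every variable $i\neq j$ of $f$ to the common value $\gamma_i=\delta_i$ — via $[1,0]$ where this value is $0$ and via $[0,1]$ where it is $1$ — yields a unary signature equal to $[f_\gamma,f_\delta]$ up to swapping its two entries, both of which are nonzero; normalizing gives $[1,w]$ with $w\neq 0$. In every case all gadgets are planar and the realizations compose down from $\{[1,w]\}$ through $\widehat{\mathcal{EQ}}\cup\{f\}\cup\widehat{\mathcal{F}}$ to $\widehat{\mathcal{EQ}}\cup\widehat{\mathcal{F}}$, which yields the asserted Turing reduction. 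I expect the $u_0=0$ situation — cancellation of the weight-$0$ entry after the self-loops — to be the main obstacle; it is overcome by noting that the very same construction has in that case already produced $[0,1]$, after which the cube-edge description of Parity-violation finishes the proof.
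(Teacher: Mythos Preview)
The first half of your argument --- normalizing $f_{00\cdots0}=1$, pinning with $[1,0]$ down to the minimum-odd-weight support point $\beta$, then self-looping with $(=_2)$ to reach a unary $[u_0,u_1]$ with $u_1=f_\beta\neq0$ --- is essentially the paper's proof, and is correct. The gap is in your treatment of the case $u_0=0$.

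Your claimed ``elementary fact'' is false: failing the Parity Condition means the support meets both parity classes of the hypercube, but this does \emph{not} force the support to contain two adjacent vertices. For a concrete counterexample (with $f_{00\cdots0}=1$ already, so your normalization does nothing), take $n=5$ and set $f_{00000}=1$, $f_{01100}=-1$, $f_{11111}=1$, all other entries $0$. Then $\beta=11111$ is the unique odd-weight support point, $m=5$, and pairing $(x_2,x_3),(x_4,x_5)$ gives
\[
u_0 \;=\; f_{00000}+f_{01100}+f_{00011}+f_{01111} \;=\; 1+(-1)+0+0 \;=\; 0,
\qquad u_1=f_{11111}=1.
\]
So you do reach the $u_0=0$ branch and obtain $[0,1]$; but $\operatorname{supp}(f)=\{00000,\,01100,\,11111\}$ has all pairwise Hamming distances at least~$2$, so there are no adjacent $\gamma,\delta$ for you to pin to, and your argument stalls.

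The paper handles this branch differently. Once $[0,1]$ is available, it picks a pair $\beta,\gamma\in\operatorname{supp}(f)$ of opposite parities with \emph{minimum Hamming distance} $d$ (which may well exceed~$1$). Pinning the coordinates where $\beta$ and $\gamma$ agree --- using $[1,0]$ where both are $0$ and $[0,1]$ where both are $1$ --- yields an arity-$d$ signature $g$ whose only nonzero entries are $g_{\beta'}=f_\beta$ and $g_{\gamma'}=f_\gamma$ at antipodal points of opposite parities; minimality of $d$ is exactly what forces all other entries of $g$ to vanish. Finally, connecting all $d$ inputs of $g$ to $d$ inputs of the arity-$(d{+}1)$ signature $[1,0,1,0,\ldots]\in\widehat{\mathcal{EQ}}$ returns the unary $[f_\beta,f_\gamma]$ (up to swapping the two entries), with both entries nonzero. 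Your adjacency shortcut is precisely the special case $d=1$ of this argument.
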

\begin{proof}
Let $f$ have arity $n \ge 1$.
Since $f$ does not satisfy the Parity Condition,
there exists $\alpha\in\{0, 1\}^{n}$ such that $f_{\alpha}\neq 0$,
${\rm wt}(\alpha)$ is odd, and
wt$(\alpha)=\displaystyle\min_{\eta\in\{0, 1\}^{n}}\{{\rm wt}(\eta)
\mid f_{\eta}\neq 0, ~{\rm wt}(\eta)$ is odd$\}$.
Since $[1,0] \in \widehat{\mathcal{EQ}}$,
we can construct the signature $\partial_{[1, 0]}^{S}(f)=(f_{00\cdots0}, \ldots, f_{\alpha})$, where $S=\{k \mid \mbox{the $k$-th bit of $\alpha$ is 0}\}$.
Note that the signature $\partial_{[1, 0]}^{S}(f)$
has an odd arity ${\rm wt}(\alpha) = n - |S|$, where every entry having
an odd weight index is 0 except for $f_{\alpha}$, by the minimality
of ${\rm wt}(\alpha)$ among all nonzero entries of $f$ of odd weight.
So we can construct
$\partial_{=_2}^{\left(\frac{{\rm wt}(\alpha)-1}{2}\right)}(\partial_{[1, 0]}^{S}(f))
=[a, f_{\alpha}]$, by connecting all variables $x_2, \ldots,
x_{{\rm wt}(\alpha)}$ of
$\partial_{[1, 0]}^{S}(f)$ in adjacent pairs in a planar way,
 where $a$ is determined by entries of $\partial_{[1, 0]}^{S}(f)$
with even index. Note that the entry $f_{\alpha}$ remains unchanged,
since in forming
 $\partial_{=_2}^{\left(\frac{{\rm wt}(\alpha)-1}{2}\right)}(\partial_{[1, 0]}^{S}(f))$,
only entries of $f$ with lower odd indices, which are all zero,
are combined with $f_{\alpha}$.
If $a\neq 0$, then we already  have
the desired $[1, w]$ by normalization, where $w=\frac{f_{\alpha}}{a}$.

If $a=0$, then we have $[0, 1]$ up to the nonzero scalar $f_{\alpha}$.
Since $f$ does not satisfy the parity condition, there exist $\beta$
and $\gamma\in\{0, 1\}^n$, satisfying the following:
$f_{\beta}\neq 0, f_{\gamma}\neq 0$, wt$(\beta)$ and wt$(\gamma)$ have
opposite parity and
\[d={\rm wt}(\beta\oplus\gamma)=\displaystyle\min_{\zeta, \eta\in\{0, 1\}^{n}}\{{\rm wt}(\zeta\oplus\eta) ~\mid~
\mbox{wt$(\zeta)$ and wt$(\eta)$
have  opposite parity,}
 ~ f_{\zeta}\neq 0,~ f_{\eta}\neq 0\}.\]
Then we have $g=\partial_{[1, 0]}^{S_0}[\partial_{[0, 1]}^{S_1}(f)]$, where
for $b \in \{0, 1\}$,
\[S_b=\{k~\mid~ \mbox{the $k$-th bits of both $\beta$
{\rm and} $\gamma$ are $b$}\}.\]
Note that the arity of $g$ is $d$.

By deleting all bits in  $S_0\cup S_1$ from $\beta$ and $\gamma$,
we get two bit strings $\beta', \gamma' \in \{0, 1\}^d$ respectively.
We
have  $g_{\beta'}=f_{\beta}$ and $g_{\gamma'}=f_{\gamma}$ and all
other entries of $g$
are 0.
Note that $\beta'$ and $\gamma'$
 have opposite parity since $\beta$ and $\gamma$ have opposite parity.
Without loss of generality, assume that wt$(\beta')$ is odd and wt$(\gamma')$
is even.
Then $\partial_{g}([1, 0, 1, \ldots, 0, 1])=[g_{\beta'}, g_{\gamma'}]
=[f_{\beta}, f_{\gamma}]$
and we are done,
where $[1, 0, 1, \ldots, 0, 1]=\frac{1}{2}\{[1, 1]^{\otimes d+1}+[1, -1]^{\otimes d+1}\} \in  \widehat{\mathcal{EQ}}$.
\end{proof}

The next lemma is a simple fact from linear algebra~\cite{Harris-ag}.
It will be used in the proof Lemma~\ref{binary-product-asymmetric-symmetric}.
\begin{lemma}\label{tensor-rank}
Let $\bf{a, b, c, d} \in \mathbb{C}^2$
 and suppose $\bf{c, d}$ are linearly independent.
Suppose for some $n\geq 3$
we have
 ${\bf a}^{\otimes n}+{\bf b}^{\otimes n}=
{\bf c}^{\otimes n}+{\bf d}^{\otimes n}$.
Then ${\bf a}=\phi{\bf c}$,
${\bf b}=\psi{\bf d}$, or
${\bf a}=\phi{\bf d}$,
${\bf b}=\psi{\bf c}$
for some $\phi^n=\psi^n=1$.
\end{lemma}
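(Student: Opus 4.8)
The plan is to change coordinates so that $\mathbf{c}$ and $\mathbf{d}$ become the standard basis vectors, and then read off the conclusion from the coordinates of the tensor identity. Since $\mathbf{c},\mathbf{d}$ are linearly independent, choose $T\in\mathbf{GL}_2(\mathbb{C})$ with $T\mathbf{c}=e_0:=(1,0)$ and $T\mathbf{d}=e_1:=(0,1)$. Applying the invertible operator $T^{\otimes n}$ to both sides of the hypothesis and using $T^{\otimes n}(\mathbf{v}^{\otimes n})=(T\mathbf{v})^{\otimes n}$, the identity becomes $(T\mathbf{a})^{\otimes n}+(T\mathbf{b})^{\otimes n}=e_0^{\otimes n}+e_1^{\otimes n}$; moreover, because $T$ is invertible, the desired conclusion for $(\mathbf{a},\mathbf{b},\mathbf{c},\mathbf{d})$ holds iff it holds for $(T\mathbf{a},T\mathbf{b},e_0,e_1)$. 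So I would assume $\mathbf{c}=e_0$, $\mathbf{d}=e_1$. Writing $\mathbf{a}=(a_0,a_1)$, $\mathbf{b}=(b_0,b_1)$ and comparing, for each $0\le k\le n$, the (common) coordinate of Hamming weight $k$ on the two sides, one gets the system $a_0^n+b_0^n=1$, $a_1^n+b_1^n=1$, and $a_0^{n-k}a_1^k+b_0^{n-k}b_1^k=0$ for all $1\le k\le n-1$.

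Next I would argue by cases on how many of $a_0,a_1,b_0,b_1$ vanish. If all four are nonzero, then since $n\ge 3$ there is an index $k$ with $1\le k\le n-2$; dividing the weight-$(k{+}1)$ equation by the weight-$k$ equation (both terms of each being then nonzero) gives $a_1/a_0=b_1/b_0$, so $\mathbf{b}=\mu\mathbf{a}$ with $\mu\ne 0$; substituting back into the weight-$k$ equation forces $\mu^n=-1$, and then the weight-$0$ equation reads $a_0^n(1+\mu^n)=0$, contradicting $a_0^n+b_0^n=1$. Hence at least one coordinate is zero. If $a_1=0$, the weight-$1$ and weight-$(n{-}1)$ equations give $b_0^{n-1}b_1=0$ and $b_0b_1^{n-1}=0$, so $b_0=0$ or $b_1=0$; the option $b_1=0$ would make $\mathbf{a},\mathbf{b}$ both multiples of $e_0$ and turn the weight-$n$ equation into $0=1$, so in fact $b_0=0$, and then the weight-$0$ and weight-$n$ equations give $a_0^n=1$ and $b_1^n=1$, i.e.\ $\mathbf{a}=\phi\mathbf{c}$, $\mathbf{b}=\psi\mathbf{d}$ with $\phi^n=\psi^n=1$. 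The remaining cases $b_1=0$, $a_0=0$, $b_0=0$ run identically after swapping $\mathbf{a}\leftrightarrow\mathbf{b}$ and/or $e_0\leftrightarrow e_1$, each yielding one of the two asserted outcomes (in particular these cases also rule out $\mathbf{a}=0$ or $\mathbf{b}=0$). Since the cases are exhaustive, the lemma follows.

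The only place where care is needed — and the only place the hypothesis $n\ge 3$ enters — is the step deducing $\mathbf{a}\parallel\mathbf{b}$, which uses two \emph{consecutive} middle-weight equations. For $n=2$ the statement genuinely fails: e.g.\ $e_0^{\otimes 2}+e_1^{\otimes 2}=\tfrac12([1,1]^{\otimes 2}+[1,-1]^{\otimes 2})$, with neither $[1,1]$ nor $[1,-1]$ proportional to $e_0$ or $e_1$. Beyond that single point the argument is straightforward bookkeeping; no polynomial interpolation or apolarity machinery should be needed.
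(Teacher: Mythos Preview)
Your proof is correct. The paper does not actually supply a proof of this lemma; it states it as ``a simple fact from linear algebra'' and cites Harris's \emph{Algebraic Geometry} without further argument. Your coordinate change to $\mathbf{c}=e_0$, $\mathbf{d}=e_1$ followed by reading off the Hamming-weight coefficients is a clean, self-contained argument that fills this gap. The key step---using two consecutive middle-weight equations to force $\mathbf{a}\parallel\mathbf{b}$ in the all-nonzero case---is exactly where $n\ge 3$ is needed, and your $n=2$ counterexample confirms the hypothesis is sharp. One minor tidy-up: in the $a_1=0$ subcase the weight-$1$ equation alone already yields $b_0=0$ or $b_1=0$, so invoking weight-$(n{-}1)$ as well is harmless but redundant.
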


\subsection{Arity Reduction for Non-Product-Type Signatures}

Our plan is to
use the dichotomy theorems for symmetric signatures.
For that we have to construct symmetric signatures from asymmetric signatures.
For example, starting from a signature not in $\mathscr{P}$,
we want to construct a symmetric signature not in $\mathscr{P}$.
It is generally difficult to construct symmetric signatures from
asymmetric signatures in a planar construction, especially when the arity
is high.
So one of our main techniques  is arity reduction. We want to reduce
the arity of a  signature while keeping it outside $\mathscr{P}$.
Every unary signature is in $\mathscr{P}$. So the lowest arity
outside $\mathscr{P}$ is two.
If we obtain a binary signature $f=(f_{00}, ~f_{01}, ~f_{10}, ~f_{11})
= (a, ~b, ~c, ~d) \not \in \mathscr{P}$,
we can take 3 copies of $f$ and connect the first input of each $f$
to an edge of $(=_3)$ and leave the second input of the 3 copies of $f$
 as 3 dangling edges. This planar gadget has the symmetric
 signature  $[a, b]^{\otimes 3}+[c, d]^{\otimes 3}$.
The following lemma says that this ternary symmetric signature
does not  belong to $\mathscr{P}$.  Our main construction for a symmetric
signature not in $\mathscr{P}$ will be an induction on arity $n$ with
a base case at $n=3$. The reason we start at $n=3$ is because
certain steps for $n \ge 3$ will not work for $n=2$. However
 Lemma~\ref{binary-product-asymmetric-symmetric} implies that if we have
a  binary signature that is not in $\mathscr{P}$, then we
can construct in $\PlCSP(f)$
 a symmetric signature that is not in $\mathscr{P}$.

\begin{lemma}\label{binary-product-asymmetric-symmetric}
For any binary signature  $f=(a, ~b, ~c, ~d)$,
$f\in\mathscr{P}$ iff
$g=[a, b]^{\otimes 3}+[c, d]^{\otimes 3}\in\mathscr{P}$.
\end{lemma}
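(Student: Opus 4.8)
The strategy is to reduce everything to the known classification of symmetric signatures in $\mathscr{P}$. By Proposition~\ref{symmetric:prod:type:signatures}, a symmetric ternary signature lies in $\mathscr{P}$ if and only if it is either degenerate or a \textsc{Generalized Equality} $[x,0,0,y]$ (the binary \textsc{Disequality} case of that proposition is impossible at arity $3$). On the binary side I would first record the elementary description: writing the signature matrix of $f=(a,b,c,d)$ as $\left[\begin{smallmatrix}a&b\\ c&d\end{smallmatrix}\right]$, we have $f\in\mathscr{P}$ iff $ad=bc$ (so $f$ is degenerate), or $b=c=0$ (support in the antipodal pair $\{00,11\}$, so $f\in\mathcal{E}$), or $a=d=0$ (support in $\{01,10\}$, so $f\in\mathcal{E}$); equivalently $f\notin\mathscr{P}$ iff $ad\neq bc$ and $(b,c)\neq(0,0)$ and $(a,d)\neq(0,0)$. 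Note also that $g=[a,b]^{\otimes 3}+[c,d]^{\otimes 3}$ is symmetric of arity $3$.

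For the direction $f\in\mathscr{P}\Rightarrow g\in\mathscr{P}$ I would split into the three cases above. If $ad=bc$ then the vectors $[a,b]$ and $[c,d]$ are proportional (allowing either to be $\mathbf{0}$), so $g$ is a scalar multiple of a single cube, hence degenerate (or zero), so $g\in\mathscr{P}$. If $b=c=0$ then $g=[a,0]^{\otimes 3}+[0,d]^{\otimes 3}=[a^3,0,0,d^3]$, a \textsc{Generalized Equality}; if $a=d=0$ then $g=[0,b]^{\otimes 3}+[c,0]^{\otimes 3}=[c^3,0,0,b^3]$, again a \textsc{Generalized Equality}. In all cases $g\in\mathscr{P}$.

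The heart of the argument is the converse, which I would prove by contraposition. Assume $f\notin\mathscr{P}$, so $ad\neq bc$ (hence $[a,b],[c,d]$ are linearly independent, and both nonzero), $(b,c)\neq(0,0)$, and $(a,d)\neq(0,0)$; the aim is $g\notin\mathscr{P}$. Since $g$ is symmetric of arity $3$, by Proposition~\ref{symmetric:prod:type:signatures} it suffices to rule out that $g$ is degenerate and to rule out that $g$ is a \textsc{Generalized Equality}. For the first, note that the $2\times 4$ flattening $M_{x_1,x_2x_3}(g)$ factors as $\left[\begin{smallmatrix}a&c\\ b&d\end{smallmatrix}\right]$ times the $2\times 4$ matrix whose two rows are $[a,b]^{\otimes 2}$ and $[c,d]^{\otimes 2}$; the first factor is invertible since $ad\neq bc$, and the second has rank $2$ because $[a,b]^{\otimes 2},[c,d]^{\otimes 2}$ are linearly independent (a dependence would force $[a,b]\parallel[c,d]$), so $M_{x_1,x_2x_3}(g)$ has rank $2$. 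In particular $g\neq 0$ and $g$ is not degenerate (a degenerate ternary signature has flattening of rank $\le 1$); one could alternatively invoke Lemma~\ref{tensor-rank} to see $g$ has symmetric tensor rank $2$. For the second, if $g=[x,0,0,y]$ then its common weight-$1$ and weight-$2$ entries must vanish, i.e.\ $a^2b+c^2d=0$ and $ab^2+cd^2=0$; multiplying the first by $b$, the second by $a$, and subtracting gives $cd(cb-ad)=0$, and since $cb\neq ad$ we get $c=0$ or $d=0$. If $c=0$ these become $a^2b=ab^2=0$, so $a=0$ or $b=0$: the former yields $ad=bc=0$, the latter yields $(b,c)=(0,0)$, both contradicting $f\notin\mathscr{P}$; the case $d=0$ is symmetric, yielding $ad=bc=0$ or $(a,d)=(0,0)$. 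So $g$ is not a \textsc{Generalized Equality}, and therefore $g\notin\mathscr{P}$, which completes the contrapositive and hence the lemma.

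The only delicate point — the ``hard part'', modest as it is here — is the bookkeeping of the degenerate and vanishing boundary cases (one of $[a,b],[c,d]$ being $\mathbf{0}$, or proportionality holding only up to a root of unity) and making sure the rank-$2$ computation for the flattening is invoked only in the regime $ad\neq bc$ where it is justified; once that regime is isolated the rest is a short computation.
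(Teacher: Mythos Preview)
Your proof is correct. The forward direction matches the paper's argument essentially verbatim. For the converse, however, you take a genuinely different route from the paper.

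The paper proves the converse directly (assuming $g\in\mathscr{P}$ and deducing $f\in\mathscr{P}$) and relies on Lemma~\ref{tensor-rank} (the uniqueness-of-rank-$2$-decomposition lemma) as its main tool in both subcases: when $g$ is degenerate it writes $g=[e,f]^{\otimes 3}+[0,0]^{\otimes 3}$ and applies Lemma~\ref{tensor-rank} to force $[a,b]$ or $[c,d]$ to vanish (a contradiction) unless they are dependent; when $g=[x,0]^{\otimes 3}+[0,y]^{\otimes 3}$ it applies Lemma~\ref{tensor-rank} again to conclude $[a,b],[c,d]$ are scalar multiples of $[x,0],[0,y]$ in some order, giving $b=c=0$ or $a=d=0$. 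You instead argue by contraposition and avoid Lemma~\ref{tensor-rank} entirely: the degenerate case is dispatched by the rank-$2$ flattening factorization $M_{x_1,x_2x_3}(g)=\left[\begin{smallmatrix}a&c\\ b&d\end{smallmatrix}\right]\left[\begin{smallmatrix}[a,b]^{\otimes 2}\\ [c,d]^{\otimes 2}\end{smallmatrix}\right]$, and the \textsc{Generalized Equality} case by directly solving $a^2b+c^2d=ab^2+cd^2=0$. Your approach is more elementary and self-contained (no appeal to the tensor-rank lemma), while the paper's approach is more uniform and explains \emph{why} the exceptional forms $b=c=0$ and $a=d=0$ arise---they are exactly the two orderings coming from the essentially unique rank-$2$ decomposition of $[x,0,0,y]$.
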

\begin{proof}
If $f\in\mathscr{P}$, then either
 $f$ is degenerate or $a=d=0$ or $b=c=0$ by definition.
If $f$ is degenerate, then $[a, b], [c, d]$ are linearly dependent. Then $g$ is degenerate and $g\in\mathscr{P}$.
If $a=d=0$ or $b=c=0$, then $g$ is a \textsc{Generalized Equality} and $g\in\mathscr{P}$.

Conversely, if $g\in\mathscr{P}$, then either
 $g$ is degenerate or $g$ is a \textsc{Generalized Equality}
since $g$ is symmetric.

If $g$ is degenerate, then there exists a vector $[e, f]$ such that
\[g=[a, b]^{\otimes 3}+[c, d]^{\otimes 3}=[e, f]^{\otimes 3}.\]
To use Lemma~\ref{tensor-rank}, we rewrite it as
\[g=[a, b]^{\otimes 3}+[c, d]^{\otimes 3}=[e, f]^{\otimes 3}+[0, 0]^{\otimes 3}.\]
If $[a, b], [c, d]$ are linearly independent, then
$[a, b]=[0, 0]$ or $[c, d]=[0, 0]$. This contradicts that $[a, b], [c, d]$ are linearly independent.
Thus $[a, b], [c, d]$ are linearly dependent.
This implies that
 $f$ is degenerate. Thus $f\in\mathscr{P}$.

If $g$ is a \textsc{Generalized Equality}, then there exists $x, y$
such that
\[g=[a, b]^{\otimes 3}+[c, d]^{\otimes 3}=[x, 0]^{\otimes 3}+[0, y]^{\otimes 3}.\]
If $[a, b], [c, d]$ are linearly dependent, then $f$ is degenerate. Thus $f\in\mathscr{P}$.
If $[a, b], [c, d]$ are linearly independent,
By Lemma~\ref{tensor-rank}, there exists $\omega_1, \omega_2$
such that $[a, b]=\omega_1[x, 0], [c, d]=\omega_2[0, y]$ or $[a, b]=\omega_1[0, y], [c, d]=\omega_2[x, 0]$.
This implies that $b=c=0$ or  $a=d=0$.
Thus $f\in\mathscr{P}$.
\end{proof}

We will use the next lemma in the proof of Theorem~\ref{arity-reduction-product}.
\begin{lemma}\label{arity-3-not-product}
Suppose $f$ is a signature of arity $n\geq 3$, and $[1,a]$ and $[1,b]$
are two unary signatures. Let $P = \partial_{[1, a]}^{\{1\}}(f)$
and $Q  =\partial_{[1, b]}^{\{1\}}(f)$.
Suppose
 both  $P \in\mathscr{P}$ and $Q\in\mathscr{P}$  and both are not identically zero.
If $P$ and $Q$
do not have compatible type, then $f\notin \mathscr{P}$.
\end{lemma}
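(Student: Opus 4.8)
The plan is to prove the contrapositive: assuming $f \in \mathscr{P}$ together with the hypothesis that $P$ and $Q$ are not identically zero, I will show that $P$ and $Q$ have compatible type. (For $n=2$ both $P$ and $Q$ are unary, and any two unary signatures have compatible type, so the statement is vacuous in that range; the content is at $n\ge 3$, but the argument below works uniformly.)

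First I would fix a primitive decomposition $f(X)=\prod_{j=1}^{k}f_j(X|_{I_j})$ with each $f_j\in\mathcal{E}$ (unary, or nondegenerate with $|\operatorname{supp}(f_j)|=2$), and assume without loss of generality that $1\in I_1$. The key observation is that attaching a unary signature to the edge $x_1$ affects only the factor $f_1$: writing $\partial_{[1,a]}^{\{1\}}f_1$ for the signature on $I_1\setminus\{1\}$ obtained from $f_1$, we have $P=(\partial_{[1,a]}^{\{1\}}f_1)\otimes f_2\otimes\cdots\otimes f_k$ and $Q=(\partial_{[1,b]}^{\{1\}}f_1)\otimes f_2\otimes\cdots\otimes f_k$. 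Thus $P$ and $Q$ already agree on the factors supported on $I_2,\ldots,I_k$, and the whole comparison reduces to understanding $\partial_{[1,a]}^{\{1\}}f_1$ versus $\partial_{[1,b]}^{\{1\}}f_1$. If $|I_1|=1$, then $f_1$ is unary and $\partial_{[1,a]}^{\{1\}}f_1$ is a scalar, necessarily nonzero since $P\not\equiv 0$ (likewise for $b$); hence $P$ and $Q$ are nonzero scalar multiples of the same signature $\prod_{j\ge 2}f_j$ and trivially have compatible type. If $|I_1|\ge 2$, then $f_1$ is nondegenerate with $\operatorname{supp}(f_1)=\{\gamma,\bar\gamma\}$; removing the $x_1$-coordinate of $\gamma$ to get $\gamma'\in\{0,1\}^{|I_1|-1}$, a direct computation shows $\partial_{[1,a]}^{\{1\}}f_1\in\mathcal{E}$ with support contained in $\{\gamma',\bar{\gamma'}\}$, always nonzero at one of $\gamma',\bar{\gamma'}$ and nonzero at the other precisely when $a\ne 0$. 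So when $a\ne 0$ (or when $|I_1|=2$), the part $I_1\setminus\{1\}$ survives in the primitive decomposition of $P$ as a single primitive factor with support $\{\gamma',\bar{\gamma'}\}$; when $a=0$ and $|I_1|\ge 3$, that factor is degenerate and $I_1\setminus\{1\}$ refines into its singletons, on which the product has the singleton support $\{\gamma'\}$ (or $\{\bar{\gamma'}\}$). The identical dichotomy holds for $Q$ with $b$.

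Finally I would verify the two requirements of compatible type directly from these descriptions. The partitions of $P$ and $Q$ both consist of $I_2,\ldots,I_k$ together with either $\{I_1\setminus\{1\}\}$ or the singletons inside $I_1\setminus\{1\}$, and any two such partitions are compatible, since those pieces overlap only inside $I_1\setminus\{1\}$ in the allowed singleton-versus-block pattern and the $I_j$ ($j\ge 2$) are disjoint from everything else. For the support condition: each $I_j$ ($j\ge 2$) with $|I_j|\ge 2$ carries the identical factor $f_j$ in $P$ and $Q$; for the part(s) arising from $I_1\setminus\{1\}$, if it survives as a single block in both then the two factors share the support $\{\gamma',\bar{\gamma'}\}$, if it survives in one and refines in the other then the refined singletons have product support a singleton subset of $\{\gamma',\bar{\gamma'}\}$ --- exactly the alternative allowed in Definition~\ref{def:same-type} --- and if it refines in both then $P=Q$. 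In every case $P$ and $Q$ have compatible type, which is the contrapositive of the claim.

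I expect the main obstacle to be purely bookkeeping: the case $a=0$ (or $b=0$), where a nondegenerate factor $f_1$ collapses to a product of unary signatures and a block of the partition fragments into singletons, and one must check that this fragmentation still matches the ``or'' clause in Definition~\ref{def:same-type} --- both the partition-compatibility requirement and the singleton-support requirement. Everything else is a routine unwinding of the uniqueness of primitive decompositions.
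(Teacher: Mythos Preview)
Your proof is correct and follows essentially the same approach as the paper: both argue by contradiction/contrapositive, take a primitive decomposition of $f$, observe that pinning $x_1$ affects only the factor $f_1$ containing it, and then case on $|I_1|=1$ versus $|I_1|\ge 2$ and on whether $a$ (or $b$) vanishes. The paper uses the non-compatibility hypothesis to dismiss $a=b$ up front, whereas you directly cover all four $(a=0?\,,\,b=0?)$ combinations; otherwise the arguments are the same.
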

\begin{proof}
For a contradiction suppose
$f\in \mathscr{P}$. Then there exists a primitive decomposition of $f
= \displaystyle\prod_{i=1}^k F_i(X|_{I_i})$
with partition
 $\mathcal{I}=\{I_1, I_2, \ldots, I_k\}$ of $[n]$
and signatures $F_1, F_2, \ldots, F_k$.
Without loss of generality, suppose  $1\in I_1$.
If $|I_1|=1$, then $\partial_{[1, a]}^{\{1\}}(F_1)$ is a nonzero
constant; it is nonzero because $P
= \partial_{[1, a]}^{\{1\}}(F_1) \cdot \displaystyle\prod_{i=2}^k F_i(X|_{I_i})$
is not identically zero. Similarly
$Q = \partial_{[1, b]}^{\{1\}}(F_1) \cdot \displaystyle\prod_{i=2}^k F_i(X|_{I_i})$
is also a nonzero constant multiplied by the same
decomposition. Hence  $P$ and $Q$ have the same primitive
decomposition.
Thus they have compatible type. This is a contradiction.

So we may assume $|I_1|\geq 2$. Let
$F'_1 = \partial_{[1, a]}^{\{1\}}F_1$ and $F''_1 = \partial_{[1, b]}^{\{1\}}F_1$.
Being a factor in a primitive decomposition of arity at least 2,
$F_1$ is non-degenerate, and supp$(F_1)$
consists of  two antipodal points
 $\{0\alpha, 1\bar{\alpha}\}$, for some $\alpha\in\{0, 1\}^{|I_1|-1}$.
If $ab\neq 0$, then
both $F'_1$ and $F''_1$ still have support consisting of
two antipodal points
 $\{\alpha, \bar{\alpha}\}$, e.g.,
$F'_1(\bar{\alpha}) = 1 F_1(0\bar{\alpha}) + a F_1(1\bar{\alpha}) =
a F_1(1\bar{\alpha})  \not = 0$.
In particular both $F'_1\in \mathcal{E}$ and $F''_1\in \mathcal{E}$.
Thus
  $F'_1 \displaystyle\prod_{i=2}^k F_i(X|_{I_i})$
and $F''_1 \displaystyle\prod_{i=2}^k F_i(X|_{I_i})$
are the primitive decompositions of $P$ and $Q$ respectively.
Thus $P$ and $Q$ have compatible type. This is a contradiction.

Now suppose $ab=0$.
Since  $P$ and $Q$
do not have compatible type, certainly $a \not =b$.
Without loss of generality, we may assume that $a=0$ and $b \neq 0$.
In this case,
$F'_1$ is further decomposed as a product of unary signatures,
$F'_1 = \displaystyle\prod_{j=1}^{|I_1|-1}G_j$, where
each $G_j$ is a nonzero scalar
multiple
of $[1,0]$ or $[0,1]$.
The primitive decomposition of $P$ is $\displaystyle\prod_{j=1}^{|I_1|-1}G_j
\displaystyle\prod_{i=2}^k F_i(X|_{I_i})$.
The support of $\displaystyle\prod_{j=1}^{|I_1|-1}G_j$ is a
singleton point $\{\alpha\}$, a proper subset of the support of $F''_1$,
which is $\{\alpha, \bar{\alpha}\}$.
Thus $P$ and $Q$ still have compatible type. This is a contradiction.
\end{proof}

\begin{theorem}\label{arity-reduction-product}
Suppose $\mathcal{F}$  contains a signature $f \not \in \mathscr{P}$
of arity $n\geq 3$. Let $[1, a], [1,b], [1,c]$ be three unary
signatures that are pairwise linearly independent.
Then  there exists a symmetric signature $g  \not \in \mathscr{P}$ such that
\begin{equation}\label{eqn-in-thm3.5}
\operatorname{Pl-\#CSP}(g, [1, a], [1, b], [1, c], \mathcal{F})
\le_{\rm T}
\operatorname{Pl-\#CSP}([1, a], [1, b], [1, c], \mathcal{F}).
\end{equation}
\end{theorem}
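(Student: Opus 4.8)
The proof is by induction on the arity $n$, reducing $n$ by one at each step down to the base case $n=3$; note that the three unaries $[1,a],[1,b],[1,c]$, and hence every signature built from them together with $\mathcal{F}$ and the equalities in $\mathcal{EQ}$, remain available throughout, so that constructibility-based reductions compose with the one being proved. \emph{Inductive step, $n\ge 4$.} For each index $i\in[n]$ and each $t\in\{a,b,c\}$ form the pinned signature $\partial_{[1,t]}^{\{i\}}(f)$; it has arity $n-1\ge 3$ and is constructible in $\PlCSP([1,a],[1,b],[1,c],\mathcal{F})$. If one of these is not identically zero and not in $\mathscr{P}$, apply the induction hypothesis to it and compose reductions; we are done. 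Otherwise every nonzero pinning $\partial_{[1,t]}^{\{i\}}(f)$ lies in $\mathscr{P}$, and the claim is that this already forces $f\in\mathscr{P}$, contradicting the hypothesis. Fixing $x_1$ and writing $f=(1-x_1)g_0+x_1 g_1$, so that $\partial_{[1,t]}^{\{1\}}(f)=g_0+t g_1$, at most one of $g_0+a g_1,\ g_0+b g_1,\ g_0+c g_1$ can be identically zero (otherwise $g_0\equiv g_1\equiv 0$ and $f\equiv 0$), so we may pick two nonzero members $P,Q\in\mathscr{P}$. A structural analysis of their primitive decompositions, in the spirit of the proof of Lemma~\ref{arity-3-not-product} (pinning a nondegenerate antipodal factor of a product-type signature by a nonzero unary leaves an antipodal factor of the same support, while pinning a unary factor merely removes it) together with Lemma~\ref{same-type}, then either exhibits after all a lower-arity signature outside $\mathscr{P}$ to which the induction hypothesis applies, or realizes the primitive decompositions of $P$ and $Q$ over a common partition of $[n]\setminus\{1\}$ with aligned antipodal supports, from which one reassembles a product decomposition of $f$ on $[n]$ — the contradiction.

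\emph{Base case, $n=3$.} From $f\notin\mathscr{P}$ of arity $3$ and the three pairwise independent unaries we want to construct either (i) a binary signature $(a',b',c',d')\notin\mathscr{P}$, or (ii) a \emph{symmetric} ternary signature not in $\mathscr{P}$. In case (i), connecting the first input of three copies of the binary signature to the three edges of $(=_3)$ produces the symmetric ternary signature $[a',b']^{\otimes 3}+[c',d']^{\otimes 3}$, which is not in $\mathscr{P}$ by Lemma~\ref{binary-product-asymmetric-symmetric}; in case (ii) we are done directly. The construction of such a binary or symmetric ternary signature from an arbitrary ternary non-product signature is the delicate part of the argument and is carried out separately in the Appendix: it is an exhaustive case analysis driven by the primitive decompositions (Definitions~\ref{definition-product-2} and~\ref{definition-product-1}) of the various pinnings $\partial_{[1,t]}^{\{i\}}(f)$ and by which variables lie in nondegenerate antipodal factors, with Lemma~\ref{arity-3-not-product} used to certify non-membership in $\mathscr{P}$.

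\emph{Main obstacle.} The base case is the crux. For a ternary signature such as $[1,0,1,0]$, every single-variable pinning by a unary from $\{[1,0],[1,1],[1,-1]\}$ yields $(=_2)$, $[1,1]^{\otimes 2}$, or $[1,-1]^{\otimes 2}$, all of which lie in $\mathscr{P}$, so no naive arity reduction applies and one is forced to analyze directly how a ternary signature can fail to be product type while all such pinnings are product type. By contrast the inductive step is comparatively routine once $n\ge 4$, since each pinning still carries $n-1\ge 3$ free variables, which is precisely the room needed for Lemma~\ref{same-type} to propagate a global product decomposition and thereby rule out the ``all pinnings in $\mathscr{P}$'' alternative.
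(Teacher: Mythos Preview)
Your overall architecture matches the paper's: induct on $n$, defer the base case $n=3$ to the Appendix (Theorem~\ref{arity-3-nonproduct}), and use Lemma~\ref{binary-product-asymmetric-symmetric} to symmetrize any binary non-product signature that appears. That part is fine.

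The genuine gap is in your inductive step. You pick two pinnings $P,Q\in\mathscr{P}$ on $x_1$ and then assert a dichotomy: either a lower-arity signature outside $\mathscr{P}$ appears, or $P$ and $Q$ share a common primitive decomposition ``from which one reassembles a product decomposition of $f$.'' This last implication is false in general, and the paper does not claim it. Concretely, suppose $P=P_1\otimes P_2$ and $Q=Q_1\otimes Q_2$ over the same partition with aligned supports, and both pairs $(P_1,Q_1)$ and $(P_2,Q_2)$ are linearly independent. Then $f^{x_1=0},f^{x_1=1}$ are independent linear combinations of $P$ and $Q$, and $f$ is typically \emph{not} in $\mathscr{P}$, yet no single pinning of $f$ on $x_1$ leaves $\mathscr{P}$. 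The paper rules out this scenario precisely by invoking the \emph{third} pinning $R=\partial^{\{1\}}_{[1,c]}(f)$: after shrinking the remaining factors to nonzero constants (which is where the pairwise independence of the three unaries is used, to guarantee a non-annihilating choice at each step), one gets $c_R\,R_1\otimes R_2=c_P\,P_1\otimes P_2+c_Q\,Q_1\otimes Q_2$ and derives a rank contradiction ($\le 1$ versus $2$). Your sketch never uses $R$ in the analysis and so cannot reach this conclusion.

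Even once only a single factor $I_1$ has $(P_1,Q_1)$ independent, the paper does \emph{not} directly reassemble $f\in\mathscr{P}$. It shrinks the other factors to obtain $h$ on $\{1\}\cup I_1$, compresses to a binary $\check{h}$, and branches: if $\check{h}\notin\mathscr{P}$ it finishes via Lemma~\ref{binary-product-asymmetric-symmetric} (note this is arity $2$, so the induction hypothesis, which requires arity $\ge 3$, does not apply here --- your phrase ``to which the induction hypothesis applies'' is inaccurate at this point); only if $\check{h}\in\mathscr{P}$ does one infer $f\in\mathscr{P}$. Likewise, in the incompatible-types branch the paper does not merely quote Lemma~\ref{arity-3-not-product}; it carefully selects a variable $x_r$ ($r\notin\{1,s,t\}$, available because $n\ge4$) and a unary $u$ so that $\partial^{\{r\}}_u(P)$ and $\partial^{\{r\}}_u(Q)$ remain nonzero and still of incompatible type, whence $f'=\partial^{\{r\}}_u(f)\notin\mathscr{P}$. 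Your description ``in the spirit of'' that lemma does not supply this construction. In short, the inductive step is not routine: the third unary and the rank argument are the missing idea, and the compatible-types branch can terminate at arity $2$ rather than feed back into the induction.
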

\begin{proof}
We prove by induction on $n$.
The base case is $n=3$. By Theorem~\ref{arity-3-nonproduct} we can produce
$g \not \in \mathscr{P}$ satisfying (\ref{eqn-in-thm3.5}) such that
either $g$ has arity 2 or $g$ is symmetric and has arity 3.
If $g$ has arity 2, we
use Lemma~\ref{binary-product-asymmetric-symmetric}
to produce a symmetric  $g'  \not \in \mathscr{P}$ of arity 3.

Now assume $n \ge 4$, and the theorem is true for $n-1$.
We show how to construct some $g \not \in \mathscr{P}$
of arity $n-1$ satisfying (\ref{eqn-in-thm3.5}).
Define
\[P = \partial_{[1,a]}^{\{1\}}(f),~~~~
Q= \partial_{[1,b]}^{\{1\}}(f), ~~~~ R = \partial_{[1,c]}^{\{1\}}(f).\]
If any of $P, Q$ or $R \notin\mathscr{P}$, then we are done by induction.
So we may assume $P, Q$  and $R$ all belong to $\mathscr{P}$.
If two of them are linearly dependent, we claim that  $f\in\mathscr{P}$.
Without loss of generality, we assume that $P$ and $Q$ are linearly dependent.
Note that each of $P, Q$  and $R$ is a linear combination of
$f^{x_1=0}$ and $f^{x_1=1}$.
From
\begin{equation}\label{P-Q-equation}
\left[
\begin{matrix}
P \\ Q
\end{matrix}
\right]=
\left[ \begin{matrix}
1 & a\\
1 & b
\end{matrix} \right]
\left[
\begin{matrix}
f^{x_1=0}\\
f^{x_1=1}
\end{matrix}
\right],
\end{equation}
since $a \not = b$,
we have
\begin{equation}\label{P-Q-equation-solved}
\left[
\begin{matrix}
f^{x_1=0}\\
f^{x_1=1}
\end{matrix}
\right] =
\frac{1}{b-a}
\left[
\begin{matrix}
b & -a\\
-1 & 1
\end{matrix}
\right]
\left[
\begin{matrix}
P\\
Q
\end{matrix}
\right].
\end{equation}
If both $P$ and $Q$ are identically zero,
then both $f^{x_1=0}$ and $f^{x_1=1}$ are identically zero,
and so $f$ is identically zero. This contradicts that $f\notin\mathscr{P}$.
So we may assume that $P\neq 0$. Then there exists
a constant $\lambda$ such that $Q=\lambda P$.
This implies that
\[
f^{x_1=0}=\frac{b-a\lambda}{b-a}P,~~~~
f^{x_1=1}=\frac{\lambda-1}{b-a}P.
\]
So $f=\frac{1}{b-a}[b-a\lambda, \lambda-1]\otimes P$.
This implies that $f\in\mathscr{P}$. This is a contradiction.

In the following,  $P$, $Q$ and $R$ all belong to $\mathscr{P}$
 and they are pairwise linearly independent.
In particular none of them is identically zero.

\begin{itemize}
\item If all three pairs  from $\{P, Q, R\}$
 (pairwise) have compatible types,
then by Lemma~\ref{same-type},
there exist a common
 partition $\mathcal{I}=\{I_1, I_2, \ldots, I_k\}$ of $[n]\setminus\{1\}$
and signatures $P_1, P_2, \ldots, P_k$, $Q_1, Q_2, \ldots, Q_k$ and $R_1, R_2, \ldots, R_k$
such that
\[P(X)=\displaystyle\prod_{i=1}^k P_i(X|_{I_i}),~~~~
Q(X)=\displaystyle\prod_{i=1}^k Q_i(X|_{I_i}), ~~~~
R(X)=\displaystyle\prod_{i=1}^k R_i(X|_{I_i})\]
 where $P_i, Q_i, R_i\in\mathcal{E}$
and there exist $\alpha_i\in\{0, 1\}^{|I_i|}$ such that supp$(P_i)$, supp$(Q_i)$, supp$(R_i)\subseteq\{\alpha_i, \overline{\alpha}_i\}$  for $1\leq i\leq k$.

Since $P$ and $Q$ are linearly independent, there is at least
one $1 \le i \le k$ such that $P_i$ and $Q_i$ are linearly independent.
We claim that there exists exactly one $i$ such that $P_i$ and $Q_i$ are linearly independent.
Otherwise, without loss of generality, we can assume that
both pairs $P_1, Q_1$ and $P_2, Q_2$ are linearly independent respectively.
Because $P$ and $Q$ are linearly independent, $P_i$ and $Q_i$ are not the zero signature for any $i\in[k]$.
Choose any unary signature $u \in \{[1,a], [1,b], [1,c]\}$
that is not $[1, 0]$. This is clearly possible because at most one
of them can be $[1, 0]$.

For any $i\in[k]\setminus\{1, 2\}$, we shrink
both  $P_i$ and $Q_i$ to a nonzero constant  in two steps as follows:
Step 1: if the arity $|I_i|$ of both $P_i$ and $Q_i$  is greater than 1
(skip Step 1 if $|I_i|=1$), we combine $|I_i|-1$ copies of $u$ to both $P_i$ and $Q_i$.
Since  $P_i$ (respectively  $Q_i$)  is not identically zero
and has either a single point in supp$(P_i)$ (respectively  supp$(Q_i)$)
or a pair of antipodal points, and both entries of the unary signature $u$
are nonzero,
  this operation shrinks  $P_i$ (respectively  $Q_i$)
 to a nonzero unary signature $[c_1, d_1]$ (respectively $[c_2, d_2]$),
where either $c_1$ or $d_1 \not =0$ (respectively $c_2$ or $d_2 \not =0$).
Step 2:
since we have three unary signatures that are pairwise linearly independent,
there exists at least  one $u' \in \{[1,a], [1,b], [1,c]\}$
such that
$\partial_{u'}([c_1, d_1])$ and $\partial_{u'}([c_2, d_2])$
are both nonzero constants.
So we combine $u'$ to $[c_1, d_1]$ and $[c_2, d_2]$
and we have shrunken both $P_i$ and $Q_i$ to nonzero constants.

By
(\ref{P-Q-equation-solved})
we have
\[R=f^{x_1=0}+cf^{x_1=1}=\frac{1}{b-a}[(b-c)P+(c-a)Q].\]
After shrinking $P_i$ and $Q_i$
for every $i\in[k]\setminus\{1, 2\}$,
 there exist constants $c_P, c_Q$ and $c_R$ such that
\[c_R R_1\otimes R_2=c_P P_1\otimes P_2+c_Q Q_1\otimes Q_2,\]
where $c_P \neq 0$ and $c_Q\neq 0$.
If we write $R_1\otimes R_2$ in its matrix form
as a matrix in $\mathbb{C}^{2^{|I_1|} \times 2^{|I_2|}}$
where the rows and columns are indexed by
assignments to the variables in $I_1$ and $I_2$ respectively,
 it has rank at most one,  because it is expressible as
the product of a column vector times a row vector
$R_1^{\texttt{T}} R_2$,
where $R_i$ is a row vector in $\mathbb{C}^{2^{|I_i|}}$.
But
the signature matrix of $c_P P_1\otimes P_2+c_Q Q_1\otimes Q_2$ is
\[\left[\begin{matrix}
P_1^{\texttt{T}} & Q_1^{\texttt{T}}
\end{matrix}\right]
\left[\begin{matrix}
c_P & 0 \\
 0    & c_Q
\end{matrix}\right]
\left[\begin{matrix}
P_2 \\ Q_2
\end{matrix}\right]
\]
which has rank 2 because
 $P_i, Q_i$ are linearly independent for $i =1,2$,
and $\left[\begin{matrix}
P_1^{\texttt{T}} & Q_1^{\texttt{T}}
\end{matrix}\right]
\in \mathbb{C}^{2^{|I_1|} \times 2}$
and $\left[\begin{matrix}
P_2 \\ Q_2
\end{matrix}\right]
\in \mathbb{C}^{2 \times 2^{|I_2|}}$ both have rank 2.
This is a contradiction.

Now without loss of generality, we may assume that $P_1, Q_1$ are linearly independent
and $P_i, Q_i$ are linearly dependent for $i=2, \ldots, k$.
Thus each $Q_i$ is a nonzero multiple of $P_i$, for $i=2, \ldots, k$.
By replacing $Q_1$ with a nonzero multiple of $Q_1$,
we may assume $Q_i=P_i$ for $i=2, \ldots, k$.
Since we have three unary signatures that are linearly independent pairwise,
by a similar argument, we can connect the variables of $f$ in
$I_2, \ldots, I_k$ to some unary signatures such that
$P_2, \ldots, P_k$ each contributes a nonzero constant
factor.
Let the resulting signature be $h$  on
variables from $\{x_s \mid s \in \{1\} \cup I_1\}$.
Note that
$\partial_{[1,a]}^{\{1\}}(h)=\lambda P_1$, $\partial_{[1,b]}^{\{1\}}(h)=\lambda Q_1$,
 where $\lambda$ is a nonzero constant, as the following
diagrams commute.
\begin{equation*}
\begin{tikzpicture}[every node/.style={midway}]
  \matrix[column sep={8em,between origins}, row sep={2em}] at (0,0) {
    \node(f) {$f$}  ; & \node(h) {$h$}; \\
    \node(P) {$P$}; & \node (P1) {$\lambda P_1$};\\
  };
  \draw[<-] (P) -- (f) node[anchor=east]  {$\partial^{\{1\}}_{[1,a]}$};
  \draw[->] (f) -- (h) node[anchor=south] {$\partial^{I_2}\cdots \partial^{I_k}$};
  \draw[->] (h) -- (P1) node[anchor=west] {$\partial^{\{1\}}_{[1,a]}$};
  \draw[->] (P) -- (P1) node[anchor=north] {$\partial^{I_2}\cdots \partial^{I_k}$};
\end{tikzpicture}
\hspace{40pt}
\begin{tikzpicture}[every node/.style={midway}]
  \matrix[column sep={8em,between origins}, row sep={2em}] at (0,0) {
    \node(f) {$f$}  ; & \node(h) {$h$}; \\
    \node(Q) {$Q$}; & \node (Q1) {$\lambda Q_1$};\\
  };
  \draw[<-] (Q) -- (f) node[anchor=east]  {$\partial^{\{1\}}_{[1,b]}$};
  \draw[->] (f) -- (h) node[anchor=south] {$\partial^{I_2}\cdots \partial^{I_k}$};
  \draw[->] (h) -- (Q1) node[anchor=west] {$\partial^{\{1\}}_{[1,b]}$};
  \draw[->] (Q) -- (Q1) node[anchor=north] {$\partial^{I_2}\cdots \partial^{I_k}
$};
\end{tikzpicture}
\end{equation*}

Then we have
\begin{equation}\label{hx1=01expresssion}
\left[\begin{matrix}h^{x_1=0}\\h^{x_1=1}\end{matrix}\right]=
\frac{\lambda}{b-a}
\left[\begin{matrix}b & -a\\-1 & 1\end{matrix}\right]
\left[\begin{matrix} P_1\\  Q_1\end{matrix}\right].
\end{equation}
Note that
\begin{equation}\label{hx1=01expresssion2}
\left[\begin{matrix} P_1\\  Q_1\end{matrix}\right]=
\begin{bmatrix}
0 & \ldots & 0 & P_1^{\alpha_1} & 0 & \ldots & 0 &  P_1^{\overline{\alpha_1}} & 0 & \ldots & 0\\
0 & \ldots & 0 & Q_1^{\alpha_1}     & 0 & \ldots & 0 &  Q_1^{\overline{\alpha_1}}     & 0 & \ldots & 0
\end{bmatrix}.
\end{equation}
Let
\begin{equation}\label{hx1=01expresssion3}
\left[\begin{matrix}  \check{P}_1^{\alpha_1} &  \check{P}_1^{\overline{\alpha_1}} \\
 \check{Q}_1^{\alpha_1} & \check{Q}_1^{\overline{\alpha_1}} \end{matrix}\right]
=
\frac{\lambda}{b-a}
\left[\begin{matrix}b & -a\\-1 & 1\end{matrix}\right]
\left[\begin{matrix}  P_1^{\alpha_1} &  P_1^{\overline{\alpha_1}} \\
  Q_1^{\alpha_1} &  Q_1^{\overline{\alpha_1}} \end{matrix}\right].
\end{equation}
Then
\begin{equation}\label{hx1=01expresssion4}
\left[\begin{matrix}h^{x_1=0}\\h^{x_1=1}\end{matrix}\right]=
\begin{bmatrix}
0 & \ldots & 0 & \check{P}_1^{\alpha_1} & 0 & \ldots & 0 &
 \check{P}_1^{\overline{\alpha_1}} & 0 & \ldots & 0\\
0 & \ldots & 0 & \check{Q}_1^{\alpha_1}      & 0 & \ldots & 0 &
 \check{Q}_1^{\overline{\alpha_1}}    & 0 & \ldots & 0
\end{bmatrix}.
\end{equation}

Let $\check{h}$ be the binary signature
$(\check{h}^{00},~ \check{h}^{01},~ \check{h}^{10},~\check{h}^{11})
=(\check{P}_1^{\alpha_1},~  \check{P}_1^{\overline{\alpha_1}},~ \check{Q}_1^{\alpha_1}, ~ \check{Q}_1^{\overline{\alpha_1}})$.
If $|I_1|=1$ then $\check{h}$ is $h$.
If $|I_1| >1$, by combining all but one variable in $I_1$ using $[1, 1]$,
which is just $(=_1)$ present in {Pl-\#CSP},
 we can get $\check{h}$.
If  $\check{h}\notin\mathscr{P}$, then we are done by Lemma~\ref{binary-product-asymmetric-symmetric}.

If $\check{h}\in\mathscr{P}$, then $h\in\mathscr{P}$.
Since $P_1$ and $Q_1$ are linearly independent,
$\det
\begin{bmatrix}  \check{P}_1^{\alpha_1} &  \check{P}_1^{\overline{\alpha_1}} \\
 \check{Q}_1^{\alpha_1} & \check{Q}_1^{\overline{\alpha_1}} \end{bmatrix}
\not =0$.
Hence either $\check{P}_1^{\alpha_1} = \check{Q}_1^{\overline{\alpha_1}}
=0$ or $\check{P}_1^{\overline{\alpha_1}} =  \check{Q}_1^{\alpha_1} =0$.
In either case,
compare (\ref{hx1=01expresssion}) to (\ref{hx1=01expresssion4}) with
\begin{equation*}
\left[
\begin{matrix}
f^{x_1=0}\\
f^{x_1=1}
\end{matrix}
\right]
=
\frac{1}{b-a}
\left[
\begin{matrix}
b & -a\\
-1 & 1
\end{matrix}
\right]
\left[
\begin{matrix}
(0 & \ldots & 0 &  P_1^{\alpha_1} & 0 & \ldots & 0 &  P_1^{\overline{\alpha_1}} & 0 & \ldots & 0) \otimes P_2 \otimes \ldots \otimes P_k\\
(0 & \ldots & 0 &  Q_1^{\alpha_1}     & 0 & \ldots & 0 &  Q_1^{\overline{\alpha_1}}     & 0 & \ldots & 0) \otimes P_2 \otimes \ldots \otimes P_k
\end{matrix}
\right]
\end{equation*}
we have
\[
\left[
\begin{matrix}
f^{x_1=0}\\
f^{x_1=1}
\end{matrix}
\right]
=
\frac{1}{\lambda}
\left[
\begin{matrix}
(0 & \ldots & 0 &  \check{P}_1^{\alpha_1} & 0 & \ldots & 0 &
 \check{P}_1^{\overline{\alpha_1}} & 0 & \ldots & 0) \otimes P_2 \otimes \ldots \otimes P_k\\
(0 & \ldots & 0 &  \check{Q}_1^{\alpha_1}    & 0 & \ldots & 0 &
\check{Q}_1^{\overline{\alpha_1}}    & 0 & \ldots & 0) \otimes P_2 \otimes \ldots \otimes P_k
\end{matrix}
\right]
\]
We conclude that $f\in\mathscr{P}$. But this is a contradiction.

\item If there are some two functions
among $\{P, Q, R\}$  that do not have compatible type,
without loss of generality,
 we assume that $P$ and $Q$ do not have compatible type.
 There exist two partitions $\mathcal I=\{I_1, I_2, \ldots, I_k\}$ and
 $\mathcal J=\{J_1, J_2, \ldots, J_{\ell}\}$ of $[n]\setminus \{1\}$
 and signatures $P_1, P_2, \ldots, P_k$ and $Q_1, Q_2, \ldots, Q_{\ell}$ such that
  \[P(X)=\displaystyle\prod_{i=1}^k P_i(X|_{I_i})
 ~~~~\mbox{and}~~~~Q(X)=\displaystyle\prod_{j=1}^{\ell} Q_j(X|_{J_j})\]
 are the primitive decompositions of $P$ and $Q$ respectively.
\begin{itemize}
 \item If the partitions $\mathcal{I}$ and $\mathcal{J}$ are not compatible,
 then there exist $I_i$ and $J_j$ such that
\[
 |I_i|\geq 2, ~~~~|J_j|\geq 2,~~~~ I_i\cap J_j\neq \emptyset~~~~
\mbox{and}~~~~ I_i\neq J_j.\]
 Without loss of generality, we assume that there exist
$s, t \in [n]\setminus\{1\}$
 such that $\{s, t\}\subseteq I_i$ and $s\in J_j$ but $t\notin J_j$.
 Since $n\geq 4$, we have at least one other variable $x_r$,
where $r \not =1, s,t$.

 Suppose $r\in I_p$ and $r\in J_q$ for some $p\in[k]$ and $q\in[\ell]$.
 We will connect $x_r$ to some unary signature
from $\{[1,a], [1,b], [1,c]\}$  in the following way.
There are at least two unary signatures $u_1, u_2 \in \{[1,a], [1,b], [1,c]\}$
that are not $[1, 0]$, i.e., with both entries nonzero.
 \begin{itemize}
 \item If $|I_p|\geq 2$ and $|J_q|\geq 2$, then we connect $u_1$ to
 $x_r$.
Note that $P_i$ (or $\partial^{\{r\}}_{u_1}(P_i)$ if $i=p$)
still has at least two variables $x_s$ and $x_t$,
and two antipodal points in its support,
regardless of whether $p=i$.
The function $\partial^{\{r\}}_{u_1}(Q_q)$
still has two antipodal points in its support set.
Thus, regardless of whether  $q=j$,
the functions $\partial^{\{r\}}_{u_1}(P)$ and $\partial^{\{r\}}_{u_1}(Q)$
do not have compatible type.
 \item Suppose $|I_p|=1$ and $|J_q|\geq 2$.
Then $P_p$ is a nonzero unary signature $[P_p(0), P_p(1)]$.
For $1 \le k \le 2$, $\partial^{\{r\}}_{u_k} (P_p)$ are constants, and at most one
of them can be zero.
We choose one $u_k$ such that $\partial^{\{r\}}_{u_k} (P_p) \not = 0$,
and connect that $u_k$ to $x_r$.
In the new partition $\mathcal{I'}$ obtained from $\mathcal{I}$
by removing $I_p$,  $I_i$ is unchanged,   still
containing both $s$ and $t$.  $P_i$
still has at least two variables $x_s$ and $x_t$,
and two antipodal points in its support.
The function $\partial^{\{r\}}_{u_k}(Q_q)$
still has two antipodal points in its support set,
because both entries of $u_k$ are nonzero.
Thus, regardless of whether $q=j$,
the functions $\partial^{\{r\}}_{u_k}(P)$ and $\partial^{\{r\}}_{u_k}(Q)$
do not have compatible type.
 \item Suppose $|I_p|\geq 2$ and $|J_q|=1$.
Then $Q_q$ is a nonzero unary signature $[Q_q(0), Q_q(1)]$.
For $1 \le k \le 2$, $\partial^{\{r\}}_{u_k} (Q_q)$ are constants, and at most one
of them can be zero.
We choose one $u_k$ such that $\partial^{\{r\}}_{u_k} (Q_q) \not = 0$,
and connect that $u_k$ to $x_r$.
In the new partition $\mathcal{I'}$, obtained by removing $r$ from $I_p$,
$I_i$ still contains both $s$ and $t$.  $P_i$
still has at least two variables $x_s$ and $x_t$,
and two antipodal points in its support, regardless of whether $i=p$.
The function $Q_j$ is unchanged and
still has two antipodal points in its support set.
Thus the functions $\partial^{\{r\}}_{u_k}(P)$ and $\partial^{\{r\}}_{u_k}(Q)$
do not have compatible type.
 \item For $|I_p|=1$ and $|J_q|=1$, we have $I_p = J_q = \{r\}$. Note that
there exists at least one unary signature $u \in
\{[1,a], [1,b], [1,c]\}$ such that
 $\partial^{\{r\}}_{u}  (P_p) \neq 0$
 and $\partial^{\{r\}}_{u}  (Q_q) \neq 0$.
 Then we connect this $u$ to $x_r$.
Again,
the functions $\partial^{\{r\}}_{u}(P)$ and $\partial^{\{r\}}_{u}(Q)$
do not have compatible type, as both $P_i$ and $Q_j$ are unchanged.
 \end{itemize}
 Thus after connecting $x_r$ to a suitable
unary signature $u$ in this way, we get $P'= \partial^{\{r\}}_{u}  (P)
\in \mathscr{P}$
and $Q' = \partial^{\{r\}}_{u}  (Q) \in \mathscr{P}$, both not
identically zero, and not having compatible type.
 If we connect $x_r$ in $f$ to the unary signature $u$,
 we get $f' = \partial^{\{r\}}_{u}  (f)$.
Note that $\partial^{\{1\}}_{[1,a]}(f')=P'$ and $\partial^{\{1\}}_{[1,b]}(f')=Q'$.
\begin{equation*}
\begin{tikzpicture}[every node/.style={midway}]
  \matrix[column sep={8em,between origins}, row sep={2em}] at (0,0) {
    \node(f) {$f$}  ; & \node(f') {$f'$}; \\
    \node(P) {$P$}; & \node (P') {$P'$};\\
  };
  \draw[<-] (P) -- (f) node[anchor=east]  {$\partial^{\{1\}}_{[1,a]}$};
  \draw[->] (f) -- (f') node[anchor=south] {$\partial_u^{\{r\}}$};
  \draw[->] (f') -- (P') node[anchor=west] {$\partial^{\{1\}}_{[1,a]}$};
  \draw[->] (P) -- (P') node[anchor=north] {$\partial_u^{\{r\}}$};
\end{tikzpicture}
\hspace{40pt}
\begin{tikzpicture}[every node/.style={midway}]
  \matrix[column sep={8em,between origins}, row sep={2em}] at (0,0) {
    \node(f) {$f$}  ; & \node(f') {$f'$}; \\
    \node(Q) {$Q$}; & \node (Q') {$Q'$};\\
  };
  \draw[<-] (Q) -- (f) node[anchor=east]  {$\partial^{\{1\}}_{[1,b]}$};
  \draw[->] (f) -- (f') node[anchor=south] {$\partial_u^{\{r\}}$};
  \draw[->] (f') -- (Q') node[anchor=west] {$\partial^{\{1\}}_{[1,b]}$};
  \draw[->] (Q) -- (Q') node[anchor=north] {$\partial_u^{\{r\}}$};
\end{tikzpicture}
\end{equation*}

By Lemma~\ref{arity-3-not-product}, $f'\notin\mathscr{P}$.
Thus we are done by induction.

 \item If $\mathcal{I}$ and $\mathcal{J}$
are compatible yet $P$ and $Q$ still do not have compatible type,
then without loss of generality,
one of the following holds:
\begin{enumerate}
\item\label{item1-in-thm3.1}
There exist $I_i \in \mathcal{I}$, $J_j \in \mathcal{J}$,
such that $I_i=J_j$ and $|I_i|\geq 2$  but
$\mbox{supp}(P_i) \not = \mbox{supp}(Q_j)$, or
\item\label{item2-in-thm3.1}
There exist $I_i \in \mathcal{I}$ with $|I_i|\geq 2$
and $J_{j_1}, J_{j_2}, \ldots, J_{j_{|I_i|}} \in \mathcal{J}$ with
 $|J_{j_k}|=1$ for $1\leq k\leq |I_i|$,
  such that $I_i=\displaystyle\bigcup_{k=1}^{|I_i|}J_{j_k}$ but
$\mbox{supp}~(\displaystyle\prod_{k=1}^{|I_i|}Q_{j_k})$
is not a singleton subset of $\mbox{supp}(P_i)$.
\end{enumerate}

In case \ref{item1-in-thm3.1},
supp$(P_i) = \{\alpha, \overline{\alpha}\}$,
and
supp$(Q_j) = \{\beta, \overline{\beta}\}$,
for some $\alpha, \beta \in \{0, 1\}^{|I_i|}$, because both are factors
in a primitive decomposition and $|I_i| = |J_j| \geq 2$.
Being both antipodal pairs, and
$\alpha \not = \beta$ and $\alpha \not = \overline{\beta}$,
 it follows that
$\mbox{supp}(P_i) \cap \mbox{supp}(Q_j) = \emptyset$.
In case \ref{item2-in-thm3.1},
supp$(P_i) = \{\alpha, \overline{\alpha}\}$,
and we have $\mbox{supp}~(\displaystyle\prod_{k=1}^{|I_i|}Q_{j_k})
\not \subseteq \mbox{supp}(P_i)$.
To see that, if any $Q_{j_k} \not = \lambda[1,0]$ or $\lambda[0,1]$
($\lambda \in \mathbb{C}$), then
$\mbox{supp}~(\displaystyle\prod_{k=1}^{|I_i|}Q_{j_k})$ is clearly
not a subset of any set of the form $\{\alpha, \overline{\alpha}\}$.
If all $Q_{j_k}$ are of this form, then
$\mbox{supp}~(\displaystyle\prod_{k=1}^{|I_i|}Q_{j_k})$ is a singleton
set, but not a subset of $\mbox{supp}(P_i)$.
Hence,
there exists some $\beta \in \{0, 1\}^{|I_i|}$
such that
$\beta \in  \mbox{supp}~(\displaystyle\prod_{k=1}^{|I_i|}Q_{j_k})
\setminus \mbox{supp}(P_i)$.
Thus
we have $\alpha \not = \beta$ and $\alpha \not = \overline{\beta}$
as well.
This is equivalent to the existence of some $s, t \in I_i$, $s \not = t$,
such that $\alpha_s = \beta_s$ and $\alpha_t = \overline{\beta_t}$.

Aside from $x_1$,  $x_s$ and $x_t$, there exists another variable $x_r$,
since $n \ge 4$.

Suppose $r \in I_i$.
In case \ref{item1-in-thm3.1},
 we take any $u \in \{[1,a], [1,b], [1,c]\}$
with two nonzero entries, and connect it to $x_r$.
We get
$\partial^{\{r\}}_u(P_i)$ and $\partial^{\{r\}}_u(Q_j)$ with
support $\{\alpha', \overline{\alpha'}\}$
and $\{\beta', \overline{\beta'}\}$,
where $\alpha'$ and $\overline{\alpha'}$ are obtained
from $\alpha$ and $\overline{\alpha}$ by removing the $r$-th bit,
and similarly for $\beta'$ and $\overline{\beta'}$.
Since $r \not = s,t$, we still have $\alpha'_s = \beta'_s$ and $\alpha'_t
= \overline{\beta'_t}$, and
thus $\alpha' \not = \beta'$ and $\alpha' \not = \overline{\beta'}$.
Also $|I_i \setminus \{r\}| \ge 2$.
Hence
$\partial^{\{r\}}_u(P)$ and $\partial^{\{r\}}_u(Q)$ do not have
compatible type.
The proof for case \ref{item2-in-thm3.1} is similar;
we pick $u \in \{[1,a], [1,b], [1,c]\}$
with two nonzero entries as well as satisfying
$\partial^{\{r\}}_u(Q_{j_k}) \not =0$ for that (nonzero) unary signature
$Q_{j_k}$, where $J_{j_k} = \{r\}$.

If $r \not \in I_i$, then for some $i'$ and $j'$ such that
$r \in I_{i'}$ and $r \in J_{j'}$.
If $|I_{i'}|=1$, then $P_{i'}$ is a nonzero unary
function, and at most one $u \in \{[1,a], [1,b], [1,c]\}$
satisfies $\partial^{\{r\}}_{u}(P_{i'}) = 0$; if so
we exclude this $u$.  If $|I_{i'}|\ge 2$,
then there are two antipodal support points in supp$(\partial^{\{r\}}_{u}(P_{i'}))$
for any $u \in \{[1,a], [1,b], [1,c]\}$ with two nonzero entries,
which again excludes at most one unary, namely $[1,0]$.
Thus in either case we exclude at most one  $u \in \{[1,a], [1,b], [1,c]\}$
on account of $I_{i'}$.  Similarly we exclude at most one  $u$
on account of $J_{j'}$. Pick one $u \in \{[1,a], [1,b], [1,c]\}$
not excluded, and form $P' =\partial^{\{r\}}_u(P)$ and $Q'
=\partial^{\{r\}}_u(Q)$.
These do not have compatible type.

By connecting $u$ to $x_r$ in $f$, we get $f' = \partial^{\{r\}}_u(f)$
with arity $n-1$, and
\[\partial^{\{1\}}_{[1,a]}(f')= \partial^{\{1\}}_{[1,a]}(\partial^{\{r\}}_u(f))
=\partial^{\{r\}}_u(\partial^{\{1\}}_{[1,a]}(f))
=\partial^{\{r\}}_u(P) = P'\]
and similarly
\[
\partial^{\{1\}}_{[1,b]}(f')= \partial^{\{1\}}_{[1,b]}(\partial^{\{r\}}_u(f))
=\partial^{\{r\}}_u(\partial^{\{1\}}_{[1,b]}(f))
= \partial^{\{r\}}_u(Q) = Q'.
\]
\begin{equation*}
\begin{tikzpicture}[every node/.style={midway}]
  \matrix[column sep={8em,between origins}, row sep={2em}] at (0,0) {
    \node(f) {$f$}  ; & \node(f') {$f'$}; \\
    \node(P) {$P$}; & \node (P') {$P'$};\\
  };
  \draw[<-] (P) -- (f) node[anchor=east]  {$\partial^{\{1\}}_{[1,a]}$};
  \draw[->] (f) -- (f') node[anchor=south] {$\partial_u^{\{r\}}$};
  \draw[->] (f') -- (P') node[anchor=west] {$\partial^{\{1\}}_{[1,a]}$};
  \draw[->] (P) -- (P') node[anchor=north] {$\partial_u^{\{r\}}$};
\end{tikzpicture}
\hspace{40pt}
\begin{tikzpicture}[every node/.style={midway}]
  \matrix[column sep={8em,between origins}, row sep={2em}] at (0,0) {
    \node(f) {$f$}  ; & \node(f') {$f'$}; \\
    \node(Q) {$Q$}; & \node (Q') {$Q'$};\\
  };
  \draw[<-] (Q) -- (f) node[anchor=east]  {$\partial^{\{1\}}_{[1,b]}$};
  \draw[->] (f) -- (f') node[anchor=south] {$\partial_u^{\{r\}}$};
  \draw[->] (f') -- (Q') node[anchor=west] {$\partial^{\{1\}}_{[1,b]}$};
  \draw[->] (Q) -- (Q') node[anchor=north] {$\partial_u^{\{r\}}$};
\end{tikzpicture}
\end{equation*}
This implies that $f'\notin\mathscr{P}$ by Lemma~\ref{arity-3-not-product}.
Thus we are done by induction.

 \end{itemize}

\end{itemize}
\end{proof}

\subsection{Arity Reduction for Non-Affine Signatures}
\begin{lemma}\label{combine-affine-support-[1,x]}
Let $f$ be a signature of arity $n$ with affine support of dimension $k<n$,
and let $S = \{{i_1}, {i_2}, \ldots,  {i_k}\}$ be the
indices of a set of $k$ free variables.
Let $f'=\partial_{[1, a]}^{[n]\setminus S}(f)$,
 where $a^4=1$, then $f \in \mathscr{A}$ iff $f'\in \mathscr{A}$.
\end{lemma}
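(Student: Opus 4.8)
The plan is to write $f'$ as the entrywise product of the compressed signature $\underline{f}$ of $f$ (for the free-variable set $S$) with a nowhere-zero affine signature, and then move membership in $\mathscr{A}$ back and forth across this product.

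First I would compute $f'=\partial_{[1,a]}^{[n]\setminus S}(f)$ explicitly. Set $X=(x_{i_1},\dots,x_{i_k},1)$. Since $\operatorname{supp}(f)$ is an affine space of dimension $k$ whose free coordinates are precisely those indexed by $S$, each assignment to the variables in $S$ extends to a \emph{unique} point of $\operatorname{supp}(f)$, and on this support every non-free variable $x_j$ (for $j\in[n]\setminus S$) is given by an affine form $\langle v_j,X\rangle\bmod 2$ in the free variables. Summing over the non-free variables therefore leaves exactly one nonzero term per free assignment, and one obtains
\[
f'(x_{i_1},\dots,x_{i_k})\;=\;\underline{f}(x_{i_1},\dots,x_{i_k})\cdot g(x_{i_1},\dots,x_{i_k}),\qquad g:=\prod_{j\in[n]\setminus S}a^{\langle v_j,X\rangle},
\]
where $g$ is nowhere zero because all its values are powers of $a$.

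The second step is to verify $g\in\mathscr{A}$ and that its entrywise inverse $g^{-1}$ is in $\mathscr{A}$ as well. Write $a={\frak i}^{r}$. Since $\langle v_j,X\rangle\in\{0,1\}$ we have $r\langle v_j,X\rangle\equiv r\,\ell_j(X)^2\pmod 4$, where $\ell_j$ is the integer-valued affine form reducing to $\langle v_j,X\rangle\bmod 2$; expanding the square and remultilinearizing via $x^2=x$ exhibits the exponent of each factor as a quadratic multilinear polynomial whose cross-term coefficients are even, hence so is the exponent of the product $g$. As $g$ has full support, $g\in\mathscr{A}$ by Definition~\ref{definition-affine}, and negating the exponent (even coefficients remain even mod $4$) gives $g^{-1}\in\mathscr{A}$ too. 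The entrywise product of an affine signature with a nowhere-zero affine signature on the same variables is again affine — multiplying $\lambda\chi_{AX=0}{\frak i}^{Q_1}$ by $\mu\,{\frak i}^{Q_2}$ produces $\lambda\mu\,\chi_{AX=0}\,{\frak i}^{Q_1+Q_2}$, and $Q_1+Q_2$ retains the required shape. Multiplying by $g$ and by $g^{-1}$ thus yields $\underline{f}\in\mathscr{A}\iff f'=\underline{f}\,g\in\mathscr{A}$.

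Finally I would reduce the equivalence $\underline{f}\in\mathscr{A}\iff f\in\mathscr{A}$ to earlier results. If $f\in\mathscr{A}$, Corollary~\ref{corollary-compressed-multilinear-unique} gives a quadratic multilinear $Q(X)$ with even cross terms such that $f=\lambda\,{\frak i}^{Q(X)}$ on $\operatorname{supp}(f)$, so $\underline{f}=\lambda\,{\frak i}^{Q(X)}$ on all of $\{0,1\}^{k}$ is affine. Conversely, if $\underline{f}\in\mathscr{A}$ then, being nowhere zero, $\underline{f}=\mu\,{\frak i}^{R(X)}$ with $R$ of the same shape; then $f(x_1,\dots,x_n)=\mu\,\chi_{\operatorname{supp}(f)}\,{\frak i}^{R(x_{i_1},\dots,x_{i_k})}$, which is literally of the form in Definition~\ref{definition-affine} since the affine support supplies the indicator $\chi_{AX=0}$, so $f\in\mathscr{A}$. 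Chaining the two equivalences proves the lemma. I expect the only mildly delicate point to be checking that $g$ and $g^{-1}$ lie in $\mathscr{A}$ — i.e.\ that combining the $[1,a]$ contributions produces an exponent with even cross-term coefficients; the rest is bookkeeping with the unique support extension and the definition of $\mathscr{A}$.
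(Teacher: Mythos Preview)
Your proof is correct. Both your argument and the paper's rest on the same two facts: pointwise multiplication by a nowhere-zero affine signature preserves membership in $\mathscr{A}$, and $f\in\mathscr{A}\iff\underline f\in\mathscr{A}$. The difference is the order of operations. You compress first and then multiply by $g=\prod_{j\notin S} a^{\langle v_j,X\rangle}$ on the $k$ free variables; this forces you to verify $g\in\mathscr{A}$ via the squaring trick, since the mod-$2$ affine forms $\langle v_j,X\rangle$ produce cross terms when rewritten as a $\mathbb Z_4$ polynomial and one must check these are all even. The paper instead multiplies first on all $n$ variables, setting $\widetilde f(x_1,\dots,x_n)=a^{\sum_{i\notin S}x_i}f(x_1,\dots,x_n)$; here the exponent is genuinely linear (no cross terms at all), so $f\in\mathscr A\iff\widetilde f\in\mathscr A$ is immediate from Definition~\ref{definition-affine}. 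Then $f'=\partial_{[1,1]}^{[n]\setminus S}(\widetilde f)$ is literally the compressed signature $\underline{\widetilde f}$, and Corollary~\ref{f-affine-iff-f*-affine} finishes in one line. The paper's ordering thus sidesteps exactly the ``mildly delicate point'' you flagged; your route works but does a bit more computation than necessary.
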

\begin{proof}
Define
\[\widetilde{f}(x_1, x_2, \ldots, x_n)=
a^{\sum_{i \in [n]\setminus S} x_i}
f(x_1, x_2, \ldots, x_n),\]
then $f \in \mathscr{A}$  iff $\widetilde{f} \in \mathscr{A}$,
as the modifier $a^{\sum_{i \in [n]\setminus S} x_i}$
is a power of $\frak{i}$ raised to a  linear sum, and the inverse transformation
is of the same kind.

Note that
 $f'=
\partial_{[1, 1]}^{[n]\setminus S}(\widetilde{f})$.
It follows that
$\widetilde{f} \in \mathscr{A}$
iff $f' \in \mathscr{A}$, by Corollary~\ref{f-affine-iff-f*-affine}
 since $f'$ is just the compressed signature
of $\widetilde{f}$ for $X$.
\end{proof}

\begin{lemma}\label{argue-a1-0-b1-1}
Assume that $S\subseteq \mathbb{Z}_2^n$ is not a linear subspace but for all $i\in[n]$, $S^{x_i=0}$ is a linear subspace of
$\mathbb{Z}_2^{n-1}$.
Then there exist ${\bf a}=a_1a_2\cdots a_n$, ${\bf b}=b_1b_2\cdots b_n$, ${\bf c}={\bf a}\oplus {\bf b}=c_1c_2\cdots c_n$
such that ${\bf a, b}\in S$ and ${\bf c}\notin S$,
and there exists $i\in[n]$ such that $a_i \not = b_i$.
\end{lemma}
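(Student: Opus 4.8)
The plan is to extract from the hypotheses the single fact that does all the work---that $S$ contains the origin---and then to observe that, for a subset of $\mathbb{Z}_2^n$ containing the origin, failing to be a linear subspace is exactly the same as failing to be closed under $\oplus$, which is precisely the assertion to be proved.

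First I would unpack the slice hypothesis for a single index, say $i=1$: since $S^{x_1=0}$ is a linear subspace of $\mathbb{Z}_2^{n-1}$, it is nonempty and contains the zero vector of $\mathbb{Z}_2^{n-1}$, and by the definition of the slice $S^{x_1=0}$ this means that the all-zeros string $\mathbf{0}=0^n$ lies in $S$. This is the only place the slice hypothesis is used, and it is genuinely needed: if instead $S$ were e.g.\ $\{e_1\}$, then $S$ is not a linear subspace yet there are no two \emph{distinct} elements $\mathbf{a},\mathbf{b}\in S$ at all, so the conclusion would fail; the hypothesis ``$S^{x_i=0}$ is a linear subspace'' excludes exactly such degenerate configurations (including $S=\emptyset$).

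Next, since $\mathbf{0}\in S$ and over $\mathbb{Z}_2$ the only scalars are $0$ and $1$ (so scalar closure is automatic), $S$ is a linear subspace of $\mathbb{Z}_2^n$ if and only if it is closed under coordinatewise XOR. By assumption $S$ is \emph{not} a linear subspace, so it is not closed under $\oplus$; that is, there exist $\mathbf{a},\mathbf{b}\in S$ with $\mathbf{c}:=\mathbf{a}\oplus\mathbf{b}\notin S$. Finally, because $\mathbf{0}\in S$ while $\mathbf{c}\notin S$, we have $\mathbf{c}\neq\mathbf{0}$, hence $\mathbf{a}\neq\mathbf{b}$, so there is some index $i\in[n]$ with $a_i\neq b_i$, which is what we wanted.

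There is essentially no technical obstacle: the content is just the dichotomy ``contains $\mathbf{0}$ and closed under $\oplus$'' versus ``not a subspace.'' The only point worth flagging in the writeup is that the seemingly strong hypothesis on all the slices $S^{x_i=0}$ is used solely (and for just one $i$) to force $\mathbf{0}\in S$ and thereby rule out the configurations where the statement is false.
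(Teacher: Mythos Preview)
Your proof is correct and actually cleaner than the paper's. Both arguments start by picking $\mathbf{a},\mathbf{b}\in S$ with $\mathbf{c}=\mathbf{a}\oplus\mathbf{b}\notin S$, but they diverge on how to force $a_i\neq b_i$ for some $i$. You use the slice hypothesis for a single index to get $\mathbf{0}\in S$, and then observe that $\mathbf{c}\notin S$ immediately gives $\mathbf{c}\neq\mathbf{0}$, hence $\mathbf{a}\neq\mathbf{b}$. The paper instead argues in two steps: first, if some coordinate had $a_i=b_i=0$ then $\mathbf{a},\mathbf{b},\mathbf{c}$ would all land in the slice $S^{x_i=0}$ and witness its non-linearity, so in fact every coordinate of $(\mathbf{a},\mathbf{b})$ has at least one $1$; second, if every coordinate were $(1,1)$ then $\mathbf{c}=\mathbf{0}$, contradicting $\mathbf{0}\in S$ (derived from the slice hypothesis). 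Thus the paper's route yields a stronger property of the chosen witness---no coordinate is $(0,0)$---at the cost of invoking the slice hypothesis for every $i$. That extra strength is not recorded in the lemma statement and is not used in the later applications, so your more economical argument, using only one slice, suffices.
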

\begin{proof}
Since $S\subseteq \mathbb{Z}_2^n$ is not a linear subspace, there exist
${\bf a}=a_1a_2\cdots a_n$, ${\bf b}=b_1b_2\cdots b_n$, ${\bf c}={\bf a}\oplus {\bf b}=c_1c_2\cdots c_n$
such that ${\bf a, b}\in S$ and ${\bf c}\notin S$.

If there exists $i$ such that $a_i=b_i=0$, then $c_i=0$.
The following table shows that $S^{x_i=0}$ is not a linear subspace.
 This is a contradiction.
 \[
{\begin{array}{llllllllllllllll}
        & a_1  \cdots  \hat{a}_i  \cdots  a_n & = & a_1 \cdots  \hat{0}  \cdots  a_n &  \in S^{x_i=0} \\
\oplus  & b_1  \cdots  \hat{b}_i  \cdots  b_n & = & b_1 \cdots  \hat{0}  \cdots  b_n &  \in S^{x_i=0}\\
\hline
        & c_1  \cdots  \hat{c}_i  \cdots  c_n & = & c_1 \cdots  \hat{0}  \cdots  c_n &  \notin S^{x_i=0}\\
\end{array} }
\]
where $\hat{~}$ means deleting this bit from the bit string.
So for any $i\in[n]$, there is at least one of $a_i, b_i$ is 1.

Moreover,
 if $a_1=a_2=\cdots=a_n=b_1=b_2=\cdots=b_n=1$,
then $c_1=c_2=\cdots=c_n=0$.
It implies that $(0, \ldots, 0)\notin S^{x_1=0}$.
This contradicts that $S^{x_1=0}$ is a linear subspace of
$\mathbb{Z}_2^{n-1}$.
Thus there exists $i\in[n]$ such that $a_i=0$ or $b_i=0$.
On the other hand, from the previous claim, there is at least one of $a_i$ or $b_i$ that is $1$.
So $a_i \not = b_i$.
\end{proof}
For the proof of the following lemma, we will use our Tableau Calculus.
It will be used several times in this paper with some variations.

\begin{lemma}\label{arity-reduction-affine}
Fix any $x\neq 0$. If $\widehat{\mathcal{F}}$ contains a signature
$f\notin\mathscr{A}$,
then there exists  a unary signature $u \notin\mathscr{A}$,
such that
\[\operatorname{Pl-Holant}(\widehat{\mathcal{EQ}}, [0, 1], [1, x], \widehat{\mathcal{F}}, u)
\le_{\rm T}
\operatorname{Pl-Holant}(\widehat{\mathcal{EQ}}, [0, 1], [1, x], \widehat{\mathcal{F}}).\]
\end{lemma}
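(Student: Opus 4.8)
The plan is to perform an arity reduction on a non-affine signature $f\in\widehat{\mathcal{F}}$, ending with a non-affine unary signature, with all intermediate signatures obtained by planar gadgets built from $f$, $\widehat{\mathcal{EQ}}$, $[0,1]$, and $[1,x]$. First I would normalize: using Lemma~\ref{[0,1]-EQ-hat-wight-0-neq-0} applied to $\mathscr{C}=\mathscr{A}$ (note $[0,1]\in\widehat{\mathcal{EQ}}$ already, and $[1,x]$ is available), we may assume $f_{00\cdots 0}=1$ and hence $f$ is not identically zero. Next I would reduce to the case where $\operatorname{supp}(f)$ is a full affine space. If $\operatorname{supp}(f)$ is \emph{not} affine, the idea is to find a coordinate to restrict on so as to stay non-affine: since $\operatorname{supp}(f)$ is not a linear subspace (after the normalization $0\cdots 0\in\operatorname{supp}(f)$, affine means linear), either some $\operatorname{supp}(f)^{x_i=0}$ is not linear — in which case $f^{x_i=0}=\partial^{\{i\}}_{[1,0]}(f)$ has non-affine support, so $f^{x_i=0}\notin\mathscr{A}$ and we recurse on lower arity — or all $\operatorname{supp}(f)^{x_i=0}$ are linear, and then Lemma~\ref{argue-a1-0-b1-1} produces ${\bf a},{\bf b}\in\operatorname{supp}(f)$ with ${\bf a}\oplus{\bf b}\notin\operatorname{supp}(f)$ and an index $i$ with $a_i\neq b_i$; pinning that $x_i$ appropriately (to $0$ or to $1$, extracting via $[1,0]$ or $[0,1]$, the latter from $\widehat{\mathcal{EQ}}$) keeps ${\bf a}$ and ${\bf b}$ in the restricted support while killing linearity, again yielding a non-affine restriction of strictly smaller arity.

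So I would next treat the core case: $\operatorname{supp}(f)$ is an affine (equivalently, after normalization, linear) subspace of dimension $k<n$ but $f\notin\mathscr{A}$. Here by Corollary~\ref{f-affine-iff-f*-affine} and Corollary~\ref{corollary-compressed-multilinear-unique} the compressed signature $\underline{f}$ on a set of free variables $X=\{x_{i_1},\dots,x_{i_k}\}$ is a full-support signature with $\underline{f}(X)=c\cdot\mathfrak{i}^{P(X)}$ for some multilinear $P\in\mathbb{Z}_4[X]$ that either has total degree $\ge 3$ or has a cross term with odd coefficient (Lemma~\ref{multilinear-polynomial-unique-affine-definition}). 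Using Lemma~\ref{combine-affine-support-[1,x]} with $a=1$ (so $a^4=1$), $\partial^{[n]\setminus S}_{[1,1]}(f)$ is exactly $\underline{f}$ and is non-affine; and $[1,1]=(=_1)$ is available, or more robustly one gets $[1,1]$ as $\partial_{[1,x]}$-combinations of $\widehat{\mathcal{EQ}}$ signatures. This lets me assume $f$ has full support $\{0,1\}^n$, $f=\mathfrak{i}^{P}$ with $P$ bad as above, and $n\ge 2$. If $n\le 1$ we are already done (a non-affine unary signature is what we want). For $n\ge 2$, if there is a coordinate $i$ such that $P^{x_i=0}$ is still bad (degree $\ge 3$ or odd cross term), then $f^{x_i=0}=\partial^{\{i\}}_{[1,0]}(f)$ is non-affine and we recurse; the worry is that pinning might destroy the witnessing term. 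This is exactly where the Tableau Calculus is needed: one tracks which coordinates the offending monomial of $P$ involves and shows that pinning \emph{some} other coordinate (there is one, by an arity-counting argument analogous to the $k\ge 4$ hypothesis in Lemma~\ref{[1,0]-[0,1]-pinning-implies-affine-arity-4}) preserves a bad monomial — and when no such other coordinate exists, i.e. when $P$'s bad part already involves all of a small set of variables, one instead combines $f$ with a unary from $\{[1,0],[0,1],[1,x]\}$ or with an $\widehat{\mathcal{EQ}}$ signature to merge two variables or to average, lowering the arity while keeping non-affineness. Iterating, the arity strictly decreases at each step, so we terminate at a unary $u\notin\mathscr{A}$.

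The main obstacle I anticipate is the core full-support case when the arity is small (say $n=2$ or $n=3$) and every single-coordinate pinning $f^{x_i=b}$ lands back in $\mathscr{A}$: then arity reduction by pinning alone fails, and one must extract a non-affine unary by a cleverer planar gadget — e.g. connecting $f$ to an $\widehat{\mathcal{EQ}}$ signature or to copies of $[1,x]$ to realize an intermediate signature whose non-affineness is detectable by Lemma~\ref{binary-affine-compressed function} or Lemma~\ref{ternary-affine-compressed function}, and then pinning \emph{that}. Controlling this uniformly is precisely what the Tableau Calculus is designed for, and I expect the bulk of the work (and the appeal to $x\neq 0$, which supplies the genuinely ``non-power-of-$\mathfrak{i}$'' ingredient that a unary non-affine signature must eventually exploit) to be concentrated there. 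A secondary subtlety is maintaining planarity throughout: every $\partial$ and every merge must be arranged as a planar gadget, which is why the derivative operator of Definition~\ref{derivative} (with its clockwise/counterclockwise matching convention) is used rather than arbitrary tensor contractions; I would note at each step that the gadgets are planar, as in the proofs of Lemma~\ref{constructing-[1,a]} and Lemma~\ref{how-to-flip-two-bits-by-[0,1,0]-tensor-2}.
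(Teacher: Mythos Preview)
Your proposal has a genuine gap in the non-affine-support branch. You write that once Lemma~\ref{argue-a1-0-b1-1} hands you ${\bf a},{\bf b}\in\operatorname{supp}(f)$ with ${\bf c}={\bf a}\oplus{\bf b}\notin\operatorname{supp}(f)$ and an index $i$ with $a_i\neq b_i$, you will ``pin that $x_i$ appropriately (to $0$ or to $1$)'' and thereby ``keep ${\bf a}$ and ${\bf b}$ in the restricted support while killing linearity''. But if $a_i\neq b_i$, pinning $x_i$ to either value keeps exactly \emph{one} of ${\bf a},{\bf b}$ and discards the other; there is no way to retain both by a single pinning. Since the hypothesis at this point is precisely that every pinning $f^{x_i=0}$ and $f^{x_i=1}$ is already affine, pinning alone cannot produce a non-affine signature of smaller arity here.

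The paper's device is different and is exactly where the Tableau Calculus lives (you have placed it in the wrong branch). One forms $h=\partial^{\{1\}}_{[1,x]}(f)$, which \emph{mixes} $f^{x_1=0}$ and $f^{x_1=1}$ rather than selecting one, and then argues by comparing norms of entries: since all relevant values of $f$ are powers of ${\frak i}$ (because the pinnings are affine with $f_{0\cdots 0}=1$), a nonzero entry $h_{\alpha}=f_{0\alpha}+xf_{1\alpha}$ that is a sum of two powers of ${\frak i}$ has norm $\sqrt{2}$ or $2$, whereas an entry coming from a single power of ${\frak i}$ has norm $1$; two nonzero entries of unequal norm force $h\notin\mathscr{A}$. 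When this fails because a key entry $h_{\alpha}$ vanishes, one extracts $[1,-x]$, builds $\widetilde{h}=\partial^{\{1\}}_{[1,-x]}(f)$, and the pair $(h,\widetilde{h})$ cannot both vanish at that entry, closing the case. Your proposal never uses $[1,x]$ in this branch, which is the essential ingredient. (You also omit the trivial but important opening observation that if $x^4\neq 1$ then $u=[1,x]$ itself is the desired non-affine unary; the delicate analysis is only needed when $x^4=1$.) In the affine-support branch your outline is closer to the paper---compression via Lemma~\ref{combine-affine-support-[1,x]}, then $k\ge 4$ handled by Lemma~\ref{[1,0]-[0,1]-pinning-implies-affine-arity-4}---but the small-$k$ endgame ($k=2,3$) is a concrete case analysis using Lemmas~\ref{binary-affine-compressed function} and~\ref{ternary-affine-compressed function}, not a further Tableau Calculus.
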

\begin{proof}
Let $f$ have arity $n \ge 1$. If $n=1$ we can  choose $u=f$.
If $x^4\neq  1$,
we can  choose $u=[1, x] \notin \mathscr{A}$, as $x \neq 0$ by assumption.
So  we may assume  $n\geq 2$ and $x^4 =1$.
 We prove the Lemma by constructing some signature $g\notin\mathscr{A}$
of arity less than $n$, such that
\[\operatorname{Pl-Holant}(\widehat{\mathcal{EQ}}, [0, 1], [1, x], \widehat{\mathcal{F}}, g)
\le_{\rm T}
\operatorname{Pl-Holant}(\widehat{\mathcal{EQ}}, [0, 1], [1, x], \widehat{\mathcal{F}}).\]
 By Lemma~\ref{[0,1]-EQ-hat-wight-0-neq-0}, we can assume that $f_{00\ldots 0}=1$.
 We have $[0,1]$ explicitly given, as well as
$[1,0] \in \widehat{\mathcal{EQ}}$.
 If there exists $i\in[n]$ such that $f^{x_i=0}
\notin\mathscr{A}$ or $f^{x_i=1} \notin\mathscr{A}$,
then we can choose $g$ to be one of
these, which has arity $n-1$.
 So we may assume that
  both $f^{x_i=0} \in \mathscr{A}$ and $f^{x_i=1}  \in \mathscr{A}$
 for all $i\in[n]$.

We first prove that if supp$(f)$ is not an affine subspace,
then we can construct some signature $g \notin\mathscr{A}$
of arity less than $n$ in $\operatorname{Pl-Holant}(\widehat{\mathcal{EQ}}, [0, 1], [1, x], \widehat{\mathcal{F}})$.
 Suppose supp$(f)$ is not affine.
Note that a subset of $\mathbb{Z}_2^n$ containing $(0, 0, \ldots, 0)$
is affine iff it is a linear subspace.
 Since $(0, 0, \ldots, 0)\in {\rm supp}(f)$,
By Lemma~\ref{argue-a1-0-b1-1},
  there exist ${\bf a} =a_1a_2\ldots a_n,
{\bf b} =b_1b_2\ldots b_n \in {\rm supp}(f)$,
such that ${\bf c} =  {\bf a} \oplus {\bf b}
=  c_1 c_2 \ldots c_n  \notin {\rm supp}(f)$,
and there exists  $i\in[n]$ such that $a_i \not = b_i$.
Without loss of generality, we assume that $i=1$, $a_1=0$,
$b_1=1$, and $c_1=1$.
   We denote  ${\bf a}'=a_2\ldots a_n, {\bf b}'=b_2\ldots b_n$
and $ {\bf c}'=c_2\ldots c_n$.

 \[
{\begin{array}{lllllll}
 &{\bf a} & = & a_1  {\bf a}' & =  & 0~  a_2\ldots a_n & \in {\rm supp}(f)  \\
 \oplus &{\bf b} & = & b_1 {\bf b}' & = & 1~ b_2\ldots b_n
 & \in {\rm supp}(f)  \\
\hline
 & {\bf c}  & = & c_1   {\bf c}'  & =  & 1~ c_2\ldots c_n  & \not \in {\rm supp}(f) \\
\end{array} }
\]

By connecting the unary signature $[1, x]$ to the first variable of $f$ we get
 $h=\partial^{\{1\}}_{[1, x]}(f)$, which has arity $n-1$.
 If $h \notin\mathscr{A}$, then we are done.
 Therefore we may assume $h \in \mathscr{A}$.
 Note that
 $h_{\alpha}=f_{0\alpha}+xf_{1\alpha}$ for all $\alpha\in\{0, 1\}^{n-1}$.
The next Claim will be used several times in the following proof.
\vspace{.1in}

\noindent
{\bf Claim}:
If there exists $\alpha\in\{0, 1\}^{n-1}$
 such that
 $h_{\alpha}=0$ and $f_{0\alpha}\neq 0$, then we can construct $[1, -x]$.

\vspace{.1in}
To prove this Claim,
 we can first obtain the unary
signature $[f_{0\alpha}, f_{1\alpha}]$ from $f$ by pinning on all
variables $x_2, \ldots, x_n$ according to $\alpha$,
using  $[1, 0] \in \widehat{\mathcal{EQ}}$ and the explicitly given
$[0, 1]$.
We have $x \in \{ \pm 1, \pm {\frak i}\}$.
 If $x=\pm 1$, then $f_{1\alpha} = -x f_{0\alpha}$ from $h_{\alpha}=0$,
and so $[f_{0\alpha}, f_{1\alpha}]=f_{0\alpha}[1, -x]$.
 Thus we have $[1, -x]$ up to the nonzero scalar $f_{0\alpha}$.
 If $x=\pm {\frak i}$, then from $[1, 0, 1, 0] \in \widehat{\mathcal{EQ}}$
 we have $\partial_{[0, 1]}([1, 0, 1, 0])=[0, 1 ,0]$ and $\partial_{[1, x]}([0, 1, 0])=[x, 1]=x[1, -x]$.

Once we have $[1, -x]$, we can construct
another signature $\widetilde{h}=\partial_{[1, -x]}(f)$
in addition to $h$.
The analysis below will use both $h$ and $\widetilde{h}$.
\begin{eqnarray*}
h_{\alpha} &=&  f_{0 \alpha}+ xf_{1\alpha}\\
\widetilde{h}_{\alpha} &=& f_{0 \alpha}-xf_{1\alpha}
\end{eqnarray*}
In the following we
consider various cases according to the membership of
$\bar{a}_1 \bf{a}'$ and $\bar{b}_1\bf{b}'$
in ${\rm supp}(f)$.

 \begin{itemize}
 \item Suppose $\bar{a}_1{\bf a}'\in {\rm supp}(f)$
and $ \bar{b}_1{\bf b}'\in {\rm supp}(f)$.

Note that supp$(f^{x_1=0})$ is a
linear subspace since $f^{x_1=0}$ is affine and $f_{00\ldots 0}\neq 0$.
By ${\bf a} = a_1 {\bf a}' = 0 {\bf a}'
\in {\rm supp}(f)$, we have ${\bf a}' \in {\rm supp}(f^{x_1=0})$.
 By $\bar{b}_1{\bf b}' = 0 {\bf b}'\in {\rm supp}(f)$,
we have ${\bf b}' \in {\rm supp}(f^{x_1=0})$.
By definition, ${\bf a}' \oplus {\bf b}' = {\bf c}'$
and thus ${\bf c}'\in {\rm supp}(f^{x_1=0})$.
 This implies that $f_{\bar{c}_1{\bf c}'}\neq 0$.

 Since all of $f_{00\ldots 0}, f_{a_1{\bf a}'}, f_{\bar{b}_1{\bf b}'}, f_{\bar{c}_1{\bf c}'}$ are nonzero entries of $f^{x_1=0} \in \mathscr{A}$,
 they are all powers of ${\frak i}$ as $f_{00\ldots 0}=1$.
We have $h_{{\bf c}'}=xf_{c_1{\bf c}'}+f_{\bar{c}_1{\bf c}'}
= f_{\bar{c}_1{\bf c}'}$ since $c_1 =1$ and $f_{c_1{\bf c}'}=0$.
 Hence $|h_{{\bf c}'}|=1$ since it is a power of ${\frak i}$.
Moreover,  since both $f_{a_1{\bf a}'}$ and $f_{\bar{a}_1{\bf a}'}$ are nonzero entries of $f^{x_2=a_2} \in \mathscr{A}$ and $f_{a_1{\bf a}'}$ is a
power of ${\frak i}$, so is
 $f_{\bar{a}_1{\bf a}'}$.
Any nonzero sum of two quantities that are both powers of ${\frak i}$
must have norm either $2$ or $\sqrt{2}$.
 This implies that if $h_{{\bf a}'}=f_{a_1{\bf a}'}+xf_{\bar{a}_1{\bf a}'}$
is nonzero, then $|h_{{\bf a}'}|=2$ or $\sqrt{2}$.
 This implies that $|h_{{\bf a}'}|\neq |h_{{\bf c}'}|$ and both are nonzero. This
 contradicts that $h \in \mathscr{A}$, by Proposition~\ref{A-has-same-norm-etc}.

Therefore $h_{{\bf a}'}=f_{a_1{\bf a}'}+xf_{\bar{a}_1{\bf a}'}= 0$.
 Then we have $[1, -x]$ and  obtain $\widetilde{h}$ by the Claim.
 If $\widetilde{h} \not \in \mathscr{A}$, then we are done since the arity of $\widetilde{h}$ is $n-1$.
 Therefore we may assume $\widetilde{h} \in \mathscr{A}$.
 We have $|\widetilde{h}_{{\bf c}'}|=|f_{\bar{c}_1{\bf c}'}-xf_{c_1{\bf c}'}|=1$ since $c_1=1$,  $f_{c_1{\bf c}'}=0$ and $f_{\bar{c}_1{\bf c}'}$ is
a power of ${\frak i}$.

We already have $h_{{\bf a}'}=0$. If additionally
 $\widetilde{h}_{{\bf a}'}=f_{a_1{\bf a}'}-xf_{\bar{a}_1{\bf a}'}=0$, then we have
\begin{eqnarray*}
 & & f_{a_1{\bf a}'}+xf_{\bar{a}_1{\bf a}'}=0\\
 & & f_{a_1{\bf a}'}-xf_{\bar{a}_1{\bf a}'}=0
\end{eqnarray*}
 This implies that $f_{a_1{\bf a}'}=0$ and it is a  contradiction to
${\bf a} = a_1{\bf a}' \in {\rm supp}(f)$.
Therefore
 $\widetilde{h}_{{\bf a}'}=f_{a_1{\bf a}'}-xf_{\bar{a}_1{\bf a}'}\neq 0$.
Since both $f_{a_1{\bf a}'}$ and $xf_{\bar{a}_1{\bf a}'}$ are powers
of ${\frak i}$, the norm $|\widetilde{h}_{{\bf a}'}|$  is either
2 or $\sqrt{2}$.
 This implies that
 $|\widetilde{h}_{{\bf a}'}|\neq |\widetilde{h}_{{\bf c}'}|$, and
both are nonzero. This
 contradicts that $\widetilde{h} \in \mathscr{A}$, by Proposition~\ref{A-has-same-norm-etc}.

 \item Suppose $\bar{a}_1{\bf a}'\notin {\rm supp}(f)$
and $ \bar{b}_1{\bf b}'\notin {\rm supp}(f)$.

 We have $h_{{\bf a}'}=f_{a_1{\bf a}'}+ x f_{\bar{a}_1{\bf a}'}\neq 0$
and $h_{{\bf b}'}=f_{\bar{b}_1{\bf b}'} + x f_{b_1{\bf b}'}\neq 0$,
 by $f_{\bar{a}_1{\bf a}'}=f_{\bar{b}_1{\bf b}'}=0$ and
 $f_{a_1{\bf a}'}\neq 0$, $f_{b_1{\bf b}'}\neq 0$, and also $x \not =0$.
We show next that $h_{0\ldots 0} =0$. Suppose
for a contradiction that $h_{0\ldots 0}\neq 0$.
Since $h \in \mathscr{A}$, and $0\ldots 0 \in {\rm supp}(h)$,
supp$(h)$ is a  linear subspace.
As ${\bf a}', {\bf b}' \in {\rm supp}(h)$ we have ${\bf c}'\in{\rm supp}(h)$.
 \[
{\begin{array}{*{3}c}
 ~ &   {\bf a}' ~~~\in {\rm supp}(h) \\
 \oplus &   {\bf b}' ~~~\in {\rm supp}(h)  \\
\hline
 ~ &   {\bf c}' ~~~\in {\rm supp}(h)  \\
\end{array} }
\]
 By $h_{{\bf c}'}=xf_{c_1{\bf c}'}+f_{\bar{c}_1{\bf c}'}\neq 0$, we have $f_{\bar{c}_1{\bf c}'}\neq 0$ since $f_{c_1{\bf c}'}=0$.
Thus ${\bf c}' \in {\rm supp}(f^{x_1=0})$, as $\bar{c}_1 =0$.
 Since ${\bf a}' \in {\rm supp}(f^{x_1=0})$,
and the support of $f^{x_1=0}$ is a
linear subspace, we have ${\bf b}'\in{\rm supp}(f^{x_1=0})$.
 \[
{\begin{array}{*{3}c}
 ~ & {\bf a}' ~~~\in {\rm supp}(f^{x_1=0}) \\
 \oplus  &    {\bf c}' ~~~\in {\rm supp}(f^{x_1=0})  \\
\hline
 ~ &    {\bf b}' ~~~\in {\rm supp}(f^{x_1=0})  \\
\end{array} }
\]
 This contradicts that $f_{\bar{b}_1{\bf b}'}=0$.

 Therefore $h_{0\ldots 0}=f_{00\ldots 0}+xf_{10\cdots 0}=0$. Then we
can obtain $[1, -x]$ and $\widetilde{h}$ by the Claim.
 If $\widetilde{h} \not \in \mathscr{A}$ then we are done.
 Therefore we may assume $\widetilde{h} \in \mathscr{A}$.
 Moreover, $\widetilde{h}_{0\ldots 0}=f_{00\ldots 0}-xf_{10\cdots 0}\neq 0$ by $f_{00\ldots 0}+xf_{10\ldots 0}=0$ and $f_{00\ldots 0}\neq 0$.
 Thus supp$(\widetilde{h})$ is a linear subspace.

Note that
$\widetilde{h}_{{\bf a}'}=f_{a_1{\bf a}'} - xf_{\bar{a}_1{\bf a}'}\neq 0$
and $\widetilde{h}_{{\bf b}'}=f_{\bar{b}_1{\bf b}'}
- x f_{b_1{\bf b}'} \neq 0$,
 by $f_{\bar{a}_1{\bf a}'}=f_{\bar{b}_1{\bf b}'}=0$ and $f_{a_1{\bf a}'}\neq 0$, $f_{b_1{\bf b}'}\neq 0$, $x \not =0$.
Thus ${\bf a}' \in{\rm supp}(\widetilde{h})$
and ${\bf b}' \in{\rm supp}(\widetilde{h})$. It follows that
 ${\bf c}' ={\bf a}'\oplus {\bf b}' \in{\rm supp}(\widetilde{h})$.
 This implies that $\widetilde{h}_{{\bf c}'}=f_{\bar{c}_1{\bf c}'}-
xf_{c_1{\bf c}'}\neq 0$.
 So $f_{\bar{c}_1{\bf c}'}\neq 0$ since $f_{c_1{\bf c}'}=0$.
We have ${\bf a}' \in{\rm supp}(f^{x_1=0})$ and ${\bf c}'
\in{\rm supp}(f^{x_1=0})$.
 Because the support of $f^{x_1=0}$ is a linear
subspace, it follows that ${\bf b}'\in{\rm supp}(f^{x_1=0})$.
 \[
{\begin{array}{*{3}c}
 ~ &  {\bf a}' ~~~\in {\rm supp}(f^{x_1=0})\\
 \oplus  &    {\bf c}'~~~\in {\rm supp}(f^{x_1=0}) \\
\hline
 ~ &   {\bf b}'~~~\in {\rm supp}(f^{x_1=0}) \\
\end{array} }
\]
 This contradicts that $f_{\bar{b}_1{\bf b}'}=0$.

 \item Suppose $\bar{a}_1{\bf a}'\in {\rm supp}(f)$
and $\bar{b}_1{\bf b}'\notin {\rm supp}(f)$.

 Note that $h_{{\bf b}'}=f_{\bar{b}_1{\bf b}'}+xf_{b_1{\bf b}'}
= xf_{b_1{\bf b}'}$,  since $f_{\bar{b}_1{\bf b}'} =0$.
We have $h_{{\bf b}'} \not =0$ since $b_1{\bf b}' \in {\rm supp}(f)$,
and $x \not =0$.
As $f_{00\ldots0} =1$, and $f_{a_1 {\bf a}'} \not =0$, by $f^{x_1=0}
\in \mathscr{A}$,
and both $0\ldots0$ and ${\bf a}' \in {\rm supp}(f^{x_1=0})$,
we have $f_{a_1 {\bf a}'}$ is a power of ${\frak i}$.
By hypothesis $f_{\bar{a}_1{\bf a}'} \not =0$. By pinning $x_2$
to the same value $a_2 = 0$ or 1 in both $f_{a_1 {\bf a}'} \not =0$
and $f_{\bar{a}_1{\bf a}'} \not =0$,  and by $f^{x_2 = a_2}
\in  \mathscr{A}$,
we conclude that the value  $f_{\bar{a}_1{\bf a}'}$ is a power of $i$.
As $\bar{a}_1 = b_1 =1$, by pinning $x_1$ to 1,
and $f^{x_1=1} \in \mathscr{A}$, we conclude that the nonzero value
$f_{b_1{\bf b}'}$ is a power of ${\frak i}$.
By $x^4 =1$,  $h_{{\bf b}'}= xf_{b_1{\bf b}'}$
is also a power of ${\frak i}$.

To recap,
we have $f_{a_1 {\bf a}'}$, $f_{\bar{a}_1{\bf a}'}$, $f_{b_1{\bf b}'}$
and $h_{{\bf b}'}$ are all powers of ${\frak i}$.  In particular,
$|h_{{\bf b}'}| = 1$.
Moreover $h_{{\bf a}'} = f_{a_1 {\bf a}'} + x f_{\bar{a}_1{\bf a}'}$
is a sum of two quantities that are both powers of ${\frak i}$.
  If $h_{{\bf a}'}\neq 0$, then its norm is 2 or $\sqrt{2}$.
 This is a contradiction to $h \in \mathscr{A}$,
 by Proposition~\ref{A-has-same-norm-etc}.

Thus $h_{{\bf a}'}= 0$. Then  we can construct $[1, -x]$ and $\widetilde{h}$ by
the Claim.
 If $\widetilde{h} \not \in \mathscr{A}$, then we are done.
Otherwise, we have
\begin{eqnarray*}
h_{{\bf a}'}  & = & f_{a_1{\bf a}'} +xf_{\bar{a}_1{\bf a}'} = 0\\
\widetilde{h}_{{\bf a}'}  & = & f_{a_1{\bf a}'} -xf_{\bar{a}_1{\bf a}'}
\end{eqnarray*}
If $\widetilde{h}_{{\bf a}'}  =0$, then we would have  $f_{a_1{\bf a}'} =0$,
 a contradiction. Hence
$\widetilde{h}_{{\bf a}'} \not =0$ and is the sum of two
quantities that are both powers of ${\frak i}$. Hence $|\widetilde{h}_{{\bf a}'}|$
is  2 or $\sqrt{2}$.
Yet,
$\widetilde{h}_{{\bf b}'}  =  f_{\bar{b}_1{\bf b}'}-xf_{{b}_1{\bf b}'}
= -xf_{{b}_1{\bf b}'}$ is a power of ${\frak i}$, as
$f_{\bar{b}_1{\bf b}'} =0$ by hypothesis.
Thus $|\widetilde{h}_{{\bf b}'}|=1$.
 This is a contradiction to $\widetilde{h} \in \mathscr{A}$, by Proposition~\ref{A-has-same-norm-etc}.

\item
Suppose $\bar{a}_1{\bf a}'\notin {\rm supp}(f)$
and $\bar{b}_1{\bf b}'\in {\rm supp}(f)$.

Consider $h_{{\bf a}'}=f_{a_1{\bf a}'}+xf_{\bar{a}_1{\bf a}'}$.
Since $f_{\bar{a}_1{\bf a}'} =0$
and ${\bf a} = a_1{\bf a}' \in {\rm supp}(f)$,
$h_{{\bf a}'}= f_{a_1{\bf a}'} \not =0$.
As $f_{00\ldots0} =1$, $a_1 =0$ and $f_{a_1{\bf a}'} \not =0$,
by $f^{x_1=0}
\in \mathscr{A}$,
$f_{a_1{\bf a}'}$ is a power of ${\frak i}$, and so is $h_{{\bf a}'}$.
In particular $|h_{{\bf a}'}| =1$.

Also by hypothesis, $f_{\bar{b}_1{\bf b}'} \not =0$.
As $\bar{b}_1 =0$ and $f_{00\ldots0} =1$, by pinning $x_1$ to $0$,
and $f^{x_1=0}
\in \mathscr{A}$, we have $f_{\bar{b}_1{\bf b}'}$ is a power of ${\frak i}$.
Then pinning $x_2$ to $b_2$ in $f_{\bar{b}_1{\bf b}'}$
and the nonzero value $f_{{b}_1{\bf b}'}$ we have
 $f_{{b}_1{\bf b}'}$ is also  a power of ${\frak i}$.

We have $h_{{\bf b}'}=f_{\bar{b}_1{\bf b}'} + x f_{b_1{\bf b}'}$,
which is a sum of two quantities both a power of ${\frak i}$.
If $h_{{\bf b}'} \not = 0$, it would have norm $2$ or $\sqrt{2}$.
As $|h_{{\bf a}'}| =1$, and $h \in \mathscr{A}$, this is a contradiction.
Hence $h_{{\bf b}'} =0$.

Then  we can construct $[1, -x]$ and $\widetilde{h}$ by
the Claim.
If $\widetilde{h} \not \in \mathscr{A}$, then we are done.
Otherwise, we have
\begin{eqnarray*}
h_{{\bf b}'}  & = & f_{\bar{b}_1{\bf b}'} + x f_{b_1{\bf b}'} =0\\
\widetilde{h}_{{\bf b}'}  & = &  f_{\bar{b}_1{\bf b}'}  -x f_{b_1{\bf b}'}
\end{eqnarray*}
If $\widetilde{h}_{{\bf b}'}  =0$, then we would have $f_{\bar{b}_1{\bf b}'}
=0$,  a contradiction. Hence
$|\widetilde{h}_{{\bf b}'}| = 2$ or $\sqrt{2}$.
Yet,
$\widetilde{h}_{{\bf a}'}  = f_{a_1{\bf a}'} -xf_{\bar{a}_1{\bf a}'}
= f_{a_1{\bf a}'}$ is a power of ${\frak i}$, hence of norm 1.
This is a contradiction to $\widetilde{h} \in \mathscr{A}$, by Proposition~\ref{A-has-same-norm-etc}.
 \end{itemize}
The above argument is what we call the Tableau Calculus.

\vspace{.1in}
Now we can assume that supp$(f)$ is an affine subspace,
indeed a linear subspace since $f_{00\ldots0} =1$.
Suppose it has dimension $k$.
If $k=0$, then $f \in \mathscr{A}$. This is a contradiction.
For $k\geq 1$, let $S=\{x_{i_1}, x_{i_2}, \ldots, x_{i_k}\}$ be
a set of free variables defining the linear subspace supp$(f)$.
Let $\check{f}$ be obtained from $f$ by connecting
$[1, x]$ to every variable outside $S$ (if there is any),
$\check{f}=\partial_{[1, x]}^{[n] \setminus S}(f)$.
Note that for every assignment to $S$,
the sum in the expression defining $\check{f}$ has exactly one nonzero
entry of $f$, multiplied by a suitable power of $x \not =0$.
Hence all  entries of $\check{f}$ are nonzero.
By Lemma~\ref{combine-affine-support-[1,x]}, $\check{f} \not \in \mathscr{A}$.
If there exists $j\in[k]$ such that $\check{f}^{x_{i_j}=0}$
or $\check{f}^{x_{i_j}=1}$
is not affine, then we get a signature not in $\mathscr{A}$
 with arity $k-1<n$. This completes the proof.
Therefore, we may assume
 both  $\check{f}^{x_{i_j}=0}$ and $\check{f}^{x_{i_j}=1}$ are affine for all
 $j\in[k]$.
In the following we will rename the variables of $\check{f}$
as $x_1,  \ldots, x_k$.

We claim that if we have $[1, {\frak i}^r]$, where $r\in\{0, 1, 2, 3\}$,
then we can construct $[1, {\frak i}^{-r}]$.
If $r=0$ or $2$, then ${\frak i}^r={\frak i}^{-r}$ and we are done.
If $r=1$ or $3$, we have $\partial_{[0, 1]}([1, 0, 1, 0])=[0, 1, 0]$,
where $[1, 0, 1, 0] \in  \widehat{\mathcal{EQ}}$
and $[0, 1]$ is given in the lemma.
Then we can obtain $\partial_{[1, {\frak i}^r]}([0, 1, 0])=[{\frak i}^r, 1]={\frak i}^r[1, {\frak i}^{-r}]$,
a nonzero multiple of $[1, {\frak i}^{-r}]$.

Now we finish the proof by constructing a  signature
not in $\mathscr{A}$ with arity less than $n$.
\begin{enumerate}
\item If $k=1$, then $\check{f}$ is the desired signature
not in $\mathscr{A}$.

\item If $k=2$,
then all four values
$\check{f}_{00}$, $\check{f}_{01}$, $\check{f}_{10}$ and $\check{f}_{11}$
 are powers of $i$.
This can be seen by noting that
$\check{f}_{00}=1$,  and the signatures
 $\check{f}^{x_1=0}=[\check{f}_{00}, \check{f}_{01}]$,
$\check{f}^{x_2=0}=[\check{f}_{00}, \check{f}_{10}]$ and
 $\check{f}^{x_1=1}=[\check{f}_{10}, \check{f}_{11}]$
all belong to $\mathscr{A}$.
So there exist $r, s, t\in\{0, 1, 2, 3\}$ such that
\[\check{f}=(\check{f}_{00}, \check{f}_{01}, \check{f}_{10}, \check{f}_{11})
=(1, {\frak i}^r, {\frak i}^s, {\frak i}^{t}).\]
If $r+s\equiv t\mod 2$, then $\check{f} \in \mathscr{A}$
 by Lemma~\ref{binary-affine-compressed function}.
This is a contradiction. Therefore, we have ${\frak i}^t=\pm {\frak i}^{r+s+1}$.
Note that we have $\check{f}^{x_1=0}=[1, {\frak i}^r]$. Thus we can
construct  $[1, {\frak i}^{-r}]$ by the claim and we can obtain
 $\partial_{[1, {\frak i}^{-r}]}^{\{2\}}(\check{f})=[2, {\frak i}^s(1\pm {\frak i})]$.
 However $|{\frak i}^s(1\pm {\frak i})|=\sqrt{2}$. Thus $[2, {\frak i}^s(1\pm {\frak i})]\notin\mathscr{A}$, by Proposition~\ref{A-has-same-norm-etc}.

\item If $k=3$, then  there exist $r, s, t\in\{0, 1, 2, 3\}$
and $\epsilon_j\in\{1, -1\}$ for $1 \le j \le 4$ such that
\[M_{x_1, x_2x_3}(\check{f})
=
\left[\begin{matrix}
f^{000} & f^{001} & f^{010} & f^{011}\\
f^{100} & f^{101} & f^{110} & f^{111}
\end{matrix}\right]
=
\left[\begin{matrix}
1 & {\frak i}^r & {\frak i}^s & \epsilon_1 {\frak i}^{r+s} \\
{\frak i}^t &  \epsilon_2{\frak i}^{r+t} & \epsilon_3{\frak i}^{s+t} & \epsilon_4 {\frak i}^{r+s+t}
\end{matrix}\right].
\]
This can be seen by observing that
all signatures
 $\check{f}^{x_k=0}$ for $k=1, 2, 3$ and $\check{f}^{x_1=1}$ are affine.

If $\epsilon_1\epsilon_2\epsilon_3\epsilon_4=1$, then $\check{f}
\in\mathscr{A}$ by Lemma~\ref{binary-affine-compressed function}.
This is a contradiction.
Therefore, $\epsilon_4=-\epsilon_1\epsilon_2\epsilon_3$.

Since we have $\partial_{[1, 0]}^{\{2, 3\}}=
[f^{000}, f^{100}] = [1, {\frak i}^t]$ and $\partial_{[1, 0]}^{\{1, 2\}}(\check{f})=
[f^{000}, f^{001}] = [1, {\frak i}^r]$, by the claim we also have
$[1, {\frak i}^{-t}]$ and $[1, {\frak i}^{-r}]$.

We have
\[\partial_{[1, {\frak i}^{-t}]}^{\{1\}}(\check{f})
=(2,~~ (1+\epsilon_2){\frak i}^r,~~ (1+\epsilon_3){\frak i}^s,~~ \epsilon_1 (1 - \epsilon_2
\epsilon_3){\frak i}^{r+s})\]

\begin{itemize}
\item If $\epsilon_2 = - \epsilon_3$ or $\epsilon_2=\epsilon_3=1$,
then $\partial_{[1, {\frak i}^{-t}]}^{\{1\}}(\check{f})$
is not affine since its support is not affine.
Thus we are done. So we may assume in the following $\epsilon_2 =
\epsilon_3=-1$, and
\[M_{x_1, x_2x_3}(\check{f})
=
\left[\begin{matrix}
f^{000} & f^{001} & f^{010} & f^{011}\\
f^{100} & f^{101} & f^{110} & f^{111}
\end{matrix}\right]
=
\left[\begin{matrix}
1 & {\frak i}^r & {\frak i}^s & \epsilon_1 {\frak i}^{r+s} \\
{\frak i}^t &  -{\frak i}^{r+t} & -{\frak i}^{s+t} & -\epsilon_1 {\frak i}^{r+s+t}
\end{matrix}\right].
\]

\item If $\epsilon_1=1$ and $\epsilon_2=\epsilon_3=-1$, then
\[\partial_{[1, {\frak i}^{-r}]}^{\{3\}}(\check{f})=
(2,~~ 2 {\frak i}^s,~~ 0,~~ -2{\frak i}^{r+s})\]
is not affine since its support is not affine.
Thus we are done.
So we may assume in the following $\epsilon_1= \epsilon_2 =
\epsilon_3=-1$.
\[M_{x_1, x_2x_3}(\check{f})
=
\left[\begin{matrix}
f^{000} & f^{001} & f^{010} & f^{011}\\
f^{100} & f^{101} & f^{110} & f^{111}
\end{matrix}\right]
=
\left[\begin{matrix}
1 & {\frak i}^r & {\frak i}^s & - {\frak i}^{r+s} \\
{\frak i}^t &  -{\frak i}^{r+t} & -{\frak i}^{s+t} & {\frak i}^{r+s+t}
\end{matrix}\right].
\]

\item For $\epsilon_1=\epsilon_2=\epsilon_3=-1$,
we take two copies of $\check{f}$ and connect
the variables $x_2$ and $x_3$ of one copy with the
 variables $x_3$ and $x_2$ of the other copy, creating
a planar binary gadget with a symmetric signature
$g(y, z)=\displaystyle\sum_{x_2, x_3\in\{0, 1\}}\check{f}(y, x_2, x_3)\check{f}(z, x_3, x_2)$. Notice the reversal of the order
of  $x_2$ and $x_3$ in the second copy of $\check{f}$; this
amounts to a
cyclic permutation of its inputs and is necessary
in order to make a planar connection in the gadget construction.
 The signature of $g$ can be computed as a matrix product
$\left[\begin{smallmatrix}
f^{000} & f^{001} & f^{010} & f^{011}\\
f^{100} & f^{101} & f^{110} & f^{111}
\end{smallmatrix}\right]
\left[\begin{smallmatrix}
f^{000} & f^{100} \\
f^{010} & f^{110}\\
f^{001} & f^{101} \\
f^{011} & f^{111}
\end{smallmatrix}\right]$.
In symmetric signature notation
\[g=[1+2{\frak i}^{r+s}+(-1)^{r+s},~~~ {\frak i}^t(1-2{\frak i}^{r+s}-(-1)^{r+s}),~~~ (-1)^t(1+2{\frak i}^{r+s}+(-1)^{r+s})].\]
Further, we have the unary signature $\partial_{[1, 0]}(g)
=[1+2{\frak i}^{r+s}+(-1)^{r+s},~~ {\frak i}^t(1-2{\frak i}^{r+s}-(-1)^{r+s})]$.

If $r+s$ is odd, then the norm $|1+2{\frak i}^{r+s}+(-1)^{r+s}| =2$ and
the norm $|{\frak i}^t(1-2{\frak i}^{r+s}-(-1)^{r+s})| = |2 \pm 2{\frak i}| = 2 \sqrt{2}$,
and hence $\partial_{[1, 0]}(g) \not \in \mathscr{A}$, by Proposition~\ref{A-has-same-norm-etc}, and we are done.
For even $r+s = 2k$, the norms of the entries of $\partial_{[1, 0]}(g)$
are $2 + 2(-1)^k$ and $2$ respectively. Hence if $k$ is even then
 $\partial_{[1, 0]}(g) \not \in \mathscr{A}$, by Proposition~\ref{A-has-same-norm-etc}, and we are done.
Hence we may assume $k$ is odd, and
\begin{equation}\label{rs-eqn}
r+s  \equiv 2 \bmod 4.
\end{equation}

By symmetry of argument,
 we have
\begin{equation}\label{st-eqn}
s+t\equiv 2\bmod 4
\end{equation}
and
\begin{equation}\label{tr-eqn}
t +r\equiv 2\bmod 4
\end{equation}
From (\ref{rs-eqn}) and (\ref{st-eqn}) we get $r \equiv t \bmod 4$,
and by symmetry,
\begin{equation}\label{rst-equal-eqn}
r \equiv  s \equiv  t \bmod 4.
\end{equation}
Also by (\ref{rst-equal-eqn}) and (\ref{rs-eqn}) we have
\begin{equation}\label{rst-equal-1-or-3-eqn}
r \equiv  s \equiv  t \equiv 1 \bmod 4
~~~~\mbox{or}~~~~
r \equiv  s \equiv  t \equiv 3 \bmod 4
\end{equation}

If $r \equiv  s \equiv  t \equiv 1 \bmod 4$, then
\[M_{x_1, x_2x_3}(\check{f})
=
\left[\begin{matrix}
f^{000} & f^{001} & f^{010} & f^{011}\\
f^{100} & f^{101} & f^{110} & f^{111}
\end{matrix}\right]
=
\left[\begin{matrix}
1 & {\frak i} & {\frak i} & 1  \\
{\frak i} & 1 & 1 & -{\frak i}
\end{matrix}\right].
\]
This is the symmetric ternary signature $[1, {\frak i}, 1, -{\frak i}] \not \in
 \mathscr{A}$.
Having $[1,0] \in \widehat{\mathcal{EQ}}$
we can get $[1,{\frak i}]$ from $[1, {\frak i}, 1, -{\frak i}]$.
Then $\partial_{[1,{\frak i}]}([1, {\frak i}, 1, -{\frak i}]) = [0, 2{\frak i}, 2]$.
Once again $\partial_{[1,{\frak i}]}([0, 2{\frak i}, 2]) = [-2, 4{\frak i}]  \not \in
 \mathscr{A}$.

Similarly if $r \equiv  s \equiv  t \equiv 3 \bmod 4$, then
\[M_{x_1, x_2x_3}(\check{f})
=
\left[\begin{matrix}
f^{000} & f^{001} & f^{010} & f^{011}\\
f^{100} & f^{101} & f^{110} & f^{111}
\end{matrix}\right]
=
\left[\begin{matrix}
1 & -{\frak i} & -{\frak i} & 1  \\
-{\frak i} & 1 & 1 & {\frak i}
\end{matrix}\right].
\]
This is  the symmetric ternary signature $[1, -{\frak i}, 1, {\frak i}] \not \in
 \mathscr{A}$, and we can get $[-2, -4{\frak i}]  \not \in
 \mathscr{A}$ in a similar way.
\end{itemize}

\item If $k\geq 4$, then $\check{f}$ is affine by Lemma~\ref{[1,0]-[0,1]-pinning-implies-affine-arity-4}.
This is a contradiction.
\end{enumerate}
This completes the proof of Lemma~\ref{arity-reduction-affine}.
\end{proof}

The next lemma says that generally we can construct a unary
signature $[1,a]$,
with $a \not = 0, 1$, in
 $\operatorname{Pl-\#CSP}([1, 0], \mathcal{F})$.
The condition on $\mathcal{F}$ is satisfied as long as
not every signature in  $\mathcal{F}$ is $\{0,1\}$-valued
up to a constant.

\begin{lemma}\label{[1,0]-[1,1]-[0,1]-csp-new-unary}
Suppose $\mathcal{F}$ contains a signature $f$ of arity $n \ge 1$,
that has two distinct nonzero values:
$f_{\alpha}\neq 0, f_{\beta}\neq 0$ and $f_{\alpha}\neq f_{\beta}$
 for some $\alpha, \beta\in\{0, 1\}^n$.
Then
there is a unary signature $[1, a]$, where $a\neq 0, 1$, such that
\[\operatorname{Pl-\#CSP}([1, 0], [1, a], \mathcal{F})
\le_{\rm T}
\operatorname{Pl-\#CSP}([1, 0], \mathcal{F}).\]
The statement is also valid if we replace $[1, 0]$ by $[0, 1]$.
\end{lemma}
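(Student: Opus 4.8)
The plan is to prove, by induction on the arity of a signature $g$, the following: \emph{if a signature $g$ with two distinct nonzero values is realizable in $\operatorname{Pl-\#CSP}([1,0],\mathcal F)$, then so is some unary $[1,a]$ with $a\neq 0,1$}; applied to $g=f$ and combined with the realization principle this gives the lemma. The ``$[0,1]$'' version follows from the ``$[1,0]$'' version by flipping every input of every signature: this replaces $[1,0]$ by $[0,1]$, fixes each \textsc{Equality} signature $(=_k)$ (which is invariant under the simultaneous flip of all its arguments), and sends $\mathcal F$ to a set $\mathcal F'$ still containing a signature with two distinct nonzero values; realizing $[1,a']$ in $\operatorname{Pl-\#CSP}([1,0],\mathcal F')$ and flipping the gadget back realizes a nonzero multiple of $[1,1/a']$ in $\operatorname{Pl-\#CSP}([0,1],\mathcal F)$, with $1/a'\neq0,1$.

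The elementary planar reductions available on a realizable signature $g$ are: pinning a variable to $0$ via $[1,0]$, giving $g^{x_j=0}$; summing out a set $S$ of variables via $(=_1)=[1,1]$, giving $\partial^{S}_{[1,1]}(g)$; merging a set of variables into one via a suitable $(=_k)$ (merging all variables of $g$ into one produces the unary $[g_{0\cdots0},g_{1\cdots1}]$); and, once $[0,1]$ has been realized, also pinning a variable to $1$. The base case of arity $1$ is immediate by normalizing $g=[g_\alpha,g_\beta]$. In the inductive step, if $g^{x_j=0}$ has two distinct nonzero values for some $j$ we recurse on it; otherwise, for every $j$ all nonzero entries of $g$ with $j$-th coordinate $0$ share a common value, equivalently any two nonzero entries of distinct value have disjoint zero-sets. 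Under this hypothesis one first argues $g_{0\cdots0}=0$: if $g_{0\cdots0}\neq0$ then every nonzero entry other than $1\cdots1$ has value $g_{0\cdots0}$, so the second nonzero value forces $g_{1\cdots1}\neq0$ with $g_{1\cdots1}\neq g_{0\cdots0}$, and merging all variables already yields the desired $[1,a]$. Then if $g_{1\cdots1}\neq0$, merging all variables produces a nonzero multiple of $[0,1]$, and repeating the analysis with pinning to $1$ either recurses or forces all nonzero entries to share a value (a contradiction). Finally, if $g_{0\cdots0}=g_{1\cdots1}=0$ one shows $g$ is either supported on a single antipodal pair $\{\gamma,\bar\gamma\}$ with $g_\gamma\neq g_{\bar\gamma}$ (so summing out all variables but one yields $[g_\gamma,g_{\bar\gamma}]$) or admits a sum-out $\partial^{\{j\}}_{[1,1]}(g)$ of smaller arity still having two distinct nonzero values. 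Each branch either finishes directly or recurses on a signature of strictly smaller arity, and every construction used is a planar gadget.

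The hard part will be this last structural case analysis (the situation $g_{0\cdots0}=g_{1\cdots1}=0$ together with ``no pin-to-$0$ works''): one must exploit the disjoint-zero-set constraint to cut $g$ down to the small list of trapped configurations above, and verify that in each of them the indicated gadget---merge-all to extract $[0,1]$ and then pin to $1$, or a sum-out on an antipodal pair---strictly reduces the arity while keeping two distinct nonzero values, which requires careful bookkeeping of which entries survive or cancel under a sum-out. Everything else---the base case, the flip reduction for the $[0,1]$ version, the planarity of each gadget, and the final normalizations of unary signatures---is routine.
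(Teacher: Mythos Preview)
Your inductive scheme is sound up through the first three cases, but it diverges from the paper's argument and leaves the last case genuinely unfinished.

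The paper does not induct at all. It makes two moves. First, since $f_\alpha\neq f_\beta$, one of them (say $f_\alpha$) differs from $f_{00\cdots0}$; pinning with $[1,0]$ on the zero-bits of $\alpha$ and then merging via $(=_{\mathrm{wt}(\alpha)+1})$ produces the unary $[f_{00\cdots0},f_\alpha]$. If $f_{00\cdots0}\neq0$ this is already $[1,a]$ with $a\neq0,1$. Second, if $f_{00\cdots0}=0$, that unary is a nonzero multiple of $[0,1]$, so now both pinnings are available. One then picks a pair $(\xi,\eta)$ of distinct-valued nonzero entries of \emph{minimum Hamming distance}, pins on their common bits (using $[1,0]$ and $[0,1]$), and by minimality the resulting signature is supported on exactly two antipodal points; summing out all but one variable yields $[f_\xi,f_\eta]$. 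No induction, no structural dichotomy.

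In your scheme the problematic step is the case $g_{0\cdots0}=g_{1\cdots1}=0$. The dichotomy you assert there---support is a single antipodal pair, or some $\partial^{\{j\}}_{[1,1]}(g)$ retains two distinct nonzero values---is actually true, but it is not ``bookkeeping'': it needs the observation that under your pin-to-$0$ hypothesis together with $g_{1\cdots1}=0$, any two support points differing in a single bit must carry the same value (otherwise they would share a zero coordinate, since neither can be $1\cdots1$), followed by a non-obvious case split on whether each value-class is closed under flipping bit~$j$. You have not supplied this, and it is the whole content of the case. A much simpler repair, which also brings your proof in line with the paper's, is to note that even when $g_{1\cdots1}=0$ you can still manufacture $[0,1]$: pick any $\gamma$ with $g_\gamma\neq0$, pin to $0$ on the zero-coordinates of $\gamma$, and merge, obtaining $[g_{0\cdots0},g_\gamma]=[0,g_\gamma]$. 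With $[0,1]$ in hand the case folds into your previous one (both pin conditions force the support to be a single antipodal pair, and you finish by summing out).
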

\begin{proof}
We prove the lemma for $\operatorname{Pl-\#CSP}([1, 0], \mathcal{F})$. The proof for $\operatorname{Pl-\#CSP}([0, 1], \mathcal{F})$
 is symmetric.

Since $f_{\alpha}\neq f_{\beta}$,
at least one value  of $f_{\alpha}$ or $f_{\beta}$
is not equal to $f_{00\cdots 0}$.
Without loss of generality, we assume that
 $f_{\alpha}\neq f_{00\cdots 0}$.
Then we have
$f'=\partial_{[1, 0]}^{S}(f)=(f_{00\cdots 0}, \ldots, f_{\alpha})$,
which is a signature of arity ${\rm wt}(\alpha)$,
where $S=\{k \mid \mbox{the $k$-th bit of $\alpha$ is $0$}\}$.
We only care about the value of $f'$ at $0\ldots 0
\in \{0,1\}^{{\rm wt}(\alpha)}$ and $1\ldots 1
\in \{0,1\}^{{\rm wt}(\alpha)}$, as specified.
By connecting $f'$ to an {\sc Equality}  ($=_{{\rm wt}(\alpha)+1}$) of arity
${\rm wt}(\alpha)+1$ in a planar fashion,
 the gadget gives the unary signature $\partial_{f'}(=_{{\rm wt}(\alpha)+1})=[f_{00\cdots 0}, f_{\alpha}]$.
If $f_{00\cdots 0}\neq 0$, then we have $f_{00\cdots 0}[1, a]$, where $a=\frac{f_{\alpha}}{f_{00\cdots 0}}$.
Then we get $[1, a]$ up to the nonzero scalar $f_{00\cdots 0}$. This finishes the proof.

Otherwise,  $f_{00\cdots 0}=0$.
This implies that $[f_{00\cdots 0}, f_{\alpha}]=f_{\alpha}[0, 1]$.
Then we have $[0, 1]$ up to the nonzero scalar $f_{\alpha}$,
and can now use both pinning signatures $[1,0]$ and $[0, 1]$.

Since the set $T=\{(\xi, \eta) \mid f_{\xi}\neq 0, f_{\eta}\neq 0, f_{\xi}\neq f_{\eta}\}$ is nonempty
by $(\alpha, \beta)\in T$,
there exists some $(\xi, \eta)\in T$
with minimum Hamming distance, i.e.,
$f_{\xi}\neq 0, f_{\eta}\neq 0$, $f_{\xi}\neq f_{\eta}$ and
\[{\rm wt}(\xi\oplus \eta)=\displaystyle\min_{\xi',\eta'\in\{0, 1\}^n}\{{\rm wt}(\xi'\oplus\eta') \mid
f_{\xi'}\neq 0, f_{\eta'}\neq 0, f_{\xi'}\neq f_{\eta'}\}.\]
For $b \in \{0, 1\}$, let
$S_b=\{k \mid \mbox{the $k$-th bits of both $\xi$ and $\eta$
are $b$}\}$.
Then we can construct
$f''=\partial_{[1, 0]}^{S_0}[\partial_{[0, 1]}^{S_1}(f)]$.

Denote wt$(\xi\oplus\eta)$ by $d$.
Note that  $f''$ has arity  $d$.
Let $\check{\xi} \in \{0, 1\}^d$ denote
the $d$-bit string obtained from $\xi$ by deleting all bits in $S_0\cup S_1$.
Similarly define $\check{\eta} \in \{0, 1\}^d$.
Clearly
$f''_{\check{\xi}}=f_{\xi}$ and $f''_{\check{\eta}}=f_{\eta}$.
If $d=1$, i.e., $f''=[f_{\xi}, f_{\eta}]$ or
$f''=[f_{\eta}, f_{\xi}]$,
then we are done by normalizing.
Therefore, we may assume that $d\geq 2$.

All entries of $f''$ are zero  except for $f''_{\check{\xi}}, f''_{\check{\eta}}$.
To see this,
if there is another nonzero entry of $f''$
it has the form $f''_{\check{\gamma}} = f_{\gamma}$
for some $\gamma \in \{0, 1\}^n$ and $\check{\gamma}\in\{0, 1\}^d$,
where $\gamma$ has the same bits as $\xi$ on $S_0\cup S_1$,
and $\check{\gamma}$ is obtained from $\gamma$
by deleting all bits in $S_0\cup S_1$.
Then $f''_{\check{\gamma}}\neq 0$ implies that $f_{\gamma}\neq 0$.
Both ${\rm wt}(\xi\oplus\gamma) < {\rm wt}(\xi \oplus \eta)$,
${\rm wt}(\gamma\oplus\eta) < {\rm wt}(\xi \oplus \eta)$,
and either $f_{\gamma}\neq f_{\xi}$ or $f_{\gamma}\neq f_{\eta}$.
This contradicts the minimality of $d$.

By using $[1, 1]$,
we have $\partial_{[1, 1]}^{\{2, \ldots, d\}}(f'')$ that is
$[f_{\xi}, f_{\eta}]$ or
$[f_{\eta}, f_{\xi}]$,
 since all  entries of $f''$ are zero
 except for $f''_{\check{\xi}} = f_{\xi}$
and $f''_{\check{\eta}} = f_{\eta}$.
Thus we are done by normalizing $\partial_{[1, 1]}^{\{2, \ldots, d\}}(f'')$.
\end{proof}

The next lemma handles $\{0, 1\}$-valued function sets,
and more generally it applies to any $\mathcal{F}$
where  every $f \in \mathcal{F}$ has at most
one nonzero value. Notice that in this $\{0, 1\}$-valued case,
any function $f \in \mathcal{F} \cap \mathscr{P}$ is also in $\mathscr{A}$.
(And since $\widehat{\mathcal{F}}$ does not satisfy
the Parity Condition, $\mathcal{F} \subseteq \widehat{\mathscr{M}}$
is not feasible. Therefore the statement of Lemma~\ref{0-1-value-signature}
is in accordance with Theorem~\ref{main-dichotomy-thm}.)
When used in Theorem~\ref{main-theorem-for-no-parity}
 the auxiliary unary signatures
$[1,0]$ or $[0,1]$ will be provided by $[1, \omega] = [1, \pm 1]$
from the $\operatorname{Pl-Holant}$ side,
as constructed in Lemma~\ref{constructing-[1,a]}.
The proof of the following lemma
  will again use our Tableau Calculus.
\begin{lemma}\label{0-1-value-signature}
If each  $f\in\mathcal{F}$ takes values in $\{0, 1\}$, then
either $\operatorname{Pl-\#CSP}(\mathcal{F}, [1, 0])$
is \#P-hard or $\mathcal{F}\subseteq\mathscr{A}$.
The statement is also true if we replace $[1, 0]$ by $[0, 1]$.
\end{lemma}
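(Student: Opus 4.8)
The plan is to prove the contrapositive: assuming $\mathcal{F}\not\subseteq\mathscr{A}$, we derive \#P-hardness of $\PlCSP(\mathcal{F},[1,0])$. Fix $f\in\mathcal{F}\setminus\mathscr{A}$ of arity $n$. Since $f$ is $\{0,1\}$-valued, $f\notin\mathscr{A}$ means exactly that $\operatorname{supp}(f)$ is not an affine subspace of $\mathbb{Z}_2^n$; moreover $f\notin\mathscr{P}$ automatically, since a $\{0,1\}$-valued signature in $\mathscr{P}$ has affine support and hence lies in $\mathscr{A}$. We work in $\PlCSP(\mathcal{F},[1,0])$, so besides $f$ and $[1,0]$ we have all of $\mathcal{EQ}$; in particular $(=_1)=[1,1]$, and $[1,0]\notin\widehat{\mathscr{M}}$ by Proposition~\ref{matchgate:affine:hat}. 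The common endgame is: construct two symmetric signatures $g_1\notin\mathscr{A}$ and $g_2\notin\mathscr{P}$ in $\PlCSP(\mathcal{F},[1,0])$; then the symmetric set $\{g_1,g_2\}\cup\mathcal{EQ}\cup\{[1,0]\}$ lies in none of $\mathscr{A}$, $\mathscr{P}$, $\widehat{\mathscr{M}}$, so $\PlCSP$ of it --- hence $\PlCSP(\mathcal{F},[1,0])$ --- is \#P-hard by Theorem~\ref{heng-tyson-dichotomy-pl-csp}. (The $[0,1]$ version of the lemma is symmetric, interchanging $0$ and $1$.)

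First I would show that some gadget built from $f$, $[1,0]$ and $\mathcal{EQ}$ has two distinct nonzero entries of different absolute value. Granting this, Lemma~\ref{[1,0]-[1,1]-[0,1]-csp-new-unary} produces a unary $[1,a]$ with $|a|\ne 0,1$, and Lemma~\ref{interpolation-unary} then produces every unary signature; take $g_1=[1,2]$ (here $2^4\ne 1$, so $[1,2]\notin\mathscr{A}$), and, since $f\notin\mathscr{P}$, take $g_2$ to be the symmetric non-product signature obtained from Theorem~\ref{arity-reduction-product} applied with three pairwise linearly independent unaries such as $[1,0],[1,1],[1,2]$ (if $\arity(f)=2$, first pass through Lemma~\ref{binary-product-asymmetric-symmetric}; $\arity(f)=1$ is impossible). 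This completes the argument once the claimed "entry of value $\ge 2$" is in hand.

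To produce such an entry I would induct on $n$. The cheap reductions: if $f$ does not depend on some $x_i$, then $f=[1,1](x_i)\otimes f'$ with $f'\notin\mathscr{A}$ of smaller arity, recurse on $f'$. Otherwise $f^{x_i=0}=\partial_{[1,0]}^{\{i\}}(f)$ and $h_i:=\partial_{[1,1]}^{\{i\}}(f)=f^{x_i=0}+f^{x_i=1}$ are both constructible; if some $h_i$ already has an entry $\ge 2$ we are done, and if some $f^{x_i=0}$ or $h_i$ is not in $\mathscr{A}$ we recurse on it ($\{0,1\}$-valued, arity $n-1$; note that if $f^{x_i=0}\equiv 0$ then $h_i=f^{x_i=1}\notin\mathscr{A}$). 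After these reductions we may assume, for every $i$, that $A_i:=\operatorname{supp}(f^{x_i=0})$ and $C_i:=\operatorname{supp}(f^{x_i=1})$ are disjoint and nonempty and that $A_i$ and $B_i:=A_i\sqcup C_i=\operatorname{supp}(h_i)$ are affine. Writing $\operatorname{supp}(f)=(\{0\}\times A_i)\cup(\{1\}\times C_i)$ and tracking its XOR-closure --- our Tableau Calculus in the $\{0,1\}$-valued setting --- one gets that $\operatorname{supp}(f)$ non-affine forces the index $[B_i:A_i]\ge 4$ for every $i$. The heart of the proof is to show that this rigidity is impossible: a Hamming-weight count over $\operatorname{supp}(f)$ combined with the affineness of all the slices $A_i$ rules it out (immediately for small $n$, and in general by exhibiting within such an $f$ a smaller $\{0,1\}$-valued non-affine subsignature reachable by a further slice or an $\mathcal{EQ}$-gadget), so the induction always terminates in a case where an entry of value $\ge 2$ appears.

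The main obstacle is precisely this last rigidity analysis. Because $[0,1]$ and the disequality $[0,1,0]$ are not available here, we cannot pin to $1$ or flip individual variables, and most of the moves of Lemma~\ref{arity-reduction-affine} do not apply; the work is to argue purely at the level of supports --- affineness of all slices $A_i$ and all sums $B_i$, plus $[B_i:A_i]\ge 4$ for all $i$ --- that a genuinely non-affine $\{0,1\}$-valued signature cannot avoid producing an entry $\ge 2$ under iterated $\partial_{[1,1]}$ and $[1,0]$-pinning.
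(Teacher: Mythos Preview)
Your overall architecture matches the paper's: prove by induction on the arity of a witness $f\notin\mathscr{A}$ that some gadget acquires two distinct nonzero values, then use Lemma~\ref{[1,0]-[1,1]-[0,1]-csp-new-unary} and Lemma~\ref{interpolation-unary} to get a unary $[1,a]$ with $|a|\ne 0,1$, and finish via Theorem~\ref{arity-reduction-product} and Theorem~\ref{heng-tyson-dichotomy-pl-csp}. The paper does exactly this, with the same endgame.

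The genuine gap is the step you flag as ``the main obstacle''. You reduce to the situation where, for every $i$, $A_i=\operatorname{supp}(f^{x_i=0})$ and $B_i=\operatorname{supp}(h_i)$ are affine with $A_i\cap C_i=\emptyset$, and then assert that a Hamming-weight count rules this out. You correctly observe that $[B_i:A_i]\ge 4$ in this situation (if the index were $2$ then $C_i$ would be the other coset and $\operatorname{supp}(f)$ would itself be affine), but the claimed impossibility is not established; weight-counting only gives that the average Hamming weight over $\operatorname{supp}(f)$ is at least $3n/4$, which is not a contradiction. The paper does \emph{not} try to rule out this rigidity abstractly. Instead it runs the Tableau Calculus concretely and shows that, under your very hypotheses, $h_1$ must in fact contain both a value $1$ and a value $2$ --- so $A_1\cap C_1\neq\emptyset$ after all. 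This is done in two separate cases. When $f_{0\cdots 0}=1$, both $\operatorname{supp}(f^{x_1=0})$ and $\operatorname{supp}(h_1)$ are \emph{linear}; picking $a,b\in\operatorname{supp}(f)$ with $a\oplus b\notin\operatorname{supp}(f)$ and arranging $a_1=0,b_1=1$, one chases membership through these linear spaces and lands on $h_{b'}=2$ while $h_{c'}=1$. When $f_{0\cdots 0}=0$, linearity fails and the paper inserts a crucial extra move you do not have: it first constructs $[0,1]$ (pin $f$ to a minimum-weight support point via $[1,0]$, then collapse with $[1,1]$), thereby gaining access to $f^{x_i=1}$ and hence to affineness on \emph{both} sides; the Tableau Calculus then uses three support points $a,b,c$ and $d=a\oplus b\oplus c\notin\operatorname{supp}(f)$ to force $h_{c'}=2$ against $h_{d'}=1$.

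In short, your plan is the paper's plan, but the decisive technical step is the explicit two-case Tableau Calculus (including the construction of $[0,1]$ when $f_{0\cdots 0}=0$), not a counting argument; replace your ``Hamming-weight count'' paragraph with that, and the proof goes through.
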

\begin{proof}
Suppose $\mathcal{F}\nsubseteq\mathscr{A}$.
We show that $\operatorname{Pl-\#CSP}(\mathcal{F}, [1, 0])$
is \#P-hard.
The statement for $[0, 1]$ is symmetric.

As $\mathcal{F}\nsubseteq\mathscr{A}$,
there exists $f\in\mathcal{F}$ such that $f\notin\mathscr{A}$.
Firstly, we claim that $f\notin\mathscr{P}$.
By definition, $\mathscr{P}=\langle \mathcal{E}\rangle$.
Note that all signatures in $\mathcal{E}$ have affine support.
Thus all signatures in $\mathscr{P}$ have affine support.
On the other hand,
 a $\{0, 1\}$-valued signature is in $\mathscr{A}$
iff its support is affine.
 So supp$(f)$ is not affine since $f$ takes
 values in $\{0, 1\}$  and $f\notin\mathscr{A}$.
 This implies that $f\notin\mathscr{P}$.

We prove the lemma by induction on the arity $n$ of $f$.
Note that $n\geq 2$ since supp$(f)$ is not affine.

For $n=2$, there is exactly one entry of $f$ that is
 0 since supp$(f)$ is not affine.
So $f=(1, 1, 1, 0)$, or $f=(1, 1, 0, 1)$, or $f=(1, 0, 1, 1)$ or $f=(0, 1, 1, 1)$.
In each case, we take 3 copies of $f$ and connect the first input of each $f$
to an edge of $=_3$ and leave the second input as dangling edges.
The resulting signature $g$ is
$[1, 1]^{\otimes 3}+[1, 0]^{\otimes 3}$ or $[1, 1]^{\otimes 3}+[0, 1]^{\otimes 3}$.
Both $[1, 1]^{\otimes 3}+[1, 0]^{\otimes 3}$
and $[1, 1]^{\otimes 3}+[0, 1]^{\otimes 3}$
are not in $\mathscr{P}\cup\mathscr{A}\cup\widehat{\mathscr{M}}$ but are symmetric.
By Theorem~\ref{heng-tyson-dichotomy-pl-csp}, $\operatorname{Pl-\#CSP}(g)$
is \#P-hard.
By \[\operatorname{Pl-\#CSP}(g)
\le_{\rm T}
\operatorname{Pl-\#CSP}([1, 0], \mathcal{F}).\]
$\operatorname{Pl-\#CSP}([1, 0], \mathcal{F})$
is \#P-hard.

In the following, by induction   we assume that the lemma is true for $n-1$,
where $n\ge 3$.

We will  prove that we can construct
a signature $f' \not \in \mathscr{A}$ with arity $<n$ by a gadget construction in $\operatorname{Pl-\#CSP}([1, 0], \mathcal{F})$.
If $f'$ takes values in $\{0, 1\}$ up to a scalar,
then the induction is finished.
Otherwise, there exist two distinct
 nonzero entries $f'_{\alpha}$ and $f'_{\beta}$ of $f'$,
namely $f'_{\alpha}\neq 0$, $f'_{\beta} \neq 0$,
and  $f'_{\alpha}\neq f'_{\beta}$.
By Lemma~\ref{[1,0]-[1,1]-[0,1]-csp-new-unary}, we can construct $[1, a]$ with $a\neq 0, 1$ from $f'$ by gadget construction.
Because signatures in $\mathcal{F}$ take values in $\{0, 1\}$,
all nonzero entries of
the signature of any gadget construction in
  $\operatorname{Pl-\#CSP}([1, 0], \mathcal{F})$
take positive rational values, in $\mathbb{Q}^+$,
 after normalization.  This implies that $|a|\neq 1$.
So we can get the unary signatures $[1, 2], [1, 3], [1, 4]$
(in fact, we can get any constantly many unary signatures)
using interpolation, by Lemma~\ref{interpolation-unary}.
Then by Theorem~\ref{arity-reduction-product} and $f\notin\mathscr{P}$, we can get a symmetric signature $f''$ that is not in $\mathscr{P}$.
Note that the symmetric signature set $\{[1, 2], f''\}$ satisfies
\[\{[1, 2], f''\}\nsubseteq\mathscr{P},~~~~
\{[1, 2], f''\}\nsubseteq\mathscr{A}, ~~~~
\{[1, 2], f''\}\nsubseteq\widehat{\mathscr{M}}.\]
By Theorem~\ref{heng-tyson-dichotomy-pl-csp}, $\operatorname{Pl-\#CSP}([1, 2], f'')$
is \#P-hard.
By \[\operatorname{Pl-\#CSP}([1, 2], f'')
\le_{\rm T}
\operatorname{Pl-\#CSP}([1, 0], \mathcal{F}).\]
 $\operatorname{Pl-\#CSP}([1, 0], \mathcal{F})$
is \#P-hard.

Thus we only need to construct a signature $f'
\not \in \mathscr{A}$ with arity
$<n$ by gadget construction to finish the proof.
If there exists $i\in[n]$ such that $f^{x_i=0} \not \in \mathscr{A}$,
then we are done since we have $[1, 0]$.
Therefore, we may assume $f^{x_i=0} \in \mathscr{A}$  for all $i\in[n]$.
By connecting the unary signature $[1, 1]$ to the first variable of $f$ we get
 $h=\partial^{\{1\}}_{[1,1]}(f)$, which has arity $n-1$. Note that
 \[h_{\alpha}=f_{0\alpha}+f_{1\alpha}\] for all $\alpha\in\{0, 1\}^{n-1}$.
 If $h \notin\mathscr{A}$, then we are done.
 Therefore we may assume $h \in \mathscr{A}$.

\begin{itemize}
\item Suppose $f_{00\cdots 0}=1$.

Since supp$(f)$ is not affine, it is also not a linear subspace.
Thus there exist ${\bf a}, {\bf b}\in{\rm supp}(f)$ such that ${\bf a}\oplus {\bf b}\notin{\rm supp}(f)$.
Let ${\bf a}=a_1a_2\cdots a_n$, ${\bf b}=b_1b_2\cdots b_n$ and ${\bf c}={\bf a}\oplus {\bf b}=c_1c_2\cdots c_n$
and ${\bf a'}=a_2\cdots a_n$, ${\bf b'}=b_2\cdots b_n$ and ${\bf c'}=c_2\cdots c_n$.
If $a_1=a_2=\cdots =a_n=1$ and $b_1=b_2=\cdots =b_n=1$, then $c_1=c_2=\cdots=c_n=0$.
This contradicts that $00\cdots 0 \in {\rm supp}(f)$.
Thus there is at least one $i\in[n]$ such that $a_i=0$ or $b_i=0$.
Without loss of generality, we assume that $a_1=0$.
If $b_1=0$ as well, then $c_1=0$.
This implies that supp$(f^{x_1=0})$
 is not a linear subspace of $\mathbb{Z}_2^{n-1}$.
However $(0, \ldots, 0)\in {\rm supp}(f^{x_1=0})$,
being not a linear subspace implies that supp$(f^{x_1=0})$
is not an affine subspace. This
 contradicts that $f^{x_1=0} \in \mathscr{A}$.
 Hence $b_1=1$ and it follows that $c_1=1$.

\[
{\begin{array}{lllllll}
 &{\bf a} & = & a_1  {\bf a}' & =  & 0~  a_2\ldots a_n  & \in {\rm supp}(f) \\
 \oplus &{\bf b} & = & b_1 {\bf b}' & = & 1~ b_2\ldots b_n  & \in {\rm supp}(f)  \\
\hline
 & {\bf c}  & = & c_1   {\bf c}'  & =  & 1~ c_2\ldots c_n  & \not \in {\rm supp}(f) \\
\end{array} }
\]

We have
\[h_{{\bf a}'}=f_{0{\bf a}'}+f_{1{\bf a}'}\neq 0,\]
\[h_{{\bf b}'}=f_{0{\bf b}'}+f_{1{\bf b}'}\neq 0\]
since $f_{0{\bf a}'}=f_{1{\bf b}'}=1$ and $f_{1{\bf a}'}\geq 0, f_{0{\bf b}'}\geq 0$.
Note that $h_{00\cdots 0}=f_{00\cdots 0}+f_{10\cdots 0}\neq 0$ since $f_{00\cdots 0}=1, f_{10\cdots 0}\geq 0$.
So supp$(h)$ is a linear space since $h$ is affine.
As ${\bf a}', {\bf b}' \in {\rm supp}(h)$ we have ${\bf c}'\in{\rm supp}(h)$.
 \[
{\begin{array}{*{3}c}
 ~ &   {\bf a}' ~~~\in {\rm supp}(h) \\
 \oplus &   {\bf b}' ~~~\in {\rm supp}(h)  \\
\hline
 ~ &   {\bf c}' ~~~\in {\rm supp}(h)  \\
\end{array} }
\]
 This implies that $h_{{\bf c}'}=f_{c_1{\bf c}'}+f_{\bar{c}_1{\bf c}'}\neq 0$. So we have $f_{\bar{c}_1{\bf c}'}\neq 0$ since $f_{c_1{\bf c}'}=0$.
Thus ${\bf c}' \in {\rm supp}(f^{x_1=0})$, as $\bar{c}_1 =0$.
As $f$ takes values in $\{0, 1\}$, $f_{\bar{c}_1{\bf c}'} =1$.
 Since ${\bf a}' \in {\rm supp}(f^{x_1=0})$,
and the support of $f^{x_1=0}$ is a
linear subspace, we have ${\bf b}'\in{\rm supp}(f^{x_1=0})$,
since ${\bf b}' = {\bf a}' \oplus {\bf c}'$.
 \[
{\begin{array}{*{3}c}
 ~ & {\bf a}' ~~~\in {\rm supp}(f^{x_1=0}) \\
 \oplus  &    {\bf c}' ~~~\in {\rm supp}(f^{x_1=0})  \\
\hline
 ~ &    {\bf b}' ~~~\in {\rm supp}(f^{x_1=0})  \\
\end{array} }
\]
 This implies that
 $\bar{b}_1{\bf b}'\in$supp$(f)$.
 Thus we have
 $f_{\bar{b}_1{\bf b}'}=1$, as $f$ takes values in $\{0, 1\}$.
 But we also have $f_{b_1{\bf b}'}=1$, thus
 $h_{\bf{b'}}=f_{b_1{\bf b}'}+f_{\bar{b}_1{\bf b}'}=2$. However
 $h_{{\bf c}'}=f_{c_1{\bf c}'}+f_{\bar{c}_1{\bf c}'}=1$
since $f_{c_1{\bf c}'}=0$ and $f_{\bar{c}_1{\bf c}'}=1$.
  Thus $h \not \in  \mathscr{A}$. This is a contradiction.

\item Suppose $f_{00\cdots 0}=0$.

Since $f$ is not identically 0,
there exists $\beta\in{\rm supp}(f)$ with wt$(\beta)$ minimum among
all nonzero entries.
Then we have $\partial_{[1, 0]}^{S}(f)$,
where $S=\{k \mid \mbox{the $k$-th bit of $\beta$ is 0}\}$.
$\partial_{[1, 0]}^{S}(f)$
is the symmetric signature $[0, \ldots, 0, 1]$
of arity wt$(\beta)$.
This gives $[0, 1]=\partial_{[1, 1]}^{\{2, \ldots, {\rm wt}(\beta)\}}([0, \cdots, 0, 1])$.
If there exists $i$ such that
$f^{x_i=1} \not \in \mathscr{A}$, then we are done since we have $[0, 1]$ now.
Therefore, we may assume $f^{x_i=1} \in \mathscr{A}$ as well
  as $f^{x_i=0} \in \mathscr{A}$ for all $i\in[n]$, since
we have $[1,0]$ explicitly.

Since supp$(f)$ is not affine,
there exist ${\bf a}, {\bf b}, {\bf c}\in{\rm supp}(f)$ such that ${\bf a}\oplus {\bf b}\oplus{\bf c}\notin{\rm supp}(f)$.
Let ${\bf a}=a_1a_2\cdots a_n$, ${\bf b}=b_1b_2\cdots b_n$, ${\bf c}=c_1c_2\cdots c_n$ and
${\bf d}={\bf a\oplus  b\oplus  c}=d_1d_2\cdots d_n$,
and we denote ${\bf a'}=a_2\cdots a_n$, ${\bf b'}=b_2\cdots b_n$,
  ${\bf c'}=c_2\cdots c_n$ and ${\bf d'}=d_2\cdots d_n$.

 \[
{\begin{array}{lllllllll}
 &{\bf a} & = & a_1  {\bf a}' & = & a_1 a_2 \ldots a_n ~~~ &\in {\rm supp}(f)\\
 &{\bf b} & = & b_1  {\bf b}' & = & b_1 b_2 \ldots b_n ~~~ &\in {\rm supp}(f)\\
 \oplus & {\bf c}  & = & c_1   {\bf c}'  & = & c_1 c_2  \ldots c_n
~~ &\in {\rm supp}(f)\\
\hline
 &{\bf d} & = & d_1  {\bf d}' & = & d_1 d_2 \ldots d_n & \not \in {\rm supp}(f) \\
\end{array} }
\]

If $a_1=b_1=c_1$, then it follows that $a_1=b_1=c_1=d_1$.
This implies that supp$(f^{x_1=a_1})$
 is not an affine subspace of $\mathbb{Z}_2^{n-1}$.
 This
 contradicts that $f^{x_1=a_1} \in \mathscr{A}$.
 Hence without loss of generality, we can assume that
 $a_1=b_1=\bar{c}_1$, it follows that $a_1=b_1=\bar{c}_1=\bar{d}_1$.

We have
\[h_{\bf {a'}}=f_{a_1{\bf a'}}+f_{\bar{a}_1{\bf a'}}\neq 0,\]
\[h_{\bf {b'}}=f_{b_1{\bf b'}}+f_{\bar{b}_1\bf {b'}}\neq 0,\]
\[h_{\bf {c'}}=f_{c_1{\bf c'}}+f_{\bar{c}_1{\bf c'}}\neq 0,\]
since
$f_{a_1{\bf a'}}=f_{b_1{\bf b'}}=f_{c_1{\bf c'}}=1$ and $f_{\bar{a}_1{\bf a'}}\geq 0, f_{\bar{b}_1{\bf b'}}\geq 0, f_{\bar{c}_1{\bf c'}}\geq 0$.
Since supp$(h)$ is affine
and ${\bf a}', {\bf b}' , {\bf c}'\in {\rm supp}(h)$ we have ${\bf d}'\in{\rm supp}(h)$, as ${\bf d}' = {\bf a}' \oplus {\bf b}'  \oplus {\bf c}'$.
 \[
{\begin{array}{*{3}c}
 ~ &   {\bf a}' ~~~\in {\rm supp}(h) \\
 ~ &   {\bf b}' ~~~\in {\rm supp}(h)  \\
 ~ \oplus  &   {\bf c}' ~~~\in {\rm supp}(h)  \\
\hline
 ~ &   {\bf d}' ~~~\in {\rm supp}(h)  \\
\end{array} }
\]
 By $h_{{\bf d}'}=f_{d_1{\bf d}'}+f_{\bar{d}_1{\bf d}'}\neq 0$, we have $f_{\bar{d}_1{\bf d}'}\neq 0$ since $f_{d_1{\bf d}'}=0$.
Thus ${\bf d}' \in {\rm supp}(f^{x_1=\bar{d}_1})$.
As $f$ takes values in $\{0, 1\}$, $f_{\bar{d}_1{\bf d}'} =1$.
 Recall that $a_1 = b_1 = \bar{c}_1 =  \bar{d}_1$,
we have ${\bf d}' \in {\rm supp}(f^{x_1=\bar{c}_1})$, and also
 ${\bf a}', {\bf b}' \in {\rm supp}(f^{x_1=\bar{c}_1})$.
The support of $f^{x_1=\bar{c}_1}$ is an
affine subspace, and so we have ${\bf c}'\in{\rm supp}(f^{x_1=\bar{c}_1})$,
as ${\bf c}' = {\bf a}' \oplus {\bf b}'  \oplus {\bf d}'$.
 \[
{\begin{array}{*{3}c}
 ~ & {\bf a}' ~~~\in {\rm supp}(f^{x_1=\bar{c}_1}) \\
 ~ &    {\bf b}' ~~~\in {\rm supp}(f^{x_1=\bar{c}_1})  \\
 \oplus  &    {\bf d}' ~~~\in {\rm supp}(f^{x_1=\bar{c}_1})  \\
\hline
 ~ &    {\bf c}' ~~~\in {\rm supp}(f^{x_1=\bar{c}_1})  \\
\end{array} }
\]
 Thus $\bar{c}_1{\bf c'}\in {\rm supp}(f)$.
 So
 $h_{\bf{c'}}= f_{c_1\bf{c'}} +  f_{\bar{c}_1\bf{c'}} = 2$. But
 $h_{{\bf d}'}=f_{d_1{\bf d}'}+f_{\bar{d}_1{\bf d}'}=1$
since $f_{d_1{\bf d}'}=0$ and $f_{\bar{d}_1{\bf d}'}=1$.
  Thus $h \not \in \mathscr{A}$. This is a contradiction.
\end{itemize}

\end{proof}

\subsection{Dichotomy When $\widehat{\mathcal{F}}$ Does Not Satisfy Parity}

\begin{theorem}\label{main-theorem-for-no-parity}
If the signature set $\widehat{\mathcal{F}}$
contains a signature that does not satisfy the parity condition,
then either  $\operatorname{Pl-Holant}(\widehat{\mathcal{EQ}}, \widehat{\mathcal{F}})$
is \#P-hard, or $\widehat{\mathcal{F}}\subseteq\widehat{\mathscr{P}}$, or $\widehat{\mathcal{F}}\subseteq\mathscr{A}$,
in which case the problem is tractable in P.
\end{theorem}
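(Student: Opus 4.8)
The plan is to invoke the symmetric planar dichotomy Theorem~\ref{heng-tyson-dichotomy-pl-csp} after manufacturing a \emph{symmetric} non-affine signature and a \emph{symmetric} non-product-type signature, moving freely between $\operatorname{Pl-Holant}(\widehat{\mathcal{EQ}}, \widehat{\mathcal{F}})$ and $\operatorname{Pl-\#CSP}(\mathcal{F})$ via~(\ref{csp-holant}). First I would apply Lemma~\ref{constructing-[1,a]} to the parity-violating signature in $\widehat{\mathcal{F}}$ to construct a unary $[1,w]$ with $w \ne 0$ on the Holant side; under $H_2$ this is, up to a scalar, the unary $[1+w,\,1-w]$ in $\operatorname{Pl-\#CSP}(\mathcal{F})$. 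If $\widehat{\mathcal{F}} \subseteq \widehat{\mathscr{P}}$ or $\widehat{\mathcal{F}} \subseteq \mathscr{A}$ --- equivalently $\mathcal{F} \subseteq \mathscr{P}$ or $\mathcal{F} \subseteq \mathscr{A}$, using $\widehat{\mathscr{A}} = \mathscr{A}$ --- then $\operatorname{Pl-\#CSP}(\mathcal{F})$ is tractable by Theorem~\ref{non-planar-csp-dichotomy} (the planar case being a fortiori tractable) and we are done. Otherwise fix $f \in \mathcal{F} \setminus \mathscr{A}$ and $g \in \mathcal{F} \setminus \mathscr{P}$; the goal is \#P-hardness.

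I would peel off the exceptional case first: suppose every signature in $\mathcal{F}$ is $\{0,1\}$-valued up to a scalar. Then everything constructible in $\operatorname{Pl-\#CSP}(\mathcal{F})$ is nonnegative-rational-valued, so $[1+w,1-w]$ is proportional to some $[p,q]$ with $p,q \ge 0$ rational and $p \ne q$. If $p=0$ or $q=0$ we obtain $[1,0]$ or $[0,1]$ and conclude by Lemma~\ref{0-1-value-signature} (using $\mathcal{F} \not\subseteq \mathscr{A}$). Otherwise we obtain $[1,a]$ with $a = q/p$ a positive rational $\ne 1$, so $a \ne 0, \pm 1$, and the derivatives $\partial_{[1,a]}^{k}(=_{k+1})=[1,a^k]$ supply the pairwise linearly independent unaries $[1,a],[1,a^2],[1,a^3]$. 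Since a $\{0,1\}$-valued signature lies in $\mathscr{A}$ iff its support is affine and every signature of $\mathscr{P}$ has affine support, $f \notin \mathscr{P}$ as well; if $\arity(f)=2$ then Lemma~\ref{binary-product-asymmetric-symmetric} yields a symmetric $g' \notin \mathscr{P}$ directly, and if $\arity(f)\ge 3$ then Theorem~\ref{arity-reduction-product} does. Now $\{[1,a],g'\}$ is symmetric and contained in none of $\mathscr{P}$, $\mathscr{A}$, $\widehat{\mathscr{M}}$ (for $\widehat{\mathscr{M}}$ apply Proposition~\ref{matchgate:affine:hat} with $a \ne \pm 1$), so $\operatorname{Pl-\#CSP}([1,a],g')$ is \#P-hard by Theorem~\ref{heng-tyson-dichotomy-pl-csp}, and this reduces to $\operatorname{Pl-\#CSP}(\mathcal{F})$.

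In the general case some $f_0 \in \mathcal{F}$ has two distinct nonzero values, so by Lemma~\ref{[1,0]-[1,1]-[0,1]-csp-new-unary} --- using $(=_1)=[1,1]$ and, when $w = \pm 1$, the $[1,0]$ or $[0,1]$ already at hand --- we obtain a unary $[1,a]$ with $a \ne 0,1$, and together with $[1,1]$ and $\partial_{[1,a]}^{2}(=_3)=[1,a^2]$ (or with the available $[1,0]$/$[0,1]$) we have three pairwise linearly independent unaries; Theorem~\ref{arity-reduction-product}, or Lemma~\ref{binary-product-asymmetric-symmetric} when $\arity(g)=2$, then produces a symmetric $g' \notin \mathscr{P}$. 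For a symmetric non-affine signature: if $a^4 \ne 1$ then $[1,a]$ itself is one; otherwise $a \in \{-1,{\frak i},-{\frak i}\}$, and pushing $[1,a]$ back to the Holant side and combining with $[1,0,1,0] \in \widehat{\mathcal{EQ}}$ realizes $[0,1]$ there (e.g.\ two successive applications of $\partial_{[1,\pm{\frak i}]}$ to $[1,0,1,0]$ give $[0,1]$ up to a scalar), whereupon Lemma~\ref{arity-reduction-affine} --- with $[1,x] := [1,w]$, $x \ne 0$ --- applied to the witness of $\widehat{\mathcal{F}} \not\subseteq \mathscr{A}$ yields a unary $u \notin \mathscr{A}$. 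In either case we have a symmetric non-affine signature (call it $u$) with $u \notin \widehat{\mathscr{M}}$ by Proposition~\ref{matchgate:affine:hat}, so $\{u,g'\}$ is symmetric and contained in none of $\mathscr{P},\mathscr{A},\widehat{\mathscr{M}}$; Theorem~\ref{heng-tyson-dichotomy-pl-csp} gives \#P-hardness, and composing the reductions with~(\ref{csp-holant}) completes the proof.

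The hard part is the unary bookkeeping in the general case: showing that the single parity-violating unary $[1,w]$, pushed through $H_2$ and hit with derivatives against $\widehat{\mathcal{EQ}}$, simultaneously yields all the auxiliary unaries demanded by both Theorem~\ref{arity-reduction-product} and Lemma~\ref{arity-reduction-affine}, and that the resulting symmetric pair escapes $\mathscr{P}$, $\mathscr{A}$ and $\widehat{\mathscr{M}}$ all at once. It is precisely the failure of this in the $\{0,1\}$-valued situation --- where $[1,x]$ with $|x|\ne 1$ and $[0,1]$ cannot both be secured --- that forces the separate treatment via Lemma~\ref{0-1-value-signature}.
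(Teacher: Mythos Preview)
Your overall strategy matches the paper's exactly: produce a symmetric non-product signature via Theorem~\ref{arity-reduction-product} and a symmetric non-affine unary via Lemma~\ref{arity-reduction-affine}, then invoke Theorem~\ref{heng-tyson-dichotomy-pl-csp}. The handling of the $\{0,1\}$-valued case via Lemma~\ref{0-1-value-signature} is also the same.

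There is, however, a genuine gap in your ``general case''. You invoke Lemma~\ref{[1,0]-[1,1]-[0,1]-csp-new-unary} to obtain $[1,a]$ with $a\neq 0,1$, but that lemma requires $[1,0]$ or $[0,1]$ on the $\operatorname{Pl-\#CSP}$ side, and you only have these when $w=\pm 1$ (since $H_2[1,\pm 1]$ is $[1,0]$ or $[0,1]$). Your parenthetical ``using $(=_1)=[1,1]$'' does not fill the gap: $[1,1]$ is not a substitute for the pinning signature in the proof of Lemma~\ref{[1,0]-[1,1]-[0,1]-csp-new-unary}. When $w\neq\pm 1$ you never explain where $[1,a]$ comes from, so the subsequent construction of three pairwise independent unaries and the application of Theorem~\ref{arity-reduction-product} are unsupported.

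The fix is easy --- when $w\neq\pm 1$, the signature $H_2[1,w]$ is itself (up to a scalar) a unary $[1,a']$ with $a'=\frac{1-w}{1+w}\neq 0,1,-1$, so Lemma~\ref{[1,0]-[1,1]-[0,1]-csp-new-unary} is not needed there --- but then your later case split ``if $a^4\neq 1$ \dots\ otherwise $a\in\{-1,{\frak i},-{\frak i}\}$'' has to track two different origins of $a$ (from $H_2[1,w]$ when $w\neq\pm 1$, versus from Lemma~\ref{[1,0]-[1,1]-[0,1]-csp-new-unary} when $w=\pm 1$), and the bookkeeping is not as clean as you suggest. The paper sidesteps this entirely by doing the case analysis on $w$ \emph{first}: if $w^4\neq 1$ then $H_2[1,w]$ is already the non-affine unary and one goes straight to alternative~(A); if $w=\pm 1$ one has $[1,0]$ or $[0,1]$ and either invokes Lemma~\ref{0-1-value-signature} or Lemma~\ref{[1,0]-[1,1]-[0,1]-csp-new-unary} followed by a sub-case on $c$; if $w=\pm{\frak i}$ one builds $[0,1]$ on the Holant side from $[1,0,1,0]$ and goes to alternative~(B). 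Organizing by $w$ rather than by ``$\{0,1\}$-valued or not'' makes the availability of every auxiliary unary transparent at each step and avoids the circularity in your general case.
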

\begin{proof}
Let $\mathcal{F}=H_2\widehat{\mathcal{F}}$,
where $H_2=\left[\begin{smallmatrix}
1 & 1\\
1 & -1
\end{smallmatrix}
\right]$.
For any $\widehat{f} \in \widehat{\mathcal{F}}$ of arity $n$,
let
 $H_2^{\otimes{n}}\widehat{f}=f$.
One can translate the theorem statement to an equivalent statement
in the Pl-\#CSP setting, i.e.,
either $\operatorname{Pl-\#CSP}(\mathcal{F})$
is \#P-hard, or $\mathcal{F}\subseteq\mathscr{P}$, or $\mathcal{F}\subseteq\mathscr{A}$.
Recall that $\widehat{\mathscr{A}} = H_2 \mathscr{A} = \mathscr{A}$.

 If $\widehat{\mathcal{F}}\subseteq\widehat{\mathscr{P}}$ or $\widehat{\mathcal{F}}\subseteq\mathscr{A}$,
equivalently if $\mathcal{F}\subseteq\mathscr{P}$ or
$\mathcal{F}\subseteq\mathscr{A}$,
then the problem  $\operatorname{Pl-\#CSP}(\mathcal{F})$
 is tractable in P by Theorem~\ref{non-planar-csp-dichotomy}.
Otherwise, there exist $\widehat{f}, \widehat{g}\in\widehat{\mathcal{F}}$ such that $\widehat{f}\notin \widehat{\mathscr{P}}$
 and $\widehat{g}\notin\mathscr{A}$.
Translating to the Pl-\#CSP setting,
 there exist $f, g\in\mathcal{F}$
such that $f\notin \mathscr{P}$ and $g\notin\mathscr{A}$
 in Pl-\#CSP$(\mathcal{F})$.

Moreover, by Lemma~\ref{constructing-[1,a]},
we can construct $[1, w]$ in $\operatorname{Pl-Holant}(\widehat{\mathcal{EQ}}, \widehat{\mathcal{F}})$, where $w\neq 0$,
such that
\[\operatorname{Pl-Holant}(\widehat{\mathcal{EQ}}, [1, w], \widehat{\mathcal{F}})
\le_{\rm T}
\operatorname{Pl-Holant}(\widehat{\mathcal{EQ}}, \widehat{\mathcal{F}}).\]
This implies that
\[\operatorname{Pl-\#CSP}(H_2[1,w], \mathcal{F})
\le_{\rm T}
\operatorname{Pl-\#CSP}(\mathcal{F}).\]
We can finish the proof by the following two alternatives:
\begin{description}
\item{(A)} Construct a unary signature $[1, b]$ with $b^4\neq 0, 1$ in $\operatorname{Pl-\#CSP}(\mathcal{F})$, i.e.,
\[\operatorname{Pl-\#CSP}([1, b], \mathcal{F})
\le_{\rm T}
\operatorname{Pl-\#CSP}(\mathcal{F}).\]
In this case we have unary signatures
 $\partial_{[1, b]}^{\{1,2\}}(=_3)=[1, b^2]$ and $\partial_{[1, b]}^{\{1,2,3\}}(=_4)=[1, b^3]$.
So \[\operatorname{Pl-\#CSP}([1, b], [1, b^2], [1, b^3], \mathcal{F})
\le_{\rm T}
\operatorname{Pl-\#CSP}(\mathcal{F}).\]
Note that $[1, b], [1, b^2], [1, b^3]$ are pairwise linearly independent
 since $b^4\neq 0, 1$.
We have $[1, b] \not \in \mathscr{A} \cup \widehat{\mathscr{M}}$
by Propositions~\ref{A-has-same-norm-etc} and \ref{matchgate:affine:hat}.
Then by Theorem~\ref{arity-reduction-product}
and $f\in\mathcal{F} \setminus \mathscr{P}$,
 there exists a symmetric signature $f'\notin\mathscr{P}$ such that
 \[\operatorname{Pl-\#CSP}(f', [1, b], [1, b^2], [1, b^3], \mathcal{F})
\le_{\rm T}
\operatorname{Pl-\#CSP}([1, b], [1, b^2], [1, b^3], \mathcal{F}).\]
 Note that the symmetric signature set $\{[1, b], f'\}$
satisfies
\[\{[1, b], f'\} \nsubseteq \mathscr{P}, ~~~~
\{[1, b], f'\} \nsubseteq \mathscr{A}, ~~~~
\{[1, b], f'\} \nsubseteq \widehat{\mathscr{M}}.\]
So  $\operatorname{Pl-\#CSP}(f', [1, b])$ is \#P-hard by Theorem~\ref{heng-tyson-dichotomy-pl-csp}.
Thus
$\operatorname{Pl-\#CSP}(\mathcal{F})$
is \#P-hard.

\item{(B)} Construct $[0, 1]$ and $[1, x]$ with $x^4=1$ in $\operatorname{Pl-Holant}(\widehat{\mathcal{EQ}}, \widehat{\mathcal{F}})$, i.e.,
\[\operatorname{Pl-Holant}(\widehat{\mathcal{EQ}}, [0, 1], [1, x], \widehat{\mathcal{F}})
\le_{\rm T}
\operatorname{Pl-Holant}(\widehat{\mathcal{EQ}}, \widehat{\mathcal{F}}).\]
Then
by $\widehat{g} \in \widehat{\mathcal{F}} \setminus \mathscr{A}$ and
Lemma~\ref{arity-reduction-affine}, we have a unary signature $[y, z]$
in $\operatorname{Pl-Holant}(\widehat{\mathcal{EQ}}, \widehat{\mathcal{F}})$
that is not in $\mathscr{A}$.
Translating into
$\operatorname{Pl-\#CSP}(\mathcal{F})$,
this means that we have a unary signature $H_2[y, z] \not \in \mathscr{A}$,
because $\mathscr{A}$ is invariant under $H_2$.
    Then we are done by the previous case.
\end{description}

 Now we prove the theorem according to the value of $w$.
\begin{enumerate}
\item Suppose $w^4\neq 1$.
As $w \not =0$ is given, by Proposition~\ref{A-has-same-norm-etc} we have
$[1,w]  \not \in \mathscr{A}$, and thus
we have $H_2[1,w] \not \in \mathscr{A}$ in
$\operatorname{Pl-\#CSP}(\mathcal{F})$.
Thus we are done by alternative (A).

\item Suppose $w=\pm 1$.

In this case, we have $H_2[1,w]= [1+w, 1-w]=2[1, 0]$ if
$w=1$ or $2[0,1]$ if $w=-1$,
 in $\operatorname{Pl-\#CSP}(\mathcal{F})$.
If, up to a scalar,
 each  signature in $\mathcal{F}$ takes value
in  $\{0, 1\}$, then we are done by Lemma~\ref{0-1-value-signature}.
Otherwise, we can get a unary signature $[1, c]$ with $c\neq 0, 1$ by Lemma~\ref{[1,0]-[1,1]-[0,1]-csp-new-unary}
in $\operatorname{Pl-\#CSP}(\mathcal{F})$.
\begin{itemize}
\item If $c^4\neq  1$, then we are done by alternative (A),
 as $c \not = 0$ is given by Lemma~\ref{[1,0]-[1,1]-[0,1]-csp-new-unary}.

\item If $c=\pm {\frak i}$,
 we translate into $\operatorname{Pl-Holant}(\widehat{\mathcal{EQ}}, \widehat{\mathcal{F}})$
by $H_2^{-1} = \frac{1}{2} H_2$.
So we have
$H_2^{-1}[1, c] = \frac{1+c}{2}[1, -c]$,
and therefore also
 $\partial_{[1, -c]}^{\{1,2\}}([1, 0, 1, 0])=-2c[0, 1]$
in $\operatorname{Pl-Holant}(\widehat{\mathcal{EQ}}, \widehat{\mathcal{F}})$,
where the signature $[1, 0, 1, 0] \in
\widehat{\mathcal{EQ}}$.
 Thus we are done by alternative (B).
    \item If $c=-1$, again we
translate into $\operatorname{Pl-Holant}(\widehat{\mathcal{EQ}}, \widehat{\mathcal{F}})$
by $H_2^{-1}$.
In addition to $[1, w]$ where $w = \pm 1$, and so $w^4=1$,
we also have $H_2^{-1}[1, -1] = [0, 1]$ in
$\operatorname{Pl-Holant}(\widehat{\mathcal{EQ}}, \widehat{\mathcal{F}})$.
Thus we are done by alternative (B).
\end{itemize}

\item For $w=\pm {\frak i}$,
in addition to $[1, w]$ with  $w^4=1$, we also have
$\partial_{[1, w]}^{\{1,2\}}([1, 0, 1, 0])=2w[0, 1]$  in $\operatorname{Pl-Holant}(\widehat{\mathcal{EQ}}, \widehat{\mathcal{F}})$.
Thus we are done by alternative (B).
\end{enumerate}
\end{proof}

Theorem~\ref{main-theorem-for-no-parity} is a dichotomy for
$\PlCSP(\mathcal{F})$ in the case when
$\widehat{\mathcal{F}}$ does not satisfy the Parity Condition.
It conforms to the final form Theorem~\ref{main-dichotomy-thm}.
Note that since some signature in $\widehat{\mathcal{F}}$ violates
the Parity Condition, the possible tractability  condition
$\widehat{\mathcal{F}} \subseteq \mathscr{M}$ does not
appear in the statement of Theorem~\ref{main-theorem-for-no-parity}.

\section{A Dichotomy Theorem of Pl-CSP$^2(\widehat{\mathcal{EQ}}, \widehat{\mathcal{F}})$}\label{sec-csp2}
In this section we prove a dichotomy theorem for
 Pl-CSP$^2(\widehat{\mathcal{EQ}}, \widehat{\mathcal{F}})$,
Theorem~\ref{dichotomy-csp-2},
where all signatures in $\widehat{\mathcal{F}}$ satisfy the
Parity Condition.
By (\ref{eqn:prelim:PlCSPd_equiv_Holant}), we have
\[\operatorname{Pl-CSP}^2(\widehat{\mathcal{EQ}}, \widehat{\mathcal{F}})
\equiv_{\rm T}
\operatorname{Pl-Holant}( \mathcal{EQ}_2,\widehat{\mathcal{EQ}}, \widehat{\mathcal{F}}).\]
Theorem~\ref{dichotomy-csp-2} will be used later in Section~\ref{sec:F-has-parity}
in the situation when we can construct $(=_4)$ in
$\operatorname{Pl-Holant}(\widehat{\mathcal{EQ}}, \widehat{\mathcal{F}})$.
 Then we have the following chain of equivalent problems
\begin{eqnarray*}
\operatorname{Pl-CSP}(\mathcal{F})
& \equiv_{\rm T} & \operatorname{Pl-Holant}(\widehat{\mathcal{EQ}},
 \widehat{\mathcal{F}}) \\
& \equiv_{\rm T} &
\operatorname{Pl-Holant}(\widehat{\mathcal{EQ}},  (=_4),
\widehat{\mathcal{F}}) ~~~~\mbox{(when we can construct $(=_4)$)}\\
& \equiv_{\rm T} &
\operatorname{Pl-Holant}(\widehat{\mathcal{EQ}}, \mathcal{EQ}_2,
\widehat{\mathcal{F}}) ~~~~\mbox{(by Lemma~\ref{equality-4-to-all-even-equality})}\\
& \equiv_{\rm T} &
\operatorname{Pl-CSP}^2(\widehat{\mathcal{EQ}}, \widehat{\mathcal{F}}).
\end{eqnarray*}

By Proposition~\ref{A-has-same-norm-etc},
a binary signature $[1, 0, x]$ is not in $\mathscr{A}$ iff
$x^4\neq 0, 1$.
Suppose we have some $[1, 0, x]$.
The following lemma says that if
$[1, 0, x]\notin\mathscr{A}$,
then we can get $[1, 0, z]$ for any $z\in\mathbb{C}$,
as well as $[0, 1]^{\otimes 2}$,
in  $\operatorname{Pl-Holant}(\widehat{\mathcal{EQ}}, \widehat{\mathcal{F}})$.
Moreover, even if $x=\pm {\frak i}$, we still can get $[0, 1]^{\otimes 2}$
and $[1, 0, -1]$ from $[1, 0, x]$ in $\operatorname{Pl-Holant}(\widehat{\mathcal{EQ}}, \widehat{\mathcal{F}})$.
The proof of the lemma is the same as
 the proof of Lemma~8.3 in \cite{Guo-Williams}.

\begin{lemma}\label{constructing-[1,0,x]-heng}
If $[1, 0, x]\notin\mathscr{A}$, then
for any $z\in\mathbb{C}$,
\[\operatorname{Pl-Holant}([1, 0, z], [0, 1]^{\otimes 2},
\widehat{\mathcal{EQ}}, \widehat{\mathcal{F}})
\le_{\rm T}
\operatorname{Pl-Holant}([1, 0, x],
\widehat{\mathcal{EQ}}, \widehat{\mathcal{F}}).\]

Moreover, if  $x= \pm {\frak i}$, we have
\[\operatorname{Pl-Holant}( [1, 0, -1],[0, 1]^{\otimes 2},
\widehat{\mathcal{EQ}}, \widehat{\mathcal{F}})
\le_{\rm T}
\operatorname{Pl-Holant}([1, 0, x],
\widehat{\mathcal{EQ}}, \widehat{\mathcal{F}}). \]
\end{lemma}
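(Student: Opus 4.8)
\textbf{Proof proposal for Lemma~\ref{constructing-[1,0,x]-heng}.}

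The plan is to mimic the standard interpolation-and-combination strategy used for symmetric binary signatures of the form $[1,0,x]$, exactly as in Lemma~8.3 of \cite{Guo-Williams}, but checking that every gadget is planar and uses only signatures available on the left side. First I would observe that connecting two copies of $[1,0,x]$ by a single edge (a planar operation, since binary signatures can always be chained planarly) yields $[1,0,x^2]$, and more generally $k$ copies chained in a path give $[1,0,x^k]$. Since $[1,0,x]\notin\mathscr{A}$, by Proposition~\ref{A-has-same-norm-etc} we have $x^4\neq 0,1$; in particular $x\neq 0$ and $|x|\neq 0$, and the powers $x^k$ are pairwise distinct whenever $|x|\neq 1$. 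When $|x|\neq 1$ the gadgets $[1,0,x^k]$ for $k=1,2,\ldots$ realize a linear recurrence with distinct eigenvalue ratio, so a Vandermonde/polynomial interpolation argument (in the spirit of Lemma~\ref{interpolation-equality-4}, stratifying by how many copies of the target binary signature receive the all-ones assignment) lets us interpolate \emph{any} $[1,0,z]$, $z\in\mathbb C$, into $\operatorname{Pl-Holant}(\widehat{\mathcal{EQ}},\widehat{\mathcal{F}})$. In particular we can obtain $[0,0,1]=[1,0,z]$ at $z\to\infty$ after normalization, or more cleanly: taking $[1,0,z]$ for a suitable $z$ and then $\partial_{[1,0,z]}$ against $[1,0,1/z]$-type combinations lets us isolate $[0,1]^{\otimes 2}$; alternatively, $[0,1]^{\otimes 2}$ is obtained directly as $[1,0,z]$ with $z=0$ followed by $\partial$ with $[1,0]\in\widehat{\mathcal{EQ}}$ — concretely, connecting $[1,0]$ to one input of a large even-arity equality in $\widehat{\mathcal{EQ}}$ and combining with the interpolated $[1,0,0]$ yields $[0,1]^{\otimes 2}$.

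The genuinely delicate case is $|x|=1$ with $x^4\neq 1$ (so $x$ is a non-real, non-$\pm i$ root of unity or a non-root-of-unity on the unit circle), and the borderline case $x=\pm i$ which the lemma singles out. For $x=\pm i$ we have $x^2=-1$, so chaining gives $[1,0,-1]$ immediately, and then from $[1,0,-1]$ together with $[1,0,1,0]\in\widehat{\mathcal{EQ}}$ and $[1,0]\in\widehat{\mathcal{EQ}}$ one constructs $[0,1]^{\otimes 2}$: e.g. $\partial_{[1,0,-1]}$ applied twice to a length-$4$ equality, or by taking the tensor $[1,0,-1]\otimes[1,0,-1]$ and pinning. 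This gives exactly the ``moreover'' clause. For the remaining unit-modulus $x$ with $x^4\neq1$, the interpolation by powers $x^k$ no longer produces distinct values, so instead I would combine $[1,0,x]$ with the even equalities in $\widehat{\mathcal{EQ}}$ (which include $[1,0,1]=(=_2)$ and all $(=_{2k})$) and with the derivative operator to build a two-dimensional recursive gadget construction whose transition matrix is $\operatorname{diag}(1,x)$ up to conjugation; since $x$ is not a root of unity of order $1,2,4$ — and if $x$ has infinite multiplicative order the eigenvalue ratio $x$ is never a root of unity, while if $x$ has finite order $m\notin\{1,2,4\}$ one uses a second gadget to break the periodicity — a polynomial interpolation over the resulting linear recursive sequence (Lemma-\ref{interpolation-equality-4}-style) recovers every $[1,0,z]$. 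The all-important point is that all of these gadgets only chain binary signatures and attach unary $[1,0]$/equality signatures, hence stay planar.

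The main obstacle I anticipate is precisely handling the roots-of-unity obstructions in the $|x|=1$ regime: when $x^m=1$ for some finite $m$, a naive linear recursion becomes periodic and interpolation fails, so one must either exhibit an auxiliary gadget (using $\widehat{\mathcal{EQ}}$ and $\partial$) that produces a signature realizing a \emph{different} eigenvalue ratio, or argue that the set of $z$ one can already reach suffices for the downstream applications. Since the lemma claims we get $[1,0,z]$ for \emph{all} $z\in\mathbb C$, the former is needed, and this is the one place where a careful case analysis on the order of $x$ is unavoidable. The rest — planarity of chaining binary signatures, the Vandermonde rank argument, and the $x=\pm i$ special case — is routine, and I would simply cite the structure of the proof of Lemma~8.3 in \cite{Guo-Williams} and note that every step there transfers verbatim to the planar setting because the gadgets involved are paths of binary signatures together with equalities from $\widehat{\mathcal{EQ}}\subseteq\mathscr{M}$.
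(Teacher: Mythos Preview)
Your proposal has a genuine gap in the very case you flag as delicate, $|x|=1$ with $x^4\neq 1$. You correctly note that chaining copies of $[1,0,x]$ alone gives only $[1,0,x^k]$, which is periodic when $x$ is a root of unity, and you promise an ``auxiliary gadget'' to break the periodicity --- but you never exhibit one. The paper's proof supplies exactly the missing gadget: take two copies of the ternary $[1,0,1,0]\in\widehat{\mathcal{EQ}}$ and connect them by two parallel paths, each passing through a copy of $[1,0,x]$ (Figure~\ref{fig:heng}). This planar binary gadget has signature $[1+x^2,\,0,\,2x]$. When $|x|=1$ and $x\neq\pm 1,\pm\mathfrak{i}$ one has $0<|1+x^2|<2=|2x|$, so after normalizing one obtains $[1,0,y]$ with $|y|>1$, reducing to the $|x|\neq 1$ case already handled by your path-chaining interpolation. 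This single gadget is the whole content of the lemma beyond the easy cases; without it, the root-of-unity obstruction you identify is real and your sketch does not resolve it.

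The same gadget also handles $x=\pm\mathfrak{i}$ cleanly: then $1+x^2=0$ and the signature is $[0,0,\pm 2\mathfrak{i}]$, a nonzero scalar multiple of $[0,1]^{\otimes 2}$. Your suggested constructions of $[0,1]^{\otimes 2}$ from $[1,0,-1]$ and $\widehat{\mathcal{EQ}}$ alone are too vague to be checked, and note a factual slip: $\widehat{\mathcal{EQ}}$ consists of the signatures $[1,0],[1,0,1],[1,0,1,0],\ldots$, not the equalities $(=_{2k})$; in particular $(=_4)=[1,0,0,0,1]$ is \emph{not} available, only $[1,0,1,0,1]$ is.
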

\begin{proof}
We will use the following gadget (Figure~\ref{fig:heng})
from \cite{Guo-Williams}, where
circle vertices are assigned $[1, 0, 1, 0] \in \widehat{\mathcal{EQ}}$
and square vertices are assigned $[1, 0, x]$.  It
 has signature $[1+x^2, 0, 2x]$.

\begin{figure}[ht]
 \centering
 \def\capWidth{3.5cm}
 \captionsetup[subfigure]{width=\capWidth}
 \tikzstyle{entry} = [internal, inner sep=2pt]
  \makebox[\capWidth][c]{
   \begin{tikzpicture}[scale=\scale,transform shape,node distance=\nodeDist,semithick]
    \node[external] (0)              {};
    \node[internal] (1) [right of=0] {};
    \node[external] (2) [right of=1] {};
    \node[external] (3) [right of=2] {};
    \node[internal] (4) [right of=3] {};
    \node[external] (5) [right of=4] {};
    \path (0) edge                          node[near end]   (e1) {}               (1)
          (1) edge[out= 45, in= 135]        node[square]     (e2) {}               (4)
                            edge[out=-45, in=-135]        node[square]     (e3) {}               (4)
          (4) edge                          node[near start] (e4) {}               (5);
    \begin{pgfonlayer}{background}
     \node[draw=\borderColor,thick,rounded corners,fit = (e1) (e2) (e3) (e4),transform shape=false] {};
    \end{pgfonlayer}
  \end{tikzpicture}}
 \caption{A gadget with signature $[1+x^2, 0, 2x]$.}
 \label{fig:heng}
\end{figure}
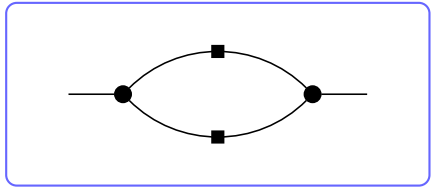

\begin{itemize}
\item
For  $[1, 0, x] \not \in \mathscr{A}$ we have $x^4\neq 0, 1$ by
 Proposition~\ref{A-has-same-norm-etc}.
If $|x|\neq 1$, by combining $k$ copies of $[1, 0, x]$, we have $[1, 0, x^k]$.
Then we can use polynomial interpolation to get
a reduction
$\operatorname{Pl-Holant}([y, 0, z],
\widehat{\mathcal{EQ}}, \widehat{\mathcal{F}})
\le_{\rm T}
\operatorname{Pl-Holant}([1, 0, x],
\widehat{\mathcal{EQ}}, \widehat{\mathcal{F}})$, for any
$y, z \in\mathbb{C}$. In particular we can get
 $[1, 0, z]$ for any $z\in\mathbb{C}$ and $[0, 0, 1]=[0, 1]^{\otimes 2}$.

Otherwise, $|x|=1$. In $[1+x^2, 0, 2x]$,
$0 < |1+x^2| < 2$ by $x \not = \pm {\frak i}$
and $x\neq \pm 1$.
However $|2x|=2$.
Therefore, after normalizing, the signature $[1, 0, \frac{2x}{1+x^2}]$ can interpolate
$[1, 0, z]$ for any $z\in\mathbb{C}$ and $[0, 0, 1]=[0, 1]^{\otimes 2}$.
\item For $x=\pm {\frak i}$, $[1+x^2, 0, 2x]=2{\frak i}[0, 0, 1]=2{\frak i}[0, 1]^{\otimes 2}$ and by combining two copies of $[1, 0, x]$, we have $[1, 0, -1]$.
\end{itemize}
\end{proof}

A very desirable tool is to pin a variable to 0 or 1.
This means we would like to have $[1,0]$ and $[0, 1]$.
We do have $[1,0] \in \widehat{\mathcal{EQ}}$. However,
if all signatures in $\widehat{\mathcal{F}}$ satisfy the even Parity
Condition, namely $f_{\alpha} =0$ for all $\alpha$ of odd weight,
then  every signature constructed in
$\operatorname{Pl-Holant}(
\widehat{\mathcal{EQ}}, \widehat{\mathcal{F}})$
has even Parity
Condition. Therefore
it is impossible to construct
$[0, 1]$.
But it is possible to construct $[0, 1]^{\otimes 2}$.
The next lemma shows that with $\widehat{\mathcal{EQ}}$,
getting $[0, 1]^{\otimes 2}$ is almost as good as $[0, 1]$.
\begin{lemma}\label{[0,1]-tensor-2-equal-[0,1]}
For $\mathscr{C} = \mathscr{A}$ or $\mathscr{M}$,
if there exists $f\in\widehat{\mathcal{F}}$ of arity $n\geq 2$
such that $f^{x_i=1}\notin\mathscr{C}$ for some $i\in[n]$,
then there exists a signature $g\notin\mathscr{C}$ with
 ${\rm arity}(g)=n-1$ such that
\[\operatorname{Pl-Holant}(\widehat{\mathcal{EQ}}, [0, 1]^{\otimes 2}, g, \widehat{\mathcal{F}})
\le_{\rm T}
\operatorname{Pl-Holant}(\widehat{\mathcal{EQ}}, [0, 1]^{\otimes 2}, \widehat{\mathcal{F}}).\]
Furthermore if $f$ satisfies the even Parity Condition, so does $g$.
\end{lemma}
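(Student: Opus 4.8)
The plan is to produce $g$ by pinning one variable of $f$ to $1$, using $[0,1]^{\otimes 2}$ (rather than $[0,1]$, which may be impossible to construct under the even Parity Condition). Concretely, since $[1,0,1,0]\in\widehat{\mathcal{EQ}}$ and we have $[0,1]^{\otimes 2}$, first observe that $\partial_{[0,1]}([1,0,1,0])=[0,1,0]$ and hence, by connecting $[0,1]^{\otimes 2}$ to two disjoint copies of $[1,0,1,0]$, we obtain $[0,1,0]^{\otimes 2}$ as in Lemma~\ref{how-to-flip-two-bits-by-[0,1,0]-tensor-2}. The idea is that one copy of $[0,1,0]$ attached to the $i$-th variable of $f$ flips $x_i$, but to maintain the even Parity Condition in the construction we must use the pair $[0,1,0]^{\otimes 2}$, flipping two variables at once; then combining with an additional $[0,1]^{\otimes 2}$ on a second (auxiliary) variable, or using the paired construction followed by pinning that auxiliary variable, lets us effectively pin $x_i=1$.

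The key steps, in order, are: (1) Normalize so that $f^{x_i=1}\notin\mathscr{C}$ for the fixed index $i$; if $f$ has arity $2$ then $f^{x_i=1}$ is already a unary signature of arity $n-1=1$ and we may take $g=f^{x_i=1}$ after checking it is constructible, so assume $n\geq 3$ for the main argument if needed, though the construction below works uniformly. (2) Construct $[0,1,0]^{\otimes 2}$ from $\widehat{\mathcal{EQ}}$ and $[0,1]^{\otimes 2}$ exactly as in the proof of Lemma~\ref{how-to-flip-two-bits-by-[0,1,0]-tensor-2}. (3) Pick an auxiliary variable $x_j$ of $f$ with $j\neq i$ (possible since $n\geq 2$), attach $[0,1,0]^{\otimes 2}$ so that it flips both $x_i$ and $x_j$, then attach $[0,1]^{\otimes 2}$'s second component — more precisely connect $[0,1]^{\otimes 2}$ with one edge to the (flipped) $x_j$-edge and leave the other dangling, or simply use Lemma~\ref{how-to-flip-two-bits-by-[0,1,0]-tensor-2} to flip $x_i$ together with $x_j$ and then pin $x_j$ using a half of $[0,1]^{\otimes 2}$ combined with $[1,0,1,0]$ to get $[0,1,0]$ acting as a disequality followed by $[1,0]\in\widehat{\mathcal{EQ}}$. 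The net effect is that on the $x_i$-edge we impose the value $1$ (after the flip, what was originally $x_i=1$), and on $x_j$ we sum out, yielding a signature $g$ on the remaining $n-1$ variables whose compressed behavior agrees with $f^{x_i=1}$ up to a nonzero constant and possibly a permutation/flip of the other variables. (4) Conclude $g\notin\mathscr{C}$: by Lemma~\ref{[0,1,0]-not-change-tractable} flipping variables preserves membership in $\mathscr{A}$, $\mathscr{M}$, and $\mathscr{P}$, and pinning $x_j$ via $[0,1]$ (even in the guise of $[0,1]^{\otimes 2}$ with $[1,0,1,0]$) produces exactly $f^{x_i=1}$ up to these membership-preserving operations, so $g\notin\mathscr{C}$ follows from $f^{x_i=1}\notin\mathscr{C}$. (5) Verify the parity claim: every gadget used — elements of $\widehat{\mathcal{EQ}}$, $[0,1]^{\otimes 2}$, $[0,1,0]^{\otimes 2}$, and $f$ — satisfies the even Parity Condition, and pinning a variable to a fixed value preserves the even Parity Condition on the remaining variables, so $g$ does too.

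The main obstacle will be step (3): arranging the connections so that the gadget is \emph{planar} while simultaneously flipping $x_i$, disposing of the auxiliary variable $x_j$ cleanly, and keeping the correct cyclic order of the remaining $n-1$ dangling edges. The variables $x_i$ and $x_j$ need not be adjacent in the cyclic order of $f$'s inputs, so one must route the flip through the sequence of adjacent-pair flips of Lemma~\ref{how-to-flip-two-bits-by-[0,1,0]-tensor-2} (which flips all intermediate variables an even number of times, hence harmlessly), and then the $x_j$-pinning gadget must be placed in the outer face consistently with the ordering convention. A secondary subtlety is the bookkeeping of the nonzero scalar and of which variable of the resulting arity-$(n-1)$ signature plays which role, but this is routine once planarity is handled. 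I expect the planarity argument to be essentially the same picture as Figure~\ref{fig:two:variables}, with one extra pinning vertex appended, so no genuinely new idea is required beyond careful drawing.
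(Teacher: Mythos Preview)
Your overall direction matches the paper's: build $[0,1,0]^{\otimes 2}$ from $[0,1]^{\otimes 2}$ and $[1,0,1,0]\in\widehat{\mathcal{EQ}}$, then combine with $[1,0]\in\widehat{\mathcal{EQ}}$ to simulate pinning $x_i=1$. But step~(3) as written is not a coherent construction. You describe the net effect as imposing $x_i=1$ \emph{and} ``summing out'' $x_j$; doing both would consume two variables of $f$ and yield arity $n-2$, not $n-1$. The phrase ``a half of $[0,1]^{\otimes 2}$'' has no meaning---$[0,1]^{\otimes 2}$ is an indivisible binary signature and its second edge must land somewhere---and the routing through Lemma~\ref{how-to-flip-two-bits-by-[0,1,0]-tensor-2} for a possibly non-adjacent $x_j$ is an unnecessary complication that you yourself flag as the main obstacle and then leave unresolved.

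The paper's construction is simpler and sidesteps all of this. From $[0,1,0]^{\otimes 2}$, pin one external edge with $[1,0]$ (using $\partial_{[1,0]}([0,1,0])=[0,1]$) to obtain the ternary signature $h(x_1,x_2,x_3)=[0,1,0](x_1,x_2)\otimes[0,1](x_3)$. Connect $x_3$ of $h$ to $x_i$ of $f$ and $x_2$ of $h$ to the \emph{adjacent} variable $x_{i+1}$ of $f$ (cyclically, $x_1$ if $i=n$); the remaining edge $x_1$ of $h$ dangles. Two edges of $f$ are consumed and one new dangling edge is introduced, giving arity exactly $n-1$, and planarity is automatic because $x_i,x_{i+1}$ are adjacent. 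The result is $g(\ldots,x_{i+1},\ldots)=f^{x_i=1}(\ldots,\overline{x_{i+1}},\ldots)$, so $g\notin\mathscr{C}$ follows from Lemma~\ref{[0,1,0]-not-change-tractable}. For the parity claim, your assertion ``pinning a variable to a fixed value preserves the even Parity Condition'' is false when that value is $1$; the correct argument is either that every gadget component used ($\widehat{\mathcal{EQ}}$, $[0,1]^{\otimes 2}$, $[0,1,0]^{\otimes 2}$, $[1,0]$, $f$) has even parity, or equivalently that the pin-to-$1$ and the flip of $x_{i+1}$ each change parity once and cancel.
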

\begin{proof}
We have $[1, 0, 1, 0] \in \widehat{\mathcal{EQ}}$.
By having $[0, 1]^{\otimes 2}$,
and the fact that $\partial_{[0, 1]}([1, 0, 1, 0])=[0, 1, 0]$,
 we get $[0, 1, 0]^{\otimes 2}$ by applying
$\partial_{[0, 1]^{\otimes 2}}$ on $[1, 0, 1, 0]^{\otimes 2}$.
Then by having $[1, 0] \in \widehat{\mathcal{EQ}}$, we
get $h(x_1, x_2, x_3)=[0, 1, 0]\otimes [0, 1]$,
where the binary disequality is on $x_1, x_2$ and the unary $[0, 1]$
is on $x_3$.
By connecting the variable $x_2$ of $h$ to the variable $x_{i+1}$
(if $i=n$, then let $x_{i+1}=x_1$) of $f$
and connecting the variable $x_3$ of $h$ to the
the variable $x_i$ of $f$ (See Figure~\ref{fig:[0,1]=[1,0]}),
the gadget gives an $(n-1)$-ary signature $g$, such that
\[g(x_1, \ldots, x_{i-1}, x_{i+1}, \ldots, x_n)=\displaystyle\sum_{x'_{i}, x'_{i+1}\in\{0, 1\}}
f(x_1,\ldots, x'_{i}, x'_{i+1}, \ldots,  x_n)h(x_{i+1}, x'_{i+1}, x'_{i}).\]
Notice that the variables of $f$ and $h$ are counterclockwise
ordered, and connections respect this order in a planar fashion.

\begin{figure}[ht]
 \centering
 \def\capWidth{3.5cm}
 \captionsetup[subfigure]{width=\capWidth}
 \tikzstyle{entry} = [internal, inner sep=2pt]
  \makebox[\capWidth][c]{
   \begin{tikzpicture}[scale=\scale,transform shape,node distance=\nodeDist,semithick]
    \node[internal] (0)              {};
    \node[external] (1) [above of=0] {};
    \node[external] (2) [below of=0] {};
    \node[external] (3) [left of=1] {};
    \node[external] (4) [left of=0] {};
    \node[external] (5) [left of=2] {};
    \node[external] (6) [right of=1] {};
    \node[external] (7) [right of=0] {};
    \node[internal] (8) [right of=2, square] {};
    \node[external] (9) [right of=6] {};
    \path (0) edge [bend right]                                       (3)
          (0) edge [bend left]                                        (5)
          (0) edge [white]    node[black]           {\Huge $\vdots$}                                    (4)
          (0) edge [bend right]                                            (8)
          (0) edge [bend left]   node[triangle]           (10) {}                                    (9)
          (8) edge [dashed]                                       (10);
              \begin{pgfonlayer}{background}
     \node[draw=\borderColor,thick,rounded corners,fit = (0),transform shape=false] {};
    \end{pgfonlayer}
  \end{tikzpicture}}
 \caption{The circle vertex is assigned $f$, the square vertex denotes $[0, 1]$
 and the triangle vertex denotes $[0, 1, 0]$.
 The two nodes connected by the dash line is a single
signature $[0, 1, 0]\otimes [0, 1]$.}
\label{fig:[0,1]=[1,0]}
\end{figure}
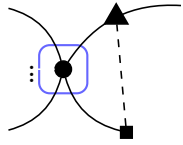

We have \[g(x_1, \ldots, x_{i-1}, x_{i+1}, \ldots, x_n)=f^{x_i=1}(x_1, \ldots, x_{i-1}, \overline{x_{i+1}}, \ldots, x_n).\]
Thus $g \in \mathscr{A}$ iff $f^{x_i=1} \in \mathscr{A}$,
and  $g \in \mathscr{M}$  iff $f^{x_i=1} \in \mathscr{M}$,
 by Lemma~\ref{[0,1,0]-not-change-tractable}.
\end{proof}

The next lemma shows that for an $n$-ary  signature with affine support and a set of free variables
$X=\{x_{i_1}, x_{i_2}, \ldots, x_{i_k}\}$,
if two consecutive variables $x_s, x_{s+1}\notin X$,
then we can combine the two variables
  to one new variable using $[1, 0, 1, 0]$, and
do not change the compressed signature.

\begin{lemma}\label{shrinking-non-free-variables-by-[1,0,1,]}
Let $f$ be an $n$-ary signature with affine support,
$X=\{x_{i_1}, x_{i_2}, \ldots, x_{i_k}\}$ is a set of free variables.
If there exists $s\in[n]$ such that $X$ does not include
$x_s, x_{s+1}$ (if $s=n$, then $x_{s+1}=x_1$),
letting
\[g(x_1\ldots, x_{s-1}, x', x_{s+2}, \ldots,  x_n)=\displaystyle\sum_{x_s, x_{s+1}\in\{0, 1\}}f(x_1\ldots, x_{s-1}, x_s, x_{s+1}, \ldots,  x_n)[1, 0, 1, 0](x_{s+1}, x_s, x'),\]
then $g$ has affine support, $X$ is a set of free variables of $g$
and $\underline{f_X}=\underline{g_X}$.
On its support, $x' = x_s \oplus x_{s+1}$.
Furthermore if $f$ satisfies the even Parity Condition, so does $g$.
\end{lemma}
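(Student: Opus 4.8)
The plan is to verify directly from the definitions that the gadget construction described in the statement—gluing a copy of $[1,0,1,0]\in\widehat{\mathcal{EQ}}$ onto the two consecutive edges $x_s,x_{s+1}$ of $f$—produces a signature $g$ whose behavior on the support is exactly that of $f$ under the substitution $x'=x_s\oplus x_{s+1}$, and then to track how the affine support, the set of free variables, the compressed signature, and the even Parity Condition transform.

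First I would unwind the summation defining $g$. Since $[1,0,1,0](x_{s+1},x_s,x')$ is $1$ exactly when $x_{s+1}=x_s=x'$, and $0$ otherwise (recall $(=_3)=[1,0,0,1]$ so $[1,0,1,0]$ is the arity-$3$ signature supported on the weight-$0$ and weight-$1$ strings $000$ and... actually it is the signature $(x_{s+1}=x_s\neq x')$—here one must be careful and read off from $\widehat{\mathcal{EQ}}=\{[1,0],[1,0,1],[1,0,1,0],\dots\}$ that $[1,0,1,0]$ is supported on inputs of even Hamming weight with the pattern forcing $x' = x_s\oplus x_{s+1}$). In any case, for each fixed assignment to the outside variables, the inner sum over $x_s,x_{s+1}\in\{0,1\}$ collapses to a single term of $f$, namely the one where $x_s,x_{s+1}$ are the unique pair consistent with the given value of $x'$ via $x'=x_s\oplus x_{s+1}$ together with whatever the $[1,0,1,0]$ gadget forces. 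Concretely this yields
\[
g(x_1,\dots,x_{s-1},x',x_{s+2},\dots,x_n)=\sum_{\substack{x_s,x_{s+1}:\ x_s\oplus x_{s+1}=x'}}f(x_1,\dots,x_{s-1},x_s,x_{s+1},\dots,x_n).
\]
So $g$ is obtained from $f$ by an $\mathbb{F}_2$-linear change of the pair of coordinates $(x_s,x_{s+1})$ followed by summing out one of them. Because $\operatorname{supp}(f)$ is an affine subspace $A$ and $x_s,x_{s+1}\notin X$, the values of $x_s,x_{s+1}$ on $A$ are affine functions of the free variables $x_{i_1},\dots,x_{i_k}$; hence $x_s\oplus x_{s+1}$ is also such an affine function, and on a point of $A$ there is a unique consistent pair $(x_s,x_{s+1})$. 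Therefore exactly one summand in the displayed sum is nonzero, which shows $g$ is nonzero precisely on the image of $A$ under the projection/linear map, this image is again affine, $X$ remains a set of free variables (the free variables were untouched and still parametrize the support), and $\underline{g_X}(x_{i_1},\dots,x_{i_k})=\underline{f_X}(x_{i_1},\dots,x_{i_k})$ since the single nonzero summand equals $f$ evaluated at the corresponding support point. This also records the claim $x'=x_s\oplus x_{s+1}$ on $\operatorname{supp}(g)$.

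Finally, for the parity statement: if $f$ satisfies the even Parity Condition, then each nonzero entry $f(\dots,x_s,x_{s+1},\dots)$ has even total weight, and replacing the block $(x_s,x_{s+1})$ by $x'=x_s\oplus x_{s+1}$ changes the weight of that block from $x_s+x_{s+1}$ to $x_s\oplus x_{s+1}$, a change of $0$ or $2$ (it drops by $2$ exactly when $x_s=x_{s+1}=1$ and is unchanged otherwise), so the parity of the total weight is preserved; hence $g$ also satisfies the even Parity Condition. The main obstacle—really the only delicate point—is getting the combinatorics of the $[1,0,1,0]$ gadget and the edge identifications exactly right (which of $x_s,x_{s+1},x'$ plays which role, and the planarity/order bookkeeping so that the stated formula for $g$ is literally correct); once the inner sum is shown to collapse to a single term using $x_s,x_{s+1}\notin X$ and affineness of $\operatorname{supp}(f)$, everything else is immediate.
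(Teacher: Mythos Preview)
Your proposal is correct and follows essentially the same approach as the paper's proof: both identify that $[1,0,1,0]$ forces $x'=x_s\oplus x_{s+1}$, then use the fact that $x_s,x_{s+1}\notin X$ are affine functions of the free variables to argue the inner sum has at most one nonzero summand, from which affineness of $\operatorname{supp}(g)$, preservation of $X$ as free variables, and $\underline{f_X}=\underline{g_X}$ all follow. Your treatment is in fact more thorough than the paper's, which does not spell out the parity preservation; your observation that replacing $(x_s,x_{s+1})$ by $x_s\oplus x_{s+1}$ changes the weight by $0$ or $2$ is exactly the right justification. The only cosmetic issue is the meandering paragraph where you initially misidentify the support of $[1,0,1,0]$ before landing on the correct even-weight characterization; in a final write-up you should simply state that $[1,0,1,0](x_{s+1},x_s,x')=1$ iff $x_{s+1}\oplus x_s\oplus x'=0$.
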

\begin{proof}
We first note that $g$ as defined is the signature of
a planar gadget using $f$ and $[1, 0, 1, 0]$;
the order of variable connections respects  planarity.
Since $X$ is a set of free variables for $f$, if we fix an assignment
on $X$, then there exist unique
 $x_s=x_s(X)$, $x_{s+1}=x_{s+1}(X)$
as affine linear functions
such that $f$ is nonzero.
Moreover, in
\[
\displaystyle\sum_{x_s, x_{s+1}\in\{0, 1\}}f(x_1\ldots, x_{s-1}, x_s, x_{s+1}, \ldots,  x_n)[1, 0, 1, 0](x_{s+1}, x_s, x'),\]
if $[1, 0, 1, 0]$ takes value $1$, $x'$ must be $x_{s+1}\oplus x_s$.
This implies that  $g$ has affine support,
and  $X$ is a set of free variables.
It follows that  $\underline{f_X}=\underline{g_X}$.
\end{proof}

The following two lemmas are about how to find a set of free variables that includes an adjacent pair of variables.

\begin{lemma}\label{rearrange-the-order-of-the-variables}
Let $f$ be a signature with affine support of dimension $k\geq 2$,
and suppose there are no
variables that take
a constant value in the support.
Then there exists a set of free variables $X=\{x_{i_1}, x_{i_2}, \ldots, x_{i_k}\}$
where some two variables are adjacent.
\end{lemma}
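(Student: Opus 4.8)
\textbf{Proof proposal for Lemma~\ref{rearrange-the-order-of-the-variables}.}

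The plan is to argue by a direct combinatorial/linear-algebraic construction on the affine support $S = \operatorname{supp}(f) \subseteq \mathbb{Z}_2^n$, viewed as an affine subspace of dimension $k \ge 2$. By translating (which corresponds to flipping some variables, see Lemma~\ref{[0,1,0]-not-change-tractable}, and does not affect the notion of a set of free variables or which pairs of variables are adjacent), I may assume $S$ is a linear subspace containing $\mathbf{0}$. The hypothesis that no variable is constant on $S$ means: for every coordinate $j \in [n]$, there is some $\mathbf{v} \in S$ with $v_j = 1$; equivalently, the $j$-th coordinate functional is not identically zero on $S$, i.e. projection onto coordinate $j$ is surjective $S \to \mathbb{Z}_2$.

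First I would recall what ``set of free variables'' means here: $X = \{x_{i_1}, \ldots, x_{i_k}\}$ is a set of free variables iff the coordinate projection $\pi_X: S \to \mathbb{Z}_2^k$ onto the coordinates in $X$ is a bijection (it is automatically onto a $k$-dimensional space and both sides have size $2^k$, so it suffices that it be injective, equivalently surjective, equivalently that the $k$ rows indexed by $X$ of any generating matrix of $S$ are linearly independent). So I must exhibit two \emph{adjacent} coordinates $s, s+1$ (indices mod $n$, so the pair $\{n,1\}$ also counts) that can be completed to such a set. The key step is: pick an adjacent pair $\{s, s+1\}$ such that the coordinate functionals $e_s^*|_S$ and $e_{s+1}^*|_S$ are linearly independent in the dual space $S^*$; then extend $\{e_s^*|_S, e_{s+1}^*|_S\}$ to a basis of $S^*$ by $k-2$ further coordinate functionals $e_{i_3}^*|_S, \ldots, e_{i_k}^*|_S$ (possible since the coordinate functionals span $S^*$, as no coordinate is constant only rules out the zero functional but in fact the $e_j^*|_S$ span $S^*$ because $S \hookrightarrow \mathbb{Z}_2^n$ is injective), and set $X = \{x_s, x_{s+1}, x_{i_3}, \ldots, x_{i_k}\}$; this is a set of free variables with an adjacent pair.

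So everything reduces to showing: \emph{there exists an index $s \in [n]$ (cyclically) such that $e_s^*|_S$ and $e_{s+1}^*|_S$ are linearly independent on $S$.} Two functionals over $\mathbb{Z}_2$ are linearly dependent iff one is zero or they are equal. Since no coordinate is constant, no $e_j^*|_S$ is zero, so dependence of an adjacent pair means $e_s^*|_S = e_{s+1}^*|_S$, i.e. $x_s = x_{s+1}$ holds identically on $S$. Suppose for contradiction that for \emph{every} cyclically-adjacent pair we have $x_s = x_{s+1}$ on $S$; going around the cycle $x_1 = x_2 = \cdots = x_n$ on $S$, so $S \subseteq \{(0,\ldots,0),(1,\ldots,1)\}$, forcing $\dim S \le 1$, contradicting $k \ge 2$. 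Hence some adjacent pair is independent, completing the argument.

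\textbf{Main obstacle.} The only genuinely delicate point is the extension step: after fixing the independent adjacent pair, I need that the coordinate functionals $\{e_j^*|_S : j \in [n]\}$ collectively span $S^*$ so that I can enlarge $\{e_s^*|_S, e_{s+1}^*|_S\}$ to a coordinate basis. This holds because the inclusion $S \hookrightarrow \mathbb{Z}_2^n$ is injective, so the dual map $(\mathbb{Z}_2^n)^* \to S^*$ is surjective, and $(\mathbb{Z}_2^n)^*$ is spanned by the $e_j^*$. A standard exchange/matroid argument then extends any independent subset of a spanning set to a basis, picking the new elements from among the $e_{i_3}^*|_S, \ldots$. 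This is routine, so I would state it briefly rather than belabor it. (One should also remember, when translating this back to the signature $f$ rather than $S$, that flipping variables is harmless, which is exactly Lemma~\ref{[0,1,0]-not-change-tractable}, so no loss in assuming $\mathbf{0} \in S$.)
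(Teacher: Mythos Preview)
Your argument is correct, but it follows a different route from the paper's. The paper takes the lexicographically minimum set of free variables $X=\{x_{i_1},\ldots,x_{i_k}\}$; if $i_2\neq i_1+1$ it looks at the dependent variable $x_{i_2-1}$ and shows that its affine expression in the free variables can involve only $x_{i_1}$ (else swapping $x_{i_2-1}$ in for some $x_{i_j}$, $j\ge 2$, yields a lex-smaller free set), so by the no-constant hypothesis $x_{i_2-1}$ and $x_{i_1}$ are interchangeable and $\{x_{i_2-1},x_{i_2},\ldots,x_{i_k}\}$ is a free set with an adjacent pair.

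Your approach instead works in the dual: you observe that over $\mathbb{Z}_2$ two nonzero coordinate functionals restricted to $S$ are dependent iff they are equal, and if every cyclically adjacent pair were equal then all coordinates would coincide on $S$, forcing $\dim S\le 1$. This is a clean global pigeonhole, and the extension to a full free set is standard basis-extension since the coordinate functionals span $S^*$. Two minor remarks: the translation to make $S$ linear is a pure linear-algebra device on the support and needs no gadget justification, so the appeal to Lemma~\ref{[0,1,0]-not-change-tractable} (which concerns membership in $\mathscr{P},\mathscr{A},\mathscr{M}$, not free-variable structure) is unnecessary, though harmless; and your argument naturally yields a \emph{cyclically} adjacent pair, whereas the paper's lex-minimum argument always yields linearly consecutive indices $i_2-1,i_2$. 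For the downstream use in Lemma~\ref{shrinking-non-free-variables-by-[1,0,1,]} cyclic adjacency suffices, so there is no loss.
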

\begin{proof}
Let $X=\{x_{i_1}, x_{i_2}, \ldots, x_{i_k}\}$
be the set of free variables which is minimum in the lexicographic order.
If $i_2=i_1+1$, then we are done.
Otherwise, $x_{i_2-1}$ is not in $X$.
If its dependency $x_{i_2-1}=\displaystyle\sum_{j=1}^{k}a_{j}x_{i_j}
+b$ involves any variable other than $x_{i_1}$, namely if $a_j \not =0$
for some $1 < j \le k$, then by switching $x_{i_2-1}$ with the variable
$x_{i_j}$, we get another set of free variables which is
lexicographically smaller than $X$, a contradiction.
Thus we have
$x_{i_2-1}=ax_{i_1}+b$.
Since there are no variables that take a
 constant value in the support, we have $a\neq 0$.
So $X'=\{x_{i_2-1}, x_{i_2}, \ldots, x_{i_k}\}$ is a set of free variables that includes $x_{i_2-1}, x_{i_2}$.
\end{proof}

\begin{lemma}\label{rearrange-the-order-of-the-variables-5}
Let $f$ be a 5-ary signature with affine support of dimension $3$.
Suppose  there are no variables that take a constant value in the support.
Then there exists a set of free variables $X$ such that the variables in $X$ are consecutive in a cyclic sense.
\end{lemma}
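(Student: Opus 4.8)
The plan is to pass to the constraint matrix of the support. Since $\operatorname{supp}(f)$ is an affine subspace of $\mathbb{Z}_2^5$ of dimension $3$, we may write $\operatorname{supp}(f)=\{x\in\mathbb{Z}_2^5 : Ax=c\}$ for a $2\times 5$ matrix $A$ over $\mathbb{Z}_2$ of rank $2$ and some $c\in\mathbb{Z}_2^2$; let $a_1,\dots,a_5\in\mathbb{Z}_2^2$ be its columns. The first step is a simple observation: for a $3$-element set $S\subseteq[5]$, if $A_{[5]\setminus S}$ denotes the $2\times 2$ submatrix on the two complementary columns, then $S$ is a set of free variables of $f$ iff $A_{[5]\setminus S}$ is invertible, i.e.\ iff the columns $a_i,a_j$ with $\{i,j\}=[5]\setminus S$ are linearly independent (for each assignment to $x_S$ the remaining two coordinates are then uniquely solved). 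The complement of a cyclically consecutive triple $\{i-1,i,i+1\}$ (indices mod $5$) is exactly the cyclically consecutive pair $\{i+2,i+3\}$, so it suffices to prove: some cyclically adjacent pair $\{i,i+1\}$ has $a_i,a_{i+1}$ linearly independent. Over $\mathbb{Z}_2$ this means both are nonzero and distinct.

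The main work is a proof by contradiction: assume every cyclically adjacent pair of columns is dependent, i.e.\ for each $i$, $a_i=0$ or $a_{i+1}=0$ or $a_i=a_{i+1}$. Let $Z=\{i : a_i=0\}$. I would then exhaust the possibilities. If $|Z|\le 1$, or $|Z|=2$ with the two zero-positions cyclically adjacent, the nonzero columns occupy one block of consecutive positions, so the adjacency relations force them all equal to a common nonzero vector $v$; then the column space is $\{0,v\}$, contradicting $\operatorname{rank}A=2$. If $|Z|\ge 4$ there is at most one nonzero column, again contradicting the rank. If $|Z|=3$ with the two nonzero positions cyclically adjacent, those two columns alone must span $\mathbb{Z}_2^2$, hence are independent, directly contradicting our assumption. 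The remaining cases are $|Z|=2$ with non-adjacent zeros, and $|Z|=3$ with non-adjacent nonzeros; in both, some nonzero column $a_j=w$ sits isolated (flanked by zero columns), while every other column is $0$ or equal to one fixed nonzero $v$ (adjacency within the other run forces a common value), with $w\ne v$ forced by $\operatorname{rank}A=2$.

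The crux is to show that such an isolated column forces $x_j$ to be constant on $\operatorname{supp}(f)$, contradicting the hypothesis. Here I would use that a variable is constant on $\{Ax=c\}$ iff $e_j$ lies in the row space of $A$ (the orthogonal complement of $\ker A$ over $\mathbb{Z}_2$ is the row space of $A$). Take $v^{\perp}\in\mathbb{Z}_2^2$ to be the unique nonzero vector orthogonal to $v$; the row combination $v^{\perp}A$ is zero at every position carrying column $0$ or $v$, and at position $j$ it equals $v^{\perp}\cdot w\ne 0$ because $w\notin\{0,v\}=\operatorname{span}(v)$. Hence $v^{\perp}A=e_j$, so $x_j$ is constant, which is the contradiction. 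This finishes all cases, so some cyclically adjacent pair of columns is independent and the corresponding cyclically consecutive triple is a set of free variables. I expect the only delicate point to be organizing the case split on $Z$ and checking the ``isolated column $\Rightarrow$ constant variable'' step; the rest is routine linear algebra over $\mathbb{Z}_2$.
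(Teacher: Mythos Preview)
Your proof is correct but takes a genuinely different route from the paper. The paper first invokes Lemma~\ref{rearrange-the-order-of-the-variables} to assume, without loss of generality, that some free-variable set contains the adjacent pair $\{x_1,x_2\}$; if the third free variable is $x_3$ or $x_5$ it is done, and otherwise it writes $x_3$ and $x_5$ as affine combinations of $\{x_1,x_2,x_4\}$ and runs a short case analysis on which coefficients vanish, using the ``no constant variable'' hypothesis only at the very end. Your argument instead dualizes to the $2\times 5$ constraint matrix $A$, reformulates the goal as ``some cyclically adjacent pair of columns is independent,'' and analyzes the pattern of zero columns, with the key step being the observation that an isolated nonzero column forces its variable to be constant via $v^{\perp}A=e_j$. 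Your approach is self-contained (it does not rely on Lemma~\ref{rearrange-the-order-of-the-variables}) and the row-space characterization of constant variables makes the role of the hypothesis transparent; the paper's approach is shorter and stays on the primal side, piggybacking on the adjacent-pair lemma already in hand.
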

\begin{proof}
By Lemma~\ref{rearrange-the-order-of-the-variables}, without loss of generality, we can assume that
there exists a set of free variables including $x_1, x_2$.
If the other free variable is $x_3$ or $x_5$, then we are done.
Otherwise, $\{x_1, x_2, x_4\}$ is a set of free variables and
$x_3=a_1x_1+a_2x_2+a_4x_4+c$ and $x_5=b_1x_1+b_2x_2+b_4x_4+d$, where $a_i, b_i, c, d\in\mathbb{Z}_2$ for $i\in\{1, 2, 4\}$.
\begin{itemize}
\item If $a_1\neq 0$, then $\{x_2, x_3, x_4\}$ is a set of free variables.
\item If $a_4\neq 0$, then $\{x_1, x_2, x_3\}$ is a set of free variables.
\item If $b_2\neq 0$, then $\{x_1, x_4, x_5\}$ is a set of free variables.
\item If $b_4\neq 0$, then $\{x_1, x_2, x_5\}$ is a set of free variables.
\item If $a_1=a_4=0$ and $b_2=b_4=0$, then $a_2\neq 0$ and $b_1\neq 0$
since there are no variables that take a  constant value in the support.
 So $\{x_3, x_4, x_5\}$ is a set of free variables.
\end{itemize}
This finishes the proof.
\end{proof}

If $\widehat{\mathcal{F}}\subseteq\mathscr{A}$, then $\operatorname{Pl-Holant}(\widehat{\mathcal{EQ}}, \widehat{\mathcal{F}})$
is tractable. Otherwise, there exists $f\in\widehat{\mathcal{F}}
\setminus \mathscr{A}$.
The following three lemmas are about reducing the arity of $f$.
Since all signatures in $\widehat{\mathcal{EQ}}  \cup \widehat{\mathcal{F}}$
satisfy the Parity Condition,
any constructible unary signature in
$\operatorname{Pl-Holant}(\widehat{\mathcal{EQ}}, \widehat{\mathcal{F}})$
also satisfies the Parity Condition, and therefore is in $\mathscr{A}$.
So the lowest arity a constructible non-affine signature can have
 is 2.
Furthermore a binary signature satisfying the even Parity Condition
is symmetric and takes the form $[a,0,b]$.
By Proposition~\ref{A-has-same-norm-etc},
 $[1, 0, x] \notin \mathscr{A}$ iff
$x^4 \not = 0,1$.
The next lemma implies that in $\operatorname{Pl-Holant}(\widehat{\mathcal{EQ}}, [0, 1]^{\otimes 2}, [1, 0, -1], \widehat{\mathcal{F}})$,
from any $f\in\widehat{\mathcal{F}}
\setminus \mathscr{A}$  we can construct some $[1, 0, x]\not \in \mathscr{A}$.

\begin{lemma}\label{[1,0]-[0,1]-[1,0,-1]-EQ-hat-affine-reduction}
If all signatures in $\widehat{\mathcal{F}}$
 satisfy the Parity Condition and
 $\widehat{\mathcal{F}}\nsubseteq\mathscr{A}$, then there exists
$[1, 0, x] \not \in \mathscr{A}$ such that
\[\operatorname{Pl-Holant}(\widehat{\mathcal{EQ}}, [0, 1]^{\otimes 2}, [1, 0, -1], [1, 0, x], \widehat{\mathcal{F}})
\le_{\rm T}
\operatorname{Pl-Holant}(\widehat{\mathcal{EQ}}, [0, 1]^{\otimes 2}, [1, 0, -1], \widehat{\mathcal{F}}).\]
\end{lemma}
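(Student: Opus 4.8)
The plan is to start from a signature $f \in \widehat{\mathcal{F}} \setminus \mathscr{A}$, which by hypothesis satisfies the Parity Condition, and use Lemma~\ref{[0,1]-EQ-hat-wight-0-neq-0} to normalize it so that $f_{00\cdots 0}=1$ and $f$ satisfies the \emph{even} Parity Condition (if $f$ had odd parity, the normalization flips all variables to make $f_{00\cdots0}=1$, and since we may also appeal to $[1,0,1,0]\in\widehat{\mathcal{EQ}}$ we land in the even-parity situation in either case; note that $[1,0,-1]$ and $[0,1]^{\otimes 2}$ are both available). Once $f$ has even parity and $f_{00\cdots 0}=1$, every signature constructed in $\operatorname{Pl-Holant}(\widehat{\mathcal{EQ}},[0,1]^{\otimes 2},[1,0,-1],\widehat{\mathcal{F}})$ from these ingredients also satisfies the even Parity Condition, since $\widehat{\mathcal{EQ}}$, $[0,1]^{\otimes 2}$ and $[1,0,-1]$ all do. In particular, any constructible binary non-affine signature is automatically of the form $[a,0,b]$, hence (after dividing by $a$, which is nonzero on the component containing $00\cdots0$ — or after using $[0,1]^{\otimes 2}$ to pin) of the form $[1,0,x]$ with $x^4\neq 0,1$ by Proposition~\ref{A-has-same-norm-etc}. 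So the whole task reduces to: construct \emph{some} non-affine binary signature.

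\textbf{Arity reduction.} I would proceed by induction on the arity $n$ of $f$. The base case $n=2$ is immediate: $f=[1,0,x]\notin\mathscr{A}$ already has the desired form. For $n\geq 3$, first check whether some pinning $f^{x_i=0}$ or $f^{x_i=1}$ is non-affine. We have $[1,0]\in\widehat{\mathcal{EQ}}$, so $f^{x_i=0}$ is directly constructible; for $f^{x_i=1}$ we cannot pin to $1$ directly (parity!), but Lemma~\ref{[0,1]-tensor-2-equal-[0,1]} shows that if $f^{x_i=1}\notin\mathscr{C}$ (here $\mathscr{C}=\mathscr{A}$) then we can construct a non-affine signature of arity $n-1$ using $[0,1]^{\otimes 2}$ and $\widehat{\mathcal{EQ}}$, preserving even parity — so induction applies. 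Hence we may assume $f^{x_i=0},f^{x_i=1}\in\mathscr{A}$ for all $i$. By Lemma~\ref{[1,0]-[0,1]-pinning-implies-affine-arity-4}, if $\operatorname{supp}(f)$ has affine dimension $\geq 4$ then $f\in\mathscr{A}$, a contradiction; so the affine dimension of $\operatorname{supp}(f)$ is at most $3$ (and at least $1$, else $f$ has a single nonzero entry and is affine). Next, if $\operatorname{supp}(f)$ is \emph{not} an affine subspace, I would run a Tableau-Calculus argument in the spirit of Lemma~\ref{arity-reduction-affine}: use $[1,x]$-type connections — but here $[1,x]$ is not available, so instead use the even-parity disequality and equality gadgets together with $[1,0,-1]$ and $[0,1]^{\otimes 2}$ to form suitable linear combinations of pinnings — to exhibit a lower-arity non-affine signature. (Concretely, since all pinnings are affine, the failure of $\operatorname{supp}(f)$ to be linear forces, via Lemma~\ref{argue-a1-0-b1-1}, a triple ${\bf a},{\bf b},{\bf c}={\bf a}\oplus{\bf b}$ with ${\bf a},{\bf b}\in\operatorname{supp}(f)$, ${\bf c}\notin\operatorname{supp}(f)$, differing in some coordinate; then combining two copies of $f$ through $[1,0,-1]$ on that coordinate, or pinning other coordinates, produces a norm or support contradiction unless we already have a non-affine signature of smaller arity.)

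\textbf{The affine-support case.} Once $\operatorname{supp}(f)$ is a linear subspace of dimension $k\in\{1,2,3\}$, use Lemma~\ref{shrinking-non-free-variables-by-[1,0,1,]} repeatedly with $[1,0,1,0]\in\widehat{\mathcal{EQ}}$ to collapse all non-free variables pairwise onto free variables, reducing to the compressed signature $\underline{f}$, which is non-affine by Corollary~\ref{f-affine-iff-f*-affine}; when the parity of the free-variable count is awkward, first shrink consecutive pairs of non-free variables (Lemmas~\ref{rearrange-the-order-of-the-variables} and \ref{rearrange-the-order-of-the-variables-5} guarantee we can choose free variables so the collapsing is planar). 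This leaves a non-affine signature whose arity equals $\dim\operatorname{supp}(f)\in\{1,2,3\}$, all of whose entries are now nonzero and (by the pinning hypothesis plus $f_{00\cdots0}=1$) powers of $\frak{i}$. For arity $1$ or $2$ we finish directly: arity $1$ with all entries powers of $\frak{i}$ is affine, contradiction, so this case cannot occur after collapsing; arity $2$ gives $(1,\frak{i}^r,\frak{i}^s,\frak{i}^t)$ with $r+s\not\equiv t\bmod 2$ (Lemma~\ref{binary-affine-compressed function}), and then connecting two copies through $[0,1,0]$ (obtained from $[1,0,1,0]$ and $[0,1]^{\otimes2}$) in the right way, or applying a suitable even-parity gadget, yields a binary $[1,0,x]$ with $|x|=\sqrt2$, hence non-affine. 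For arity $3$ with matrix $M_{x_1,x_2x_3}(\check f)=\left[\begin{smallmatrix}1&\frak{i}^r&\frak{i}^s&\epsilon_1\frak{i}^{r+s}\\\frak{i}^t&\epsilon_2\frak{i}^{r+t}&\epsilon_3\frak{i}^{s+t}&\epsilon_4\frak{i}^{r+s+t}\end{smallmatrix}\right]$ and $\epsilon_1\epsilon_2\epsilon_3\epsilon_4=-1$ (Lemma~\ref{ternary-affine-compressed function}), I would mimic the case analysis at the end of the proof of Lemma~\ref{arity-reduction-affine}: contract $x_2,x_3$ cyclically between two copies of $\check f$ to produce a symmetric binary signature, then contract further with the available even-parity unaries/binaries; in the generic sub-cases the resulting binary signature has entries of distinct norms, hence is a non-affine $[1,0,x]$; in the exceptional sub-case $\epsilon_1=\epsilon_2=\epsilon_3=-1$ one derives $r\equiv s\equiv t\bmod 4$, and the ternary signature becomes (up to the flips absorbed into the construction) essentially $[1,\pm\frak{i},1,\mp\frak{i}]$, from which $[1,0,x]\notin\mathscr{A}$ with $|x|=\sqrt2$ follows.

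\textbf{Main obstacle.} The hard part is the arity-$3$ affine-support endgame: with only even-parity gadgets $\widehat{\mathcal{EQ}}$, $[0,1]^{\otimes2}$, $[1,0,-1]$ available (no odd-arity unary like $[1,x]$ directly, and no freely-chosen unary), one must carefully route cyclic contractions of copies of $\check f$ so that the construction stays planar \emph{and} the parity is preserved, and then argue that the norms of the resulting binary entries separate — this is exactly the delicate bookkeeping done in the last part of Lemma~\ref{arity-reduction-affine}, and here it must be redone subject to the even-parity constraint. A secondary subtlety is the non-affine-support Tableau-Calculus step, where the linear combinations of pinnings that drive the argument in Lemma~\ref{arity-reduction-affine} must be realized by even-parity gadgets rather than by arbitrary unaries; I expect this to go through but it requires re-examining each branch of that argument.
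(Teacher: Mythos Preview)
Your high-level plan matches the paper's: induct on arity, normalize to even parity with $f_{00\cdots0}=1$, reduce via pinnings $f^{x_i=0}$ and $f^{x_i=1}$ (the latter through Lemma~\ref{[0,1]-tensor-2-equal-[0,1]}), split into non-affine versus affine support, and bound the affine dimension by Lemma~\ref{[1,0]-[0,1]-pinning-implies-affine-arity-4}. However, there is a genuine gap in your affine-support endgame.

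\textbf{The parity obstruction to reaching the compressed signature.} You claim that repeated use of Lemma~\ref{shrinking-non-free-variables-by-[1,0,1,]} ``reduces to the compressed signature $\underline{f}$'' of arity $k=\dim\operatorname{supp}(f)$ with ``all entries nonzero.'' This cannot work: every gadget you have ($\widehat{\mathcal{EQ}}$, $[0,1]^{\otimes 2}$, $[1,0,-1]$) has even parity, so any signature you construct has even parity. But $\underline{f}$ has full support on $\{0,1\}^k$, hence nonzero odd-weight entries, and is \emph{not} constructible. What Lemma~\ref{shrinking-non-free-variables-by-[1,0,1,]} actually does is merge two consecutive \emph{non}-free variables into one (still non-free) variable. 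Iterating, you stop at arity $k+1$: one residual dependent variable remains. The paper does exactly this: for $k=2$ it produces a \emph{ternary} $\hat f$ with $M_{x_1,x_2x_3}(\hat f)=\left[\begin{smallmatrix}1&0&0&{\frak i}^r\\0&{\frak i}^s&{\frak i}^t&0\end{smallmatrix}\right]$; for $k=3$ it produces a \emph{quaternary} $\tilde f$ (passing through arity $5$ and Lemma~\ref{rearrange-the-order-of-the-variables-5} when the free variables are not already consecutive). Your arity-$2$ and arity-$3$ analyses therefore target objects you cannot build.

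\textbf{The endgame must use binary modifiers, not unaries.} Consequently, the case analysis you borrow from Lemma~\ref{arity-reduction-affine} does not port over: that lemma connects odd-parity unaries $[1,{\frak i}^{-r}]$ etc.\ to $\check f$, which is forbidden here. The paper instead extracts even-parity \emph{binaries} from $\hat f$ or $\tilde f$ by pinning with $[1,0]$ (e.g.\ $\partial_{[1,0]}^{\{3,4\}}(\tilde f)=[1,0,\epsilon_3{\frak i}^{s+t}]$), cubes them to get $[1,0,{\frak i}^{-r}]$ and the like, and uses these as row/column modifiers on $\tilde f$. For $k=2$ this yields $[2,0,(1\pm{\frak i}){\frak i}^r]\notin\mathscr{A}$; for $k=3$ one links two copies of the modified $\tilde f$ and pins once to get a ternary $h''$ whose support is non-affine precisely when $\epsilon_1\epsilon_2\epsilon_3\epsilon_4=-1$ (the non-affine condition from Lemma~\ref{ternary-affine-compressed function}). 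The $k=1$ case you dismissed is also handled differently: one simply forms $[1,0,f_\alpha]$ via $\partial_{[1,0]}^S$ and $\partial_{=_2}$, which is non-affine unless $f_\alpha^4=1$.

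\textbf{The non-affine-support step.} You correctly flag this as needing an even-parity substitute for the $[1,x]$-connections in Lemma~\ref{arity-reduction-affine}. The paper's realization is concrete: form $h$ by connecting $[1,0,1,0]$ across $x_1,x_2$ of $f$ (so $h(x_1\oplus x_2,x_3,\ldots)=f_{x_1x_2\cdots}+f_{\bar x_1\bar x_2\cdots}$), form $\bar f=(-1)^{x_1}f$ via $[1,0,-1]$, then $\bar h$ analogously; the pair $(h,\bar h)$ plays the role of the pair $(h,\tilde h)$ in Lemma~\ref{arity-reduction-affine}, and the four-way Tableau on $\bar a_1\bar a_2{\bf a}'$, $\bar b_1\bar b_2{\bf b}'$ goes through by norm comparison. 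This is the piece you ``expect to go through'' --- it does, but via this specific two-variable merge rather than a one-variable unary connection.
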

\begin{proof}
Since $\widehat{\mathcal{F}}\nsubseteq\mathscr{A}$,
 there exists $f\in\widehat{\mathcal{F}} \setminus \mathscr{A}$.
By Lemma~\ref{[0,1]-EQ-hat-wight-0-neq-0}, we can assume that $f_{00\cdots 0}=1$
and $f$ satisfies the even Parity Condition.
If $f$ has arity 1, then $f=[1, 0]\in\mathscr{A}$. This is a contradiction.
If $f$ has arity 2, then $f=[1, 0, x]$ with $x^4\neq 0, 1$ and
we are done.
In the following, we assume that
$f$ has arity $\geq 3$.
If we can construct a non-affine signature with arity $\leq n-1$, then we are done by induction.

If there exists $i\in[n]$ such that $f^{x_i=0}\notin\mathscr{A}$, then we are done by induction since we have $[1, 0]\in\widehat{\mathcal{EQ}}$.
If there exists $i\in[n]$
 such that $f^{x_i=1}\notin\mathscr{A}$,
then we are done by induction and Lemma~\ref{[0,1]-tensor-2-equal-[0,1]}.
So in the following we assume that both $f^{x_i=0}$ and $f^{x_i=1}$ are affine signatures for any $i\in[n]$.

Firstly, we claim that if supp$(f)$ is not affine, then we can construct a signature that is not in $\mathscr{A}$ with arity $\leq n-1$.
Certainly supp$(f)$ is not a linear subspace.
Note that $(0, 0, \ldots, 0)\in {\rm supp}(f)$.
A subset of $\mathbb{Z}_2^n$ containing $(0, 0, \ldots, 0)$ is affine iff  it is a linear subspace.
So supp$(f^{x_i=0})$ is a linear subspace of $\mathbb{Z}_2^{n-1}$ since $f^{x_i=0}$ is affine and
$f^{x_i=0}_{0\cdots 0}=1$.
By Lemma~\ref{argue-a1-0-b1-1},
 there exist
 ${\bf a}=a_1a_2\cdots a_n, {\bf b}=b_1b_2\cdots b_n,$
such that  ${\bf a}, {\bf b}\in{\rm supp}(f)$, ${\bf c}={\bf a}\oplus {\bf b}=c_1c_2\cdots c_n\notin{\rm supp}(f)$
and there exists $i\in[n]$ such that $a_i \not = b_i$.
Without loss of generality, we assume that $a_1=0, b_1=1$.
It follows that $c_1=1$.
Let ${\bf a}'=a_3\cdots a_n, {\bf b}'=b_3\cdots b_n, {\bf c}'=c_3\cdots c_n$.

By connecting one variable of $[1, 0, -1]$ to the first variable of $f$, we
get a gadget that gives
\[\bar{f}(x_1, x_2, \ldots, x_n)=\displaystyle\sum_{x'_1\in\{0, 1\}}[1, 0, -1](x_1, x'_1)f(x'_1, x_2, \ldots, x_n).\]
Note that \[\bar{f}(x_1, x_2, \ldots, x_n)=(-1)^{x_1}f(x_1, x_2, \ldots, x_n).\]
Moreover, by connecting the variables $x_2, x_1$ of $[1, 0, 1, 0]$ to the variables $x_1, x_2$ of $f$ respectively,
\begin{figure}[ht]
 \centering
 \def\capWidth{3.5cm}
 \captionsetup[subfigure]{width=\capWidth}
 \tikzstyle{entry} = [internal, inner sep=2pt]
  \makebox[\capWidth][c]{
   \begin{tikzpicture}[scale=\scale,transform shape,node distance=\nodeDist,semithick]
    \node[external] (0)              {};
    \node[external] (6) [above of=0] {};
    \node[external] (7) [below of=0] {};
    \node[external] (8) [left of=0] {};
    \node[internal] (1) [right of=0] {};
    \node[external] (2) [right of=1] {};
    \node[external] (3) [right of=2] {};
    \node[internal, square] (4) [right of=3] {};
    \node[external] (5) [right of=4] {};
    \path (1) edge[bend right]                                        (6)
          (1) edge[bend left]                                       (7)
          (1) edge[white]          node[black] {{\Huge $\vdots$}}                             (8)
          (1) edge[out= 45, in= 135]                       (4)
                            edge[out=-45, in=-135]                      (4)
          (4) edge                          node[near start, white] (e4) {}               (5);
    \begin{pgfonlayer}{background}
     \node[draw=\borderColor,thick,rounded corners,fit = (e1) (e2) (e3) (e4),transform shape=false] {};
    \end{pgfonlayer}
  \end{tikzpicture}}
 \caption{The circle vertex is assigned $f$ and the square vertex
is assigned $[1, 0, 1, 0]$. }
 \label{fig:[1,0,1,0]}
\end{figure}
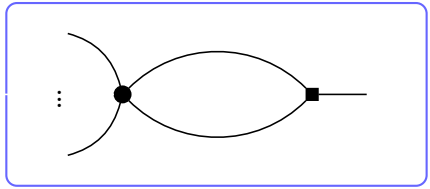
the planar gadget in Figure~\ref{fig:[1,0,1,0]} gives the signature
\[h(x', x_3, x_4, \ldots, x_n)=\displaystyle\sum_{x_1, x_2\in\{0, 1\}}[1, 0, 1, 0](x_2, x_1, x')f(x_1, x_2, \ldots, x_n).\]
This way of connecting the variables satisfies planarity.
Note that
\[h(x_1\oplus x_2, x_3, \ldots, x_n)=f(x_1, x_2, x_3, \ldots, x_n)+f(\overline{x_1}, \overline{x_2}, x_3, \ldots, x_n).\]
Similarly,  by connecting the variables $x_2, x_1$ of $[1, 0, 1, 0]$ to the variables $x_1, x_2$ of $\bar{f}$ respectively,
the planar gadget gives the signature
 \[\bar{h}(x', x_3, x_4, \ldots, x_n)=\displaystyle\sum_{x_1, x_2\in\{0, 1\}}[1, 0, 1, 0](x_2, x_1, x')\bar{f}(x_1, x_2, \ldots, x_n),\]
and we have
\[\bar{h}(x_1\oplus x_2, x_3, \ldots, x_n)=(-1)^{x_1}f(x_1, x_2, x_3, \ldots, x_n)+(-1)^{\overline{x_1}}f(\overline{x_1}, \overline{x_2}, x_3, \ldots, x_n).\]
If $h\notin\mathscr{A}$ or $\bar{h}\notin\mathscr{A}$, then we are done since both  $h$ and $\bar{h}$ have arity $n-1$.
In the following, we assume that both of $h$ and $\bar{h}$ are affine.

\vspace{.1in}
\noindent
Now we use the Tableau Calculus.
\begin{itemize}
\item If both $\bar{a}_1\bar{a}_2{\bf a}'$ and $\bar{b}_1\bar{b}_2{\bf b}'$ are not in supp$(f)$,
then
\begin{align*}
h_{(a_1\oplus a_2)\bf {a}'}
&=f_{a_1a_2\bf a'}+f_{\bar{a}_1\bar{a}_2\bf a'}\neq 0, \\
h_{(b_1\oplus b_2)\bf {b}'}
&=f_{b_1b_2\bf a'}+f_{\bar{b}_1\bar{b}_2\bf b'}\neq 0
\end{align*}
since $f_{a_1a_2\bf a'}\neq 0, f_{b_1b_2\bf a'}\neq 0$ and $f_{\bar{a}_1\bar{a}_2\bf a'}=f_{\bar{b}_1\bar{b}_2\bf b'}=0$.

Similarly, we have
\begin{align*}
\bar{h}_{(a_1\oplus a_2)\bf {a}'}
&=f_{a_1a_2\bf a'}-f_{\bar{a}_1\bar{a}_2\bf a'}\neq 0, \\
\bar{h}_{(b_1\oplus b_2)\bf {b}'}
&=-f_{b_1b_2\bf a'}+f_{\bar{b}_1\bar{b}_2\bf b'}\neq 0.
\end{align*}
Note that
\begin{align*}
h_{0\cdots 0}
&=f_{000\cdots 0}+f_{110\cdots 0}, \\
\bar{h}_{0\cdots 0}
&=f_{000\cdots 0}-f_{110\cdots 0}.
\end{align*}
Then by $f_{00\cdots 0}\neq 0$,
one of $h_{00\cdots 0}$ or $\bar{h}_{00\cdots 0}$ is nonzero.
We assume that $h_{00\cdots 0}\neq 0$.
Otherwise, we can assume that $\bar{h}_{00\cdots 0}\neq 0$ and the following proof can apply to $\bar{h}$, too.

Note that $h$ is affine, so supp$(h)$ is a linear subspace.
Then by
\[
{\begin{array}{llll}
        &  (a_1\oplus a_2)\bf a' & \in {\rm supp}(h)\\
 \oplus &  (b_1\oplus b_2)\bf b' & \in {\rm supp}(h)\\
\hline
        &  (c_1\oplus c_2)\bf c' & \\
\end{array} }
\]
we have $(c_1\oplus c_2){\bf c}'\in {\rm supp}(h)$.
This implies that
 $h_{(c_1\oplus c_2){\bf c}'}=f_{c_1c_2{\bf c}'}+f_{\bar{c}_1 \bar{c}_2{\bf c}'}\neq 0$.
 Since $f_{c_1 c_2{\bf c}'}=0$, we have
  $f_{\bar{c}_1 \bar{c}_2{\bf c}'}\neq 0$.
  This implies that $\bar{c}_1 \bar{c}_2{\bf c}'\in{\rm supp}(f)$.
  Then $\bar{c}_2{\bf c}'\in{\rm supp}(f^{x_1=0})$
  since $\bar{c}_1=0$.
  Note that ${\rm supp}(f^{x_1=0})$ is a linear subspace and
   \[
{\begin{array}{llll}
        &  a_2\bf a' & \in {\rm supp}(f^{x_1=0})\\
 \oplus &  \bar{c}_2\bf c' & \in {\rm supp}(f^{x_1=0})\\
\hline
        &  \bar{b}_2 \bf b' & \\
\end{array} }
\]
 we have $\bar{b}_2 {\bf b'} \in {\rm supp}(f^{x_1=0})$ .
 This implies that $\bar{b}_1\bar{b}_2 {\bf b'}\in{\rm supp}(f)$.
 This contradicts the hypothesis  $\bar{b}_1\bar{b}_2 {\bf b'}\notin{\rm supp}(f)$.

\item If both $\bar{a}_1\bar{a}_2\bf a'$ and $\bar{b}_1\bar{b}_2\bf b'$ are in supp$(f)$, then
by \[
{\begin{array}{llll}
        &  a_2\bf a' & \in {\rm supp}(f^{x_1=0})\\
 \oplus &  \bar{b}_2\bf b' & \in {\rm supp}(f^{x_1=0})\\
\hline
        &  \bar{c}_2 \bf c' & \\
\end{array} }
\]
we have $\bar{c}_2 {\bf c'} \in {\rm supp}(f^{x_1=0})$. Thus $\bar{c}_1\bar{c}_2 {\bf c'} \in {\rm supp}(f)$
since $\bar{c}_1=0$.

We claim that all of $f_{a_1a_2{\bf a}'}, f_{\bar{a}_1\bar{a}_2{\bf a}'}, f_{b_1b_2{\bf b}'}, f_{\bar{b}_1\bar{b}_2{\bf b}'}, f_{\bar{c}_1\bar{c}_2{\bf c}'}$
are powers of ${\frak i}$.
Firstly,
since $f^{x_1=0}$ is affine and all of $f_{a_1a_2{\bf a}'}, f_{\bar{b}_1\bar{b}_2{\bf b}'}, f_{\bar{c}_1\bar{c}_2{\bf c}'}$
are nonzero entries of $f^{x_1=0}$,
 we derive that
all of $f_{a_1a_2{\bf a}'}$, $f_{\bar{b}_1\bar{b}_2{\bf b}'}$, $f_{\bar{c}_1\bar{c}_2{\bf c}'}$ are powers of ${\frak i}$ by $f_{00\cdots 0}=1$.
Secondly, since  $f^{x_3=a_3}$ is affine, both $f_{a_1a_2{\bf a}'}$ and $f_{\bar{a}_1\bar{a}_2{\bf a}'}$ are nonzero entries of $f^{x_3=a_3}$
and $f_{a_1a_2{\bf a}'}$ is a power of ${\frak i}$,
we derive that $f_{\bar{a}_1\bar{a}_2{\bf a}'}$
is a power of ${\frak i}$.
Finally, since  $f^{x_3=b_3}$ is affine, both $f_{b_1b_2{\bf b}'}$ and $f_{\bar{b}_1\bar{b}_2{\bf b}'}$ are nonzero entries of $f^{x_3=b_3}$
 and $f_{\bar{b}_1\bar{b}_2{\bf b}'}$  is a power of ${\frak i}$,
we derive that  $f_{b_1b_2{\bf b}'}$ is a power of ${\frak i}$.

By
\begin{align*}
h_{(c_1\oplus c_2){\bf c}'}
&=f_{c_1c_2{\bf c}'}+f_{\bar{c}_1 \bar{c}_2{\bf c}'},\\
\bar{h}_{(c_1\oplus c_2){\bf c}'}
&=-f_{c_1c_2{\bf c}'}+f_{\bar{c}_1 \bar{c}_2{\bf c}'},
\end{align*}
we have  $|h_{(c_1\oplus c_2){\bf c}'}|=|\bar{h}_{(c_1\oplus c_2){\bf c}'}|=1$
since $f_{\bar{c}_1 \bar{c}_2{\bf c}'}$ is a power of ${\frak i}$ and $f_{c_1c_2{\bf c}'}=0$.
On the other hand,
\begin{equation}\label{linear-system-2by2-in-lm4.7}
\begin{aligned}
h_{(a_1\oplus a_2)\bf {a}'}
&=f_{a_1a_2\bf a'}+f_{\bar{a}_1\bar{a}_2\bf a'},  \\
\bar{h}_{(a_1\oplus a_2)\bf {a}'}
&=f_{a_1a_2\bf a'}-f_{\bar{a}_1\bar{a}_2\bf a'},
\end{aligned}
\end{equation}
both $h_{(a_1\oplus a_2){\bf a}'}$ and $\bar{h}_{(a_1\oplus a_2){\bf a}'}$
are sums of two quantities, each a power of ${\frak i}$.
If at least one of them is not zero, then it has norm $2$ or $\sqrt{2}$.
This implies that $h$ or $\bar{h}$ is not affine. This is a contradiction.

On the other hand, if both
   $h_{(a_1\oplus a_2){\bf a}'}$ and $\bar{h}_{(a_1\oplus a_2){\bf a}'}$ are zero,
   then $f_{a_1a_2{\bf a}'}=0$,
by treating (\ref{linear-system-2by2-in-lm4.7})
as a linear system. This contradicts that $a_1a_2{\bf a}'\in$supp$(f)$.

\item If $\bar{a}_1\bar{a}_2{\bf a}'\in {\rm supp}(f)$ and $\bar{b}_1\bar{b}_2{\bf b}'\notin {\rm supp}(f)$,
we claim that all of $f_{a_1a_2{\bf a}'}, f_{\bar{a}_1\bar{a}_2{\bf a}'}, f_{b_1b_2{\bf b}'}$ are  powers of ${\frak i}$.
 Firstly, since $f^{x_1=0}$ is affine and both $f_{a_1a_2{\bf a}'}$ and $f_{00\cdots 0}$ are nonzero entries of $f^{x_1=0}$, $f_{a_1a_2{\bf a}'}$
is a power of ${\frak i}$ by $f_{00\cdots 0}=1$.
Secondly, since $f^{x_3=a_3}$ is affine and both $f_{a_1a_2{\bf a}'}$ and $f_{\bar{a}_1\bar{a}_2{\bf a}'}$ are nonzero entries of $f^{x_3=a_3}$,
$f_{\bar{a}_1\bar{a}_2{\bf a}'}$ is a power of ${\frak i}$.
Finally, since $f^{x_1=1}$ is affine and both $f_{b_1b_2{\bf b}'}$ and $f_{\bar{a}_1\bar{a}_2{\bf a}'}$ are nonzero entries of $f^{x_1=1}$,
$f_{b_1b_2{\bf b}'}$ is a power of ${\frak i}$.

By
\begin{align*}
h_{(b_1\oplus b_2)\bf {b}'}
&=f_{b_1b_2\bf b'}+f_{\bar{b}_1\bar{b}_2\bf b'}, \\
\bar{h}_{(b_1\oplus b_2)\bf {b}'}
&=-f_{b_1b_2\bf b'}+f_{\bar{b}_1\bar{b}_2\bf b'},
\end{align*}
we have $|h_{(b_1\oplus b_2){\bf b}'}|=|\bar{h}_{(b_1\oplus b_2){\bf b}'}|=1$
since $f_{b_1b_2\bf a'}$ is a power of ${\frak i}$ and
$f_{\bar{b}_1\bar{b}_2\bf b'}=0$.
Moreover, by
\begin{align*}
h_{(a_1\oplus a_2)\bf {a}'}
&=f_{a_1a_2\bf a'}+f_{\bar{a}_1\bar{a}_2\bf a'}, \\
\bar{h}_{(a_1\oplus a_2)\bf {a}'}
&=f_{a_1a_2\bf a'}-f_{\bar{a}_1\bar{a}_2\bf a'},
\end{align*}
and the similar argument with the previous case,
 at least one of $h_{(a_1\oplus a_2){\bf a}'}$ or $\bar{h}_{(\bar{a}_1\oplus \bar{a}_2){\bf a}'}$ has norm $2$ or $\sqrt{2}$.
  This implies that $h$ or $\bar{h}$ is not affine. This is a contradiction.

  \item If $\bar{a}_1\bar{a}_2{\bf a}'\notin {\rm supp}(f)$ and $\bar{b}_1\bar{b}_2{\bf b}'\in {\rm supp}(f)$,
  the proof is symmetric by reversing the order of ${\bf a}$ and ${\bf b}$ in the previous item.
\end{itemize}

Now we can assume that supp$(f)$ is affine and has dimension $k$.

If $k=0$, then $f\in\mathscr{A}$. This is a contradiction.

If $k=1$, then there exist exactly one $\alpha\in\{0, 1\}^n$ such that $f_{\alpha}\neq 0$ other than $f_{00\cdots 0}=1$.
Note that wt$(\alpha)$ is even since $f$ satisfies
the  even Parity Condition.
Thus we have $\partial_{=_2}^{\left(\frac{\operatorname{wt}{(\alpha)}}{2}-1\right)}[\partial_{[1, 0]}^{S}(f)]
=[1, 0, f_{\alpha}]$, where $S=\{k|$ the $k$-th bit of $\alpha$ is $0\}$. If $f_{\alpha}^4\neq 1$, then we are done
as $[1, 0, f_{\alpha}]\notin\mathscr{A}$, by
Proposition~\ref{A-has-same-norm-etc}.
Otherwise, $f$ is affine. This is a contradiction.

If $k\geq 4$, then
since both $f^{x_i=0}$ and $f^{x_i=1}$ are affine for all $i\in[n]$,
  we get
$f \in \mathscr{A}$  by Lemma~\ref{[1,0]-[0,1]-pinning-implies-affine-arity-4}.
This is a contradiction.

Thus we only need to consider $k=2$ or $k= 3$.
If on its support some variable $x_i$ is a constant $c$,
then $f = f^{x_i =c} \otimes [1,0](x_i)$ or $f^{x_i =c} \otimes [0,1]$
depending on whether $c=0$ or $1$ respectively. Then $f$ would be affine,
a contradiction.
So no variable of $f$  takes a constant value on its support.
\begin{itemize}
\item For $k=2$, by Lemma~\ref{rearrange-the-order-of-the-variables},
without loss of generality, we can assume that $\{x_1, x_2\}$ is a set of free variables.
Then
apply Lemma~\ref{shrinking-non-free-variables-by-[1,0,1,]} repeatedly, we can get a ternary signature $\hat{f}$ such that
\[\operatorname{Pl-Holant}(\widehat{\mathcal{EQ}}, [0, 1]^{\otimes 2}, [1, 0, -1], \hat{f}, \widehat{\mathcal{F}})
\le_{\rm T}
\operatorname{Pl-Holant}(\widehat{\mathcal{EQ}}, [0, 1]^{\otimes 2}, [1, 0, -1], \widehat{\mathcal{F}}),\]
where the compressed signatures of $\hat{f}$ and $f$ for $\{x_1, x_2\}$ are the same. By Lemma~\ref{shrinking-non-free-variables-by-[1,0,1,]}
$\hat{f}$ satisfies the even Parity Condition and $\hat{f}_{000}
= f_{00 \ldots 0} = 1$.
By Corollary~\ref{f-affine-iff-f*-affine},
the compressed signature of $f$ is not affine.
Thus the compressed signature of $\hat{f}$ is not affine.
So
$\hat{f}$ is not affine.

If there exists $i\in[3]$ such that $\hat{f}^{x_i=0}\notin\mathscr{A}$, then we are done by $[1, 0]\in\widehat{\mathcal{EQ}}$.
If there exists $i\in[3]$ such that $\hat{f}^{x_i=1}\notin\mathscr{A}$, then we are done by
Lemma~\ref{[0,1]-tensor-2-equal-[0,1]} and $[0, 1]^{\otimes 2}$.
Therefore, we may assume that $\hat{f}^{x_i=0}$, $\hat{f}^{x_1=1}$ are affine for all $i\in[3]$.
Then there exist $r, s, t\in\{0, 1, 2, 3\}$ such that
$\hat{f}^{x_1=0}=[1, 0, {\frak i}^r]$, $\hat{f}^{x_2=0}=[1, 0, {\frak i}^s]$, $\hat{f}^{x_3=0}=[1, 0, {\frak i}^t]$.
So
\[M_{x_1, x_2x_3}(\hat{f})
=\begin{bmatrix}
\hat{f}_{000} & \hat{f}_{001} & \hat{f}_{010} & \hat{f}_{011}\\
\hat{f}_{100} & \hat{f}_{101} & \hat{f}_{110} & \hat{f}_{111}
\end{bmatrix}
=\begin{bmatrix}
1 & 0 & 0 & {\frak i}^r\\
0 & {\frak i}^s & {\frak i}^t & 0
\end{bmatrix}.\]

Note that the compressed signature of $\hat{f}$ for the free variable set $\{x_1, x_2\}$ is $[1, {\frak i}^r, {\frak i}^s, {\frak i}^t]$.
It is affine iff ${\frak i}^t=\pm {\frak i}^{r+s}$ by Lemma~\ref{binary-affine-compressed function}.
Since $\hat{f} \not \in \mathscr{A}$,
we have ${\frak i}^t=\pm {\frak i}^{r+s+1}$, i.e.,
\[M_{x_1, x_2x_3}(\hat{f})
=\begin{bmatrix}
1 & 0 & 0 & {\frak i}^r\\
0 & {\frak i}^s & \pm {\frak i}^{r+s+1} & 0
\end{bmatrix}.\]
By   connecting three copies of $\hat{f}^{x_2=0}=[1, 0, {\frak i}^s]$ consecutively,
the gadget gives $[1, 0, {\frak i}^{3s}]
= [1, 0, {\frak i}^{-s}]$. Then we have
(see (\ref{ternary-sig-matrix-modification-f1}))
  \[\hat{g}(x_1, x_2, x_3)=\displaystyle\sum_{x'_1\in\{0, 1\}}[1, 0, {\frak i}^{-s}](x_1, x'_1)\hat{f}(x'_1, x_2, x_3),\]
with signature matrix
\[M_{x_1, x_2x_3}(\hat{g})
=\begin{bmatrix}
1 & 0 & 0 & {\frak i}^r\\
0 & 1 & \pm {\frak i}^{r+1} & 0
\end{bmatrix}.\]
By connecting variables $x_2, x_3$ of  $\hat{g}$
to variables $x_3, x_2$ of $[1, 0, 1, 0]$,
the planar gadget gives the signature
\[\displaystyle\sum_{x_1, x_3\in\{0, 1\}}\hat{g}(x_1, x_2, x_3)
[1, 0, 1, 0](x'_1, x_3, x_2)=[2, 0, (1\pm {\frak i}){\frak i}^r](x_1, x'_1).\]
 This is not affine since the norms of $2$ and $(1\pm {\frak i}){\frak i}^r$ are different. Thus we are done.

 \item If the dimension $k$ of the support is 3, then $n\geq 4$ since $f$ satisfies the even Parity Condition.
 Firstly, we claim that we can get a 4-ary signature $\tilde{f}$ that has the same compressed signature with $f$.
 If $n=4$, then we set $\tilde{f} = f$. Otherwise, $n\geq 5$.
 By Lemma~\ref{rearrange-the-order-of-the-variables},
without loss of generality, we can assume that there exists a set of free variables including $\{x_1, x_2\}$.
If $\{x_1, x_2, x_3\}$ or $\{x_n, x_1, x_2\}$ is a set of free variables, then
by Lemma~\ref{shrinking-non-free-variables-by-[1,0,1,]} repeatedly, we can
shrink the variables that are not in the free variable set to one variable while keep planarity.
Then we
get a signature $\tilde{f}$ that has arity 4
and has the same compressed signature with $f$.

Otherwise, there exists $k$ such that $\{x_1, x_2, x_k\}$ is a set of free variables, where $4\leq k\leq n-1$.
By Lemma~\ref{shrinking-non-free-variables-by-[1,0,1,]}, we can
shrink the variables indexed by $3\leq i\leq k-1$ to one variable $x'$, and
shrink the variables indexed by $k+1\leq i\leq n$ to one variable $x''$ while keep planarity.
Then we
get a signature $g(x_1, x_2, x', x_k, x'')$ with arity 5
and has the same compressed signature with $f$.
Then by Lemma~\ref{rearrange-the-order-of-the-variables-5}, there exists a set of free variables that are consecutive. Then
by Lemma~\ref{shrinking-non-free-variables-by-[1,0,1,]},
we can get a signature $\tilde{f}$ with arity 4 and has the same compressed signature with $f$.

Since $f$ is not affine, $\tilde{f}$ is not affine.
By Lemma~\ref{[0,1]-EQ-hat-wight-0-neq-0}
we can assume that
\[M_{x_1x_2, x_4x_3}(\tilde{f})=\begin{bmatrix}
\tilde{f}_{0000} & \tilde{f}_{0010} & \tilde{f}_{0001} & \tilde{f}_{0011}\\
\tilde{f}_{0100} & \tilde{f}_{0110} & \tilde{f}_{0101} & \tilde{f}_{0111}\\
\tilde{f}_{1000} & \tilde{f}_{1010} & \tilde{f}_{1001} & \tilde{f}_{1011}\\
\tilde{f}_{1100} & \tilde{f}_{1110} & \tilde{f}_{1101} & \tilde{f}_{1111}\\
\end{bmatrix}
=\begin{bmatrix}
1 & 0 & 0 & b\\
0 & \alpha & \beta & 0\\
0 & \gamma & \delta & 0\\
c & 0 & 0 & d\\
\end{bmatrix}.\]
Let $\underline{\tilde{f}}$ be the compressed signature of $\tilde{f}$ for $\{x_1, x_2, x_3\}$,
then
\begin{equation}\label{compressed-signature-of-arity-3}
M_{x_1, x_2x_3}(\underline{\tilde{f}})=\begin{bmatrix}
\tilde{f}_{0000} & \tilde{f}_{0011} & \tilde{f}_{0101} & \tilde{f}_{0110}\\
\tilde{f}_{1001} & \tilde{f}_{1010} & \tilde{f}_{1100} & \tilde{f}_{1111}\\
\end{bmatrix}
=\begin{bmatrix}
1 & b & \beta & \alpha\\
\delta & \gamma & c & d\\
\end{bmatrix}.
\end{equation}

If there exists $i\in[4]$ such that $\tilde{f}^{x_i=0}$ is not affine, then we are done by $[1, 0]\in\widehat{\mathcal{EQ}}$.
If there exists $i\in[4]$ such that $\tilde{f}^{x_i=1}$ is not affine, then we are done by Lemma~\ref{[0,1]-tensor-2-equal-[0,1]}
and  $[0, 1]^{\otimes 2}$.
Thus we may assume that both $\tilde{f}^{x_i=0}$ and $\tilde{f}^{x_i=1}$ are affine for $i\in[4]$.
Since $\tilde{f}^{x_i=0}$ is affine for $i=1, 2, 3$,
there exists $r, s, t\in\{0, 1, 2, 3\}$ and $\epsilon_1, \epsilon_2, \epsilon_3\in\{1, -1\}$
such that
\[M_{x_2, x_4x_3}(\tilde{f}^{x_1=0})=
\begin{bmatrix}
1 & 0 & 0 & b\\
0 & \alpha & \beta & 0\\
\end{bmatrix}=
\begin{bmatrix}
1 & 0 & 0 & {\frak i}^r\\
0 & \epsilon_1 {\frak i}^{r+s} & {\frak i}^s & 0\\
\end{bmatrix},
\]
\[M_{x_1, x_4x_3}(\tilde{f}^{x_2=0})=
\begin{bmatrix}
1 & 0 & 0 & b\\
0 & \gamma & \delta & 0\\
\end{bmatrix}=
\begin{bmatrix}
1 & 0 & 0 & {\frak i}^r\\
0 & \epsilon_2 {\frak i}^{r+t} & {\frak i}^t & 0\\
\end{bmatrix},
\]
\[M_{x_1x_2, x_4}(\tilde{f}^{x_3=0})=
\begin{bmatrix}
1 & 0 \\
0 &  \beta \\
0 &  \delta\\
c &  0 \\
\end{bmatrix}=
\begin{bmatrix}
1 & 0 \\
0 &  {\frak i}^s \\
0 &  {\frak i}^t\\
\epsilon_3 {\frak i}^{s+t} & 0\\
\end{bmatrix}.
\]
Moreover, since $\tilde{f}^{x_4=1}$ is affine, there exists $\epsilon_4 \in \{1, -1\}$, such that
\[M_{x_1x_2, x_3}(\tilde{f}^{x_4=1})=
\begin{bmatrix}
0 & b \\
\beta &  0 \\
\delta &  0\\
  0 & d\\
\end{bmatrix}=
\begin{bmatrix}
0 & {\frak i}^r \\
{\frak i}^s &  0 \\
{\frak i}^{s+t} &  0\\
  0 & \epsilon_4{\frak i}^{r+s+t}\\
\end{bmatrix}.
\]
So we have
\begin{equation}\label{M-of-fhat-in-both-letter-and-i-powers}
M_{x_1x_2, x_4x_3}(\tilde{f})
=\begin{bmatrix}
1 & 0 & 0 & b\\
0 & \alpha & \beta & 0\\
0 & \gamma & \delta & 0\\
c & 0 & 0 & d\\
\end{bmatrix}
=\begin{bmatrix}
1 & 0 & 0 & {\frak i}^r\\
0 & \epsilon_1 {\frak i}^{r+s} & {\frak i}^s & 0\\
0 & \epsilon_2 {\frak i}^{r+t} & {\frak i}^t & 0\\
\epsilon_3 {\frak i}^{s+t} & 0 & 0 & \epsilon_4 {\frak i}^{r+s+t}\\
\end{bmatrix}.
\end{equation}

Note that we have $\partial_{[1, 0]}^{\{3, 4\}}(\tilde f)=[1, 0, \epsilon_3 {\frak i}^{s+t}]$, $\partial_{[1, 0]}^{\{1, 2\}}(\tilde f)=[1, 0, {\frak i}^{r}]$.
By connecting consecutively three copies of
$[1, 0, \epsilon_3 {\frak i}^{s+t}]$, respectively
 of $[1, 0, {\frak i}^{r}]$,
we have
 $[1, 0, \epsilon^3_3 {\frak i}^{3s+3t}]
= [1, 0, \epsilon_3 ({\frak i}^{s+t})^{-1}]$,
respectively $[1, 0, {\frak i}^{3r}]
= [1, 0, {\frak i}^{-r}]$.
 Let
\[h(x_1, x_2, x_3, x_4)=\displaystyle\sum_{x'_2, x'_4\in\{0, 1\}}
\tilde{f}(x_1, x'_2, x_3, x'_4)[1, 0, \epsilon_3 ({\frak i}^{s+t})^{-1}](x'_2, x_2)[1, 0, {\frak i}^{-r}](x'_4, x_4).\]
Then (see (\ref{4-ary-sig-matrix-modification-f2}) and
(\ref{4-ary-sig-matrix-modification-f4}))
\[M_{x_1x_2, x_4x_3}(h)=\begin{bmatrix}
1 & 0 & 0 & 1\\
0 & \epsilon_1\epsilon_3 {\frak i}^{r-t} & \epsilon_3{\frak i}^{-r-t} & 0\\
0 & \epsilon_2 {\frak i}^{r+t} & {\frak i}^{t-r} & 0\\
1 & 0 & 0 & \epsilon_3\epsilon_4\\
\end{bmatrix}=
\begin{bmatrix}
1 & 0 & 0 & 1\\
0 & \epsilon_1\epsilon_3 (-1)^t a & \epsilon_3 (-1)^{r+t} a & 0\\
0 & \epsilon_2 a & (-1)^r a & 0\\
1 & 0 & 0 & \epsilon_3\epsilon_4\\
\end{bmatrix},\] where $a={\frak i}^{r+t}$.

Take  two copies of $h$, connect the $3, 4$-th variables of one copy to the $4, 3$-th variables of another copy,
the planar
gadget (see Figure~\ref{fig:gadget:connecting two h}) gives the signature
 \[
h'(x_1, x_2, x_3, x_4))=\displaystyle\sum_{x'_3, x'_4\in\{0, 1\}}h(x_1, x_2, x'_3, x'_4)h(x_3, x_4, x'_4, x'_3).
\]

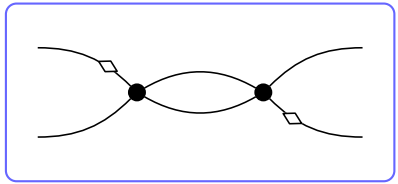
\begin{figure}[htpb]
 \centering
 \begin{tikzpicture}[scale=\scale,transform shape,node distance=\nodeDist,semithick]
  \node[internal]  (0)                    {};
  \node[external]  (1) [above left  of=0] {};
  \node[external]  (2) [below left  of=0] {};
  \node[external]  (3) [left        of=1] {};
  \node[external]  (4) [left        of=2] {};
  \node[external]    (5) [right       of=0] {};
  \node[internal]  (6) [right       of=5] {};
  \node[external]  (7) [above right of=6] {};
  \node[external]  (8) [below right of=6] {};
  \node[external]  (9) [right       of=7] {};
  \node[external] (10) [right       of=8] {};
  \path (0) edge[in=   0, out=135, postaction={decorate, decoration={
                                                           markings,
                                                           mark=at position 0.4   with {\arrow[>=diamond, white] {>}; },
                                                           mark=at position 0.4   with {\arrow[>=open diamond]   {>}; },
                                                           mark=at position 2 with {\arrow[>=diamond, white] {>}; },
                                                           mark=at position 1.0   with {\arrow[>=open diamond, white]   {>}; } } }] (3)
            edge[out=-135, in=   0]  (4)
            edge[bend right]                    (6)
            edge[bend left]                    (6)
        (6) edge[out=  45, in= 180]  (9)
            edge[in=   180, out=-45, postaction={decorate, decoration={
                                                           markings,
                                                           mark=at position 0.4   with {\arrow[>=diamond, white] {>}; },
                                                           mark=at position 0.4   with {\arrow[>=open diamond]   {>}; },
                                                           mark=at position 2 with {\arrow[>=diamond, white] {>}; },
                                                           mark=at position 1.0   with {\arrow[>=open diamond, white]   {>}; } } }] (10);
  \begin{pgfonlayer}{background}
   \node[inner sep=0pt,transform shape=false,draw=\borderColor,thick,rounded corners,fit = (1) (2) (7) (8)] {};
  \end{pgfonlayer}
 \end{tikzpicture}
 \caption{The two vertices are assigned $h$. The edges with diamond
indicate the first variable. Other variables are ordered counterclockwise.}
 \label{fig:gadget:connecting two h}
\end{figure}

Note that (see Figure~(\ref{fig:rotate_asymmetric_signature});
for the second copy of $h$ we rotate $180^{\circ}$)
\begin{align*}
M_{x_1x_2, x_4x_3}(h')
&=\begin{bmatrix}
1 & 0 & 0 & 1\\
0 & \epsilon_1\epsilon_3 (-1)^t a & \epsilon_3 (-1)^{r+t} a & 0\\
0 & \epsilon_2 a & (-1)^r a & 0\\
1 & 0 & 0 & \epsilon_3\epsilon_4\\
\end{bmatrix}
\begin{bmatrix}
1 & 0 & 0 & 1\\
0 & (-1)^r a  & \epsilon_3 (-1)^{r+t} a & 0\\
0 & \epsilon_2 a & \epsilon_1\epsilon_3 (-1)^t a& 0\\
1 & 0 & 0 & \epsilon_3\epsilon_4\\
\end{bmatrix}\\
&=
\begin{bmatrix}
2 & 0 & 0 & 1+\epsilon_3\epsilon_4\\
0 & \epsilon_3(\epsilon_1+\epsilon_2)  & 2\epsilon_1 (-1)^t & 0\\
0 & 2\epsilon_2 (-1)^{t}  & \epsilon_3(\epsilon_1+\epsilon_2)& 0\\
1+\epsilon_3\epsilon_4 & 0 & 0 & 2\\
\end{bmatrix}.
\end{align*}
Let $h''=\partial_{[1, 0]}^{\{1\}}(h')$,
then \[M_{x_2, x_4x_3}(h'')=
\begin{bmatrix}
2 & 0 & 0 & 1+\epsilon_3\epsilon_4\\
0 & \epsilon_3(\epsilon_1+\epsilon_2)  & 2\epsilon_1 (-1)^t  & 0\\
\end{bmatrix}.\]

Thus we have
\begin{itemize}
\item if $\epsilon_1\epsilon_2\epsilon_3\epsilon_4=-1$, then
exactly one of $\epsilon_1 = - \epsilon_2$ or $\epsilon_3 = - \epsilon_4$
holds.
This implies that $h''$ does not have
affine support. Thus $h''$ is not affine.
Then we are done by induction.
\item if $\epsilon_1\epsilon_2\epsilon_3\epsilon_4=1$, by (\ref{compressed-signature-of-arity-3}) and (\ref{M-of-fhat-in-both-letter-and-i-powers}),
\[
M_{x_1, x_2x_3}(\underline{\tilde{f}})=\begin{bmatrix}
1  &  {\frak i}^r  &  {\frak i}^s  &\epsilon_1 {\frak i}^{r+s}\\
{\frak i}^t&\epsilon_2 {\frak i}^{r+t} & \epsilon_3 {\frak i}^{s+t} & \epsilon_4 {\frak i}^{r+s+t}
\end{bmatrix},
\]
where $\underline{\tilde{f}}$ is the compressed signature of $f$ for $\{x_1, x_2, x_3\}$.
Note that $\underline{\tilde{f}}$
is affine by Lemma~\ref{binary-affine-compressed function}.
Thus $f$ is affine. This is a contradiction.
\end{itemize}

\end{itemize}
\end{proof}

The next lemma shows how to reduce the arity of a
non-affine signature  in
$\operatorname{Pl-Holant}(\widehat{\mathcal{EQ}}, \widehat{\mathcal{F}})$
when all signatures in $\widehat{\mathcal{F}}$ take  values
in  $\{0, 1\}$.
\begin{lemma}\label{0,1-valued}
Suppose all signatures in $\widehat{\mathcal{F}}$ take values in $\{0, 1\}$
and satisfy the Parity Condition.
If $\widehat{\mathcal{F}}$ contains a signature $f \not \in \mathscr{A}$
of arity $n \ge 3$,
then there exists a signature $g\notin\mathscr{A}$ of arity $<n$,
such that
\[\operatorname{Pl-Holant}(\widehat{\mathcal{EQ}}, g, \widehat{\mathcal{F}})
\le_{\rm T}
\operatorname{Pl-Holant}(\widehat{\mathcal{EQ}}, \widehat{\mathcal{F}}).\]
Furthermore, if $f$ satisfies the even Parity Condition, so does $g$.
\end{lemma}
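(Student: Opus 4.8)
The plan is to mimic the arity-reduction engine that has been developed earlier in this section, specializing to the $\{0,1\}$-valued, Parity-satisfying case, and to use the structural fact (proved in the proof of Lemma~\ref{0-1-value-signature}) that a $\{0,1\}$-valued signature lies in $\mathscr{A}$ if and only if its support is affine. By Lemma~\ref{[0,1]-EQ-hat-wight-0-neq-0} we may first normalize so that $f_{00\cdots 0}=1$ and $f$ satisfies the \emph{even} Parity Condition; every signature constructible in $\operatorname{Pl-Holant}(\widehat{\mathcal{EQ}},\widehat{\mathcal{F}})$ then also satisfies the even Parity Condition, so in particular every constructible unary signature is a scalar multiple of $[1,0]$ and lies in $\mathscr{A}$ — hence the smallest arity at which a non-affine signature can appear is $2$, and for $n\ge 3$ we do expect to be able to drop the arity.

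The first moves are the standard pinning reductions. If $f^{x_i=0}\notin\mathscr{A}$ for some $i$, take $g=f^{x_i=0}$, using $[1,0]\in\widehat{\mathcal{EQ}}$; this has arity $n-1$ and (being a section of a signature with even parity) still satisfies the even Parity Condition. If $f^{x_i=1}\notin\mathscr{A}$ for some $i$, we cannot pin to $1$ directly, but here is where the $\{0,1\}$-valued hypothesis helps: since $f$ is not identically zero, we can construct $[0,\dots,0,1]$ of the appropriate arity by $\partial_{[1,0]}^{S}(f)$ at a minimum-weight nonzero entry, then collapse it with $\partial_{=_2}$ to $[0,1]^{\otimes 2}$ (the weight is even by even parity), and invoke Lemma~\ref{[0,1]-tensor-2-equal-[0,1]} to obtain $g\notin\mathscr{A}$ of arity $n-1$ satisfying even parity. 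So we may assume $f^{x_i=0},f^{x_i=1}\in\mathscr{A}$ for all $i$; then $f^{x_i=0},f^{x_i=1}$ being $\{0,1\}$-valued and affine means their supports are linear/affine subspaces.

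It remains to handle the case where all sections are affine but $f\notin\mathscr{A}$, i.e., $\operatorname{supp}(f)$ is not affine; since $00\cdots 0\in\operatorname{supp}(f)$ this means $\operatorname{supp}(f)$ is not a linear subspace. By Lemma~\ref{argue-a1-0-b1-1} there are ${\bf a},{\bf b}\in\operatorname{supp}(f)$ with ${\bf c}={\bf a}\oplus{\bf b}\notin\operatorname{supp}(f)$ and an index, say $1$, with $a_1=0$, $b_1=1$, $c_1=1$. Form $h=\partial^{\{1\}}_{[1,1]}(f)$ (arity $n-1$), so $h_\alpha=f_{0\alpha}+f_{1\alpha}$ and $h$ has even parity; if $h\notin\mathscr{A}$ we are done, so assume $h\in\mathscr{A}$. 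Now run the Tableau Calculus exactly as in the proof of Lemma~\ref{0-1-value-signature}: using that $f$ is $\{0,1\}$-valued, that $\operatorname{supp}(f^{x_1=0})$ is a linear subspace, and that $\operatorname{supp}(h)$ is a linear subspace (when $h_{00\cdots0}=f_{00\cdots0}+f_{10\cdots0}\ne0$, which holds since $f_{00\cdots0}=1$), one forces ${\bf c}'=a_2\cdots a_n\oplus b_2\cdots b_n\in\operatorname{supp}(h)$, deduces $f_{\bar c_1{\bf c}'}=1$, then chains back through $\operatorname{supp}(f^{x_1=0})$ to conclude $\bar b_1{\bf b}'\in\operatorname{supp}(f)$, whence $h_{{\bf b}'}=f_{b_1{\bf b}'}+f_{\bar b_1{\bf b}'}=2$ while $h_{{\bf c}'}=1$, contradicting $h\in\mathscr{A}$ by Proposition~\ref{A-has-same-norm-etc}. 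This closes the case $n\ge3$, and the repeated pinning/flipping all preserve the even Parity Condition, giving the furthermore clause.

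The main obstacle is the bookkeeping in the Tableau Calculus step: one must track several antipodal/linear-support membership relations simultaneously across $f$, $f^{x_1=0}$, $f^{x_1=1}$, and $h$, and the $\{0,1\}$-valuedness must be used at precisely the right moments (to upgrade ``nonzero entry'' to ``entry equal to $1$'') so that the final norm comparison $|h_{{\bf b}'}|=2\ne1=|h_{{\bf c}'}|$ actually lands. Relative to Lemma~\ref{0-1-value-signature} the extra care is that here we do \emph{not} have $[0,1]$ available — only $[0,1]^{\otimes2}$ — so the ``pin $x_i$ to $1$'' branch must be routed through Lemma~\ref{[0,1]-tensor-2-equal-[0,1]} rather than a direct pinning, and we must verify that the $h=\partial^{\{1\}}_{[1,1]}(f)$ construction together with $\widehat{\mathcal{EQ}}$ suffices without needing odd-parity gadgets. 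I expect no genuinely new idea is required beyond assembling these pieces in the right order.
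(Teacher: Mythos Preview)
There is a genuine gap: your key construction $h=\partial^{\{1\}}_{[1,1]}(f)$ is not available in $\operatorname{Pl-Holant}(\widehat{\mathcal{EQ}},\widehat{\mathcal{F}})$. The unary $[1,1]$ does not belong to $\widehat{\mathcal{EQ}}=\{[1,0],[1,0,1],[1,0,1,0],\dots\}$, and since every signature in $\widehat{\mathcal{EQ}}\cup\widehat{\mathcal{F}}$ satisfies the Parity Condition, every constructible unary signature is a scalar multiple of $[1,0]$ --- you cannot build $[1,1]$. (Indeed, your own observation that $h_\alpha=f_{0\alpha}+f_{1\alpha}$ would have even parity is false: when $\operatorname{wt}(\alpha)$ is odd, $h_\alpha=f_{1\alpha}$, which need not vanish.) You have essentially ported the engine of Lemma~\ref{0-1-value-signature} from the $\operatorname{Pl-\#CSP}$ setting, where $[1,1]=({=}_1)$ is freely available, into the $\widehat{\mathcal{EQ}}$ Holant setting, where it is not.

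The paper's proof replaces this step by a two-variable merge using $[1,0,1,0]\in\widehat{\mathcal{EQ}}$: connect variables $x_1,x_2$ of $f$ to two inputs of $[1,0,1,0]$, producing
\[
h(x_1\oplus x_2,\,x_3,\dots,x_n)=f(x_1,x_2,x_3,\dots,x_n)+f(\bar{x}_1,\bar{x}_2,x_3,\dots,x_n),
\]
which is a planar construction, has arity $n-1$, and preserves even parity. The subsequent Tableau Calculus is then run with the bit strings ${\bf a}',{\bf b}',{\bf c}'$ truncated at position~$3$ (not~$2$), and requires a four-case split according to whether $\bar a_1\bar a_2{\bf a}'$ and $\bar b_1\bar b_2{\bf b}'$ lie in $\operatorname{supp}(f)$; only the assumption $f^{x_i=0}\in\mathscr{A}$ is needed, so your detour through $[0,1]^{\otimes 2}$ and Lemma~\ref{[0,1]-tensor-2-equal-[0,1]} is unnecessary (and your proposed construction of $[0,1]^{\otimes 2}$ via a minimum-weight nonzero entry does not yield $[0,\dots,0,1]$, since $f_{0\cdots 0}=1$).
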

\begin{proof}
The proof is by induction on $n$, and uses the Tableau Calculus.

If there exists $i\in[n]$ such that $f^{x_i=0}  \not \in \mathscr{A}$,
then we are done since we have $[1, 0]\in\widehat{\mathcal{EQ}}$.
In the following, we assume that $f^{x_i=0} \in \mathscr{A}$
for $1\leq i\leq n$.
By Lemma~\ref{[0,1]-EQ-hat-wight-0-neq-0}, we can assume that $f_{00\cdots 0}=1$ and $f$ satisfies the even Parity Condition.

For a $\{0,1\}$-valued signature  $f$, $f  \in \mathscr{A}$  iff supp$(f)$
 is an affine subspace.
Thus  supp$(f)$ is not an affine subspace, in particular
not a linear subspace.
A subset of $\mathbb{Z}_2^n$ containing $(0, 0, \ldots, 0)$ is affine iff  it is a linear subspace.
By $(0, 0, \ldots, 0)\in {\rm supp}(f)$, and $f^{x_i=0} \in \mathscr{A}$,
supp$(f^{x_i=0})$ is a linear subspace of $\mathbb{Z}_2^{n-1}$.
By Lemma~\ref{argue-a1-0-b1-1},
 there exist
 ${\bf a}=a_1a_2\cdots a_n, {\bf b}=b_1b_2\cdots b_n,$
such that  ${\bf a}, {\bf b}\in{\rm supp}(f)$, ${\bf c}={\bf a}\oplus {\bf b}=c_1c_2\cdots c_n\notin{\rm supp}(f)$
and there exists $i\in[n]$ such that $a_i \not = b_i$.
Without loss of generality, we assume that $a_1=0, b_1=1$.
It follows that $c_1=1$.
Let ${\bf a}'=a_3\cdots a_n, {\bf b}'=b_3\cdots b_n, {\bf c}'=c_3\cdots c_n$.

By connecting the variables $x_2, x_1$ of $[1, 0, 1, 0]$ to the variables $x_1, x_2$ of $f$ respectively,
the planar gadget gives the signature
\[h(x', x_3, x_4, \ldots, x_n)=\displaystyle\sum_{x_1, x_2\in\{0, 1\}}[1, 0, 1, 0](x_2, x_1, x')f(x_1, x_2, \ldots, x_n).\]
Note that
\[h(x_1\oplus x_2, x_3, \ldots, x_n)=f(x_1, x_2, x_3, \ldots, x_n)+f(\bar{x}_1, \bar{x}_2, x_3, \ldots, x_n).\]
If $h \notin\mathscr{A}$, then we are done since $h$ has arity $n-1$.
So in the following, we assume that $h \in\mathscr{A}$.
Now comes the Tableau Calculus.

\begin{itemize}
\item For $\bar{a}_1\bar{a}_2{\bf a}'\in {\rm supp}(f)$, $\bar{b}_1\bar{b}_2{\bf b}'\in {\rm supp}(f)$,
 since supp$(f^{x_1=0})$ is a linear subspace and
 \[
{\begin{array}{llllllllllllllll}
        & a_2{\bf a}'& \in {\rm supp}(f^{x_i=0}) \\
 \oplus & \bar{b}_2{\bf b}' &  \in {\rm supp}(f^{x_i=0})\\
\hline
        & \bar{c}_2{\bf c}' &  \\
\end{array} }
\]
we have $\bar{c}_2{\bf c}'\in{\rm supp}(f^{x_i=0})$.
This means that
$f_{\bar{c}_1\bar{c}_2{\bf c}'}\neq 0$ since $\bar{c}_1=0$.
Note that
\[h_{(c_1\oplus c_2){\bf c}'}=f_{c_1c_2\bf{c}'}+f_{\bar{c}_1\bar{c}_2\bf{c}'}.\]
Thus $|h_{(c_1\oplus c_2){\bf c}'}|=1$
since $f_{c_1c_2\bf{c}'}=0$ and $f_{\bar{c}_1\bar{c}_2\bf{c}'}=1$
(as $f$ is $\{0,1\}$-valued).
Moreover, by
\[h_{(a_1\oplus a_2){\bf a}'}=f_{a_1a_2\bf{a}'}+f_{\bar{a}_1\bar{a}_2\bf{a}'},\]
 we have $|h_{(a_1\oplus a_2){\bf a}'}|=2,$
 since $f_{a_1a_2\bf{a}'}=f_{\bar{a}_1\bar{a}_2\bf{a}'}=1$. This implies that
$h$ is not affine, as nonzero values of $h$ have different norms
(Proposition~\ref{A-has-same-norm-etc}).
This is a contradiction.

\item For $\bar{a}_1\bar{a}_2{\bf a}'\notin {\rm supp}(f)$, $\bar{b}_1\bar{b}_2{\bf b}'\notin {\rm supp}(f)$,
by
\[h_{00\cdots 0}=f_{000\cdots 0}+f_{110\cdots 0},\]
we have $h_{00\cdots 0}\neq 0$  since $f_{000\cdots 0}=1$
and $f_{110\cdots 0}\in\{0, 1\}$.
This implies that supp$(h)$ is a linear subspace.
By \[h_{(a_1\oplus a_2){\bf a}'}=f_{a_1 a_2{\bf a}'}+f_{\bar{a}_1 \bar{a}_2{\bf a}'},\]
\[h_{(b_1\oplus b_2){\bf b}'}=f_{b_1 b_2{\bf b}'}+f_{\bar{b}_1 \bar{b}_2{\bf b}'},\]
we have
$h_{(a_1\oplus a_2){\rm a}'}=1,$ $h_{(b_1\oplus b_2){\bf b}'}=1$
since  $f_{a_1 a_2{\bf a}'}= f_{b_1 b_2{\bf b}'}=1$ and $f_{\bar{a}_1 \bar{a}_2{\bf a}'}=f_{\bar{b}_1 \bar{b}_2{\bf b}'}=0$.
This implies that $(a_1\oplus a_2){\bf a}', (b_1\oplus b_2){\bf b}'\in{\rm supp}(h)$.
Then by
\[
{\begin{array}{llllllllllllllll}
        & (a_1\oplus a_2){\bf a}' & \in{\rm supp}(h) \\
 \oplus & (b_1\oplus b_2){\bf b}' & \in{\rm supp}(h) \\
\hline
        & (c_1\oplus c_2){\bf c}' &  \\
\end{array} }
\]
we have $(c_1\oplus c_2){\bf c}' \in{\rm supp}(h)$.
This implies that
\[h_{(c_1\oplus c_2){\bf c}'}=f_{c_1 c_2{\bf c}'}+f_{\bar{c}_1 \bar{c}_2{\bf c}'}\neq 0.\]
Thus $f_{\bar{c}_1 \bar{c}_2{\bf c}'}\neq 0$ since $f_{c_1 c_2{\bf c}'}=0$.
So $\bar{c}_1 \bar{c}_2{\bf c}'\in{\rm supp}(f)$.
Hence, $\bar{c}_2{\bf c}'\in{\rm supp}(f^{x_1=0})$ since $\bar{c}_1=0$.
By
\[
{\begin{array}{llllllllllllllll}
        & a_2{\bf a}' & \in{\rm supp}(f^{x_1=0}) \\
 \oplus & \bar{c}_2{\bf c}' & \in{\rm supp}(f^{x_1=0}) \\
\hline
        & \bar{b}_2{\bf b}' &  \\
\end{array} }
\]
we have $\bar{b}_2{\bf b}'\in{\rm supp}(f^{x_1=0})$.
Thus $\bar{b}_1\bar{b}_2{\bf b}'\in{\rm supp}(f)$ as $\bar{b}_1=0$.
This is a contradiction.

\item If $\bar{a}_1\bar{a}_2{\bf a'}\in {\rm supp}(f)$
 and $\bar{b}_1\bar{b}_2{\bf b}'\notin {\rm supp}(f)$, then
we consider
 \begin{equation}\label{lemma-0-1-reduce-1}
\begin{aligned}
 h_{(a_1\oplus a_2){\bf a}'}
&=f_{a_1a_2{\bf a}'}+f_{\bar{a}_1\bar{a}_2{\bf a'}},\\
 h_{(b_1\oplus b_2){\bf b}'}
&=f_{b_1b_2{\bf b}'}+f_{\bar{b}_1\bar{b}_2{\bf b'}}.
\end{aligned}
 \end{equation}
 This implies that $h_{(a_1\oplus a_2){\bf a}'}=2$
and $h_{(b_1\oplus b_2){\bf b}'}=1$. Thus
$h \notin\mathscr{A}$. This is a contradiction.

 \item If $\bar{a}_1\bar{a}_2{\bf a'}\notin {\rm supp}(f)$
 and $\bar{b}_1\bar{b}_2{\bf b}'\in {\rm supp}(f)$,
 then (\ref{lemma-0-1-reduce-1}) implies that
 $h_{(a_1\oplus a_2){\bf a}'}=1$ and $h_{(b_1\oplus b_2){\bf b}'}=2$. Thus
 $h \notin\mathscr{A}$. This is a contradiction.
\end{itemize}
\end{proof}

The next lemma shows how to reduce the arity of a
non-affine signature with the help of an additional binary
signature $[1, 0, -1]$ in
$\operatorname{Pl-Holant}(\widehat{\mathcal{EQ}}, [1, 0, -1],
 \widehat{\mathcal{F}})$,
when all signatures in $\widehat{\mathcal{F}}$ take $\{0, 1, -1\}$ values.

\begin{lemma}\label{0,1,-1-valued}
Suppose all signatures in $\widehat{\mathcal{F}}$ take values in $\{0, 1, -1\}$
and satisfy the Parity Condition.
If $\widehat{\mathcal{F}}$ contains a signature $f \not \in \mathscr{A}$
of arity $n \ge 3$,
then there exists a signature $g\notin\mathscr{A}$ of arity $<n$,
such that
\[\operatorname{Pl-Holant}(\widehat{\mathcal{EQ}}, g, [1, 0, -1], \widehat{\mathcal{F}})
\le_{\rm T}
\operatorname{Pl-Holant}(\widehat{\mathcal{EQ}}, [1, 0, -1], \widehat{\mathcal{F}}).\]
Furthermore, if $f$ satisfies the even Parity Condition, so does $g$.
\end{lemma}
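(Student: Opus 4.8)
The plan is to mirror the structure of Lemma~\ref{0,1-valued}, replacing the $\{0,1\}$-valued Tableau Calculus with one adapted to $\{0,1,-1\}$-valued signatures, and using the extra binary signature $[1,0,-1]$ exactly as in Lemma~\ref{[1,0]-[0,1]-[1,0,-1]-EQ-hat-affine-reduction}. First I would reduce to a convenient normal form: by Lemma~\ref{[0,1]-EQ-hat-wight-0-neq-0} (the $\{0,1,-1\}$ clause applies), we may assume $f_{00\cdots 0}=1$ and, if $f$ satisfies the Parity Condition, that it satisfies the \emph{even} Parity Condition; and if some $f^{x_i=0}\notin\mathscr{A}$ we are immediately done using $[1,0]\in\widehat{\mathcal{EQ}}$, so we may assume $f^{x_i=0}\in\mathscr{A}$ for all $i\in[n]$. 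Note that unlike the $\{0,1\}$-valued case, a $\{0,1,-1\}$-valued signature can have affine support yet fail to be affine (a cross term with odd coefficient), so I would split into the sub-case where $\operatorname{supp}(f)$ is \emph{not} affine (handled by Tableau Calculus, as below) and the sub-case where $\operatorname{supp}(f)$ \emph{is} affine (handled by arguing on the compressed signature, as in Lemma~\ref{[1,0]-[0,1]-[1,0,-1]-EQ-hat-affine-reduction}, reducing to small arity via Lemmas~\ref{rearrange-the-order-of-the-variables}, \ref{rearrange-the-order-of-the-variables-5} and \ref{shrinking-non-free-variables-by-[1,0,1,]}, then using $[1,0,-1]$ to twist signs and kill affinity).

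For the non-affine-support sub-case, I would run the Tableau Calculus exactly as in Lemma~\ref{[1,0]-[0,1]-[1,0,-1]-EQ-hat-affine-reduction}: from $[1,0,-1]$ and $[1,0,1,0]\in\widehat{\mathcal{EQ}}$ build $\bar f(x_1,\dots,x_n)=(-1)^{x_1}f(x_1,\dots,x_n)$, then fold two variables via $[1,0,1,0]$ to get
\[
h(x_1\oplus x_2,x_3,\dots,x_n)=f(x_1,x_2,\dots,x_n)+f(\bar x_1,\bar x_2,x_3,\dots,x_n)
\]
and similarly $\bar h$ with the alternating sign. If either $h$ or $\bar h$ is non-affine we are done (arity $n-1$). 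Otherwise, using Lemma~\ref{argue-a1-0-b1-1} choose ${\bf a},{\bf b}\in\operatorname{supp}(f)$ with ${\bf c}={\bf a}\oplus{\bf b}\notin\operatorname{supp}(f)$ and a coordinate where ${\bf a},{\bf b}$ differ (WLOG the first). The four cases according to whether $\bar a_1\bar a_2{\bf a}'$ and $\bar b_1\bar b_2{\bf b}'$ lie in $\operatorname{supp}(f)$ proceed by the norm argument of Proposition~\ref{A-has-same-norm-etc}: since $f^{x_i=0}\in\mathscr{A}$ (hence nonzero entries are powers of ${\frak i}$, but here actually $\pm1$ since $f$ is $\{0,1,-1\}$-valued), each relevant entry is $\pm1$; a nonzero sum of two such is $\pm2$ or $0$, and $|h|$ taking value $2$ while another nonzero value has norm $1$ contradicts $h\in\mathscr{A}$; the sub-case where both $h$ and $\bar h$ vanish at ${\bf a}'$ forces $f_{a_1a_2{\bf a}'}=0$ (linear $2\times2$ system, as in (\ref{linear-system-2by2-in-lm4.7})), contradicting ${\bf a}\in\operatorname{supp}(f)$; and the remaining contradictions come from the affine-subspace closure of $\operatorname{supp}(h)$ or $\operatorname{supp}(f^{x_1=0})$ propagating membership to a point known to be outside.

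For the affine-support sub-case I would follow Lemma~\ref{[1,0]-[0,1]-[1,0,-1]-EQ-hat-affine-reduction}'s second half: rule out $\dim\operatorname{supp}(f)\in\{0,1\}$ (dimension $0$ gives $f\in\mathscr{A}$; dimension $1$ with even parity gives, after pinning, $[1,0,f_\alpha]$ with $f_\alpha\in\{1,-1\}$, hence $f_\alpha^4=1$ and $f$ affine, a contradiction — note this uses that $f$ is $\{0,1,-1\}$-valued, so the ``$x^4\neq1$'' escape is unavailable and instead forces the contradiction); rule out $k\geq4$ by Lemma~\ref{[1,0]-[0,1]-pinning-implies-affine-arity-4} once we also know $f^{x_i=1}\in\mathscr{A}$ (obtained as usual, or else we are done by Lemma~\ref{[0,1]-tensor-2-equal-[0,1]} — though here we must be careful, since without $[0,1]^{\otimes2}$ in the statement we should instead construct $[0,1]$ directly when parity permits, or note that the hypothesis chain gives $f^{x_i=1}\in\mathscr A$ from the compressed-signature analysis); and for $k\in\{2,3\}$ shrink $f$ to arity $3$ or $4$ via Lemmas~\ref{rearrange-the-order-of-the-variables}, \ref{rearrange-the-order-of-the-variables-5}, \ref{shrinking-non-free-variables-by-[1,0,1,]} without changing the compressed signature, then use consecutive copies of $[1,0,\epsilon{\frak i}^{m}]$-type gadgets and $[1,0,1,0]$-foldings — exactly the computations producing $[2,0,(1\pm{\frak i}){\frak i}^r]$ — to extract a non-affine binary signature, invoking Lemma~\ref{binary-affine-compressed function} to detect when the compressed signature is already affine (a contradiction). \textbf{The main obstacle} I anticipate is that, because $[0,1]^{\otimes2}$ is not among the given signatures here (only $[1,0,-1]$ is), the step ``reduce via $f^{x_i=1}\notin\mathscr{A}$'' cannot simply cite Lemma~\ref{[0,1]-tensor-2-equal-[0,1]}; I would need to check that in the $\{0,1,-1\}$-valued setting with even parity, either $[0,1]$ itself is constructible from $f$ (e.g.\ by pinning to the minimum-weight support point and collapsing with $(=_2)$, giving $[0,\dots,0,1]$ hence $[0,1]$), or the $f^{x_i=1}$ cases never actually arise because the compressed-signature bookkeeping already forces $f^{x_i=1}\in\mathscr A$ whenever $f^{x_i=0}\in\mathscr A$ and $\operatorname{supp}(f)$ is affine of the relevant dimension — reconciling this with planarity is the delicate point, and it is where I expect the bulk of the real work to lie.
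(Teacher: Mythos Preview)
Your outline for the non-affine-support sub-case is essentially the paper's proof: the same $h,\bar h$ Tableau Calculus via $[1,0,-1]$ and $[1,0,1,0]$, the same four cases on $\bar a_1\bar a_2{\bf a}'$ and $\bar b_1\bar b_2{\bf b}'$, and the same norm argument. That part is fine.

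The affine-support sub-case, however, has a genuine gap, and the paper does \emph{not} resolve it the way you propose. Your plan is to import the second half of Lemma~\ref{[1,0]-[0,1]-[1,0,-1]-EQ-hat-affine-reduction}: rule out $k\ge 4$ via Lemma~\ref{[1,0]-[0,1]-pinning-implies-affine-arity-4} (which needs $f^{x_i=1}\in\mathscr A$), and for $k\in\{2,3\}$ shrink to arity $3$ or $4$ and compute. But that lemma has $[0,1]^{\otimes 2}$ in its hypothesis, and Lemma~\ref{0,1,-1-valued} does not. Your suggested workaround of constructing $[0,1]$ from $f$ cannot work: once $f_{00\cdots 0}=1$ and $f$ has even parity, every gadget built from $\widehat{\mathcal{EQ}}\cup\widehat{\mathcal{F}}\cup\{[1,0,-1]\}$ has even parity, so $[0,1]$ is unobtainable; and pinning only with $[1,0]$ always preserves the nonzero entry at $0\cdots 0$, so you never reach $[0,\dots,0,1]$. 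Your alternative hope that ``$f^{x_i=1}\in\mathscr A$ comes for free'' is not justified and is in fact false in general. So the route through Lemma~\ref{[1,0]-[0,1]-pinning-implies-affine-arity-4} is blocked, and the $k=2,3$ computations of Lemma~\ref{[1,0]-[0,1]-[1,0,-1]-EQ-hat-affine-reduction} also rely on $[0,1]^{\otimes 2}$ (via Lemma~\ref{[0,1]-tensor-2-equal-[0,1]}) at several points.

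The paper's actual argument for affine support is a different idea that exploits the $\{0,1,-1\}$ hypothesis directly and never touches $[0,1]$ or $[0,1]^{\otimes 2}$. On the support, the compressed signature is $\{\pm 1\}$-valued, so $\underline f=(-1)^{Q}$ for a unique multilinear $Q\in\mathbb Z_2[y_1,\dots,y_k]$. Since each $f^{y_j=0}\in\mathscr A$ (only $[1,0]$ is needed), any nonzero monomial of degree $3\le s<k$ in $Q$ would survive in $Q^{y_j=0}$ for some $j$, contradicting affinity; hence $Q=P+y_1\cdots y_k$ with $\deg P\le 2$. Crucially, connecting $[1,0,-1]$ to any free variable $y_i$ multiplies by $(-1)^{y_i}$, so one may add linear terms to $P$ at will. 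The paper then connects all variables except one free variable to a single $[1,0,1,\dots]\in\widehat{\mathcal{EQ}}$, producing a binary $\hat f=[\hat f_{00},0,\hat f_{11}]$. Two cases finish: if every cross term $y_iy_j$ is present in $P$, an explicit evaluation (closed-form via $(1\pm\mathfrak i)^{k-1}$) shows $|\hat f_{00}|\neq|\hat f_{11}|$ with both nonzero (or yields $[0,1]^{\otimes 2}$ for free in one residue class, reducing to Lemma~\ref{[1,0]-[0,1]-[1,0,-1]-EQ-hat-affine-reduction}); if some $y_iy_j$ is absent, an inductive choice of the free linear constants $\epsilon_i$ forces $\hat f_{00}\ge 4$, and comparing $\hat f$ to the affine ``shadow'' $\hat g$ built from $(-1)^P$ (which differs from $\hat f$ only by $\pm 2$ at the $11$ entry) shows $\hat f\notin\mathscr A$. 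This polynomial-and-sum argument is the missing ingredient in your proposal.
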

\begin{proof}
This proof is a bit more involved;
it is also by induction on $n$, and uses the Tableau Calculus.

If there exists $i\in[n]$ such that $f^{x_i=0} \notin\mathscr{A}$,
 then we are done since we have $[1, 0] \in \widehat{\mathcal{EQ}}$.
In the following, we assume that $f^{x_i=0} \in \mathscr{A}$
for $1\leq i\leq n$.
By Lemma~\ref{[0,1]-EQ-hat-wight-0-neq-0}, we may assume that $f_{00\cdots 0}=1$ and $f$ satisfies the even Parity Condition.

Firstly, we claim that if supp$(f)$ is not
an  affine subspace, then we can construct a signature that is not in $\mathscr{A}$ and has arity $\leq n-1$.
Suppose supp$(f)$ is not an  affine subspace,
then it is not a linear subspace.
But supp$(f^{x_i=0})$ is a linear subspace of $\mathbb{Z}_2^{n-1}$
 since $f^{x_i=0} \in \mathscr{A}$  and
$f^{x_i=0}_{0\cdots 0}=1$.
By Lemma~\ref{argue-a1-0-b1-1},
 there exist
 ${\bf a}=a_1a_2\cdots a_n, {\bf b}=b_1b_2\cdots b_n,$
such that  ${\bf a}, {\bf b}\in{\rm supp}(f)$, ${\bf c}={\bf a}\oplus {\bf b}=c_1c_2\cdots c_n\notin{\rm supp}(f)$
and there exists $i\in[n]$ such that $a_i \not = b_i$.
Without loss of generality, we assume that $a_1=0, b_1=1$.
It follows that $c_1=1$.
Let ${\bf a}'=a_3\cdots a_n, {\bf b}'=b_3\cdots b_n, {\bf c}'=c_3\cdots c_n$.

Connecting the first variable of $[1, 0, -1]$ to the first variable of $f$, the gadget gives the signature
\[\bar{f}(x_1, x_2, \ldots, x_n)=\displaystyle\sum_{x'_1\in\{0, 1\}}[1, 0, -1](x'_1, x_1)f(x'_1, x_2, \ldots, x_n).\]
Then
 \[\bar{f}(x_1, x_2, \ldots, x_n)=(-1)^{x_1}f(x_1, x_2, \ldots, x_n).\]
Moreover, by connecting the variables $x_2, x_1$ of $[1, 0, 1, 0]$ to the  variable $x_1, x_2$ of $f$ respectively,
the planar gadget gives the signature
\[h(x', x_3, x_4, \ldots, x_n)=\displaystyle\sum_{x_1, x_2\in\{0, 1\}}[1, 0, 1, 0](x_2, x_1, x')f(x_1, x_2, \ldots, x_n).\]
We have
\[h(x_1\oplus x_2, x_3, \ldots, x_n)=f(x_1, x_2, x_3, \ldots, x_n)+f(\bar{x_1}, \bar{x_2}, x_3, \ldots, x_n).\]
Similarly,  by connecting the variables $x_2, x_1$ of $[1, 0, 1, 0]$ to the variable $x_1, x_2$ of $\bar{f}$ respectively,
the planar gadget gives the signature
 \[\bar{h}(x', x_3, x_4, \ldots, x_n)=\displaystyle\sum_{x_1, x_2\in\{0, 1\}}[1, 0, 1, 0](x_2, x_1, x')\bar{f}(x_1, x_2, \ldots, x_n),\]
and we have
\[\bar{h}(x_1\oplus x_2, x_3, \ldots, x_n)=(-1)^{x_1}f(x_1, x_2, x_3, \ldots, x_n)+(-1)^{ \bar{x}_1}f(\bar{x}_1, \bar{x}_2, x_3, \ldots, x_n).\]
If at least one of $\{h, \bar{h} \}$ is not affine, then we are done since
both  $h$ and $\bar{h}$ have arity $n-1$.
In the following, we assume that both  $h$ and $\bar{h}$ are affine.

Next comes the Tableau Calculus.

\begin{itemize}
\item For $\bar{a}_1\bar{a}_2{\bf a}'\in {\rm supp}(f)$, $\bar{b}_1\bar{b}_2{\bf b}'\in {\rm supp}(f)$,
note that $(0, 0, \ldots, 0)\in{\rm supp}(f)$ and $f^{x_1=0} \in \mathscr{A}$.
Thus
 supp$(f^{x_1=0})$ is a linear
 subspace of $\mathbb{Z}_2^{n-1}$.
 By
 \[
{\begin{array}{llllllllllllllll}
        & a_2{\bf a}'& \in {\rm supp}(f^{x_i=0}) \\
 \oplus & \bar{b}_2{\bf b}' &  \in {\rm supp}(f^{x_i=0})\\
\hline
        & \bar{c}_2{\bf c}' &  \\
\end{array} }
\]
we have $\bar{c}_2{\bf c}'\in{\rm supp}(f^{x_i=0})$.
This means that
$f_{\bar{c}_1\bar{c}_2{\bf c}'}\neq 0$ since $\bar{c}_1=0$.
Note that
\begin{align*}
h_{(c_1\oplus c_2){\bf c}'}
&=\hspace{.12in}f_{c_1c_2\bf{c}'}+f_{\bar{c}_1\bar{c}_2\bf{c}'},\\
\bar{h}_{(c_1\oplus c_2){\bf c}'}
&=-f_{c_1c_2\bf{c}'}+f_{\bar{c}_1\bar{c}_2\bf{c}'}.
\end{align*}
Thus $|h_{(c_1\oplus c_2){\bf c}'}|=|\bar{h}_{(c_1\oplus c_2){\bf c}'}|=1$
since $f_{c_1c_2\bf{c}'}=0$ and $f_{\bar{c}_1\bar{c}_2\bf{c}'}=\pm 1$.
Moreover, by
\begin{equation}\label{system-as2by2-sec4}
\begin{aligned}
h_{(a_1\oplus a_2){\bf a}'}
&=f_{a_1a_2\bf{a}'}+f_{\bar{a}_1\bar{a}_2\bf{a}'},\\
\bar{h}_{(a_1\oplus a_2){\bf a}'}
&=f_{a_1a_2\bf{a}'}-f_{\bar{a}_1\bar{a}_2\bf{c}'},
\end{aligned}
\end{equation}
if both of $h_{(a_1\oplus a_2){\bf a}'}$ and $\bar{h}_{(a_1\oplus a_2){\bf a}'}$ are zero, then
$f_{a_1a_2\bf{a}'}=0$, by treating (\ref{system-as2by2-sec4})
as a linear equation system. This contradicts that ${\bf a}\in{\rm supp}(f)$.
Thus
 we have $|h_{(a_1\oplus a_2){\bf a}'}|\neq 0$ or $|\bar{h}_{(a_1\oplus a_2){\bf a}'}|\neq 0$.
 This implies that one of $|h_{(a_1\oplus a_2){\bf a}'}|$ or $|\bar{h}_{(a_1\oplus a_2){\bf a}'}|$ is 2
 since $f_{a_1a_2\bf{a}'}, f_{\bar{a}_1\bar{a}_2\bf{a}'}\in\{1, -1\}$. So
$h$ or $\bar{h}$ is not affine, because at least one of them
has nonzero values of unequal norms.
This is a contradiction.

\item For $\bar{a}_1\bar{a}_2{\bf a}'\notin {\rm supp}(f)$, $\bar{b}_1\bar{b}_2{\bf b}'\notin {\rm supp}(f)$,
by treating the following as a linear system
\begin{equation}\label{system-as2by2-sec4-another-bullet}
\begin{aligned}
h_{00\cdots 0}
&=f_{000\cdots 0}+f_{110\cdots 0},\\
\bar{h}_{00\cdots 0}
&=f_{000\cdots 0}-f_{110\cdots 0},
\end{aligned}
\end{equation}
we have $h_{00\cdots 0}\neq 0$ or $\bar{h}_{00\cdots 0}\neq 0$
 since $f_{000\cdots 0}\neq 0$.
Without loss of generality, we assume that $h_{00\cdots 0}\neq 0$.
The same argument can be applied to $\bar{h}$ if $\bar{h}_{00\cdots 0}\neq 0$.
Then supp$(h)$ is a linear subspace.
By \[h_{(a_1\oplus a_2){\bf a}'}=f_{a_1 a_2{\bf a}'}+f_{\bar{a}_1 \bar{a}_2{\bf a}'},\]
\[h_{(b_1\oplus b_2){\bf b}'}=f_{b_1 b_2{\bf b}'}+f_{\bar{b}_1 \bar{b}_2{\bf b}'},\]
we have
$|h_{(a_1\oplus a_2){\rm a}'}|=1,$ $|h_{(b_1\oplus b_2){\bf b}'}|=1$
since $f_{\bar{a}_1 \bar{a}_2{\bf a}'}=f_{\bar{b}_1 \bar{b}_2{\bf b}'}=0$
and $f_{a_1 a_2{\bf a}'}, f_{b_1 b_2{\bf b}'}\in\{1, -1\}$.
This implies that $(a_1\oplus a_2){\bf a}', (b_1\oplus b_2){\bf b}'\in{\rm supp}(h)$.
Then by
\[
{\begin{array}{llllllllllllllll}
        & (a_1\oplus a_2){\bf a}' & \in{\rm supp}(h) \\
 \oplus & (b_1\oplus b_2){\bf b}' & \in{\rm supp}(h) \\
\hline
        & (c_1\oplus c_2){\bf c}' & \\
\end{array} }
\]
we have $(c_1\oplus c_2){\bf c}' \in{\rm supp}(h)$.
This implies that
\[h_{(c_1\oplus c_2){\bf c}'}=f_{c_1 c_2{\bf c}'}+f_{\bar{c}_1 \bar{c}_2{\bf c}'}\neq 0.\]
Thus $f_{\bar{c}_1 \bar{c}_2{\bf c}'}\neq 0$ since $f_{(c_1 c_2){\bf c}'}=0$.
So $\bar{c}_1 \bar{c}_2{\bf c}'\in{\rm supp}(f)$.
Therefore, $\bar{c}_2{\bf c}'\in{\rm supp}(f^{x_1=0})$ since $\bar{c}_1=0$.
By
\[
{\begin{array}{llllllllllllllll}
        & a_2{\bf a}' & \in{\rm supp}(f^{x_1=0}) \\
 \oplus & \bar{c}_2{\bf c}' & \in{\rm supp}(f^{x_1=0}) \\
\hline
        & \bar{b}_2{\bf b}' &  \\
\end{array} }
\]
we have $\bar{b}_2{\bf b}'\in{\rm supp}(f^{x_1=0})$.
Thus $\bar{b}_1\bar{b}_2{\bf b}'\in{\rm supp}(f)$ as $\bar{b}_1=0$.
This is a contradiction.

\item If $\bar{a}_1\bar{a}_2{\bf a}'\in {\rm supp}(f)$,
$\bar{b}_1\bar{b}_2{\bf b}'\notin {\rm supp}(f)$, by
\begin{align*}
h_{(b_1\oplus b_2){\bf b}'}
&=\hspace{.12in}f_{b_1 b_2{\bf b}'}+f_{\bar{b}_1 \bar{b}_2{\bf b}'},\\
\bar{h}_{(b_1\oplus b_2){\bf b}'}
&=-f_{b_1 b_2{\bf b}'}+f_{\bar{b}_1 \bar{b}_2{\bf b}'},
\end{align*}
 we have
  $|h_{(b_1\oplus b_2){\rm b}'}|=|\bar{h}_{(b_1\oplus b_2){\bf b}'}|=1$ since
 $f_{b_1 b_2{\bf b}'}=\pm 1$ and $f_{\bar{b}_1 \bar{b}_2{\bf b}'}=0$.
 Then by
\begin{align*}
h_{(a_1\oplus a_2){\bf a}'}
&=f_{a_1 a_2{\bf a}'}+f_{\bar{a}_1 \bar{a}_2{\bf a}'},\\
 \bar{h}_{(a_1\oplus a_2){\bf a}'}
&=f_{a_1 a_2{\bf a}'}-f_{\bar{a}_1 \bar{a}_2{\bf a}'},
\end{align*}
  and $f_{a_1 a_2{\bf a}'}, f_{\bar{a}_1 \bar{a}_2{\bf a}'}\in\{1, -1\}$,
 we have $|h_{(a_1\oplus a_2){\bf a'}}|=2$ or $|\bar{h}_{(a_1\oplus a_2){\bf a'}}|=2$.
 This implies that $h$ or $\bar{h}$ is not affine.
 This is a contradiction.

 \item If $\bar{a}_1\bar{a}_2{\bf a}'\not \in {\rm supp}(f)$, $\bar{b}_1\bar{b}_2{\bf b}'\in {\rm supp}(f)$,
 the proof is symmetric by reversing the order of ${\bf a}$ and ${\bf b}$ in the previous item.
 \end{itemize}

 Now we can assume that supp$(f)$ is affine with dimension $k$.
 Let $X=\{y_1, y_2, \ldots, y_k\}$ be a set of free variables,
 where $X\subseteq\{x_1, x_2, \ldots, x_n\}$, and
let $\underline{f}$ be the compressed signature of $f$ for $X$.
 If $k\leq 2$, then $\underline{f}$ is affine
 by Lemma~\ref{binary-affine-compressed function}. So $f$ is affine
by Corollary~\ref{f-affine-iff-f*-affine}. This is a contradiction.

 In the following, we assume that $k\geq 3$.
By Lemma~\ref{multilinear-polynomial-unique-affine-definition},
and since  $\underline{f}$ takes   values in  $\{1, -1\}$,
 there exists a unique multilinear polynomial $Q(y_1, y_2, \ldots, y_k)\in\mathbb{Z}_2[X]$
 such that $\underline{f}(y_1, y_2, \ldots, y_k)=(-1)^{Q(y_1, y_2, \ldots, y_k)}$.
 $\underline{f}$ is affine iff $Q(y_1, y_2, \ldots, y_k)$ is a quadratic multilinear polynomial.

 If $k \ge 4$ and there exists a  term $y_{i_1}y_{i_2}\cdots y_{i_s}$
with nonzero coefficient, where $3\leq s< k$, then there exists some $y_j
\not \in \{y_{i_1}, y_{i_2}, \ldots, y_{i_s}\}$,
such that
\[\underline{f}^{y_{j}=0}(y_{1},  \ldots, \widehat{y_{j}},  \ldots, y_{k})
=(-1)^{Q'(y_{1},\ldots, \widehat{y_j},  \ldots, y_{k})},\]
where $Q'$ is a polynomial on $k-1$ variables, where the term
$y_{i_1}y_{i_2}\cdots y_{i_s}$ of degree $s >2$ still appears.
 This implies that
 $f^{y_j=0}$
  is not affine. This is a contradiction.
 Thus we may assume that $Q(y_1, y_2, \ldots, y_k)=P(y_1, y_2, \ldots, y_k)+ay_1y_2\cdots y_k$,
 where $P(y_1, y_2, \ldots, y_k)\in\mathbb{Z}_2[X]$ is a
multilinear polynomial of total degree at most 2, and $a\in\mathbb{Z}_2$.
Note that this statement is also vacuously true if $k=3$.
 If $a=0$, then $f$ is affine. This is a contradiction.
 Otherwise, $Q(y_1, y_2, \ldots, y_k)=P(y_1, y_2, \ldots, y_k)+y_1y_2\cdots y_k$.
 Moreover, 
 by connecting
 the first variable of $[1, 0, -1]$ to $y_i$ of $f$, the gadget gives the signature
 $f'$ such that \[f'(x_1, x_2, \ldots, x_n)=(-1)^{y_i}f(x_1, x_2, \ldots, x_n).\]
 This implies that $f'$ has the same support of $f$
and
 \[\underline{f'}(y_1, y_2, \ldots, y_k)=(-1)^{y_i+P(y_1, y_2, \ldots, y_k)+y_1y_2\cdots y_k},\]
 where $\underline{f'}$ is the compressed signature of $f'$ for $X$.
Thus $f' \not \in \mathscr{A}$.
 This implies that we can add a linear term to $P(y_1, y_2, \ldots, y_k)$ freely.

 In the following, we connect all variables of $f$ except for $y_1$ to $n-1$ variables of
 $\frac{1}{2}\{[1, 1]^{\otimes n}+[1, -1]^{\otimes n}\}=[1, 0, 1, \ldots,
0 ~(\operatorname{or} 1)]\in\widehat{\mathcal{EQ}}$ to get the binary signature $\hat{f}=[\hat{f}_{00}, 0, \hat{f}_{11}]$.
 Note that
 \begin{equation}\label{1-0--1-reduction-1}
\begin{aligned}
 \hat{f}_{00}
&=&\displaystyle\sum_{y_{2}, y_{3}, \ldots, y_k\in\{0, 1\}}\underline{f}^{y_1=0}(y_{2}, y_{3}, \ldots, y_k),\\
 \hat{f}_{11}
&=&\displaystyle\sum_{y_{2}, y_{3}, \ldots, y_k\in\{0, 1\}}\underline{f}^{y_1=1}(y_{2}, y_{3}, \ldots, y_k).
\end{aligned}
 \end{equation}

 Firstly, we consider the special case that  the coefficient of $y_iy_j$ in  $P(y_1, y_2, \ldots, y_k)$ is nonzero
 for all $1\leq i<j\leq k$.
  \begin{itemize}
\item  If $k=3$, we may assume that $P(y_1, y_2, y_3)
=y_1+y_2+y_3+y_1y_2+y_1y_3+y_2y_3$
since we can add linear terms to $P(y_1, y_2, \ldots, y_k)$ at will.
Then we have
$\underline{f}(y_1, y_2, y_3)=(-1)^{P(y_1, y_2, y_3)+y_1y_2y_3}$.
The polynomial $P(y_1, y_2, y_3)+y_1y_2y_3 = 1 + (1+y_1)(1+y_2)(1+y_3) \in
\mathbb{Z}_2[y_1, y_2, y_3]$
corresponds to the {\sc Or} function on 3 bits, $y_1 \vee y_2 \vee y_3$.
Thus
 $M_{y_1, y_2y_3}(\underline{f})=\left[\begin{smallmatrix}
~1 & -1 & -1 & -1\\
-1 & -1 & -1 & -1\\
\end{smallmatrix}\right]$. Thus $\hat{f}=[-2, 0, -4]$, which
has nonzero terms of unequal norms, thus  not in $\mathscr{A}$ and we are done.

\item For $k\geq 4$,
we may assume that
  $P(y_1, y_2, \ldots, y_k)$ has no linear terms since we can add linear terms to $P(y_1, y_2, \ldots, y_k)$ freely.
  Since $P$ has all terms $y_iy_j$,
 both $\underline{f}^{y_1=0}$ and $\underline{f}^{y_1=1}$ are symmetric signatures.
  For $\underline{f}^{y_1=0}$,
the entry of Hamming weight $\ell$ is $(\underline{f}^{y_1=0})_{\ell}
=(-1)^{\frac{\ell(\ell-1)}{2}}$ for $0\leq \ell\leq k-1$.
  For $\underline{f}^{y_1=1}$,
we have $(\underline{f}^{y_1=1})_{\ell}=(-1)^{\frac{\ell(\ell+1)}{2}}$ for $0\leq \ell\leq k-2$
  and $(\underline{f}^{y_1=1})_{k-1}=(-1)^{\frac{k(k-1)}{2}+1}$.
  This implies that
\begin{eqnarray*}
 \underline{f}^{y_1=0}
&=&[1, 1, -1, -1, \ldots, (-1)^{\frac{(k-1)(k-2)}{2}}]\\
&=&\frac{1}{1+{\frak i}}\left\{[1, {\frak i}]^{\otimes k-1}
 +{\frak i}[1, -{\frak i}]^{\otimes k-1}\right\},\\
\underline{f}^{y_1=1}
&=&[1, -1, -1, 1, \ldots, (-1)^{\frac{k(k-1)}{2}}]-2(-1)^{\frac{k(k-1)}{2}}[0, 1]^{\otimes k-1}\\
&=&\frac{1}{1-{\frak i}}\left\{[1, {\frak i}]^{\otimes k-1}-{\frak i}[1, -{\frak i}]^{\otimes k-1}\right\}-2(-1)^{\frac{k(k-1)}{2}}[0, 1]^{\otimes k-1}.
\end{eqnarray*}
Thus
\begin{eqnarray*}
\hat{f}_{00}
&=&\displaystyle\sum_{\beta\in\{0, 1\}^{k-1}}(\underline{f}^{y_1=0})_{\beta}\\
&=&\frac{1}{1+{\frak i}}\displaystyle\sum_{w=0}^{k-1}{k-1 \choose w}[{\frak i}^w+{\frak i}(-{\frak i})^w]\\
&=&\frac{1}{1+{\frak i}}[(1+{\frak i})^{k-1}+{\frak i}(1-{\frak i})^{k-1}]\\
&=&(1+{\frak i})^{k-2}+(1-{\frak i})^{k-2}\\
&=&2^{\frac{k}{2}}\cos((k-2)\pi/{4}),\\
\hat{f}_{11}
&=&\displaystyle\sum_{\beta\in\{0, 1\}^{k-1}}(\underline{f}^{y_1=1})_{\beta}\\
&=&\frac{1}{1-{\frak i}}\displaystyle\sum_{w=0}^{k-1}{k-1 \choose w}[{\frak i}^w-{\frak i}(-{\frak i})^w]-2(-1)^{\frac{k(k-1)}{2}}\\
&=&\frac{1}{1-{\frak i}}[(1+{\frak i})^{k-1}-{\frak i}(1-{\frak i})^{k-1}]-2(-1)^{\frac{k(k-1)}{2}}\\
&=&-2^{\frac{k}{2}}\sin((k-2)\pi/{4})-2(-1)^{\frac{k(k-1)}{2}}.
\end{eqnarray*}

For $k\equiv 1\mod 2$,
$|\hat{f}_{00}|=2^{\frac{k-1}{2}}$, and
 $|\hat{f}_{11}|=2^{\frac{k-1}{2}}\pm 2$
(since $k\geq 4$), we have $\hat{f}_{11}\hat{f}_{00}\neq 0$ and
$|\hat{f}_{11}|\neq |\hat{f}_{00}|$.
Thus $\hat{f}\notin\mathscr{A}$ and we are done.

For $k\equiv 0\mod 4$, $\hat{f}_{00}=0$, $|\hat{f}_{11}|=2^{\frac{k}{2}}\pm 2\neq 0$.
This implies that we have $\hat{f}=\hat{f}_{11}[0, 1]^{\otimes 2}$.
By $[1, 0, -1]$, $[0, 1]^{\otimes 2}$ and $f\notin\mathscr{A}$,
we can get a binary signature that is not in $\mathscr{A}$
 by Lemma~\ref{[1,0]-[0,1]-[1,0,-1]-EQ-hat-affine-reduction}. Thus we are done.

For $k\equiv 2\mod 4$, $|\hat{f}_{00}|=2^{\frac{k}{2}}\geq 4$ since $k\geq 4$,
and  $|\hat{f}_{11}|=2$,
so $\hat{f}\notin\mathscr{A}$ and we are done.

\end{itemize}

\vspace{.1in}

 Now we assume that there exist $i\neq j\in[k]$ such that the coefficient of $y_iy_j$ is 0 in $P(y_1, y_2, \ldots, y_k)$.
 For notational simplicity, without loss of generality we assume that $i=k-1, j=k$.
 Then we can assume that (with the linear term $y_{k-1}$ and $y_k$ removed
if needed)
 \[P(y_1, y_2, \ldots, y_k)=y_1(L_1+\epsilon_1)+y_2(L_2+\epsilon_2)+\ldots +y_{k-2}(L_{k-2}+\varepsilon_{k-2}),\]
 where
 \[L_1=\displaystyle\sum_{i=2}^ka_{1i}y_1,~~~~L_2=\displaystyle\sum_{i=3}^ka_{2i}y_i,~~~~\ldots,~~~~L_{k-2}=\displaystyle\sum_{i=k-1}^ka_{(k-2)i}y_i\]
 and $a_{ji}\in\mathbb{Z}_2$ are fixed, but we can choose  $\epsilon_i\in\mathbb{Z}_2$ freely since we can add linear terms freely.
 Let $F_{(0)}=\underline{f}$, $F_{(i)}=\underline{f}^{y_1=0, y_2=0, \ldots, y_i=0}$ for $i\in[k-2]$.
 We claim that there exist $\epsilon_1, \epsilon_2, \ldots, \epsilon_{k-2}
\in \mathbb{Z}_2$
 such that
 \[\displaystyle\sum_{y_{i+1}, y_{i+2}, \ldots, y_k\in\{0, 1\}}F_{(i)}(y_{i+1}, y_{i+2}, \ldots, y_k)\geq 4,\]
 for all $1\leq i\leq k-2$. We prove this claim by induction.
 The base case is for $F_{(k-2)}$. Note that
 $P(0, \ldots, 0, y_{k-1}, y_k)$ is identically 0. Thus
  \[\displaystyle\sum_{y_{k-1}, y_k\in\{0, 1\}}F_{(k-2)}(y_{k-1}, y_k)=4.\]
 By induction, we may assume that \[\displaystyle\sum_{y_{i+1}, y_{i+2}, \ldots, y_k\in\{0, 1\}}F_{(i)}(y_{i+1}, y_{i+2}, \ldots, y_k)\geq 4\]
 and prove that
 \[\displaystyle\sum_{y_{i}, y_{i+1}, \ldots, y_k\in\{0, 1\}}F_{(i-1)}(y_{i}, y_{i+1}, \ldots, y_k)\geq 4\]
for $i\leq k-2$.
 Note that
 \begin{equation}\label{affine-reduction-0-1--1-pin-0}
 \begin{split}
 &F_{(i-1)}(1, y_{i+1}, y_{i+2}, \ldots, y_k)=
 (-1)^{(L_i+\epsilon_{i})+y_{i+1}(L_{i+1}+\epsilon_{i+1})
 +\cdots +y_{k-2}(L_{k-2}+\epsilon_{k-2})},\\
 &F_{(i-1)}(0, y_{i+1}, y_{i+2}, \ldots, y_k)=
 (-1)^{y_{i+1}(L_{i+1}+\epsilon_{i+1})
 +\cdots +y_{k-2}(L_{k-2}+\epsilon_{k-2})}.
 \end{split}
 \end{equation}
By inductive hypothesis,
 \[\displaystyle\sum_{y_{i+1}, \ldots, y_k\in\{0, 1\}}F_{(i-1)}(0, y_{i+1}, \ldots, y_k)
 =
 \displaystyle\sum_{y_{i+1}, \ldots, y_k\in\{0, 1\}}F_{(i)}(y_{i+1}, \ldots, y_k)\geq 4.\]
 If $L_i$ is identically 0, then we set $\epsilon_{i}=0$.
 It follows that
 \[\displaystyle\sum_{y_{i+1}, \ldots, y_k\in\{0, 1\}}F_{(i-1)}(1, y_{i+1}, \ldots, y_k)=
 \displaystyle\sum_{y_{i+1}, \ldots, y_k\in\{0, 1\}}F_{(i-1)}(0, y_{i+1}, \ldots, y_k).\]
 Thus we have
 \[\displaystyle\sum_{y_{i}, y_{i+1}, \ldots, y_k\in\{0, 1\}}F_{(i-1)}(y_{i}, y_{i+1}, \ldots, y_k)=
 2\displaystyle\sum_{y_{i+1}, \ldots, y_k\in\{0, 1\}}F_{(i-1)}(0, y_{i+1}, \ldots, y_k)\geq 8\]
 and we are done.

 Otherwise, $L_i=0$ defines a subspace $V$ of $\mathbb{Z}_2^{k-i}$ which has dimension
 $k-i-1$. Let $V'=\mathbb{Z}_2^{k-i}\setminus V$, then $V'$ is a affine space defined by $L_i=1$
 which has dimension $k-i-1$.

 Let
 \[a=\displaystyle\sum_{y_{i+1} y_{i+2} \cdots y_k\in V}F_{(i-1)}(0, y_{i+1}, y_{i+2}, \ldots, y_k),\]
 \[b=\displaystyle\sum_{y_{i+1} y_{i+2} \cdots y_k\in V'}F_{(i-1)}(0, y_{i+1}, y_{i+2}, \ldots, y_k),\]
 \[a'=\displaystyle\sum_{y_{i+1} y_{i+2} \cdots y_k\in V}F_{(i-1)}(1, y_{i+1}, y_{i+2}, \ldots, y_k),\]
 and
 \[b'=\displaystyle\sum_{y_{i+1} y_{i+2} \cdots y_k\in V'}F_{(i-1)}(1, y_{i+1}, y_{i+2}, \ldots, y_k).\]
 Then
 \[\displaystyle\sum_{y_{i}, y_{i+1}, \ldots, y_k\in\{0, 1\}}F_{(i-1)}(y_{i}, y_{i+1}, \ldots, y_k)=a+b+a'+b'.\]
  By induction, we have $a+b\geq 4$. Thus $a\geq 2$ or $b\geq 2$.
 If $a\geq 2$, we choose $\epsilon_i=0$, then $a=a'$, $b=-b'$ by
 (\ref{affine-reduction-0-1--1-pin-0}).
 Thus $a+a'+b+b'=2a\geq 4$.
 If $b\geq 2$, we choose $\epsilon_i=1$, then $a=-a'$, $b=b'$
 by (\ref{affine-reduction-0-1--1-pin-0}).
 Thus $a+a'+b+b'=2b\geq 4$.
 This finishes the proof of the claim.

 The claim shows that
 \[\hat{f}_{00}=\displaystyle\sum_{y_{2}, y_{3}, \ldots, y_k\in\{0, 1\}}\underline{f}^{y_1=0}(y_{2}, y_{3}, \ldots, y_k)
 =\displaystyle\sum_{y_{2}, y_{3}, \ldots, y_k\in\{0, 1\}}F_{(1)}(y_{2}, y_{3}, \ldots, y_k)\geq 4.\]
Let $g$ be the $n$-ary signature with the same support as $f$ (thus
satisfies
the even Parity Condition) and on its support
\[g(x_1, x_2, \ldots, x_n)=(-1)^{P(y_1, y_2, \ldots, y_k)},\]
then $\underline{f}_{\beta}=\underline{g}_{\beta}$ for any $\beta\in\{0, 1\}^{k}$ other than $\beta=11\cdots 1$.
For $\beta=11\cdots 1$, \[\underline{f}_{\beta}=(-1)^{P(1, 1, \ldots, 1)+1\cdot 1\cdots 1}=-(-1)^{P(1, 1, \ldots, 1)}=-\underline{g}_{\beta}.\]
This implies that
\begin{equation}\label{1-0--1-reduction-3}
\underline{f}=\underline{g}\pm 2[0, 1]^{\otimes k}.
\end{equation}
(Note that we do not really construct $g$. We just use $g$ to argue that $\hat{f}\notin\mathscr{A}$.)
Since both $\frac{1}{2}\{[1, 1]^{\otimes n}+[1, -1]^{\otimes n}\}$ and $g$ are affine signatures,
the following construction would produce an affine signature:
Connect all variables of $g$ other than $y_1$ to $n-1$ variables of
 $\frac{1}{2}\{[1, 1]^{\otimes n}+[1, -1]^{\otimes n}\}=[1, 0, 1, \ldots, 0 ~(\operatorname{or} 1)]$. This construction  gives a binary signature
  $\hat{g}=[\hat{g}_{00}, 0, \hat{g}_{11}]$.
(By the even Parity Condition, the weight 1 entry must be 0.)
 Note that \[\hat{g}_{00}=\displaystyle\sum_{y_{2}, y_{3}, \ldots, y_k\in\{0, 1\}}\underline{g}^{y_1=0}(y_{2}, y_{3}, \ldots, y_k),\]
 \[\hat{g}_{11}=\displaystyle\sum_{y_{2}, y_{3}, \ldots, y_k\in\{0, 1\}}\underline{g}^{y_1=1}(y_{2}, y_{3}, \ldots, y_k).\]
 Thus by (\ref{1-0--1-reduction-1}) and (\ref{1-0--1-reduction-3}), we have
 \[\hat{f}_{00}=\hat{g}_{00},~~~~ \hat{f}_{11}=\hat{g}_{11}\pm 2.\]
  Since $\hat{g}$ is an affine signature, we must have
  either $\hat{g}_{00} =0$ or $\hat{g}_{11}=0$ or $\hat{g}^4_{00}=\hat{g}^4_{11}$.
  Since we have $\hat{g}_{00}=\hat{f}_{00}\neq 0$ and both $\hat{g}_{00}$ and $\hat{g}_{11}$
are real numbers,
  we must have $\hat{g}_{11}=0$ or $\hat{g}_{11}=\pm \hat{g}_{00}$.
Recall that $\hat{f}_{00} \ge 4$.
If  $\hat{g}_{11}=0$ then $\hat{f}_{11} = \pm 2$ has
a different nonzero norm than $\hat{f}_{00}$.
If $\hat{g}_{11}=\pm \hat{g}_{00}$, then $\hat{g}_{11}=\pm \hat{f}_{00}$
has norm at least 4, and thus
 $\hat{f}_{11} = \hat{g}_{11} \pm 2$ has norm $|\hat{g}_{11}| \pm 2
= |\hat{f}_{00}|  \pm 2$.
And so in this case $\hat{f}_{11}$ also  has a different
nonzero norm than $\hat{f}_{00}$.
In each case, $|\hat{f}_{00}|\neq |\hat{f}_{11}|$ and $\hat{f}_{00}\hat{f}_{11}\neq 0$.
This implies that $\hat{f}\notin\mathscr{A}$ and we are done.
\end{proof}

Now we give the main dichotomy theorem of this section.
By (\ref{eqn:prelim:PlCSPd_equiv_Holant}), we have
\begin{equation*}
 \PlCSP^2(\widehat{\mathcal{EQ}}, \widehat{\mathcal{F}})
\equiv_T \PlHolant(\mathcal{EQ}_2, \widehat{\mathcal{EQ}}, \widehat{\mathcal{F}}).
\end{equation*}
 We will prove a dichotomy theorem for $\PlCSP^2(\widehat{\mathcal{EQ}}, \widehat{\mathcal{F}})$
when every signature in $\widehat{\mathcal{F}}$ satisfies the Parity
 Condition.

\begin{theorem}\label{dichotomy-csp-2}
If all signatures in $\widehat{\mathcal{F}}$ satisfy
the Parity  Condition, then
$\PlHolant(\mathcal{EQ}_2, \widehat{\mathcal{EQ}}, \widehat{\mathcal{F}})$
(equivalently
$\PlCSP^2(\widehat{\mathcal{EQ}}, \widehat{\mathcal{F}})$)
is \#$\operatorname{P}$-hard, or $\widehat{\mathcal{F}}\subseteq\mathscr{A}$,
in which case the problem is in P.
\end{theorem}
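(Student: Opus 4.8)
The plan is to prove the dichotomy by reducing everything to the symmetric-signature case, Theorem~\ref{heng-tyson-dichotomy-pl-csp2}. First I would dispose of the tractable side: if $\widehat{\mathcal{F}}\subseteq\mathscr{A}$, then since $\mathcal{EQ}_2,\widehat{\mathcal{EQ}}\subseteq\mathscr{A}$ as well, the problem is computable in polynomial time by Theorem~\ref{non-planar-csp-dichotomy} (restated for $\Holant$). So assume $\widehat{\mathcal{F}}\not\subseteq\mathscr{A}$; I must show $\SHARPP$-hardness. By Lemma~\ref{[0,1]-EQ-hat-wight-0-neq-0} I may normalize so that some $f\in\widehat{\mathcal{F}}\setminus\mathscr{A}$ has $f_{00\cdots0}=1$ and satisfies the even Parity Condition; since $\widehat{\mathcal{EQ}}$ and $\mathcal{EQ}_2$ all satisfy the even Parity Condition, every signature constructible in $\PlHolant(\mathcal{EQ}_2,\widehat{\mathcal{EQ}},\widehat{\mathcal{F}})$ then has even parity. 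In particular any constructible binary non-affine signature is automatically of the form $[1,0,x]$ with $x^4\neq0,1$ (using Proposition~\ref{A-has-same-norm-etc}), hence symmetric — this is the key payoff of working in the even-parity world.

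The core of the argument is an arity-reduction chain producing such a $[1,0,x]$. I would split into three cases according to the values taken by $\widehat{\mathcal{F}}$, mirroring the case analysis laid out in the proof outline. Case (1): not every $g\in\widehat{\mathcal{F}}$ is $\{0,1,-1\}$-valued up to a scalar. Then I can first obtain $[0,1]^{\otimes2}$ and $[1,0,-1]$ (via a signature with a nonzero entry of norm $\neq1$, interpolating with the gadget of Figure~\ref{fig:heng} and Lemma~\ref{constructing-[1,0,x]-heng}), then apply Lemma~\ref{[1,0]-[0,1]-[1,0,-1]-EQ-hat-affine-reduction} to the non-affine $f$ to extract some $[1,0,x]\notin\mathscr{A}$; then Lemma~\ref{constructing-[1,0,x]-heng} gives $[1,0,z]$ for all $z$ and $[0,1]^{\otimes2}$. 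Case (2): every $g$ is $\{0,1,-1\}$-valued but not every $g$ is $\{0,1\}$-valued up to a scalar. Then some $g$ has a $-1$ entry; by normalizing I can build $[1,0,-1]$ directly (pin to get a weight-2 entry and combine with $[1,0]$ via even-parity gadgets), then use the arity reduction of Lemma~\ref{0,1,-1-valued} on $f$ to get a non-affine signature of arity $<n$, descending to a binary $[1,0,x]\notin\mathscr{A}$; again feed into Lemma~\ref{constructing-[1,0,x]-heng}. Case (3): every $g\in\widehat{\mathcal{F}}$ is $\{0,1\}$-valued up to a scalar. Here I use Lemma~\ref{0,1-valued} to reduce the arity of $f$ (using only $\widehat{\mathcal{EQ}}$), again landing on a binary $[1,0,x]\notin\mathscr{A}$, and since all constructible nonzero entries are real (in fact integers after normalization), $|x|\neq1$, so interpolation through $[1,0,x^k]$ yields $[1,0,z]$ for all $z$ and $[0,1]^{\otimes2}$.

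Once I have $[1,0,z]$ for all $z\in\mathbb{C}$ (in particular some $[1,0,z]$ with $z^4\neq0,1$, which is symmetric and not in any of $\mathscr{P},\mathscr{A},\mathscr{A}^{\dagger},\widehat{\mathscr{M}},\widehat{\mathscr{M}^{\dagger}}$ — here I invoke Proposition~\ref{matchgate:affine:hat} and Corollary~\ref{[1,0,1,0]:not:prod} to check it escapes all five classes) together with $[1,0,1,0]\in\widehat{\mathcal{EQ}}$ (which is not in $\mathscr{P}\cup\mathscr{A}^{\dagger}$) and $[1,0]\in\widehat{\mathcal{EQ}}$ (which is not in $\widehat{\mathscr{M}}\cup\widehat{\mathscr{M}^{\dagger}}$), the symmetric signature set $\{[1,0],[1,0,1,0],[1,0,z]\}\cup(\text{relevant finite subset})$ fails to be contained in any single one of the five tractable classes of Theorem~\ref{heng-tyson-dichotomy-pl-csp2}. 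Applying that theorem gives $\SHARPP$-hardness of $\PlCSP^2$ of this symmetric set, hence of $\PlHolant(\mathcal{EQ}_2,\widehat{\mathcal{EQ}},\widehat{\mathcal{F}})$ by the reduction, completing the proof.

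I expect the main obstacle to be the arity-reduction lemmas themselves in the $\{0,1\}$- and $\{0,1,-1\}$-valued cases — Lemmas~\ref{0,1-valued} and \ref{0,1,-1-valued} — where the Tableau Calculus must be run carefully: the delicate point is that pinning $f$ at a coordinate may leave an affine signature at every coordinate, so one must instead use the $[1,0,1,0]$-folding gadget (producing $h(x_1\oplus x_2,\dots)=f(x_1,x_2,\dots)+f(\bar x_1,\bar x_2,\dots)$ and its $[1,0,-1]$-twisted variant $\bar h$) and argue via norm comparisons of entries that are powers of $\mathfrak{i}$ that $h$ or $\bar h$ cannot be affine unless $\operatorname{supp}(f)$ is already affine and the compressed signature is quadratic, contradicting $f\notin\mathscr{A}$. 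The residual dimension-$2$ and dimension-$3$ compressed-support subcases (handled by Lemmas~\ref{rearrange-the-order-of-the-variables}, \ref{rearrange-the-order-of-the-variables-5}, \ref{shrinking-non-free-variables-by-[1,0,1,]}) require the bookkeeping of making a free-variable set cyclically consecutive so that the $[1,0,1,0]$-shrinking can be done planarly; this is where most of the care goes, but it is all contained in the lemmas already proved above.
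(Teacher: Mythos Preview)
Your approach is essentially the paper's: normalize, split on the value-type of the non-affine $f$, use the appropriate arity-reduction lemma to reach a binary $[1,0,x]\notin\mathscr{A}$, then finish with Theorem~\ref{heng-tyson-dichotomy-pl-csp2} on the symmetric set $\{[1,0],[1,0,1,0],[1,0,x]\}$. Two points need fixing.

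First, your Case~(1) justification ``via a signature with a nonzero entry of norm $\neq 1$'' is incomplete: if every signature takes values in $\{0,\pm1,\pm\mathfrak{i}\}$ up to scalar but not all in $\{0,\pm1\}$, you are in Case~(1) yet every nonzero entry has norm $1$. The paper splits this off as a separate subcase: one still extracts $[1,0,\pm\mathfrak{i}]$ via the key $\mathcal{EQ}_2$-construction $\partial_{(1,\ldots,f_\alpha)}(=_{\operatorname{wt}(\alpha)+2})=[1,0,f_\alpha]$ (which you never state explicitly but which is the one essential use of $\mathcal{EQ}_2$ beyond $\widehat{\mathcal{EQ}}$), and then the second clause of Lemma~\ref{constructing-[1,0,x]-heng} gives $[1,0,-1]$ and $[0,1]^{\otimes 2}$.

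Second, the inductions in Cases~(2) and~(3) are not self-contained: one step of Lemma~\ref{0,1-valued} or~\ref{0,1,-1-valued} involves a sum $f(x_1,x_2,\ldots)+f(\bar x_1,\bar x_2,\ldots)$, so the output can have an entry like $2$ and leave the $\{0,1\}$- or $\{0,1,-1\}$-valued regime. The paper handles this by inducting on the whole theorem, switching back to the ``non-$\{0,1,-1\}$'' case when that happens; your ``descending to binary'' description hides this case-switching. (Also, a slip: $[1,0,z]\in\mathscr{P}$ always, since $[1,0,z]=[1,0,1](x_1,x_2)\cdot[1,z](x_1)$; but this is harmless because your argument already uses $[1,0,1,0]\notin\mathscr{P}$ to rule out that class.)
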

\begin{proof}
If $\widehat{\mathcal{F}} \subseteq\mathscr{A}$, then
$\PlHolant(\mathcal{EQ}_2, \widehat{\mathcal{EQ}}, \widehat{\mathcal{F}})$
 is tractable in P by Theorem~\ref{non-planar-csp-dichotomy}', since
$\widehat{\mathcal{EQ}} \subset \mathscr{A}$ as well.

If $\widehat{\mathcal{F}}\nsubseteq\mathscr{A}$, then there exists an $n$-ary signature $f\in\widehat{\mathcal{F}} \setminus \mathscr{A}$.
By Lemma~\ref{[0,1]-EQ-hat-wight-0-neq-0}, we can assume that $f_{00\cdots 0}=1$ and $f$ satisfies the even Parity Condition.
If $f$ is a unary signature then $f = [1,0] \in \mathscr{A}$,
a contradiction.
If $f$ has arity 2, then $f$ must be symmetric, and has the form
$f=[1, 0, x]$, where $x^4\neq 0, 1$, by Proposition~\ref{A-has-same-norm-etc}.
Note that $\widehat{\mathcal{EQ}}$ contains the
symmetric signatures $[1, 0]$ and $[1, 0, 1, 0]$,
and
 $[1, 0]\notin\widehat{\mathscr{M}} \cup \widehat{\mathscr{M}}^{\dagger}$,
$[1, 0, 1, 0]\notin\mathscr{P} \cup \mathscr{A}^{\dagger}$.
We also have $f = [1, 0, x]\notin\mathscr{A}$.
By Theorem~\ref{heng-tyson-dichotomy-pl-csp2},
$\PlHolant(\mathcal{EQ}_2, [1, 0], [1, 0, 1, 0], [1, 0, x])$ is \#P-hard.
Thus $\PlHolant(\mathcal{EQ}_2, \widehat{\mathcal{EQ}}, \widehat{\mathcal{F}})$ is \#P-hard
since
\[\operatorname{Pl-Holant}(\mathcal{EQ}_2, [1, 0], [1, 0, x], [1, 0, 1, 0])
\le_{\rm T}
\operatorname{Pl-Holant}(\mathcal{EQ}_2, \widehat{\mathcal{EQ}}, \widehat{\mathcal{F}}).\]
So in the following, we assume that the arity of $f$ is $n\geq 3$.

\begin{description}
\item{(A)}
 If there exists $\alpha\in\{0, 1\}^n$ such that $f_{\alpha}^4\neq 0, 1$, then we can get $[1, 0, f_{\alpha}]$ in the following way:
firstly, using $[1, 0] \in \widehat{\mathcal{EQ}}$,
 we can get $\partial_{[1, 0]}^S(f)=(1, \ldots, f_{\alpha})$, where $S=\{k
\mid $the $k$-th bit of $\alpha$ is $0\}$.
 Note that the arity of $(1, \ldots, f_{\alpha})$ is wt$(\alpha)$ which is even and we have $(=_{{\rm wt}(\alpha)+2})\in\mathcal{EQ}_2$.
 So we have \[\partial_{(1, \ldots, f_{\alpha})}(=_{{\rm wt}(\alpha)+2})=[1, 0, f_{\alpha}] \not \in \mathscr{A}.\]
 Then by Theorem~\ref{heng-tyson-dichotomy-pl-csp2},
 $\PlHolant(\mathcal{EQ}_2, [1, 0], [1, 0, 1, 0], [1, 0, f_{\alpha}])$ is \#P-hard.
It follows that
 $\operatorname{Pl-Holant}(\mathcal{EQ}_2, \widehat{\mathcal{EQ}}, \widehat{\mathcal{F}})$ is \#P-hard.

Now we may assume that all nonzero entries of $f$ are powers of $\frak i$.

 \item {(B)}
If there exists $\alpha\in\{0, 1\}^n$ such that $f_{\alpha}=\pm {\frak i}$, then by the same way as Item (A), we have
$[1, 0, {\frak i}]$ or $[1, 0, -{\frak i}]$. In each case, we
 have $[1, 0, -1]$ and $[0, 1]^{\otimes 2}$ by Lemma~\ref{constructing-[1,0,x]-heng}.
 Then by $f$ and Lemma~\ref{[1,0]-[0,1]-[1,0,-1]-EQ-hat-affine-reduction}, we can get $[1, 0, x] \not \in \mathscr{A}$
 Then we are done by Theorem~\ref{heng-tyson-dichotomy-pl-csp2}.

Now we may assume that all nonzero entries of $f$ are $1$ or $-1$.

\item {(C)} If $f$ takes values in $\{0, 1, -1\}$ and there exists at least one $\alpha\in\{0, 1\}^n$
such that $f_{\alpha}=-1$,
then we can get  $[1, 0, -1]$ in the same way as (A).
Now we prove the lemma by induction on the arity $n\geq 3$ of $f$.

For $n=3$,
 by Lemma~\ref{0,1,-1-valued}, we can get
 a signature $g \not \in \mathscr{A}$ with arity $<3$.
 Note that $g$ also satisfies the even Parity Condition.
 If $g$ has arity 1, then $g\in\mathscr{A}$.
 This is a contradiction.
 If $g$ has arity 2, then it must be of the form
$[x,0,y]$, and $xy \not =0, (x/y)^4 \not = 1$ lest
$g \in\mathscr{A}$.
In particular $g$ is symmetric.
Then by Theorem~\ref{heng-tyson-dichotomy-pl-csp2},
$\operatorname{Pl-Holant}(\mathcal{EQ}_2, [1, 0], [1, 0, 1, 0], g)$ is \#P-hard.
Thus $\operatorname{Pl-Holant}(\mathcal{EQ}_2, \widehat{\mathcal{EQ}}, \widehat{\mathcal{F}})$
is \#P-hard.

For $n\geq 4$, by Lemma~\ref{0,1,-1-valued}, we can get
 a signature $g  \not \in \mathscr{A}$ with arity $<n$.
 If $g$ takes values in $\{0, 1, -1\}$, up to a nonzero factor, then we are done by induction.
 Otherwise, we are done by Items (A) and Item (B).

Now we may assume that $f$ takes values in $\{0, 1\}$.

 \item {(D)} For a $\{0, 1\}$-valued signature satisfying the even
Parity Condition, if it has arity $\le 2$ then it is affine.
Hence $f$ has arity$\geq 3$ since $f\notin\mathscr{A}$.
  We now  induct on the arity $n\geq 3$ of $f$ in this case.

For $n=3$,
 by Lemma~\ref{0,1-valued}, we can get
 a non-affine signature $h$ with arity $<3$.
 Note that $h$ satisfies the even Parity Condition.
 If $h$ has arity 1, then $h\in\mathscr{A}$.
 This is a contradiction.
 If $h$ has arity 2,
then $h$ has the form
  $[1, 0, x] \not \in \mathscr{A}$ up to a nonzero factor.
Then by Theorem~\ref{heng-tyson-dichotomy-pl-csp2},
$\operatorname{Pl-Holant}(\mathcal{EQ}_2, [1, 0], [1, 0, 1, 0], [1, 0, x])$ is \#P-hard.
Thus $\operatorname{Pl-Holant}(\mathcal{EQ}_2, \widehat{\mathcal{EQ}}, \widehat{\mathcal{F}})$
is \#P-hard.

For $n\geq 4$, by Lemma~\ref{0,1-valued}, we can get
 a non-affine signature $h$ with arity $<n$.
 If $h$ takes values in $\{0, 1\}$, then we are done by induction.
 Otherwise, we are done by Item (A), Item (B), and Item (C).
\end{description}
\end{proof}

\section{When $\widehat{\mathcal{F}}$ Satisfies Parity}\label{sec:F-has-parity}
In this section, we  give a dichotomy for $\operatorname{Pl-Holant}(\widehat{\mathcal{EQ}}, \widehat{\mathcal{F}})$,
where all  signatures in $\widehat{\mathcal{F}}$ satisfy
the Parity Condition.
In this case, $\widehat{\mathcal{F}}$ will involve matchgate signatures.
If $\widehat{\mathcal{F}} \subseteq \mathcal{M}$ then the problem is
tractable in P. Assume  $\widehat{\mathcal{F}} \not \subseteq \mathcal{M}$.
General matchgate signatures are governed by the Matchgate Identities.
For asymmetric signatures of high arities, these are intricate and
difficult to handle.
So we first try to reduce the arity of a non-matchgate signature.

\subsection{Arity Reduction of Non-Matchgate Signatures}
The following lemma is from \cite{jinyi-aaron}.
\begin{lemma}\label{construct-g-by-weight-0-2}
For any signature $f$ of arity $n \ge 2$ with
$f_{00\cdots 0}=1$,
there exists a matchgate signature $g$ of arity $n$
 such that $g_{00\cdots 0}=f_{00\cdots 0}=1$
and $g_{00\cdots 0\oplus e_i\oplus e_j}=f_{00\cdots 0\oplus e_i\oplus e_j}$,
where $i, j \in [n]$ and $i<j$.
\end{lemma}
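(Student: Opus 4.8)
\textbf{Proof plan for Lemma~\ref{construct-g-by-weight-0-2}.}
The statement asks, given an arbitrary signature $f$ of arity $n\ge 2$ with $f_{00\cdots 0}=1$, to produce a \emph{matchgate} signature $g$ of the same arity that agrees with $f$ on the all-zero input and on all inputs of Hamming weight exactly $2$ (i.e.\ on $00\cdots 0\oplus e_i\oplus e_j$ for $i<j$). The plan is to construct $g$ explicitly as a product signature built from a single ``Pfaffian-like'' antisymmetric structure, and then verify directly that it is a matchgate signature and that it matches $f$ in the prescribed entries.

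First I would set up the right candidate. Since $g$ is required to satisfy $g_{00\cdots 0}=1$ and we want it to be a matchgate signature, by the Parity Condition it is natural to look for a $g$ supported on \emph{even} Hamming weight, so $g$ has odd entries all zero. For an even-parity matchgate signature, the natural normalization with $g_{00\cdots0}=1$ is the exponential of a skew form: define a skew-symmetric $n\times n$ matrix $\Lambda=(\lambda_{ij})$ over $\mathbb{C}$ by setting $\lambda_{ij}=f_{00\cdots 0\oplus e_i\oplus e_j}$ for $i<j$ and $\lambda_{ji}=-\lambda_{ij}$, $\lambda_{ii}=0$. Then I would define, for $\alpha\in\{0,1\}^n$ with support set $S_\alpha=\{i: \alpha_i=1\}$,
\[
g_\alpha \;=\; \operatorname{Pf}\bigl(\Lambda|_{S_\alpha}\bigr)
\]
when $|S_\alpha|$ is even (the Pfaffian of the principal submatrix of $\Lambda$ indexed by $S_\alpha$), and $g_\alpha=0$ when $|S_\alpha|$ is odd. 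This gives $g_{00\cdots0}=\operatorname{Pf}(\emptyset)=1$, and for $|S_\alpha|=2$, say $S_\alpha=\{i,j\}$ with $i<j$, we get $g_\alpha=\operatorname{Pf}\left[\begin{smallmatrix}0&\lambda_{ij}\\-\lambda_{ij}&0\end{smallmatrix}\right]=\lambda_{ij}=f_{00\cdots0\oplus e_i\oplus e_j}$, which is exactly the required agreement on weight-$2$ entries. So the only remaining task is to show that this $g$ is genuinely a matchgate signature, i.e.\ lies in $\mathscr{M}$.

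For that I would invoke the characterization recalled in the excerpt: $g\in\mathscr{M}$ iff $g$ satisfies the Parity Condition and all the Matchgate Identities (MGI). The Parity Condition holds by construction. For the MGI, the key fact is that Pfaffians of principal submatrices of a fixed skew-symmetric matrix satisfy the \emph{Pfaffian/Grassmann--Pl\"ucker relations}, and these are exactly (the instances of) the Matchgate Identities; concretely, the signature $\alpha\mapsto\operatorname{Pf}(\Lambda|_{S_\alpha})$ is the signature of a matchgate whose edge weights realize $\Lambda$ (this is standard from Valiant's theory, and is proved self-containedly in \cite{jinyi-aaron}). So I would cite the matchgate-identity characterization from \cite{jinyi-aaron} (or the equivalent statement already quoted before Proposition~\ref{unary-matchgate}) to conclude $g\in\mathscr{M}$. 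An alternative, purely algebraic route is to check the MGI
\[
\sum_{i=1}^{\ell}(-1)^{i}\,g_{\alpha\oplus e_{p_i}}\,g_{\alpha\oplus P\oplus e_{p_i}}=0
\]
directly by substituting the Pfaffian expansion and recognizing the left-hand side as a Pfaffian identity; but this is the routine-calculation part I would not grind through in the writeup, preferring the citation.

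The main obstacle is really just the bookkeeping: making sure the Pfaffian sign conventions line up so that the $\{0,1\}^n$-indexing used for signatures (lexicographic order, with the MGI indexed by patterns $\alpha$ and position vectors $P$) matches the combinatorial Pfaffian expansion, and confirming that \emph{every} MGI instance — not just a convenient subfamily — follows. Everything else (the values at weight $0$ and weight $2$, and the even-parity support) is immediate from the construction. If one prefers to avoid Pfaffians entirely, a second approach is available for small arity and can then be bootstrapped: for $n\le 3$ Lemma~\ref{matchgate-identity-for-arity-4} says any even-parity signature is a matchgate signature, so one simply takes $g$ to be any even-parity signature agreeing with $f$ on weights $0$ and $2$ (e.g.\ zero out all entries of weight $\ge 4$ after extending); for general $n$ one would then need the Pfaffian construction or an inductive gadget argument to guarantee the higher-weight entries can be filled in consistently, which is exactly where the Pfaffian formula does the work for free. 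I would therefore present the Pfaffian construction as the clean proof.
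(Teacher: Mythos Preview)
The paper does not actually prove this lemma: it is stated with the preamble ``The following lemma is from \cite{jinyi-aaron}'' and no proof is given. Your Pfaffian construction is correct and is precisely the standard argument (and essentially the one in the cited reference): defining $g_\alpha=\operatorname{Pf}(\Lambda|_{S_\alpha})$ from the skew-symmetric matrix $\Lambda$ with $\lambda_{ij}=f_{e_i\oplus e_j}$ gives the right values at weights $0$ and $2$, and the Grassmann--Pl\"ucker relations for Pfaffians are exactly the Matchgate Identities, so $g\in\mathscr{M}$.
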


\begin{theorem}\label{arity-reduction-matchgate-signature}
If all signatures in $\widehat{\mathcal{F}}$
satisfy the Parity Condition, and
$\widehat{\mathcal{F}}\nsubseteq\mathscr{M}$,
then there exists $h\notin\mathscr{M}$ of arity 4
such that
\[\operatorname{Pl-Holant}(\widehat{\mathcal{EQ}}, h, \widehat{\mathcal{F}})
\le_{\rm T}
\operatorname{Pl-Holant}(\widehat{\mathcal{EQ}}, \widehat{\mathcal{F}}).\]
\end{theorem}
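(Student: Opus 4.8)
\textbf{Proof proposal for Theorem~\ref{arity-reduction-matchgate-signature}.}

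The plan is to start from a signature $f \in \widehat{\mathcal{F}} \setminus \mathscr{M}$ and repeatedly reduce its arity, each time preserving both the Parity Condition and non-membership in $\mathscr{M}$, until we reach arity $4$. First, by Lemma~\ref{[0,1]-EQ-hat-wight-0-neq-0} (applied with $\mathscr{C} = \mathscr{M}$), we may assume $f_{00\cdots 0} = 1$ and that $f$ satisfies the \emph{even} Parity Condition; this normalization is crucial because all subsequent constructions from $\widehat{\mathcal{EQ}}$ (which satisfies the even Parity Condition) will again satisfy the even Parity Condition. By Lemma~\ref{matchgate-identity-for-arity-4}, any signature of arity $\le 3$ satisfying the Parity Condition is automatically in $\mathscr{M}$, so $f$ must have arity $n \ge 4$; if $n = 4$ we are done. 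So assume $n \ge 5$ and proceed by induction on $n$: it suffices to construct from $f$ some $g \notin \mathscr{M}$ of arity $< n$ satisfying the Parity Condition.

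The main tool for arity reduction is pinning. We have $[1,0] \in \widehat{\mathcal{EQ}}$ explicitly, so $f^{x_i = 0} = \partial^{\{i\}}_{[1,0]}(f)$ is constructible for every $i$, and it again satisfies the even Parity Condition. If $f^{x_i = 0} \notin \mathscr{M}$ for some $i$, take $g = f^{x_i = 0}$ and invoke induction. The difficulty is the case where $f^{x_i = 0} \in \mathscr{M}$ for \emph{all} $i$: we need a different reduction that does not merely pin a variable to $0$. Here the idea is to form the derivative $h = \partial_{=_2}(f)$ or, more precisely, to fold two variables together using signatures available in $\widehat{\mathcal{EQ}}$ — for instance connecting two consecutive variables of $f$ through the ternary signature $[1,0,1,0] \in \widehat{\mathcal{EQ}}$ to obtain an $(n-1)$-ary signature $h$ whose value at $(x_1 \oplus x_2, x_3, \dots, x_n)$ equals $f(x_1, x_2, x_3, \dots, x_n) + f(\overline{x_1}, \overline{x_2}, x_3, \dots, x_n)$, exactly as in the proof of Lemma~\ref{0,1-valued}. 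If $h \notin \mathscr{M}$, induction finishes. When $h \in \mathscr{M}$ as well, one extracts structural constraints on the entries of $f$ (the folded signature being a matchgate signature forces relations among the Matchgate Identities of $f$). Combining these with the MGI relations forced by all the $f^{x_i = 0} \in \mathscr{M}$, one derives that $f$ itself must satisfy the Matchgate Identities — using Lemma~\ref{construct-g-by-weight-0-2} to pin down the weight-$2$ entries against a genuine matchgate signature $g$ of the same arity, and then bootstrapping through parity-respecting pinnings and foldings to show all higher MGI are forced — contradicting $f \notin \mathscr{M}$.

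The hard part will be this last step: showing that when every single-variable pinning to $0$ \emph{and} the pairwise folding of $f$ all land inside $\mathscr{M}$, then $f \in \mathscr{M}$. This is the analogue, for matchgates, of Lemma~\ref{[1,0]-[0,1]-pinning-implies-affine-arity-4} for affine signatures, but it is genuinely more delicate because the Matchgate Identities are a whole system of quadratic relations rather than a single linear-algebraic condition on the support. I expect the argument to split on the structure of $\operatorname{supp}(f)$ and the parity of the weights of its support points, mirroring the case analysis (support affine vs.\ not, dimension small vs.\ large) used in Lemmas~\ref{0,1-valued} and~\ref{0,1,-1-valued}, with the Tableau Calculus comparing entries of $h$ (and of $\overline{h}$, obtained by first composing with $[1,0,-1]$ when that is available) against the known matchgate structure to force a contradiction. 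Once arity $4$ is reached, we stop; the subsequent step of extracting $\mathfrak{X}$ or $(=_4)$ from a non-matchgate arity-$4$ signature is handled separately (and is not part of this theorem).
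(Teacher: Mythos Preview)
Your outline correctly identifies the setup: normalize via Lemma~\ref{[0,1]-EQ-hat-wight-0-neq-0} to $f_{00\cdots 0}=1$ with even parity, note $n\ge 4$, induct on $n$, and pin with $[1,0]$ to handle the case where some $f^{x_i=0}\notin\mathscr{M}$. The divergence from the paper, and the gap, is entirely in your ``hard part''.

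The paper does \emph{not} attempt a Tableau Calculus argument here, and for good reason: the Tableau Calculus in Lemmas~\ref{0,1-valued} and~\ref{0,1,-1-valued} exploits the \emph{linear} structure of affine supports, whereas the obstruction to $\mathscr{M}$ is the \emph{quadratic} system of Matchgate Identities. Instead, the paper proves a sharp structural claim directly from the MGI: if $f^{x_i=0}\in\mathscr{M}$ for all $i$, then $f_\alpha = g_\alpha$ for \emph{every} $\alpha$ with $\operatorname{wt}(\alpha)<n$, where $g\in\mathscr{M}$ is the signature from Lemma~\ref{construct-g-by-weight-0-2}. The proof is a short induction on $\operatorname{wt}(\alpha)$: for $\operatorname{wt}(\alpha)=k\ge 4$ with $\alpha$ supported on $P=\{p_1,\ldots,p_k\}$, pick any $\ell\notin P$ (which exists since $k<n$) and apply the single MGI with pattern $e_{p_1}$ and position vector $P$ to $f^{x_\ell=0}$; since $f_{00\cdots 0}=1$, this expresses $f_\alpha$ as a sum of products of lower-weight entries, which by induction equal the corresponding $g$-entries, and the same MGI for $g$ gives $f_\alpha=g_\alpha$. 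This immediately dispatches odd $n$ (then $f=g\in\mathscr{M}$, contradiction) and for even $n\ge 6$ gives $f = g + x\,[0,1]^{\otimes n}$ with $x\neq 0$.

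For the final step the paper does \emph{not} use your $[1,0,1,0]$-folding (which reduces arity by $1$ and moves the defect off the all-ones point), but rather the self-loop $f'=\partial_{(=_2)}^{\{1,2\}}(f)$, which reduces arity by $2$ and, crucially, preserves the structure: $f'=g'+x\,[0,1]^{\otimes n-2}$ with $g'\in\mathscr{M}$. If $f'\notin\mathscr{M}$ we recurse; otherwise one MGI (pattern $e_1$, position vector $\{1,\ldots,n-2\}$) applied to both $f'$ and $g'$ yields $f'_{00\cdots 0}f'_{11\cdots 1}=g'_{00\cdots 0}g'_{11\cdots 1}$, forcing $x=0$ when $f'_{00\cdots 0}\neq 0$. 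The residual case $f'_{00\cdots 0}=0$ is handled by noting $f_{110\cdots 0}=-1$, constructing $[1,0,-1]$ by pinning, and repeating the argument with $f''=\partial_{[1,0,-1]}^{\{1,2\}}(f)$, for which $f''_{00\cdots 0}=2\neq 0$. So the route is: MGI-based weight bootstrap $\Rightarrow$ single-entry defect $\Rightarrow$ self-loop $\Rightarrow$ one MGI contradiction; no Tableau Calculus enters.
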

\begin{proof}
Since $\widehat{\mathcal{F}}\nsubseteq\mathscr{M}$, there exists $f\in\widehat{\mathcal{F}} \setminus \mathscr{M}$.
By Lemma~\ref{matchgate-identity-for-arity-4}
 $f$ has arity $n \geq 4$ since $f$ satisfies the Parity Condition.
Moreover,
by Lemma~\ref{[0,1]-EQ-hat-wight-0-neq-0}, we can assume that $f_{00\cdots 0}=1$ and $f$ satisfies the even Parity Condition.
By Lemma~\ref{construct-g-by-weight-0-2}, there exists  $g \in \mathscr{M}$ such that
$g_{00\cdots 0}=1$ and $g_{00\cdots 0\oplus e_i\oplus e_j}=f_{00\cdots 0\oplus e_i\oplus e_j}$ for any  $i, j \in [n]$ and $i<j$.

We will prove the theorem by induction on $n$.
If $n= 4$, then we are done. Now we assume that the theorem is true for $\leq n-1$ and prove the theorem for $n\geq 5$.
If there exists $i \in [n]$ such that $f^{x_i=0}\notin\mathscr{M}$, then we are done by induction since we have $[1, 0]\in\widehat{\mathcal{EQ}}$.
Therefore, we may assume that $f^{x_i=0}\in\mathscr{M}$ for $1\leq i\leq n$.

With $f^{x_i=0}\in\mathscr{M}$ for $1\leq i\leq n$,
we claim that $f_{\alpha}=g_{\alpha}$ for any $\alpha\in\{0, 1\}^n$ with wt$(\alpha)<n$.
If wt$(\alpha)$ is odd, then $f_{\alpha}=g_{\alpha}=0$ since both  $f$ and $g$
satisfy the even Parity Condition.
If wt$(\alpha)$ is even, we prove the claim by induction on $k= $ wt$(\alpha)$.
For $k=0, 2$, $f_{\alpha}=g_{\alpha}$ by the definition of $g$.
By induction, we may assume that
 $f_{\beta}=g_{\beta}$ for any wt$(\beta)<k$.
 For wt$(\alpha)=k\geq 4$,
Let $P=\{p_1, p_2, \ldots, p_{{\rm wt}(\alpha)}\}\subseteq[n]$ be
such that $\alpha_{p_i} =1$ and all other bits of $\alpha$ are 0.
Then there exists $\ell\in[n]$ such that $\ell\notin P$ since $k<n$.
Since $f^{x_{\ell}=0}$ is a matchgate signature, by the Matchgate Identities,
we have
\begin{equation}\label{MGI-1'}
\displaystyle\sum_{j=1}^{k}(-1)^j f_{e_{p_1}\oplus e_{p_j}}f_{\alpha\oplus e_{p_1}\oplus e_{p_j}}=0,
\end{equation}
where the position vector is $P$ and the pattern is $e_{p_1}$.
Note that all entries of $f$ that appear in (\ref{MGI-1'})
are indeed entries of $f^{x_{\ell}=0}$.

The first term of (\ref{MGI-1'}) is $-f_{\alpha}$ since $f_{00\cdots 0}=1$.
Thus
 \begin{equation}\label{MGI-1}
 f_{\alpha}=\displaystyle\sum_{j=2}^{k}(-1)^j f_{e_{p_1}\oplus e_{p_j}}f_{\alpha\oplus e_{p_1}\oplus e_{p_j}}.
 \end{equation}
Similarly,  since $g$ is a matchgate signature and $g_{00\cdots 0}=f_{00\cdots 0}=1$,
we have
\begin{equation}\label{MGI-2}
g_{\alpha}=\displaystyle\sum_{j=2}^{k}(-1)^j g_{e_{p_1}\oplus e_{p_j}}g_{\alpha\oplus e_{p_1}\oplus e_{p_j}}.
\end{equation}
Since  wt$(e_{p_1}\oplus e_{p_j})=2$, wt$(\alpha\oplus e_{p_1}\oplus e_{p_j})=k-2$ for $2\leq j\leq k$,
the right hand side expressions
 of (\ref{MGI-1}) and (\ref{MGI-2}) are equal by induction, and thus
$f_{\alpha} = g_{\alpha}$.
This finishes the proof of the claim.

By the claim, if $n$ is odd, then $f$ is identically equal to $g$.
Thus $f \in\mathscr{M}$. This is a contradiction.

If $n$ is even, then $n\geq 6$.
Since $f_{\alpha}=g_{\alpha}$ for wt$(\alpha)<n$, there exists $x\in\mathbb{C}$ such that
 \begin{equation}\label{matchgate-reduce-1}
 f=g+x[0, 1]^{\otimes n}.
 \end{equation}
If $x=0$, then $f \in\mathscr{M}$, a contradiction.
Thus $x\neq 0$.
Since we have $(=_2) \in \widehat{\mathcal{EQ}}$, we can construct
$f'=\partial_{(=_2)}^{\{1, 2\}}(f)$.
Let $g'=\partial_{(=_2)}^{\{1, 2\}}(g)$.
Note that $g' \in\mathscr{M}$. A matchgate
for $g'$ is obtained from a matchgate for $g$
by forming a loop
 on the two adjacent external dangling
edges corresponding to $x_1$ and $x_2$.
On the other hand, since the operator $\partial$ is linear, by (\ref{matchgate-reduce-1}) we have
\[f'=g'+x[0, 1]^{\otimes n-2}.\]
This implies that
\begin{equation}\label{matchgate-reduce-2}
f'_{\beta}=g'_{\beta}
\end{equation}
 for any $\beta\in\{0, 1\}^{n-2}$ with wt$(\beta)<n-2$.
If $f'\notin\mathscr{M}$, then we are done by induction.
Otherwise, $f'\in\mathscr{M}$.

If $f'_{00\cdots 0} \not=0$,
consider the Matchgate Identities for $f'$ and $g'$
determined by the position vector $P'=\{1, 2, \ldots, n-2\}$
and the pattern  $e_1=10\cdots 0\in\{0, 1\}^{n-2}$, then
\[f'_{00\cdots 0}f'_{11\cdots 1}=\displaystyle\sum_{j=2}^{n-2}(-1)^j f'_{e_1\oplus e_{j}}f'_{11\cdots 1\oplus e_{1}\oplus e_{j}},\]
\[g'_{00\cdots 0}g'_{11\cdots 1}=\displaystyle\sum_{j=2}^{n-2}(-1)^j g'_{e_1\oplus e_{j}}g'_{11\cdots 1\oplus e_{1}\oplus e_{j}}.\]
Note that  wt$(e_1\oplus e_{j})=2$ and wt$(11\cdots 1\oplus e_{1}\oplus e_{j})=n-4$ for $2\leq j\leq n-2$
in the above expressions.
Thus by (\ref{matchgate-reduce-2}) we have \[f'_{00\cdots 0}f'_{11\cdots 1}=g'_{00\cdots 0}g'_{11\cdots 1}.\]
By $f'_{00\cdots 0}=g'_{00\cdots 0}\neq 0$, we have
\[f'_{11\cdots 1}=g'_{11\cdots 1}.\]
This contradicts that $x\neq 0$.

If $f'_{00\cdots 0}=0$, i.e., $f_{000\cdots 0}+f_{110\cdots 0}=0$, then
$f_{110\cdots 0}=-1$
and we can construct
 $\partial_{[1, 0]}^{\{3, 4, \ldots, n\}}(f)=[1, 0, -1]$,
since $[1,0] \in \widehat{\mathcal{EQ}}$.
Then we can construct
$f''=\partial_{[1, 0, -1]}^{\{1, 2\}}(f)$,
 and define $g''=\partial_{[1, 0, -1]}^{\{1, 2\}}(g)$.
It follows from (\ref{matchgate-reduce-1}) that
\begin{equation}\label{matchgate-reduce-3}
f''=g''-x[0, 1]^{\otimes n-2}.
\end{equation}
 Also  $f''_{0\cdots 0}= f_{000\cdots 0} - f_{110\cdots 0}
= 2$.
Note that $g''$ is a matchgate signature, because
$[1,0,-1] \in  \mathscr{M}$.
If $f''\notin\mathscr{M}$, then we are done by induction.
Otherwise, $f''\in\mathscr{M}$.
Consider the Matchgate Identities for $f''$ and $g''$ with
the  position vector $P'=\{1, 2, \ldots, n-2\}$
and the pattern  $e_1=10\cdots 0\in\{0, 1\}^{n-2}$, then
\[f''_{00\cdots 0}f''_{11\cdots 1}=\displaystyle\sum_{j=2}^{n-2}(-1)^j f''_{e_1\oplus e_{j}}f''_{11\cdots 1\oplus e_{1}\oplus e_{j}},\]
\[g''_{00\cdots 0}g''_{11\cdots 1}=\displaystyle\sum_{j=2}^{n-2}(-1)^j g''_{e_1\oplus e_{j}}g''_{11\cdots 1\oplus e_{1}\oplus e_{j}}.\]
By (\ref{matchgate-reduce-3}), we have
\begin{equation*}
f''_{\beta}=g''_{\beta}
\end{equation*}
 for any $\beta\in\{0, 1\}^{n-2}$ with wt$(\beta)<n-2$.
Moreover, by wt$(e_1\oplus e_{p_j})=2\leq n-2$ and wt$(11\cdots 1\oplus e_{p_1}\oplus e_{p_j})=n-4<n-2$ for $2\leq j\leq n-2$, we have
\[f''_{00\cdots 0}f''_{11\cdots 1}=g''_{00\cdots 0}g''_{11\cdots 1}.\]
Since $f''_{00\cdots 0}=g''_{00\cdots 0}=2$, we have
 $f''_{11\cdots 1}=g''_{11\cdots 1}$. This contradicts
(\ref{matchgate-reduce-3}) and  $x\neq 0$.
\end{proof}

\subsection{A Dichotomy Theorem for $\operatorname{Pl-Holant}(\widehat{\mathcal{EQ}}, [1, 0, x], \widehat{\mathcal{F}})$ with $[1, 0, x]\not \in \mathscr{A}$}

A signature $f$ of arity 4 satisfying the even Parity Condition
has signature matrix of the form
$M_{x_1x_2, x_4x_3}(f)=\left[\begin{smallmatrix}
a & 0 & 0 & b\\
0 & \alpha & \beta & 0\\
0 & \gamma & \delta & 0\\
c & 0 & 0 & d
\end{smallmatrix}\right]$.
For such signatures we call $\left[\begin{smallmatrix}
a & b\\
c & d
\end{smallmatrix}\right]$ the outer matrix and
$\left[\begin{smallmatrix}
\alpha & \beta\\
\gamma & \delta
\end{smallmatrix}\right]$ the inner matrix.
The following lemma implies that we can switch
the outer matrix and the inner matrix,
and also reverse the order of the columns.

\begin{lemma}\label{inner-outer}
If $\widehat{\mathcal{F}}$ contains an $f$ with  signature matrix
\begin{equation}\label{even-arity-4-sec5.2}
M_{x_1x_2, x_4x_3}(f)
=\begin{bmatrix}
f_{0000} & f_{0010} & f_{0001} & f_{0011}\\
f_{0100} & f_{0110} & f_{0101} & f_{0111}\\
f_{1000} & f_{1010} & f_{1001} & f_{1011}\\
f_{1100} & f_{1110} & f_{1101} & f_{1111}
\end{bmatrix}
=\begin{bmatrix}
a & 0 & 0 & b\\
0 & \alpha & \beta & 0\\
0 & \gamma & \delta & 0\\
c & 0 & 0 & d
\end{bmatrix},
\end{equation}
then
we can construct $g$ and $h$, where 
\[M_{x_1x_2, x_4x_3}(g)=\begin{bmatrix}
\alpha & 0 & 0 & \beta\\
0 & a & b & 0\\
0 & c & d & 0\\
\gamma & 0 & 0 & \delta
\end{bmatrix}, ~~~
\mbox{and}~~~  M_{x_1x_2, x_4x_3}(h)=
\begin{bmatrix}
\beta & 0 & 0 & \alpha\\
0 & b & a & 0\\
0 & d & c & 0\\
\delta & 0 & 0 & \gamma
\end{bmatrix}\]
such that
\[\operatorname{Pl-Holant}(\widehat{\mathcal{EQ}}, g, h, [0, 1]^{\otimes 2}, \widehat{\mathcal{F}})
\le_{\rm T}
\operatorname{Pl-Holant}(\widehat{\mathcal{EQ}}, [0, 1]^{\otimes 2}, \widehat{\mathcal{F}}).\]
\end{lemma}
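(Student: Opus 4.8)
The plan is to realize the two matrix operations—swapping the inner and outer $2\times 2$ blocks, and reversing the column order—by explicit planar gadget constructions, using the signatures available on the right-hand side, namely $\widehat{\mathcal{EQ}}$ (which supplies $[1,0]$, $(=_2)=[1,0,1]$, $[1,0,1,0]$, and the higher even equalities), the tensor $[0,1]^{\otimes 2}$, and $f$ itself. The construction for $h$ is the easy one: reversing the order of the last two columns of $M_{x_1x_2,x_4x_3}(f)$ corresponds to swapping the roles of $x_3$ and $x_4$, equivalently simultaneously flipping $x_3$ and $x_4$ and then relabeling. Concretely, I would form $[0,1,0]^{\otimes 2}$ from $[1,0,1,0]$ and $[0,1]^{\otimes 2}$ exactly as in Lemma~\ref{how-to-flip-two-bits-by-[0,1,0]-tensor-2} (recall $\partial_{[0,1]}([1,0,1,0])=[0,1,0]$), connect this disequality gadget to the $x_3,x_4$ pair of $f$ in a planar fashion, and read off the signature matrix. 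Tracking the effect on the even-weight entries shows the $x_3\leftrightarrow x_4$ flip sends $(a,b;\,\alpha,\beta;\,\gamma,\delta;\,c,d)$ to $(b,a;\,\beta,\alpha;\,\delta,\gamma;\,d,c)$ after reindexing, which is precisely $M_{x_1x_2,x_4x_3}(h)$ as stated; the parity-preservation and planarity are immediate since $[0,1,0]^{\otimes2}\in\widehat{\mathscr M}\cap\mathscr A$ and the connection respects the cyclic order.

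The construction for $g$ is the substantive one: swapping the inner and outer blocks. The inner block of $M_{x_1x_2,x_4x_3}(f)$ collects the entries of Hamming weight $1$ within the pair $\{x_1,x_2\}$ (and correspondingly within $\{x_3,x_4\}$), while the outer block collects the weight-$0$ and weight-$2$ entries. The natural way to exchange these is to XOR a new variable into the $x_1,x_2$ pair on one side and the $x_3,x_4$ pair on the other, i.e.\ to compose $f$ with two copies of $[1,0,1,0]$ (one merging $x_1,x_2$ against a fresh variable, one merging $x_3,x_4$ against a fresh variable) in the manner of Lemma~\ref{shrinking-non-free-variables-by-[1,0,1,]} and then re-expanding, or more directly to build a degree-$4$ gadget that conjugates $f$ by the ``block transposition'' transformation. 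Since $[1,0,1,0]$ is itself a matchgate/affine signature and since $[0,1]^{\otimes2}$ together with $\widehat{\mathcal{EQ}}$ lets us reach $[0,1,0]^{\otimes 2}$ and hence flip pairs of variables freely, the plan is to: (i) use $[1,0,1,0]$ on the $x_1,x_2$ side to replace those two edges of $f$ by a single edge carrying $x_1\oplus x_2$ together with an auxiliary edge, (ii) symmetrically on the $x_3,x_4$ side, (iii) then re-split each merged edge back into two using another $[1,0,1,0]$ but now in the ``transposed'' incidence pattern, so that what were weight-$1$ entries become weight-$0$/weight-$2$ entries and vice versa, all while keeping the four external variables $x_1,x_2,x_3,x_4$ in cyclic order. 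Carefully bookkeeping the index permutation shows the resulting signature matrix is exactly the $M_{x_1x_2,x_4x_3}(g)$ displayed in the statement. Throughout, one checks that every auxiliary signature used lies in $\widehat{\mathcal{EQ}}\cup\{[0,1]^{\otimes2}\}$, so the claimed Turing reduction holds, and that the gadget is planar, which is where the cyclic (rather than arbitrary) ordering of the variables of $[1,0,1,0]$ matters.

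\textbf{Main obstacle.} The hard part will be verifying planarity of the block-swap gadget for $g$ while simultaneously getting the index bookkeeping right. Swapping inner and outer blocks is a nonlocal reshuffling of the $16$ entries of $f$, and it is easy to write a gadget that produces the right matrix combinatorially but is not planar, or one that is planar but realizes a cyclic rotation of the intended $g$ rather than $g$ itself. I would resolve this by drawing the gadget explicitly, fixing the counterclockwise order of the four dangling edges, and checking that each $[1,0,1,0]$-vertex is attached to a contiguous arc of edges (two of $f$'s edges plus one new dangling edge), so that the planar embedding of $f$ extends without crossings; then the signature-matrix identity follows by a direct, if tedious, evaluation of the sum-of-products over the internal edges, which I would summarize rather than carry out in full. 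The parity and membership claims are routine once the construction is pinned down, since flipping variables and composing with affine/matchgate signatures preserves the relevant structure by Lemma~\ref{[0,1,0]-not-change-tractable}.
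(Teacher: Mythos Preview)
Your proposal has a concrete error for $h$ and takes an unnecessarily complicated route for $g$; the paper's argument is a one-liner for both.

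First, the error. You claim $h$ is obtained from $f$ by ``reversing the order of the last two columns,'' and that flipping $x_3,x_4$ yields $h$. But flipping $x_3,x_4$ gives a signature with outer matrix $\left[\begin{smallmatrix} b & a\\ d & c\end{smallmatrix}\right]$ and inner matrix $\left[\begin{smallmatrix} \beta & \alpha\\ \delta & \gamma\end{smallmatrix}\right]$, whereas the stated $h$ has outer matrix $\left[\begin{smallmatrix} \beta & \alpha\\ \delta & \gamma\end{smallmatrix}\right]$ and inner matrix $\left[\begin{smallmatrix} b & a\\ d & c\end{smallmatrix}\right]$. In other words, $h$ is $g$ with its columns reversed, not $f$ with its columns reversed; both $g$ and $h$ have the inner and outer blocks swapped relative to $f$.

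Second, the missed simplification. The paper observes that swapping the inner and outer blocks is achieved by flipping exactly one variable from $\{x_1,x_2\}$ and exactly one from $\{x_3,x_4\}$. Indeed, the outer block of $M_{x_1x_2,x_4x_3}(f)$ consists of the entries with $x_1\oplus x_2=0$ (equivalently $x_3\oplus x_4=0$, by even parity), and the inner block consists of those with $x_1\oplus x_2=1$; flipping one bit in each pair exchanges these two parity classes. A direct check shows that flipping $x_2,x_3$ yields exactly $g$, and flipping $x_2,x_4$ yields exactly $h$. Both flips are carried out planarly by Lemma~\ref{how-to-flip-two-bits-by-[0,1,0]-tensor-2}, using $[0,1]^{\otimes 2}$ and $[1,0,1,0]\in\widehat{\mathcal{EQ}}$. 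No $[1,0,1,0]$-merge/re-split gadget is needed, and no delicate planarity verification arises.
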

\begin{proof}
We have $[0, 1], [1, 0, 1, 0] \in \widehat{\mathcal{EQ}}$
and $[0, 1]^{\otimes 2}$.
Lemma~\ref{how-to-flip-two-bits-by-[0,1,0]-tensor-2}
shows that we can  flip any two variables in $f$.
If we flip variables $x_2, x_3$ of $f$ we get $g$.
If we flip variables  $x_2, x_4$ of $f$ we get  $h$.
\end{proof}

If $f_{0000} = a  \not =0$ we can normalize it to 1.
The next lemma deals with signatures of arity 4 that
just ``miss'' to be matchgate signatures. Note that
for a signature of the form (\ref{even-arity-4-sec5.2}),
 it is a matchgate signature iff the determinants of  the inner
matrix and the outer matrix are
equal (Lemma~\ref{matchgate-identity-for-arity-4}).
Lemma~\ref{how-to-vanish-b-c} shows how to
clear some entries of (\ref{even-arity-4-sec5.2}).

\begin{lemma}\label{how-to-vanish-b-c}
Suppose $[1, 0, x] \not \in \mathscr{A}$, and
$\widehat{\mathcal{F}}$ contains a signature $f$ of arity 4 such that
\[M_{x_1x_2, x_4x_3}(f)=\begin{bmatrix}
1 & 0 & 0 & b\\
0 & \alpha & \beta & 0\\
0 & \gamma & \delta & 0\\
c & 0 & 0 & d
\end{bmatrix},\]
satisfying
$\det
\left[\begin{smallmatrix}
1 & b\\
c & d
\end{smallmatrix}\right]
=-\det
\left[\begin{smallmatrix}
\alpha & \beta\\
\gamma & \delta
\end{smallmatrix}\right]
\neq 0$.
Then we can construct $f'$ such that
\[\operatorname{Pl-Holant}(\widehat{\mathcal{EQ}}, [1, 0, x], f', \widehat{\mathcal{F}})
\le_{\rm T}
\operatorname{Pl-Holant}(\widehat{\mathcal{EQ}}, [1, 0, x], \widehat{\mathcal{F}}),\]
where
 \[M_{x_1x_2, x_4x_3}(f')=\begin{bmatrix}
1 & 0 & 0 & 0\\
0 & \alpha' & \beta' & 0\\
0 & \gamma' & \delta' & 0\\
0 & 0 & 0 & d'
\end{bmatrix},\]
and $f'$ satisfies the following conditions
\begin{itemize}
\item $\det\left[\begin{smallmatrix}
1 & 0\\
0 & d'
\end{smallmatrix}\right]=-\det\left[\begin{smallmatrix}
\alpha' & \beta'\\
\gamma' & \delta'
\end{smallmatrix}\right]\neq 0$.
\item If
$\left[\begin{smallmatrix}
\alpha & \beta\\
\gamma & \delta
\end{smallmatrix}\right]$
is a diagonal (resp. anti-diagonal) matrix,
then $\left[\begin{smallmatrix}
\alpha' & \beta'\\
\gamma' & \delta'
\end{smallmatrix}\right]$ is also a diagonal (resp. anti-diagonal)
matrix.
\end{itemize}
\end{lemma}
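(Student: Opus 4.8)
\textbf{Proof proposal for Lemma~\ref{how-to-vanish-b-c}.}
The plan is to use the binary signature $[1,0,x]$ with $[1,0,x]\notin\mathscr{A}$, together with its interpolation power established in Lemma~\ref{constructing-[1,0,x]-heng}, to ``kill'' the outer off-diagonal entries $b$ and $c$ of $M_{x_1x_2,x_4x_3}(f)$ by connecting suitable diagonal binary signatures $[1,0,z]$ to two of the external edges of $f$, while simultaneously tracking what happens to the inner block. First I would recall (see the displayed equations such as (\ref{ternary-sig-matrix-modification-f1}) and (\ref{4-ary-sig-matrix-modification-f2})) that connecting $[1,0,z]$ to the edge carrying variable $x_i$ of a signature of the form (\ref{even-arity-4-sec5.2}) multiplies the $x_i=1$ ``half'' of the signature matrix by $z$: depending on which of $x_1,x_2,x_3,x_4$ we attach to, this scales either a pair of rows or a pair of columns of $M_{x_1x_2,x_4x_3}(f)$ by $z$. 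The key observation is that $b$ sits in the $(x_1x_2,x_4x_3)=(00,11)$ position and $c$ in the $(11,00)$ position, so $b$ lives in the ``$x_1=0$, $x_3=x_4=1$'' cell and $c$ in the ``$x_1=1$, $x_3=x_4=0$'' cell; by a planar gadget attaching $[1,0,z_1]$ to the edge of $x_3$ and $[1,0,z_2]$ to the edge of $x_4$ one scales $b$ by $z_1z_2$ relative to the diagonal entry $1$ of the outer block while leaving the relative scaling of the inner block entries $\alpha,\beta,\gamma,\delta$ controlled in a compatible way. The aim is to choose the scalars so that $b$ and $c$ become $0$; since we only have diagonal binary signatures available we cannot literally annihilate a single entry by one multiplication, so instead I would take \emph{differences} of two parallel gadgets, i.e., form $f' = g_1 - g_2$ where $g_1$ is obtained by attaching $[1,0,z]$ and $g_2$ by attaching $[1,0,z']$ to the appropriate edges, with $z,z'$ chosen (possible because $[1,0,x]$ interpolates all $[1,0,z]$, $z\in\mathbb{C}$, by Lemma~\ref{constructing-[1,0,x]-heng}) so that the $b$-entries of $g_1$ and $g_2$ coincide but the diagonal ones differ, or vice versa; taking the difference and then renormalizing the $(0000)$ entry to $1$ produces a signature whose outer block is diagonal. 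One then checks that this construction is planar --- the edges of $x_3$ and $x_4$ are cyclically adjacent in the counterclockwise order, so the two attached binary signatures can be placed without crossings, and the ``difference of two gadgets'' is realized by the linearity of $\Holant$ combined with a gadget that superposes the two copies at a common $\widehat{\mathcal{EQ}}$ or $[0,1]^{\otimes 2}$ junction.

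Next I would verify the two claimed properties of $f'$. For the determinant condition, note that each attachment of $[1,0,z]$ to an edge of $f$ multiplies $M_{x_1x_2,x_4x_3}$ by a diagonal matrix of the form $\mathrm{diag}(1,1,z,z)$ or its transpose-side analogue, which has determinant $z^2$ on each of the two $2\times2$ blocks; hence after the gadget construction the outer $2\times2$ determinant and the inner $2\times2$ determinant are each multiplied by the \emph{same} factor. Taking differences of two such gadgets and renormalizing preserves the relation $\det(\text{outer}) = -\det(\text{inner})$ up to the same nonzero scalar on both sides, because the determinant of a $2\times2$ matrix is a quadratic form and the specific difference we form is chosen to act ``affinely'' on one block only; here I would carry out the short explicit computation to confirm that $\det\!\left[\begin{smallmatrix}1 & 0\\ 0 & d'\end{smallmatrix}\right] = d' = -\det\!\left[\begin{smallmatrix}\alpha' & \beta'\\ \gamma' & \delta'\end{smallmatrix}\right]\neq 0$, using that the original determinants were equal in absolute value with opposite sign and nonzero, and that the scaling factors never vanish (we may always choose $z\neq 0$ since $[1,0,z]$ is interpolable for every $z\in\mathbb{C}$, in particular for nonzero $z$). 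For the diagonal/anti-diagonal preservation, observe that attaching a \emph{diagonal} binary signature $[1,0,z]$ to an edge of $f$ preserves the support pattern of the inner block: if $\left[\begin{smallmatrix}\alpha & \beta\\ \gamma & \delta\end{smallmatrix}\right]$ is diagonal then $\beta=\gamma=0$ and every gadget built from $f$ by attaching diagonal binaries keeps those two entries zero, and similarly for the anti-diagonal case $\alpha=\delta=0$; since $f'$ is a difference of two such gadgets, the same vanishing persists.

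The step I expect to be the main obstacle is the \emph{planar realization of the difference $f' = g_1 - g_2$} together with making sure the scalars $z,z'$ can be simultaneously chosen to zero out \emph{both} $b$ and $c$ (not just one of them) while keeping $d'\neq 0$ and keeping the $(0000)$-entry nonzero so that renormalization is legitimate. Concretely, $b$ and $c$ are in ``mirror'' positions, so a single pair of attachments to the $x_3,x_4$ edges scales $b$ and $c$ by the \emph{same} factor relative to their respective diagonal neighbors $1$ and $d$, which is exactly what lets one difference-gadget clear both at once; but one must check the resulting linear system in the interpolation parameters is non-degenerate, i.e., that the coefficient matrix (a small Vandermonde-type matrix coming from powers of $z$) has full rank, which is where the hypothesis $\det\!\left[\begin{smallmatrix}1 & b\\ c & d\end{smallmatrix}\right]\neq 0$ is used to guarantee we are not in a trivial case and that $d'$ stays nonzero. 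I would handle this by writing $b$ and $c$ explicitly in terms of one interpolation parameter $t$ (the exponent in $[1,0,x^t]$ or, after rescaling via Lemma~\ref{constructing-[1,0,x]-heng}, a free $z\in\mathbb{C}$), observing that the map $z \mapsto (b(z), c(z), d(z))$ is a polynomial (in fact rational after normalization) curve, and choosing a value of $z$ for which $b(z)=c(z)=0$ via the explicit structure --- in the worst case one may need to first use Lemma~\ref{inner-outer} to swap the inner and outer blocks or reverse the column order so that the entry to be cleared sits in a position reachable by a diagonal attachment, and the final statement is robust to this because all the listed conclusions are symmetric under those moves.
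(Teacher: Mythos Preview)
Your proposal has a genuine gap at its core: the ``difference of gadgets'' step is not a valid construction in the Holant framework. Gadget constructions compose signatures by tensor contraction (i.e., matrix product along shared edges), which is multiplicative, not additive. There is no planar $\mathcal{F}$-gate whose signature is $g_1 - g_2$ for two given gadget signatures $g_1,g_2$; the linearity of the Holant \emph{value} in each vertex signature does not let you realize a linear combination of two distinct signatures as a single constructible signature. (Polynomial interpolation, as in Lemma~\ref{interpolation-equality-4}, lets you \emph{simulate} a single specific target signature given a one-parameter family, but it does not hand you $g_1-g_2$ as a gadget.) You correctly observe that attaching a diagonal binary $[1,0,z]$ only rescales entries and can never annihilate a single nonzero entry, so without the invalid ``difference'' step your plan cannot clear $b$ or $c$.

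The paper's proof takes a different, purely multiplicative route. It modifies $f$ on two edges (variables $x_2$ and $x_4$) by $[1,0,u]$ and $[1,0,v]$ to obtain a signature $h$, and then \emph{chains three copies} $f\cdot h\cdot f$ by connecting the $(x_3,x_4)$ pair of one to the $(x_2,x_1)$ pair of the next (a legitimate planar gadget). The signature matrix of the chain is the ordinary product of the three $4\times 4$ matrices, so the outer and inner $2\times 2$ blocks multiply separately:
\[
\begin{bmatrix}1&b\\c&d\end{bmatrix}\begin{bmatrix}1&bv\\cu&duv\end{bmatrix}\begin{bmatrix}1&b\\c&d\end{bmatrix}
\quad\text{and}\quad
\begin{bmatrix}\alpha&\beta\\\gamma&\delta\end{bmatrix}\begin{bmatrix}\alpha u&\beta uv\\\gamma&\delta v\end{bmatrix}\begin{bmatrix}\alpha&\beta\\\gamma&\delta\end{bmatrix}.
\]
Now the off-diagonal outer entries are genuine polynomials in $u,v$ (e.g., in the special case $b=0$ one gets $f'_{1100}=c[1+du(1+dv)]$), so one can \emph{choose} $u,v$ to kill them. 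The determinant relation $\det(\text{outer})=-\det(\text{inner})$ survives because both block determinants pick up the same factor $(\det)^3 uv$, and diagonal/anti-diagonal structure of the inner block is preserved because it is a product of three matrices each of that form. The general case $bc\neq 0$ is handled in two stages: first choose $u,v$ to clear one off-diagonal entry (with explicit values, splitting on whether $bc=2d$), reducing to the case $b=0$ or $c=0$, and then repeat. The missing idea in your attempt is precisely this chaining trick, which converts the problem from ``scale entries'' to ``solve a polynomial equation in $u,v$.''
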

\begin{proof}
By Lemma~\ref{constructing-[1,0,x]-heng}, we can
get $[1, 0, z]$ for all $z\in\mathbb{C}$
from the given $[1, 0, x]  \not \in \mathscr{A}$.
In the following proof, firstly, we will prove the lemma for a special case that $b=0$ or $c=0$ in Item (A).
Then we finish the proof in Item (B)
 by reducing the general case to the special case in Item (A).

\begin{description}
\item {(A)}
Suppose $bc=0$.
If $c=0$, we can rotate the signature by 180$^\circ$ to get $b=0$
(Figure~\ref{fig:rotate_asymmetric_signature}).
  So we assume that $b=0$. If both $b=c=0$,
 then we are done by letting $f'=f$.
  Otherwise, $c\neq 0$.
  Note that $d\neq 0$ since $\det\left[\begin{smallmatrix}
1 & 0\\
c & d
\end{smallmatrix}\right]\neq 0.$
Thus we may assume that $cd\neq 0$.

    We use two binary signatures $[1, 0, u], [1, 0, v]$, where $u, v\in\mathbb{C}$ and $uv\neq 0$.
  Connect the first variables of $[1, 0, u]$ and $[1, 0, v]$
  to the variables $x_2, x_4$ of $f$ respectively,
(see (\ref{4-ary-sig-matrix-modification-f2})
and (\ref{4-ary-sig-matrix-modification-f4})) we
get a gadget with signature
  \[
  h(x_1, x_2, x_3, x_4)=
 \displaystyle\sum_{x'_2, x'_4\in\{0, 1\}}
 f(x_1, x'_2, x_3, x'_4)[1, 0, u](x'_2, x_2)[1, 0, v][x'_4, x_4],
  \]
  and
  \[
 M_{x_1x_2, x_4x_3}(h)=\begin{bmatrix}
1 & 0 & 0 & 0\\
0 & \alpha u & \beta uv & 0\\
0 & \gamma & \delta v & 0\\
cu & 0 & 0 & duv
\end{bmatrix}.
\]

 \begin{figure}[htpb]
 \centering
 \begin{tikzpicture}[scale=\scale,transform shape,node distance=\nodeDist,semithick]
  \node[internal]  (0)                    {};
  \node[external]  (1) [above left  of=0] {};
  \node[external]  (2) [below left  of=0] {};
  \node[external]  (3) [left        of=1] {};
  \node[external]  (4) [left        of=2] {};
  \node[external]    (5) [right       of=0] {};
  \node[square]  (6) [right       of=5] {};
  \node[external]  (12) [right       of=6] {};
  \node[internal]  (11) [right       of=12] {};
  \node[external]  (7) [above right of=11] {};
  \node[external]  (8) [below right of=11] {};
  \node[external]  (9) [right       of=7] {};
  \node[external] (10) [right       of=8] {};
  \path (0) edge[in=   0, out=135, postaction={decorate, decoration={
                                                           markings,
                                                           mark=at position 0.4   with {\arrow[>=diamond, white] {>}; },
                                                           mark=at position 0.4   with {\arrow[>=open diamond]   {>}; },
                                                           mark=at position 2 with {\arrow[>=diamond, white] {>}; },
                                                           mark=at position 1.0   with {\arrow[>=open diamond, white]   {>}; } } }] (3)
            edge[out=-135, in=   0]  (4)
           (6) edge[bend right, postaction={decorate, decoration={
                                                           markings,
                                                           mark=at position 0.4   with {\arrow[>=diamond, white] {>}; },
                                                           mark=at position 0.4   with {\arrow[>=open diamond]   {>}; },
                                                           mark=at position 2 with {\arrow[>=diamond, white] {>}; },
                                                           mark=at position 1.0   with {\arrow[>=open diamond, white]   {>}; } } }]                    (0)
           (0) edge[bend right]                    (6)
          (6)   edge[bend right]                    (11)
            (11)edge[bend right, postaction={decorate, decoration={
                                                           markings,
                                                           mark=at position 0.4   with {\arrow[>=diamond, white] {>}; },
                                                           mark=at position 0.4   with {\arrow[>=open diamond]   {>}; },
                                                           mark=at position 2 with {\arrow[>=diamond, white] {>}; },
                                                           mark=at position 1.0   with {\arrow[>=open diamond, white]   {>}; } } }]                   (6)
        (11) edge[out=  45, in= 180]  (9)
            edge[in=   180, out=-45] (10);
  \begin{pgfonlayer}{background}
   \node[inner sep=0pt,transform shape=false,draw=\borderColor,thick,rounded corners,fit = (1) (2) (7) (8)] {};
  \end{pgfonlayer}
 \end{tikzpicture}
 \caption{The two circle vertices are assigned $f$ and the square vertex
is assigned $h$.}
 \label{fig:gadget:connecting two f and one h}
\end{figure}
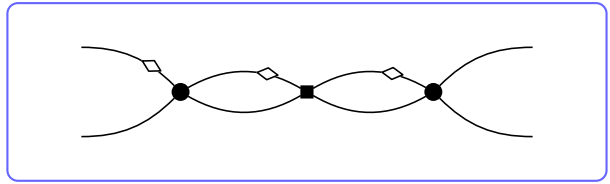
Then connect in a chain of $f$, $h$ and then $f$
as depicted in Figure~\ref{fig:gadget:connecting two f and one h}
we get the signature
  \[f'(x_1, x_2, x_3, x_4)=
 \displaystyle\sum_{y'_1, y'_2, x'_3, x'_4\in\{0, 1\}}
 f(x_1, x_2, x'_3, x'_4)h(x'_4, x'_3, y'_2, y'_1)f(y'_1, y'_2, x_3, x_4).
 \]
Note that
 \[
 M_{x_1x_2, x_4x_3}(f')=\begin{bmatrix}
1 & 0 & 0 & 0\\
0 & \alpha & \beta & 0\\
0 & \gamma & \delta & 0\\
c & 0 & 0 & d
\end{bmatrix}
\begin{bmatrix}
1 & 0 & 0 & 0\\
0 & \alpha u & \beta uv & 0\\
0 & \gamma & \delta v & 0\\
cu & 0 & 0 & duv
\end{bmatrix}
\begin{bmatrix}
1 & 0 & 0 & 0\\
0 & \alpha & \beta & 0\\
0 & \gamma & \delta & 0\\
c & 0 & 0 & d
\end{bmatrix}.
 \]

Thus the outer matrix is
\begin{equation}\label{det-1}\begin{bmatrix}
f'_{0000} & f'_{0011}\\
f'_{1100} & f'_{1111}
\end{bmatrix}=
\begin{bmatrix}
1 & 0\\
c & d
\end{bmatrix}
\begin{bmatrix}
1 & 0\\
cu & duv
\end{bmatrix}
\begin{bmatrix}
1 & 0\\
c & d
\end{bmatrix}
\end{equation}
which is
$
\left[\begin{smallmatrix}
1 & 0\\
c[1+du(1+dv)] & d^3uv
\end{smallmatrix}\right]
$,
and
the inner matrix is
\begin{equation}\label{det-2}
\begin{bmatrix}
f'_{0110} & f'_{0101}\\
f'_{1010} & f'_{1001}
\end{bmatrix}=
\begin{bmatrix}
\alpha & \beta\\
\gamma & \delta
\end{bmatrix}\begin{bmatrix}
\alpha u & \beta uv\\
\gamma & \delta v
\end{bmatrix}\begin{bmatrix}
\alpha & \beta\\
\gamma & \delta
\end{bmatrix}.
\end{equation}
By
(\ref{det-1}), (\ref{det-2}), we have
\begin{equation*}
\det
\begin{bmatrix}
f'_{0000} & f'_{0011}\\
f'_{1100} & f'_{1111}
\end{bmatrix}=d^3uv
~~~\mbox{and}~~~
\det
\begin{bmatrix}
f'_{0110} & f'_{0101}\\
f'_{1010} & f'_{1001}
\end{bmatrix}=(\alpha\delta-\beta\gamma)^3uv,
\end{equation*}
Since $\alpha\delta-\beta\gamma=-d\neq 0$, we have
\[\det
\begin{bmatrix}
f'_{0000} & f'_{0011}\\
f'_{1100} & f'_{1111}
\end{bmatrix}
=-\det
\begin{bmatrix}
f'_{0110} & f'_{0101}\\
f'_{1010} & f'_{1001}
\end{bmatrix}\neq 0.
\]
Moreover, by (\ref{det-2}),
if $\left[\begin{smallmatrix}
\alpha & \beta\\
\gamma &\delta
\end{smallmatrix}\right]$
is diagonal (respectively, anti-diagonal) then
$\left[\begin{smallmatrix}
f'_{0110} & f'_{0101}\\
f'_{1010} & f'_{1001}
\end{smallmatrix}\right]$ is diagonal (respectively, anti-diagonal).

Now we choose suitable $u, v$ such that $uv\neq 0$ and
 $f'_{1100}=0$.
We let $u=\frac{1}{d}, v=-\frac{2}{d}$, then $1+du(1+dv)=0$
and therefore  $f'_{1100}=c[1+du(1+dv)]=0$.
 This implies that
  \[
 M_{x_1x_2, x_4x_3}(f')=\begin{bmatrix}
1 & 0 & 0 & 0\\
0 & \alpha' & \beta' & 0\\
0 & \gamma' & \delta' & 0\\
0 & 0 & 0 & d'
\end{bmatrix}\]
which satisfies the requirements of the lemma.

\item {(B)}
For $bc\neq 0$, we reduce  the proof to
Item (A) by constructing $f''$ such that
\[M_{x_1x_2, x_4x_3}(f'')=\begin{bmatrix}
f''_{0000} & 0 & 0 & 0\\
0 & f''_{0110} & f''_{0101} & 0\\
0 & f''_{1010} & f''_{1001} & 0\\
f''_{1100} & 0 & 0 & f''_{1111}
\end{bmatrix}\]
with $f''_{0000}\neq 0$,
$\det\left[\begin{smallmatrix}
f''_{0000} & 0\\
f''_{1100} & f''_{1111}
\end{smallmatrix}\right]=-
\det\left[\begin{smallmatrix}
f''_{0110} & f''_{0101}\\
 f''_{1010} & f''_{1001}
\end{smallmatrix}\right]\neq 0$,
and
if $\left[\begin{smallmatrix}
\alpha & \beta\\
\gamma &\delta
\end{smallmatrix}\right]$
is diagonal (respectively, anti-diagonal) then
$\left[\begin{smallmatrix}
f''_{0110} & f''_{0101}\\
f''_{1010} & f''_{1001}
\end{smallmatrix}\right]$ is also diagonal (respectively, anti-diagonal).

    We use two binary signatures $[1, 0, u], [1, 0, v]$, where $u, v\in\mathbb{C}$ and $uv\neq 0$.
  Connect the first variables of $[1, 0, u]$ and $[1, 0, v]$
  to the variables $x_2, x_4$ of $f$ respectively,
(see (\ref{4-ary-sig-matrix-modification-f2})
and (\ref{4-ary-sig-matrix-modification-f4})) we
get a gadget with signature
  \[
  h'(x_1, x_2, x_3, x_4)=
 \displaystyle\sum_{x'_2, x'_4\in\{0, 1\}}
 f(x_1, x'_2, x_3, x'_4)[1, 0, u](x'_2, x_2)[1, 0, v][x'_4, x_4],
  \]
  and
  \[
 M_{x_1x_2, x_4x_3}(h')=\begin{bmatrix}
1 & 0 & 0 & bv\\
0 & \alpha u & \beta uv & 0\\
0 & \gamma & \delta v & 0\\
cu & 0 & 0 & duv
\end{bmatrix}.
\]
In Figure~\ref{fig:gadget:connecting two f and one h},
 by assigning $f$ to the circle vertices and assigning $h'$ to the square
vertex,
we get a gadget
with the signature
  \[f''(x_1, x_2, x_3, x_4)=
 \displaystyle\sum_{y'_1, y'_2, x'_3, x'_4\in\{0, 1\}}
 f(x_1, x_2, x'_3, x'_4)h'(x'_4, x'_3, y'_2, y'_1)f(y'_1, y'_2, x_3, x_4).
 \]
We have
 \[
 M_{x_1x_2, x_4x_3}(f'')=\begin{bmatrix}
1 & 0 & 0 & b\\
0 & \alpha & \beta & 0\\
0 & \gamma & \delta & 0\\
c & 0 & 0 & d
\end{bmatrix}
\begin{bmatrix}
1 & 0 & 0 & bv\\
0 & \alpha u & \beta uv & 0\\
0 & \gamma & \delta v & 0\\
cu & 0 & 0 & duv
\end{bmatrix}
\begin{bmatrix}
1 & 0 & 0 & b\\
0 & \alpha & \beta & 0\\
0 & \gamma & \delta & 0\\
c & 0 & 0 & d
\end{bmatrix}.
 \]
Thus
 \begin{equation}\label{det-3}
 \begin{split}
 \begin{bmatrix}
f''_{0000} & f''_{0011}\\
f''_{1100} & f''_{1111}
\end{bmatrix}&=
\begin{bmatrix}
1 & b\\
c & d
\end{bmatrix}
\begin{bmatrix}
1 & bv\\
cu & duv
\end{bmatrix}
\begin{bmatrix}
1 & b\\
c & d
\end{bmatrix}\\
&=\begin{bmatrix}
1 + bc(u + v + duv) & b[1+bcu+dv(1+du)]\\
c(1+du)+cv(bc+d^2u) & bc(1+du)+dv(bc+d^2u)
\end{bmatrix},
\end{split}
\end{equation}
and \begin{equation}\label{det-4}
\begin{bmatrix}
f''_{0110} & f''_{0101}\\
f''_{1010} & f''_{1001}
\end{bmatrix}=
\begin{bmatrix}
\alpha & \beta\\
\gamma & \delta
\end{bmatrix}\begin{bmatrix}
\alpha u & \beta uv\\
\gamma & \delta v
\end{bmatrix}\begin{bmatrix}
\alpha & \beta\\
\gamma & \delta
\end{bmatrix}.
\end{equation}

By (\ref{det-3}), (\ref{det-4}), we have
\begin{equation*}
\det
\begin{bmatrix}
f''_{0000} & f''_{0011}\\
f''_{1100} & f''_{1111}
\end{bmatrix}=(d-bc)^3uv
~~~\mbox{and}~~~
\det
\begin{bmatrix}
f''_{0110} & f''_{0101}\\
f''_{1010} & f''_{1001}
\end{bmatrix}=(\alpha\delta-\beta\gamma)^3uv.
\end{equation*}
Then by $\alpha\delta-\beta\gamma=-(d-bc)\neq 0$, we have
\[\det
\begin{bmatrix}
f''_{0000} & f''_{0011}\\
f''_{1100} & f''_{1111}
\end{bmatrix}
=-\det
\begin{bmatrix}
f''_{0110} & f''_{0101}\\
f''_{1010} & f''_{1001}
\end{bmatrix}\neq 0.
\]
Moreover,
if $\left[\begin{smallmatrix}
\alpha & \beta\\
\gamma &\delta
\end{smallmatrix}\right]$
is diagonal (respectively, anti-diagonal) then
$\left[\begin{smallmatrix}
f''_{0110} & f''_{0101}\\
f''_{1010} & f''_{1001}
\end{smallmatrix}\right]$ is also diagonal (respectively, anti-diagonal).

Now  we choose suitable $u, v$ such that
 $f''_{0000}\neq 0, f''_{0011}=0$ and $uv\neq 0$.
Let $\Delta=u+v+duv$, then
\begin{align*}
f''_{0000}
&=1+bc\Delta,\\
f''_{0011}
&=b[1+(bc-d)u+d\Delta].
\end{align*}
\begin{itemize}
\item For $bc\neq 2d$, let $u=\frac{1}{d-bc}$ and $v=\frac{1}{bc-2d}$,
then $\Delta=0$ and $1+(bc-d)u=0$. Thus
 $f''_{0000}=1, f''_{0011}=0$.

\item For $bc=2d$, let $u=\frac{1}{\sqrt{2}d}$ and $v=-\frac{\sqrt{2}}{d}$,
 then $\Delta=-\frac{1+\sqrt{2}}{\sqrt{2}d}$ and
\begin{align*}
f''_{0000}
&=1+2d\Delta=-1-\sqrt{2}\neq 0, \\
f''_{0011}
&=b[1+du+d\Delta]=0.
\end{align*}
\end{itemize}
 \end{description}
\end{proof}

Now we prove  that if all
signatures in $\widehat{\mathcal{F}}$ satisfy the Parity Condition,
and contains a binary non-affine signature $[1, 0, x]$,
then either $\widehat{\mathcal{F}} \subseteq\mathscr{M}$,
or the problem $\operatorname{Pl-Holant}(\widehat{\mathcal{EQ}},
[1, 0, x], \widehat{\mathcal{F}})$ is \#$\operatorname{P}$-hard.
Note that this is consistent with the final
dichotomy Theorem~\ref{main-dichotomy-thm}. If
$\widehat{\mathcal{F}}$ satisfies the Parity Condition,
then  $\mathcal{F} \subseteq \mathscr{P}$
would imply
$\mathcal{F}  \subseteq \mathscr{A}$
(see Proposition~\ref{parity-product-affine}).
But it contains $[1, 0, x] \not \in \mathscr{A}$,
and also $[1, 0, x] \in \mathscr{M}$ and $\widehat{\mathcal{EQ}}
\subset \mathscr{M}$, therefore
the only tractable case is $\widehat{\mathcal{F}} \subseteq
 \mathscr{M}$.

\begin{theorem}\label{With-arity-4-non-matchgate-signature-and-[1,0,x]}
Suppose all signatures in $\widehat{\mathcal{F}}$
satisfy the Parity Condition, and suppose $[1, 0, x] \not \in \mathscr{A}$.
Then either $\widehat{\mathcal{F}}\subseteq\mathscr{M}$,
or $\operatorname{Pl-Holant}(\widehat{\mathcal{EQ}}, [1, 0, x], \widehat{\mathcal{F}})$ is \#$\operatorname{P}$-hard.
\end{theorem}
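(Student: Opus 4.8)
The plan is to reduce, in stages, from an arbitrary non-matchgate set $\widehat{\mathcal{F}}$ with $[1,0,x]\not\in\mathscr{A}$ to a concrete symmetric signature to which the known symmetric dichotomy Theorem~\ref{heng-tyson-dichotomy-pl-csp} (in its $\PlHolant$ form) applies. If $\widehat{\mathcal{F}}\subseteq\mathscr{M}$ we are in the tractable case, so assume $\widehat{\mathcal{F}}\not\subseteq\mathscr{M}$. First I would invoke Lemma~\ref{constructing-[1,0,x]-heng} to extract, from the given $[1,0,x]\not\in\mathscr{A}$, both $[0,1]^{\otimes 2}$ and $[1,0,z]$ for every $z\in\mathbb{C}$ (and in particular $[1,0,-1]$). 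Having $[0,1]^{\otimes 2}$ at hand is exactly the hypothesis needed for the arity-reduction machinery for non-matchgate signatures: by Theorem~\ref{arity-reduction-matchgate-signature} there is some $h\not\in\mathscr{M}$ of arity $4$ constructible in $\operatorname{Pl-Holant}(\widehat{\mathcal{EQ}},\widehat{\mathcal{F}})$, and by Lemma~\ref{[0,1]-EQ-hat-wight-0-neq-0} we may take $h$ to satisfy the even Parity Condition, so $h$ has signature matrix of the form (\ref{even-arity-4-sec5.2}) with $h_{0000}=1$, i.e.\ outer matrix $\left[\begin{smallmatrix}1&b\\c&d\end{smallmatrix}\right]$ and inner matrix $\left[\begin{smallmatrix}\alpha&\beta\\\gamma&\delta\end{smallmatrix}\right]$, and by Lemma~\ref{matchgate-identity-for-arity-4} the non-matchgate condition is precisely $\det\left[\begin{smallmatrix}1&b\\c&d\end{smallmatrix}\right]\ne\det\left[\begin{smallmatrix}\alpha&\beta\\\gamma&\delta\end{smallmatrix}\right]$.

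Next I would use the two ``normal form'' lemmas just proved. Lemma~\ref{inner-outer} lets me freely swap the inner and outer $2\times 2$ blocks and reverse column order, so up to these symmetries I can arrange that one of the two blocks is the more convenient one. The key dichotomy inside the proof is on the quantity $\det(\text{outer})-\det(\text{inner})$ versus $\det(\text{outer})+\det(\text{inner})$. When $\det(\text{outer})=-\det(\text{inner})\ne 0$, Lemma~\ref{how-to-vanish-b-c} applies: it produces $f'$ with a diagonal outer block $\left[\begin{smallmatrix}1&0\\0&d'\end{smallmatrix}\right]$ and inner block $\left[\begin{smallmatrix}\alpha'&\beta'\\\gamma'&\delta'\end{smallmatrix}\right]$ still satisfying $\det(\text{outer})=-\det(\text{inner})\ne 0$, preserving diagonal/anti-diagonal structure of the inner block. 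From such an $f'$ one can then either construct $(=_4)$ or the crossover function $\mathfrak{X}$ by the interpolation and gadget arguments analogous to Lemma~\ref{interpolation-equality-4}: $\mathfrak{X}$ immediately gives \#P-hardness via the non-planar dichotomy Theorem~\ref{non-planar-csp-dichotomy}', while $(=_4)$ triggers the chain of equivalences to $\PlCSP^2(\widehat{\mathcal{EQ}},\widehat{\mathcal{F}})$ and then \#P-hardness by Theorem~\ref{dichotomy-csp-2} (noting $\widehat{\mathcal{F}}\not\subseteq\mathscr{A}$ since $[1,0,x]\not\in\mathscr{A}$). The remaining regime, where $\det(\text{outer})$ and $\det(\text{inner})$ are unequal but not negatives of each other, I would handle by first massaging $h$ with the available $[1,0,z]$'s to land in the $\det(\text{outer})=-\det(\text{inner})$ situation, or, failing that, by contracting $h$ against itself and against $\widehat{\mathcal{EQ}}$ signatures to pull out a symmetric arity-$3$ or arity-$2$ signature outside $\mathscr{A}\cup\widehat{\mathscr{P}}\cup\mathscr{M}$, then closing by Theorem~\ref{heng-tyson-dichotomy-pl-csp}. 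At several points I would also use Lemma~\ref{[0,1]-tensor-2-equal-[0,1]} and Lemma~\ref{how-to-flip-two-bits-by-[0,1,0]-tensor-2} to move between entries while keeping parity and planarity.

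The main obstacle I anticipate is the case analysis on the structure of the inner matrix of $h$ after the normalization of Lemma~\ref{how-to-vanish-b-c}: the inner block could be diagonal, anti-diagonal, or genuinely full rank with no special structure, and in each subcase the gadget that extracts $(=_4)$ or $\mathfrak{X}$ is different, and one must check that the relevant $2\times 2$ coefficient matrix in the interpolation (as in Lemma~\ref{interpolation-equality-4}) has full rank — this is where the condition $\det(\text{outer})\ne\det(\text{inner})$ is spent, and verifying it survives each construction step (the cube-of-determinant behaviour seen in Lemma~\ref{how-to-vanish-b-c}) is the delicate bookkeeping. A secondary subtlety is ensuring that when I produce an arity-reduced symmetric signature I really get something outside \emph{all three} tractable classes $\mathscr{A},\widehat{\mathscr{P}},\mathscr{M}$ simultaneously; since $\widehat{\mathcal{F}}$ satisfies Parity, membership in $\widehat{\mathscr{P}}$ collapses into $\mathscr{A}$ (Proposition~\ref{parity-product-affine}), which removes one class and makes the final appeal to Theorem~\ref{heng-tyson-dichotomy-pl-csp} or Theorem~\ref{dichotomy-csp-2} clean. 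I expect the write-up to mirror the structure of the arity-$4$ analysis already used for the symmetric case, but with the asymmetry confined entirely to the inner/outer block formalism, so that once $h$ is in the normal form of Lemma~\ref{how-to-vanish-b-c} the remaining work is essentially the symmetric argument applied to the pair (outer block, inner block).
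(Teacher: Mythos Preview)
Your overall architecture matches the paper's: get an arity-$4$ even-parity $f\notin\mathscr{M}$ via Theorem~\ref{arity-reduction-matchgate-signature} and Lemma~\ref{[0,1]-EQ-hat-wight-0-neq-0}, normalize the outer/inner blocks using Lemmas~\ref{inner-outer} and~\ref{how-to-vanish-b-c}, and then extract either $\mathfrak{X}$ or $(=_4)$ to finish via the non-planar dichotomy or the $\PlCSP^2$ route. Two remarks on the $(=_4)$ endgame: first, $[1,0,x]$ is not assumed to lie in $\widehat{\mathcal{F}}$, so ``$\widehat{\mathcal{F}}\not\subseteq\mathscr{A}$ since $[1,0,x]\notin\mathscr{A}$'' is not literally correct---you must apply Theorem~\ref{dichotomy-csp-2} to $\widehat{\mathcal{F}}\cup\{[1,0,x]\}$; second, the paper takes a shorter path here, invoking only the \emph{symmetric} $\PlCSP^2$ dichotomy (Theorem~\ref{heng-tyson-dichotomy-pl-csp2}) on $\{[1,0],[1,0,1,0],[1,0,x]\}$, which already witnesses failure of all five tractable classes.

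The real gap is your plan for the regime $\det A\neq\pm\det B$. Neither of your two proposed moves works. Modifying $f$ by binaries $[1,0,z]$ on any subset of variables scales $\det A$ and $\det B$ by the \emph{same} monomial in the $z$'s (check the formulas~(\ref{4-ary-sig-matrix-modification-f1})--(\ref{4-ary-sig-matrix-modification-f4})), so you can never land in $\det A=-\det B$ this way. And your fallback, ``pull out a symmetric arity-$2$ or arity-$3$ signature outside $\mathscr{A}\cup\widehat{\mathscr{P}}\cup\mathscr{M}$'', is impossible in this setting: everything you build satisfies the Parity Condition, and by Lemma~\ref{matchgate-identity-for-arity-4} any parity signature of arity $\le 3$ is automatically in $\mathscr{M}$. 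So there is no low-arity escape hatch; you must stay at arity $4$ and manufacture $(=_4)$ or $\mathfrak{X}$ directly.

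What the paper does in this regime is a concrete self-composition. After arranging $\det A\neq 0$ (swap blocks via Lemma~\ref{inner-outer} if needed) and $d\neq 0$ (another self-composition if $d=0$), connect two copies of $f$ through the binary $[1,0,-1/d]$ on one middle edge; the resulting outer block becomes the diagonal matrix $\operatorname{diag}(1-bc/d,\,bc-d)$ with both entries nonzero, while the inner block is a product of the original inner block with a modified copy. This gives the normal form $b'=c'=0$, $d'\neq 0$, \emph{without} needing $\det A=-\det B$. From here the paper splits on the shape of the inner block ($\alpha=\delta=0$; $\alpha\delta\neq 0$; exactly one of $\alpha,\delta$ zero) and in each case builds a further gadget whose $2\times 2$ coefficient matrix has full rank precisely because $d^2\neq(\alpha\delta-\beta\gamma)^2$, enabling Lemma~\ref{interpolation-equality-4} to interpolate $(=_4)$. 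Your obstacle paragraph correctly anticipates this case-split, but the specific diagonalizing construction is the missing ingredient.
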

\begin{proof}
By Lemma~\ref{constructing-[1,0,x]-heng}, we can construct
 $[0, 1]^{\otimes 2}$ and  $[1, 0, z]$  from
$[1, 0, x]$  for all $z\in\mathbb{C}$.

Suppose $\widehat{\mathcal{F}}\nsubseteq\mathscr{M}$.
By Theorem~\ref{arity-reduction-matchgate-signature},
 we can construct $f\notin\mathscr{M}$ and $f$ has arity 4,
such that
\[\operatorname{Pl-Holant}(\widehat{\mathcal{EQ}}, [1, 0, x], f,
 \widehat{\mathcal{F}})
\le_{\rm T}
\operatorname{Pl-Holant}(\widehat{\mathcal{EQ}}, [1, 0, x],
 \widehat{\mathcal{F}}).\]
Moreover, by Lemma~\ref{[0,1]-EQ-hat-wight-0-neq-0}, we can assume
 that $f$ satisfies the even Parity Condition and $f_{0000}=1$, i.e.,
\[M_{x_1x_2, x_4x_3}(f)=\begin{bmatrix}
1 & 0 & 0 & b\\
0 & \alpha & \beta & 0\\
0 & \gamma & \delta & 0\\
c & 0 & 0 & d
\end{bmatrix}.\]
Let
$A=\left[\begin{smallmatrix}
1 & b\\
c & d
\end{smallmatrix}\right],$
$B=\left[\begin{smallmatrix}
\alpha & \beta\\
\gamma & \delta
\end{smallmatrix}\right],$
then $f\notin\mathscr{M}$ iff
$\det A \neq \det B$
by Lemma~\ref{matchgate-identity-for-arity-4}.
We may assume  that $\det A \not =0$.
If $\det A=0$, then $\det B\neq 0$, which implies that
 $\alpha\neq 0$ or $\beta\neq 0$.
By Lemma~\ref{inner-outer}, we may switch the inner and outer matrices,
and reverse the order of the columns if necessary.
Hence  we may assume that $\det A \not =0$.

\noindent
{\bf Claim}:  We can construct $f' \not \in \mathscr{M}$ which
has the form
\[M_{x_1x_2, x_4x_1}(f')=\begin{bmatrix}
1 & 0 & 0 & 0\\
0 & \alpha' & \beta' & 0\\
0 & \gamma' & \delta' & 0\\
0 & 0 & 0 & d'
\end{bmatrix},\]
such that $d' \not =0$ and
\[\operatorname{Pl-Holant}(\widehat{\mathcal{EQ}}, [1, 0, x], f',
 \widehat{\mathcal{F}})
\le_{\rm T}
\operatorname{Pl-Holant}(\widehat{\mathcal{EQ}}, [1, 0, x],
 \widehat{\mathcal{F}}).\]
If $\det A =-\det B$, then since $\det A \not =0$, we may
apply Lemma~\ref{how-to-vanish-b-c}, and the Claim is proved.
So we may assume $\det A \not = -\det B$.
 Together with the non-matchgate condition, we may assume that
$\det A
\neq  \pm \det B$.
 If $d=0$, then we have $bc\neq 0$ by $\det A\neq 0$.
Use a binary $[1, 0, -2/(bc)]$ to modify the second variable of $f$ (as in
the proof of Lemma~\ref{how-to-vanish-b-c}, see
(\ref{4-ary-sig-matrix-modification-f2}))
and then connect a copy of $f$ with the modified $f$, we get
 \[h(x_1, x_2, x_3, x_4)=\displaystyle\sum_{x'_3, x''_3, x'_4\in\{0, 1\}}f(x_1, x_2, x'_3, x'_4)f(x'_4, x''_3, x_3, x_4)[1, 0, -\frac{2}{bc}](x'_3, x''_3).\]
Then \[M_{x_1x_2, x_4x_3}(h)=\begin{bmatrix}
1 & 0 & 0 & b\\
0 & \alpha & \beta & 0\\
0 & \gamma & \delta & 0\\
c & 0 & 0 & 0
\end{bmatrix}
\begin{bmatrix}
1 & 0 & 0 & b\\
0 & -\frac{2\alpha}{bc} & -\frac{2\beta}{bc} & 0\\
0 & \gamma & \delta & 0\\
-\frac{2}{b} & 0 & 0 & 0
\end{bmatrix}=
\begin{bmatrix}
-1 & 0 & 0 & b\\
0 & \alpha'' & \beta'' & 0\\
0 & \gamma'' & \delta'' & 0\\
c & 0 & 0 & bc
\end{bmatrix}.\]
where \[\begin{bmatrix}
\alpha'' & \beta''\\
\gamma'' & \delta''\end{bmatrix}=\begin{bmatrix}
\alpha & \beta\\
\gamma & \delta\end{bmatrix}\begin{bmatrix}
-\frac{2\alpha}{bc} & -\frac{2\beta}{bc}\\
\gamma & \delta\end{bmatrix}.\]
 Thus
$\det\left[\begin{smallmatrix}
\alpha'' & \beta''\\
\gamma'' & \delta''\end{smallmatrix}\right]=-\frac{2(\alpha\delta-\beta\gamma)^2}{bc}$.
So \[\det\begin{bmatrix}
\alpha'' & \beta''\\
\gamma'' & \delta''\end{bmatrix}\neq \det\begin{bmatrix}
-1 & b\\
c & bc\end{bmatrix}\]
by $b^2c^2\neq (\alpha\delta-\beta\gamma)^2$,
which is the same as $\det A \not = \pm \det B$.
This implies that $h\notin\mathscr{M}$.
If
\begin{equation}\label{inner-outer-not-eq-to-neg-for-h}
\det\begin{bmatrix}
\alpha'' & \beta''\\
\gamma'' & \delta''\end{bmatrix}=-\det\begin{bmatrix}
-1 & b\\
c & bc\end{bmatrix},
\end{equation}
then this quantity in (\ref{inner-outer-not-eq-to-neg-for-h})
is also nonzero  because $bc \not =0$.
Then  we can finish the proof of the Claim by Lemma~\ref{how-to-vanish-b-c}.
Therefore we may assume \[\det\begin{bmatrix}
\alpha'' & \beta''\\
\gamma'' & \delta''\end{bmatrix}
\neq
\pm \det\begin{bmatrix}
-1 & b\\
c & bc\end{bmatrix}.\]
To summarize on $h$ when $d=0$ in $f$, we have $h \notin\mathscr{M}$
satisfying the even Parity Condition,
its outer determinant is nonzero,
$h_{0000}\neq 0, h_{1111}\neq 0$,
 and the squares of the determinants of the
outer matrix and inner matrix are not equal.
We may substitute $h$ in place of $f$. To simplify notations,
we may assume that in the expression for $f$,
we have $\det A \not = 0$,  $\det A \neq  \pm \det B$ and $d\neq 0$.

Using  the same construction as above but with the binary $[1, 0, -1/d]$
instead, we get
\[f'(x_1, x_2, x_3, x_4)=\displaystyle\sum_{x'_3, x''_3, x'_4\in\{0, 1\}}f(x_1, x_2, x'_3, x'_4)f(x'_4, x''_3, x_3, x_4)
[1, 0, -\frac{1}{d}](x'_3, x''_3).\]
Then \[M_{x_1x_2, x_4x_3}(f')=\begin{bmatrix}
1 & 0 & 0 & b\\
0 & \alpha & \beta & 0\\
0 & \gamma & \delta & 0\\
c & 0 & 0 & d
\end{bmatrix}
\begin{bmatrix}
1 & 0 & 0 & b\\
0 & -\frac{\alpha}{d} & -\frac{\beta}{d} & 0\\
0 & \gamma & \delta & 0\\
-\frac{c}{d} & 0 & 0 & -1
\end{bmatrix}=
\begin{bmatrix}
1-\frac{bc}{d} & 0 & 0 & 0\\
0 & \alpha' & \beta' & 0\\
0 & \gamma' & \delta' & 0\\
0 & 0 & 0 & bc-d
\end{bmatrix},\]
where \[\begin{bmatrix}
\alpha' & \beta'\\
\gamma' & \delta'\end{bmatrix}=\begin{bmatrix}
\alpha & \beta\\
\gamma & \delta\end{bmatrix}\begin{bmatrix}
-\frac{\alpha}{d} & -\frac{\beta}{d}\\
\gamma & \delta\end{bmatrix}.\]
 Thus
$\det\left[\begin{smallmatrix}
\alpha' & \beta'\\
\gamma' & \delta'\end{smallmatrix}\right]=-\frac{(\alpha\delta-\beta\gamma)^2}{d}$.
So \[\det\begin{bmatrix}
\alpha' & \beta'\\
\gamma' & \delta'\end{bmatrix}\neq \det\begin{bmatrix}
1-\frac{bc}{d} & 0\\
0 & bc-d\end{bmatrix}\]
by $\det A\neq  \pm \det B$.
This implies that $f'\notin\mathscr{M}$.
Note that $f'_{0000} =1-\frac{bc}{d} = \frac{\det A}{d} \neq 0$,
so we can renormalize $f'_{0000}$ to 1.
Also $f'_{1111} = bc-d = -\det A \not =0$.
 Thus $f'$ satisfies the requirement of the Claim.
This finishes the proof of the Claim.
The Claim shows that we may assume that $b=c=0$ and
$d \not =0$ in
\[M_{x_1x_2, x_4x_3}(f)=\begin{bmatrix}
1 & 0 & 0 & 0\\
0 & \alpha & \beta & 0\\
0 & \gamma & \delta & 0\\
0 & 0 & 0 & d
\end{bmatrix}.\]

In the following.
we can finish the proof of the theorem by two alternatives:
\begin{description}
\item{(A)} If we can construct the crossover function $\mathfrak{X}$ such that
\[\operatorname{Pl-Holant}(\widehat{\mathcal{EQ}}, [1, 0, x], \mathfrak{X}, \widehat{\mathcal{F}})
\le_{\rm T}
\operatorname{Pl-Holant}(\widehat{\mathcal{EQ}}, [1, 0, x], \widehat{\mathcal{F}}),\]
then the presence of $\mathfrak{X}$ reduces a general (non-planar)
Holant problem to a $\operatorname{Pl-Holant}$ problem
\[\operatorname{Holant}(\widehat{\mathcal{EQ}}, [1, 0, x])
\le_{\rm T}
\operatorname{Pl-Holant}(\widehat{\mathcal{EQ}}, [1, 0, x], \mathfrak{X}, \widehat{\mathcal{F}}).\]
So
we have \[\operatorname{Holant}(\widehat{\mathcal{EQ}}, [1, 0, x])
\le_{\rm T}
\operatorname{Pl-Holant}(\widehat{\mathcal{EQ}}, [1, 0, x], \widehat{\mathcal{F}}),\]
and we can apply Theorem~\ref{non-planar-csp-dichotomy}$'$.
We have $[1, 0, x]\notin\widehat{\mathscr{P}}$
(this can be directly verified
or use Proposition~\ref{parity-product-affine}),
 and $[1, 0, x]\notin\mathscr{A}$.
 Thus by Theorem~\ref{non-planar-csp-dichotomy}$'$,  $\operatorname{Holant}(\widehat{\mathcal{EQ}}, [1, 0, x])$ is \#P-hard.
Hence $\operatorname{Pl-Holant}(\widehat{\mathcal{EQ}}, [1, 0, x], \widehat{\mathcal{F}})$ is \#P-hard.
\item{(B)} If we can construct $(=_4)$ such that
\[\operatorname{Pl-Holant}(\widehat{\mathcal{EQ}}, [1, 0, x], (=_4), \widehat{\mathcal{F}})
\le_{\rm T}
\operatorname{Pl-Holant}(\widehat{\mathcal{EQ}}, [1, 0, x], \widehat{\mathcal{F}}),\]
then by
Lemma~\ref{constructing-even-equality-by-equality-4},
\[\operatorname{Pl-Holant}(\widehat{\mathcal{EQ}}, \mathcal{EQ}_2, [1, 0, x], \widehat{\mathcal{F}})
\le_{\rm T}
\operatorname{Pl-Holant}(\widehat{\mathcal{EQ}}, [1, 0, x], \widehat{\mathcal{F}}).\]
It follows that
\[\operatorname{Pl-Holant}(\mathcal{EQ}_2, [1, 0], [1, 0, 1, 0], [1, 0, x])
\le_{\rm T}
\operatorname{Pl-Holant}(\widehat{\mathcal{EQ}}, [1, 0, x], \widehat{\mathcal{F}})\]
since $[1, 0], [1, 0, 1, 0]\in\widehat{\mathcal{EQ}}$.
Note that $[1, 0] \not \in \widehat{\mathscr{M}} \cup
\widehat{\mathscr{M}}^{\dagger}$ (Proposition~\ref{matchgate:affine:hat}),
$[1,0,1,0]  \not \in \mathscr{P} \cup
\mathscr{A}^{\dagger}$ (Corollary~\ref{[1,0,1,0]:not:prod}
and Proposition~\ref{matchgate:affine:hat}),
and we are given $[1, 0, x]  \not \in \mathscr{A}$, thus,
for the symmetric signature set
$\mathcal{G} = \{[1, 0], [1, 0, 1, 0], [1, 0, x]\}$
\[ \mathcal{G} \nsubseteq \mathscr{P},
~~\mathcal{G} \nsubseteq\mathscr{A},
~~\mathcal{G} \nsubseteq\mathscr{A}^{\dagger},
~~\mathcal{G} \nsubseteq\widehat{\mathscr{M}},
~~\mathcal{G} \nsubseteq\widehat{\mathscr{M}}^{\dagger}.\]
By
Theorem~\ref{heng-tyson-dichotomy-pl-csp2}$'$
on Pl-CSP$^2$ problems for symmetric
signatures, $\operatorname{Pl-Holant}(\mathcal{EQ}_2, \mathcal{G})$
is \#P-hard.
It follows that
 $\operatorname{Pl-Holant}(\widehat{\mathcal{EQ}}, [1, 0, x], \widehat{\mathcal{F}})$ is \#P-hard.
\end{description}

In the following,  in Case 1.  we  prove the theorem
  when $\det A = -\det B$.
Then in Case 2. we prove the theorem
  when $\det A \neq -\det B$.
Since $f \not \in \mathscr{M}$ we are given
$\det A \neq \det B$. So Case 2. is equivalent to
$\det A \neq  \pm \det B$.

\begin{enumerate}
\item Suppose $\det A=-\det B$. Since we have $\det A \neq 0$,
both $\det A$ and $\det B  \neq 0$.
At least one of $\alpha$ or $\beta$ is nonzero by $\det B \neq 0$.
\begin{itemize}
\item Suppose  $\alpha\neq 0$.
By Lemma~\ref{inner-outer}, we also have the 4-ary signature $g$ such that
\[M_{x_1x_2, x_4x_3}(g)=\begin{bmatrix}
\alpha & 0 & 0 & \beta\\
0 & 1 & 0 & 0\\
0 & 0 & d & 0\\
\gamma & 0 & 0 & \delta
\end{bmatrix}.\]
Since $\alpha\neq 0$, we can assume that $\alpha=1$ by normalizing.
Then we may write
\[M_{x_1x_2, x_4x_3}(g)=\begin{bmatrix}
1 & 0 & 0 & \beta\\
0 & a & 0 & 0\\
0 & 0 & d & 0\\
\gamma & 0 & 0 & \delta
\end{bmatrix},\] where $\det\left[\begin{smallmatrix}
a & 0 \\
0 & d
\end{smallmatrix}\right]=-\det
\left[\begin{smallmatrix}
1  & \beta\\
\gamma  & \delta
\end{smallmatrix}\right]\neq 0$.
Then by Lemma~\ref{how-to-vanish-b-c}, we have $g'$ such that
\[M_{x_1x_2, x_4x_3}(g')=\begin{bmatrix}
1 & 0 & 0 & 0\\
0 & a' & 0 & 0\\
0 & 0 & d' & 0\\
0 & 0 & 0 & \delta'
\end{bmatrix},\]
where $\det\left[\begin{smallmatrix}
a' & 0 \\
0 & d'
\end{smallmatrix}\right]=-\det
\left[\begin{smallmatrix}
1  & 0\\
0  & \delta'
\end{smallmatrix}\right]\neq 0$.
Thus $\left[\begin{smallmatrix}
1 & d'\\
a' & \delta'
\end{smallmatrix}\right]$ has full rank. Then
by
 Lemma~\ref{interpolation-equality-4},
we have $(=_4)$ by interpolating and we are done by alternative (B).

\item Suppose $\beta\neq 0$,  by Lemma~\ref{inner-outer}, we have $h$ such that \[M_{x_1x_2, x_4x_3}(h)=\begin{bmatrix}
\beta & 0 & 0 & \alpha\\
0 & 0 & 1 & 0\\
0 & d & 0 & 0\\
\delta & 0 & 0 & \gamma
\end{bmatrix}.\]
Since $\beta\neq 0$, we can assume that $\beta=1$ by normalizing.
Then we may write
\[M_{x_1x_2, x_4x_3}(h)=\begin{bmatrix}
1 & 0 & 0 & \alpha\\
0 & 0 & a & 0\\
0 & d & 0 & 0\\
\delta & 0 & 0 & \gamma
\end{bmatrix},\] where $\det\left[\begin{smallmatrix}
0 & a \\
d & 0
\end{smallmatrix}\right]=-\det
\left[\begin{smallmatrix}
1  & \alpha\\
\delta  & \gamma
\end{smallmatrix}\right] \not = 0$.
Then by Lemma~\ref{how-to-vanish-b-c}, we have $h'$ such that
\[M_{x_1x_2, x_4x_3}(h')=\begin{bmatrix}
1 & 0 & 0 & 0\\
0 & 0 & a' & 0\\
0 & d' & 0 & 0\\
0 & 0 & 0 & \gamma'
\end{bmatrix},\]
where $\det\left[\begin{smallmatrix}
0 & a' \\
d' & 0
\end{smallmatrix}\right]=-\det
\left[\begin{smallmatrix}
1  & 0\\
0 & \gamma'
\end{smallmatrix}\right]\neq 0$.
By Lemma~\ref{constructing-[1,0,x]-heng}
and using $[1, 0, x]$, we have
$[1, 0, (d')^{-1}]$ and  $[1, 0, (a')^{-1}]$.
Modifying $h'$  on the first and second variables by
$[1, 0, (d')^{-1}]$ and  $[1, 0, (a')^{-1}]$ respectively
((see (\ref{ternary-sig-matrix-modification-f1})
and (\ref{ternary-sig-matrix-modification-f2}))
gives the crossover function since $\gamma' = a' d'$:
\[\mathfrak{X}=\displaystyle\sum_{x'_1, x'_2\in\{0, 1\}}h'(x'_1, x'_2, x_3, x_4)[1, 0, (d')^{-1}](x'_1, x_1)[1, 0, (a')^{-1}](x'_2, x_2).\]
Then we are done by alternative (A).
\end{itemize}

\item Suppose $\det A\neq  -\det B$.
Since $f \not \in \mathscr{A}$ we have $\det A\neq  \pm \det B$.
So $d^2\neq (\alpha\delta-\beta\gamma)^2$.

\begin{itemize}
\item If $\alpha=\delta=0$, then $d^2 \neq \beta^2\gamma^2$.
We can construct
\[\tilde{f}(x_1, x_2, x_3, x_4)=\displaystyle\sum_{x'_3, x'_4\in\{0, 1\}}f(x_1, x_2, x'_3, x'_4)f(x'_3, x'_4, x_3, x_4).\]
Then \[M_{x_1x_2, x_4x_3}(\tilde{f})=\begin{bmatrix}
1 & 0 & 0 & 0\\
0 & 0 & \beta & 0\\
0 & \gamma & 0 & 0\\
0 & 0 & 0 & d
\end{bmatrix}
\begin{bmatrix}
1 & 0 & 0 & 0\\
0 & 0 & \beta & 0\\
0 & \gamma & 0 & 0\\
0 & 0 & 0 & d
\end{bmatrix}=
\begin{bmatrix}
1 & 0 & 0 & 0\\
0 & \beta\gamma & 0 & 0\\
0 & 0 & \beta\gamma & 0\\
0 & 0 & 0 & d^2
\end{bmatrix}.\]
Note that $\left[\begin{smallmatrix}1 & \beta\gamma\\
\beta\gamma & d^2\end{smallmatrix}\right]$
has full rank.
Then by  Lemma~\ref{interpolation-equality-4}, we have $(=_4)$ by interpolation. So we are done by  alternative (B).

\item If $\alpha\neq 0$ and $\delta\neq 0$, for any $u \in \mathbb{C}$
 we can construct
\[\hat{h}(x_1, x_2, x_3, x_4)=\displaystyle\sum_{x'_2\in\{0, 1\}}f(x_1, x'_2, x_3, x_4)
[1, 0, u](x'_2, x_2).\]
Then \[M_{x_1x_2, x_4x_3}(\hat{h})=\begin{bmatrix}
1 & 0 & 0 & 0\\
0 & \alpha u & \beta u & 0\\
0 & \gamma & \delta  &  0\\
0 & 0 & 0 & du
\end{bmatrix}.\]
Then we  can construct
\[\hat{f}(x_1, x_2, x_3, x_4)=\displaystyle\sum_{x'_3, x'_4\in\{0, 1\}}f(x_1, x_2, x'_3, x'_4)\hat{h}(x'_3, x'_4, x_3, x_4).\]
Then \begin{equation*}
\begin{split}
M_{x_1x_2, x_4x_3}(\hat{f})&=\begin{bmatrix}
1 & 0 & 0 & 0\\
0 & \alpha & \beta & 0\\
0 & \gamma & \delta & 0\\
0 & 0 & 0 & d
\end{bmatrix}
\begin{bmatrix}
1 & 0 & 0 & 0\\
0 & \alpha u & \beta u & 0\\
0 & \gamma & \delta  &  0\\
0 & 0 & 0 & du
\end{bmatrix}\\
&=
\begin{bmatrix}
1 & 0 & 0 & 0\\
0 & \alpha^2u+\beta\gamma & \beta (\delta+\alpha u) & 0\\
0 & \gamma(\delta+\alpha u) & \beta\gamma u+\delta^2 & 0\\
0 & 0 & 0 & d^2u
\end{bmatrix}.
\end{split}
\end{equation*}
Choose $u=-\frac{\delta}{\alpha} \not = 0$, then $\beta(\delta+\alpha u)=\gamma(\delta+\alpha u)=0$ and
\[M_{x_1x_2, x_4x_3}(\hat{f})=\begin{bmatrix}
1 & 0 & 0 & 0\\
0 & -(\alpha\delta -\beta\gamma) & 0 & 0\\
0 & 0 & \frac{\delta}{\alpha}(\alpha\delta-\beta\gamma) & 0\\
0 & 0 & 0 & -\frac{d^2\delta}{\alpha}
\end{bmatrix}.\]
Note that $\left[\begin{smallmatrix}
1 & \frac{\delta}{\alpha}(\alpha\delta-\beta\gamma)\\
-(\alpha\delta-\beta\gamma) & -\frac{d^2\delta}{\alpha}
\end{smallmatrix}\right]$ has full rank since
$d^2\neq (\alpha\delta-\beta\gamma)^2$ by $(\det A)^2\neq (\det B)^2$.
Then by Lemma~\ref{interpolation-equality-4}
we have $(=_4)$ by interpolation and we are done by alternative (B).

\item If $\alpha\neq 0$ and $\delta=0$, then after a rotation (Figure~\ref{fig:rotate_asymmetric_signature}) clockwise by 90$^\circ$ we have
\[M_{x_2x_3, x_1x_4}(f)=\begin{bmatrix}
1 & 0 & 0 & 0\\
0 & 0 & \gamma & 0\\
0 & \beta & 0 & 0\\
\alpha & 0 & 0 & d
\end{bmatrix}.\]
We can construct
 \[\bar{h}(x_1, x_2, x_3, x_4)=\displaystyle\sum_{ x'_3\in\{0, 1\}}f(x_1, x_2, x'_3, x_4)
[1, 0, -\frac{1}{d}](x'_3, x_3).\]
Then
(see (\ref{4-ary-sig-matrix-modification-f2}))
 \[M_{x_2x_3, x_1x_4}(\bar{h})=\begin{bmatrix}
1 & 0 & 0 & 0\\
0 & 0 & -\frac{\gamma}{d} & 0\\
0 & \beta & 0 & 0\\
-\frac{\alpha}{d} & 0 & 0 & -1
\end{bmatrix}.
\]
With this we can further construct
\[\bar{f}(x_1, x_2, x_3, x_4)=\displaystyle\sum_{x'_1, x'_4\in\{0, 1\}}f(x'_1, x_2, x_3, x'_4)\bar{h}(x_1, x'_1, x'_4, x_4),\]
with \[M_{x_2x_3, x_1x_4}(\bar{f})=\begin{bmatrix}
1 & 0 & 0 & 0\\
0 & 0 & \gamma & 0\\
0 & \beta & 0 & 0\\
\alpha & 0 & 0 & d
\end{bmatrix}
\begin{bmatrix}
1 & 0 & 0 & 0\\
0 & 0 & -\frac{\gamma}{d} & 0\\
0 & \beta & 0 & 0\\
-\frac{\alpha}{d} & 0 & 0 & -1
\end{bmatrix}=
\begin{bmatrix}
1 & 0 & 0 & 0\\
0 & \beta\gamma & 0 & 0\\
0 & 0 & -\frac{\beta\gamma}{d} & 0\\
0 & 0 & 0 & -d
\end{bmatrix}.\]
Note that
$\left[\begin{smallmatrix}
 1 & \beta\gamma\\
 -\frac{\beta\gamma}{d} & -d
\end{smallmatrix}\right]$
has full rank since $d^2\neq \beta^2\gamma^2$ by $(\det A)^2\neq (\det B)^2$.
By Lemma~\ref{interpolation-equality-4},
 we have $(=_4)$ by interpolation  and we are done by  alternative (B).

\item If $\alpha=0$ and $\delta\neq 0$, then
the proof is symmetric by first rotating $f$ counterclockwise by 90$^\circ$
(Figure~\ref{fig:rotate_asymmetric_signature}) and then
switching the roles of $\alpha$ and $\delta$ in the previous item.
\end{itemize}
\end{enumerate}
\end{proof}

\subsection{A Dichotomy When $\widehat{\mathcal{F}}$ Satisfies
Parity}
\begin{lemma}\label{from-[1,0,-1]-to-[0,1]-tensor-2}
If ${\mathcal{F}}$ contains a 4-ary signature $f \not \in \mathscr{M}$
of the form
\[M_{x_1x_2, x_4x_3}(f)=\begin{bmatrix}
1 & 0 & 0 & b\\
0 & \alpha & \beta & 0\\
0 & \gamma & \delta & 0\\
c & 0 & 0 & d
\end{bmatrix}.\]
Furthermore,
suppose that at least one of $\{b, c, \alpha, \delta\}$ is nonzero.
Then we can construct $[0, 1]^{\otimes 2}$ such that
\[\operatorname{Pl-Holant}(\widehat{\mathcal{EQ}},[0, 1]^{\otimes 2}, [1, 0, -1], \widehat{\mathcal{F}})
\le_{\rm T}
\operatorname{Pl-Holant}(\widehat{\mathcal{EQ}}, [1, 0, -1], \widehat{\mathcal{F}}).\]
\end{lemma}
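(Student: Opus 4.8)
The goal is, from a $4$-ary non-matchgate $f$ with the displayed block structure and at least one of $b,c,\alpha,\delta$ nonzero, to construct the pinning-like signature $[0,1]^{\otimes 2}$ in $\operatorname{Pl-Holant}(\widehat{\mathcal{EQ}}, [1,0,-1], \widehat{\mathcal{F}})$. We already have $[1,0]$, $[1,0,1,0]\in\widehat{\mathcal{EQ}}$, and $[1,0,-1]$ on hand. The high-level idea is to produce from $f$ a binary signature $[1,0,z]$ with $z^4\neq 0,1$; once we have such a signature, Lemma~\ref{constructing-[1,0,x]-heng} immediately yields $[0,1]^{\otimes 2}$ (and in fact $[1,0,w]$ for all $w$), which finishes the proof. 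So the entire task reduces to: \emph{given the structural hypothesis on $f$, construct some $[1,0,z]\notin\mathscr{A}$.}

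\textbf{Key steps.} First I would extract the natural binary ``corners'' of $f$ by pinning. Using $[1,0]\in\widehat{\mathcal{EQ}}$, the derivative $\partial_{[1,0]}^{\{3,4\}}(f)=[1,0,c]$ (reading the outer matrix), and $\partial_{[1,0]}^{\{1,2\}}(f)=[1,0,b]$, while pinning the outer pair of variables gives $[1,0,\alpha]$ or $[1,0,\delta]$ from the inner matrix. Concretely, $\partial_{[1,0]}^{\{1,4\}}(f)$, after the appropriate rotation as in Figure~\ref{fig:rotate_asymmetric_signature}, reads off an entry of the inner block; similarly for the other inner entry. Thus each of $b,c,\alpha,\delta$ being nonzero lets us build $[1,0,t]$ with $t\neq 0$ for the corresponding parameter $t\in\{b,c,\alpha,\delta\}$. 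Next, since $[1,0,-1]$ is given and is not in $\mathscr{A}$-trivializing, combining it with $[1,0,t]$ we can form $[1,0,-t]$, and combining $k$ copies of $[1,0,t]$ gives $[1,0,t^k]$; if $|t|\neq 1$ we already have a non-affine $[1,0,t]$ and are done. The delicate case is when $|t|=1$ but $t^4\neq 1$, where combining copies cycles through the unit circle without reaching a non-affine value directly --- here I would instead use a ``sum'' gadget: connect $f$ to itself (or to the gadget from Figure~\ref{fig:heng} applied with $[1,0,t]$) to get a binary signature whose two nonzero entries have \emph{unequal norms}, which by Proposition~\ref{A-has-same-norm-etc} is not in $\mathscr{A}$. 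If instead $t^4=1$ for \emph{all} of the relevant nonzero parameters, I would argue that the remaining freedom (the fact that $f\notin\mathscr{M}$, i.e.\ $\det\!\left[\begin{smallmatrix}1&b\\c&d\end{smallmatrix}\right]\neq\det\!\left[\begin{smallmatrix}\alpha&\beta\\\gamma&\delta\end{smallmatrix}\right]$) forces, via a short self-composition computing $M_{x_1x_2,x_4x_3}(f)^2$ or a modification of one variable by $[1,0,-1]$ as in Lemma~\ref{how-to-vanish-b-c}, the appearance of a binary corner with a non-power-of-${\frak i}$ value or with two entries of different norms. Either way we land on a $[1,0,z]\notin\mathscr{A}$ and invoke Lemma~\ref{constructing-[1,0,x]-heng}.

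\textbf{Main obstacle.} The routine part is the pinning and the matrix-product bookkeeping. The genuine difficulty is the case analysis when all the accessible binary corners are $\{\pm1,\pm{\frak i}\}$-valued: one must use the hypothesis $f\notin\mathscr{M}$ --- equivalently $\det A\neq\det B$ in the notation of Lemma~\ref{how-to-vanish-b-c} --- together with the ``at least one of $b,c,\alpha,\delta$ nonzero'' condition, to produce a non-affine binary signature. I expect the cleanest route is to mimic the constructions in Lemma~\ref{how-to-vanish-b-c} and Theorem~\ref{With-arity-4-non-matchgate-signature-and-[1,0,x]}: modify a variable of $f$ by $[1,0,-1]$ and self-compose, tracking how $\det A$ and $\det B$ interact, so that the resulting binary ``corner'' signature $[1,0,z]$ has $z$ equal to a ratio or product of these determinants which, precisely because $\det A\neq\pm\det B$, cannot satisfy $z^4\in\{0,1\}$ --- or else produces two entries of distinct norms, again non-affine by Proposition~\ref{A-has-same-norm-etc}. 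Handling the subcase where some of $b,c,\alpha,\delta$ vanish while others do not (so the block matrices degenerate to diagonal or anti-diagonal form) requires separately checking that the surviving nonzero parameter still gives a usable $[1,0,t]$ after the appropriate rotation, which is where I would be most careful about planarity and the column-reversal convention in the signature matrix.
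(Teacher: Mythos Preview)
Your overall shape---pin to get $[1,0,t]$ for some $t\in\{b,c,\alpha,\delta\}$, then invoke Lemma~\ref{constructing-[1,0,x]-heng}---matches the paper's opening move (after a rotation so that $b\neq 0$, one has $\partial_{[1,0]}^{\{1,2\}}(f)=[1,0,b]$). But the analysis of the remaining case is off in two ways.

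First, you misidentify which case is hard. The case ``$|t|=1$ but $t^4\neq 1$'' is not delicate at all: $[1,0,t]\notin\mathscr{A}$ whenever $t^4\neq 0,1$, so Lemma~\ref{constructing-[1,0,x]-heng} applies immediately and yields $[0,1]^{\otimes 2}$. That lemma also covers $t=\pm{\frak i}$. The only genuinely hard case is $t\in\{1,-1\}$, i.e.\ $b^2=1$.

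Second, and more importantly, in that hard case your plan to \emph{always} manufacture a non-affine $[1,0,z]$ does not go through. Take for example $b=1$, $c=d=0$, $\alpha=\delta=1$, $\beta=\gamma=0$: then $\det A=0\neq 1=\det B$, so $f\notin\mathscr{M}$, yet every pinning yields entries in $\{0,\pm 1\}$, and the self-composition you propose (with or without the $[1,0,-1]$ modification) again has all entries in $\{0,\pm 1\}$. No ``non-power-of-${\frak i}$'' value or norm mismatch appears. So the route through a non-affine binary dead-ends here.

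The paper does not aim for a non-affine $[1,0,z]$ at all in this case. Instead it composes $f$ with a rotated copy of itself twisted by $[1,0,-1]$ on one internal edge, obtaining a signature $h$ whose outer matrix is
\[
\begin{bmatrix}1&b\\c&d\end{bmatrix}\begin{bmatrix}1&c\\-b&-d\end{bmatrix}
=\begin{bmatrix}0&c-bd\\c-bd&c^2-d^2\end{bmatrix}.
\]
The point is that the $(0000)$ entry is forced to zero. If $c-bd\neq 0$ then $\partial_{[1,0]}^{\{1,2\}}(h)=(c-bd)[0,1]^{\otimes 2}$ directly. If $c-bd=0$ then (using $b^2=1$) one gets $\det A=d-bc=0$, hence $\det B\neq 0$ since $f\notin\mathscr{M}$; the inner block of $h$ is then a product of two full-rank $2\times 2$ matrices, so some inner entry of $h$ is nonzero while all outer entries at weight $\le 2$ vanish, and pinning two variables to $0$ again yields a scalar multiple of $[0,1]^{\otimes 2}$. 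The key idea you are missing is to extract $[0,1]^{\otimes 2}$ \emph{directly} from a zeroed corner, rather than detouring through a non-affine binary.
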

\begin{proof}
By a rotation (Figure~\ref{fig:rotate_asymmetric_signature}),
 without loss of generality we can assume that $b\neq 0$.
We have $\partial_{[1, 0]}^{\{1, 2\}}(f)=[1, 0, b]$.
If $b^2\neq 1$, then we have $[0, 1]^{\otimes 2}$ by Lemma~\ref{constructing-[1,0,x]-heng}.
Therefore we may assume that $b=\pm 1$.

By a planar gadget we can construct the signature
\[h(x_1, x_2, x_3, x_4)=\displaystyle\sum_{x'_3, x''_3, x'_4\in\{0, 1\}}f(x_1, x_2, x'_3, x'_4)f(x_3, x_4, x'_4, x''_3)[1, 0, -1](x''_3, x'_3).\]
Then (see Figure~\ref{fig:rotate_asymmetric_signature}
and  (\ref{4-ary-sig-matrix-modification-f2}))
\[M_{x_1x_2, x_4x_3}(h)=\begin{bmatrix}
1 & 0 & 0 & b\\
0 & \alpha & \beta & 0\\
0 & \gamma & \delta & 0\\
c & 0 & 0 & d
\end{bmatrix}
\begin{bmatrix}
1 & 0 & 0 & c\\
0 & -\delta & -\beta & 0\\
0 & \gamma & \alpha & 0\\
-b & 0 & 0 & -d
\end{bmatrix}=
\begin{bmatrix}
0 & 0 & 0 & c-bd\\
0 & \alpha' & \beta' & 0\\
0 & \gamma' & \delta' & 0\\
c-bd & 0 & 0 & c^2-d^2
\end{bmatrix}.
\]
If $c-bd\neq 0$, then we have
$\partial_{[1, 0]}^{\{1,2\}}(h)=(c-bd)[0, 1]^{\otimes 2}$,
 a nonzero multiple of $[0, 1]^{\otimes 2}$.
Otherwise, we have \[d-bc=d-b^2d=0,\]
by $c=bd$ and  $b^2=1$.
This implies that $\det\left[\begin{smallmatrix}1 & b\\ c & d\end{smallmatrix}\right]=0$.
So $\det\left[\begin{smallmatrix}\alpha & \beta\\ \gamma & \delta\end{smallmatrix}\right]\neq 0$ by Lemma~\ref{matchgate-identity-for-arity-4}
and $f\notin\mathscr{M}$.
Thus $\det\left[\begin{smallmatrix}-\delta & -\beta\\ \gamma & \alpha\end{smallmatrix}\right]\neq 0$.
So $\left[\begin{smallmatrix}\alpha' & \beta'\\ \gamma' & \delta'\end{smallmatrix}\right]=
\left[\begin{smallmatrix}\alpha & \beta\\ \gamma & \delta\end{smallmatrix}\right]
\left[\begin{smallmatrix}-\delta & -\beta\\ \gamma & \alpha\end{smallmatrix}\right]$
has full rank.
This implies that at least one of $\alpha', \beta', \gamma', \delta'$ is nonzero.
Then we have $[0, 1]^{\otimes 2}$ by $[1, 0]$ and $h$.
For example, if $\alpha' \not =0$, then
$\partial_{[1,0]}^{\{1,4\}}(h) = [0, 0, \alpha']
= \alpha' [0,1]^{\otimes 2}$.
\end{proof}

\begin{theorem}\label{With-arity-4-non-matchgate-signature}
If all signatures in $\widehat{\mathcal{F}}$ satisfy
the Parity Condition, then the following dichotomy holds:
If $\widehat{\mathcal{F}}\subseteq\mathscr{A}$, or $\widehat{\mathcal{F}}\subseteq\widehat{\mathscr{P}}$, or $\widehat{\mathcal{F}}\subseteq\mathscr{M}$,
then
$\operatorname{Pl-Holant}(\widehat{\mathcal{EQ}}, \widehat{\mathcal{F}})$ is
tractable in P, otherwise it is
  \#$\operatorname{P}$-hard.
\end{theorem}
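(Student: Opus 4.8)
The tractability side is immediate: if $\widehat{\mathcal{F}}\subseteq\mathscr{A}$ then $\operatorname{Pl-Holant}(\widehat{\mathcal{EQ}},\widehat{\mathcal{F}})$ is tractable by Theorem~\ref{non-planar-csp-dichotomy}$'$ since $\widehat{\mathcal{EQ}}\subset\mathscr{A}$; if $\widehat{\mathcal{F}}\subseteq\mathscr{M}$ then it is tractable because $\widehat{\mathcal{EQ}}\subset\mathscr{M}$ and $\operatorname{Pl-Holant}(\mathscr{M})$ is tractable by the FKT algorithm; and if $\widehat{\mathcal{F}}\subseteq\widehat{\mathscr{P}}$ then, because $\widehat{\mathcal{F}}$ satisfies the Parity Condition, Proposition~\ref{parity-product-affine} gives $\widehat{\mathcal{F}}\subseteq\mathscr{A}$, reducing to the first case. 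So the real content is the hardness direction: assuming $\widehat{\mathcal{F}}\not\subseteq\mathscr{A}$, $\widehat{\mathcal{F}}\not\subseteq\widehat{\mathscr{P}}$, and $\widehat{\mathcal{F}}\not\subseteq\mathscr{M}$, we must show $\operatorname{Pl-Holant}(\widehat{\mathcal{EQ}},\widehat{\mathcal{F}})$ is \#P-hard. Again using Proposition~\ref{parity-product-affine}, under the Parity Condition the hypothesis $\widehat{\mathcal{F}}\not\subseteq\widehat{\mathscr{P}}$ is implied once $\widehat{\mathcal{F}}\not\subseteq\mathscr{A}$, so it suffices to assume $\widehat{\mathcal{F}}\not\subseteq\mathscr{A}$ and $\widehat{\mathcal{F}}\not\subseteq\mathscr{M}$.

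\textbf{Main steps.} First, since $\widehat{\mathcal{F}}\not\subseteq\mathscr{M}$, apply Theorem~\ref{arity-reduction-matchgate-signature} to obtain a non-matchgate signature $f$ of arity $4$ in $\operatorname{Pl-Holant}(\widehat{\mathcal{EQ}},\widehat{\mathcal{F}})$; by Lemma~\ref{[0,1]-EQ-hat-wight-0-neq-0} we may normalize so that $f$ satisfies the even Parity Condition and $f_{0000}=1$, hence
\[
M_{x_1x_2,x_4x_3}(f)=\begin{bmatrix}
1 & 0 & 0 & b\\
0 & \alpha & \beta & 0\\
0 & \gamma & \delta & 0\\
c & 0 & 0 & d
\end{bmatrix},
\]
with outer matrix $A=\trans{1}{b}{c}{d}$ and inner matrix $B=\trans{\alpha}{\beta}{\gamma}{\delta}$ satisfying $\det A\neq\det B$ (Lemma~\ref{matchgate-identity-for-arity-4}). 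Second, split on whether $\widehat{\mathcal{F}}\subseteq\mathscr{A}$ or not (we are assuming not). Since $\widehat{\mathcal{F}}\not\subseteq\mathscr{A}$ and all signatures satisfy Parity, Theorem~\ref{dichotomy-csp-2} tells us that if we can construct $\mathcal{EQ}_2$ (equivalently $(=_4)$, by Lemma~\ref{constructing-even-equality-by-equality-4}) then $\operatorname{Pl-Holant}(\mathcal{EQ}_2,\widehat{\mathcal{EQ}},\widehat{\mathcal{F}})$ is \#P-hard and we are done. So the whole proof reduces to: from the arity-$4$ non-matchgate signature $f$, construct \emph{either} the crossover $\mathfrak{X}$ (then invoke Theorem~\ref{non-planar-csp-dichotomy}$'$ on $\operatorname{Holant}(\widehat{\mathcal{EQ}})$, which is \#P-hard since $[1,0,1,0]\in\widehat{\mathcal{EQ}}$ is neither affine nor product) \emph{or} $(=_4)$ (then invoke Theorem~\ref{dichotomy-csp-2}). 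This is exactly the two-alternatives strategy already used in Theorem~\ref{With-arity-4-non-matchgate-signature-and-[1,0,x]}, but now without a free $[1,0,x]\notin\mathscr{A}$ in hand.

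\textbf{Handling the two sub-cases and the main obstacle.} The case analysis will branch on whether one can first produce a binary signature $[1,0,x]\notin\mathscr{A}$: if so, then by Theorem~\ref{With-arity-4-non-matchgate-signature-and-[1,0,x]} we are immediately done (the hardness conclusion there is exactly what we want). To get such a $[1,0,x]$, use $\partial_{[1,0]}$-pinning on $f$ to extract binary signatures $[1,0,b]$, $[1,0,c]$ from the outer matrix and use connections to extract signatures built from $\alpha,\beta,\gamma,\delta$; if any entry or any such combination lies outside $\mathscr{A}$ we have our $[1,0,x]$. The remaining, harder situation is when all these extracted binary signatures \emph{are} affine, i.e. $f$ is ``locally affine'' in every pinned restriction but still globally non-matchgate. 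Here one mimics the computations in the proof of Theorem~\ref{With-arity-4-non-matchgate-signature-and-[1,0,x]}: use Lemma~\ref{inner-outer} to swap inner/outer matrices, normalize via Lemma~\ref{[0,1]-EQ-hat-wight-0-neq-0}, and use chained gadget products of $f$ with itself (as in the $\hat{f},\hat{h},\bar{f}$ constructions there) to reach a signature whose outer and inner $2\times2$ blocks are, say, $\trans{1}{0}{0}{d'}$ and a diagonal or anti-diagonal matrix of full rank together with $\trans{1}{b}{c}{d}$; then Lemma~\ref{interpolation-equality-4} yields $(=_4)$, or a suitable modification yields $\mathfrak{X}$. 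The one genuinely new ingredient relative to Theorem~\ref{With-arity-4-non-matchgate-signature-and-[1,0,x]} is replacing the ``free $[1,0,x]\notin\mathscr{A}$'' hypothesis — which was used both to invoke Lemma~\ref{constructing-[1,0,x]-heng} (giving all $[1,0,z]$ and $[0,1]^{\otimes 2}$) and to feed Theorem~\ref{heng-tyson-dichotomy-pl-csp2}$'$ — by a self-contained argument that $f$ alone suffices: the binary interpolation handle $[1,0,b]$ or $[1,0,c]$ is affine but may still be a nontrivial power of $\frak i$, and combining it with $[1,0,1,0]\in\widehat{\mathcal{EQ}}$ and Lemma~\ref{constructing-[1,0,x]-heng}'s gadget $[1+x^2,0,2x]$ can still escape $\mathscr{A}$; carefully enumerating when this \emph{fails} (forcing $b,c\in\{0,\pm1\}$ and the inner block also tightly constrained) and showing that in every such residual configuration a direct gadget on $f$ produces $(=_4)$ or $\mathfrak{X}$ is the technical crux. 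I expect this residual enumeration — proving there is no ``affine-locally but non-matchgate globally'' arity-$4$ signature that resists both $(=_4)$ and $\mathfrak{X}$ constructions — to be the main obstacle, and it is where the bulk of the determinant bookkeeping (tracking $\det A$ versus $\pm\det B$, and the rank of the resulting $2\times2$ matrices after gadget composition) will live.
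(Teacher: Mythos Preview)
Your overall architecture is right, but two concrete problems would block the proof as written.

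First, a factual slip: $[1,0,1,0]\in\widehat{\mathcal{EQ}}\subset\mathscr{A}$ \emph{is} affine (it lies in $\mathscr{F}_2$). So after obtaining $\mathfrak{X}$ you cannot invoke Theorem~\ref{non-planar-csp-dichotomy}$'$ using $[1,0,1,0]$ as a witness; the paper instead uses the standing hypotheses $\widehat{\mathcal{F}}\not\subseteq\mathscr{A}$ and $\widehat{\mathcal{F}}\not\subseteq\widehat{\mathscr{P}}$ directly on $\operatorname{Holant}(\widehat{\mathcal{EQ}},\widehat{\mathcal{F}})$.

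Second, and more seriously, your plan to ``mimic the computations in the proof of Theorem~\ref{With-arity-4-non-matchgate-signature-and-[1,0,x]}'' in the residual case has a circularity. Those constructions---the chained gadgets producing $\hat f,\hat h,\bar f$, and crucially Lemma~\ref{how-to-vanish-b-c}---all require $[1,0,z]$ for \emph{arbitrary} $z\in\mathbb{C}$, which Lemma~\ref{constructing-[1,0,x]-heng} supplies only \emph{after} one already possesses a non-affine $[1,0,x]$. In your residual case the extracted binaries are by assumption affine (at best $[1,0,\pm1]$ or $[1,0,\pm{\frak i}]$), so you cannot run those gadgets, and the determinant bookkeeping you propose has no engine.

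The paper closes this gap with an additional alternative alongside $\mathfrak{X}$ and $(=_4)$: if one can build merely $[1,0,-1]$ together with $[0,1]^{\otimes 2}$, then Lemma~\ref{[1,0]-[0,1]-[1,0,-1]-EQ-hat-affine-reduction} (Tableau Calculus applied to some non-affine signature in $\widehat{\mathcal{F}}$, not to $f$) manufactures a non-affine $[1,0,x]$, after which Theorem~\ref{With-arity-4-non-matchgate-signature-and-[1,0,x]} applies. To obtain $[0,1]^{\otimes 2}$ from $f$ and $[1,0,-1]$ the paper proves a dedicated Lemma~\ref{from-[1,0,-1]-to-[0,1]-tensor-2}. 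The final enumeration then whittles $\{b,c,\alpha,\beta,\gamma,\delta\}$ down first to fourth roots of unity or $0$, then to $\{0,\pm1\}$, then to $\{0,1\}$; the last $\{0,1\}$-valued subcase is handled not by determinant algebra on gadget products but by support-affineness arguments (Lemma~\ref{0,1-valued}) and a handful of explicit computations. Your intuition that a residual enumeration is the crux is correct, but the missing ingredient is this $[1,0,-1]+[0,1]^{\otimes 2}$ route back into Theorem~\ref{With-arity-4-non-matchgate-signature-and-[1,0,x]}, which your proposal does not supply.
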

\begin{proof}
Clearly if $\widehat{\mathcal{F}}\subseteq\mathscr{A}$, or $\widehat{\mathcal{F}}\subseteq\widehat{\mathscr{P}}$, or $\widehat{\mathcal{F}}\subseteq\mathscr{M}$, then  the problem
$\operatorname{Pl-Holant}(\widehat{\mathcal{EQ}}, \widehat{\mathcal{F}})$ is
tractable in P.
(Since all signatures in $\widehat{\mathcal{F}}$ satisfy the Parity Condition,
Proposition~\ref{parity-product-affine} implies that
if $\widehat{\mathcal{F}}\subseteq\widehat{\mathscr{P}}$ then in fact
$\widehat{\mathcal{F}}\subseteq\mathscr{A}$. But the proof below will
not use this fact.)

Now suppose $\widehat{\mathcal{F}} \not \subseteq\mathscr{A}$, and
 $\widehat{\mathcal{F}} \not \subseteq\widehat{\mathscr{P}}$,
and  $\widehat{\mathcal{F}} \not \subseteq\mathscr{M}$.
Since  $\widehat{\mathcal{F}}\nsubseteq\mathscr{M}$, by Theorem~\ref{arity-reduction-matchgate-signature},
we can construct a 4-ary signature $f \not \in \mathscr{M}$
from $\widehat{\mathcal{F}}$.
By Lemma~\ref{[0,1]-EQ-hat-wight-0-neq-0}, we can assume that
 \[M_{x_1x_2, x_4x_3}(f)=\begin{bmatrix}
1 & 0 & 0 & b\\
0 & \alpha & \beta & 0\\
0 & \gamma & \delta & 0\\
c & 0 & 0 & d
\end{bmatrix}.\]

We  can finish the proof by the following four alternatives:
\begin{description}
\item {(A)} If we can get $[1, 0, x] \not \in \mathscr{A}$ such that
\[\operatorname{Pl-Holant}(\widehat{\mathcal{EQ}}, [1, 0, x], \widehat{\mathcal{F}})
\le_{\rm T}
\operatorname{Pl-Holant}(\widehat{\mathcal{EQ}}, \widehat{\mathcal{F}}),\]
then $\operatorname{Pl-Holant}(\widehat{\mathcal{EQ}}, \widehat{\mathcal{F}})$ is \#P-hard
since $ \operatorname{Pl-Holant}(\widehat{\mathcal{EQ}}, [1, 0, x], \widehat{\mathcal{F}})$ is \#P-hard by Theorem~\ref{With-arity-4-non-matchgate-signature-and-[1,0,x]}.
\item {(B)} If we can get $[1, 0, -1]$ and $[0, 1]^{\otimes 2}$ such that
\[\operatorname{Pl-Holant}(\widehat{\mathcal{EQ}}, [1, 0, -1], [0, 1]^{\otimes 2}, \widehat{\mathcal{F}})
\le_{\rm T}
\operatorname{Pl-Holant}(\widehat{\mathcal{EQ}}, \widehat{\mathcal{F}}),\]
then since $\widehat{\mathcal{F}}\nsubseteq\mathscr{A}$,
 we can get $[1, 0, x] \not \in \mathscr{A}$ by Lemma~\ref{[1,0]-[0,1]-[1,0,-1]-EQ-hat-affine-reduction},
such that
\[\operatorname{Pl-Holant}(\widehat{\mathcal{EQ}},[1, 0, x], [1, 0, -1], [0, 1]^{\otimes 2}, \widehat{\mathcal{F}})
\le_{\rm T}
\operatorname{Pl-Holant}(\widehat{\mathcal{EQ}}, [1, 0, -1], [0, 1]^{\otimes 2}, \widehat{\mathcal{F}}).\]
$\operatorname{Pl-Holant}(\widehat{\mathcal{EQ}},[1, 0, x], [1, 0, -1], [0, 1]^{\otimes 2}, \widehat{\mathcal{F}})$
is \#$\operatorname{P}$-hard
 by Theorem~\ref{With-arity-4-non-matchgate-signature-and-[1,0,x]}.
Thus $\operatorname{Pl-Holant}(\widehat{\mathcal{EQ}}, \widehat{\mathcal{F}})$ is \#$\operatorname{P}$-hard.

\item {(C)} If we can construct the crossover function $\mathfrak{X}$
(Definition~\ref{def:crossover})
such that
\[\operatorname{Pl-Holant}(\widehat{\mathcal{EQ}}, \mathfrak{X}, \widehat{\mathcal{F}})
\le_{\rm T}
\operatorname{Pl-Holant}(\widehat{\mathcal{EQ}}, \widehat{\mathcal{F}}),\]
then note that the Holant problem (on general, not necessarily
planar, instances) can be reduced to the planar one
\[
\operatorname{Holant}(\widehat{\mathcal{EQ}}, \widehat{\mathcal{F}})
\equiv_{\rm T}
\operatorname{Pl-Holant}(\widehat{\mathcal{EQ}}, \mathfrak{X}, \widehat{\mathcal{F}}).\]
Since $\widehat{\mathcal{F}}\nsubseteq\widehat{\mathscr{P}}$ and
 $\widehat{\mathcal{F}}\nsubseteq\mathscr{A}$, we have
$\operatorname{Pl-Holant}(\widehat{\mathcal{EQ}}, \widehat{\mathcal{F}})$ is \#P-hard
by Theorem~\ref{non-planar-csp-dichotomy}$'$.

\item {(D)}  If we have $(=_4)$, by Lemma~\ref{constructing-even-equality-by-equality-4}, we have
\[\operatorname{Pl-Holant}(\widehat{\mathcal{EQ}}, \mathcal{EQ}_2, \widehat{\mathcal{F}})
\le_{\rm T}
\operatorname{Pl-Holant}(\widehat{\mathcal{EQ}}, \widehat{\mathcal{F}}).\]
Then by Theorem~\ref{dichotomy-csp-2}, $\operatorname{Pl-Holant}(\widehat{\mathcal{EQ}}, \mathcal{EQ}_2, \widehat{\mathcal{F}})$
is \#P-hard, since $\widehat{\mathcal{F}}\nsubseteq\mathscr{A}$.
Thus
 $\operatorname{Pl-Holant}(\widehat{\mathcal{EQ}}, \widehat{\mathcal{F}})$
is \#P-hard.
\end{description}

Note that for any $x\in\{b, c, \alpha, \beta, \gamma, \delta\}$, we have $[1, 0, x]$
by taking $\partial^{\{i,j\}}_{[1,0]} (f)$ on some two variables
$x_i$ and $x_j$.
If there exists $x\in\{b, c, \alpha, \beta, \gamma, \delta\}$ such that $x^4\neq 0, 1$.
Then $[1, 0, x] \not \in \mathscr{A}$
by Proposition~\ref{A-has-same-norm-etc},  we are done by alternative (A).
If there exists $x\in\{b, c, \alpha, \beta, \gamma, \delta\}$ such that $x^2=-1$, then we
have $[1, 0, -1]$ and $[0, 1]^{\otimes 2}$ by Lemma~\ref{constructing-[1,0,x]-heng}
such that
\[\operatorname{Pl-Holant}(\widehat{\mathcal{EQ}}, [1, 0, -1], [0, 1]^{\otimes 2}, \widehat{\mathcal{F}})
\le_{\rm T}
\operatorname{Pl-Holant}(\widehat{\mathcal{EQ}}, \widehat{\mathcal{F}}).\]
Thus we are done by alternative  (B).
So in the following, we may assume that
\[\{b, c, \alpha, \beta, \gamma, \delta\}\subseteq\{0, 1, -1\}.\]
Now we finish the proof by a case analysis of $\{b, c, \alpha, \beta, \gamma, \delta\}$.

 If $b=c=\alpha=\delta=0$, then
\[M_{x_1x_2, x_4x_3}(f)=\begin{bmatrix}
1 & 0 & 0 & 0\\
0 & 0 & \beta & 0\\
0 & \gamma & 0 & 0\\
0 & 0 & 0 & d
\end{bmatrix}.\]
In this case,
\begin{itemize}
 \item if $\beta\gamma\neq 0$, then we have
(see (\ref{ternary-sig-matrix-modification-f1})
 and (\ref{ternary-sig-matrix-modification-f2}),
and note that $\beta = \beta^{-1}$ and $\gamma = \gamma^{-1}$)
 \[
 h(x_1, x_2, x_3,x_4)=\displaystyle\sum_{x'_1, x'_2\{0, 1\}}f(x'_1, x'_2, x_3, x_4)[1, 0, \gamma](x'_1, x_1)[1, 0, \beta](x'_2, x_2)
 \]
and
\[M_{x_1x_2, x_4x_3}(h)=
\begin{bmatrix}
1 & 0 & 0 & 0\\
0 & 0 & 1 & 0\\
0 & 1 & 0 & 0\\
0 & 0 & 0 & {d\beta\gamma}\\
\end{bmatrix}.
\]
Note that we have $\partial_{(=_2)}^{\{1, 2\}}(f)=[1, 0, {d\beta\gamma}]$.
\subitem
If $(d\beta\gamma)^4\neq 0, 1$, then
we are done by alternative (A).

\subitem
If $(d\beta\gamma)^2=-1$, then we have $[1, 0, -1]$ and $[0, 1]^{\otimes 2}$
by Lemma~\ref{constructing-[1,0,x]-heng}.
Then we are done by alternative (B).

\subitem
If $d\beta\gamma=-1$, then $f\in\mathscr{M}$ by Lemma~\ref{matchgate-identity-for-arity-4}. This is a contradiction.

\subitem
If $d\beta\gamma=1$, then $h$ is the crossover function
$\mathfrak{X}$ (Definition~\ref{def:crossover}).
Thus we are done by alternative (C).

\subitem
If $d\beta\gamma=0$, then
\[M_{x_1x_2, x_4x_3}(h)=
\begin{bmatrix}
1 & 0 & 0 & 0\\
0 & 0 & 1 & 0\\
0 & 1 & 0 & 0\\
0 & 0 & 0 & 0\\
\end{bmatrix}.
\]
Take two copies of $h$ and connect $x_2, x_3, x_4$ of the first copy
with $x_4, x_3, x_2$ of the second copy, the planar gadget
has the signature
\[
 \tilde{h}(x_1, x_2)=\displaystyle\sum_{x'_2, x'_3, x'_4\{0, 1\}}h(x_1, x'_2, x'_3, x'_4)h(x_2, x'_4, x'_3, x'_2),
 \]
then $\tilde{h}= [2,0,1] = 2[1, 0, \frac{1}{2}]$. Thus we are done by  alternative (A).

 \item If $\beta\gamma=0$, then $d\neq 0$
by Lemma~\ref{matchgate-identity-for-arity-4}, since $f \not \in
 \mathscr{A}$.
 By connecting two copies of $f$, we get
 \[h'(x_1, x_2, x_3, x_4)=\displaystyle\sum_{x'_3, x'_4\in\{0, 1\}} f(x_1, x_2, x'_3, x'_4)f(x'_4, x'_3, x_3, x_4),\]
 where \[M_{x_1x_2, x_4x_3}(h'')=
 \begin{bmatrix}
1 & 0 & 0 & 0\\
0 & 0 & \beta & 0\\
0 & \gamma & 0 & 0\\
0 & 0 & 0 & d\\
\end{bmatrix}
\begin{bmatrix}
1 & 0 & 0 & 0\\
0 & 0 & \beta & 0\\
0 & \gamma & 0 & 0\\
0 & 0 & 0 & d\\
\end{bmatrix}=
\begin{bmatrix}
1 & 0 & 0 & 0\\
0 & 0 & 0 & 0\\
0 & 0 & 0 & 0\\
0 & 0 & 0 & d^2\\
\end{bmatrix}.
\]
Then by Lemma~\ref{interpolation-equality-4} we can get $(=_4)$
by interpolation.
Thus we are done by alternative (D).
 \end{itemize}

Now we may assume that at least  one of $\{b, c, \alpha, \delta\}$ is nonzero.
By a rotation, without loss of generality, we assume that $ b\neq 0$.
In this case,
if anyone of $\{b, c, \alpha, \beta, \gamma, \delta\}$ is $-1$,
 then we have $[1, 0, -1]$, and then also have  $[0, 1]^{\otimes 2}$
by Lemma~\ref{from-[1,0,-1]-to-[0,1]-tensor-2}. Then
 we are done by alternative (B).

Now we may assume that
 \[\{b, c, \alpha, \beta, \gamma, \delta\}\subseteq\{0, 1\} ~~~~\mbox{and}~~~~
b=1.\]
 In this case,
\begin{itemize}
\item For $c=1$, note that we have
\begin{align*}
M_{x_2, x_4x_3}(f^{x_1=0})
&=\begin{bmatrix}
f_{0000} & f_{0010} & f_{0001} & f_{0011}\\
f_{0100} & f_{0110} & f_{0101} & f_{0111}
\end{bmatrix}=
\begin{bmatrix}
1 & 0 & 0 & 1\\
0 & \alpha & \beta & 0
\end{bmatrix},\\
M_{x_1, x_4x_3}(f^{x_2=0})
&=\begin{bmatrix}
f_{0000} & f_{0010} & f_{0001} & f_{0011}\\
f_{1000} & f_{1010} & f_{1001} & f_{1011}
\end{bmatrix}=
\begin{bmatrix}
1 & 0 & 0 & 1\\
0 & \gamma & \delta & 0
\end{bmatrix},\\
M_{x_1x_2, x_4}(f^{x_3=0})
&=\begin{bmatrix}
f_{0000} & f_{0001} \\
f_{0100} & f_{0101}\\
f_{1000} & f_{1001} \\
f_{1100} & f_{1101}
\end{bmatrix}=
\begin{bmatrix}
1 & 0 \\
0 & \beta\\
0 & \delta \\
1 & 0
\end{bmatrix},\\
M_{x_1x_2, x_3}(f^{x_4=0})
&=\begin{bmatrix}
f_{0000} & f_{0010} \\
f_{0100} & f_{0110}\\
f_{1000} & f_{1010} \\
f_{1100} & f_{1110}
\end{bmatrix}=
\begin{bmatrix}
1 & 0 \\
0 & \alpha\\
0 & \gamma \\
1 & 0
\end{bmatrix}.
\end{align*}
If  $\{\alpha, \beta, \gamma, \delta\}$ has
at least one  $0$ and one  1,
then there exists $i\in[4]$ such that the support of $f^{x_i=0}$ is not affine.
Then by Lemma~\ref{0,1-valued}, we can construct $g\notin\mathscr{A}$ and
arity$(g)<$ arity$(f^{x_i=0})=3$,
and $g$ satisfies the even Parity Condition. Thus $g$ has arity 2.
This implies that, up to a nonzero factor,
$g$ has the form  $[1, 0, x] \not \in \mathscr{A}$.
Thus we are done by  alternative (A).

So we may assume that
 $\alpha=\beta=\gamma=\delta=0$
  or
 $\alpha=\beta=\gamma=\delta=1$.
 For $\alpha=\beta=\gamma=\delta=0$,
 \[M_{x_1x_2, x_4x_3}(f)=\begin{bmatrix}
1 & 0 & 0 & 1\\
0 & 0 & 0 & 0\\
0 & 0 & 0 & 0\\
1 & 0 & 0 & d
\end{bmatrix}.\]
Since $f\notin\mathscr{M}$, by Lemma~\ref{matchgate-identity-for-arity-4}
$\det \left[\begin{smallmatrix}
1 & 1\\
1 & d
\end{smallmatrix}\right] \not =0$.
By Lemma~\ref{interpolation-equality-4}, we have $(=_4)$ and we are done by
 alternative (D).

For $\alpha=\beta=\gamma=\delta=1$, $f$ is symmetric,
namely $f =[1, 0, 1, 0, d]$.
If $d=1$, then $f\in\mathscr{M}$. This is a contradiction.
Otherwise, $d\neq 1$, then $f\notin\widehat{\mathscr{P}} \cup
\mathscr{A} \cup \mathscr{M}$.
 Thus Pl-Holant$(\widehat{\mathcal{EQ}}, f)$ is \#P-hard by Theorem~\ref{heng-tyson-dichotomy-pl-csp}$'$.
\item If $c=0, d=0$, then the outer matrix of $f$ is degenerate.
Thus the inner matrix has full rank by $f\notin\mathscr{M}$ (Lemma~\ref{matchgate-identity-for-arity-4}).
This implies that either $\alpha \not = \beta$ or $\gamma \not = \delta$.
Being 0-1 valued, this implies that either
 supp$(f^{x_1=0})$ is not affine or supp$(f^{x_2=0})$ is not affine.
By Lemma~\ref{0,1-valued}, we can construct $[1, 0, x] \not
\in \mathscr{A}$ and we are done by  alternative (A).

\item If $c=0, d\neq 0$, then we have $\partial_{(=_2)}^{\{1, 2\}}=[1, 0, 1+d]$ and $\partial_{=_2}^{\{3, 4\}}=2[1, 0, \frac{d}{2}]$.

If $(\frac{d}{2})^4\neq 0, 1$, then  we are done by alternative (A) and $[1, 0, \frac{d}{2}]$.

Otherwise,  $d=\pm 2$ or $d=\pm 2\frak{i}$.
If $d\neq -2$, then $(1+d)^4\neq 0, 1$ and we are done by  alternative (A) and $[1, 0, 1+d]$.
If $d=-2$, then  $[1, 0, 1+d]=[1, 0, -1]$. By Lemma~\ref{from-[1,0,-1]-to-[0,1]-tensor-2},
 we have $[1, 0, -1]$ and $[0, 1]^{\otimes 2}$ and we are done by
 alternative (B).
\end{itemize}
\end{proof}

\section{Main Theorem}
By Theorem~\ref{main-theorem-for-no-parity} and Theorem~\ref{With-arity-4-non-matchgate-signature}, we have the following
dichotomy theorem for Pl-Holant$(\widehat{\mathcal{EQ}}, \widehat{\mathcal{F}})$.
\begin{theorem}\label{main-dichotomy-thm}
 Let $\widehat{\mathcal{F}}$ be any set of complex-valued signatures in Boolean variables.
 Then $\operatorname{Pl-Holant}(\widehat{\mathcal{EQ}}, \widehat{\mathcal{F}})$ is $\SHARPP$-hard unless
 $\widehat{\mathcal{F}} \subseteq \mathscr{A}$,
 $\widehat{\mathcal{F}} \subseteq \widehat{\mathscr{P}}$, or
 $\widehat{\mathcal{F}} \subseteq \mathscr{M}$,
 in which case the problem is computable in polynomial time.
\end{theorem}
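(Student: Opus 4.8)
The plan is to assemble the stated dichotomy by combining the two principal structural theorems of the paper, splitting on whether every signature in $\widehat{\mathcal{F}}$ obeys the Parity Condition.

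First I would dispatch the tractable side. If $\widehat{\mathcal{F}}\subseteq\mathscr{A}$ or $\widehat{\mathcal{F}}\subseteq\widehat{\mathscr{P}}$, then already $\Holant(\widehat{\mathcal{EQ}},\widehat{\mathcal{F}})$ is computable in polynomial time by Theorem~\ref{non-planar-csp-dichotomy}$'$, and hence so is the planar restriction, since $\operatorname{Pl-Holant}(\widehat{\mathcal{EQ}},\widehat{\mathcal{F}})\le_T\Holant(\widehat{\mathcal{EQ}},\widehat{\mathcal{F}})$. If $\widehat{\mathcal{F}}\subseteq\mathscr{M}$, then since $\widehat{\mathcal{EQ}}\subseteq\mathscr{M}$ the entire instance is a planar Holant problem over matchgate signatures, which is tractable by the FKT algorithm because $\PlHolant(\mathscr{M})$ is tractable.

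For the hardness side, suppose $\widehat{\mathcal{F}}$ is contained in none of $\mathscr{A}$, $\widehat{\mathscr{P}}$, $\mathscr{M}$. If some signature in $\widehat{\mathcal{F}}$ violates the Parity Condition, then Theorem~\ref{main-theorem-for-no-parity} applies, and its only non-hard alternatives are $\widehat{\mathcal{F}}\subseteq\widehat{\mathscr{P}}$ and $\widehat{\mathcal{F}}\subseteq\mathscr{A}$ (the alternative $\widehat{\mathcal{F}}\subseteq\mathscr{M}$ is vacuous in this case, since every matchgate signature satisfies the Parity Condition); as both of these alternatives are excluded by hypothesis, $\operatorname{Pl-Holant}(\widehat{\mathcal{EQ}},\widehat{\mathcal{F}})$ is $\SHARPP$-hard. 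Otherwise every signature in $\widehat{\mathcal{F}}$ satisfies the Parity Condition, and Theorem~\ref{With-arity-4-non-matchgate-signature} applies directly: since $\widehat{\mathcal{F}}$ lies in none of $\mathscr{A}$, $\widehat{\mathscr{P}}$, $\mathscr{M}$, the problem is $\SHARPP$-hard. The two cases exhaust all $\widehat{\mathcal{F}}$, completing the dichotomy.

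At this level the argument is pure bookkeeping, and there is no serious obstacle; the only point needing a moment's attention is verifying that the case split is seamless — in particular that the $\mathscr{M}$-tractability clause supplied by Theorem~\ref{With-arity-4-non-matchgate-signature} is exactly the one that is legitimately absent from Theorem~\ref{main-theorem-for-no-parity}, so that the union of the two theorems' tractability criteria is precisely $\mathscr{A}\cup\widehat{\mathscr{P}}\cup\mathscr{M}$, with no gaps and no spurious cases. All the genuine difficulty lies upstream, in the proofs of Theorems~\ref{main-theorem-for-no-parity} and \ref{With-arity-4-non-matchgate-signature}, which rest in turn on the arity-reduction lemmas, the Tableau Calculus, and the known symmetric-signature dichotomies.
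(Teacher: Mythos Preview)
Your proposal is correct and follows essentially the same approach as the paper: the paper simply states that Theorem~\ref{main-dichotomy-thm} follows from Theorem~\ref{main-theorem-for-no-parity} and Theorem~\ref{With-arity-4-non-matchgate-signature}, and your write-up spells out exactly this case split on the Parity Condition together with the easy tractability direction. Your added observation that the $\mathscr{M}$ clause is vacuous in the no-parity case (because matchgate signatures always satisfy the Parity Condition) is precisely what makes the two theorems dovetail without gaps.
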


After the holographic transformation by $\left[\begin{smallmatrix} 1  &  1  \\ 1  &  -1\end{smallmatrix}\right]$, 
we have the following
dichotomy theorem for
planar \#CSP over the Boolean domain.
\begin{specialtheorem}
 Let $\mathcal{F}$ be any set of complex-valued signatures in Boolean variables.
 Then $\PlCSP(\mathcal{F})$ is $\SHARPP$-hard unless
 $\mathcal{F} \subseteq \mathscr{A}$,
 $\mathcal{F} \subseteq \mathscr{P}$, or
 $\mathcal{F} \subseteq \widehat{\mathscr{M}}$,
 in which case the problem is computable in polynomial time.
\end{specialtheorem}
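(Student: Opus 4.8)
The plan is to obtain this statement directly from Theorem~\ref{main-dichotomy-thm} by a holographic transformation under the Hadamard matrix $H_2$. First I would invoke the equivalence~(\ref{csp-holant}), namely $\PlCSP(\mathcal{F}) \equiv_T \PlHolant(\widehat{\mathcal{EQ}}, \widehat{\mathcal{F}})$ with $\widehat{\mathcal{F}} = H_2\mathcal{F}$; this follows from Valiant's Holant theorem in the bipartite setting and preserves $\equiv_T$, so in particular complexity is unchanged. Applying Theorem~\ref{main-dichotomy-thm} to the set $\widehat{\mathcal{F}}$, the problem $\PlHolant(\widehat{\mathcal{EQ}}, \widehat{\mathcal{F}})$, and hence $\PlCSP(\mathcal{F})$, is $\SHARPP$-hard unless one of $\widehat{\mathcal{F}} \subseteq \mathscr{A}$, $\widehat{\mathcal{F}} \subseteq \widehat{\mathscr{P}}$, or $\widehat{\mathcal{F}} \subseteq \mathscr{M}$ holds.

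It then remains to recast each of these containments as a condition on $\mathcal{F}$ itself. Since $H_2$ is invertible with $H_2^{-1} = H_2$ up to the scalar $\tfrac{1}{\sqrt2}$, which affects neither complexity nor membership in any of these classes, for any signature class $\mathscr{C}$ we have $\widehat{\mathcal{F}} \subseteq \mathscr{C}$ iff $\mathcal{F} = H_2\widehat{\mathcal{F}} \subseteq H_2\mathscr{C}$. As recorded in the preliminaries, $\mathscr{A}$ is invariant under $H_2$, so $\widehat{\mathcal{F}} \subseteq \mathscr{A} \iff \mathcal{F} \subseteq \mathscr{A}$; by definition $\widehat{\mathscr{P}} = H_2\mathscr{P}$, so $\widehat{\mathcal{F}} \subseteq \widehat{\mathscr{P}} \iff \mathcal{F} \subseteq \mathscr{P}$; and $\widehat{\mathscr{M}} = H_2\mathscr{M}$, so $\widehat{\mathcal{F}} \subseteq \mathscr{M} \iff \mathcal{F} \subseteq H_2\mathscr{M} = \widehat{\mathscr{M}}$. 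This yields exactly the three exceptional conditions $\mathcal{F} \subseteq \mathscr{A}$, $\mathcal{F} \subseteq \mathscr{P}$, $\mathcal{F} \subseteq \widehat{\mathscr{M}}$ stated in the theorem.

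For the tractable side: if $\mathcal{F} \subseteq \mathscr{A}$ or $\mathcal{F} \subseteq \mathscr{P}$, then $\#\operatorname{CSP}(\mathcal{F})$ is polynomial-time computable even on general instances by Theorem~\ref{non-planar-csp-dichotomy}, a fortiori on planar instances; if $\mathcal{F} \subseteq \widehat{\mathscr{M}}$, then $\widehat{\mathcal{F}} \subseteq \mathscr{M}$, and since also $\widehat{\mathcal{EQ}} \subset \mathscr{M}$, the problem $\PlHolant(\widehat{\mathcal{EQ}}, \widehat{\mathcal{F}})$ is solvable in polynomial time by the FKT algorithm ($\PlHolant(\mathscr{M})$ being tractable), hence so is $\PlCSP(\mathcal{F})$ via~(\ref{csp-holant}). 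All of the substantive work lies in Theorem~\ref{main-dichotomy-thm}, which combines the no-parity dichotomy (Theorem~\ref{main-theorem-for-no-parity}) and the parity dichotomy (Theorem~\ref{With-arity-4-non-matchgate-signature}); there is therefore no genuine obstacle at this final step, and the only point requiring care is bookkeeping the direction of conjugation, so that the Holant-side conditions $\mathscr{A}$, $\widehat{\mathscr{P}}$, $\mathscr{M}$ pull back to $\mathscr{A}$, $\mathscr{P}$, $\widehat{\mathscr{M}}$ on the $\#$CSP side exactly as stated.
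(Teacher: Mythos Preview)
Your proposal is correct and is precisely the approach the paper takes: the paper derives this statement from Theorem~\ref{main-dichotomy-thm} in one line, by applying the holographic transformation under $H_2$ via the equivalence~(\ref{csp-holant}). Your write-up simply spells out the bookkeeping (invariance of $\mathscr{A}$ under $H_2$, $\widehat{\mathscr{P}}=H_2\mathscr{P}$, $\widehat{\mathscr{M}}=H_2\mathscr{M}$) that the paper leaves implicit.
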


\renewcommand{\refname}{References}

\section{Appendix}
\subsection{Ternary Non-product Type Under Unary Actions}
In this subsection, we will show how to construct a binary non-product signature or a symmetric non-product signature by a
non-product signature of arity 3 with some unary signatures.
This is the base case of the induction in the proof of Theorem~\ref{arity-reduction-product}.

We are given a ternany signature $f$ and a finite set of pair-wise
linearly independent unary signatures $[a_j, b_j]$ ($1 \le j \le m$).
We will write $ f_{abc}
= f_{x_1=a, x_2=b, x_3=c}$ the values
of $f$, for $a,b,c \in \{0,1\}$.
Let $\partial^{\{i\}}_{[a_j, b_j]}(f)$ denote the binary signature
obtained by connecting $[a_j, b_j]$ to the $i$-th variable of $f$.
For example, in matrix form, the binary signature  $\partial^{\{1\}}_{[a_j, b_j]}(f)$
takes the form
$\begin{bmatrix}
a_j f_{000} + b_j f_{100} &  a_j f_{001} + b_j f_{101} \\
a_j f_{010} + b_j f_{110} &  a_j f_{011} + b_j f_{111}
\end{bmatrix}$,
where $x_2 =0,1$ is the row index,
and $x_3 = 0,1$ is the column index.
 It is clear that a necessary and sufficient condition for
$\partial^{\{1\}}_{[a_j, b_j]}(f) \in \mathscr{P}$ is
\begin{align*}
&  & a_j f_{000} + b_j f_{100} =0, ~~~ a_j f_{011} + b_j f_{111} =0
~~~~~~~~~~~~~~~~~~~~~~~~~~~~~~~~~ & \mbox{($\partial^{\{1\}} {\rm D}_j$)} \\
& \mbox{or} &
a_j f_{001} + b_j f_{101} =0, ~~~ a_j f_{010} + b_j f_{110} =0
~~~~~~~~~~~~~~~~~~~~~~~~~~~~~~~~~ & \mbox{($\partial^{\{1\}} {\rm E}_j$)} \\
& \mbox{or} &
~~~~~~\left|
\begin{matrix}
a_j f_{000} + b_j f_{100} &  ~a_j f_{001} + b_j f_{101} \\
a_j f_{010} + b_j f_{110} &  ~a_j f_{011} + b_j f_{111}
\end{matrix}
\right| =0
~~~~~~~~~~~~~~~~~~~~~~~~~~~~~~~~~ & \mbox{($\partial^{\{1\}} \det_j$)}
\end{align*}

We can similarly define the conditions
($\partial^{\{i\}} {\rm D}_j$), ($\partial^{\{i\}} {\rm E}_j$), ($\partial^{\{i\}} \det_j$)
for $1 \le i \le 3$, $1 \le j \le m$.

\begin{lemma}\label{separation}
Let
unary signatures $[a_j, b_j]$ ($1 \le j \le 2$) be
linearly independent.
Suppose $\partial^{\{i\}}_{[a_j, b_j]}(f) \in \mathscr{P}$
for all $1 \le i \le 3$ and $1 \le j \le 2$.
If $f(x_1, x_2, x_3) = g(x_r, x_s) h(x_t)$
where $\{r,s,t\} = \{1,2,3\}$, then $f \in \mathscr{P}$.
\end{lemma}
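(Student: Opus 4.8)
The plan is to extract $g$ from $f$ by a single derivative. Connecting a unary to the variable $x_t$ that $h$ acts on should collapse $f=g\otimes h$ to a scalar multiple of $g$, and the only thing that can go wrong is that this scalar vanishes — which the linear independence of $[a_1,b_1]$ and $[a_2,b_2]$ will prevent. First dispose of the trivial case: if $h$ is identically zero then $f$ is identically zero, which lies in $\mathscr{P}$; so assume $h=[h_0,h_1]\neq[0,0]$.

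The key step is the computation, for each $j\in\{1,2\}$,
\[
\partial^{\{t\}}_{[a_j,b_j]}(f)(x_r,x_s)=\sum_{x_t\in\{0,1\}}[a_j,b_j](x_t)\,g(x_r,x_s)\,h(x_t)=(a_jh_0+b_jh_1)\,g(x_r,x_s),
\]
so $\partial^{\{t\}}_{[a_j,b_j]}(f)=c_j\,g$ with $c_j=a_jh_0+b_jh_1$. Since $[a_1,b_1]$ and $[a_2,b_2]$ are linearly independent they span $\mathbb{C}^2$, hence the two linear functionals $[u_0,u_1]\mapsto a_ju_0+b_ju_1$ have trivial common kernel; as $[h_0,h_1]\neq 0$, some $c_j\neq 0$. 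Fix such a $j$. By hypothesis $\partial^{\{t\}}_{[a_j,b_j]}(f)\in\mathscr{P}$, and since membership in $\mathscr{P}$ is invariant under multiplication by a nonzero scalar, $g\in\mathscr{P}$. Finally, $h$ is unary, hence lies in $\mathcal{E}\subseteq\mathscr{P}$, and $\mathscr{P}=\langle\mathcal{E}\rangle$ is closed under tensor product, so $f=\bigotimes_{\{\{r,s\},\{t\}\}}(g,h)\in\mathscr{P}$, as required.

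The argument uses only the two hypotheses $\partial^{\{t\}}_{[a_1,b_1]}(f),\partial^{\{t\}}_{[a_2,b_2]}(f)\in\mathscr{P}$; the other four are not needed for this particular lemma (they are simply available in the setting where it is invoked). I do not expect a real obstacle here: the only points demanding care are the degenerate case $h\equiv 0$ and the two elementary closure properties of $\mathscr{P}$ — scale invariance and closure under tensor product — both of which follow directly from the definitions $\mathscr{P}=\langle\mathcal{E}\rangle$ and the characterization of $\mathcal{E}$.
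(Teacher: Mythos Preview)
Your proof is correct and follows essentially the same approach as the paper's: handle the trivial case $h\equiv 0$, then use linear independence of the two unaries to find one whose contraction with $h$ gives a nonzero scalar, recover $g\in\mathscr{P}$ from the hypothesis on $\partial^{\{t\}}_{[a_j,b_j]}(f)$, and conclude $f=g\otimes h\in\mathscr{P}$. Your observation that only the two hypotheses at index $t$ are actually used is also accurate.
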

\proof
If $h$ is identically zero then so is $f$ and the claim is trivial.
Otherwise, by linear independence there exists $1 \le j \le 2$
such that $\partial^{\{t\}}_{[a_j, b_j]}(h)$ is a nonzero constant $c$.
Then $g(x_r, x_s) = c^{-1} \partial^{\{t\}}_{[a_j, b_j]}(f) \in \mathscr{P}$.
Hence $f(x_1, x_2, x_3) = g(x_r, x_s) h(x_t) \in \mathscr{P}$.
\qed

\begin{lemma}\label{linear-at-least-2-easy}
Let $m \ge 3$
and let unary signatures $[a_j, b_j]$ ($1 \le j \le m$) be  pair-wise
linearly independent. Suppose $\partial^{\{i\}}_{[a_j, b_j]}(f) \in \mathscr{P}$
for all $1 \le i \le 3$ and $1 \le j \le m$.
If for some $1 \le i \le 3$,
\begin{itemize}
\item
there are two distinct $j$ such that
{\rm ($\partial^{\{i\}} {\rm D}_j$)} hold, or
\item
there are two distinct $j$ such that
{\rm ($\partial^{\{i\}} {\rm E}_j$)} hold
\end{itemize}
then $f \in \mathscr{P}$.
\end{lemma}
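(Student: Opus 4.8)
The plan is to show that the two indices satisfying the D-conditions (or the E-conditions) force a strong structural collapse of $f$, after which the remaining hypotheses immediately give $f \in \mathscr{P}$. First I would reduce to $i = 1$: the statement and all hypotheses are symmetric under relabeling the three variables of $f$, so we may assume the distinguished index is $1$. Now suppose there are distinct $j_1, j_2$ for which {\rm ($\partial^{\{1\}}\mathrm{D}_{j_1}$)} and {\rm ($\partial^{\{1\}}\mathrm{D}_{j_2}$)} hold, i.e.\ $a_{j_k} f_{000} + b_{j_k} f_{100} = 0$ and $a_{j_k} f_{011} + b_{j_k} f_{111} = 0$ for $k = 1,2$. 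Since $[a_{j_1}, b_{j_1}]$ and $[a_{j_2}, b_{j_2}]$ are linearly independent, the $2\times 2$ coefficient matrix is invertible, forcing $f_{000} = f_{100} = f_{011} = f_{111} = 0$. Hence $\operatorname{supp}(f) \subseteq \{(x_1,x_2,x_3) : x_2 \ne x_3\}$, so $f$ factors as $f(x_1,x_2,x_3) = [0,1,0](x_2,x_3)\cdot h(x_1,x_2)$ with $h(x_1,x_2) := f(x_1,x_2,\overline{x_2})$ binary.

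The E-case is handled identically, only with the two diagonals interchanged: two E-conditions together with linear independence give $f_{001} = f_{101} = f_{010} = f_{110} = 0$, hence $\operatorname{supp}(f) \subseteq \{x_2 = x_3\}$ and $f(x_1,x_2,x_3) = [1,0,1](x_2,x_3)\cdot h(x_1,x_2)$ with $h(x_1,x_2) := f(x_1,x_2,x_2)$. In either case we have arrived at $f = b(x_2,x_3)\cdot h(x_1,x_2)$ for a binary $h$ and $b \in \{[0,1,0], [1,0,1]\}$ (the degenerate subcase $h \equiv 0$, i.e.\ $f \equiv 0$, is trivially in $\mathscr{P}$).

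Next I would exploit the hypothesis that $\partial^{\{3\}}_{[a_j,b_j]}(f) \in \mathscr{P}$ for every $j$. Contracting the third variable of the factored $f$ against $[a_j,b_j]$ yields $h(x_1,x_2)$ times a unary $u_j(x_2)$ — namely $[b_j,a_j]$ when $b = [0,1,0]$ and $[a_j,b_j]$ when $b = [1,0,1]$. Because $m \ge 3$ and the $[a_j,b_j]$ are pairwise linearly independent, at most one of them is a scalar multiple of $[1,0]$ and at most one of $[0,1]$, so some index $j_0$ has $a_{j_0} b_{j_0} \ne 0$, making $u_{j_0}$ nowhere zero; thus $h\cdot u_{j_0} = \partial^{\{3\}}_{[a_{j_0},b_{j_0}]}(f) \in \mathscr{P}$. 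Multiplying one variable of a binary signature by a nowhere-zero unary is an invertible operation that preserves each of the three possibilities defining a binary signature in $\mathscr{P} = \langle \mathcal{E}\rangle$ (support inside an antipodal pair, or a tensor product of two unaries; cf.\ Definition~\ref{definition-product-1}), so $h \in \mathscr{P}$. Adjoining the factor $b(x_2,x_3)$ to a product decomposition of $h$ then exhibits $f$ as a product of unaries, binary equalities, and binary disequalities on variables among $\{x_1,x_2,x_3\}$, so $f \in \mathscr{P}$ by Definition~\ref{definition-product-2}. The only step that needs genuine care is this last one — checking that a nonvanishing unary rescaling cannot escape $\mathscr{P}$, and seeing that the hypothesis $m \ge 3$ is used exactly to supply a non-axis-aligned $[a_{j_0},b_{j_0}]$ (with only the two directions $[1,0],[0,1]$ available the step, and in fact the conclusion, would fail); no planarity or gadget issues arise, since the statement is purely structural.
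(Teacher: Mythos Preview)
Your proof is correct and follows the same opening as the paper: reduce to $i=1$, use linear independence of the two unaries to force $f_{000}=f_{100}=f_{011}=f_{111}=0$ (D-case) or the complementary four zeros (E-case), and then finish by invoking one more hypothesis with a unary $[a_{j_0},b_{j_0}]$ having both entries nonzero, which $m\ge 3$ guarantees.

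The final step is organized differently. You first factor $f=[0,1,0](x_2,x_3)\cdot h(x_1,x_2)$ (or with $[1,0,1]$), then apply the $\partial^{\{3\}}$ hypothesis to obtain $h\cdot u_{j_0}\in\mathscr{P}$ and observe that rescaling a column of a binary signature by a nonzero constant preserves membership in $\mathscr{P}$, hence $h\in\mathscr{P}$ and $f\in\mathscr{P}$. The paper instead applies the $\partial^{\{2\}}$ hypothesis and splits into the three cases $(\partial^{\{2\}}\mathrm{D}_{j_0})$, $(\partial^{\{2\}}\mathrm{E}_{j_0})$, $(\partial^{\{2\}}\det_{j_0})$: the first two reduce the support to a single antipodal pair, and the third shows the residual $2\times 2$ matrix is degenerate. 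Your route avoids this case split and is a bit cleaner; the paper's route is more hands-on but makes the three product-type outcomes explicit.
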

\proof
By symmetry of the 3 variables we may assume $i=1$.
By pair-wise linear independence of $[a_j, b_j]$ ($1 \le j \le m$)
we have either
\begin{enumerate}
\item
 $f_{000} = f_{100} = f_{011} = f_{111} =0$  (by
($\partial^{\{1\}} {\rm D}_j$) for two distinct $j$) or
\item
$f_{001} = f_{101} = f_{010} = f_{110} =0$  (by
($\partial^{\{1\}} {\rm E}_j$) for two distinct $j$).
\end{enumerate}
Suppose it is the first case.

By pair-wise linear independence, there exists some $1 \le j \le 3$
such that $a_j \not = 0$ and $b_j \not = 0$. Consider
 ($\partial^{\{2\}} {\rm D}_j$),
($\partial^{\{2\}} {\rm E}_j$) and
($\partial^{\{2\}} \det_j$), namely
\begin{eqnarray*}
& & \cancel{a_j f_{000}} + b_j f_{010} =0, ~~~
    a_j f_{101} + \cancel{b_j f_{111}} =0
~~~~~~~~~~~~~~~~~~~~~~~~~~~~~~~~~\mbox{($\partial^{\{2\}} {\rm D}_j$)} \\
& \mbox{or} &
a_j f_{001} + \cancel{b_j f_{011}} =0, ~~~
\cancel{a_j f_{100}} + b_j f_{110} =0
~~~~~~~~~~~~~~~~~~~~~~~~~~~~~~~~~\mbox{($\partial^{\{2\}} {\rm E}_j$)} \\
& \mbox{or} &
~~~~\left|
\begin{matrix}
\cancel{a_j f_{000}} + b_j f_{010} &  ~a_j f_{001} + \cancel{b_j f_{011}} \\
\cancel{a_j f_{100}} + b_j f_{110} &  ~a_j f_{101} + \cancel{b_j f_{111}}
\end{matrix}
\right| =0
~~~~~~~~~~~~~~~~~~~~~~~~~~~~~~~~~\mbox{($\partial^{\{2\}} \det_j$)}
\end{eqnarray*}
We have $f_{010} = f_{101} = 0$ from ($\partial^{\{2\}} {\rm D}_j$)
or
$f_{001} = f_{110} = 0$ from ($\partial^{\{2\}} {\rm E}_j$)
or
$\left|
\begin{matrix}
f_{010} &  f_{001}  \\
f_{110} &  f_{101}
\end{matrix}
\right| =0
$ from ($\partial^{\{2\}} \det_j$).
When $f_{010} = f_{101} = 0$ the support of $f$ is contained in
the two diagonal points $\{001, 110\}$, and hence $f \in \mathscr{P}$.
When $f_{001} = f_{110} = 0$ the support of $f$ is contained in
the two diagonal points $\{010,101\}$, and again $f \in \mathscr{P}$.
Suppose $\left|
\begin{matrix}
f_{010} &  f_{001}  \\
f_{110} &  f_{101}
\end{matrix}
\right| =0
$. Then $f$ is the product of the functions $(x_2 \not = x_3)$
and the degenerate function $g(x_1, x_3)$ with the signature in matrix form
$\begin{bmatrix}
f_{010} &  f_{001}  \\
f_{110} &  f_{101}
\end{bmatrix}$, where $x_1 =0,1$ is the row index,
and $x_3 = 0,1$ is the column index.

The second case $f_{001} = f_{101} = f_{010} = f_{110} =0$ is similar.
We exchange all the crossed-out terms in
($\partial^{\{2\}} {\rm D}_j$),
($\partial^{\{2\}} {\rm E}_j$) and
($\partial^{\{2\}} \det_j$) with the uncrossed-out terms.
The conclusions from
($\partial^{\{2\}} {\rm D}_j$)
or from ($\partial^{\{2\}} {\rm E}_j$) are still that
 the support of $f$ is contained in
two diagonal points.
From ($\partial^{\{2\}} \det_j$) we get
$\left|
\begin{matrix}
f_{000} &  f_{011} \\
f_{100} &  f_{111}
\end{matrix}
\right| =0$ and we conclude that
$f$
is the product of the functions $(x_2  = x_3)$
and the degenerate function $g(x_1, x_2)$ with the signature in matrix form
$\begin{bmatrix}
f_{000} &  f_{011} \\
f_{100} &  f_{111}
\end{bmatrix}$, where $x_1 =0,1$ is the row index,
and $x_2 = 0,1$ is the column index.
\qed

\begin{lemma}\label{det-at-least-3-easy}
Let $m \ge 3$
and let unary signatures $[a_j, b_j]$ ($1 \le j \le m$) be  pair-wise
linearly independent. Suppose $\partial^{\{i\}}_{[a_j, b_j]}(f) \in \mathscr{P}$
for all $1 \le i \le 3$ and $1 \le j \le m$.
If for some $1 \le i \le 3$, there are three  distinct $j$ such that
{\rm ($\partial^{\{i\}} \det_j$)} hold, then $f \in \mathscr{P}$.
\end{lemma}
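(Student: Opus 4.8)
The plan is to use the three determinant conditions to force $f$ to have one of its three variables split off as a tensor factor, and then finish with Lemma~\ref{separation}. By the symmetry of the three arguments of $f$ we may assume $i=1$. Write the two $x_1$-slices of $f$ as $2\times 2$ matrices $F_0=\left[\begin{smallmatrix} f_{000} & f_{001}\\ f_{010} & f_{011}\end{smallmatrix}\right]$ and $F_1=\left[\begin{smallmatrix} f_{100} & f_{101}\\ f_{110} & f_{111}\end{smallmatrix}\right]$ (rows indexed by $x_2$, columns by $x_3$). Then the signature matrix of $\partial^{\{1\}}_{[a_j,b_j]}(f)$ is $a_jF_0+b_jF_1$, and the condition $(\partial^{\{1\}}\det_j)$ says precisely that $\det(a_jF_0+b_jF_1)=0$.

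First I would note that $p(a,b):=\det(aF_0+bF_1)$ is a homogeneous polynomial of degree $2$ in $(a,b)$. Because the $[a_j,b_j]$ are pairwise linearly independent, the three indices $j$ for which $(\partial^{\{1\}}\det_j)$ holds give three distinct points $[a_j:b_j]\in\mathbb{P}^1$ that are zeros of $p$; a nonzero binary quadratic form has at most two projective zeros, so $p$ is identically zero. Hence the pencil $aF_0+bF_1$ is singular for every $(a,b)$.

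The heart of the argument is then an elementary fact about such pencils: there is a nonzero row vector $v$ with $vF_0=vF_1=0$, or a nonzero column vector $w$ with $F_0w=F_1w=0$. Evaluating $p$ at $(1,0)$ and $(0,1)$ shows $\det F_0=\det F_1=0$, so each slice has rank at most $1$. If $F_0=0$ or $F_1=0$ the claim is clear; otherwise write $F_0=u_0v_0^{\top}$ and $F_1=u_1v_1^{\top}$ with all four vectors nonzero, and a direct computation shows that $p(a,b)$ equals $ab$ times the product of $\det[u_0\,u_1]$ and $\det[v_0\,v_1]$, so $p\equiv 0$ forces $u_0\parallel u_1$ (then take $v\perp u_0$) or $v_0\parallel v_1$ (then take $w\perp v_0$). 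Translating back: a common left null vector $v=[v_0,v_1]$ says that $\partial^{\{2\}}_{[v_0,v_1]}(f)\equiv 0$, which forces $f(x_1,x_2,x_3)=g(x_1,x_3)\,h(x_2)$ for a binary $g$ and unary $h$; a common right null vector similarly gives $f(x_1,x_2,x_3)=g(x_1,x_2)\,h(x_3)$.

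In either case $f=g(x_r,x_s)h(x_t)$ for some partition $\{r,s\}\cup\{t\}$ of $\{1,2,3\}$, so applying Lemma~\ref{separation} with any two of the $m\ge 3$ pairwise independent unary signatures (whose $\partial^{\{i\}}$ images lie in $\mathscr{P}$ by hypothesis) yields $f\in\mathscr{P}$. The only place requiring care is the pencil step: pinning down the row/column index conventions for $F_0,F_1$ and dispatching the degenerate sub-cases $F_0=0$, $F_1=0$, and $v_1=0$ (or the second coordinate of $w$ zero) when reading off the tensor factorization. None of this is deep once $p\equiv 0$ has been established.
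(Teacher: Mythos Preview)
Your argument is correct. The overall arc matches the paper's: reduce to $i=1$, show the degree-$2$ form $p(a,b)=\det(aF_0+bF_1)$ vanishes identically from three distinct projective zeros, deduce a tensor factorization $f=g(x_r,x_s)h(x_t)$, and finish with Lemma~\ref{separation}.

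Where you diverge is in the middle step. The paper, after extracting the three vanishing coefficients of $p$, works directly with the four column vectors $f_{0\bullet 0},f_{0\bullet 1},f_{1\bullet 0},f_{1\bullet 1}$, invokes a $\mathbb{Z}_2\times\mathbb{Z}_2$ symmetry to assume $f_{0\bullet 0}\ne 0$, and then chases linear dependences case by case (parameters $\lambda,\mu,\nu$) to produce the factorization. Your route is a cleaner matrix-pencil argument: from $\det F_0=\det F_1=0$ write $F_\epsilon=u_\epsilon v_\epsilon^{\top}$, observe $p(a,b)=ab\,\det[u_0\,u_1]\det[v_0\,v_1]$, and conclude a common left or right null vector, which immediately yields $\partial^{\{2\}}_{v}(f)\equiv 0$ or $\partial^{\{3\}}_{w}(f)\equiv 0$ and hence the splitting. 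This avoids the group-action normalization and the $\mu=0$/$\mu\ne 0$ case split; the paper's version, on the other hand, stays closer to the coordinate bookkeeping used elsewhere in the appendix and makes the explicit form of the unary factor visible. Both are short and elementary; yours is a touch more conceptual.
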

\proof
By symmetry of the 3 variables we may assume $i=1$.
Each ($\partial^{\{1\}} \det_j$) is a quadratic form in $a_j$ and $b_j$,
\begin{equation}\label{det-as-lin-system}
\left|
\begin{matrix}
f_{000} & f_{001} \\
f_{010} & f_{011}
\end{matrix}
\right| a_j^2
+
\left(
\left|
\begin{matrix}
f_{000} &  f_{101} \\
f_{010} & f_{111}
\end{matrix}
\right|
+
\left|
\begin{matrix}
f_{100} & f_{001} \\
f_{110} & f_{011}
\end{matrix}
\right|
\right) a_j b_j
+
\left|
\begin{matrix}
f_{100} &  f_{101} \\
f_{110} &  f_{111}
\end{matrix}
\right| b_j^2 =0.
\end{equation}
Assume ($\partial^{\{1\}} \det_j$) holds for 3 distinct values $j, k, \ell$.
By  pair-wise
linear independence, we have a non-singular $3 \times 3$ matrix
$\left|
\begin{matrix}
a_j^2 & a_j b_j & b_j^2 \\
a_k^2 & a_k b_k & b_k^2 \\
a_{\ell}^2 & a_{\ell} b_{\ell} & b_{\ell}^2
\end{matrix}
\right| \not =0$.
Indeed, if all $a_j, a_k, a_{\ell} \not =0$, then
the determinant is
$a_j^2 a_k^2 a_{\ell}^2
\left|
\begin{matrix}
1 &  b_j/a_j & (b_j/a_j)^2 \\
1 &  b_k/a_k & (b_k/a_k)^2 \\
1 &  b_{\ell}/a_{\ell} & (b_{\ell}/a_{\ell})^2
\end{matrix}
\right|$, where the Vandemonde determinant is nonzero
because $b_j/a_j$, $b_k/a_k$ and $b_{\ell}/a_{\ell}$ are pair-wise
distinct.
If any  $a_j, a_k, a_{\ell} =0$, say $a_{\ell}=0$,
then by pair-wise
linear independence $a_j, a_k  \not =0$, and $b_{\ell} \not =0$,
and
$\left|
\begin{matrix}
a_j^2 & a_j b_j & b_j^2 \\
a_k^2 & a_k b_k & b_k^2 \\
a_{\ell}^2 & a_{\ell} b_{\ell} & b_{\ell}^2
\end{matrix}
\right|
= b_{\ell}^2
a_j^2 a_k^2
\left|
\begin{matrix}
1 & b_j/a_j\\
1 & b_k/a_k
\end{matrix}
\right|
\not = 0$.

It follows from (\ref{det-as-lin-system}) that
\begin{equation}\label{det-as-lin-system-coeff-0}
\left|
\begin{matrix}
f_{000} & f_{001} \\
f_{010} & f_{011}
\end{matrix}
\right| =0,~~~~
\left|
\begin{matrix}
f_{000} &  f_{101} \\
f_{010} & f_{111}
\end{matrix}
\right|
+
\left|
\begin{matrix}
f_{100} & f_{001} \\
f_{110} & f_{011}
\end{matrix}
\right| =0,~~~~
\left|
\begin{matrix}
f_{100} &  f_{101} \\
f_{110} &  f_{111}
\end{matrix}
\right| =0.
\end{equation}
There is a transitive group action on the four vectors
\[f_{0 \bullet 0} = \begin{bmatrix} f_{000} \\ f_{010} \end{bmatrix},
~~~~
f_{0 \bullet 1} = \begin{bmatrix} f_{001} \\ f_{011} \end{bmatrix},
~~~~
f_{1 \bullet  0} = \begin{bmatrix} f_{100} \\ f_{110} \end{bmatrix}
~~~~
f_{1 \bullet 1} = \begin{bmatrix} f_{101} \\ f_{111} \end{bmatrix},\]
generated by the permutations
$\sigma$  exchanging $f_{0 \bullet 0} \leftrightarrow f_{1 \bullet  0}$
and $f_{0 \bullet 1} \leftrightarrow f_{1 \bullet 1}$,
and $\tau$  exchanging
 $f_{0 \bullet 0} \leftrightarrow f_{0 \bullet 1}$
and $f_{1 \bullet  0}  \leftrightarrow f_{1 \bullet 1}$.
This group
$\mathbb{Z}_2 \times \mathbb{Z}_2$ preserves the equations
(\ref{det-as-lin-system-coeff-0}).  Thus either all four vectors are
zero in which case $f \in \mathscr{P}$ trivially, or we may
assume $f_{0 \bullet 0} \not =0$.

By the first equation in (\ref{det-as-lin-system-coeff-0}),
there exists $\lambda \in \mathbb{C}$ such that $f_{0 \bullet 1}
= \lambda f_{0 \bullet 0}$. Substituting $f_{0 \bullet 1}$
into the second equation
in (\ref{det-as-lin-system-coeff-0}), we get
$\left|
\begin{matrix}
f_{000} &  f_{101} - \lambda f_{100}  \\
f_{010} & f_{111}  - \lambda f_{110}
\end{matrix}
\right|
=0$ and thus
there exists $\mu \in \mathbb{C}$ such that $f_{1 \bullet  1}
- \lambda f_{1 \bullet  0} =
\mu  f_{0 \bullet 0}$.
If $\mu =0$ then $f_{1 \bullet  1} = \lambda f_{1 \bullet  0}$.
Then $f$ is the product of the unary function $[1, \lambda]$ on $x_3$
and the binary function $g(x_1,x_2)$ with the signature matrix
$
\begin{bmatrix}
f_{000} &  f_{010}  \\
f_{100} &  f_{110}
\end{bmatrix}$, where
 $x_1 =0,1$ is the row index,
and $x_2 = 0,1$ is the column index.
By Lemma~\ref{separation} we are done.

Suppose $\mu \ne 0$.
Substituting $f_{1 \bullet  1} = \lambda f_{1 \bullet  0}
+ \mu  f_{0 \bullet 0}$ into the third  equation
in (\ref{det-as-lin-system-coeff-0}),
there exists $\nu  \in \mathbb{C}$ such that
$f_{1 \bullet  0} =  \nu f_{0 \bullet 0}$,
and $f_{1 \bullet  1} = (\lambda \nu + \mu) f_{0 \bullet 0}$.
Hence, $f$ is the product of the unary function $[f_{000}, f_{010}]$ on $x_2$
and the binary function $g(x_1, x_3)$ with the signature matrix
$
\begin{bmatrix}
1 &  \lambda  \\
\nu &  \lambda \nu + \mu
\end{bmatrix}$, where
 $x_1 =0,1$ is the row index,
and $x_3 = 0,1$ is the column index.
By Lemma~\ref{separation} we are done.
\qed

\begin{lemma}\label{5-unaries-suffice}
Let $m \ge 5$
and let unary signatures $[a_j, b_j]$ ($1 \le j \le m$) be  pair-wise
linearly independent. Suppose $\partial^{\{i\}}_{[a_j, b_j]}(f) \in \mathscr{P}$
for all $1 \le i \le 3$ and $1 \le j \le m$.
Then $f \in \mathscr{P}$.
\end{lemma}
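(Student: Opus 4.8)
The plan is to reduce Lemma~\ref{5-unaries-suffice} to the three preceding lemmas by a short counting argument, so that all the delicate analysis of the entries of $f$ is already done. First I would fix a single coordinate, say $i = 1$ (by the symmetry of the three variables exploited throughout these lemmas the choice is immaterial). For each $j$ with $1 \le j \le m$, the hypothesis $\partial^{\{1\}}_{[a_j,b_j]}(f) \in \mathscr{P}$, together with the characterization of membership in $\mathscr{P}$ for a binary signature recorded just before Lemma~\ref{separation}, says that at least one of the three conditions $(\partial^{\{1\}}{\rm D}_j)$, $(\partial^{\{1\}}{\rm E}_j)$, $(\partial^{\{1\}}\det_j)$ holds. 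Thus each of the $m$ indices $j$ is charged to at least one of these three types.

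Next I would apply a pigeonhole bound. Suppose neither of the first two types is used twice, i.e.\ $(\partial^{\{1\}}{\rm D}_j)$ holds for at most one $j$ and $(\partial^{\{1\}}{\rm E}_j)$ holds for at most one $j$. Then every remaining index, of which there are at least $m - 2 \ge 3$ since $m \ge 5$, must satisfy $(\partial^{\{1\}}\det_j)$. Hence one of the following three mutually exhaustive situations occurs: (a) two distinct $j$ satisfy $(\partial^{\{1\}}{\rm D}_j)$; (b) two distinct $j$ satisfy $(\partial^{\{1\}}{\rm E}_j)$; or (c) three distinct $j$ satisfy $(\partial^{\{1\}}\det_j)$.

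In cases (a) and (b) I would invoke Lemma~\ref{linear-at-least-2-easy} with $i = 1$: its hypotheses are exactly what we have ($m \ge 3$, and $\partial^{\{i\}}_{[a_j,b_j]}(f) \in \mathscr{P}$ for all $i,j$), and it yields $f \in \mathscr{P}$ directly. In case (c) I would invoke Lemma~\ref{det-at-least-3-easy} with $i = 1$, which again gives $f \in \mathscr{P}$ under the same standing hypotheses. Since (a)--(c) exhaust all possibilities, this completes the argument.

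I do not expect a genuine obstacle here, since the case analysis on the coordinates of $f$ and the use of Lemma~\ref{separation} has already been absorbed into Lemmas~\ref{linear-at-least-2-easy} and \ref{det-at-least-3-easy}. The only point requiring care is the bookkeeping: one should observe that $m \ge 5$ is precisely the threshold that forces, in the fixed coordinate $i=1$, either a repeated linear-type condition or at least three determinant-type conditions, and that an index may simultaneously satisfy more than one of the three conditions (which can only help the count). If a cleaner statement is desired, one could also remark that the same argument shows $m \ge 5$ unary signatures acting on \emph{any one} fixed coordinate already suffice.
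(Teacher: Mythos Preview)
Your proposal is correct and essentially identical to the paper's own proof: fix $i=1$, use the characterization of $\partial^{\{1\}}_{[a_j,b_j]}(f)\in\mathscr{P}$ to get one of the three conditions for each $j$, then pigeonhole on $m\ge 5$ to force either two ${\rm D}$'s, two ${\rm E}$'s, or three $\det$'s, and invoke Lemma~\ref{linear-at-least-2-easy} or Lemma~\ref{det-at-least-3-easy}. The paper states this in one paragraph with exactly the same counting argument.
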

\proof
For all $1 \le j \le m$,
either ($\partial^{\{1\}} {\rm D}_j$) or ($\partial^{\{1\}} {\rm E}_j$) or
 ($\partial^{\{1\}} \det_j$) holds,
since $\partial^{\{1\}}_{[a_j, b_j]}(f) \in \mathscr{P}$.
Since $m \ge 5$, either  ($\partial^{\{1\}} {\rm D}_j$) is satisfied for
at least two distinct $j$, or
 ($\partial^{\{1\}} {\rm E}_j$) is satisfied for
at least two distinct $j$, or
($\partial^{\{1\}} \det_j$) is satisfied for
at least three distinct $j$.
Hence by Lemma~\ref{linear-at-least-2-easy} and
Lemma~\ref{det-at-least-3-easy},
$f \in \mathscr{P}$.
\qed

\vspace{.1in}

Consider the Boolean cube with its four diagonal pairs.
We will name them $a,b,c,d$,
where
\[a=(000,111), ~~~~b=(001,110), ~~~~c=(010,101), ~~~~d=(011,100)\]
respectively.
A consequence of each statement ($\partial^{\{i\}} {\rm D}_j$)
or ($\partial^{\{i\}} {\rm E}_j$)
is that for some two diagonal pairs,
the values of $f$ at the diagonal pairs have the same
product. E.g., for any $j$,
($\partial^{\{1\}}  {\rm D}_j$) implies that
the product $f_{000}f_{111} = f_{011}f_{100}$, i.e., the product
of the values of
 $f$ at the diagonal $a$ is the same as that at the diagonal $d$.
Similarly, the statement ($\partial^{\{1\}}  {\rm E}_j$) implies that
the product $f_{001}f_{110} = f_{010}f_{101}$,
 i.e., the product
of the values of
 $f$ at the diagonal $b$ is the same as that at the diagonal $c$.
For ($\partial^{\{2\}} {\rm D}_j$)
(respectively ($\partial^{\{2\}} {\rm E}_j$)), the implications are
for the diagonals $a$ and $c$ (respectively $b$ and $d$).
For ($\partial^{\{3\}} {\rm D}_j$)
(respectively ($\partial^{\{3\}} {\rm E}_j$)), the implications are
for the diagonals $a$ and $b$ (respectively $c$ and $d$).

Define a graph on the vertex set $\{a,b,c,d\}$ where we
add an edge
whenever the corresponding diagonals have the same product value of $f$,
we get a spanning subgraph of $K_4$ on $\{a,b,c,d\}$.
If there are at least 4 edges in this spanning subgraph, then
it is connected, with the implication that all diagonals have the
same product value of $f$.
If this spanning subgraph  is connected, then
 each statement ($\partial^{\{i\}} \det_j$) takes the form
\begin{equation}\label{det-just-quadratic-term}
D_0 a_j^2 + D_2 b_j^2 =0
\end{equation}
for some coefficients $D_0$ and $D_2$ with a zero coefficient
of the cross term $a_jb_j$.

 Moreover the only disconnected spanning subgraph
with 3 edges in $K_4$ is a triangle, meaning that the 3 statements
among all ($\partial^{\{i\}} {\rm D}_j$) and ($\partial^{\{i\}} {\rm E}_j$)
that hold must be those with implications among only three letters
out of four $\{a,b,c,d\}$.
For example, if the triangle  is on $\{a,b,c\}$, then the 3 statements
must be among ($\partial^{\{1\}}  {\rm E}_j$),
($\partial^{\{2\}}  {\rm D}_j$)
and
($\partial^{\{3\}}   {\rm D}_j$)
 (but not ($\partial^{\{1\}}  {\rm D}_j$),
not ($\partial^{\{2\}}   {\rm E}_j$)
and not
  ($\partial^{\{3\}}   {\rm E}_j$)).
Similarly if the triangle  is on $\{b,c,d\}$, then the 3 statements
must be among ($\partial^{\{1\}}  {\rm E}_j$),
($\partial^{\{2\}}  {\rm E}_j$)
and  ($\partial^{\{3\}}   {\rm E}_j$)
 (but not ($\partial^{\{1\}}  {\rm D}_j$),
not ($\partial^{\{2\}}   {\rm D}_j$)
and not ($\partial^{\{3\}}   {\rm D}_j$)).

\begin{lemma}\label{root-i-suffice}
Suppose $f \not \in \mathscr{P}$.
Let $[1, b_j]$ ($1 \le j \le 4$) be the unary signatures
$[1,1], [1,-1], [1, \mathfrak{i}], [1, -\mathfrak{i}]$ respectively.
Suppose $\partial^{\{i\}}_{[1, b_j]}(f) \in \mathscr{P}$
for all $1 \le i \le 3$ and $1 \le j \le 4$.
Then $f$ is a product of some unary functions $[1, b_j]$  and the symmetric
function $[1,1,-1,-1]$.
\end{lemma}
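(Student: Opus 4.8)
The plan is to push the trichotomy bookkeeping of Lemmas~\ref{linear-at-least-2-easy} and~\ref{det-at-least-3-easy} to its boundary case: with only four (rather than five) pairwise linearly independent unaries the earlier argument no longer forces $f\in\mathscr{P}$, and the residue will turn out to be exactly $[1,1,-1,-1]$ twisted by unaries. First, since $f\notin\mathscr{P}$ and $[1,1],[1,-1],[1,\mathfrak{i}],[1,-\mathfrak{i}]$ are pairwise linearly independent, Lemma~\ref{linear-at-least-2-easy} gives that for each variable $i$ at most one $j$ satisfies $(\partial^{\{i\}}{\rm D}_j)$ and at most one satisfies $(\partial^{\{i\}}{\rm E}_j)$, while Lemma~\ref{det-at-least-3-easy} gives that at most two satisfy $(\partial^{\{i\}}\det_j)$. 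Since every $j$ satisfies at least one of the three, these counts are forced to be exactly one ${\rm D}$, one ${\rm E}$, and two $\det$ for every $i$, with no $j$ of two types at once (any overlap would produce a third $\det$-index). The unique ${\rm D}$-statement at variable $i$ equates the product of $f$ over a specific pair of diagonals of the Boolean cube; over $i=1,2,3$ these pairs are $\{a,d\},\{a,c\},\{a,b\}$ (labels as in the discussion after Lemma~\ref{det-at-least-3-easy}), a star at $a$, so all four diagonal products of $f$ are equal; call the common value $P$. By the computation in that discussion, equality of all four diagonal products kills the cross term of every $(\partial^{\{i\}}\det_j)$, so each $\det$-condition reads $D_0^{(i)}+D_2^{(i)}b_j^{2}=0$.

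Next I would rule out $P=0$. If $P=0$, then flipping all three variables if necessary (which permutes the four unaries up to scalars and hence preserves all hypotheses, and preserves the target form) lets us assume $f_{000}=0$; the ${\rm D}$-statement at $x_1$, namely $f_{000}+b\,f_{100}=0$ for some $b\in\{1,-1,\mathfrak{i},-\mathfrak{i}\}$, then forces $f_{100}=0$. But the two $\det$-unaries at $x_2$ cannot be one real ($b^2=1$) and one imaginary ($b^2=-1$): that would give $D_0^{(2)}=D_2^{(2)}=0$ and hence make all four unaries satisfy $(\partial^{\{2\}}\det_j)$, contradicting that exactly two do; so they share a common $b^2$, and the two failing $\det$-conditions for the other two unaries then give $D_0^{(2)}=f_{000}f_{101}-f_{001}f_{100}\neq0$, absurd when $f_{000}=f_{100}=0$. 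Hence $P\neq0$, so all eight entries of $f$ are nonzero by the diagonal equations; after scaling take $f_{000}=1$, and with $p=f_{100}$, $q=f_{010}$, $r=f_{001}$ the diagonal equations give $f_{111}=P$, $f_{011}=P/p$, $f_{101}=P/q$, $f_{110}=P/r$.

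Finally I would pin down $p,q,r,P$. The ${\rm D}$-statement at $x_1$ forces its unary to equal $[1,-1/p]$, so $-1/p\in\{1,-1,\mathfrak{i},-\mathfrak{i}\}$ and $p$ is a fourth root of unity; cyclically $q,r$ are fourth roots of unity. As in the $P=0$ step, the two $\det$-unaries at $x_1$ cannot be mixed, so the ${\rm D}$- and ${\rm E}$-unaries at $x_1$ are both real or both imaginary; since the ${\rm E}$-statement at $x_1$ forces its unary to equal $[1,-qr/P]$, this says $\{-1/p,\,-qr/P\}=\{1,-1\}$ or $\{\mathfrak{i},-\mathfrak{i}\}$, and a short check of these four cases yields $P=-pqr$ in each. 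With $p,q,r$ fourth roots of unity and $P=-pqr$, a direct comparison of all eight entries shows $f$ equals $[1,1,-1,-1](x_1,x_2,x_3)\cdot[1,p](x_1)\cdot[1,q](x_2)\cdot[1,r](x_3)$ up to the normalizing scalar, which is the asserted form since each of $[1,p],[1,q],[1,r]$ is one of the $[1,b_j]$. The only genuinely delicate part is the opening bookkeeping — obtaining the clean ``one ${\rm D}$, one ${\rm E}$, two $\det$, no overlap'' partition for every variable and correctly translating the ${\rm D}/{\rm E}$-conditions into the diagonal-product equalities and the unary identities $[1,-1/p]$, $[1,-qr/P]$; everything after that is a finite case check.
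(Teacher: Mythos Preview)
Your argument is correct and follows the same overall strategy as the paper's proof: use Lemmas~\ref{linear-at-least-2-easy} and~\ref{det-at-least-3-easy} to force the exact ``one ${\rm D}$, one ${\rm E}$, two $\det$'' partition at each variable, observe that the ${\rm D}$-statements connect the diagonal-product graph (you use the star at $a$, the paper uses all of $K_4$ via both ${\rm D}$ and ${\rm E}$), deduce that the two $\det$-unaries at each variable must be a $\pm$ pair (else $D_0=D_2=0$ and all four would be $\det$), and then read off the multipliers to exhibit $f$ as $[1,1,-1,-1]$ times three unaries from $\{[1,\pm 1],[1,\pm\mathfrak{i}]\}$.

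The only stylistic difference is that you separate out and explicitly dispatch the case $P=0$ before normalizing $f_{000}=1$, whereas the paper works directly with the multipliers $x=-b_j$, $y$, $z$ coming from the valid ${\rm D}$-statements and writes $f=([x,1]\otimes[y,1]\otimes[z,1])\cdot f_{000}[1,1,-1,-1]$; the nonvanishing of all eight entries (equivalently $P\neq 0$) is then implicit in the fact that each entry is a nonzero multiple of $f_{111}$, which cannot vanish lest $f\equiv 0\in\mathscr{P}$. Your explicit $P=0$ elimination is a fine alternative, and your final verification of the product formula with $P=-pqr$ is more computationally transparent than the paper's assertion.
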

\proof
By Lemma~\ref{linear-at-least-2-easy} and Lemma~\ref{det-at-least-3-easy},
we may assume that for each  $1 \le i \le 3$
there can be at most one $j$ such that
($\partial^{\{i\}} {\rm D}_j$) holds, and
at most one $j'$ such that
($\partial^{\{i\}} {\rm E}_{j'}$) holds, and
at most two distinct values $k$ and $k'$ such that
($\partial^{\{i\}} \det_k$)  and ($\partial^{\{i\}} \det_{k'}$) hold.
Moreover since there are 4 such requirements that must be
satisfied altogether for the same $i$, there exist exactly
one such $j$, $j'$, $k$ and $k'$ respectively, and
$\{j, j', k, k'\} = \{1,2,3,4\}$.

The spanning subgraph of $K_4$ in this case is the full graph $K_4$,
and all diagonals have the same product value.
For any $1 \le i \le 3$ if the two valid determinantal statements
($\partial^{\{i\}} \det_k$)  and ($\partial^{\{i\}} \det_{k'}$)
are not for the pair $\{[1,1], [1,-1]\}$ or
$\{[1,\mathfrak{i}], [1,-\mathfrak{i}]\}$,
 then we have $b_k^2 \not = b_{k'}^2$.
By the form of (\ref{det-just-quadratic-term}) with the vanished
cross term, it follows that all 4 statements ($\partial^{\{i\}} \det_{\ell}$)
must hold, where $1 \le \ell \le 4$.
 By Lemma~\ref{det-at-least-3-easy}
we are done.
Hence we may assume $b_k = \pm b_{k'}$.
Furthermore $b_k \not = b_{k'}$ because $\{j, j', k, k'\} = \{1,2,3,4\}$,
hence $b_k  = - b_{k'} \in \{1, -1\}$ or
 $b_k = - b_{k'} \in \{ \mathfrak{i}, -\mathfrak{i}\}$.
This also implies that $b_j  = - b_{j'}$.

For the valid ($\partial^{\{1\}} {\rm D}_j$)
let $x = -b_j \in \{1,-1, \mathfrak{i}, -\mathfrak{i}\}$. This is the multiplier for which
$f_{000} = x f_{100}$ and $f_{011} = x f_{111}$.
 The corresponding multiplier for ($\partial^{\{1\}} {\rm E}_{j'}$)
is $-x$ such that
$f_{001} = -x f_{101}$ and $f_{010} = -x f_{110}$.
Similarly for the  valid ($\partial^{\{2\}} {\rm D}_j$)
we define $y \in  \{1,-1, \mathfrak{i}, -\mathfrak{i}\}$ such that
$f_{000} = y f_{010}$ and $f_{101} = y f_{111}$.
Also
$f_{001} = -y f_{011}$ and $f_{100} = -y f_{110}$.
Finally  for the  valid ($\partial^{\{3\}}  {\rm D}_j$)
we define $z  \in  \{1,-1, \mathfrak{i}, -\mathfrak{i}\}$ such that
$f_{000} = z f_{001}$ and $f_{110} = z f_{111}$.
Also
$f_{010} = -z f_{011}$ and $f_{100} = -z f_{101}$.

If we let $g(x_1, x_2, x_3)$ be the product function
 $g_1(x_1) g_2(x_2) g_3(x_3)$ where $g_1(x_1) = [x, 1]_{x_1},
g_2(x_2) = [y, 1]_{x_2}, g_3(x_3) = [z,  1]_{x_3}$, i.e.,
$g = [x, 1]_{x_1}  \otimes
[y, 1]_{x_2} \otimes [z,  1]_{x_3}$, then
$f = g h$ where $h$  is the ternary symmetric function $f_{000}[1,1,-1,-1]$.
\qed

\begin{corollary}\label{cor:root-i-suffice}
Let $\mathcal{F}$ be a set of signatures containing a ternary signature
$f \not \in \mathscr{P}$.
Suppose $\mathcal{F}$ contains the
unary signatures
$\{[1, b_j] \mid 1 \le j \le 4\}
=\{[1,1], [1,-1], [1, \mathfrak{i}], [1, -\mathfrak{i}]\}$,
and
$\partial^{\{i\}}_{[1, b_j]}(f) \in \mathscr{P}$
for all $1 \le i \le 3$ and $1 \le j \le 4$.
Then
\[\text{\rm Pl-}\#{\rm CSP}(\mathcal{F} \cup\{ [1,1,-1,-1]\})
\le_{\tt T} \text{\rm Pl-}\#{\rm CSP}(\mathcal{F}).\]
\end{corollary}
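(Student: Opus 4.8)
The plan is to read a precise structural description of $f$ off of Lemma~\ref{root-i-suffice}, and then exhibit a planar gadget built only from $f$, the four unary signatures $[1,b_j]\in\mathcal{F}$, and $(=_3)\in\mathcal{EQ}$ whose signature is a nonzero scalar multiple of $[1,1,-1,-1]$. Having such a gadget, the reduction $\PlCSP(\mathcal{F}\cup\{[1,1,-1,-1]\})\le_{\tt T}\PlCSP(\mathcal{F})$ follows in the usual way by replacing every occurrence of $[1,1,-1,-1]$ by the gadget and correcting the Holant value by the resulting power of the scalar.

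First I would observe that the hypotheses of Lemma~\ref{root-i-suffice} are met verbatim: $f\notin\mathscr{P}$, the set $\{[1,b_j]\mid 1\le j\le 4\}$ equals $\{[1,1],[1,-1],[1,\mathfrak{i}],[1,-\mathfrak{i}]\}$, and $\partial^{\{i\}}_{[1,b_j]}(f)\in\mathscr{P}$ for all $1\le i\le 3$, $1\le j\le 4$. Therefore $f$ is a product of three unary functions $u_1(x_1),u_2(x_2),u_3(x_3)$ — each a nonzero scalar multiple of one of $[1,1],[1,-1],[1,\mathfrak{i}],[1,-\mathfrak{i}]$ — together with a scalar multiple of the symmetric function $[1,1,-1,-1](x_1,x_2,x_3)$. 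Since the identically zero function lies in $\mathscr{P}$ whereas $f\notin\mathscr{P}$, $f$ is not identically zero, and as each $u_i$ and $[1,1,-1,-1]$ is nonzero this scalar is nonzero; write $f(x_1,x_2,x_3)=c\cdot u_1(x_1)u_2(x_2)u_3(x_3)\cdot[1,1,-1,-1](x_1,x_2,x_3)$ with $c\neq 0$.

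Next, for any $w\in\{1,-1,\mathfrak{i},-\mathfrak{i}\}$ the binary signature $[1,0,w]$ is realizable in $\PlCSP(\mathcal{F})$: connect the unary $[1,w]\in\mathcal{F}$ to one edge of $(=_3)\in\mathcal{EQ}$, a planar gadget. Now write $u_i=\gamma_i[1,w_i]$ with $\gamma_i\neq 0$ and $w_i\in\{1,-1,\mathfrak{i},-\mathfrak{i}\}$, and let $w_i'\in\{1,-1,\mathfrak{i},-\mathfrak{i}\}$ be the inverse of $w_i$. I would connect $[1,0,w_i']$ to the $i$-th input edge of $f$ for $i=1,2,3$, in the natural counterclockwise order; composing planar gadgets at distinct edges of a single vertex keeps the construction planar, with the outer dangling edges emerging in the original cyclic order. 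A one-line check shows that attaching $[1,0,w_i']$ to the input carrying the factor $u_i=\gamma_i[1,w_i]$ contracts that factor to a nonzero constant while leaving the $[1,1,-1,-1]$-dependence on that variable intact; doing this on all three inputs produces the ternary signature $\lambda\cdot[1,1,-1,-1]$ for a nonzero constant $\lambda$ (an explicit product of $c,\gamma_1,\gamma_2,\gamma_3$ and the $w_i'$).

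Finally, given an instance $\Omega$ of $\PlCSP(\mathcal{F}\cup\{[1,1,-1,-1]\})$ in which $[1,1,-1,-1]$ occurs $m$ times, replace each occurrence by the planar gadget above to obtain an instance $\Omega'$ of $\PlCSP(\mathcal{F})$ with $\PlHolant(\Omega)=\lambda^{-m}\,\PlHolant(\Omega')$, which is computable with one oracle call and an easy scalar correction; this completes the proof. The only steps requiring any care are verifying the planarity of the composite gadget and bookkeeping the nonzero scalar $\lambda$, and neither presents a real difficulty — all of the substantive content is already contained in Lemma~\ref{root-i-suffice}, which has been proved above.
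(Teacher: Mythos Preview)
Your proof is correct and is precisely the natural derivation the paper intends: the paper states this corollary immediately after Lemma~\ref{root-i-suffice} without an explicit proof, relying on the evident fact that the unary factors $[x,1],[y,1],[z,1]$ (with $x,y,z\in\{\pm 1,\pm\mathfrak{i}\}$) appearing in that lemma can be stripped off in $\PlCSP$ using the available unaries and $(=_3)$. Your construction of $[1,0,w_i']$ via $\partial_{[1,w_i']}(=_3)$ and the subsequent scalar bookkeeping carry this out cleanly.
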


\begin{lemma}\label{3rd-root-suffice}
Suppose $f \not \in \mathscr{P}$.
Let $[1, b_j]$ ($1 \le j \le 3$) be the unary signatures
$[1,1], [1,\omega], [1,\omega^2]$ respectively.
Suppose $\partial^{\{i\}}_{[1, b_j]}(f) \in \mathscr{P}$
for all $1 \le i \le 3$ and $1 \le j \le 3$.
Then $f$ is a product of some unary functions $[1, b_j]$  with
\begin{enumerate}
\item
the symmetric
function $[1, -1, x, -x]$ where $x \in \{\omega, \omega^2\}$, or
\item
the symmetric
function $[-2,1,1,-2]$, or
\item
after a cyclic permutation of its three variables
a ternary function  $g(1-x_1, x_2, x_3)$ where $g(x_1, x_2, x_3)$
is the symmetric function $[-2, 1,1,-2]$.
\end{enumerate}
\end{lemma}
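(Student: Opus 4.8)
The plan is to mimic the proof of Lemma~\ref{root-i-suffice}, now with the three cube roots of unity $\{1,\omega,\omega^2\}$ playing the role of $\{1,-1,\mathfrak{i},-\mathfrak{i}\}$. Since $f\notin\mathscr{P}$, Lemma~\ref{linear-at-least-2-easy} and Lemma~\ref{det-at-least-3-easy} show that for each $1\le i\le 3$ there is at most one $j$ with $\partial^{\{i\}}\mathrm{D}_j$, at most one with $\partial^{\{i\}}\mathrm{E}_j$, and at most two with $\partial^{\{i\}}\det_j$; and because $\partial^{\{i\}}_{[1,b_j]}(f)\in\mathscr{P}$ for every $(i,j)$, these statements together must cover $j\in\{1,2,3\}$, so each coordinate contributes at least one edge to the spanning subgraph of $K_4$ on the four diagonals $\{a,b,c,d\}$ defined in the paragraph preceding Lemma~\ref{root-i-suffice}. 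First I would run that graph construction, and then split on the resulting graph.

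\emph{Connected case.} If the graph is connected then all four diagonal products of $f$ are equal, so every statement $\partial^{\{i\}}\det_j$ has vanishing cross term and reads $D_0^{(i)}+D_2^{(i)}b_j^2=0$. As $b^2$ ranges over the three pairwise distinct values $1,\omega^2,\omega$, this linear equation in $b_j^2$ has at most one root among $\{1,\omega,\omega^2\}$ (if $D_0^{(i)}=D_2^{(i)}=0$ then all three are roots, forcing $f\in\mathscr{P}$ by Lemma~\ref{det-at-least-3-easy}). Hence each coordinate has exactly one $\mathrm{D}$-, one $\mathrm{E}$-, and one $\det$-statement, the graph is all of $K_4$, and after rescaling $f_{111}=1$. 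Solving the six $\mathrm{D}$/$\mathrm{E}$ linear relations as in Lemma~\ref{root-i-suffice} expresses every entry of $f$ in terms of $f_{111}$ and the negatives $x,y,z,w\in\{-1,-\omega,-\omega^2\}$ of the relevant unaries; the three $\det$-conditions then turn out to be automatically satisfied, reducing to identities among products of cube roots; and a direct factorization gives $f=\big([\,p_1,1]\otimes[\,p_2,1]\otimes[\,p_3,1]\big)\cdot[-r,r,-1,1]$ with $r$ a primitive cube root of unity, which up to a scalar and the substitution $x=r^2$ is conclusion~(1). \emph{Disconnected case.} By the connected-case argument a connected spanning subgraph must be all of $K_4$, so a non-$K_4$ graph is disconnected; having at least $3$ edges and being a proper subgraph of $K_4$, it is the only disconnected $3$-edge subgraph, a triangle. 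The group $S_3$ of variable permutations acts on $\{b,c,d\}$ as the full symmetric group while fixing $a$, so up to this symmetry there are two sub-cases: the triangle $\{b,c,d\}$, and a triangle through $a$. In each, the three relevant $\mathrm{D}$/$\mathrm{E}$-statements determine all entries of $f$ except one of weight $0$ or $3$ and its antipode; for a suitable coordinate the two valid $\partial^{\{i\}}\det_j$ statements form a quadratic whose roots are the two cube roots other than the one used by the $\mathrm{D}$/$\mathrm{E}$-statement, which pins $D_0^{(i)}:D_1^{(i)}:D_2^{(i)}$ and yields two candidate values for the free entry. One candidate makes $f^{x_1=1}$ proportional to $f^{x_1=0}$, i.e.\ $f\in\mathscr{P}$, so the other holds, giving $f_{000}=-2uv f_{011}$, $f_{111}=-2f_{011}/w$ (for the triangle $\{b,c,d\}$) and hence, after factoring out unaries, $f=(\text{unaries})\cdot[-2,1,1,-2]$, while for a triangle through $a$ the same computation gives $f=(\text{unaries})\cdot g(1-x_1,x_2,x_3)$ with $g=[-2,1,1,-2]$, the cyclic permutation in the statement accounting for which of the three triangles through $a$ occurs. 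Throughout, degenerate sub-cases where an entry or a product of entries vanishes are disposed of at once: they push $\operatorname{supp}(f)$ into a single antipodal pair or produce a unary tensor factor, so $f\in\mathscr{P}$.

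The step I expect to be the main obstacle is the bookkeeping in the disconnected case: translating ``exactly these two unaries satisfy the determinantal condition at coordinate $i$'' into the assertion that the quadratic $D_0^{(i)}+D_1^{(i)}b+D_2^{(i)}b^2$ has precisely those two cube roots as its roots, reading off the coefficient ratios, and then solving the resulting $2\times 2$ system cleanly enough to isolate the non-product solution and recognize it as (a unary-twisted) $[-2,1,1,-2]$. A secondary technical point is verifying that in the connected case the $\det$-conditions are genuinely automatic, so that conclusion~(1) really carries the one-parameter core $[1,-1,x,-x]$ and not a spurious extra constraint; I would check this with the same product-of-cube-roots identities used to force the graph to be $K_4$.
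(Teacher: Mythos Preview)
Your approach is the paper's. The one place where ``as in Lemma~\ref{root-i-suffice}'' does not transfer automatically is the connected case. With fourth roots, once you know that for each coordinate the $\mathrm D$-statement uses $[1,b_j]$ and the $\mathrm E$-statement uses $[1,b_{j'}]$, you get $b_{j'}=-b_j$ for free (the two $\det$-unaries form a $\pm$-pair, hence so do the remaining two), so the $\mathrm E$-multiplier is always $-1$ times the $\mathrm D$-multiplier and the factor $[1,1,-1,-1]$ drops out immediately. With cube roots this fails: the $\mathrm D$-statement at coordinate $i$ uses some $x\in\{1,\omega,\omega^2\}$ and the $\mathrm E$-statement uses some $x'\in\{x\omega,x\omega^2\}$, and the ratio $x'/x\in\{\omega,\omega^2\}$ is a priori different for different $i$. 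The paper derives the equality of these three ratios from the over-determination of the twelve $\mathrm D$/$\mathrm E$ relations on eight entries (e.g.\ computing $f_{001}$ via coordinate~1's $\mathrm E$ and coordinate~2's $\mathrm D$, versus coordinate~2's $\mathrm E$ and coordinate~1's $\mathrm D$, gives $x'y=xy'$), and only then obtains a single $\rho=x'/x=y'/y=z'/z$ and the common factor $[-\rho,\rho,-1,1]=[1,-1,\rho^{-1},-\rho^{-1}]$. Your six $\mathrm D$/$\mathrm E$ relations will not ``express every entry in terms of $f_{111}$'' without this step; the solvability of that over-determined system \emph{is} the consistency argument. (Your parameter list ``$x,y,z,w$'' already hints that something has been collapsed without justification.) The three $\det$-conditions play no role here; they are not needed once consistency is established.

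In the disconnected case your sketch matches the paper: after factoring out unaries and normalizing the six middle entries to $1$, one coordinate's two $\det$-conditions, viewed as a quadratic with the prescribed roots $x\omega,x\omega^2$, give two equations in $Y:=g^{000}$ and $Z:=g^{111}$ which reduce to $Y=Z$ and $(Y-1)(Y+2)=0$, with $Y=1$ degenerate. One small correction: the reduction of the three triangles through $a$ to a single case uses only the \emph{cyclic} subgroup of $S_3$ acting on $\{b,c,d\}$, not all of $S_3$; in the planar setting arbitrary variable permutations are not available, and the paper remarks on this explicitly after the proof.
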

\proof
By Lemma~\ref{linear-at-least-2-easy} and Lemma~\ref{det-at-least-3-easy},
we may assume that for each  $1 \le i \le 3$
there can be at most one $j$ such that
($\partial^{\{i\}} {\rm D}_j$) holds, and
at most one $j'$ such that
($\partial^{\{i\}} {\rm E}_{j'}$) holds, and
at most two distinct values $k$ and $k'$ such that
($\partial^{\{i\}} \det_k$)  and ($\partial^{\{i\}} \det_{k'}$) hold.

Let $N$ be the total number of  valid statements among
($\partial^{\{i\}} {\rm D}_j$) and ($\partial^{\{i\}} {\rm E}_j$).
If $N \le 2$ then for some $1 \le i \le 3$, all three statements
($\partial^{\{i\}} \det_j$) for $1 \le j \le 3$ must hold.
Hence $N \ge 3$.

We first assume that the spanning subgraph of $K_4$ defined
by the valid statements ($\partial^{\{i\}} {\rm D}_j$) and ($\partial^{\{i\}} {\rm E}_j$)
is connected.  In particular if  $N \ge 4$, then this is the case.
In this case, all statements
($\partial^{\{i\}} \det_j$) are of the form (\ref{det-just-quadratic-term})
with a vanishing cross term.
For any two disinct $b_j, b_k \in \{1, \omega, \omega^2\}$,
$b_j^2 \not = b_k^2$, therefore for any $1 \le i \le 3$, if there are two
distinct valid statements
($\partial^{\{i\}} \det_j$) and ($\partial^{\{i\}} \det_{j'}$) ($j \not = j'$),
then ($\partial^{\{i\}} \det_k$) is valid for
all $1 \le k \le 3$. By Lemma~\ref{det-at-least-3-easy} we
reach a contradiction to  $f \not \in \mathscr{P}$.
Therefore there cannot be more than one valid ($\partial^{\{i\}} \det_j$)
for each $1 \le i \le 3$. It follows that for every  $1 \le i \le 3$,
 there is exactly one
valid ($\partial^{\{i\}} {\rm D}_j$), one valid ($\partial^{\{i\}} {\rm E}_k$)
and one valid ($\partial^{\{i\}} \det_{\ell}$), such that
$\{j,k,\ell\} = \{1,2,3\}$.

For the valid ($\partial^{\{1\}} {\rm D}_j$), let $x = b_j
\in \{1, \omega, \omega^2\}$,
then $f_{000} = -x f_{100}$ and $f_{011} = -x f_{111}$.
The corresponding multiplier for the valid ($\partial^{\{1\}} {\rm E}_{k}$)
is $x'$ where $x' = x \omega$ or $x \omega^2$, such that
$f_{001} = -x' f_{101}$ and $f_{010} = -x' f_{110}$.
Similarly for the  valid ($\partial^{\{2\}} {\rm D}_j$)
we define $y \in  \{1, \omega, \omega^2\}$ such that
$f_{000} = -y f_{010}$ and $f_{101} = -y f_{111}$.
Also
$f_{001} = -y' f_{011}$ and $f_{100} = -y' f_{110}$,
where $y' = y \omega$ or $y \omega^2$.
Finally  for the  valid ($\partial^{\{3\}}  {\rm D}_j$)
we define $z  \in   \{1, \omega, \omega^2\}$ such that
$f_{000} = -z f_{001}$ and $f_{110} = -z f_{111}$.
Also
$f_{010} = -z' f_{011}$ and $f_{100} = -z' f_{101}$,
where $z' = z \omega$ or $z \omega^2$.
Clearly if any value $f_{abc} =0$ then $f$ is identically 0,
a contradiction to  $f \not \in \mathscr{P}$.
So we may assume $f$ has no zero values.
By consistency of values, $f_{001} = -x' f_{101} = x'y f_{111}$
and $f_{001} = -y' f_{011} = x y' f_{111}$,
hence $x'/x = y'/y$. Similarly $f_{000} = -z f_{001} = y'z f_{011}$
and $f_{000} = -y f_{010} = y z' f_{011}$,
hence $y'/y = z'/z$.

Let $\rho = x'/x = y'/y = z'/z \in \{\omega, \omega^2\}$.
 Let $g(x_1, x_2, x_3)$ be the product function
 $g_1(x_1) g_2(x_2) g_3(x_3)$ where $g_1(x_1) = [-x, 1]_{x_1},
g_2(x_2) = [-y, 1]_{x_2}, g_3(x_3) = [-z,  1]_{x_3}$, i.e.,
$g = [-x, 1]_{x_1}  \otimes
[-y, 1]_{x_2} \otimes [-z,  1]_{x_3}$. Then
$f = g h$ where $h$ is the ternary symmetric function $f_{111}[-\rho,
\rho, -1, 1]$. Alternatvely we have $h = f_{000}[1,-1, \xi, -\xi]$,
where $\xi =  1/\rho \in  \{\omega, \omega^2\}$.

Now suppose $N=3$ and  the spanning subgraph
with 3 edges in $K_4$ is a triangle.
Then for each  $1 \le i \le 3$ there are exactly
two distinct  values $k$ and $k'$ such that
($\partial^{\{i\}} \det_k$)  and ($\partial^{\{i\}} \det_{k'}$) hold.
It is either the triangle on $\{b,c,d\}$
or a triangle
involving the vertex $a$, in which case by a cyclic permutation
of the 3 variables,
we may assume it is $\{a,b,c\}$.
For the  triangle on $\{b,c,d\}$ the 3 valid statements
among all ($\partial^{\{i\}}  {\rm D}_j$) and ($\partial^{\{i\}}  {\rm E}_j$)
must be among ($\partial^{\{1\}}  {\rm E}_j$),
($\partial^{\{2\}}  {\rm E}_j$)
and  ($\partial^{\{3\}}   {\rm E}_j$).
For the  triangle on $\{a,b,c\}$  the 3 valid statements
among all ($\partial^{\{i\}}  {\rm D}_j$) and ($\partial^{\{i\}}  {\rm E}_j$)
must be among ($\partial^{\{1\}}  {\rm E}_j$),
($\partial^{\{2\}}  {\rm D}_j$)
and
($\partial^{\{3\}}   {\rm D}_j$).

We first consider the triangle  $\{b,c,d\}$ case.
According to the valid ($\partial^{\{1\}}  {\rm E}_j$) we let
$x = b_j \in \{1, \omega, \omega^2\}$, then
$f_{001} = -x f_{101}$ and $f_{010} = -x f_{110}$.
Then the two valid ($\partial^{\{1\}} \det_{k}$)
and ($\partial^{\{1\}} \det_{\ell}$) hold for $b_k = b_j \omega$ and $b_{\ell} =
b_j \omega^2$. Hence
the following equation has two roots $X = x \omega$ and $X = x \omega^2$:
\begin{equation}\label{omega-det-two-eqn}
\left|
\begin{matrix}
f_{000} &  f_{001} \\
f_{010} & f_{011}
\end{matrix}
\right|
+
\left|
\begin{matrix}
f_{000} &  f_{101} \\
f_{010} & f_{111}
\end{matrix}
\right|
X
+
\left|
\begin{matrix}
f_{100} &  f_{101} \\
f_{110} & f_{111}
\end{matrix}
\right|
X^2
=0.
\end{equation}
Notice that we have used the fact that one cross term is zero:
$\left|
\begin{matrix}
f_{100} &  f_{001} \\
f_{110} & f_{011}
\end{matrix}
\right|
=0$, because the diagonal $b$ and $d$ have the same product value
$f_{001}f_{110} = f_{011} f_{100}$.
Subtracting one equation from another in (\ref{omega-det-two-eqn})
with $X = x \omega$ and $X = x \omega^2$ we get
\begin{equation}\label{linear-omega-det-two-eqn-diff}
\left|
\begin{matrix}
f_{000} &  f_{101} \\
f_{010} & f_{111}
\end{matrix}
\right| = x \left|
\begin{matrix}
f_{100} &  f_{101} \\
f_{110} & f_{111}
\end{matrix}
\right|.
\end{equation}

Similarly we have a valid ($\partial^{\{2\}}  {\rm E}_j$) for some $j$
according to which we let $y=b_j \in \{1, \omega, \omega^2\}$, and then
$f_{001} = -y f_{011}$ and $f_{100} = -y f_{110}$.
Also two statements ($\partial^{\{2\}} \det_{k}$)
and ($\partial^{\{2\}} \det_{\ell}$) hold for $b_k = b_j \omega$ and $b_{\ell} =
b_j \omega^2$.
Finally we have a valid ($\partial^{\{3\}}   {\rm E}_j$) for some $j$
according to which we let $z=b_j \in \{1, \omega, \omega^2\}$, and then
$f_{100} = -z f_{101}$ and $f_{010} = -z f_{011}$.
Also two statements ($\partial^{\{3\}} \det_{k}$)
and ($\partial^{\{2\}} \det_{\ell}$) hold for $b_k = b_j \omega$ and $b_{\ell} =
b_j \omega^2$.

It follows that
\[f_{001} = -x f_{101},
~~~~f_{100} = -z f_{101},
~~~~f_{110} = \frac{z}{y} f_{101},
~~~~f_{011} = \frac{x}{y} f_{101},
~~~~f_{010} = -\frac{xz}{y} f_{101}.\]
Let $g(x_1, x_2, x_3) = f(x_1, x_2, x_3)/
\left( [-x, 1]_{x_1} \otimes [-y, 1]_{x_2}
 \otimes [-z, 1]_{x_3} \right)$, then
$g^{000} = f_{000}/(-xyz)$,
$g^{111} = f_{111}$,
 and $g^{001} = f_{001}/(xy) = (-1/y) f_{101}
= g^{101}$. Similarly we can show
\[g^{001} = g^{100} = g^{110} = g^{011} =  g^{010} = g^{101}.\]
If $g^{101} =0$ then $g \in \mathscr{P}$, and so does
$f$, a contradiction to  $f \not \in \mathscr{P}$.
Hence we may normalize by a constant and assume $g^{101} =1$.

After some computation,  equation
(\ref{linear-omega-det-two-eqn-diff}) simplifies to
\begin{equation}\label{linear-omega-det-two-eqn-diff-consequence}
YZ + Z - 2 =0
\end{equation}
 where $Y = g^{000}$ and $Z = g^{111}$.
The equation (\ref{omega-det-two-eqn}) for the root $X =x \omega$
simplifies to
\begin{equation}\label{omega-det-two-eqn-consequence}
Y -1 + (1-YZ) \omega + (Z-1) \omega^2=0
\end{equation}
where $Y$ and $Z$ are as above.
If we substitute $1-YZ = Z-1$ from
(\ref{linear-omega-det-two-eqn-diff-consequence})
to
(\ref{omega-det-two-eqn-consequence})
we get $Y=Z$. Substituting this back in
(\ref{linear-omega-det-two-eqn-diff-consequence})
we get $(Y-1)(Y+2) =0$, and $Y=1$ or $Y=-2$.
The solution $Y=1$ gives a degenerate $g$ and hence $f$,
a contradiction to  $f \not \in \mathscr{P}$.
The solution $Y=Z=-2$  gives $g = [-2, 1,1,-2]$.
This gives
\[f(x_1, x_2, x_3) = \left( [-x, 1]_{x_1} \otimes [-y, 1]_{x_2}
 \otimes [-z, 1]_{x_3} \right)  [-2, 1,1,-2],\]
where $x, y, z  \in \{1, \omega, \omega^2\}$.

The last case is that $N=3$ and the  spanning subgraph
with 3 edges in $K_4$ is the triangle on $\{a,b,c\}$ after
a cyclic permutation
of the 3 variables.
By flipping $x_1$ with its negation
$\overline{x_1}$ we can invoke what has been
proved for the triangle $\{b,c,d\}$ case, and conclude that
$f(x_1, x_2, x_3)$ is a product of some unary functions with the
function $g(\overline{x_1}, x_2, x_3)$ where $g(x_1, x_2, x_3)$
is the symmetric function $[-2, 1,1,-2]$.
\qed

\vspace{.1in}

We remark that since we are interested in a planar \#CSP problems,
we may not use arbitrary permutation of variables.
In the proof above, whenever the conclusion is symmetric in all
three variables, then the argument can apply an arbitrary permutation
in the proof without loss of generality.
However if the conclusion is not symmetric in all
three variables, we may only apply a cyclic permutation
in the proof, as is in the last case in Lemma~\ref{3rd-root-suffice}
 with the triangle on $\{a,b,c\}$.
Notice that the function $g(\overline{x_1}, x_2, x_3)$ has the signature matrix
$\begin{bmatrix}
1 & 1 & 1 & -2\\
-2 & 1 & 1 & 1
\end{bmatrix}$, where
 $x_1 =0,1$ is the row index,
and $x_2x_3 = 00,01,10,11$ is the column index.
If we connect two copies of  $g(\overline{x_1}, x_2, x_3)$
with both $x_1$ as external edges, and the variable $x_2$ of
one copy connected to the $x_3$ of the other copy, for both pairs
of $(x_2, x_3)$, we obtain
a planar gadget with a symmetric signature not in $\mathscr{P}$
 with its signature matrix
\[\begin{bmatrix}
1 & 1 & 1 & -2\\
-2 & 1 & 1 & 1
\end{bmatrix}
\begin{bmatrix}
1 & -2\\
1 & 1\\
1 & 1\\
-2 & 1
\end{bmatrix}
=
\begin{bmatrix}
7 & -2\\
-2 & 7
\end{bmatrix}  \not \in \mathscr{P}.\]

\begin{corollary}\label{cor:root-3-suffice}
Let $\mathcal{F}$ be a set of signatures containing a ternary signature
$f \not \in \mathscr{P}$.
Suppose $\mathcal{F}$ contains the
unary signatures
$\{[1, b_j] \mid 1 \le j \le 3 \}
=\{[1,1], [1,\omega], [1,\omega^2]\}$, and
$\partial^{\{i\}}_{[1, b_j]}(f) \in \mathscr{P}$
for all $1 \le i \le 3$ and $1 \le j \le 3$.
Then
\[\text{\rm Pl-}\#{\rm CSP}(\mathcal{F} \cup\{ g \})
\le_{\tt T} \text{\rm Pl-}\#{\rm CSP}(\mathcal{F}),\]
where $g$ is either the symmetric ternary function
$[1, -1, \omega, -\omega]$, or $[1, -1, \omega^2, -\omega^2]$, or
$[-2,1,1,-2]$, or a symmetric binary function  $[7, -2, 7]$.
\end{corollary}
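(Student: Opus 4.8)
The plan is to read the corollary off Lemma~\ref{3rd-root-suffice} and then turn its structural conclusion into an explicit planar realization. First I would apply Lemma~\ref{3rd-root-suffice} to the ternary $f\notin\mathscr{P}$ together with the three given unary signatures $[1,1],[1,\omega],[1,\omega^2]$, which by hypothesis satisfy $\partial^{\{i\}}_{[1,b_j]}(f)\in\mathscr{P}$ for $1\le i\le 3$. The lemma expresses $f$, up to a nonzero scalar, as $(u_1\otimes u_2\otimes u_3)\cdot h$, where the $u_i$ are unary and $h$ is one of: (i) a symmetric $[1,-1,x,-x]$ with $x\in\{\omega,\omega^2\}$; (ii) $[-2,1,1,-2]$; or (iii) after a cyclic relabelling of the three variables, $h=g(1-x_1,x_2,x_3)$ with $g=[-2,1,1,-2]$. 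Note that in the proof of that lemma each $u_i$ is a nonzero scalar times a unary whose entry ratio is a (possibly negated) cube root of unity, so in particular both entries of each $u_i$ are nonzero.

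The second and main step is to realize, inside $\PlCSP(\mathcal{F})$, one of the four listed signatures. Here I would exploit that $\PlCSP$ supplies all of $\mathcal{EQ}$: attaching $(=_3)$ to a variable of $f$, connecting an available unary $[1,\omega^{k}]$ to a second leg of that $(=_3)$, and leaving the third leg dangling, multiplies the effective unary factor on that variable by $[1,\omega^{k}]$ entrywise, and this operation is planar. Using three such fan-outs one can bring $u_1,u_2,u_3$ to a common unary of the form $[1,-1]$, producing a \emph{symmetric} ternary signature $f'=[1,-1]^{\otimes 3}\cdot h$ (entrywise product), i.e. $[1,1,\xi,\xi]$ in case (i), $[-2,-1,1,2]$ in case (ii). From $f'$, together with the available unaries, a further short chain of $\partial$-operations and fan-outs — chosen using the explicit form of $h$ — recovers $h$ itself, i.e. $[1,-1,\omega,-\omega]$, $[1,-1,\omega^2,-\omega^2]$, or $[-2,1,1,-2]$. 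In case (iii), where only cyclic symmetry is available, I would instead use the two-copy planar gadget described in the remark preceding the corollary: gluing two copies of $g(\overline{x_1},x_2,x_3)$ along $(x_2,x_3)\leftrightarrow(x_3,x_2)$ yields a symmetric binary signature with matrix $\left[\begin{smallmatrix}1 & 1 & 1 & -2\\ -2 & 1 & 1 & 1\end{smallmatrix}\right]\left[\begin{smallmatrix}1 & -2\\ 1 & 1\\ 1 & 1\\ -2 & 1\end{smallmatrix}\right]=\left[\begin{smallmatrix}7 & -2\\ -2 & 7\end{smallmatrix}\right]$, that is $[7,-2,7]$.

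Third, I would check that each of the four targets lies outside $\mathscr{P}$. Each is a symmetric signature of arity $2$ or $3$ that is non-degenerate, is not the binary disequality $[0,1,0]$, and is not a generalized equality $[a,0,\ldots,0,b]$; hence by Proposition~\ref{symmetric:prod:type:signatures} it is not in $\mathscr{P}$. Since all the constructions above take place in $\PlCSP(\mathcal{F})$ — they use only $f$, the three given unaries, and equality signatures, and they preserve planarity — we conclude $\PlCSP(\mathcal{F}\cup\{g\})\le_T\PlCSP(\mathcal{F})$ for the appropriate $g$, which is the claim.

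The main obstacle is precisely the realization in the second step. The available unary signatures are exactly the cube roots of unity, whereas the unary tensor factors coming out of Lemma~\ref{3rd-root-suffice} carry an extra sign, so they cannot be cancelled directly by the given unaries; one must route through the symmetric intermediate $f'$ and use the specific numerical form of $h$ to pin down the target exactly (rather than landing merely on \emph{some} symmetric signature outside $\mathscr{P}$), and the loss of full symmetry in case (iii) is what forces the separate two-copy gadget. Carrying out this bookkeeping while keeping every gadget planar is where the care is needed; everything else is routine.
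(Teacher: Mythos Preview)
Your plan is the one the paper has in mind: the corollary is stated without proof, as an immediate consequence of Lemma~\ref{3rd-root-suffice} together with the remark preceding it, and your handling of case~(iii) via the two-copy planar gadget is exactly that remark.

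The ``sign obstacle'' you flag is half real and half an artifact of a typo. In case~(i), if you trace the relations $f_{000}=-xf_{100}$, $f_{011}=-xf_{111}$, etc.\ through to $h=f_{111}[-\rho,\rho,-1,1]$, the tensor factor that actually satisfies $f=g\cdot h$ is $g=[x,1]_{x_1}\otimes[y,1]_{x_2}\otimes[z,1]_{x_3}$ with $x,y,z\in\{1,\omega,\omega^2\}$ (the displayed $[-x,1]$ in the proof is off by a sign). Up to scalar these are $[1,x^{2}]$, which \emph{are} among the available $[1,b_j]$, so the $(=_3)$--fan-out cancellation you describe lands directly on $h=[1,-1,\xi,-\xi]$; no ``further chain'' is needed.

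In cases~(ii) and~(iii), however, the factors really are $[-x,1]\otimes[-y,1]\otimes[-z,1]$, and here your argument has a genuine gap. With only cube-root-of-unity unaries in hand you cannot effect an entrywise multiplication by $[1,-1]$, so from your intermediate $f'=[1,-1]^{\otimes 3}\cdot h$ you do not recover $h$: in case~(ii) you land on $[-2,-1,1,2]$ rather than $[-2,1,1,-2]$. Likewise in case~(iii) you do not have $g(\overline{x_1},x_2,x_3)$ itself to feed into the two-copy gadget---only $f$, which carries the unary factors---and after your normalization the gadget yields $[7,2,7]$ rather than $[7,-2,7]$. The ``further short chain of $\partial$-operations and fan-outs \ldots\ recovers $h$ itself'' is asserted but not carried out, and with the given unaries it cannot be. This does not affect Theorem~\ref{arity-3-nonproduct}, since $[-2,-1,1,2]$ and $[7,2,7]$ are symmetric and lie outside $\mathscr{P}$; but as a proof of the corollary's literal list of $g$'s it is incomplete, and the paper's list is itself slightly inexact in these two cases.
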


\begin{lemma}\label{10-11-1n1-suffice}
Suppose $f \not \in \mathscr{P}$.
Let $[1, b_j]$ ($1 \le j \le 3$) be the unary signatures
$[1,0],[1,1], [1,-1]$ respectively.
Suppose $\partial^{\{i\}}_{[1, b_j]}(f) \in \mathscr{P}$
for all $1 \le i \le 3$ and $1 \le j \le 3$.
Then after a cyclic permutation of
its three variables $f$ is a product of some unary functions
$[1, b_j]$  with
the symmetric
function $[1,0,1,0]$ or $[0,1,0,1]$.
\end{lemma}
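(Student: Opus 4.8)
The plan is to follow the same strategy used in Lemma~\ref{root-i-suffice} and Lemma~\ref{3rd-root-suffice}: for each variable $i\in\{1,2,3\}$ and each of the three unary signatures $[1,b_j]\in\{[1,0],[1,1],[1,-1]\}$, determine which of the mutually exhaustive alternatives $(\partial^{\{i\}}{\rm D}_j)$, $(\partial^{\{i\}}{\rm E}_j)$, $(\partial^{\{i\}}\det_j)$ must hold (since $\partial^{\{i\}}_{[1,b_j]}(f)\in\mathscr{P}$), and then translate the surviving system of equations into a structural description of $f$. The genuinely new feature is that $b_1=0$: connecting $[1,0]$ to variable $i$ yields the restriction $\partial^{\{i\}}_{[1,0]}(f)=f^{x_i=0}$, so $(\partial^{\{i\}}{\rm D}_1)$, $(\partial^{\{i\}}{\rm E}_1)$, $(\partial^{\{i\}}\det_1)$ say exactly that the binary function $f^{x_i=0}$ is antidiagonal, diagonal, or degenerate. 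These are still instances of the ``equal diagonal product'' relations on $\{a,b,c,d\}$ from the earlier lemmas, but now realized with both products equal to $0$ (e.g.\ $(\partial^{\{1\}}{\rm D}_1)$ forces $f_{000}=f_{011}=0$, so the products along diagonals $a$ and $d$ are both zero).

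First I would invoke Lemma~\ref{linear-at-least-2-easy} and Lemma~\ref{det-at-least-3-easy}: since $f\notin\mathscr{P}$, for every $i$ we may assume that no two of $j\in\{1,2,3\}$ satisfy $(\partial^{\{i\}}{\rm D}_j)$, no two satisfy $(\partial^{\{i\}}{\rm E}_j)$, and---as $m=3$ here---not all three satisfy $(\partial^{\{i\}}\det_j)$. Hence for each $i$ at least one D or one E relation holds, and the diagonal-product spanning subgraph of $K_4$ on $\{a,b,c,d\}$ acquires one edge per such relation (the six possible edges being the three perfect matchings coming from the three directions). As in the earlier lemmas, I would split on whether this spanning subgraph is connected: if it has $\ge 4$ edges it is automatically connected, all four diagonals of $f$ share a common product value, and every $(\partial^{\{i\}}\det_j)$ collapses to a purely quadratic form $D_0 a_j^2+D_2 b_j^2=0$; if it has exactly three edges and is disconnected it is a triangle, which is either $\{b,c,d\}$ or, after a cyclic rotation of the three variables, $\{a,b,c\}$, exactly the dichotomy encountered in Lemma~\ref{3rd-root-suffice}.

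In the connected case, if any D/E edge comes from the $[1,0]$ unary then the common diagonal product is $0$, so $f$ is supported on at most one entry of each diagonal; cross-referencing the $(\partial^{\{i\}}{\rm D}_j),(\partial^{\{i\}}{\rm E}_j)$ relations for the remaining directions (and the $\pm1$ sign data) then forces the four support entries to fit together, so that $f^{x_i=0}$ and $f^{x_i=1}$ become, up to scalars, a binary equality $[1,0,1]$ and a disequality $[0,1,0]$ (in some order). Factoring out unaries as in Lemma~\ref{separation} exhibits $f$ as $u_1(x_1)\otimes u_2(x_2)\otimes u_3(x_3)$ times $[1,0,1,0]$ or $[0,1,0,1]$, with every $u_i\in\{[1,1],[1,-1]\}$; a factor $[1,0]$ cannot occur because $[1,0](x_1)\otimes[1,0,1,0]$ equals $[1,0](x_1)\otimes(=_2)(x_2,x_3)\in\mathscr{P}$, contradicting $f\notin\mathscr{P}$. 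If instead only the $\det$ alternative ever holds for $j=1$ (so every $f^{x_i=0}$ is degenerate), one checks this branch collapses into $\mathscr{P}$ via Lemma~\ref{linear-at-least-2-easy} or Lemma~\ref{det-at-least-3-easy}. In the triangle case I would imitate the quadratic-system computation of Lemma~\ref{3rd-root-suffice}: the two valid determinant roots for each direction are $b_j=1$ and $b_j=-1$, so subtracting the two equations eliminates the cross term, and after normalizing the $f_{abc}$ the system reduces to a small linear/quadratic system whose only solution with $f\notin\mathscr{P}$ is again the parity function, now only up to a \emph{cyclic} permutation of $(x_1,x_2,x_3)$.

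The main obstacle is the combinatorial bookkeeping of the combined case analysis over the three directions, and in particular reconciling the \emph{exact vanishing} imposed by the $[1,0]$ pinnings (which, unlike the $[1,\pm1]$ pinnings that only enforce multiplicative relations among the $f_{abc}$, can collapse entries to $0$) with the diagonal-product-graph machinery; one must verify that every branch either contradicts $f\notin\mathscr{P}$ through Lemma~\ref{linear-at-least-2-easy} or Lemma~\ref{det-at-least-3-easy}, or produces precisely a tensor product of $[1,\pm1]$ unaries with $[1,0,1,0]$ or $[0,1,0,1]$. A secondary point requiring care, exactly as in Lemma~\ref{3rd-root-suffice} and the remark following it, is that the asymmetric ``triangle on $\{a,b,c\}$'' branch may be normalized only by a cyclic---not an arbitrary---permutation of the variables, which is why the conclusion is phrased up to a cyclic permutation.
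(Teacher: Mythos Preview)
Your overall plan---split on connectedness of the spanning subgraph of $K_4$, and in the connected case split further on whether some $(\partial^{\{i\}}{\rm D}_1)$ or $(\partial^{\{i\}}{\rm E}_1)$ holds---matches the paper and is correct. The connected-case Case~1 sketch is serviceable, though the paper carries it out via the three \emph{determinantal} equations $(\partial^{\{i\}}\det_2)$ (equivalently $(\partial^{\{i\}}\det_3)$) rather than by ``cross-referencing the D/E relations''; the D/E data alone for directions $2,3$ give you the extra zeros $f_{101}=f_{110}=0$ but not the multiplicative $\pm1$ relations among the four surviving entries $f_{001},f_{010},f_{100},f_{111}$, which come precisely from those determinant equations.

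The genuine gap is in the triangle case. Your claim that ``the two valid determinant roots for each direction are $b_j=1$ and $b_j=-1$'' is false here. In the triangle on $\{b,c,d\}$, only one half of the cross term in $(\partial^{\{1\}}\det_j)$ vanishes (the half comparing diagonals $b$ and $d$), so the equation is $D_0 + D_1 b_j + D_2 b_j^2 = 0$ with $D_1=\left|\begin{smallmatrix}f_{000}&f_{101}\\f_{010}&f_{111}\end{smallmatrix}\right|$ comparing diagonals $a$ and $c$. If the two valid roots were $\pm1$, subtracting would force $D_1=0$, i.e.\ diagonal $a$ equals diagonal $c$, making the graph connected and contradicting the triangle hypothesis. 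Hence one of the two valid determinant indices in every direction must be $j=1$ (the $[1,0]$ unary), and the single valid E (resp.\ D) statement for that direction is $E_2$ or $E_3$ (resp.\ $D_2$ or $D_3$), giving $\pm1$ multipliers. With these inputs the paper's computation shows that both triangle configurations force $f\in\mathscr{P}$---a contradiction---so the triangle case yields \emph{no} non-product solution at all. The parity function $[1,0,1,0]$ or $[0,1,0,1]$ arises exclusively in the connected case (your Case~1), and the ``up to a cyclic permutation'' in the statement comes from cyclically permuting to the direction $i$ for which $(\partial^{\{i\}}{\rm D}_1)$ or $(\partial^{\{i\}}{\rm E}_1)$ holds there, not from the triangle analysis.
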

\proof
The requirements for $\partial^{\{i\}}_{[1, b_j]}(f) \in \mathscr{P}$
for all $1 \le i \le 3$ and $1 \le j \le 3$ are listed below:
{\tiny
\begin{eqnarray*}
& & f_{000} = f_{011} =0 ~~~~~~~~~~~~~~~~~~~~~~~\mbox{($\partial^{\{1\}} {\rm D}_1$)}
~~~~\mbox{or}
~~~~f_{001} = f_{010} =0 ~~~~~~~~~~~~~~~~~~~~~~~~\mbox{($\partial^{\{1\}} {\rm E}_1$)}
~~~~\mbox{or}
~~~~
\left|
\begin{matrix}
f_{000}  &  f_{001}  \\
f_{010}  &  f_{011}
\end{matrix}
\right| =0~~~~~~~~~~~~~~~~~~~~~~~\mbox{($\partial^{\{1\}} \det_1$)}\\
& & f_{000} + f_{100} = f_{011} + f_{111} =0 ~~~~\mbox{($\partial^{\{1\}} {\rm D}_2$)}
~~~~\mbox{or}
~~~~f_{001} + f_{101} = f_{010} + f_{110} =0 ~~~~\mbox{($\partial^{\{1\}} {\rm E}_2$)}
~~~~\mbox{or}
~~~~
\left|
\begin{matrix}
f_{000} + f_{100}  &  f_{001} + f_{101}  \\
f_{010} + f_{110}  &  f_{011} + f_{111}
\end{matrix}
\right| =0~~~~\mbox{($\partial^{\{1\}} \det_2$)}\\
& & f_{000} - f_{100} = f_{011} - f_{111} =0 ~~~~\mbox{($\partial^{\{1\}} {\rm D}_3$)}
~~~~\mbox{or}
~~~~f_{001} - f_{101} = f_{010} - f_{110} =0 ~~~~\mbox{($\partial^{\{1\}} {\rm E}_3$)}
~~~~\mbox{or}
~~~~
\left|
\begin{matrix}
f_{000} - f_{100}  &  f_{001} - f_{101}  \\
f_{010} - f_{110}  &  f_{011} - f_{111}
\end{matrix}
\right| =0~~~~\mbox{($\partial^{\{1\}} \det_3$)}\\
& & f_{000} = f_{101} =0 ~~~~~~~~~~~~~~~~~~~~~~~\mbox{($\partial^{\{2\}} {\rm D}_1$)}
~~~~\mbox{or}
~~~~f_{001} = f_{100} =0 ~~~~~~~~~~~~~~~~~~~~~~~~\mbox{($\partial^{\{2\}} {\rm E}_1$)}
~~~~\mbox{or}
~~~~
\left|
\begin{matrix}
f_{000}  &  f_{001}  \\
f_{100}  &  f_{101}
\end{matrix}
\right| =0~~~~~~~~~~~~~~~~~~~~~~~\mbox{($\partial^{\{2\}} \det_1$)}\\
& & f_{000} + f_{010} = f_{101} + f_{111} =0 ~~~~\mbox{($\partial^{\{2\}} {\rm D}_2$)}
~~~~\mbox{or}
~~~~f_{001} + f_{011} = f_{100} + f_{110} =0 ~~~~\mbox{($\partial^{\{2\}} {\rm E}_2$)}
~~~~\mbox{or}
~~~~
\left|
\begin{matrix}
f_{000} + f_{010}  &  f_{001} + f_{011}  \\
f_{100} + f_{110}  &  f_{101} + f_{111}
\end{matrix}
\right| =0~~~~\mbox{($\partial^{\{2\}} \det_2$)}\\
& & f_{000} - f_{010} = f_{101} - f_{111} =0 ~~~~\mbox{($\partial^{\{2\}} {\rm D}_3$)}
~~~~\mbox{or}
~~~~f_{001} - f_{011} = f_{100} - f_{110} =0 ~~~~\mbox{($\partial^{\{2\}} {\rm E}_3$)}
~~~~\mbox{or}
~~~~
\left|
\begin{matrix}
f_{000} - f_{010}  &  f_{001} - f_{011}  \\
f_{100} - f_{110}  &  f_{101} - f_{111}
\end{matrix}
\right| =0~~~~\mbox{($\partial^{\{2\}} \det_3$)}\\
& & f_{000} = f_{110} =0 ~~~~~~~~~~~~~~~~~~~~~~~\mbox{($\partial^{\{3\}} {\rm D}_1$)}
~~~~\mbox{or}
~~~~f_{010} = f_{100} =0 ~~~~~~~~~~~~~~~~~~~~~~~~\mbox{($\partial^{\{3\}} {\rm E}_1$)}
~~~~\mbox{or}
~~~~
\left|
\begin{matrix}
f_{000}  &  f_{010}  \\
f_{100}  &  f_{110}
\end{matrix}
\right| =0~~~~~~~~~~~~~~~~~~~~~~~\mbox{($\partial^{\{3\}} \det_1$)}\\
& & f_{000} + f_{001} = f_{110} + f_{111} =0 ~~~~\mbox{($\partial^{\{3\}} {\rm D}_2$)}
~~~~\mbox{or}
~~~~f_{010} + f_{011} = f_{100} + f_{101} =0 ~~~~\mbox{($\partial^{\{3\}} {\rm E}_2$)}
~~~~\mbox{or}
~~~~
\left|
\begin{matrix}
f_{000} + f_{001}  &  f_{010} + f_{011}  \\
f_{100} + f_{101}  &  f_{110} + f_{111}
\end{matrix}
\right| =0~~~~\mbox{($\partial^{\{3\}} \det_2$)}\\
& & f_{000} - f_{001} = f_{110} - f_{111} =0 ~~~~\mbox{($\partial^{\{3\}} {\rm D}_3$)}
~~~~\mbox{or}
~~~~f_{010} - f_{011} = f_{100} - f_{101} =0 ~~~~\mbox{($\partial^{\{3\}} {\rm E}_3$)}
~~~~\mbox{or}
~~~~
\left|
\begin{matrix}
f_{000} - f_{001}  &  f_{010} - f_{011}  \\
f_{100} - f_{101}  &  f_{110} - f_{111}
\end{matrix}
\right| =0~~~~\mbox{($\partial^{\{3\}} \det_3$)}
\end{eqnarray*}
}
By Lemma~\ref{linear-at-least-2-easy},
we may assume that for each  $1 \le i \le 3$
there can be at most one $j$ such that
($\partial^{\{i\}} {\rm D}_j$) holds, and
at most one $j'$ such that
($\partial^{\{i\}} {\rm E}_{j'}$) holds. This implies that
for every $i$ there is at least one $j$ such that
($\partial^{\{i\}} \det_j$)  holds.
By Lemma~\ref{det-at-least-3-easy},
we may assume that for each  $1 \le i \le 3$
there are
at most two distinct values $k$ and $k'$ such that
($\partial^{\{i\}} \det_k$)  and ($\partial^{\{i\}} \det_{k'}$) hold.

We first suppose the spanning subgraph of $K_4$ is connected.
This implies that all diagonal pairs have the same product value.
In that case, the statements ($\partial^{\{i\}} \det_2$)  and ($\partial^{\{i\}} \det_{3}$)
are identical. Thus if ($\partial^{\{i\}} \det_1$) holds then
we may assume ($\partial^{\{i\}} \det_2$)  and ($\partial^{\{i\}} \det_{3}$)
do not hold. On the other hand if ($\partial^{\{i\}} \det_1$) does not hold then
($\partial^{\{i\}} \det_2$)  and ($\partial^{\{i\}} \det_{3}$) must hold by
Lemma~\ref{linear-at-least-2-easy}.

\begin{enumerate}
\item
Suppose there exists some $1 \le i \le 3$ such that
either ($\partial^{\{i\}} {\rm D}_1$)  or ($\partial^{\{i\}} {\rm E}_{1}$)
holds. By cyclically permuting the variables we may assume $i=1$.

\begin{itemize}
\item
{($\partial^{\{1\}} {\rm D}_1$) holds.}

If ($\partial^{\{1\}}  \det_2$), which is equivalent to ($\partial^{\{1\}}  \det_3$),
does not hold, then we are done by Lemma~\ref{linear-at-least-2-easy}.
Thus ($\partial^{\{1\}}  \det_2$) holds.

We have $f_{000} = f_{011} =0$ by ($\partial^{\{1\}} {\rm D}_1$).
If we have additionally $f_{001} =0$ and $f_{010} =0$, then
$f^{x_1 =0}$ is identically 0, and $f(x_1 x_2, x_3)
= g(x_2, x_3) [0, 1]_{x_1}$ for some binary function $g$.
By Lemma~\ref{separation} we reach a contradiction to
 $f \not \in \mathscr{P}$.
So we assume $f_{001}$ and $f_{010}$ are not both zero.

If  ($\partial^{\{2\}} \det_1$) holds,
then we must have either
($\partial^{\{2\}} {\rm D}_2$)  and  ($\partial^{\{2\}} {\rm E}_{3}$),
or
($\partial^{\{2\}} {\rm D}_3$)  and  ($\partial^{\{2\}} {\rm E}_{2}$).
In either case, by ($\partial^{\{2\}} {\rm D}_2$) or ($\partial^{\{2\}} {\rm D}_3$)
 it easily follows that $f_{010} =0$ and
by ($\partial^{\{2\}} {\rm E}_{3}$) or  ($\partial^{\{2\}} {\rm E}_{2}$)
that $f_{001} =0$.
Hence ($\partial^{\{2\}}  \det_1$) does not hold.
Similarly if  ($\partial^{\{3\}} \det_1$) holds,
we reach the same contradiction.

Therefore we must have  ($\partial^{\{2\}} \det_2$) which is
 identical  to ($\partial^{\{2\}} \det_{3}$), and also  ($\partial^{\{3\}} \det_2$)
 which is
 identical  to ($\partial^{\{3\}} \det_{3}$), in addition to
($\partial^{\{1\}}  \det_2$).

These statements take the form
{\small
\begin{align}
\left|
\begin{matrix}
f_{000} + f_{100}  &  f_{001} + f_{101}  \\
f_{010} + f_{110}  &  f_{011} + f_{111}
\end{matrix}
\right|
&=
\left|
\begin{matrix}
f_{000} & f_{001} \\
f_{010} &  f_{011}
\end{matrix}
\right|
+
\left|
\begin{matrix}
f_{100}  & f_{101}  \\
f_{110}  & f_{111}
\end{matrix}
\right|
= - f_{001} f_{010} + f_{100} f_{111} -  f_{101} f_{110}  =0
\label{partial1det2-ad=0}
\\
\left|
\begin{matrix}
f_{000} + f_{010}  &  f_{001} + f_{011}  \\
f_{100} + f_{110}  &  f_{101} + f_{111}
\end{matrix}
\right|
&=
\left|
\begin{matrix}
f_{000} &  f_{001} \\
f_{100}  &  f_{101}
\end{matrix}
\right|
+
\left|
\begin{matrix}
f_{010}  & f_{011}  \\
f_{110}  & f_{111}
\end{matrix}
\right|
=
- f_{001} f_{100}  + f_{010} f_{111}
=0 \label{partial2det2-ad=0}
\\
\left|
\begin{matrix}
f_{000} + f_{001}  &  f_{010} + f_{011}  \\
f_{100} + f_{101}  &  f_{110} + f_{111}
\end{matrix}
\right|
&=
\left|
\begin{matrix}
f_{000}  &  f_{010} \\
f_{100}   &  f_{110}
\end{matrix}
\right|
+
\left|
\begin{matrix}
f_{001}  & f_{011}  \\
f_{101}  & f_{111}
\end{matrix}
\right|
=
- f_{010} f_{100} + f_{001} f_{111} =0
\label{partial3det2-ad=0}
\end{align}
}

If $f_{001} =0$ then $f_{010} f_{111}=0$ from (\ref{partial2det2-ad=0})
and $f_{010} f_{100} =0$ from (\ref{partial3det2-ad=0}).
Since  $f_{001}$ and $f_{010}$ are not both zero,
 we have $f_{010} \not = 0$ in this case,
we conclude that $f_{111}=f_{100} =0$.
Then $f$ is the product of $(x_2 \not = x_3)$ with the
degenerate binary function $g(x_1, x_3) =
\begin{bmatrix}
f_{010} & f_{001} \\
f_{110} & f_{101}
\end{bmatrix}$
with row index $x_1 =0,1$ and column index $x_3 =0, 1$,
and
$\left|
\begin{matrix}
f_{010} & f_{001} \\
f_{110} & f_{101}
\end{matrix}
\right|
=
0$ from (\ref{partial2det2-ad=0}).
This is a contradiction to
$f \not \in \mathscr{P}$.

If $f_{010} = 0$ then we also get the same conclusion.
So we assume both $f_{001} \not =0$ and $f_{010} \not =0$.
Then from (\ref{partial2det2-ad=0}) and (\ref{partial3det2-ad=0})
we get $f_{100} = \frac{f_{010}}{f_{001}} f_{111}
= \frac{f_{001}}{f_{010}}  f_{111}$. If $f_{111} =0$ then
so does $f_{100} = 0$ and we have $f = (x_2 \not = x_3) g(x_1, x_3)$
for a degenerate binary function $g$ as before.
Therefore we may assume $f_{111} \not =0$, then $f_{100} \not =0$ as well.
Then $(f_{010})^2 = (f_{001})^2$, thus
$f_{010} = \epsilon f_{001}$, where $\epsilon = \pm 1$.
Also $f_{100} = \epsilon f_{111}$.

Since
 all diagonal pairs have the same product value,
$f_{001} f_{110} = f_{010} f_{101} = f_{000} f_{111} =0$.
As  $f_{001} \not =0$ and $f_{010} \not =0$,
we have $f_{110} = f_{101} =0$.
Then from (\ref{partial1det2-ad=0})
we have
$\left|
\begin{matrix}
f_{100} & f_{001} \\
f_{010} & f_{111}
\end{matrix}
\right|
=
\left|
\begin{matrix}
\epsilon f_{111} & f_{001} \\
\epsilon f_{001} & f_{111}
\end{matrix}
\right| =0$.
It follows that $f_{001} = \epsilon^* f_{111}$,
where $\epsilon^* = \pm 1$.
Thus $f_{010} = \epsilon \epsilon^* f_{111}$.

Hence $f$ is  the product of $[\epsilon, 1]_{x_1}
\otimes [\epsilon \epsilon^*, 1]_{x_2}
\otimes [\epsilon^*, 1]_{x_3}$ with the symmetric
ternary function $f_{111}[0,1,0,1]$.

\item
{($\partial^{\{1\}} {\rm E}_1$) holds.}

This case is similar to the case of when ($\partial^{\{1\}} {\rm D}_1$) holds.
The conclusion is that if $f \not  \in \mathscr{P}$
then $f$ is the product of $[1, \epsilon]_{x_1} \otimes [1, \epsilon^*]_{x_2}
 \otimes [1, \epsilon \epsilon^*]_{x_3}$
and the ternary symmetric function $f_{000}[1,0,1,0]$,
where $\epsilon, \epsilon^* = \pm 1$.
\end{itemize}

\item
For no $1 \le i \le 3$
either ($\partial^{\{i\}} {\rm D}_1$)  or ($\partial^{\{i\}} {\rm E}_{1}$)
holds.
Hence for all $1 \le i \le 3$,
($\partial^{\{i\}}  \det_1$) holds.

Then for all $1 \le i \le 3$,
($\partial^{\{i\}}  \det_2$), which is
equivalent to ($\partial^{\{i\}}  \det_3$),
does not hold, by Lemma~\ref{det-at-least-3-easy}.
Thus for all $1 \le i \le 3$,
either ($\partial^{\{i\}} {\rm D}_2$) and ($\partial^{\{i\}} {\rm E}_3$),
or
($\partial^{\{i\}} {\rm D}_3$) and ($\partial^{\{i\}} {\rm E}_2$)
must hold.

We consider the case ($\partial^{\{1\}} {\rm D}_2$) and ($\partial^{\{1\}} {\rm E}_3$)
hold. The alternative case when
 ($\partial^{\{1\}} {\rm D}_3$) and ($\partial^{\{1\}} {\rm E}_2$) hold
is similar.

By ($\partial^{\{1\}}  \det_1$) we have
$\left|
\begin{matrix}
f_{000} & f_{001} \\
f_{010} &  f_{011}
\end{matrix}
\right|
=0$.
By ($\partial^{\{1\}} {\rm D}_2$) and ($\partial^{\{1\}} {\rm E}_3$)
we have
\[f_{000} = - f_{100}, ~~~~f_{011} = - f_{111},~~~~
f_{001} =  f_{101}, ~~~~f_{010} = f_{110}.\]
(In the case of
($\partial^{\{1\}} {\rm D}_3$) and ($\partial^{\{1\}} {\rm E}_2$),
all four right hand sides are multiplied by an extra $-1$.)

If $f_{000} = 0$, then by ($\partial^{\{1\}}  \det_1$) we have
$f_{001} f_{010} =0$.  If $f_{001} =0$ then $f^{x_2=0}$ is
identically 0, and $f = [0,1]_{x_2} g(x_1, x_3)$
for some binary function $g$.
If $f_{010} =0$ then $f^{x_3=0}$ is
identically 0, and $f = [0,1]_{x_3} g(x_1, x_2)$
for some binary function $g$.
In either case, this is a contradiction to
 $f \not  \in \mathscr{P}$
by Lemma~\ref{separation}.

Thus $f_{000} \not = 0$.
By ($\partial^{\{2\}} \det_1$)
we have
$\left|
\begin{matrix}
f_{000} & f_{001} \\
f_{100} &  f_{101}
\end{matrix}
\right|
=
\left|
\begin{matrix}
f_{000} & f_{001} \\
- f_{000} & f_{001}
\end{matrix}
\right|
=0$,
which implies that $f_{001}=0$.
Similarly by ($\partial^{\{3\}} \det_1$)
we have
$\left|
\begin{matrix}
f_{000} & f_{010}\\
f_{100} &  f_{110}
\end{matrix}
\right|
=
\left|
\begin{matrix}
f_{000} & f_{010}\\
-f_{000} &  f_{010}
\end{matrix}
\right|
=0$,
which implies that $f_{010}=0$.
This implies that
\[f_{001}=0,~~~~f_{101}=0,~~~~f_{010}=0,~~~~f_{110}=0.\]
Hence $f$ is the product of
$(x_2  = x_3)$ and the degenerate
binary function $g(x_1, x_3)
=
\begin{bmatrix}
f_{000} & f_{011}\\
f_{100} & f_{111}
\end{bmatrix}.$
Note that the determinant
$\left|
\begin{matrix}
f_{000} & f_{011}\\
f_{100} &  f_{111}
\end{matrix}
\right|
=0$,
by $f_{001}=f_{010}=f_{101}=f_{110} =0$
and ($\partial^{\{2\}}  \det_2$).

\end{enumerate}

\vspace{.1in}

Now we deal with the case when
the spanning subgraph of $K_4$ is disconnected.
This implies that the  spanning subgraph is a triangle
and the number $N$ of  valid statements among
all ($\partial^{\{i\}} {\rm D}_j$) and ($\partial^{\{i\}} {\rm E}_j$)
is exactly 3. Furthermore,  we may assume either the triangle is
 on $\{b,c,d\}$, and then
the 3 valid  statements among
all ($\partial^{\{i\}} {\rm D}_j$) and ($\partial^{\{i\}} {\rm E}_j$)
are among ($\partial^{\{1\}}  {\rm E}_j$),
($\partial^{\{2\}}  {\rm E}_j$)
and  ($\partial^{\{3\}}   {\rm E}_j$),
or, upto a cyclic permutation of the
3 variables of $f$, the triangle is on $\{a,b,c\}$, and then the 3 statements
must be among ($\partial^{\{1\}}  {\rm E}_j$),
($\partial^{\{2\}}  {\rm D}_j$)
and
($\partial^{\{3\}}   {\rm D}_j$).

\begin{enumerate}
\item
Suppose the triangle is on $\{b,c,d\}$.

Since $N=3$ there are at least 6 valid statements among
($\partial^{\{i\}} \det_j$), where $1 \le i,j \le 3$.
By Lemma~\ref{det-at-least-3-easy}, for every $1 \le i \le 3$,
there must be exactly two valid statements among
($\partial^{\{i\}} \det_j$), for $1 \le j \le 3$.
Since the diagonals $b$, $c$ and $d$ have the
same product value, the statements ($\partial^{\{1\}}   \det_{j}$) take the form
\begin{eqnarray*}
&&
\left|
\begin{matrix}
f_{000}  &  f_{001}  \\
f_{010}  &  f_{011}
\end{matrix}
\right| =0\\
&&
\left|
\begin{matrix}
f_{000} + f_{100}  &  f_{001} + f_{101}  \\
f_{010} + f_{110}  &  f_{011} + f_{111}
\end{matrix}
\right| =
\left|
\begin{matrix}
f_{000}  &  f_{001}   \\
f_{010}  &  f_{011}
\end{matrix}
\right|
+
\left|
\begin{matrix}
f_{000}  &  f_{101}   \\
f_{010}  &  f_{111}
\end{matrix}
\right|
+
\left|
\begin{matrix}
 f_{100}  &   f_{101}  \\
 f_{110}  &   f_{111}
\end{matrix}
\right|
=
0\\
&&
\left|
\begin{matrix}
f_{000} - f_{100}  &  f_{001} - f_{101}  \\
f_{010} - f_{110}  &  f_{011} - f_{111}
\end{matrix}
\right| =
\left|
\begin{matrix}
f_{000}  &  f_{001}   \\
f_{010}  &  f_{011}
\end{matrix}
\right|
-
\left|
\begin{matrix}
f_{000}  &  f_{101}   \\
f_{010}  &  f_{111}
\end{matrix}
\right|
+
\left|
\begin{matrix}
 f_{100}  &   f_{101}  \\
 f_{110}  &   f_{111}
\end{matrix}
\right|
=
0
\end{eqnarray*}
Notice that we used the fact that
$\left|
\begin{matrix}
 f_{100}  &  f_{001}   \\
 f_{110}  &  f_{011}
\end{matrix}
\right| =
0$, because the diagonals $b$ and $c$ have the same product value.

If the two valid statements among ($\partial^{\{1\}} \det_j$) are
for $j=2$ and $j=3$, then
$\left|
\begin{matrix}
f_{000}  &  f_{101}   \\
f_{010}  &  f_{111}
\end{matrix}
\right|
=0$ and we would have all four diagonals with an equal product value.
As the spanning subgraph of $K_4$ is disconnected, the two
 valid statements among ($\partial^{\{1\}} \det_j$) must include $j=1$.
Thus we have
\begin{eqnarray}
&
\left|
\begin{matrix}
f_{000}  &  f_{001}  \\
f_{010}  &  f_{011}
\end{matrix}
\right| &=0 \label{eq:1-inlm-triangle}\\
&
\left|
\begin{matrix}
f_{000}  &  f_{101}   \\
f_{010}  &  f^
{111}
\end{matrix}
\right|
&= - \epsilon_1
\left|
\begin{matrix}
 f_{100}  &   f_{101}  \\
 f_{110}  &   f_{111}
\end{matrix}
\right| \label{eq:2-inlm-triangle}
\end{eqnarray}
where $\epsilon_1 = +1$ if ($\partial^{\{1\}} \det_2$) holds,
and
$\epsilon_1 = -1$ if ($\partial^{\{1\}} \det_3$) holds.

Notice that if ($\partial^{\{1\}} \det_2$) holds,
we must have  a valid ($\partial^{\{1\}}  {\rm E}_3$) and then
$f_{001} =  f_{101}$ and  $f_{010} = f_{110}$.
On the other hand if ($\partial^{\{1\}} \det_3$) holds,
then we must have  a valid ($\partial^{\{1\}}  {\rm E}_2$) and then
$f_{001} =  -f_{101}$ and  $f_{010} = -f_{110}$.
Hence
$f_{001} = \epsilon_1 f_{101}$ and $f_{010} =  \epsilon_1 f_{110}$ hold
in either case.

Similarly by ($\partial^{\{2\}}  {\rm E}_2$)
or ($\partial^{\{2\}}  {\rm E}_3$), one of which must hold,
we have $f_{001} = \epsilon_2 f_{011}$ and $f_{100} = \epsilon_2 f_{110}$,
where $\epsilon_2 = \pm 1$.
By ($\partial^{\{3\}} {\rm E}_2$)
or ($\partial^{\{3\}}  {\rm E}_3$), one of which must hold,
we have $f_{010} = \epsilon_3 f_{011}$ and $f_{100} = \epsilon_3 f_{101}$.

If any of $f_{100}, f_{101}, f_{110}, f_{001}, f_{010}, f_{011}$
equals to 0, then all six quantities equal to 0.
Then the support of $f$ is contained in $\{000, 111\}$,
 and we have a contradiction to
 $f \not  \in \mathscr{P}$.
Thus we may normalize $f_{101}=1$.
Then
\[f_{100} = \epsilon_3,~~~~f_{110} = \epsilon_2 \epsilon_3,~~~~
f_{001} = \epsilon_1,~~~~f_{010} = \epsilon_1 \epsilon_2 \epsilon_3,~~~~
f_{011} = \epsilon_1 \epsilon_2.\]

By ($\partial^{\{1\}} \det_1$),
$\left|
\begin{matrix}
 f_{000}  &   f_{001}  \\
 f_{010}  &   f_{011}
\end{matrix}
\right|
=
\left|
\begin{matrix}
 f_{000}  &   \epsilon_1   \\
 \epsilon_1 \epsilon_2 \epsilon_3  &   \epsilon_1 \epsilon_2
\end{matrix}
\right| =0$
which implies that $f_{000} = \epsilon_1 \epsilon_3$.

By (\ref{eq:2-inlm-triangle}) we get
$\left|
\begin{matrix}
 \epsilon_1 \epsilon_3 &   1 \\
\epsilon_1 \epsilon_2 \epsilon_3  &   f_{111}
\end{matrix}
\right| =
- \epsilon_1
\left|
\begin{matrix}
\epsilon_3  &  1\\
\epsilon_2 \epsilon_3  &  f_{111}
\end{matrix}
\right|$,
which implies that $f_{111} = \epsilon_2$.

It follows that $f$ is simply the function
$f_{000} [1, \epsilon_1]_{x_1} \otimes [1, \epsilon_2]_{x_2}
\otimes [1, \epsilon_3]_{x_3} \in   \mathscr{P}$.
This is a contradiction.

\item
Suppose the triangle is on $\{a,b,c\}$.
We can similarly prove that under this hypothesis
$f \in   \mathscr{P}$, a contradiction.

\end{enumerate}
\qed

\begin{corollary}\label{[1,0]-[1,1]-[1,-1]-suffice}
Let $\mathcal{F}$ be a set of signatures containing a ternary signature
$f \not \in \mathscr{P}$.
Suppose $\mathcal{F}$ contains the
unary signatures
$\{[1, b_j] \mid 1 \le j \le 3 \}
=\{[1,0],[1,1], [1,-1]\}$, and
$\partial^{\{i\}}_{[1, b_j]}(f) \in \mathscr{P}$
for all $1 \le i \le 3$ and $1 \le j \le 3$.
Then
\[\text{\rm Pl-}\#{\rm CSP}(\mathcal{F} \cup\{ g \})
\le_{\tt T} \text{\rm Pl-}\#{\rm CSP}(\mathcal{F}),\]
where $g$ is either the symmetric
function $[1,0,1,0]$ or $[0,1,0,1]$.
\end{corollary}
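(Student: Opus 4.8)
The plan is to invoke Lemma~\ref{10-11-1n1-suffice} and then cancel the degenerate unary factors it produces, using an equality signature to implement pointwise multiplication by a unary.

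First I would apply Lemma~\ref{10-11-1n1-suffice}: since $f\notin\mathscr{P}$ and, by hypothesis, $\partial^{\{i\}}_{[1,b_j]}(f)\in\mathscr{P}$ for all $1\le i\le 3$ and $1\le j\le 3$, that lemma gives, after a cyclic permutation of the three variables, a factorization $f=(u_1\otimes u_2\otimes u_3)\odot h$, where $\odot$ denotes the pointwise (Hadamard) product, $h$ is a nonzero scalar multiple of the symmetric signature $[1,0,1,0]$ or $[0,1,0,1]$, and each $u_i$ is a nonzero scalar multiple of $[1,1]$ or $[1,-1]$. Here I would record the simple observation that every explicit unary factor occurring in the conclusion of Lemma~\ref{10-11-1n1-suffice} has the form $[\epsilon,1]_{x_i}$ or $[1,\epsilon]_{x_i}$ with $\epsilon=\pm1$, and $[\epsilon,1]=\epsilon[1,\epsilon]$ since $1/\epsilon=\epsilon$ when $\epsilon^2=1$; so each $u_i=c_i[1,\epsilon_i]$ with $c_i\ne0$ and $\epsilon_i\in\{1,-1\}$. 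A cyclic permutation of a ternary signature is realized by a planar rotation of its gadget (as used elsewhere in the paper), so I may assume $f$ itself is in this form.

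Next I would show how to multiply a ternary signature pointwise by a unary at a fixed port, using only $(=_3)\in\mathcal{EQ}$ and the given unaries: connect one edge of $(=_3)$ to the $i$-th port of $f$, connect a second edge to $[1,\epsilon_i]\in\{[1,1],[1,-1]\}\subseteq\mathcal{F}$, and leave the third edge as the new $i$-th port; this planar gadget has signature equal to $f$ multiplied pointwise at port $i$ by $[1,\epsilon_i]$. Applying this for $i=1,2,3$ in turn transforms $f$ into
\[
\bigl((u_1\odot[1,\epsilon_1])\otimes(u_2\odot[1,\epsilon_2])\otimes(u_3\odot[1,\epsilon_3])\bigr)\odot h
=\Bigl(\prod_{i=1}^{3}c_i\Bigr)\bigl([1,1]^{\otimes 3}\bigr)\odot h
=\Bigl(\prod_{i=1}^{3}c_i\Bigr)h,
\]
since $[1,\epsilon_i]\odot[1,\epsilon_i]=[1,\epsilon_i^2]=[1,1]$ and $[1,1]^{\otimes 3}$ is the all-ones ternary signature, the identity for $\odot$. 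Thus the gadget realizes a nonzero scalar multiple of $g\in\{[1,0,1,0],[0,1,0,1]\}$, and since a nonzero constant factor does not change the complexity of a counting problem, $g$ is constructible. By the realization principle (every vertex of the gadget carries a signature from $\mathcal{F}\cup\mathcal{EQ}$ and the gadget is planar), $\text{Pl-}\#{\rm CSP}(\mathcal{F}\cup\{g\})\le_T\text{Pl-}\#{\rm CSP}(\mathcal{F})$.

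The hard part is not analytic but organizational: one must check that every surviving case of Lemma~\ref{10-11-1n1-suffice} (those not forced into $\mathscr{P}$) really yields the claimed factorization with the unary factors proportional to $[1,\pm1]$, and that the permutation appearing there is genuinely cyclic, so that it is harmless for planarity. I expect this verification, together with the standard fact that $(=_3)$-gadgets implement pointwise multiplication by unaries in the Pl-\#CSP framework, to be the main (though routine) point; note that the unary $[1,0]$ is needed only as one of the three probes in the hypothesis and plays no role in the construction.
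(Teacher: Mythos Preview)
Your proposal is correct and is essentially what the paper intends: the corollary is stated without proof immediately after Lemma~\ref{10-11-1n1-suffice}, and your argument spells out exactly the implicit construction---invoke the lemma, note (as you do) that the unary factors arising in its proof are all nonzero multiples of $[1,\pm1]$ (never $[1,0]$), and cancel them via $(=_3)$ and the given unaries $[1,1],[1,-1]\in\mathcal{F}$.
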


\begin{theorem}\label{arity-3-nonproduct}
Suppose $\mathcal{F}$  contains a signature $f \not \in \mathscr{P}$
of arity $3$. Let $[1, a], [1,b], [1,c]$ be three unary
signatures that are pairwise linearly independent.
Then  there exists $g  \not \in \mathscr{P}$ such that
\begin{equation}\label{eqn-in-thm3.5}
\operatorname{Pl-\#CSP}(g, [1, a], [1, b], [1, c], \mathcal{F})
\le_{\rm T}
\operatorname{Pl-\#CSP}([1, a], [1, b], [1, c], \mathcal{F}),
\end{equation}
where $g$ has arity 2 or $g$ is a symmetric signature of arity 3.
\end{theorem}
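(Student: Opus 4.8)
The plan is to run a two-stage case analysis: in the generic stage we produce an arity-$2$ non-product signature directly by connecting a unary signature to a variable of $f$, and in the remaining stage we funnel the given unary signatures into one of the three ``privileged'' configurations already handled by Corollaries~\ref{cor:root-i-suffice}, \ref{cor:root-3-suffice}, and~\ref{[1,0]-[1,1]-[1,-1]-suffice}.

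\emph{Stage 1.} Call a unary signature $u$ \emph{available} if $\operatorname{Pl-\#CSP}(u,[1,a],[1,b],[1,c],\mathcal{F})\le_T\operatorname{Pl-\#CSP}([1,a],[1,b],[1,c],\mathcal{F})$; this includes each of $[1,a],[1,b],[1,c]$, each power $[1,t^k]=\partial^{[k]}_{[1,t]}(=_{k+1})$ with $t\in\{a,b,c\}$, and, whenever some $t\in\{a,b,c\}$ satisfies $|t|\ne 0,1$, \emph{every} unary signature by Lemma~\ref{interpolation-unary}. For each available $u$ and each $i\in\{1,2,3\}$ I would form $\partial^{\{i\}}_u(f)$, realizable from $\{u,f\}$; if any of these binary signatures is not in $\mathscr{P}$ it is the desired $g$ of arity $2$, and the reduction~(\ref{eqn-in-thm3.5}) follows by composing the gadget for $\partial^{\{i\}}_u(f)$ with the chain witnessing availability of $u$. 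So we may assume $\partial^{\{i\}}_u(f)\in\mathscr{P}$ for every available $u$ and every $i$.

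\emph{Stage 2.} Under this assumption I would first eliminate ``large'' unary signatures. If some $t\in\{a,b,c\}$ has $|t|\ne 0,1$, or is a root of unity of order $\ge 5$, or lies on the unit circle without being a root of unity, then $\{[1,1],[1,t],[1,t^2],[1,t^3],[1,t^4]\}$ (respectively the interpolated $[1,2],\dots,[1,6]$) are five pairwise linearly independent available unaries, all of whose first-variable and other derivatives lie in $\mathscr{P}$; Lemma~\ref{5-unaries-suffice} then forces $f\in\mathscr{P}$, contradicting the hypothesis. Hence every nonzero element of $\{a,b,c\}$ is a root of unity of order $\le 4$, i.e.\ lies in $\{1,-1,\omega,\omega^2,{\frak i},-{\frak i}\}$. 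A short check then finishes: if some $t\in\{a,b,c\}$ is $\pm{\frak i}$, taking powers makes $\{[1,1],[1,-1],[1,{\frak i}],[1,-{\frak i}]\}$ available and Corollary~\ref{cor:root-i-suffice} applies; if some $t$ is $\omega$ or $\omega^2$, then $\{[1,1],[1,\omega],[1,\omega^2]\}$ is available and Corollary~\ref{cor:root-3-suffice} applies; otherwise $\{a,b,c\}\subseteq\{0,1,-1\}$, and since the three values are pairwise distinct with at most one equal to $0$ we get $\{a,b,c\}=\{0,1,-1\}$, so $\{[1,0],[1,1],[1,-1]\}$ is in hand and Corollary~\ref{[1,0]-[1,1]-[1,-1]-suffice} applies. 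In every case the Stage-1 assumption supplies exactly the corollary's hypothesis ``$\partial^{\{i\}}_{[1,b_j]}(f)\in\mathscr{P}$ for all $i,j$'', so we obtain a signature $g$ that is one of $[1,1,-1,-1]$, $[1,-1,\omega^{\pm1},-\omega^{\pm1}]$, $[-2,1,1,-2]$, $[7,-2,7]$, $[1,0,1,0]$, $[0,1,0,1]$, together with~(\ref{eqn-in-thm3.5}); each such $g$ is symmetric of arity $2$ or $3$ and not in $\mathscr{P}$ ($[1,0,1,0]$ by Corollary~\ref{[1,0,1,0]:not:prod}, the rest immediately from Proposition~\ref{symmetric:prod:type:signatures}), completing the proof.

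The main obstacle is Stage~2: pushing an arbitrary pairwise linearly independent triple into just three privileged configurations while retaining the ``all relevant derivatives are product-type'' hypothesis. It goes through because Stage~1 is phrased for \emph{every} available unary rather than only $[1,a],[1,b],[1,c]$, so whichever configuration we reach inherits the corollary's hypothesis automatically, and the exceptional unaries (modulus $\ne 0,1$, or high or infinite multiplicative order) are precisely those that trigger Lemma~\ref{5-unaries-suffice}. The substantive combinatorics — deciding which product-type restrictions of $f$ force $f$ itself to be product-type or to split off an explicit atom — has already been done in Lemmas~\ref{separation}--\ref{10-11-1n1-suffice} and their corollaries; Theorem~\ref{arity-3-nonproduct} is the assembly step.
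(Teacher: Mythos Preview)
Your proposal is correct and follows essentially the same route as the paper's own proof. The paper also builds powers $[1,t^k]=\partial^{k}_{[1,t]}(=_{k+1})$, uses Lemma~\ref{5-unaries-suffice} (or equivalently Lemma~\ref{interpolation-unary} followed by it) to dispose of any $t$ that is not a root of unity of order $\le 4$, and then funnels the remaining cases into Corollaries~\ref{cor:root-i-suffice}, \ref{cor:root-3-suffice}, and~\ref{[1,0]-[1,1]-[1,-1]-suffice}. The only organizational differences are that you make the ``Stage~1'' dichotomy (some $\partial^{\{i\}}_u(f)\notin\mathscr{P}$ versus all such derivatives in $\mathscr{P}$) explicit up front, whereas the paper leaves it implicit in the hypotheses of the cited lemmas and corollaries, and that you treat $a,b,c$ symmetrically while the paper normalizes to $bc\ne 0$ first and revisits $a$ only in the final subcase.
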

\begin{proof}
In $\operatorname{Pl-\#CSP}([1, a], [1, b], [1, c], \mathcal{F})$,
for any $[1, x]\in\{[1, a], [1, b], [1, c]\}$, we have
$[1, x^k]=\partial^{k}_{[1, x]}(=_{k+1})$ for any $k\in\mathbb{Z}^+$.
Since $[1, a], [1,b], [1,c]$ are pairwise linearly independent,
there is at most one of $a, b, c$ that can be zero.
Without loss of generality, we can assume that $bc\neq 0$.

For $b, c$, if one of them is not a root of unity or is
a root of unity of primitive order greater than 4,
then we can construct 5 unary signatures that are pairwise linearly independent and
 we are done by Lemma~\ref{5-unaries-suffice}.

If one of $b, c$ is a root of unity of primitive order 4,
then we can construct $[1, 1], [1, -1], [1, \frak{i}], [1, -\frak{i}]$ and
we are done by Corollary~\ref{cor:root-i-suffice}.

If one of $b, c$ is a root of unity of primitive order 3, then we
can construct $[1, 1], [1, \omega], [1, \omega^2]$ with $\omega^3=1, \omega\neq 1$ and we
 are done by Corollary~\ref{cor:root-3-suffice}.

If both $b, c$ are roots of unity of order at most 2, then
$\{[1, b], [1, c]\}=\{[1, 1], [1, -1]\}$ since $[1, b], [1, c]$ are linearly independent.
If $a=0$,
then we are done by Corollary~\ref{[1,0]-[1,1]-[1,-1]-suffice}.
If $a\neq 0$, then $a\neq \pm 1$ since $[1, a], [1,b], [1,c]$ are pairwise linearly independent.
Thus $a$ is not a root of unity or $a$ is a root of unity of primitive order greater than 2.
In each case, we are done by Lemma~\ref{5-unaries-suffice}, Corollary~\ref{cor:root-i-suffice} or Corollary~\ref{cor:root-3-suffice}.
\end{proof}

\subsection{Parity Condition Implies That $\mathscr{P}\subseteq\mathscr{A}$}
In this subsection, we give the following proposition that implies that if
a signature satisfies parity condition after the holographic transformation by
$H=\left[\begin{smallmatrix} 1 & 1 \\
1 & -1\end{smallmatrix}\right]$, then it is of product type implies that it is of affine type.

\begin{proposition}\label{parity-product-affine}
Let $f\in\mathscr{P}$ be a signature of arity $n$ and $\hat{f}=H^{\otimes n}f$, where
$H=\left[\begin{smallmatrix} 1 & 1 \\
1 & -1\end{smallmatrix}\right]$.
If $\hat{f}$ satisfies parity condition, then $f\in\mathscr{A}$.
\end{proposition}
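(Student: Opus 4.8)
## Proof proposal for Proposition~\ref{parity-product-affine}

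The plan is to reduce to the case of primitive (non-decomposable) factors in $\mathscr{P}$, handle those directly, and then argue that the tensor product structure is compatible with the Hadamard transform and with membership in $\mathscr{A}$. First I would recall that $f \in \mathscr{P}$ admits a primitive decomposition $f = \bigotimes_{\mathcal{I}}(f_1, \ldots, f_k)$ where each $f_j \in \mathcal{E}$ is either a unary signature or a nondegenerate signature whose support is a pair of antipodal points $\{\alpha_j, \overline{\alpha_j}\}$. Since $H^{\otimes n}$ acts blockwise with respect to the partition $\mathcal{I}$, we have $\widehat{f} = \bigotimes_{\mathcal{I}}(\widehat{f_1}, \ldots, \widehat{f_k})$ where $\widehat{f_j} = H^{\otimes |I_j|} f_j$. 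The key observation is that $\widehat{f}$ satisfies the Parity Condition if and only if each $\widehat{f_j}$ does: a tensor product of signatures has all-even or all-odd support exactly when each factor does (and one must track the parities carefully, but an odd-parity factor tensored with anything still produces mixed parity unless handled — actually each $\widehat{f_j}$ must individually satisfy a parity condition, and then the parities must be globally consistent). So it suffices to prove the statement for $f \in \mathcal{E}$: if $f$ has support contained in an antipodal pair $\{\alpha, \overline{\alpha}\}$ and $\widehat{f} = H^{\otimes n} f$ satisfies the Parity Condition, then $f \in \mathscr{A}$.

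For the base case, suppose $f$ has support $\subseteq \{\alpha, \overline{\alpha}\}$, say $f = a \cdot e_\alpha + b \cdot e_{\overline{\alpha}}$ (as a vector). By Lemma~\ref{[0,1,0]-not-change-tractable} (flipping variables preserves membership in $\mathscr{A}$, and commutes appropriately with $H$ up to the disequality signature, which is fixed by $H_2$), I may assume $\alpha = 0^n$, so $f = [a, 0, \ldots, 0, b]$ is a Generalized Equality. Then a direct computation gives $\widehat{f} = H_2^{\otimes n}[a,0,\ldots,0,b]$: its value at any input $\beta \in \{0,1\}^n$ is (up to the global factor $2^{-n/2}$) $a + (-1)^{\mathrm{wt}(\beta)} b$. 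Thus $\widehat{f}_\beta = a+b$ when $\mathrm{wt}(\beta)$ is even and $a - b$ when $\mathrm{wt}(\beta)$ is odd. For $\widehat{f}$ to satisfy the Parity Condition we need $a + b = 0$ or $a - b = 0$ (assuming $n \geq 1$ so both weight parities occur). Hence $b = \pm a$, and therefore $f = a[1,0,\ldots,0,\pm 1]$, which lies in $\mathscr{A}$ by Proposition~\ref{A-has-same-norm-etc} (two nonzero entries with $b^4 = a^4$; or if $a$ or $b$ is zero then $f$ has a single nonzero entry and is trivially affine). The unary case $n=1$ is immediate since every unary signature is in $\mathscr{A}$ (Proposition~\ref{A-has-same-norm-etc}).

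To finish, I would assemble the pieces: each primitive factor $\widehat{f_j}$ satisfies the Parity Condition, hence by the base case each $f_j \in \mathscr{A}$, hence $f = \bigotimes_{\mathcal{I}}(f_1, \ldots, f_k) \in \mathscr{A}$ because $\mathscr{A}$ is closed under tensor products (this follows directly from Definition~\ref{definition-affine}: the indicator $\chi_{AX=0}$ of a product is the product of the indicators, and the quadratic exponents add, with cross terms between disjoint variable blocks being absent and hence trivially even). The main obstacle is the bookkeeping in the first paragraph: showing that the global Parity Condition on $\widehat{f}$ forces a parity condition on \emph{each} block $\widehat{f_j}$ and that these are mutually consistent. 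The cleanest route is to note that for each $j$ one can pin all variables outside $I_j$ to a fixed assignment in the support (using $[1,0]$ and $[0,1]$, which are affine, or more simply just evaluating), which extracts $\widehat{f_j}$ up to a constant as a restriction of $\widehat{f}$; a restriction of a signature satisfying the Parity Condition still satisfies a parity condition (even or odd). Once each $\widehat{f_j}$ is known to be all-even or all-odd, the base-case argument applies to each, and we are done. I expect no genuine difficulty beyond this — the statement is essentially a structural fact, and the phrase "by definition" in the proposition's lead-in suggests the authors intend a short direct verification along these lines.
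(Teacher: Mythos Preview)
Your approach is correct and essentially the same as the paper's: both decompose $f\in\mathscr{P}$ into $\mathcal{E}$-factors, pass the Hadamard transform through the tensor product, deduce that each transformed factor individually satisfies a parity condition, then compute that an $\mathcal{E}$-signature with values $a,b$ on an antipodal pair transforms to one taking values $\pm(a\pm b)$ according to weight parity, forcing $a=\pm b$ and hence membership in $\mathscr{A}$. The paper works directly with a general antipodal pair rather than reducing to $\alpha=0^n$ (and, minor point: $[0,1,0]$ is \emph{not} fixed by $H_2$---it maps to $[1,0,-1]$---but your reduction is still valid since this only multiplies $\hat f$ by $(-1)^{x_i}$, which preserves its support and hence the parity condition).
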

\begin{proof}
Since $f\in\mathscr{P}$, there exist $f^i$ of arity $n_i$ for $1\leq i\leq s$ such that
$f=f^1\otimes f^2\otimes \cdots \otimes f^s$, where $f^i\in\mathcal{E}$.
Thus $\hat{f}=\hat{f}^1\otimes \hat{f}^2\otimes \cdots \otimes \hat{f}^s$, where $\hat{f}^i=H^{\otimes n_i}f_i$.
Since $\hat{f}$ satisfies parity condition, all of $\hat{f}^i$ satisfy parity condition.
Note that there exists $\alpha_i\in\{0, 1\}^{n_i}$ such that  supp$(f_i)\subseteq \{\alpha_i, \bar{\alpha_i}\}$ for $1\leq i\leq s$.

We claim that $f^i\in\mathscr{A}$ for $1\leq i\leq s$.
Let $f^i_{\alpha_i}=a_i, f^i_{\bar{\alpha_i}}=b_i$.
If $a_i=0$ or $b_i=0$, then $f^i\in\mathscr{A}$.
Otherwise, $a_ib_i\neq 0$. For any $\beta\in\{0, 1\}^{n_i}$,
if wt$(\beta)$ is even, then $\hat{f}^i_{\beta}=\pm (a_i+b_i)$.
If wt$(\beta)$ is odd, then $\hat{f}^i_{\beta}=\pm (a_i-b_i)$.
Since $\hat{f^i}$ satisfies parity condition, we have $a_i=\pm b_i$.
Thus $f^i\in\mathscr{A}$.
This finishes the proof of the claim.
Since $f^i\in\mathscr{A}$ for $1\leq i\leq s$, we have $f\in\mathscr{A}$.
\end{proof}

\subsection{Normalizing Signatures by A Binary Signature}
For a ternary signature $f$, where
$M_{x_1, x_2x_3}(f)=\left[\begin{smallmatrix}
f_{000} & f_{001} & f_{010} & f_{011}\\
f_{100} & f_{101} & f_{110} & f_{111}
\end{smallmatrix}\right]$,
 and a binary signature $[1, 0, \textbf{a}]$
 (note that $\textbf{a}$ is a scalar, not a vector,
and is written in bold to highlight the modification
in the following matrices), we often construct new signatures $f_i$
 by connecting one variable
of $[1, 0, \textbf{a}]$ to the variable $x_i$ of $f$, for $1\leq i\leq 3$.
Then
\begin{equation}\label{ternary-sig-matrix-modification-f1}
M_{x_1, x_2x_3}(f_1)=\begin{bmatrix*}[r]
f_{000} & f_{001} & f_{010} & f_{011}\\
\textbf{a}f_{100} & \textbf{a}f_{101} & \textbf{a}f_{110} & \textbf{a}f_{111}
\end{bmatrix*},
\end{equation}
\begin{equation}\label{ternary-sig-matrix-modification-f2}
M_{x_1, x_2x_3}(f_2)=\begin{bmatrix}
f_{000} & f_{001} & \textbf{a}f_{010} & \textbf{a}f_{011}\\
f_{100} & f_{101} & \textbf{a}f_{110} & \textbf{a}f_{111}
\end{bmatrix},
\end{equation}
\begin{equation}\label{ternary-sig-matrix-modification-f3}
M_{x_1, x_2x_3}(f_3)=\begin{bmatrix}
f_{000} & \textbf{a}f_{001} & f_{010} & \textbf{a}f_{011}\\
f_{100} & \textbf{a}f_{101} & f_{110} & \textbf{a}f_{111}
\end{bmatrix}.
\end{equation}

For  signatures of arity  4 we have similar operations.
In the following we list for a general signature of arity  4
 as well as one satisfying the even Parity Condition.
 This is to highlight graphically the locations
where $\textbf{a}$ appears.
(This operation will actually  be performed on
signatures of arity  4 satisfying the even Parity Condition.)
For
\[
M_{x_1x_2, x_4x_3}(f)
=\begin{bmatrix*}[r]
f_{0000} & f_{0010} & f_{0001} & f_{0011}\\
f_{0100} & f_{0110} & f_{0101} & f_{0111}\\
f_{1000} & f_{1010} & f_{1001} & f_{1011}\\
f_{1100} & f_{1110} & f_{1101} & f_{1111}
\end{bmatrix*}
~~~\mbox{or}~~~
\begin{bmatrix*}[r]
f_{0000} &          &          & f_{0011}\\
         & f_{0110} & f_{0101} &         \\
         & f_{1010} & f_{1001} &         \\
f_{1100} &          &          & f_{1111}
\end{bmatrix*},
\]
and a binary signature $[1, 0, \textbf{a}]$, we can construct new signatures
 $f_i$ or $g_i$
 by connecting one variable of $[1, 0, \textbf{a}]$ to the variable $x_i$
of $f$ or $g$, for $1\leq i\leq 4$.
Then
\begin{equation}\label{4-ary-sig-matrix-modification-f1}
M_{x_1x_2, x_4x_3}(f_1) =
\begin{bmatrix*}[r]
f_{0000} & f_{0010} & f_{0001} & f_{0011}\\
f_{0100} & f_{0110} & f_{0101} & f_{0111}\\
\textbf{a} f_{1000} & \textbf{a} f_{1010} & \textbf{a} f_{1001} & \textbf{a} f_{1011}\\
\textbf{a} f_{1100} & \textbf{a} f_{1110} & \textbf{a} f_{1101} & \textbf{a} f_{1111}
\end{bmatrix*}
\mbox{or}
\begin{bmatrix*}[r]
f_{0000} &          &          & f_{0011}\\
         & f_{0110} & f_{0101} &         \\
 & \textbf{a} f_{1010} & \textbf{a} f_{1001} & \\
 \textbf{a} f_{1100} &    &   &  \textbf{a} f_{1111}
\end{bmatrix*}
\end{equation}
\begin{equation}\label{4-ary-sig-matrix-modification-f2}
M_{x_1x_2, x_4x_3}(f_2)
=\begin{bmatrix*}[r]
f_{0000} & f_{0010} & f_{0001} & f_{0011}\\
\textbf{a} f_{0100} & \textbf{a} f_{0110} & \textbf{a} f_{0101} & \textbf{a} f_{0111}\\
f_{1000} & f_{1010} & f_{1001} & f_{1011}\\
\textbf{a} f_{1100} & \textbf{a} f_{1110} & \textbf{a} f_{1101} & \textbf{a} f_{1111}
\end{bmatrix*}
\mbox{or}
\begin{bmatrix*}[r]
f_{0000} &          &          & f_{0011}\\
 & \textbf{a} f_{0110} & \textbf{a} f_{0101} & \\
 & f_{1010} & f_{1001} & \\
\textbf{a} f_{1100} &  &  & \textbf{a} f_{1111}
\end{bmatrix*}
\end{equation}
\begin{equation}\label{4-ary-sig-matrix-modification-f3}
M_{x_1x_2, x_4x_3}(f_3)
=\begin{bmatrix*}[r]
~~f_{0000} & \textbf{a} f_{0010} & ~~f_{0001} & \textbf{a} f_{0011}\\
~~f_{0100} & \textbf{a} f_{0110} & ~~f_{0101} & \textbf{a} f_{0111}\\
~~f_{1000} & \textbf{a} f_{1010} & ~~f_{1001} & \textbf{a} f_{1011}\\
~~f_{1100} & \textbf{a} f_{1110} & ~~f_{1101} & \textbf{a} f_{1111}
\end{bmatrix*}
\mbox{or}
\begin{bmatrix*}[r]
~~f_{0000} &  &  & \textbf{a} f_{0011}\\
 & \textbf{a} f_{0110} & ~~f_{0101} & \\
 & \textbf{a} f_{1010} & ~~f_{1001} & \\
~~f_{1100} &  &  & \textbf{a} f_{1111}
\end{bmatrix*}
\end{equation}
\begin{equation}\label{4-ary-sig-matrix-modification-f4}
M_{x_1x_2, x_4x_3}(f_4)=
\begin{bmatrix*}[r]
~~f_{0000} & ~~f_{0010} & \textbf{a} f_{0001} & \textbf{a} f_{0011}\\
~~f_{0100} & ~~f_{0110} & \textbf{a} f_{0101} & \textbf{a} f_{0111}\\
~~f_{1000} & ~~f_{1010} & \textbf{a} f_{1001} & \textbf{a} f_{1011}\\
~~f_{1100} & ~~f_{1110} & \textbf{a} f_{1101} & \textbf{a} f_{1111}
\end{bmatrix*}
\mbox{or}
\begin{bmatrix*}[r]
~~f_{0000} &  &  & \textbf{a} f_{0011}\\
 & ~~f_{0110} & \textbf{a} f_{0101} & \\
 & ~~f_{1010} & \textbf{a} f_{1001} & \\
~~f_{1100} &  & & \textbf{a} f_{1111}
\end{bmatrix*}
\end{equation}


\begin{thebibliography}{stringX}

\leftskip=-8mm
\parskip=-1mm

\bibitem{baxter1982exactly}
Rodney J. Baxter. Exactly solved models in statistical mechanics. Academic press London,
1982.

\bibitem{Bulatov-etal}Andrei Bulatov, Martin E. Dyer, Leslie Ann Goldberg, Markus Jalsenius,
 David Richerby: The complexity of weighted Boolean \#CSP with mixed signs.
Theor. Comput. Sci. 410(38-40): 3949-3961 (2009).

\bibitem{caifu-collapse}Jin-Yi Cai, Zhiguo Fu:
A collapse theorem for holographic algorithms with matchgates on domain size at most 4. Inf. Comput. 239: 149-169 (2014).

\bibitem{Cai-Fu-Guo-W}
Jin-Yi Cai, Zhiguo Fu, Heng Guo, Tyson Williams:
A Holant Dichotomy: Is the FKT Algorithm Universal? FOCS 2015: 1259-1276.


\bibitem{jinyi-aaron} Jin-Yi Cai, Aaron Gorenstein:
Matchgates Revisited. Theory of Computing 10: 167-197 (2014)

 \bibitem{caiguowilliams13}
Jin-Yi Cai, Heng Guo, Tyson Williams: A Complete Dichotomy Rises from the Capture of Vanishing Signatures: extended abstract. STOC 2013: 635-644.
The full version is available at http://arxiv.org/abs/1204.6445.

\bibitem{cgw-focs-2014}Jin-Yi Cai, Heng Guo, Tyson Williams:
The Complexity of Counting Edge Colorings and a Dichotomy for Some Higher Domain Holant Problems. FOCS 2014: 601-610


\bibitem{collapse-papers}Jin-yi Cai, Pinyan Lu:
Holographic algorithms: The power of dimensionality resolved. Theor. Comput. Sci. 410(18): 1618-1628 (2009).

\bibitem{art-sc}Jin-yi Cai, Pinyan Lu:
Holographic algorithms: from art to science. STOC 2007: 401-410.

\bibitem{clx-focs-2008}Jin-yi Cai, Pinyan Lu, Mingji Xia:
Holographic Algorithms by Fibonacci Gates and Holographic Reductions for Hardness. FOCS 2008: 644-653


\bibitem{clx-holant}Jin-yi Cai, Pinyan Lu, Mingji Xia:
Holant problems and counting CSP. STOC 2009: 715-724.

\bibitem{asymmetric-sig}
Jin-yi Cai, Pinyan Lu:
Signature Theory in Holographic Algorithms. Algorithmica 61(4): 779-816 (2011).


\bibitem{Cai-Lu-Xia-real}
Jin-yi Cai, Pinyan Lu, Mingji Xia:
Holographic Algorithms with Matchgates Capture Precisely Tractable Planar \#CSP. FOCS 2010: 427-436

\bibitem{CLX11a}Jin-Yi Cai, Pinyan Lu, Mingji Xia:
Dichotomy for Holant* Problems of Boolean Domain. SODA 2011: 1714-1728.

\bibitem{CLX11}Jin-Yi Cai, Pinyan Lu, Mingji Xia:
Computational Complexity of Holant Problems. SIAM J. Comput. 40(4): 1101-1132 (2011).

\bibitem{clxlaaa} Jin-Yi Cai, Pinyan Lu, Mingji Xia. Holographic Algorithms by Fibonacci Gates. Linear Algebra Appl., 438(2):690-707, 2013.

\bibitem{clx-soda-2013}Jin-Yi Cai, Pinyan Lu, Mingji Xia:
Dichotomy for Holant* Problems with Domain Size 3. SODA 2013: 1278-1295


\bibitem{CLX14}Jin-Yi Cai, Pinyan Lu, Mingji Xia:
The complexity of complex weighted Boolean \#CSP. J. Comput. Syst. Sci. 80(1): 217-236 (2014)

\bibitem{Creignou-Hermann}
N. Creignou and M. Hermann, Complexity of generalized satisfiability counting problems,
Information and Computation 125 (1996), 1–12.

\bibitem{Dyer-Goldberg-Jerrum}
Martin E. Dyer, Leslie Ann Goldberg, Mark Jerrum:
The Complexity of Weighted Boolean CSP. SIAM J. Comput. 38(5): 1970-1986 (2009).

\bibitem{ghlx-stacs-2011}Heng Guo, Sangxia Huang, Pinyan Lu, Mingji Xia:
The Complexity of Weighted Boolean \#CSP Modulo k. STACS 2011: 249-260.


\bibitem{Guo-Williams}
Heng Guo and Tyson Williams: The Complexity of Planar Boolean \#CSP with Complex Weights. ICALP (1) 2013: 516-527.

\bibitem{glva-2013}Heng Guo, Pinyan Lu, Leslie G. Valiant:
The Complexity of Symmetric Boolean Parity Holant Problems. SIAM J. Comput. 42(1): 324-356 (2013)

\bibitem{Harris-ag}Joe Harris, Algebraic Geometry: A First Course, Graduate Texts in Mathematics, vol. 133, Springer, 1992.

\bibitem{HL12}Sangxia Huang, Pinyan Lu:
A Dichotomy for Real Weighted Holant Problems. IEEE Conference on Computational Complexity 2012: 96-106.

\bibitem{ising1925beitrag} Ernst Ising. Beitrag z¨ur theorie des ferromagnetismus. Zeitschrift f¨ur Physik, 31(1):253–258,
1925.

\bibitem{Kasteleyn1961} P. W. Kasteleyn, The Statistics of Dimers on a Lattice, Physica 27 (1961) 1209-1225.

\bibitem{Kasteleyn1967} P. W. Kasteleyn, Graph Theory and Crystal Physics. In Graph Theory and Theoretical Physics,
(F. Harary, ed.), Academic Press, London, 43-110 (1967).

\bibitem{landsberg-morton} J. M. Landsberg, Jason Morton, Serguei Norine:
Holographic algorithms without matchgates. CoRR abs/0904.0471 (2009)


\bibitem{lee1952statistical}
T. D. Lee and C. N. Yang. Statistical theory of equations of state and phase transitions. II.
Lattice gas and Ising model. Phys. Rev., 87(3):410–419, 1952.

\bibitem{lieb1981general}
 Elliott H. Lieb. Residual entropy of square ice. Phys. Rev., 162(1):162–172, 1967.

 \bibitem{onsager1944crystal} Lars Onsager. Crystal statistics. I. A two-dimensional model with an order-disorder transition.
Phys. Rev., 65(3-4):117–149, 1944.


\bibitem{TF61} H. N. V. Temperley and M. E. Fisher. Dimer Problem in Statistical Mechanics - an Exact
Result. Philosophical Magazine 6: 1061- 1063 (1961).

\bibitem{Valiant-permanent-paper79-TCS}Leslie G. Valiant:
The complexity of computing the permanent. Theoretical Computer Science, 8 (1979), pp. 189-201.

\bibitem{Val02a}Leslie G. Valiant:
Quantum Circuits That Can Be Simulated Classically in Polynomial Time. SIAM J. Comput. 31(4): 1229-1254 (2002)

\bibitem{Val02b}Leslie G. Valiant:
Expressiveness of matchgates. Theor. Comput. Sci. 289(1): 457-471 (2002).

\bibitem{string23} Leslie. G. Valiant. Holographic Algorithms, SIAM J. Comput., 37(5) (2008) 1565-1594.
 A preliminary version appeared in FOCS 2004: 306-315.

\bibitem{Val06}Leslie G. Valiant:
Accidental algorithms, Proc. 47th Annual IEEE Symposium on Foundations of Computer Science, Oct 22 -24, Berkeley, CA, IEEE Press, (2006), 509-517.

\bibitem{yang1952spontaneous}
C. N. Yang. The spontaneous magnetization of a two-dimensional Ising model. Phys. Rev.,
85(5):808–816, 1952.

\bibitem{yang1952statistical}
C. N. Yang and T. D. Lee. Statistical theory of equations of state and phase transitions. I.
Theory of condensation. Phys. Rev., 87(3):404–409, 1952.
\end{thebibliography}
\end{document}